    \theoremstyle{acmdefinition}
    \newtheorem{remark}[theorem]{Remark}}
\tikzset{
->, 
node distance=2cm, 
every state/.style={thick, fill=gray!10}, 
initial text=$ $, 
}
\newif\ifdraftComments
\def\mkDraftComment#1#2{%
    \expandafter\def\csname#1\endcsname##1{%
        \ifdraftComments
            \textcolor{#2}{[#1: ##1]}%
        \else
            \@bsphack\@esphack
        \fi
    }%
}
\newenvironment{ottdefnblock}[3][]{ \framebox{\mbox{#2}} \quad #3 \\[0pt]}{}
\newcommand{\ottnt}[1]{\mathit{#1}}
\newcommand{\ottmv}[1]{\mathit{#1}}
\newcommand{\ottkw}[1]{\mathbf{#1}}
\newcommand{\ottsym}[1]{#1}
\renewcommand{\ottkw}[1]{\mathtt{#1} }
\renewcommand{\ottnt}[1]{#1}
\renewcommand{\ottmv}[1]{#1}
\renewcommand{\ottsym}[1]{#1}
\newcommand{\GJOIN}[2]{#1 * #2}
\providecommand{\eqdef}{\mathrel{\overset{\textrm{def}}{=}}}
\newcommand{\NAT}{\mathbb{N}}
\providecommand{\Coloneqq}{\mathrel{::=}}
\providecommand{\eqdef}{\mathrel{\overset{\textrm{def}}{=}}}
\providecommand{\lBrack}{\lbrack\!\lbrack}
\providecommand{\rBrack}{\rbrack\!\rbrack}
\newcommand{\INTRP}[1]{\lBrack #1 \rBrack}
\newcommand{\REP}[1]{\lceil #1 \rceil}
\definecolor{darkviolet}{rgb}{0.5,0,0.4}
\definecolor{darkgreen}{rgb}{0,0.4,0.2}
\definecolor{darkblue}{rgb}{0.1,0.1,0.9}
\definecolor{darkgrey}{rgb}{0.5,0.5,0.5}
\definecolor{lightblue}{rgb}{0.4,0.4,1}
\lstdefinestyle{eclipse}{
  breaklines=true,
  basicstyle=\sffamily\small,
  emphstyle=\color{red}\bfseries,
  keywordstyle=\color{darkviolet}\bfseries,
  commentstyle=\color{darkgreen},
  stringstyle=\color{darkblue},
  numberstyle=\color{darkgrey},
  emphstyle=\color{red},
  showstringspaces=false,
}
\begin{document}

\title{Law and Order for Typestate with Borrowing}

\author{Hannes Saffrich}
\orcid{0009-0004-7014-754X}
\affiliation{%
  \institution{University of Freiburg}
  \city{Freiburg}
  \country{Germany}
}
\email{saffrich@informatik.uni-freiburg.de}

\author{Yuki Nishida}
\orcid{0000-0001-5941-6770}
\affiliation{%
  \institution{Kyoto University}
  \city{Kyoto}
  \country{Japan}
}
\email{nishida@fos.kuis.kyoto-u.ac.jp}

\author{Peter Thiemann}
\orcid{0000-0002-9000-1239}
\affiliation{%
  \institution{University of Freiburg}
  \city{Freiburg}
  \country{Germany}
}
\email{thiemann@acm.org}

\begin{abstract}

Typestate systems are notoriously complex as they require
sophisticated machinery for tracking aliasing.
We propose a new, transition-oriented foundation for typestate in the
setting of impure functional programming.
Our approach relies on ordered types for simple alias tracking and
its formalization draws on work on bunched implications. 
Yet, we support a flexible notion of borrowing in the presence of typestate.

Our core calculus comes with a notion of resource types indexed by an
ordered partial monoid that models abstract state transitions. 
We prove syntactic type soundness with respect to a
resource-instrumented semantics.
We give an algorithmic version of our type system and prove its
soundness. Algorithmic typing facilitates a
simple surface language that does not expose tedious details of
ordered types. We implemented a typechecker for the surface language
along with an interpreter for the core language.


\end{abstract}

\begin{CCSXML}
<ccs2012>
<concept>
<concept_id>10003752.10010124.10010125.10010130</concept_id>
<concept_desc>Theory of computation~Type structures</concept_desc>
<concept_significance>500</concept_significance>
</concept>
<concept>
<concept_id>10011007.10011006.10011008.10011009.10011012</concept_id>
<concept_desc>Software and its engineering~Functional languages</concept_desc>
<concept_significance>300</concept_significance>
</concept>
</ccs2012>
\end{CCSXML}
\ccsdesc[500]{Theory of computation~Type structures}
\ccsdesc[300]{Software and its engineering~Functional languages}

\keywords{ownership, substructural typing, resource typing, typestate, borrowing}

\maketitle


\section{Introduction}
\label{sec:introduction}


Typestate \cite{DBLP:journals/tse/StromY86,
  DBLP:journals/tse/StromY93} is a programming concept that incorporates information about the state of an
object into its type. In traditional type systems, types describe the structure of
objects, but they do not directly capture behavioral information
about the possible states an object can be in during its
lifetime. Typestate, on the other hand, encodes the current state of
an object in its type and the state transitions in the types of functions and methods.

Typestate systems contribute to writing safe, understandable, and
maintainable code, particularly in systems with  complex behavior
of objects. The code is safe because the
type checker prevents accessing an object in an invalid state or
performing operations in the wrong order. The code is understandable
because types make the objects' state explicit. Taken
together, typestate systems improve maintainability
because they eliminate out-of-band uses of objects at compile
time.


Early typestate systems require full ownership of all objects under
the typestate regime. Here, ownership means that at every point of
time there is a single reference to the object and all operations on
the object are conveyed via that reference. In particular, these
systems disallow aliases and passing the object to a function
transfers its ownership from caller to callee.


Borrowing refers to the act of temporarily loaning a reference to a
resource (e.g., an object) without
transferring ownership. The concept currently receives a lot of attention due
to its widespread use in the Rust language
\cite{DBLP:conf/sigada/MatsakisK14, team24:_rust_progr_languag}: When
you borrow a reference to a resource in 
Rust, the borrower (typically a function or a block of code) can read
or mutate the resource, but it does not take ownership of it. In
this context ownership means the responsibility for any cleanup
actions and deallocation of the resource.

The combination of typestate and borrowing is challenging, as
mutable borrows can change the typestate of the
borrowed object.
Problems like this have been addressed in various ways and are related
to constraining and reasoning about
aliasing. \citet{DBLP:series/lncs/7850} presents an overview of several
approaches, but they can be very complex. For example, fractional
permissions \cite{boyland03:_check_inter_fract_permis} provide a
fine-grained way to split an owning reference into several
aliases with different permissions and combine them later to regain full ownership.

In this paper, we propose a new angle of attack that is inspired by
non-commutative logic and ordered type systems. Non-commutative logic
is a substructural logic where neither weakening, nor contraction, nor
exchange of assumptions is allowed. Thus, it is a kind of linear
logic where assumptions (i.e., resources) must be consumed in the order
in which they were introduced. Such systems have been investigated
previously in the context of  linguistics and logic, starting with the Lambek
calculus \cite{lambek58:_mathem_senten_struc,
  lambek60:_calcul_syntac_types} and further studied by
\citet{DBLP:journals/mlq/Abrusci90a}. In a programming context,
Polakow, Pfenning, and DeYoung
studied ordered logic, mainly exploring
the proof theory \cite{DBLP:conf/tlca/PolakowP99, DBLP:journals/entcs/PolakowP99}, applications to linear logic programming
\cite{DBLP:conf/ppdp/Polakow00}, to exceptions
\cite{DBLP:conf/flops/PolakovY01}, and applications to concurrency theory 
\cite{deyoung20:_session_typed_order_logic_specif}.
None of these works study the impact of ordering in an effectful call-by-value
language as we do.

\begin{figure}[t]
  \centering
  \begin{tikzpicture}
    \node[state,initial] (q0) {OF};
    \node[state,accepting,right of=q0] (q1) {CF};

    \draw
    (q0) edge[loop above] node{$r,w$} (q0)
    (q0) edge [above] node{$c$} (q1);
  \end{tikzpicture}
  \caption{Simple life cycle of file handles; OF=$(r|w)^*c$ (open file);
    CF=$\varepsilon$ (closed file)}
  \label{fig:read-write-close}
\end{figure}
For illustration, let us consider the well-known example
of file handles (cf.\ \Cref{fig:read-write-close}). Opening a
file yields a file handle that is ready for read, write, or close
operations. After a read or write operation, we can read, write, or
close again, but no further operations are 
possible after a close.

Our system can model this example with regular languages for
the different states. In particular, the typestate of a file handle is
the language of all admissible traces of operations, that is, a
language on the symbols \texttt{r}, \texttt{w}, and \texttt{c}. We use
regular expressions to denote the languages and we assume that read
and write operations work on single bytes.
\begin{lstlisting}
  open   : FileName -o Handle[ (r|w)^*c ]
  read   : Handle[ r ] -o Byte
  write  : Handle[ w ] ox Byte -o ()
  close  : Handle[ c ] -o ()
\end{lstlisting}
This example indicates ownership with a language ending in
\texttt{c}; any other handles indexed with languages not ending in
\texttt{c} are borrowed. That is, the \texttt{read} and \texttt{write}
operations actually work on borrows; only \texttt{close} requires
ownership. A function that first reads and then writes to a borrowed
handle takes an argument of type \lstinline|Handle[ rw ]|.
Handles are generally linear in our. A handle of type \lstinline|Handle[ eps ]| (for the
empty trace) is exhausted: no further operations are possible and the
reference can be discarded. 

We express borrowing with a \lstinline|split| primitive that consumes
a resource \texttt{h} and returns two (aliased) 
references \texttt{bh} (``borrow of \texttt{h}'') and \texttt{h} of
the same underlying resource. These references may
have different typestates and we have to make sure that
the borrow \texttt{bh} is processed to the end before resuming
operations on \texttt{h}. At this point, the \emph{ordered aspect of our
  type system} comes into play:
\begin{lstlisting}
  split@m1/m2 : Handle[ m0 ] -o (Handle[ m1 ] .o Handle[ m2 ])
\end{lstlisting}
The \lstinline{split} consumes a handle in state $m_0$ and returns an \emph{ordered
  pair} of handles in states $m_1$ and $m_2$, respectively. This split is subject to
the correctness condition that $m_1 \cdot m_2 \subseteq m_0$. This
condition imposes that any trace of operations on the
\lstinline|Handle[ m1 ]| object followed by any trace of
operations on the \lstinline|Handle[ m2 ]| object yields a trace on
\lstinline|Handle[ m0 ]|. The elimination rule of the
ordered pair makes sure that the $m_1$ handle is processed first
before operations on the $m_2$ handle can start.

Here are some viable split instances (just the types without the
annotation) for the example. The result consists of the
new borrow (left) and the remaining resource (right).
\begin{lstlisting}
  split : Handle[ (r|w)^*c ] -o Handle[ (r|w)^* ] .o Handle[ (r|w)^*c ]
\end{lstlisting}
The classic borrow: \lstinline/Handle[(r|w)*]/ is a file handle for
reading and writing, which cannot be closed. The owner retains the
right to do anything and then close.
\begin{lstlisting}
  split : Handle[ (r|w)^* ] -o Handle[ r^* ] .o Handle[ (r|w)^* ]
\end{lstlisting}
Read-only re-borrow: the incoming handle is already borrowed as
there is no right for closing. We split into a re-borrow that is read-only (\texttt{r*}) and return a handle in the original read-write state.
\begin{lstlisting}
  split : Handle[ (r|w)^*c ] -o Handle[ c ] .o Handle[ eps ]
\end{lstlisting}
This split is needed to apply the \texttt{close} operation: we split a
file handle with all capabilities into one owning reference to close
and a dangling reference that can be discarded.
\begin{lstlisting}
  split : Handle[ (r|w)^*c ] -o Handle[ r^*c ] .o Handle[ eps ]
\end{lstlisting}
The final example demonstrates that the split can be used for more
than just borrowing. This split implements a safe cast from an owning
read-write handle to an owning read-only handle.  

The actual calculus that we present handles states more abstractly. We
only demand that states are drawn from an \emph{ordered partial
  monoid} (OPM) (see \Cref{sec:order-part-mono}). In our example, the set of regular languages
$\{ L \mid L \subseteq (r|w)^*c^? \}$ is an OPM under language
concatenation and ordered by language 
inclusion, as we will show in \Cref{example:opm-files}. 

As we remarked above, there is plenty of work on typestate as well as
on controlling aliasing. In the context of Java,
\citet{DBLP:conf/oopsla/BierhoffA07} discuss a closely related typestate system
that supports aliasing. Their system is
fundamentally different from ours. It is \emph{state-based} as it specifies a
transition as a pair of input and output state, whereas our
system is \emph{transition-based} as it specifies a transition by an element of an OPM, that is, an abstraction
of a set of traces of operations performed by the method.
Moreover, their system supports refined state transitions that depend on the
run-time result of a method call. Thus, each method is annotated with a
state-change formula. Access to a resource is governed
by a permission which controls the right to write and
change the state. Their  notion of splitting applies to
the permission when creating an alias. Our system does not
have built-in support for controlling read/write permissions (like
immutable borrows), but we can model that notion with a suitably
chosen OPM (cf.\ the read-only borrow in the above file example).

The calculus in \citet{DBLP:journals/toplas/GarciaTWA14} is very close
to the one by
\citet{DBLP:conf/oopsla/BierhoffA07}, but simplifies the handling of
states, presents further refinements to fine-tune the handling of
typestate, and proposes gradual typing to include run-time checks for
typestates. Their model can control interleaved operations on aliases
to some extent; our system allows for multiple aliases, but
requires that they are processed in order.

\subsection*{Contributions}
\label{sec:contributions}

\begin{enumerate}
\item We identify the notion of an ordered partial monoid as a minimal
  specification for typestate with borrowing (\Cref{sec:motivation}).
\item We define a typed lambda calculus that supports
  intuitionistic as well as linear and ordered constructions (\Cref{sec:language});
  type constructions like function space and product come
  in several variants that differ in their ordering properties; the
  calculus requires a tree-like typing environment as in the
  logic of bunched implications
  \cite{DBLP:journals/bsl/OHearnP99}. 
\item We establish the syntactic metatheory for type soundness, which
  incorporates protocol adherence for
  objects with typestate and borrowing (\Cref{sec:language:properties}).
\item We present a syntax-driven version of the type system
  (\Cref{sec:algorithmic-typing}) as a basis for our prototype
  implementation of the type checker. Type checking is decidable if
  certain operations on the underlying OPM are decidable.
\end{enumerate}



\section{Motivation}
\label{sec:motivation}

In this section, we formally introduce OPMs and illustrate them with
examples. Then we move on to consider splitting and investigate some
ramifications of our type system with ordered resources. We illustrate
the issues by dissecting two representative constructs:  ordered
product elimination and let expressions. 

The full system in \Cref{sec:language} supports three kinds of types:
unrestricted types, linear types,
and orderer types for resources. This distinction governs the way that
type assumptions of variables are treated in typing contexts:
variables of unrestricted type are subject to exchange, contraction,
and weakening; linear variables are subject to exchange; ordered
variables are subject to nothing.

For simplicity, we focus on linear and
ordered types in this section. Unrestricted types, such as
\texttt{Int} and \texttt{Bool}, introduce additional
complexity that we want to avoid in this initial discussion.

\subsection{Ordered Partial Monoids}
\label{sec:order-part-mono}

In the introduction, we already claimed that ordered partial monoids
are an appropriate abstraction for resource management with
borrowing. Let us first define this notion.
\begin{definition}
  A \emph{partial monoid} is a structure $(M, \odot, \varepsilon)$ such that
  \begin{itemize}
  \item multiplication $\odot : M \times M \hookrightarrow M$ is a partial function;
  \item there is a neutral element $\varepsilon \in M$ with respect to
    $\odot$, that is, $\forall m \in M$,
    $m \odot \varepsilon = \varepsilon \odot m = m$ where
    multiplication with the neutral element is always defined;
  \item multiplication is associative: $\forall m, n, o \in M$,  $(m
    \odot n) \odot o = m \odot (n \odot o)$ where the left side is
    defined if and only if the right side is defined.
  \end{itemize}
  A partial monoid is \emph{ordered} (OPM), if there is a preorder ${\le}
  \subseteq M \times M$ such that multiplication is monotone and
  downwards closed:
  $\forall m, m', n, n'$, if $m \le m'$ and $n \le n'$ and $m' \odot
  n'$ is defined, then $m \odot n$ is defined and
  $m \odot n \le m' \odot n'$.
  \qed
\end{definition}

This definition is not without precedent. 
Partially-ordered \emph{commutative} monoids are used for
constructing Kripke models for BI
\cite{DBLP:journals/bsl/OHearnP99}. \citet{DBLP:journals/mscs/GalmicheMJP05}
even makes commutativity optional. Both derive their ordering from the
multiplication, rather than requiring a separate order relation.

The basic idea is that we regard the elements of the OPM as
abstractions of admissible sequences of operations on a
resource. Thus, a resource type is indexed with some OPM element $m$,
an operation on a resource expects a resource with a type index that
abstracts the singleton sequence of just this operation, taken
together we expect that the indexes of all operations on the resource must
multiply up to some $m'$ with $m'\le m$. 

Once a resource has index $m$ that contains $\varepsilon$ (i.e.,
$\varepsilon \le m$), we
consider it \emph{droppable}; a resource with index $\varepsilon$ is
\emph{exhausted} in the sense that no further operation is applicable. We choose to keep resources
linear so that our calculus provides an explicit $\ottkw{drop}$ operation
to discard exhausted resources. We omit $\ottkw{drop}$ in our
examples, assuming that the typechecker inserts it
automatically.\footnote{Alternatively, the type system could change
  the substructural properties of a resource with index $\varepsilon$
  to affine, so that it could be dropped by weakening.}

\begin{example}
  \label{example:opm-files}
  The OPM for file permissions in the introduction is given by the
  set of regular languages, which are subsets of a prefix language of
  the \emph{envelope language} $E = (r|w)^*c$. That is, $M = \{ R \mid
  R \subseteq (r|w)^*c^? , R \text{ regular}\}$. 
  \begin{itemize}
  \item Multiplication is the usual language concatenation, with the
    restriction that the result must be in $M$. For example, the product
    $rc \odot (r|w)^*$ is not defined in $M$, because $rc(r|w)^*$
    contains the word $cw$ which is not a prefix of an element of $E$.
  \item The neutral element is $\{\varepsilon\}$, the language of the
    empty word.
  \item Multiplication is associative and the definedness condition is
    fulfilled.
  \item The ordering is language inclusion; monotonicity of
    multiplication is a standard result in formal language theory;
    downwards closedness is easy to see.
  \end{itemize}
  Technically, the languages involved need not be regular. We phrase
  the example in terms of regular languages because it makes all
  operations and relations (and hence type checking) decidable.
\end{example}
\begin{figure}[tp]
  \begin{subfigure}{0.6\textwidth}
  \begin{displaymath}
  \begin{array}{r|ccc}
    \odot  &\varepsilon & b & \ast \\
    \hline
    \varepsilon &\varepsilon&b & \ast \\
    b &b&b& \ast \\
    \ast &\ast &-&- \\
  \end{array}
  \qquad
  \begin{array}{r|ccc}
    \le  &\varepsilon & b & \ast \\
    \hline
    \varepsilon &\bullet &\bullet & - \\
    b &-&\bullet& - \\
    \ast &- &-&\bullet \\
  \end{array}
\end{displaymath}
  \caption{OPM for ownership and mutable borrows}
  \label{fig:opm-ownership-borrows}
\end{subfigure}
\begin{subfigure}{0.39\textwidth}
\begin{lstlisting}
new   : () -o Ref[ * ]
read  : Ref[ b ] -o X
write : Ref[ b ] -o X -o ()
free  : Ref[ * ] -o ()
\end{lstlisting}
\caption{API for ownership and mutable borrows}
\label{fig:api-ownership-borrows}
\end{subfigure}
\caption{OPM and API for ownership and mutable borrows}
\label{fig:opm-api-ownership-borrows}
\end{figure}
\begin{example}
  Here is an OPM for ownership and mutable borrows.
  Let $M = \{ \varepsilon, b, \ast \}$ (discardable, borrowed, owned
  reference) with multiplication and ordering defined by the tables in
  \Cref{fig:opm-ownership-borrows}. The intuition of multiplication
  derives from splitting: if the left multiplication with $\ast$ were
  defined for any element but $\varepsilon$, a split like $\ast \odot
  x$ for $x\ne\varepsilon$ would be valid. But that would mean, we
  could first destroy the resource (exerting full ownership), but afterwards
  we would still hold a valid handle at state $x$, which is impossible.
  Right multiplication with $\ast$ or $b$ controls borrowing:
  We can split borrows from owning references as $ b \odot \ast \le \ast$. We
  can also split borrows from borrows as $ b \odot b \le b$.
  An API for references would look similar to the file API (see
  \Cref{fig:api-ownership-borrows}).

\end{example}
\begin{example}
  Session types \cite{DBLP:journals/csur/HuttelLVCCDMPRT16} is a
  generic term for behavioral type systems that 
  guarantee communication properties like type soundness, protocol fidelity, and
  deadlock freedom for concurrent systems. In a nutshell, a session
  type is ascribed to a communication channel and it describes the
  protocol on this channel, i.e., the possible sequences of
  interaction on this channel.

  Context-free session types (CFST) have been proposed as an extension of
  traditional session types embedded in a functional language, but
  with fewer restrictions on recursion
  \cite{DBLP:journals/iandc/AlmeidaMTV22}. The original system has
  been extended in several ways including subtyping
  \cite{DBLP:conf/concur/SilvaMV23}. The latter publication describes
  the syntax for session types like this:
  \begin{align*}
    S &::= \mathtt{skip} \mid S;S \mid {!T} \mid {?T}
        \mid \oplus\{ l_i: S_i\} \mid \&\{ l_i: S_i \}
        \mid \mu X.S \mid X & \text{open}\\
    S^\bot & ::= \mathtt{end!} \mid \mathtt{end?} \mid S;S^\bot
             \mid \oplus\{ l_i: S^\bot_i\} \mid \&\{ l_i: S^\bot_i \}
             \mid S^\bot;\mathtt{skip} & \text{terminated}
  \end{align*}
  Briefly spoken, a communication channel is created with a terminated type of
  the form $S^\bot$. The type \texttt{skip} indicates a noop; the type
  $S_1;S_2$ stands for the sequential composition of the protocols
  $S_1$ and $S_2$; ${!T}$ and ${?T}$ are protocols that send or
  receive one item of type $T$, the type of messages which we leave
  unspecified here; the selection type $\oplus\{ l_i: S_i\}$ lets the
  sender choose a label $l_j$ and continue with the corresponding
  protocol $S_j$; dually, the branch type $\&\{ l_i: S_i \}$ receives
  a label $l_j$ and continues with the chosen $S_j$; the type $\mu
  X.S$ is an equirecursive session type with type variable $X$; and
  $X$ refers to a variable bound by a recursive type.
  The nonterminal $S^\bot$ characterizes the possible endings of a
  session type. The types \texttt{end!} and \texttt{end?} stand for
  two different ways of closing a channel (active and passive); the
  remaining cases are just special cases of $S$ tailored to the
  situation at the end. Importantly, all continuations
  have to be uniform in selection and branching, either open or terminating.

  This syntax is subject to laws that make CFSTs a partial monoid with
  neutral element \texttt{skip} and sequential composition as
  multiplication. (Equi-) Recursive types come with the usual guardedness
  restriction that disallows empty types like $\mu X.X$. Moreover,
  composition distributes from the right 
  over selection $\oplus$ and branching $\&$. 

  A suitable partial order for this OPM is the subtyping relation proposed by
  \citet{DBLP:conf/concur/SilvaMV23}. Sequential composition is
  monotone with respect to subtyping as this requirement is an
  inference rule in the definition of subtyping.
  Another suitable partial order is, of course, the equivalence
  relation on types defined by the monoid laws.

  The literature \cite{DBLP:journals/iandc/AlmeidaMTV22} advocates an
  interface based on channel passing, which is concisely described using
  polymorphism on the continuation session \lstinline{alpha}:
\begin{lstlisting}
  send : forall alpha. T -o Chan[ !T;alpha ] -o Chan[ alpha ]
  recv : forall alpha. Chan[ ?T;alpha ] -o (T ox Chan[ alpha ])
\end{lstlisting}
  The functions \lstinline{send} and
  \lstinline{recv} both consume a channel of suitable (linear) type
  \lstinline{Chan[ S ]}, they modify it according to the operation,
  and return the channel in its updated state and type. In this
  system, all session types have the form $S^\bot$ as channel references are linear, they
  imply ownership, and consequently they must be closed in the end.

  Combining CFST with our approach to borrowing, we find good use for types of
  the form $S$: they correspond to the type of \emph{borrowed channels} whereas $S^\bot$
  characterizes an owning reference of a channel. 
  The resulting concise API uses borrows
  instead of polymorphism on sessions.
\begin{lstlisting}
  send_borrow : T -o Chan[ !T ] -o ()
  recv_borrow : Chan[ ?T ] -o T
\end{lstlisting}
  Comparing the types, we see that the channel-passing operations peel
  off the first operation from the type of the owning reference. In
  contrast, the API using borrowing does not require ownership and it
  is not concerned with managing the continuation: it just gets the
  right to send or receive a single \lstinline{T} on the channel.
\end{example}

\subsection{Splitting and Order}
\label{sec:splitting-order}

The key operation to create a borrow is the \texttt{split} shown in
the introduction:\footnote{We drop the prefix \lstinline{Handle} as
  the discussion is generic for all resources.}
\begin{lstlisting}
  split@m1/m2 : [ m0 ] -o ([ m1 ] .o [ m2 ])
\end{lstlisting}
Its indices are values from the OPM: $m_1$ and $m_2$. The first
$m_1$ describes the traces that we want to grant to the borrow, the
second $m_2$ describes the remaining traces that are admissible after
we have exercised $m_1$ on the borrow. The soundness condition is $m_1
\odot m_2 \le m_0$, which states that every composition of traces from
$m_1$ and $m_2$ is still a valid trace on the original resource
described by $m_0$.

As a linear function, \texttt{split} consumes the argument of type
${[ m_0 ]}$ and returns an \emph{ordered pair} of type
$[m_1] \odot [m_2]$. An ordered pair comes with the
obligation to \emph{process all resources in the first component before
  starting to process any resource in the second component}.

An elimination rule for ordered pairs has to reflect this ordering in
the typing context $\Gamma$.
We express ordering contraints by having two distinct context formers. 
This idea is inspired by the logic of bunched implications (BI)
\cite{DBLP:journals/bsl/OHearnP99}, where one context former allows
for weakening and contraction; but the other does not.
Our context formers have a different interpretation: neither allows
for weakening and contraction; one allows for exchange, and the other
does not.
\begin{align*}
  \Gamma \Coloneqq\enspace &  \cdot  && \text{empty context} \\
                  & \ottmv{x}  \mathord:  \ottnt{T} && \text{singleton binding;} \\
  & \Gamma_{{\mathrm{1}}}  \ottsym{,}  \Gamma_{{\mathrm{2}}} && \text{no exchange: bindings from $\Gamma_{{\mathrm{1}}}$
                 must be processed before bindings from $\Gamma_{{\mathrm{2}}}$;} \\
  & \Gamma_{{\mathrm{1}}}  \parallel  \Gamma_{{\mathrm{2}}} && \text{exchange: bindings from $\Gamma_{{\mathrm{1}}}$ are
  independent from those in $\Gamma_{{\mathrm{2}}}$; no order imposed.}
\end{align*}
As in BI, the context formers can be arbitrarily nested. 
Both context formers are subject to the monoid laws with unit $ \cdot $ and
parallel composition is commutative.
We also define \emph{context patterns} $\mathcal{G}$ as contexts with a hole
that can be filled with a context $\Gamma$.

To illustrate the definition, we discuss some typing rules of our
system in simplified form, sound but more restrictive than necessary. Recall the rules for \emph{linear pairs}. In a
linear context, all binders are mutually independent so contexts are
built solely with $\|$.
\begin{center}
\begin{minipage}[t]{0.54\linewidth}
  \infrule[\ensuremath{\otimes}-E]
  {\Gamma_1 \vdash M : T_1 \otimes T_2 \\
    \Gamma_{{\mathrm{2}}}  \parallel  \ottmv{x_{{\mathrm{1}}}}  \mathord:  \ottnt{T_{{\mathrm{1}}}}  \parallel  \ottmv{x_{{\mathrm{2}}}}  \mathord:  \ottnt{T_{{\mathrm{2}}}} \vdash N : T} {\Gamma_{{\mathrm{1}}}  \parallel  \Gamma_{{\mathrm{2}}} \vdash
    \mathtt{let}~(x_1\otimes x_2) = M~\mathtt{in}~N : T}
\end{minipage}
\begin{minipage}[t]{0.45\linewidth}
  \infrule[\ensuremath{\otimes}-I] {\Gamma_{{\mathrm{1}}} \vdash M : S \andalso
    \Gamma_{{\mathrm{2}}} \vdash N : T} {\Gamma_{{\mathrm{1}}}  \parallel  \Gamma_{{\mathrm{2}}} \vdash M \otimes N : S
    \otimes T}
\end{minipage}
\end{center}
For linear types, the notation $\Gamma_1\| \Gamma_2$ splits the
incoming typing context $\Gamma$ into contexts with disjoint domains,
that is, $ \Gamma \ottsym{=} \Gamma_{{\mathrm{1}}}  \parallel  \Gamma_{{\mathrm{2}}} $. The
resources in $\Gamma_1$ are consumed by evaluating $M$. After 
elimination the new resources obtained by splitting are added to the
remaining resources in $\Gamma_2$. The introduction rule is also standard.

For a left-to-right ordered product, we have to keep in mind that the
system is intended for left-to-right call-by-value evaluation.
We have to split the incoming context
sequentially, indicated by the operator ``,''. The reading of the assumption $\Gamma_1 \vdash M : T_1
\odot T_2$ is that the resources in $\Gamma_1$ have been
converted to the resource $T_1 \odot T_2$, but they have not
necessarily been fully processed. To ensure that they are processed
before $\Gamma_2$, we have to put them in front of the context. 

\smallskip{}
\begin{minipage}{0.9\linewidth}
  \infrule[\ensuremath{\odot}-E1] {\Gamma_{{\mathrm{1}}} \vdash M : T_1 \odot T_2
    \andalso \ottmv{x_{{\mathrm{1}}}}  \mathord:  \ottnt{T_{{\mathrm{1}}}}  \ottsym{,}  \ottmv{x_{{\mathrm{2}}}}  \mathord:  \ottnt{T_{{\mathrm{2}}}}  \ottsym{,}  \Gamma_{{\mathrm{2}}} \vdash N : T} {\Gamma_{{\mathrm{1}}}  \ottsym{,}  \Gamma_{{\mathrm{2}}} \vdash
    \mathtt{let}~(x_1\odot x_2) = M~\mathtt{in}~N : T}
\end{minipage}

\smallskip
This rule is sound, but restrictive in two respects. First, we cannot
assume that the incoming context $\Gamma$ has the sequential form $\Gamma_{{\mathrm{1}}}  \ottsym{,}  \Gamma_{{\mathrm{2}}}$ expected by the rule. It is always possible to weaken the context $\Gamma$ to
match the expectation by imposing extra (unnecessary) sequencing constraints.
Second, the rule sequentializes uses of $x_1$
and $x_2$ with uses of potentially unrelated resources in $\Gamma_{{\mathrm{2}}}$. If
two resources are independent (e.g., two different 
file handles or two distinct communication channels), then they
should not unduly constrain one another.

Hence, we restrict this elimination rule to
terms $M$ that do not use any operations on resources.
To this end, we extend typing with a simple effect system that only
checks the existence of resource operations: the effect $0$
characterizes a pure term, whereas effect $1$ indicates that an
operation may be performed.
With this extra information, the elimination rule can be much more
liberal: It is sufficient if we can find the resources $\Gamma$ needed
for $M$ somewhere in the context as prescribed by context pattern
$\mathcal{G}$. Then we can replace those resources by the components of the pair.

\smallskip{}
\begin{minipage}{0.9\linewidth}
  \infrule[\ensuremath{\odot}-E2] { \Gamma \vdash M : T_1 \odot T_2
    \mid 0 \andalso \mathcal{G}  \ottsym{[}  \ottmv{x_{{\mathrm{1}}}}  \mathord:  \ottnt{T_{{\mathrm{1}}}}  \ottsym{,}  \ottmv{x_{{\mathrm{2}}}}  \mathord:  \ottnt{T_{{\mathrm{2}}}}  \ottsym{]} \vdash N : T \mid e } {\mathcal{G}  \ottsym{[}  \Gamma  \ottsym{]} \vdash \mathtt{let}~(x_1\odot x_2) = M~\mathtt{in}~N : T
    \mid e}
\end{minipage}

\smallskip
To make this rule generally applicable, we have to move resource operations
out of the elimination as done in A-normal form  \cite{DBLP:conf/pldi/FlanaganSDF93}. That is,
we transform $\mathtt{let}~(x_1\odot x_2) = M~\mathtt{in}~N$ to
$\mathtt{let}~w = M~\mathtt{in}~\mathtt{let}~(x_1\odot x_2) =
w~\mathtt{in}~N$. This transformation, called let-insertion, shifts
our concerns about resource operations 
to the typing of the normal \texttt{let} expression, where we can
solve them once and for all.

Before we discuss the \texttt{let} expression, it remains to discuss
the introduction rule for the ordered product. Here, the easiest
option is to require that the components are effect-free. A
preprocessing step that performs let-insertion for effect-ful
components guarantees general applicability. 

\smallskip
\begin{minipage}{0.9\linewidth}
  \infrule[\ensuremath{\odot}-I] {\Gamma_{{\mathrm{1}}}  \vdash  \ottnt{M}  :  \ottnt{S}  \mid  \ottsym{0} \andalso \Gamma_{{\mathrm{2}}}  \vdash  \ottnt{N}  :  \ottnt{T}  \mid  \ottsym{0}} {\Gamma_{{\mathrm{1}}}  \ottsym{,}  \Gamma_{{\mathrm{2}}}  \vdash  \ottnt{M}  \odot  \ottnt{N}  :  \ottnt{S}  \odot  \ottnt{T}  \mid  \ottsym{0}}
\end{minipage}
\smallskip


Several typing rules are possible for \texttt{let}. 
Here are the two extreme cases.
\begin{center}
  \begin{minipage}{0.54\linewidth}
      \infrule[let-\ensuremath{\|}]
      {\Gamma_{{\mathrm{1}}}  \vdash  \ottnt{M}  :  \ottnt{S}  \mid  \ottnt{e_{{\mathrm{1}}}} \\
        \Gamma_{{\mathrm{2}}}  \parallel  \ottmv{x}  \mathord:  \ottnt{S}  \vdash  \ottnt{N}  :  \ottnt{T}  \mid  \ottnt{e_{{\mathrm{2}}}}}
      {\Gamma_{{\mathrm{1}}}  \parallel  \Gamma_{{\mathrm{2}}}  \vdash  \ottkw{let} \, \ottmv{x}  \ottsym{=}  \ottnt{M} \, \ottkw{in} \, \ottnt{N}  :  \ottnt{T}  \mid   \ottnt{e_{{\mathrm{1}}}}  \sqcup  \ottnt{e_{{\mathrm{2}}}} }
  \end{minipage}
  \begin{minipage}{0.45\linewidth}
    \infrule[let-\ensuremath{,}]
    {\Gamma_{{\mathrm{1}}}  \vdash  \ottnt{M}  :  \ottnt{S}  \mid  \ottnt{e_{{\mathrm{1}}}} \\
      \ottmv{x}  \mathord:  \ottnt{S}  \ottsym{,}  \Gamma_{{\mathrm{2}}}  \vdash  \ottnt{N}  :  \ottnt{T}  \mid  \ottnt{e_{{\mathrm{2}}}}}
    {\Gamma_{{\mathrm{1}}}  \ottsym{,}  \Gamma_{{\mathrm{2}}}  \vdash  \ottkw{let} \, \ottmv{x}  \ottsym{=}  \ottnt{M} \, \ottkw{in} \, \ottnt{N}  :  \ottnt{T}  \mid   \ottnt{e_{{\mathrm{1}}}}  \sqcup  \ottnt{e_{{\mathrm{2}}}} }
  \end{minipage}
\end{center}
Rule (\textsc{let-}\ensuremath{\|}) requires disjointness of the resources of
$M$ and $N$. Hence, the resources in $x$ are also disjoint to those of
$N$.
Rule (\textsc{let-}\ensuremath{,}) forces a sequential split of the incoming
resources and puts the processing of $x$ in front of any other
resource operation in $N$.

While these rules are sound, rule (\textsc{let-}\ensuremath{\|}) is only
applicable if resources are disjoint. Rule (\textsc{let-}\ensuremath{,}) is
always applicable, but it may enforce more sequentiality than
necessary. To see that, consider $\mathtt{let}~z = x_1~\mathtt{in}~N$ in the context
\begin{displaymath}
   \Gamma \ottsym{=} \ottsym{(}  \ottmv{x_{{\mathrm{1}}}}  \mathord:  \ottnt{S_{{\mathrm{1}}}}  \ottsym{,}  \ottmv{x_{{\mathrm{2}}}}  \mathord:  \ottnt{S_{{\mathrm{2}}}}  \ottsym{)}  \parallel  \ottsym{(}  \ottmv{y_{{\mathrm{1}}}}  \mathord:  \ottnt{T_{{\mathrm{1}}}}  \ottsym{,}  \ottmv{y_{{\mathrm{2}}}}  \mathord:  \ottnt{T_{{\mathrm{2}}}}  \ottsym{)} 
  .
\end{displaymath}
Rule (\textsc{let-}\ensuremath{\|}) is not applicable because $x_2$ cannot be
put in parallel to $x_1$.
Rule (\textsc{let-}\ensuremath{,}) is not directly applicable to
context $\Gamma$, but we have to decompose it to obtain a more
restrictive context
\begin{displaymath}
   \Gamma_{{\mathrm{1}}} \ottsym{=} \ottsym{(}  \ottmv{x_{{\mathrm{1}}}}  \mathord:  \ottnt{S_{{\mathrm{1}}}}  \ottsym{)}  \qquad  \Gamma_{{\mathrm{2}}} \ottsym{=} \ottsym{(}  \ottmv{x_{{\mathrm{2}}}}  \mathord:  \ottnt{S_{{\mathrm{2}}}}  \ottsym{)}  \parallel  \ottsym{(}  \ottmv{y_{{\mathrm{1}}}}  \mathord:  \ottnt{T_{{\mathrm{1}}}}  \ottsym{,}  \ottmv{y_{{\mathrm{2}}}}  \mathord:  \ottnt{T_{{\mathrm{2}}}}  \ottsym{)} 
  .
\end{displaymath}
The problem is that $N$ is now typechecked in context
$\ottmv{z}  \mathord:  \ottnt{S_{{\mathrm{1}}}}  \ottsym{,}  \Gamma_{{\mathrm{2}}} = \ottsym{(}  \ottmv{z}  \mathord:  \ottnt{S_{{\mathrm{1}}}}  \ottsym{)}  \ottsym{,}  \ottsym{(}  \ottsym{(}  \ottmv{x_{{\mathrm{2}}}}  \mathord:  \ottnt{S_{{\mathrm{2}}}}  \ottsym{)}  \parallel  \ottsym{(}  \ottmv{y_{{\mathrm{1}}}}  \mathord:  \ottnt{T_{{\mathrm{1}}}}  \ottsym{,}  \ottmv{y_{{\mathrm{2}}}}  \mathord:  \ottnt{T_{{\mathrm{2}}}}  \ottsym{)}  \ottsym{)}$
so that we get an artificial sequential dependency stating that $z$ is
before $y_1$ and $y_2$. 

We address this restriction by defining a \emph{left context} $\mathcal{L}$ as follows.
\begin{center}
  \begin{tabulary}{\linewidth}{@{}r@{\enspace}c@{\enspace}J<{\hfill\null}@{}}
    $\mathcal{L}$ & $\Coloneqq$ & $\ottsym{[]} ~\mid~ \mathcal{L}  \ottsym{,}  \Gamma ~\mid~ \Gamma  \parallel  \mathcal{L}
    ~\mid~ \mathcal{L}  \parallel  \Gamma$
  \end{tabulary}
\end{center}
This context pattern finds the usage prefixes of resources. In our
example, we would use $\mathcal{L} = \ottsym{(}  \ottsym{[]}  \ottsym{,}  \ottmv{x_{{\mathrm{2}}}}  \mathord:  \ottnt{S_{{\mathrm{2}}}}  \ottsym{)}  \parallel  \ottsym{(}  \ottmv{y_{{\mathrm{1}}}}  \mathord:  \ottnt{T_{{\mathrm{1}}}}  \ottsym{,}  \ottmv{y_{{\mathrm{2}}}}  \mathord:  \ottnt{T_{{\mathrm{2}}}}  \ottsym{)}$ so that $\Gamma = \mathcal{L}  \ottsym{[}  \ottmv{x_{{\mathrm{1}}}}  \mathord:  \ottnt{S_{{\mathrm{1}}}}  \ottsym{]}$ and then we can typecheck $N$
with $\Gamma = \mathcal{L}  \ottsym{[}  \ottmv{z}  \mathord:  \ottnt{S_{{\mathrm{1}}}}  \ottsym{]}$, that is, without imposing additional
sequential dependencies. The resulting rule subsumes both
[let-\ensuremath{\|}] and  [let-\ensuremath{,}]:
\infrule[let]
{\Gamma  \vdash  \ottnt{M}  :  \ottnt{S}  \mid  \ottnt{e_{{\mathrm{1}}}} \\
  \mathcal{L}  \ottsym{[}  \ottmv{x}  \mathord:  \ottnt{S}  \ottsym{]}  \vdash  \ottnt{N}  :  \ottnt{T}  \mid  \ottnt{e_{{\mathrm{2}}}}}
{\mathcal{L}  \ottsym{[}  \Gamma  \ottsym{]}  \vdash  \ottkw{let} \, \ottmv{x}  \ottsym{=}  \ottnt{M} \, \ottkw{in} \, \ottnt{N}  :  \ottnt{T}  \mid   \ottnt{e_{{\mathrm{1}}}}  \sqcup  \ottnt{e_{{\mathrm{2}}}} }

Depending on the resources needed by the let header $M$, we may have
to sequentialize the incoming context as in this variation of our
example.
Consider 
$\mathtt{let}~z = (x_1 \otimes y_1)~\mathtt{in}~N$ in the same context
$\Gamma$ as before. Before we can match with some $\mathcal{L}$, we have to
sequentialize $\Gamma$ to 
$ \Gamma' \ottsym{=} \ottsym{(}  \ottmv{x_{{\mathrm{1}}}}  \mathord:  \ottnt{S_{{\mathrm{1}}}}  \parallel  \ottmv{y_{{\mathrm{1}}}}  \mathord:  \ottnt{T_{{\mathrm{1}}}}  \ottsym{)}  \ottsym{,}  \ottsym{(}  \ottmv{x_{{\mathrm{2}}}}  \mathord:  \ottnt{S_{{\mathrm{2}}}}  \parallel  \ottmv{y_{{\mathrm{2}}}}  \mathord:  \ottnt{T_{{\mathrm{2}}}}  \ottsym{)} $ and then we can
choose $\mathcal{L} = \ottsym{[]}  \ottsym{,}  \ottsym{(}  \ottmv{x_{{\mathrm{2}}}}  \mathord:  \ottnt{S_{{\mathrm{2}}}}  \parallel  \ottmv{y_{{\mathrm{2}}}}  \mathord:  \ottnt{T_{{\mathrm{2}}}}  \ottsym{)}$.
Our type system has a suitable weakening rule to allow for this
sequentialization in the context.

\subsection{Borrowing vs. Resource-passing}
\label{sec:funct-vs-imper}

The operations on resources can be defined using borrows or
by passing resources explicitly. 
The discussion so far concentrated on borrowing interfaces. As an
example, we consider an operation that consumes a 
parameter of type \lstinline|A| and a resource of type
\lstinline|[m1]| to return a value of type
\lstinline|B|.\footnote{\lstinline{b/op} stands for borrowing version
  of operation \lstinline{op}.}
\begin{lstlisting}
  b/op : A ox [ m1 ] -o B
\end{lstlisting}

A resource-passing version of the same operation consumes a resource
of type \lstinline|[m0]| and returns it in its next state
\lstinline|[m2]| that ``peels off'' the \lstinline|m1|$ = $\lstinline|m|$(op)$ denoting the single
application of \lstinline{op} from the \lstinline|m0|, if that is defined:
\begin{lstlisting}
  f/op : A ox [ m0 ] -o B ox [ m2 ]
\end{lstlisting}
Here \lstinline|m2|$ = \delta_{op} ($\lstinline|m0|$)$ is the (partial) transformation of
\texttt{op} on the OPM. It is partial as \texttt{op} need not be defined for all \lstinline|m0|.
We can transform the resource-passing interface into the
borrowing one if $\delta_{op}$ can be expressed by multiplication with
an element \lstinline|m1| in the monoid:
\lstinline|m1|$ \odot $\lstinline|m2|$ \le $\lstinline|m0|, the side
condition for \texttt{split}.
The reverse transformation is also possible as the resource returned
from \texttt{f/op} can be considered (formally) exhausted because
\lstinline|m1|$ \odot \varepsilon \le $\lstinline|m1|. Hence, we obtain the typing $r_0 : [\varepsilon]$.
\begin{center}
  \begin{minipage}{0.53\linewidth}
\begin{lstlisting}
let (b ox r2) = f/op (a ox r) in ...
----->
let (r1 .o r2) = split@m1 r in
let b = b/op (a ox r1)  in ...
\end{lstlisting}
  \end{minipage}
  \vline
  \begin{minipage}{0.46\linewidth}
\begin{lstlisting}
 let b = b/op (a ox r) in ...
 ----->
 let (b ox r0) = f/op (a ox r) in
 let () = drop r0 in ...
\end{lstlisting}
  \end{minipage}
\end{center}

\subsection{Aliasing and Permissions}
\label{sec:example-from-garcia}

\citet[page 5]{DBLP:journals/toplas/GarciaTWA14}  use access
permissions to manage the interaction of typestate and
borrows in their language FT. Their motivating discussion involves the
following function in the setting of an API similar to the file handles from the introduction.
\begin{lstlisting}
  void m(full OF >> full CF f, full OF >> full OF g) {    /* FT code */
    f.close(); print(g.file_desc.pos);             }
\end{lstlisting}
This function takes two file handles \texttt{f} and \texttt{g}; it
closes the handle \texttt{f} first, and then obtains information from
\texttt{g}. FT's parameter notation \texttt{S >> T f} says that \texttt{f} is expected
to be in typestate \texttt{S} when called and in typestate \texttt{T}
when \texttt{m} returns. Here \texttt{OF} and \texttt{CF} are short
for \texttt{OpenFile} and \texttt{ClosedFile}. The qualifier
\texttt{full} indicates that the 
reference must have exclusive write permission. This
function type avoids calls with aliased parameters as in
\texttt{m(h, h)}. Such a call would not be safe as the function
cannot obtain information from a closed file. In FT, a \texttt{full}
permission cannot split in two \texttt{full} permissions, so that it is impossible to have two
references with full permission to the same object.

In our setting and assuming that \lstinline{file_desc.pos} is a read
operation, the function type would be 
\begin{lstlisting}
  void m(f : Handle[ c ] .o g : Handle[ r ])
\end{lstlisting}
Suppose  we wish to pass some \lstinline|h : Handle[ rc ]| to both
parameters of \texttt{m}. To this end, we would have to 
split \texttt{h} at \texttt{r} into an ordered pair:
\begin{lstlisting}
  let (h1 .o h2) = split@r/c h in ...
\end{lstlisting}
In the body of the let, we have that \lstinline|h1 : Handle[ r ]|, \lstinline|h2 : Handle[ c ]|
and elimination of the ordered pair prescribes that \texttt{h1} and
\texttt{h2} must be consumed in this order. In particular, we \textbf{cannot} pass them
in the reverse order to \texttt{m}.

However, if we flip the parameters and the statements in the function
body, we obtain a function that we can call with the above split of \texttt{h}:
\begin{lstlisting}
  void ma(g : Handle[ r ] .o f : Handle[ c ]) {
   print(g.file_desc.pos); f.close();         }
\end{lstlisting}
This function can be called both with aliased or unaliased resources.

To obtain the same result in FT, we would have to revise the function
signature using the read-only permission  \texttt{pure}:
\begin{lstlisting}
  void ma(pure OF >> pure CF g, full OF >> full CF f)        /* FT code */
\end{lstlisting}
This FT signature faces a dilemma. If we use \texttt{pure CF} as the type of \texttt{g} after
the call, then we cannot call the function with unaliased arguments.  If
we use \texttt{pure OF} as the after-type of \texttt{g}, we cannot
call \texttt{ma} with aliased arguments.


\section{Language}
\label{sec:language}

In this section, we formalize a simply typed lambda calulus with
resource operations, ordered pair and function types, and
call-by-value evaluation. Compared to the linear calculus in the
informal presentation (\Cref{sec:motivation}), the calculus has an
unrestricted fragment with values that can be freely duplicated and
discarded.

Our formal semantics for resource operations does not perform actual
effects (like manipulating files, etc), it just tracks the trace of
operations as an element of an OPM and blocks if an operation is not
admitted by the specification. Consequently, our type soundness result is
actually trace soundness, that is, operations only occur in an order
admitted by their specification.

\subsection{Syntax}
\label{sec:language:syntax}

\begin{figure}
    \setlength{\extrarowheight}{\smallskipamount}
\begin{tabulary}{\linewidth}{@{}r@{\enspace}c@{\enspace}J<{\hfill\null}@{}}
    $\ottnt{c}$ & $\Coloneqq$ & $\ottkw{unit} \mid  \ottkw{new} _{ \ottnt{m} }  \mid  \ottkw{op} _{ \ottnt{m} }  \mid  \ottkw{split} _{ \ottnt{m_{{\mathrm{1}}}} , \ottnt{m_{{\mathrm{2}}}} }  \mid \ottkw{drop}$ \\
    $\ottnt{M}, \ottnt{N}$ & $\Coloneqq$ & $\ottnt{c} \mid \ottmv{l} \mid \ottmv{x} \mid \lambda  \ottmv{x}  \ottsym{.}  \ottnt{M} \mid \lambda^\circ  \ottmv{x}  \ottsym{.}  \ottnt{M} \mid \lambda^>  \ottmv{x}  \ottsym{.}  \ottnt{M} \mid \lambda^<  \ottmv{x}  \ottsym{.}  \ottnt{M} \mid \ottnt{M} \, \ottnt{N} \mid \ottnt{M}  {}^\circ  \ottnt{N} \mid \ottnt{M}  {}^>  \ottnt{N} \mid \ottnt{M}  {}^<  \ottnt{N} \mid \ottnt{M}  \otimes  \ottnt{N} \mid \ottnt{M}  \odot  \ottnt{N} \mid \ottkw{let} \, \ottmv{x}  \otimes  \ottmv{y}  \ottsym{=}  \ottnt{M} \, \ottkw{in} \, \ottnt{N} \mid \ottkw{let} \, \ottmv{x}  \odot  \ottmv{y}  \ottsym{=}  \ottnt{M} \, \ottkw{in} \, \ottnt{N}$
\end{tabulary}

    \caption{Syntax of expressions}
    \label{fig:syntax}
\end{figure}

\Cref{fig:syntax} defines the syntax, where elements of an {OPM} are
ranged over by $\ottnt{m}$, \emph{constants} by $\ottnt{c}$, \emph{expressions} by $\ottnt{M}$ and $\ottnt{N}$, \emph{locations} by $\ottmv{l}$, and \emph{variables} by $\ottmv{x}$ and $\ottmv{y}$.

Expressions include constants, locations, variables, abstractions, applications, and constructors and destructors for pairs.
Constants consist of $\ottkw{unit}$, the unique value of the type
$ \mathtt{Unit} $, and resource managing functions: $ \ottkw{new} _{ \ottnt{m} } $ to create a
new resource whose usage is specified by $\ottnt{m}$, $ \ottkw{op} _{ \ottnt{m} } $ to
perform the operation denoted by $\ottnt{m}$ on a resource, $ \ottkw{split} _{ \ottnt{m_{{\mathrm{1}}}} , \ottnt{m_{{\mathrm{2}}}} } $ to split the usage of a given resource into the head part
$\ottnt{m_{{\mathrm{1}}}}$ and the rest part $\ottnt{m_{{\mathrm{2}}}}$, and $\ottkw{drop}$ to dispose of a resource.

Locations, which only occur at run time, designate a resource.
We always refer to a resource via its location from a program, meaning
the resource managing functions operate on locations.
Abstractions vary in how they capture resources.
A \emph{non-capture abstraction} $\lambda  \ottmv{x}  \ottsym{.}  \ottnt{M}$ captures no resources.
On the other hand, \emph{capture abstractions}, $\lambda^\circ  \ottmv{x}  \ottsym{.}  \ottnt{M}$,
$\lambda^>  \ottmv{x}  \ottsym{.}  \ottnt{M}$, and $\lambda^<  \ottmv{x}  \ottsym{.}  \ottnt{M}$, may capture resources.
All abstractions bind variable $\ottmv{x}$ with scope $\ottnt{M}$.
The superscripts of $\lambda$ in capture abstractions specify a resource usage order between the resources in a given argument and those captured.
An \emph{unordered capture abstraction} $\lambda^\circ  \ottmv{x}  \ottsym{.}  \ottnt{M}$ has no order between them.
A \emph{right ordered capture abstraction} $\lambda^>  \ottmv{x}  \ottsym{.}  \ottnt{M}$ consumes
captured resources first and then uses those of the argument.
Conversely, a \emph{left ordered capture abstraction} $\lambda^<  \ottmv{x}  \ottsym{.}  \ottnt{M}$ consumes resources in the argument first.
Each abstraction comes with a corresponding applications $\ottnt{M} \, \ottnt{N}$,
$\ottnt{M}  {}^\circ  \ottnt{N}$, $\ottnt{M}  {}^>  \ottnt{N}$, and $\ottnt{M}  {}^<  \ottnt{N}$, that is, $\ottnt{M}  {}^\circ  \ottnt{N}$
expects evaluating $\ottnt{M}$ results in $\lambda^\circ  \ottmv{x}  \ottsym{.}  \ottnt{M_{{\mathrm{1}}}}$, and so on.

Pairs vary in whether elements have a usage order between them.
An \emph{unordered pair} $\ottnt{M}  \otimes  \ottnt{N}$ has no usage order between the values obtained from $\ottnt{M}$ and $\ottnt{N}$.
A typical situation is that either or both values have no resources, but it is also possible that both have resources that are independent.
An \emph{ordered pair} $\ottnt{M}  \odot  \ottnt{N}$, on the other hand, has a usage order.
We need to consume the resources in the value obtained from $\ottnt{M}$ before those from $\ottnt{N}$.
There are two corresponding elimination constructs $\ottkw{let} \, \ottmv{x}  \otimes  \ottmv{y}  \ottsym{=}  \ottnt{M} \, \ottkw{in} \, \ottnt{N}$ and $\ottkw{let} \, \ottmv{x}  \odot  \ottmv{y}  \ottsym{=}  \ottnt{M} \, \ottkw{in} \, \ottnt{N}$ that decompose the pair obtained
from $\ottnt{M}$ and bind $\ottmv{x}$ and $\ottmv{y}$ with scope $N$ accordingly.
Elimination of the unordered pair imposes no restriction on uses of
$\ottmv{x}$ and $\ottmv{y}$ in $\ottnt{N}$, but for the ordered pair $\ottmv{x}$ must
be consumed before $\ottmv{y}$.

\begin{remark}[Let expressions]\label{sec:language:letexpression}
    We can derive several \texttt{let} expressions with different constraints on resource usage, which appear as typing constraints.
    Here are some of the derived \texttt{let} expressions.
    \begin{align*}
        \ottkw{let} \, \ottmv{x}  \ottsym{=}  \ottnt{M} \, \ottkw{in} \, \ottnt{N}   \eqdef \ottsym{(}  \lambda^\circ  \ottmv{x}  \ottsym{.}  \ottnt{N}  \ottsym{)}  {}^\circ  \ottnt{M}  \qquad
        \mathtt{let}^>  \ottmv{x}  \ottsym{=}  \ottnt{M} \, \ottkw{in} \, \ottnt{N}  \eqdef \ottsym{(}  \lambda^<  \ottmv{x}  \ottsym{.}  \ottnt{N}  \ottsym{)}  {}^<  \ottnt{M} 
    \end{align*}
    The first one corresponds to the typing rule (\textsc{let-}$\parallel$), and the second one to (\textsc{let-}$,$) in \Cref{sec:splitting-order}.
\end{remark}

\subsection{Operational Semantics}
\label{sec:language:semantics}

We define the operational semantics as a small-step reduction relation
between \emph{configurations}---pairs of expressions and
\emph{heaps}.

\begin{definition}[Heap]
    A \emph{heap}, $\mathcal{H}$, is a finite map from locations
    to triples $\ottsym{(}  \ottnt{n}  \ottsym{,}  \ottnt{m_{{\mathrm{0}}}}  \ottsym{,}  \ottnt{m}  \ottsym{)}$, where $\ottnt{n}$ is a reference count
    (with $0$ denoting one reference), $\ottnt{m_{{\mathrm{0}}}}$ is the usage of the
    resource specified at its creation, and $\ottnt{m}$ records the trace of operations
    applied to the resource so far.
    We write $ \ottkw{dom} ( \mathcal{H} ) $ for the domain of $\mathcal{H}$
    and $ \mathcal{H} \cup \ottsym{\{} \ottmv{l}  \mapsto  \ottsym{(}  \ottnt{n}  \ottsym{,}  \ottnt{m_{{\mathrm{0}}}}  \ottsym{,}  \ottnt{m}  \ottsym{)} \ottsym{\}} $ for the updated heap that maps
    $\ottmv{l}$ to $\ottsym{(}  \ottnt{n}  \ottsym{,}  \ottnt{m_{{\mathrm{0}}}}  \ottsym{,}  \ottnt{m}  \ottsym{)}$ and other $\ottmv{l'} \ne \ottmv{l}$ to $\mathcal{H}  \ottsym{(}  \ottmv{l'}  \ottsym{)}$.
\end{definition}

\begin{figure}
    \setlength{\extrarowheight}{\smallskipamount}
\begin{tabulary}{\linewidth}{@{}r@{\enspace}c@{\enspace}J<{\hfill\null}@{}}
    $\ottnt{V}$ & $\Coloneqq$ & $\ottnt{c} \mid \ottmv{l} \mid \lambda  \ottmv{x}  \ottsym{.}  \ottnt{M} \mid \lambda^\circ  \ottmv{x}  \ottsym{.}  \ottnt{M} \mid \lambda^>  \ottmv{x}  \ottsym{.}  \ottnt{M} \mid \lambda^<  \ottmv{x}  \ottsym{.}  \ottnt{M} \mid \ottnt{V_{{\mathrm{1}}}}  \otimes  \ottnt{V_{{\mathrm{2}}}} \mid \ottnt{V_{{\mathrm{1}}}}  \odot  \ottnt{V_{{\mathrm{2}}}}$ \\
    $\mathcal{E}$ & $\Coloneqq$ & $\ottsym{[]} \mid \mathcal{E} \, \ottnt{M} \mid \ottnt{V} \, \mathcal{E} \mid \mathcal{E}  {}^\circ  \ottnt{M} \mid \ottnt{V}  {}^\circ  \mathcal{E} \mid \mathcal{E}  {}^>  \ottnt{M} \mid \ottnt{V}  {}^>  \mathcal{E} \mid \mathcal{E}  {}^<  \ottnt{V} \mid \ottnt{M}  {}^<  \mathcal{E} \mid \mathcal{E}  \otimes  \ottnt{M} \mid \ottnt{V}  \otimes  \mathcal{E} \mid \mathcal{E}  \odot  \ottnt{M} \mid \ottnt{V}  \odot  \mathcal{E} \mid \ottkw{let} \, \ottmv{x}  \otimes  \ottmv{y}  \ottsym{=}  \mathcal{E} \, \ottkw{in} \, \ottnt{M} \mid \ottkw{let} \, \ottmv{x}  \odot  \ottmv{y}  \ottsym{=}  \mathcal{E} \, \ottkw{in} \, \ottnt{M}$
\end{tabulary}

    \caption{Values and evaluation contexts}
    \label{fig:value}
\end{figure}

We use a call-by-value strategy for the semantics, which is organized
by using \emph{values}, ranged over by $\ottnt{V}$, and \emph{evaluation contexts}, $\mathcal{E}$, shown in \Cref{fig:value}.
Values are a subset of expressions comprising constants, locations, abstractions, and pairs of values.
An evaluation context is an expression with one \emph{hole} $\ottsym{[]}$.
The hole specifies the position of the sub-expression of an expression to be evaluated in the next step.
The evaluation order is generally left-to-right except for left
application, namely $\mathcal{E}  {}^<  \ottnt{V}$ and $\ottnt{M}  {}^<  \mathcal{E}$, which is evaluated right-to-left.

As usual, we write $\mathcal{E}  \ottsym{[}  \ottnt{M}  \ottsym{]}$ for the expression obtained by filling
the hole in $\mathcal{E}$ with $\ottnt{M}$
and $\ottnt{M}  \ottsym{[}  \ottnt{V}  \ottsym{/}  \ottmv{x}  \ottsym{]}$ for the expression obtained by substituting
$\ottnt{V}$ for $\ottmv{x}$ in $\ottnt{M}$.

\begin{figure}
    \setlength{\ruleSep}{.2in}
    \setlength{\premiseSep}{.2in}
    \setlength{\labelSep}{2pt}
    \input{figures/operation}
    \caption{Operational semantics}
    \label{fig:operation}
\end{figure}

We define the semantics by three relations: \emph{expression reduction}, denoted by $\ottnt{M}  \longrightarrow_\beta  \ottnt{M'}$, \emph{configuration reduction}, denoted by $\ottnt{M}  \mid  \mathcal{H}  \longrightarrow_\gamma  \ottnt{M'}  \mid  \mathcal{H}'$, and \emph{contextual reduction}, denoted by $\ottnt{M}  \mid  \mathcal{H}  \longrightarrow  \ottnt{M'}  \mid  \mathcal{H}'$.
\Cref{fig:operation} contains their defining rules.

The expression reduction $\ottnt{M}  \longrightarrow_\beta  \ottnt{M'}$ states that $\ottnt{M}$ reduces $\ottnt{M'}$.
The rules consist of the usual $\beta$-reduction for applications and \texttt{let}-reduction for pair eliminations.
One thing we need to care about is consistency: the kind of
application and the abstraction in its function part and the kind of
let-expression and the pair in its decomposed part have to agree.

The configuration reduction $\ottnt{M}  \mid  \mathcal{H}  \longrightarrow_\gamma  \ottnt{M'}  \mid  \mathcal{H}'$ organizes
resource manipulation. It reads: expression $\ottnt{M}$ steps to $\ottnt{M'}$ under the heap $\mathcal{H}$, and the heap is changed into $\mathcal{H}'$.
Rule \ruleref{RC-Ne} picks a fresh location $\ottmv{l}$, returns it, and
adds the map from $\ottmv{l}$ to the triple $\ottsym{(}  \ottsym{0}  \ottsym{,}  \ottnt{m}  \ottsym{,}  \varepsilon  \ottsym{)}$ to the
current heap: the reference counter is initialized to $\ottsym{0}$ and the
operation trace to $ \varepsilon $ as no operations have been performed, yet.
Rule \ruleref{RC-Op} implements the resource-passing interface: it appends the new operation to the trace for the resource at the specified location and returns the same location.
From the heap we know that the resource was allocated with behavior
$\ottnt{m_{{\mathrm{0}}}}$ and uses up to now accumulate to $\ottnt{m'}$. The side condition
checks if the operation
$\ottnt{m}$ is admissibile: the extension $ \ottnt{m'} \odot \ottnt{m} $ must be defined and
it must be possible to extend  $ \ottnt{m'} \odot \ottnt{m} $ to $\ottnt{m_{{\mathrm{0}}}}$ by some
$\ottnt{m''}$.

The rules \ruleref{RC-Cl1} and \ruleref{RC-Cl2} dispose of the location and return $\ottkw{unit}$.
Rule \ruleref{RC-Cl1} is for a location with an active alias in the
program, as seen from the reference count.
So, it just decrements the reference counter and does not dispose of the resource itself.
Rule \ruleref{RC-Cl2} applies to the last location pointing to a resource.
So, \ruleref{RC-Cl2} removes the resource from the heap, with the side
condition checking that the resource is fully used up, that is, the recorded operations belong to the original usage.
The premise $ \ottmv{l}  \notin   \ottkw{dom} ( \mathcal{H} )  $ in \ruleref{RC-Cl2} prevents the
resulting heap from having a resource at the disposed location.
At this point we can recover the borrowing interface to resources as it is
defineable by \texttt{i/op M}$ = \ottkw{drop} \, \ottsym{(}   \ottkw{op} _{ \ottnt{m} }  \, \ottnt{M}  \ottsym{)}$.

Rule \ruleref{RC-Sp} duplicates a location, increments the reference counter of the resource pointed to by the location, and returns the duplicated locations as a pair.
The reader may wonder why the rule does not touch the resource usage, although we have explained that $ \ottkw{split} _{ \ottnt{m} , \ottnt{m'} }  \, \ottmv{l}$ splits the resource usage for $\ottmv{l}$ by $\ottnt{m}$ and $\ottnt{m'}$.
In \Cref{sec:type-system}, we will see that we manage the resource usage solely by the type of the location.
Each of the three occurrences of  $\ottmv{l}$ in the expression may have a
different type, but this fact is not visible in the operational rule.

Contextual reduction $\ottnt{M}  \mid  \mathcal{H}  \longrightarrow  \ottnt{M'}  \mid  \mathcal{H}'$ defines reduction for
a whole program by reverting to either expression reduction or configuration
reduction in a suitable evaluation context.

\subsection{Type System}\label{sec:type-system}
\label{sec:language:type-system}

The primary objective of the type system is to prevent any operation
on a resource that would violate the usage specification at its creation.
This property is crucial for type safety---a well-typed program does not
get stuck---since \ruleref{RC-Op} and \ruleref{RC-Cl2} expect that
resources are used according to their specification.

\begin{figure}
    \setlength{\extrarowheight}{\smallskipamount}
\begin{tabulary}{\linewidth}{@{}r@{\enspace}c@{\enspace}J<{\hfill\null}@{}}
    $\ottnt{e}$ & $\Coloneqq$ & $\ottsym{0} \mid \ottsym{1} \mid  \ottnt{e_{{\mathrm{1}}}}  \sqcup  \ottnt{e_{{\mathrm{2}}}} $ \\
    $\ottnt{T}, \ottnt{S}$ & $\Coloneqq$ & $ \mathtt{Unit}  \mid \ottsym{[}  \ottnt{m}  \ottsym{]} \mid  \ottnt{S} \rightarrow _{ \ottnt{e} } \ottnt{T}  \mid  \ottnt{S} \rightarrowtriangle _{ \ottnt{e} } \ottnt{T}  \mid  \ottnt{S} \twoheadrightarrow _{ \ottnt{e} } \ottnt{T}  \mid  \ottnt{S} \rightarrowtail _{ \ottnt{e} } \ottnt{T}  \mid \ottnt{S}  \otimes  \ottnt{T} \mid \ottnt{S}  \odot  \ottnt{T}$
\end{tabulary}

    \caption{Syntax of types}
    \label{fig:types}
\end{figure}

\Cref{fig:types} defines the syntax of types, where $\ottnt{e}$ ranges
over \emph{effects} and  $\ottnt{T}$ and $\ottnt{S}$ over \emph{types}.
Effects are a crude way of keeping track whether resource operations might
happen. The effect $\ottsym{0}$ indicates that there are no operations;
effect $\ottsym{1}$ indicates that an operation might happen; effect
composition $ \ottnt{e_{{\mathrm{1}}}}  \sqcup  \ottnt{e_{{\mathrm{2}}}} $ just takes the maximum of $\ottnt{e_{{\mathrm{1}}}}$ and $\ottnt{e_{{\mathrm{2}}}}$.

The unit type $ \mathtt{Unit} $ is the type for $\ottkw{unit}$.
A trace type $\ottsym{[}  \ottnt{m}  \ottsym{]}$ classifies a resource location, which is good
for operations according to $\ottnt{m}$.
As customary in an ordered type system, there are several arrow types
corresponding to the different kinds of abstraction: that is, $ \ottnt{S} \rightarrow _{ \ottnt{e} } \ottnt{T} $ for non-capture abstractions; $ \ottnt{S} \rightarrowtriangle _{ \ottnt{e} } \ottnt{T} $ for unordered (linear)
capture abstractions; $ \ottnt{S} \twoheadrightarrow _{ \ottnt{e} } \ottnt{T} $ for right ordered capture
abstractions; and $ \ottnt{S} \rightarrowtail _{ \ottnt{e} } \ottnt{T} $ for left ordered capture
abstractions.
In each case, $\ottnt{S}$ is the parameter type, $\ottnt{T}$ is the return
type, and the subscript $\ottnt{e}$ denotes the latent effect that happens when the abstraction is invoked.
The last two types are product types: $\ottnt{S}  \otimes  \ottnt{T}$ for unordered pairs
and $\ottnt{S}  \odot  \ottnt{T}$ for ordered pairs, where $\ottnt{S}$ and $\ottnt{T}$ are the
types of the first and second projection.

Each type is classified as either \emph{unrestricted} or
\emph{ordered}, where an unrestricted type does not contain a
resource, a linear function, or an ordered function. For that reason,
values of unrestricted type can be safely duplicated.


\begin{definition}[Unrestricted and ordered types]
    We call $\ottnt{T}$ \emph{unrestricted} (\emph{ordered}) if and only if $\ottkw{unr} \, \ottnt{T}$ ($\ottkw{ord} \, \ottnt{T}$) is derivable by the following rules, respectively.
    \setlength{\ruleSep}{.2in}
    \setlength{\premiseSep}{.2in}
    \vskip\abovedisplayskip
    \begin{ynrules}
    \yninfer{
    }{
        $\ottkw{unr} \,  \mathtt{Unit} $
    }
    \yninfer{
    }{
        $\ottkw{unr} \, \ottsym{(}   \ottnt{S} \rightarrow _{ \ottnt{e} } \ottnt{T}   \ottsym{)}$
    }
    \yninfer{
        $\ottkw{unr} \, \ottnt{S}$ \\
        $\ottkw{unr} \, \ottnt{T}$
    }{
        $\ottkw{unr} \, \ottsym{(}  \ottnt{S}  \otimes  \ottnt{T}  \ottsym{)}$
    }
    \yninfer{
        $\ottkw{unr} \, \ottnt{S}$ \\
        $\ottkw{unr} \, \ottnt{T}$
    }{
        $\ottkw{unr} \, \ottsym{(}  \ottnt{S}  \odot  \ottnt{T}  \ottsym{)}$
    }
\end{ynrules}

\begin{ynrules}
    \yninfer{
    }{
        $\ottkw{ord} \, \ottsym{[}  \ottnt{m}  \ottsym{]}$
    }
    \yninfer{
    }{
        $\ottkw{ord} \, \ottsym{(}   \ottnt{S} \rightarrowtriangle _{ \ottnt{e} } \ottnt{T}   \ottsym{)}$
    }
    \yninfer{
    }{
        $\ottkw{ord} \, \ottsym{(}   \ottnt{S} \twoheadrightarrow _{ \ottnt{e} } \ottnt{T}   \ottsym{)}$
    }
    \yninfer{
    }{
        $\ottkw{ord} \, \ottsym{(}   \ottnt{S} \rightarrowtail _{ \ottnt{e} } \ottnt{T}   \ottsym{)}$
    }
    \yninfer{
        $\ottkw{ord} \, \ottnt{S}$
    }{
        $\ottkw{ord} \, \ottsym{(}  \ottnt{S}  \otimes  \ottnt{T}  \ottsym{)}$
    }
    \yninfer{
        $\ottkw{ord} \, \ottnt{T}$
    }{
        $\ottkw{ord} \, \ottsym{(}  \ottnt{S}  \otimes  \ottnt{T}  \ottsym{)}$
    }
    \yninfer{
        $\ottkw{ord} \, \ottnt{S}$
    }{
        $\ottkw{ord} \, \ottsym{(}  \ottnt{S}  \odot  \ottnt{T}  \ottsym{)}$
    }
    \yninfer{
        $\ottkw{ord} \, \ottnt{T}$
    }{
        $\ottkw{ord} \, \ottsym{(}  \ottnt{S}  \odot  \ottnt{T}  \ottsym{)}$
    }
\end{ynrules}

\end{definition}


\subsubsection{Typing Contexts}
\label{sec:language:type-system:typing-environments}


A novel part of our type system is the formulation of \emph{typing contexts}.
We define the syntax of typing contexts as follows, where
\emph{bindings} are ranged over by $\ottnt{b}$, typing contexts by $\Gamma$, and \emph{context patterns}  by $\mathcal{G}$.
\begin{align*}
    \ottnt{b} \Coloneqq \ottmv{x}  \mathord:  \ottnt{T} ~\mid~ \ottmv{l}  \mathord:  \ottsym{[}  \ottnt{m}  \ottsym{]}
    \qquad
    \Gamma \Coloneqq  \cdot  ~\mid~ \ottnt{b} ~\mid~ \Gamma_{{\mathrm{1}}}  \ottsym{,}  \Gamma_{{\mathrm{2}}} ~\mid~ \Gamma_{{\mathrm{1}}}  \parallel  \Gamma_{{\mathrm{2}}}
    \qquad
    \mathcal{G} \Coloneqq \ottsym{[]} ~\mid~ \Gamma  \ottsym{,}  \mathcal{G} ~\mid~ \mathcal{G}  \ottsym{,}  \Gamma ~\mid~ \Gamma  \parallel  \mathcal{G} ~\mid~ \mathcal{G}  \parallel  \Gamma
\end{align*}
A typing context assigns the type of each variable in an expression by a \emph{variable binding} $\ottmv{x}  \mathord:  \ottnt{T}$ and each location by a \emph{location binding} $\ottmv{l}  \mathord:  \ottsym{[}  \ottnt{m}  \ottsym{]}$ during typing.
The other context formers have been introduced in \Cref{sec:splitting-order}.
A typing context also represents the usage count and order of variables and locations.
For instance, the same variable may be bound several times and the contexts $\ottsym{(}  \ottmv{x}  \mathord:  \ottnt{T}  \ottsym{,}  \ottmv{x}  \mathord:  \ottnt{T}  \ottsym{)}$ and $\ottsym{(}  \ottmv{x}  \mathord:  \ottnt{T}  \ottsym{)}$ are different.


\begin{definition}[Notations for typing contexts]
    The \emph{domain} of $\Gamma$ is the set of variables and locations bound in $\Gamma$, denoted by $ \ottkw{dom} ( \Gamma ) $.
    We write  $ \ottsym{\#} _{ \ottnt{b} }( \Gamma ) $ for the number of occurences of $\ottnt{b}$ in $\Gamma$.
    We write $ \ottnt{b}  \in  \Gamma $ if and only if $ \ottsym{\#} _{ \ottnt{b} }( \Gamma )  > 0$.
    We write $\Gamma  \ottsym{[}  \Gamma'  \ottsym{/}  \ottnt{b}  \ottsym{]}$ to express the typing context obtained
    from $\Gamma$ by replacing all bindings $\ottnt{b}$ in $\Gamma$ with
    $\Gamma'$ and we write $ \Gamma ^{- \ottnt{b} } $ if $ \Gamma' \ottsym{=}  \cdot  $.
    We lift the notions of unrestricted and ordered types to bindings, denoted by $\ottkw{unr} \, \ottnt{b}$ and $\ottkw{ord} \, \ottnt{b}$, according to the type part of a binding.
    Furthermore, we lift the notion unrestricted to typing contexts, denoted by $\ottkw{unr} \, \Gamma$, meaning that $\ottkw{unr} \, \ottnt{b}$ for all $ \ottnt{b}  \in  \Gamma $.
    We use the same notations for context patterns
    $\mathcal{G}$, mutatis mutandis.
\end{definition}

As we have seen, a typing context may have multiple bindings for the same variable.
However, we do not use arbitrary typing contexts but only \emph{well-formed} ones.

\begin{definition}[Well-formed typing context]
    We call $\Gamma$ \emph{well-formed} if and only if
    \begin{itemize}
        \item $ \ottnt{S} \ottsym{=} \ottnt{T} $ for any $ \ottmv{x}  \mathord:  \ottnt{S}  \in  \Gamma $ and $ \ottmv{x}  \mathord:  \ottnt{T}  \in  \Gamma $, and
        \item $ \ottsym{\#} _{ \ottmv{x}  \mathord:  \ottnt{T} }( \Gamma )  \le 1$ for any $\ottmv{x}$ and $\ottkw{ord} \, \ottnt{T}$.
    \end{itemize}
\end{definition}

Informally, $\Gamma$ is well-formed if and only if each variable is assigned a unique type, and only unrestricted bindings can have multiple occurrences.
This definition respects the meaning of unrestricted and ordered types.
Ordered bindings can occur at most once.

\subsubsection{Semantics of Typing Contexts}

Many syntactically different typing contexts have the same meaning---for instance $\Gamma  \ottsym{,}   \cdot $ and $\Gamma$.
To precisely capture the meaning of a typing context, we interpret a typing context as a directed acyclic graph ({DAG}) of ordered bindings and a set of unrestricted bindings.
In the graph, vertices represent bindings and edges represent the ordering constraints among the bindings.
This structure captures the strength of an ordering constraint as the number of edges in the graph interpretation.
Using the interpretation, we can define equivalence and a subcontext relation between typing contexts semantically.

We start by recalling some definitions from graph theory.

\begin{definition}[Graph representation]
    A \emph{graph representation}, denoted by $\mathfrak{G}$, is a triple
    $(n, v, E)$, where $n\ge0$ is the number of
    vertices in the graph,
    $v$ is a map from $\NAT_{< n}$
    to bindings labeling each vertex by a binding, and $E$ is a binary
    relation between $\NAT_{< n}$ and $\NAT_{< n}$ representing the
    directed edges of the graph.
    As we mentioned, a graph representation must be acyclic.
\end{definition}



\begin{figure}
    \input{figures/ginterp}
    \caption{Join and union of graph representations}
    \label{fig:language:graph-join-union}
\end{figure}

\begin{definition}[Join and union of graph representations]
    Let $\mathfrak{G}_{{\mathrm{1}}} = (n_1, v_1, E_1)$ and $\mathfrak{G}_{{\mathrm{2}}} = (n_2, v_2, E_2)$.
    The join $ \GJOIN{ \mathfrak{G}_{{\mathrm{1}}} }{ \mathfrak{G}_{{\mathrm{2}}} } $ and the union $\mathfrak{G}_{{\mathrm{1}}}  \cup  \mathfrak{G}_{{\mathrm{2}}}$ are defined in \Cref{fig:language:graph-join-union}.
\end{definition}

\begin{definition}[Isomorphic graph representation]
    Let $\mathfrak{G}_{{\mathrm{1}}} = (n, v_1, E_1)$ and $\mathfrak{G}_{{\mathrm{2}}} = (n, v_2, E_2)$.
    We call $\mathfrak{G}_{{\mathrm{1}}}$ and $\mathfrak{G}_{{\mathrm{2}}}$ \emph{isomorphic} and write $\mathfrak{G}_{{\mathrm{1}}}  \simeq  \mathfrak{G}_{{\mathrm{2}}}$ if and only if there is a bijection $f$ from $\NAT_{< n}$ to $\NAT_{< n}$ satisfying $E_1 = \{(f(n), f(m)) \mid (n, m) \in E_2\}$ and $v_1 \circ f = v_2$.
\end{definition}

\begin{definition}[Spanning graph representations]
    Let $\mathfrak{G}_{{\mathrm{1}}} = (n, v, E_1)$ and $\mathfrak{G}_{{\mathrm{2}}} = (n, v, E_2)$.
    We call $\mathfrak{G}_{{\mathrm{1}}}$ a \emph{spanning graph representation} of $\mathfrak{G}_{{\mathrm{2}}}$, denoted by $\mathfrak{G}_{{\mathrm{1}}}  \mathrel{<_\rightarrow}  \mathfrak{G}_{{\mathrm{2}}}$, if and only if $E_1 \subseteq E_2$.
\end{definition}

\begin{definition}[Topological ordering of a graph representation]
    A \emph{topological ordering} of $\mathfrak{G} = (n, v, E)$ is a bijection $f$ from $\NAT_{< n}$ to $\NAT_{< n}$ satisfying $f^{-1}(n_1) < f^{-1}(n_2)$ for all $(n_1, n_2) \in E$.
    The bijective function defines the ordering of the vertices of the representation.
\end{definition}

Now, we define the interpretation of a typing context and then define
equivalence and the subcontext relation using the interpretation.

\begin{definition}[Context interpretation]
    We define the \emph{interpretation} of $\Gamma$, denoted by
    $ \lBrack \Gamma \rBrack $, as a pair of a graph and a set of unrestricted bindings.
    \begin{align*}
         \lBrack  \cdot  \rBrack         & = (0, \emptyset, \emptyset); \emptyset                                                                                                       \\
         \lBrack \ottnt{b} \rBrack         & = \begin{cases}
                                    (0, \emptyset, \emptyset); \{b\}               & \text{if $\ottkw{unr} \, \ottnt{b}$} \\
                                    (1, \{0 \mapsto \ottnt{b}\}, \emptyset); \emptyset & \text{if $\ottkw{ord} \, \ottnt{b}$}
                                \end{cases} \\
         \lBrack \Gamma_{{\mathrm{1}}}  \ottsym{,}  \Gamma_{{\mathrm{2}}} \rBrack    & =  \GJOIN{ \mathfrak{G}_{{\mathrm{1}}} }{ \mathfrak{G}_{{\mathrm{2}}} } ;  \mathbb{S}_{{\mathrm{1}}} \cup \mathbb{S}_{{\mathrm{2}}}  \qquad \text{where} \quad \text{$  \lBrack \Gamma_{{\mathrm{1}}} \rBrack  \ottsym{=} \mathfrak{G}_{{\mathrm{1}}}  \ottsym{;}  \mathbb{S}_{{\mathrm{1}}} $ and $  \lBrack \Gamma_{{\mathrm{2}}} \rBrack  \ottsym{=} \mathfrak{G}_{{\mathrm{2}}}  \ottsym{;}  \mathbb{S}_{{\mathrm{2}}} $}                   \\
         \lBrack \Gamma_{{\mathrm{1}}}  \parallel  \Gamma_{{\mathrm{2}}} \rBrack  & = \mathfrak{G}_{{\mathrm{1}}}  \cup  \mathfrak{G}_{{\mathrm{2}}};  \mathbb{S}_{{\mathrm{1}}} \cup \mathbb{S}_{{\mathrm{2}}}  \qquad \text{where} \quad \text{$  \lBrack \Gamma_{{\mathrm{1}}} \rBrack  \ottsym{=} \mathfrak{G}_{{\mathrm{1}}}  \ottsym{;}  \mathbb{S}_{{\mathrm{1}}} $ and $  \lBrack \Gamma_{{\mathrm{2}}} \rBrack  \ottsym{=} \mathfrak{G}_{{\mathrm{2}}}  \ottsym{;}  \mathbb{S}_{{\mathrm{2}}} $}
    \end{align*}
\end{definition}





\begin{definition}[Equivalent typing contexts]
    We say $\Gamma_{{\mathrm{1}}}$ is equivalent to $\Gamma_{{\mathrm{2}}}$ and write $\Gamma_{{\mathrm{1}}}  \simeq  \Gamma_{{\mathrm{2}}}$ if and only if $\mathfrak{G}_{{\mathrm{1}}}  \simeq  \mathfrak{G}_{{\mathrm{2}}} \wedge S_1 = S_2$ for $\mathfrak{G}_{{\mathrm{1}}}; S_1 =  \lBrack \Gamma_{{\mathrm{1}}} \rBrack $ and $\mathfrak{G}_{{\mathrm{2}}}; S_2 =  \lBrack \Gamma_{{\mathrm{2}}} \rBrack $.
\end{definition}

In \Cref{sec:splitting-order} we said that the context formers
$\Gamma_{{\mathrm{1}}}  \parallel  \Gamma_{{\mathrm{2}}}$ and $\Gamma_{{\mathrm{1}}}  \ottsym{,}  \Gamma_{{\mathrm{2}}}$ are monoids with unit $ \cdot $ and that
unordered composition is commutative. Moreover, ordered composition
with an unrestricted binding $\ottnt{b}$ is the same as unordered
composition with $\ottnt{b}$ (demotion). The graph interpretation of a
context yields a semantic characterization of these algebraic laws.
\begin{proposition}\label{prop:type-system/graph-cmp}
    Contexts $\Gamma_{{\mathrm{1}}}$ and $\Gamma_{{\mathrm{2}}}$ are related by applying the monoid laws,
    commutativity of unordered composition, and demotion with
    unrestricted bindings, if and only if  $\Gamma_{{\mathrm{1}}}  \simeq  \Gamma_{{\mathrm{2}}}$.
\begin{FULLVERSION}
    \proof See Appendix~\ref{sec:graph-cmp}.
\end{FULLVERSION}
\end{proposition}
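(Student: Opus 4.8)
The statement is a soundness-and-completeness result: writing $\approx$ for the congruence on contexts generated by the listed laws (associativity and unit of both $ \ottsym{,} $ and $ \parallel $, commutativity of $ \parallel $, and demotion of unrestricted bindings), I would prove that $\approx$ coincides with the interpretation-based equivalence $\simeq$. The plan is to establish the two implications separately, with soundness ($\Gamma_{{\mathrm{1}}} \approx \Gamma_{{\mathrm{2}}} \Rightarrow \Gamma_{{\mathrm{1}}} \simeq \Gamma_{{\mathrm{2}}}$) being routine and completeness ($\Gamma_{{\mathrm{1}}} \simeq \Gamma_{{\mathrm{2}}} \Rightarrow \Gamma_{{\mathrm{1}}} \approx \Gamma_{{\mathrm{2}}}$) carrying the real content.

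For soundness I would induct on the derivation of $\Gamma_{{\mathrm{1}}} \approx \Gamma_{{\mathrm{2}}}$. Because $ \lBrack \cdot \rBrack $ is defined compositionally through $\GJOIN{\cdot}{\cdot}$ and $\cup$ (\Cref{fig:language:graph-join-union}), it suffices to check (i) a congruence lemma stating that $\GJOIN{\cdot}{\cdot}$ and $\cup$ respect $\simeq$ on interpretations, which follows by assembling two component isomorphisms into a single block-preserving bijection on $\NAT_{< n_1 + n_2}$; and (ii) that each generating law is a valid identity in the model. Here $\GJOIN{\cdot}{\cdot}$ is associative with the empty graph $(0,\emptyset,\emptyset)$ as unit, $\cup$ is associative, commutative, and has the same unit, and the set component is combined by ordinary set union throughout. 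The only law that mixes the two formers is demotion, and its validity rests on a single observation: an unrestricted binding $\ottnt{b}$ has the empty graph as its graph component, so both $ \lBrack \Gamma  \ottsym{,}  \ottnt{b} \rBrack $ and $ \lBrack \Gamma  \parallel  \ottnt{b} \rBrack $ leave the graph of $\Gamma$ untouched and merely insert $\ottnt{b}$ into $\mathbb{S}$; the two interpretations are therefore literally equal.

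For completeness I would pass through a normal form that separates the ordered skeleton of a context from its unrestricted bindings. Using demotion together with commutativity, associativity, and the units, every $\Gamma$ can be rewritten to $\Gamma^\bullet  \parallel  B$, where $\Gamma^\bullet$ is built only from ordered bindings via $ \ottsym{,} $ and $ \parallel $ and $B$ is a parallel composition of the unrestricted bindings of $\Gamma$; the contraction and weakening available for unrestricted bindings (the structural rules that justify recording their occurrences as a set rather than a multiset) collapse $B$ to exactly $\mathbb{S}$, so the unrestricted part matches on both sides as soon as $\mathbb{S}_{{\mathrm{1}}} = \mathbb{S}_{{\mathrm{2}}}$. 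It then remains to show $\Gamma^\bullet_1 \approx \Gamma^\bullet_2$ whenever their graphs are isomorphic. The first step is to verify that each such graph is a strict partial order---acyclicity is given, and a short induction shows that $\GJOIN{\cdot}{\cdot}$ and $\cup$ preserve transitive closure---and, more importantly, that it is a \emph{series-parallel} order, since $ \ottsym{,} $ realizes series composition (every left vertex precedes every right vertex) and $ \parallel $ realizes parallel composition (the two blocks stay incomparable).

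The crux, and the step I expect to be the main obstacle, is the resulting uniqueness-of-decomposition claim: two ordered-skeleton terms denoting isomorphic labeled series-parallel orders must be interconvertible by associativity, commutativity of $ \parallel $, and the units. I would prove this by induction on the number of vertices, invoking the classical dichotomy that a non-singleton series-parallel order decomposes \emph{either} as a nontrivial series composition \emph{or} as a nontrivial parallel composition, but never both, and that the factors of this top-level decomposition are determined up to reordering of the parallel components. Matching the two decompositions case by case and recursing into the factors then yields the syntactic rewrite. The delicate points are the base cases (empty and singleton skeletons) and the argument that a term cannot be simultaneously a nontrivial series and a nontrivial parallel composition, which is exactly where the characterization of series-parallel (N-free) orders is used; isolating and proving this dichotomy cleanly is where most of the work lies.
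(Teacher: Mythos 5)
Your proposal is correct and follows essentially the same route as the paper's appendix proof: after splitting off the unrestricted bindings (your normal form $\Gamma^\bullet \parallel B$ is exactly the paper's first step), the completeness direction is likewise proved by induction on the number of vertices using your series/parallel dichotomy, which the paper phrases as the mutually exclusive existence of ``union cuts'' (parallel splits, refined into weakly connected components) and ``join cuts'' (series splits), each shown invariant under graph isomorphism. The only differences are packaging: the paper threads the induction through an explicit partial realization function $\REP{\mathfrak{G}}$ from graphs back to contexts (proving $\REP{\INTRP{\Gamma}} \equiv \Gamma$ and that $\REP{\cdot}$ maps isomorphic interpretations to $\equiv$-related contexts), and your ``never both'' dichotomy falls out of a one-line argument on the block-contiguous, topologically ordered indexing of interpretation graphs (an edge $(0,n-1)$ would have to be both present and absent), so no appeal to N-free/series-parallel order theory is needed and that step is not where the bulk of the work lies---the real effort sits in the union-decomposition and join-cut invariance lemmas you also identified.
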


\begin{definition}[Subcontexts]
    We say $\Gamma_{{\mathrm{1}}}$ is a subcontext of $\Gamma_{{\mathrm{2}}}$ and write $\Gamma_{{\mathrm{1}}}  \lesssim  \Gamma_{{\mathrm{2}}}$ if and only if there exist $\mathfrak{G}'_{{\mathrm{1}}}$ and $\mathfrak{G}'_{{\mathrm{2}}}$ such that $\mathfrak{G}_{{\mathrm{1}}} \sim \mathfrak{G}'_{{\mathrm{1}}}  \mathrel{<_\rightarrow}  \mathfrak{G}'_{{\mathrm{2}}} \sim \mathfrak{G}_{{\mathrm{2}}} \wedge S_1 \supseteq S_2$ for $\mathfrak{G}_{{\mathrm{1}}}; S_1 =  \lBrack \Gamma_{{\mathrm{1}}} \rBrack $ and $\mathfrak{G}_{{\mathrm{2}}}; S_2 =  \lBrack \Gamma_{{\mathrm{2}}} \rBrack $.
\end{definition}

The subcontext relation
adds more ordering constraints between the bindings in a typing context.
If $\Gamma_{{\mathrm{1}}}  \lesssim  \Gamma_{{\mathrm{2}}}$, a typing under $\Gamma_{{\mathrm{2}}}$ is more constrained on variables and locations use.
So, we design our type system so that an expression is typeable under $\Gamma_{{\mathrm{1}}}$ if we can type an expression under $\Gamma_{{\mathrm{2}}}$.
The spanning condition characterizes the additional edges that
sequentialize more bindings.

The subcontext relation cannot be used like classic weakening. For
instance, we will see that we can apply rule \textsc{[T-UPair]} to construct
$\ottmv{x_{{\mathrm{1}}}}  \otimes  \ottmv{x_{{\mathrm{2}}}}$  given the context $\ottmv{x_{{\mathrm{1}}}}  \mathord:  \ottnt{T_{{\mathrm{1}}}}  \parallel  \ottmv{x_{{\mathrm{2}}}}  \mathord:  \ottnt{T_{{\mathrm{2}}}}$. But this is
not possible in the subcontext  $\ottmv{x_{{\mathrm{1}}}}  \mathord:  \ottnt{T_{{\mathrm{1}}}}  \ottsym{,}  \ottmv{x_{{\mathrm{2}}}}  \mathord:  \ottnt{T_{{\mathrm{2}}}}$.


Unrestricted bindings enjoy all substructural rules---exchange,
weakening, and contraction---, which is managed by their treatment as
a set in the interpretation.

\subsubsection{Typing Rules}
\label{sec:language:type-system:typing-rules}

\begin{figure}
    \setlength{\ruleSep}{.2in}
    \setlength{\premiseSep}{.2in}
    \setlength{\labelSep}{2pt}
    \input{figures/typing}
    \caption{Typing rules}
    \label{fig:type/system}
\end{figure}

The typing judgment $\Gamma  \vdash  \ottnt{M}  :  \ottnt{T}  \mid  \ottnt{e}$ says that $\ottnt{M}$ has type
$\ottnt{T}$ under context $\Gamma$ with effect $\ottnt{e}$.  The judgment is
defined by the inference rules in \Cref{fig:type/system}.

The typing rules for constants reflect their semantics and have no effect.
A constant $ \ottkw{op} _{ \ottnt{m} } $ only has an effect on its application as
indicated by the subscript $\ottsym{1}$ of the arrow.
As explained earlier, we only consider resource operations as an
effect; effects are not related to heap modifications.
The typing rules for variables and locations simply pick the type in the typing context.
Rule \ruleref{T-Weaken} enables us to weaken the typing context
according to the subcontext relation and to weaken the effect (a
standard treatment for subeffecting).

The most intriguing rules are ones for abstractions and applications,
which control the usage order of resources that do not follow the
syntactic order.
Rule \ruleref{T-Abs} handles non-capture abstractions, which must not capture a resource.
The premise $\ottkw{unr} \, \Gamma$ expresses this condition.
The parameter type $\ottnt{S}$ can be ordered as an actual parameter is passed from outside.
The corresponding application rule is \ruleref{T-App}.
The typing context of the conclusion is supposed to be $\Gamma_{{\mathrm{1}}}  \ottsym{,}  \Gamma_{{\mathrm{2}}}$, and $\ottnt{M}$ is typed under $\Gamma_{{\mathrm{1}}}$ and $\ottnt{N}$ is typed under $\Gamma_{{\mathrm{2}}}$ because $\ottnt{M}$ is evaluated first.
The function part and the argument part can have an effect, as neither $\ottnt{e_{{\mathrm{1}}}}$ nor $\ottnt{e_{{\mathrm{2}}}}$ is constrained.
The argument part can have an effect because the function part becomes a non-capture abstraction; that means all usages assigned for $\Gamma_{{\mathrm{1}}}$ are consumed, and thus, $\ottnt{N}$ can perform operations on resources in $\Gamma_{{\mathrm{2}}}$.

In the typing rules for capture abstractions, $\Gamma$ may contain
ordered bindings, but the specific rules for concatenation of the binding for the parameter differs for each abstraction.
The difference affects how the typing context is split in the corresponding application rule.
For example, the typing environment of rule \ruleref{T-UApp} is supposed to be $\Gamma_{{\mathrm{1}}}  \parallel  \Gamma_{{\mathrm{2}}}$, which corresponds to the typing environment $\Gamma  \parallel  \ottmv{x}  \mathord:  \ottnt{S}$ of the premise in \ruleref{T-UAbs}.
This correspondence maintains unconsumed resources in the argument part to fit into the original usage position when an application happens.
Another remark is the effects of each subexpression.
In rule \ruleref{T-UApp}, both subexpressions can have an effect.
The evaluation of $\ottnt{M}$ may not consume resources in $\Gamma_{{\mathrm{1}}}$ fully,
which are then captured by the resulting abstraction, but $\ottnt{N}$ can still use a resource in $\Gamma_{{\mathrm{2}}}$ because $\Gamma_{{\mathrm{1}}}$ and $\Gamma_{{\mathrm{2}}}$ are order independent.
On the other hand, one of the subexpressions in rules \ruleref{T-RApp}
and \ruleref{T-LApp} must not have an effect because, in the rule,
$\Gamma_{{\mathrm{1}}}$ and $\Gamma_{{\mathrm{2}}}$ has an ordering constraint.
We also note that $\Gamma_{{\mathrm{1}}}$ and $\Gamma_{{\mathrm{2}}}$ are swapped in the premises of
rule \ruleref{T-LApp}, which reflects the evaluation order of the left application.

The typing rules for pairs are organized similarly to abstractions and applications.
Regarding the elimination rules---\ruleref{T-ULet} and
\ruleref{T-OLet}, we do not consider variations in how to introduce
the bound variables for the typing context for $\ottnt{N}$, which we can
have, but it can be handled as explained in
\Cref{sec:splitting-order}.

\begin{example}\label{example:lettyping}
    The following are derivations for the desugaring of \texttt{let}-expressions presented in \Cref{sec:language:letexpression}, which clarify the typing rules for them (we have anticipated it in \Cref{sec:splitting-order}).
    \begin{prooftree}
        \EnableBpAbbreviations
        \AXC{$\Gamma_{{\mathrm{2}}}  \parallel  \ottmv{x}  \mathord:  \ottnt{S}  \vdash  \ottnt{N}  :  \ottnt{T}  \mid  \ottnt{e_{{\mathrm{2}}}}$}
        \RL{\ruleref{T-UAbs}}
        \UIC{$\Gamma_{{\mathrm{2}}}  \vdash  \lambda^\circ  \ottmv{x}  \ottsym{.}  \ottnt{N}  :   \ottnt{S} \rightarrowtriangle _{ \ottnt{e_{{\mathrm{2}}}} } \ottnt{T}   \mid  \ottsym{0}$}
        \AXC{$\Gamma_{{\mathrm{1}}}  \vdash  \ottnt{M}  :  \ottnt{S}  \mid  \ottnt{e_{{\mathrm{1}}}}$}
        \RL{\ruleref{T-UApp} and \ruleref{T-Weaken}}
        \BIC{$\Gamma_{{\mathrm{1}}}  \parallel  \Gamma_{{\mathrm{2}}}  \vdash  \ottsym{(}  \lambda^\circ  \ottmv{x}  \ottsym{.}  \ottnt{N}  \ottsym{)}  {}^\circ  \ottnt{M}  :  \ottnt{T}  \mid   \ottnt{e_{{\mathrm{1}}}}  \sqcup  \ottnt{e_{{\mathrm{2}}}} $}
    \end{prooftree}
    \begin{prooftree}
        \EnableBpAbbreviations
        \AXC{$\ottmv{x}  \mathord:  \ottnt{S}  \ottsym{,}  \Gamma_{{\mathrm{2}}}  \vdash  \ottnt{N}  :  \ottnt{T}  \mid  \ottnt{e_{{\mathrm{2}}}}$}
        \RL{\ruleref{T-LAbs}}
        \UIC{$\Gamma_{{\mathrm{2}}}  \vdash  \lambda^<  \ottmv{x}  \ottsym{.}  \ottnt{N}  :   \ottnt{S} \rightarrowtail _{ \ottnt{e_{{\mathrm{2}}}} } \ottnt{T}   \mid  \ottsym{0}$}
        \AXC{$\Gamma_{{\mathrm{1}}}  \vdash  \ottnt{M}  :  \ottnt{S}  \mid  \ottnt{e_{{\mathrm{1}}}}$}
        \RL{\ruleref{T-LApp}}
        \BIC{$\Gamma_{{\mathrm{1}}}  \ottsym{,}  \Gamma_{{\mathrm{2}}}  \vdash  \ottsym{(}  \lambda^<  \ottmv{x}  \ottsym{.}  \ottnt{N}  \ottsym{)}  {}^<  \ottnt{M}  :  \ottnt{T}  \mid   \ottnt{e_{{\mathrm{1}}}}  \sqcup  \ottnt{e_{{\mathrm{2}}}} $}
    \end{prooftree}
\end{example}

\begin{example}
    Let's examine typing for the file permission example introduced in \Cref{example:opm-files} on the formal calculus.
    Consider the following typeable expression, where $\ottnt{M}; \ottnt{N}$ is the abbreviation of $\mathtt{let}^>  \ottmv{x}  \ottsym{=}  \ottnt{M} \, \ottkw{in} \, \ottnt{N}$ for some fresh $\ottmv{x}$.
    This expression creates the \emph{thunk} $f$ that keeps \texttt{read} operation on the opened file and uses it after that.
    So, we need to take care of the usage order of $f$, too.
    \begin{align*}
         & \texttt{let} \, x = \texttt{new}_{(r\vert w)^{*}c} \, \ottkw{unit} \, \texttt{in}                        \\
         & \texttt{let} \, \ottmv{x_{{\mathrm{1}}}}  \odot  \ottmv{x_{{\mathrm{2}}}} = \texttt{split}_{r,(r\vert w)^{*}c} \, x \, \texttt{in}                \\
         & \texttt{let}^{>} \, f = \lambda^\circ z. \texttt{drop} \, (\texttt{op}_{r} \, \ottmv{x_{{\mathrm{1}}}}) \, \texttt{in} \\
         & f^\circ\ottkw{unit}; \texttt{drop}\,(\texttt{op}_{c}\,\ottmv{x_{{\mathrm{2}}}})
    \end{align*}
    Let's see how our type system handles the usage order of $f$.
    For ease of writing, we abbreviate subexpressions as follows.
    \begin{align*}
        \ottnt{M}  & \Coloneqq \texttt{let} \, x = \texttt{new}_{(r\vert w)^{*}c} \, \ottkw{unit} \, \texttt{in} \, \ottnt{M_{{\mathrm{1}}}}                        \\
        \ottnt{M_{{\mathrm{1}}}} & \Coloneqq \texttt{let} \, \ottmv{x_{{\mathrm{1}}}}  \odot  \ottmv{x_{{\mathrm{2}}}} = \texttt{split}_{r,(r\vert w)^{*}c} \, x \, \texttt{in} \, \ottnt{M_{{\mathrm{2}}}}                \\
        \ottnt{M_{{\mathrm{2}}}} & \Coloneqq \texttt{let}^{>} \, f = \lambda^\circ z. \texttt{drop} \, (\texttt{op}_{r} \, \ottmv{x_{{\mathrm{1}}}}) \, \texttt{in} \, \ottnt{M_{{\mathrm{3}}}} \\
        \ottnt{M_{{\mathrm{3}}}} & \Coloneqq f^\circ\ottkw{unit}; \texttt{drop}\,(\texttt{op}_{c}\,\ottmv{x_{{\mathrm{2}}}})
    \end{align*}
    Then the expression is typed by the following derivations---it is still not complete, but we can fill the omitted subderivations.
    \small
    \begin{prooftree}
        \EnableBpAbbreviations
        \AXC{$f \mathord:   \mathtt{Unit}  \rightarrowtriangle _{ \ottsym{1} }  \mathtt{Unit}   \vdash f^\circ\ottkw{unit} :  \mathtt{Unit}  \mid 1$}
        \AXC{$\ottmv{x_{{\mathrm{2}}}} \mathord: [(r\vert w)^{*}c] \vdash \texttt{drop}\,(\texttt{op}_{c}\,\ottmv{x_{{\mathrm{2}}}}) :  \mathtt{Unit}  \mid 1$}
        \BIC{$f \mathord:   \mathtt{Unit}  \rightarrowtriangle _{ \ottsym{1} }  \mathtt{Unit}   , \ottmv{x_{{\mathrm{2}}}} \mathord: [(r\vert w)^{*}c] \vdash \ottnt{M_{{\mathrm{3}}}} :  \mathtt{Unit}  \mid 1$}
    \end{prooftree}
    \begin{prooftree}
        \EnableBpAbbreviations
        \AXC{$\ottmv{x_{{\mathrm{1}}}}\mathord:[r] \vdash \lambda^\circ z. \texttt{drop} \, (\texttt{op}_{r} \, \ottmv{x_{{\mathrm{1}}}}) :   \mathtt{Unit}  \rightarrowtriangle _{ \ottsym{1} }  \mathtt{Unit}   \mid 0$}
        \AXC{$f \mathord:   \mathtt{Unit}  \rightarrowtriangle _{ \ottsym{1} }  \mathtt{Unit}   , \ottmv{x_{{\mathrm{2}}}} \mathord: [(r\vert w)^{*}c] \vdash \ottnt{M_{{\mathrm{3}}}} :  \mathtt{Unit}  \mid 1$}
        \BIC{$\ottmv{x_{{\mathrm{1}}}}\mathord:[r], \ottmv{x_{{\mathrm{2}}}}\mathord:[(r\vert w)^{*}c] \vdash \ottnt{M_{{\mathrm{2}}}} :  \mathtt{Unit}  \mid 1$}
    \end{prooftree}
    \begin{prooftree}
        \EnableBpAbbreviations
        \AXC{$x\mathord:[(r\vert w)^{*}c] \vdash \texttt{split}_{r,(r\vert w)^{*}c} \, x : [r]  \odot  [(r\vert w)^{*}c] \mid 0$}
        \AXC{$\ottmv{x_{{\mathrm{1}}}}\mathord:[r], \ottmv{x_{{\mathrm{2}}}}\mathord:[(r\vert w)^{*}c] \vdash \ottnt{M_{{\mathrm{2}}}} :  \mathtt{Unit}  \mid 1$}
        \BIC{$x\mathord:[(r\vert w)^{*}c] \vdash \ottnt{M_{{\mathrm{1}}}} :  \mathtt{Unit}  \mid 1$}
    \end{prooftree}
    \begin{prooftree}
        \EnableBpAbbreviations
        \AXC{$ \cdot  \vdash \texttt{new}_{(r\vert w)^{*}c} \, \ottkw{unit} : [(r\vert w)^{*}c] \mid 0$}
        \AXC{$x\mathord:[(r\vert w)^{*}c] \vdash \ottnt{M_{{\mathrm{1}}}} :  \mathtt{Unit}  \mid 1$}
        \BIC{$ \cdot  \vdash \ottnt{M} :  \mathtt{Unit}  \mid 1$}
      \end{prooftree}
      \normalsize
    The subexpression at position $\ottnt{M_{{\mathrm{3}}}}$ is typed in context $f \mathord:   \mathtt{Unit}  \rightarrowtriangle _{ \ottsym{1} }  \mathtt{Unit}   , \ottmv{x_{{\mathrm{2}}}} \mathord: [(r\vert w)^{*}c]$.
    The type system rejects this misuse of thunk $f$ as in $\texttt{drop}\,(\texttt{op}_{c}\,\ottmv{x_{{\mathrm{2}}}}); f^\circ\ottkw{unit}$.
\end{example}

\subsection{Type Soundness}
\label{sec:language:properties}

In this subsection we sketch the proof of type soundness. It is
structured in the standard way by the following two propositions,
preservation and progress, where $\ottnt{C}$ denotes a \emph{run-time
    context}, which assigns a resource usage to each label, and $\ottnt{C}  \vdash  \mathcal{H}$ is a \emph{heap typing}; both are defined below.

\begin{proposition}[Preservation]
    If $\ottnt{M}  \mid  \mathcal{H}  \longrightarrow  \ottnt{M'}  \mid  \mathcal{H}'$, $\ottnt{C}  \vdash  \ottnt{M}  :  \ottnt{T}  \mid  \ottnt{e}$, and $\ottnt{C}  \vdash  \mathcal{H}$, then there exists $\ottnt{C'}$ such that $\ottnt{C'}  \vdash  \ottnt{M'}  :  \ottnt{T}  \mid  \ottnt{e}$ and $\ottnt{C'}  \vdash  \mathcal{H}'$.

\end{proposition}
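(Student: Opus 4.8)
The plan is to prove preservation by subject reduction, reducing the contextual step to a redex step and then treating expression reduction ($\longrightarrow_\beta$) and configuration reduction ($\longrightarrow_\gamma$) separately. Since $\ottnt{M}  \mid  \mathcal{H}  \longrightarrow  \ottnt{M'}  \mid  \mathcal{H}'$ is derived by either \ruleref{R-Exp} or \ruleref{R-Cfg}, in both cases $\ottnt{M} = \mathcal{E}  \ottsym{[}  \ottnt{M_{{\mathrm{0}}}}  \ottsym{]}$ for a redex $\ottnt{M_{{\mathrm{0}}}}$ and an evaluation context $\mathcal{E}$. First I would establish the auxiliary machinery needed to exploit this decomposition. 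Because \ruleref{T-Weaken} is not syntax-directed, I would prove generation (inversion) lemmas that see through weakening of the context (via $\lesssim$) and of the effect (via $\le$), so that each syntactic form exposes the premises of its defining rule up to subcontext and subeffect.

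The central lemma is a \emph{substitution lemma} matched to the tree-structured contexts: if $\mathcal{G}  \ottsym{[}  \ottmv{x}  \mathord:  \ottnt{S}  \ottsym{]}  \vdash  \ottnt{M}  :  \ottnt{T}  \mid  \ottnt{e}$ and $\ottnt{C}  \vdash  \ottnt{V}  :  \ottnt{S}  \mid  \ottsym{0}$, then $\mathcal{G}  \ottsym{[}  \ottnt{C}  \ottsym{]}  \vdash  \ottnt{M}  \ottsym{[}  \ottnt{V}  \ottsym{/}  \ottmv{x}  \ottsym{]}  :  \ottnt{T}  \mid  \ottnt{e}$, where the value's context is spliced into the exact position the binder occupied. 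The point is that the position must respect the ordering variant: for \ruleref{RE-LBeta} the argument's resources land to the left (matching the premise $\ottmv{x}  \mathord:  \ottnt{S}  \ottsym{,}  \Gamma$ of \ruleref{T-LAbs}), for \ruleref{RE-RBeta} to the right, and for \ruleref{RE-UBeta} and \ruleref{RE-ULet} in a parallel slot. This goes by induction on the derivation of $\ottnt{M}$, using \Cref{prop:type-system/graph-cmp} to move bindings between equivalent positions and the subcontext relation to discharge \ruleref{T-Weaken}. A companion \emph{replacement lemma} for evaluation contexts states that $\ottnt{C}  \vdash  \mathcal{E}  \ottsym{[}  \ottnt{N}  \ottsym{]}  :  \ottnt{T}  \mid  \ottnt{e}$ decomposes as $\ottnt{C} = \mathcal{C}  \ottsym{[}  \ottnt{C_{{\mathrm{0}}}}  \ottsym{]}$ with $\ottnt{C_{{\mathrm{0}}}}  \vdash  \ottnt{N}  :  \ottnt{S}  \mid  \ottnt{e_{{\mathrm{0}}}}$, and that plugging any $\ottnt{C'_{{\mathrm{0}}}}  \vdash  \ottnt{N'}  :  \ottnt{S}  \mid  \ottnt{e'}$ with $ \ottnt{e'} \le \ottnt{e_{{\mathrm{0}}}} $ yields $\mathcal{C}  \ottsym{[}  \ottnt{C'_{{\mathrm{0}}}}  \ottsym{]}  \vdash  \mathcal{E}  \ottsym{[}  \ottnt{N'}  \ottsym{]}  :  \ottnt{T}  \mid  \ottnt{e}$ (the effect slack is absorbed by \ruleref{T-Weaken}, which matters because an op-redex has effect $\ottsym{1}$ but reduces to a pure location). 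It is proved by induction on $\mathcal{E}$: each former corresponds to exactly one typing rule, so the split of $\ottnt{C}$ into sequential (``,'') or parallel ($\parallel$) subcontexts is forced; the right-to-left cases $\mathcal{E}  {}^<  \ottnt{V}$ and $\ottnt{M}  {}^<  \mathcal{E}$ need attention because \ruleref{T-LApp} swaps the two subcontexts, but the decomposition still goes through.

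With these lemmas, \ruleref{R-Exp} is immediate with $\ottnt{C'} = \ottnt{C}$ and $\mathcal{H}' = \mathcal{H}$: the substitution lemma closes each $\beta$-redex, and $\ottnt{C}  \vdash  \mathcal{H}$ persists unchanged. For \ruleref{R-Cfg} I would do a case analysis on the five rules of $\longrightarrow_\gamma$, in each case building $\ottnt{C'}$ by editing only the hole's subcontext $\ottnt{C_{{\mathrm{0}}}}$ and re-establishing $\ottnt{C'}  \vdash  \mathcal{H}'$. By inversion through \ruleref{T-App}: \ruleref{RC-Ne} types $ \ottkw{new} _{ \ottnt{m} }  \, \ottkw{unit}$ at $\ottsym{[}  \ottnt{m}  \ottsym{]}$, so $\ottnt{C'}$ extends $\ottnt{C}$ with a fresh $\ottmv{l}  \mathord:  \ottsym{[}  \ottnt{m}  \ottsym{]}$ matching the new cell $\ottsym{(}  \ottsym{0}  \ottsym{,}  \ottnt{m}  \ottsym{,}  \varepsilon  \ottsym{)}$; \ruleref{RC-Op} retypes the single occurrence of $\ottmv{l}$ from $\ottsym{[}  \ottnt{m_{{\mathrm{0}}}}  \ottsym{]}$ to $\ottsym{[}  \ottnt{m_{{\mathrm{2}}}}  \ottsym{]}$; and \ruleref{RC-Sp} replaces one occurrence $\ottmv{l}  \mathord:  \ottsym{[}  \ottnt{m_{{\mathrm{0}}}}  \ottsym{]}$ by the ordered pair $\ottmv{l}  \mathord:  \ottsym{[}  \ottnt{m_{{\mathrm{1}}}}  \ottsym{]}  \ottsym{,}  \ottmv{l}  \mathord:  \ottsym{[}  \ottnt{m_{{\mathrm{2}}}}  \ottsym{]}$ in the same ordered slot, mirroring the value $\ottmv{l}  \odot  \ottmv{l}$ and the incremented reference count; \ruleref{RC-Cl1} and \ruleref{RC-Cl2} delete one occurrence and, for the last reference, the heap cell.

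The main obstacle is pinning down the heap typing $\ottnt{C}  \vdash  \mathcal{H}$ so that it is an invariant, given that one location may occur several times in $\ottnt{C}$ with different trace types. I expect the right invariant to state, for each cell $\ottmv{l}  \mapsto  \ottsym{(}  \ottnt{n}  \ottsym{,}  \ottnt{m_{{\mathrm{0}}}}  \ottsym{,}  \ottnt{m}  \ottsym{)}$, that $\ottmv{l}$ occurs exactly $\ottnt{n}  \ottsym{+}  \ottsym{1}$ times and that the already-performed trace $\ottnt{m}$, composed in order with the budgets carried by those occurrences (as ordered in the DAG interpretation of $\ottnt{C}$, extracted via the collapsing operator $ \langle \ottnt{C} \rangle_{ \ottmv{l} } $), stays below a prefix of $\ottnt{m_{{\mathrm{0}}}}$: there is some $\ottnt{m''}$ with $\ottnt{m}  \odot  (\text{product of budgets})  \odot  \ottnt{m''} \le \ottnt{m_{{\mathrm{0}}}}$. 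Re-establishing this across \ruleref{RC-Op} uses associativity together with monotonicity and downward closure (advancing $\ottnt{m}$ to $ \ottnt{m} \odot \ottnt{m_{{\mathrm{1}}}} $ while shrinking the budget to $\ottnt{m_{{\mathrm{2}}}}$ with $  \ottnt{m_{{\mathrm{1}}}} \odot \ottnt{m_{{\mathrm{2}}}}  \le \ottnt{m_{{\mathrm{0}}}} $, so the side condition of \ruleref{RC-Op} is witnessed by $\ottnt{m''} = \ottnt{m_{{\mathrm{2}}}}$); across \ruleref{RC-Sp} it uses $  \ottnt{m_{{\mathrm{1}}}} \odot \ottnt{m_{{\mathrm{2}}}}  \le \ottnt{m_{{\mathrm{0}}}} $ directly; and the \ruleref{RC-Cl2} side condition $ \ottnt{m} \le \ottnt{m_{{\mathrm{0}}}} $ falls out from the droppability premise $ \varepsilon \le \ottnt{m} $ of \ruleref{T-Drop} by neutrality and monotonicity. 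Getting this invariant, and its interaction with the ordered structure of the run-time context, to track exactly through splits and operations is where the real work lies.
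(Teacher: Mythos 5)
Your overall strategy mirrors the paper's: a substitution lemma for tree-shaped contexts (with the per-variant placement of the substituted context), the \ruleref{R-Exp}/\ruleref{R-Cfg} split with $C'=C$ and unchanged heap for $\beta$-steps, and per-rule heap-tracking lemmas for $\gamma$-steps, with the heap invariant combining a reference count of $n+1$ and an ordered product of budgets bounded by the original usage. The genuine gap sits exactly where you say ``the real work lies.'' Your uniform replacement lemma plugs a retyped subcontext $C_0'$ into an \emph{arbitrary} position $\mathcal{C}[\,\cdot\,]$ of the run-time context, and your re-establishment of the invariant for \ruleref{RC-Op} silently assumes that the occurrence of $l$ being operated on is the \emph{first} binding of $l$ in the topological order of the focused context. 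If it is not---say the order lists $l\mathord:[m_a]$ before the operated occurrence $l\mathord:[m']$---then from the invariant $m \odot m_a \odot m' \le m_0$ and $m_1 \odot m_2 \le m'$ you would need $(m \odot m_1) \odot m_a \odot m_2 \le m_0$ afterwards, and this does \emph{not} follow: $m_1$ and $m_a$ are transposed and the monoid is non-commutative. What rules this situation out is the effect discipline (the side conditions of \ruleref{T-RApp}, \ruleref{T-LApp}, \ruleref{T-OPair}, plus the fact that an evaluated function in \ruleref{T-App} has unrestricted type and hence a discardable context): an effect-$1$ redex can only occur at an order-initial position. The paper encodes this by proving \emph{two} preservation lemmas for $\longrightarrow_\gamma$---an effect-free one whose heap hypothesis is $\mathcal{C}[C] \vdash \mathcal{H}$ for an arbitrary run-time pattern $\mathcal{C}$, and a general one whose hypothesis is $C , C'' \vdash \mathcal{H}$, i.e.\ the redex's context is sequentially first---and the induction over $\mathcal{E}$ alternates between them depending on which subterm holds the hole. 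Your single position-agnostic replacement lemma cannot express or exploit this information, so the \ruleref{RC-Op} case does not go through as stated; repairing it amounts to reconstructing the paper's two-lemma structure (together with the freshness side condition $(\mathtt{dom}(C')\setminus\mathtt{dom}(C)) \cap \mathtt{dom}(\mathcal{C}[\,\cdot\,]) = \emptyset$ that the paper threads through to keep new locations from colliding with the surrounding context).

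A second, smaller mismatch: the invariant you propose carries an existential slack, $\exists m''.\ m \odot (\text{budgets}) \odot m'' \le m_0$, whereas the paper's heap typing demands $m \odot \overline{\langle C \rangle_l} \le m_0$ outright. Since the proposition's conclusion $C' \vdash \mathcal{H}'$ refers to the paper's definition, it is the stronger form you must re-establish. Your slack version happens to be self-preserving, so preservation alone would survive, but it is strictly weaker: from droppability $\varepsilon \le m'$ of the last reference it only yields $m \odot m'' \le m_0$, not the $m \le m_0$ that the rest of the development (progress at \ruleref{RC-Cl2}) extracts from the paper's invariant. So even where your argument succeeds, it proves a statement about a different judgment than the one in the proposition.
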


\begin{proposition}[Progress]
    If $\ottnt{C}  \vdash  \ottnt{M}  :  \ottnt{T}  \mid  \ottnt{e}$ and $\ottnt{C}  \ottsym{,}  \ottnt{C'}  \vdash  \mathcal{H}$, then
    \begin{enumerate}
        \item $\ottnt{M}$ is a value, namely $ \ottnt{M} \ottsym{=} \ottnt{V} $ for some $\ottnt{V}$, or
        \item $\ottnt{M}  \mid  \mathcal{H}  \longrightarrow  \ottnt{M'}  \mid  \mathcal{H}'$ for some $\ottnt{M'}$ and $\mathcal{H}'$.
    \end{enumerate}
\end{proposition}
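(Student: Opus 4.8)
The plan is to prove Progress by induction on the derivation of $C \vdash M : T \mid e$. Since a run-time context $C$ binds only locations, the term $M$ is variable-closed, so \ruleref{T-Var} never applies at the root and every sub-context arising from a split $\Gamma_1, \Gamma_2$ or $\Gamma_1 \parallel \Gamma_2$ is again location-only. The spare context $C'$ in the heap-typing hypothesis $C, C' \vdash \mathcal{H}$ is precisely what threads the heap typing through these splits: when I recurse into a subterm typed under part of $C$, I fold the remaining location bindings into $C'$, using that heap typing is insensitive to the associativity and commutativity of context composition. I only ever recurse into subterms in evaluation position; bodies of abstractions and \texttt{let}-expressions sit under binders and are never inspected.

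First I would establish three auxiliary facts. (i) A \emph{canonical forms} lemma: a value of type $\mathtt{Unit}$ is $\mathtt{unit}$; of type $[m]$ is a location; of a plain arrow $S \to_e T$ is either $\lambda x.M$ or one of the constants $\mathtt{new}_m, \mathtt{op}_m, \mathtt{split}_{m_1,m_2}, \mathtt{drop}$; of $S \rightarrowtriangle_e T$, $S \twoheadrightarrow_e T$, $S \rightarrowtail_e T$ the corresponding $\lambda^\circ, \lambda^>, \lambda^<$ abstraction; and of $S \otimes T$ (resp.\ $S \odot T$) a pair $V_1 \otimes V_2$ (resp.\ $V_1 \odot V_2$). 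Each is read off by inspecting which rules can type a value at that type, noting that \ruleref{T-Weaken} leaves the term unchanged. (ii) Evaluation contexts compose: if $M \mid \mathcal{H} \longrightarrow M' \mid \mathcal{H}'$ and $\mathcal{E}$ is any evaluation context, then $\mathcal{E}[M] \mid \mathcal{H} \longrightarrow \mathcal{E}[M'] \mid \mathcal{H}'$, since $\mathcal{E}[\mathcal{E}'[\cdot]]$ is again an evaluation context. (iii) A \emph{heap extraction} lemma: if $l : [m_l]$ is the type recorded for $l$ in $C$ and $C, C' \vdash \mathcal{H}$, then $l \in \mathrm{dom}(\mathcal{H})$ and its entry $(n, m_0, m')$ satisfies the heap invariant relating the accumulated trace $m'$, the remaining budget $m_l$, and the creation usage $m_0$; concretely I expect $m' \odot m_l \le m_0$, suitably combined over all aliases of $l$.

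With these in hand the induction is routine for most rules. The value-forming rules give case (1) immediately; \ruleref{T-Weaken} applies the induction hypothesis verbatim, since the argument above shows its premise context stays location-only. For the congruence cases --- the four applications, the two pairs, and the two \texttt{let}s --- I apply the induction hypothesis to the operand the evaluation strategy visits first (for $M {}^< N$ this is the \emph{argument} $N$, reflecting its right-to-left order), wrapping any resulting step in the appropriate evaluation context via (ii); when it is a value I move to the next operand. Once all operands are values, canonical forms identify the redex: a $\beta$-redex reduces by the matching \ruleref{RE-Beta}/\ruleref{RE-UBeta}/\ruleref{RE-RBeta}/\ruleref{RE-LBeta} and \ruleref{R-Exp}; a pair-\texttt{let} by \ruleref{RE-ULet}/\ruleref{RE-OLet} (the freshness premise $\{x\} \uplus \{y\} \uplus \dots$ supplies $x \neq y$); and the applied constants fire $\mathtt{new}_m\,\mathtt{unit}$ by \ruleref{RC-Ne}, $\mathtt{split}_{m_1,m_2}\,l$ by \ruleref{RC-Sp}, $\mathtt{drop}\,l$ by \ruleref{RC-Cl1} or \ruleref{RC-Cl2} according to the reference count, and $\mathtt{op}_m\,l$ by \ruleref{RC-Op}, all lifted by \ruleref{R-Cfg}.

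The main obstacle is discharging the OPM side conditions of \ruleref{RC-Op} and \ruleref{RC-Cl2}, which is exactly where protocol adherence lives. For \ruleref{RC-Op}, inverting \ruleref{T-Op} at the location type $[m_l]$ gives $m \odot m_2 \le m_l$ for the result index $m_2$; combining this with the heap invariant $m' \odot m_l \le m_0$ from (iii) via monotonicity, downward-closedness, and associativity of the OPM yields $(m' \odot m) \odot m_2 \le m_0$ with $m' \odot m$ defined, so the witness $m'' = m_2$ satisfies the premise. For \ruleref{RC-Cl2}, inverting \ruleref{T-Drop} gives $\varepsilon \le m_l$, whence $m' = m' \odot \varepsilon \le m' \odot m_l \le m_0$ establishes the required $m' \le m_0$. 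The genuinely delicate part is formulating the heap-typing invariant (iii) so that it simultaneously supports these inequalities and survives the alias bookkeeping of \ruleref{RC-Sp}, where one budget $m_l$ must split into two alias budgets $m_1, m_2$ with $m_1 \odot m_2 \le m_l$; getting that definition right --- and proving its extraction lemma --- is the crux, and it is shared with the Preservation proof.
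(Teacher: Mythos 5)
Your overall architecture (induction on the typing derivation, canonical forms, an evaluation-context composition lemma, and OPM reasoning to discharge the side conditions of \textsc{RC-Op} and \textsc{RC-Cl2}) matches the paper, and your treatment of the OPM inequalities is essentially the paper's: inversion of \textsc{T-Op}/\textsc{T-Drop} combined with the heap invariant, monotonicity, and downward closure. However, there is a genuine gap in the induction itself, and it sits exactly on the claim you use to wave it away: ``heap typing is insensitive to the associativity and commutativity of context composition.'' It is \emph{not} insensitive to commutativity of the ordered composition ``,''. Heap typing constrains, for each location $l$, the product $m \odot \overline{\langle C \rangle_l} \le m_0$ of the alias budgets of $l$ taken in the (unique topological) order prescribed by $C$, and OPM multiplication is not commutative; $C_1, C_2 \vdash \mathcal{H}$ does not yield $C_2, C_1 \vdash \mathcal{H}$. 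Consequently your single induction, whose hypothesis requires the subject's context to sit at the \emph{front} of the heap-typing context ($C, C' \vdash \mathcal{H}$), cannot recurse into the second operand of an ordered construct once the first operand is a value: in \textsc{T-RApp}, when $M_1 = V_1$ is a closure of type $S \twoheadrightarrow_e T$ it may capture resources, so $C_1$ is not weakenable to $\cdot$, and $M_2$ is typed under $C_2$ which sits \emph{behind} $C_1$ in $C_1, C_2, C' \vdash \mathcal{H}$; no folding of $C_1$ into the spare context is available. The same problem occurs in \textsc{T-LApp} (for the function part), \textsc{T-OPair}, \textsc{T-ULet}, and \textsc{T-OLet}.

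The paper resolves this by stratifying the proof into two lemmas, and this stratification is not an optimization but the key idea. It first proves Progress for \emph{effect-free} expressions under the hypothesis $\mathcal{C}[C] \vdash \mathcal{H}$, where $\mathcal{C}$ is an arbitrary run-time context \emph{pattern} surrounding $C$; this stronger hypothesis is exactly what lets one recurse into an operand buried in the middle of the heap-typing context. That lemma is only sound because such operands are typed with effect $0$: in its \textsc{T-App} case the sub-case $V_1 = \mathtt{op}_{m_1}$ is vacuous (the op constant has latent effect $1$), so the one reduction rule that genuinely needs the operated location's binding to be leftmost in the heap-typing context, namely \textsc{RC-Op}, never has to fire under a general pattern. (If it did, the invariant would only give $m' \odot m_{\mathrm{pre}} \odot m_l \odot m_{\mathrm{post}} \le m_0$ with the budgets $m_{\mathrm{pre}}$ of earlier aliases in the way, and the premise of \textsc{RC-Op} could not be discharged.) The general Progress lemma, with hypothesis $C, C' \vdash \mathcal{H}$, then handles effectful subterms via its own induction hypothesis and dispatches all effect-$0$ subterms to the pattern-based lemma. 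To repair your proof you need to formulate and prove this auxiliary effect-free lemma (or an equivalent strengthening of the induction hypothesis that distinguishes effects); the effect annotations in the typing rules for ordered constructs are precisely what makes it available.
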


Roughly speaking, preservation says the usage assignment by $\ottnt{C}$,
which changes by reducing an expression, stays correct against the
specification in the heap, and progress says that a well-typed expression does not get stuck.



We start by defining run-time contexts and heap typing.

\begin{definition}[Run-time context]
    We call a typing context \emph{run-time context}, denoted by $\ottnt{C}$, if and only if the typing context has no variable bindings.
    We abuse $ \lBrack \ottnt{C} \rBrack $ to denote only the graph part of the interpretation because the set part is always empty.
\end{definition}

\begin{definition}[Run-time context pattern]
    We call a typing context pattern \emph{run-time context pattern}, denoted by $\mathcal{C}$, if and only if the pattern has no variable bindings.
\end{definition}

\begin{definition}[Focus on location]
    A \emph{focus} on location $\ottmv{l}$ in $\ottnt{C}$, denoted by $ \langle \ottnt{C} \rangle_{ \ottmv{l} } $, is the run-time context obtained from $\ottnt{C}$ by replacing all bindings $ \ottmv{l'}  \mathord:  \ottsym{[}  \ottnt{m}  \ottsym{]}  \in  \ottnt{C} $ such that $ \ottmv{l'} \neq \ottmv{l} $ with $ \cdot $.
    We abuse the notation for run-time context patterns as intended.
\end{definition}

\begin{definition}[Order-defined run-time environment]
    We call $\ottnt{C}$ \emph{order-defined} if and only if a topological ordering of $ \lBrack  \langle \ottnt{C} \rangle_{ \ottmv{l} }  \rBrack $ is unique for any $\ottmv{l}$.
\end{definition}

\begin{definition}[Usage projection]
    Suppose a topological ordering of $ \lBrack \ottnt{C} \rBrack $ is unique.
    We write $ \overline{ \ottnt{C} } $ to express $\ottnt{m_{{\mathrm{1}}}} \odot \dots \odot \ottnt{m_{\ottmv{k}}}$ where $\ottmv{l_{{\mathrm{1}}}}  \mathord:  \ottsym{[}  \ottnt{m_{{\mathrm{1}}}}  \ottsym{]}, \dots, \ottmv{l_{\ottmv{k}}}  \mathord:  \ottsym{[}  \ottnt{m_{\ottmv{k}}}  \ottsym{]}$ is the unique topological ordering of $ \lBrack \ottnt{C} \rBrack $.
    We typically use the projection for $ \langle \ottnt{C} \rangle_{ \ottmv{l} } $.
    In such a case, $\ottmv{l_{{\mathrm{1}}}} = \dots = \ottmv{l_{\ottmv{k}}} = \ottmv{l}$.
\end{definition}

\begin{definition}[Heap typing]
    We say $\ottnt{C}$ \emph{types} $\mathcal{H}$, denoted by $\ottnt{C}  \vdash  \mathcal{H}$, if and only if
    \begin{itemize}
        \item $\ottnt{C}$ is order-defined,
        \item $  \ottkw{dom} ( \ottnt{C} )  \ottsym{=}  \ottkw{dom} ( \mathcal{H} )  $, and
        \item for any $ \ottmv{l}  \in   \ottkw{dom} ( \mathcal{H} )  $ and $ \ottsym{(}  \ottnt{n}  \ottsym{,}  \ottnt{m_{{\mathrm{0}}}}  \ottsym{,}  \ottnt{m}  \ottsym{)} \ottsym{=} \mathcal{H}  \ottsym{(}  \ottmv{l}  \ottsym{)} $:
              \begin{enumerate}
                  \item $ |  \lBrack  \langle \ottnt{C} \rangle_{ \ottmv{l} }  \rBrack  |_{\bullet}  = n + 1$,
                  \item $  \ottnt{m} \odot  \overline{  \langle \ottnt{C} \rangle_{ \ottmv{l} }  }   \le \ottnt{m_{{\mathrm{0}}}} $.
              \end{enumerate}
    \end{itemize}
\end{definition}

The first condition of heap typing says there is no ambiguity in the usage order specified by $\ottnt{C}$.
The second condition demands strong agreement between the domain of $\ottnt{C}$ and $\mathcal{H}$, by which the type system guarantees all resources are disposed of properly.
The third condition maintains the consistency of reference countings and resource usage; that is, the number of bindings for $\ottmv{l}$ in $\ottnt{C}$ coincides with the corresponding reference count, and the multiplication of performed operations so far and remaining usage is defined and respected the original usage.

Now, we sketch the proof of type soundness by introducing important lemmas.


For the pure lambda calculus parts, we need a substitution lemma as the usual proof strategy.

\begin{lemma}[Value substitution]
    If $\mathcal{G}  \ottsym{[}  \ottmv{x}  \mathord:  \ottnt{S}  \ottsym{]}  \vdash  \ottnt{M}  :  \ottnt{T}  \mid  \ottnt{e_{{\mathrm{1}}}}$, $ \ottmv{x}  \notin   \ottkw{dom} ( \mathcal{G}  \ottsym{[}   \cdot   \ottsym{]} )  $, and $\ottnt{C}  \vdash  \ottnt{V}  :  \ottnt{S}  \mid  \ottnt{e_{{\mathrm{2}}}}$, then $\mathcal{G}  \ottsym{[}  \ottnt{C}  \ottsym{]}  \vdash  \ottnt{M}  \ottsym{[}  \ottnt{V}  \ottsym{/}  \ottmv{x}  \ottsym{]}  :  \ottnt{T}  \mid   \ottnt{e_{{\mathrm{1}}}}  \sqcup  \ottnt{e_{{\mathrm{2}}}} $.

\begin{FULLVERSION}
    \proof See \propref{typing/subst}.
\end{FULLVERSION}
\end{lemma}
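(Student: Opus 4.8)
The plan is to prove this by induction on the derivation of $\mathcal{G}  \ottsym{[}  \ottmv{x}  \mathord:  \ottnt{S}  \ottsym{]}  \vdash  \ottnt{M}  :  \ottnt{T}  \mid  \ottnt{e_{{\mathrm{1}}}}$. The side condition $ \ottmv{x}  \notin   \ottkw{dom} ( \mathcal{G}  \ottsym{[}   \cdot   \ottsym{]} )  $ guarantees that the hole is the \emph{unique} occurrence of $\ottmv{x}$ in the context, so in every rule I can track exactly which premise ``owns'' $\ottmv{x}$. Before starting I would record two auxiliary facts. First, a purity fact for values: if $\ottnt{C}  \vdash  \ottnt{V}  :  \ottnt{S}  \mid  \ottnt{e_{{\mathrm{2}}}}$ with $\ottnt{V}$ a value, then also $\ottnt{C}  \vdash  \ottnt{V}  :  \ottnt{S}  \mid  \ottsym{0}$ (a short induction on value typing: a value acquires a nonzero effect only through spurious sub-effecting in \ruleref{T-Weaken}). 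Second, when $\ottkw{unr} \, \ottnt{S}$ holds, the typing $\ottnt{C}  \vdash  \ottnt{V}  :  \ottnt{S}$ forces $\ottkw{unr} \, \ottnt{C}$, so $ \lBrack \ottnt{C} \rBrack $ contributes only to the unrestricted set and is freely contractible; this is what lets a single occurrence of $\ottnt{C}$ in the hole serve the possibly several occurrences of an unrestricted $\ottmv{x}$ in $\ottnt{M}$.

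For the base case, the only leaf rule whose context can mention the variable $\ottmv{x}$ is \ruleref{T-Var}, and then necessarily $ \ottnt{M} \ottsym{=} \ottmv{x} $, $\mathcal{G}  \ottsym{=}  \ottsym{[]}$, and $ \ottnt{e_{{\mathrm{1}}}} \ottsym{=} \ottsym{0} $; here $ \ottnt{M}  \ottsym{[}  \ottnt{V}  \ottsym{/}  \ottmv{x}  \ottsym{]}  \ottsym{=} \ottnt{V}$ and the goal $\mathcal{G}  \ottsym{[}  \ottnt{C}  \ottsym{]}  \vdash  \ottnt{V}  :  \ottnt{S}  \mid  \ottnt{e_{{\mathrm{2}}}}$ is exactly the hypothesis, using $ \ottsym{0}  \sqcup  \ottnt{e_{{\mathrm{2}}}}  = \ottnt{e_{{\mathrm{2}}}}$. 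The constant, $\ottkw{unit}$, and location rules cannot directly produce a context mentioning a variable, so they arise only beneath \ruleref{T-Weaken}.

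For the context-splitting rules (\ruleref{T-App}, the capture-application rules, \ruleref{T-UPair}, \ruleref{T-OPair}, \ruleref{T-ULet}, \ruleref{T-OLet}), the conclusion context is a composition $\Gamma_{{\mathrm{1}}} \mathbin{\star} \Gamma_{{\mathrm{2}}}$ (with $\star$ one of the two context formers) or a pattern-filling $\mathcal{F}  \ottsym{[}  \Gamma  \ottsym{]}$. I would first establish a routine decomposition lemma for context patterns: since $\mathcal{G}  \ottsym{[}  \ottmv{x}  \mathord:  \ottnt{S}  \ottsym{]}$ has the rule's top-level shape and $\ottmv{x}$ occurs once, the hole sits inside exactly one child, which can therefore be written $\mathcal{G}_i  \ottsym{[}  \ottmv{x}  \mathord:  \ottnt{S}  \ottsym{]}$, and the surrounding frame splits accordingly. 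I apply the induction hypothesis to that child, leave the other premise untouched, and reassemble with the same rule; the effect bookkeeping goes through because $\sqcup$ is monotone and the fresh effect $\ottnt{e_{{\mathrm{2}}}}$ is absorbed into the conclusion's $\sqcup$, finally yielding $ \ottnt{e_{{\mathrm{1}}}}  \sqcup  \ottnt{e_{{\mathrm{2}}}} $. In premises that demand a pure subterm (the header of \ruleref{T-ULet}/\ruleref{T-OLet}, the argument of \ruleref{T-RApp}/\ruleref{T-LApp}, or an $\ottkw{ord}$ component of \ruleref{T-OPair}), I instead feed the induction hypothesis the effect-$\ottsym{0}$ typing of $\ottnt{V}$ supplied by the purity fact, so the required $\ottsym{0}$ effect is preserved and the conclusion effect is recovered by a final sub-effecting step.

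The main obstacle is the \ruleref{T-Weaken} case, because the subcontext relation $\lesssim$ is defined semantically through graph interpretations and does not manifestly preserve the syntactic hole occupied by $\ottmv{x}$. Here the derivation is $\Gamma'  \vdash  \ottnt{M}  :  \ottnt{T}  \mid  \ottnt{e_{{\mathrm{1}}}'}$ with $\mathcal{G}  \ottsym{[}  \ottmv{x}  \mathord:  \ottnt{S}  \ottsym{]}  \lesssim  \Gamma'$ and $ \ottnt{e_{{\mathrm{1}}}'} \le \ottnt{e_{{\mathrm{1}}}} $, and I must put $\Gamma'$ into hole-carrying form before the induction hypothesis applies. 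The key lemma I would prove is that hole-filling is compatible with $\simeq$ and $\lesssim$: replacing the $\ottmv{x}$-vertex of $ \lBrack \Gamma \rBrack $ (when $\ottkw{ord} \, \ottnt{S}$) by the subgraph $ \lBrack \ottnt{C} \rBrack $, or merging $\ottnt{C}$'s unrestricted set (when $\ottkw{unr} \, \ottnt{S}$), commutes with the spanning and isomorphism witnesses of $\lesssim$, since those witnesses act locally on the vertex or set being replaced while $ \lBrack \ottnt{C} \rBrack $ is shared by both sides. Concretely, from $\mathcal{G}  \ottsym{[}  \ottmv{x}  \mathord:  \ottnt{S}  \ottsym{]}  \lesssim  \Gamma'$ I would exhibit $\Gamma'  \simeq  \mathcal{G}'  \ottsym{[}  \ottmv{x}  \mathord:  \ottnt{S}  \ottsym{]}$ with a hole-respecting $\mathcal{G}  \ottsym{[}   \cdot   \ottsym{]}  \lesssim  \mathcal{G}'  \ottsym{[}   \cdot   \ottsym{]}$, apply the induction hypothesis to obtain $\mathcal{G}'  \ottsym{[}  \ottnt{C}  \ottsym{]}  \vdash  \ottnt{M}  \ottsym{[}  \ottnt{V}  \ottsym{/}  \ottmv{x}  \ottsym{]}  :  \ottnt{T}  \mid   \ottnt{e_{{\mathrm{1}}}'}  \sqcup  \ottnt{e_{{\mathrm{2}}}} $, and conclude by re-applying \ruleref{T-Weaken} with $\mathcal{G}  \ottsym{[}  \ottnt{C}  \ottsym{]}  \lesssim  \mathcal{G}'  \ottsym{[}  \ottnt{C}  \ottsym{]}$ and $  \ottnt{e_{{\mathrm{1}}}'}  \sqcup  \ottnt{e_{{\mathrm{2}}}}  \le  \ottnt{e_{{\mathrm{1}}}}  \sqcup  \ottnt{e_{{\mathrm{2}}}} $. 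Proving this compatibility lemma --- in particular handling the unrestricted case where $\ottmv{x}$ may be duplicated or dropped along $\lesssim$ --- is where the graph-interpretation machinery of \Cref{sec:type-system} and \Cref{prop:type-system/graph-cmp} does the real work, and I expect it to be the most delicate part of the argument.
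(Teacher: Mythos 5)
Your overall strategy---induction on the typing derivation, value-purity facts, hole-tracking through the context-splitting rules, and a final sub-effecting step to reach $e_1 \sqcup e_2$---matches the paper's in spirit, and your treatment of \ruleref{T-Weaken} in the \emph{ordered} case (locate $x{:}S$ in the premise context via $\lesssim$, rewrite that context in hole form, apply the IH, then re-apply \ruleref{T-Weaken} using compatibility of binding replacement with $\lesssim$) is essentially the paper's own argument, backed by its lemmas \texttt{env/in/sub}, \texttt{env/in/ctx/exists}, and \texttt{env/replace/runtime/sub}. The genuine gap is in \ruleref{T-Weaken} when $S$ is \emph{unrestricted}: the compatibility lemma you posit is false as stated. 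The subcontext relation can contract and weaken unrestricted bindings (rules \textsc{GS-Cont} and \textsc{GS-Weaken}), so from $\mathcal{G}[x{:}S] \lesssim \Gamma'$ you cannot in general exhibit $\Gamma' \simeq \mathcal{G}'[x{:}S]$ with a single hole and $x \notin \mathrm{dom}(\mathcal{G}'[\cdot])$. Concretely, $x{:}S \lesssim \cdot$, so $\Gamma'$ may contain \emph{no} occurrence of $x$ and admits no hole decomposition at all; and $x{:}S \lesssim x{:}S \parallel x{:}S$, so $\Gamma'$ may contain \emph{two} occurrences, in which case any hole decomposition leaves a copy of $x{:}S$ inside the pattern, violating the freshness side condition your induction hypothesis requires. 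In both situations the IH, as you have stated it, is inapplicable, so the single induction cannot be closed; acknowledging that this case is ``delicate'' does not repair it, because what fails is not the graph bookkeeping but the shape of the inductive statement itself.

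The repair is the restructuring the paper performs: case-split on $\ottkw{unr} \, S$ versus $\ottkw{ord} \, S$ \emph{before} starting any induction, and prove two differently phrased auxiliary lemmas. For ordered $S$ your plan goes through as written (this is the paper's \texttt{typing/subst/ord}). For unrestricted $S$, first use the fact that $C \vdash V : S \mid e_2$ forces $C \lesssim \cdot$ and $\cdot \vdash V : S \mid 0$ (the paper's \texttt{typing/value/unr}), and then prove the generalized statement (\texttt{typing/subst/unr}): if $\Gamma \vdash M : T \mid e$, $x{:}S \in \Gamma$, and $\cdot \vdash V : S \mid 0$, then $\Gamma^{-x:S} \vdash M[V/x] : T \mid e$, where $\Gamma^{-x:S}$ removes \emph{all} occurrences of the binding. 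Because this statement tolerates any number of occurrences of $x{:}S$ and the substituted value carries no context, contraction and weakening along $\lesssim$ in the \ruleref{T-Weaken} case become harmless: the multiple-occurrence branch is covered by the IH, and the zero-occurrence branch is discharged by void substitution ($x \notin \mathrm{dom}(\Gamma_1)$ implies $M[V/x] = M$). Your top-level goal then follows from $\mathcal{G}[C] \lesssim \mathcal{G}[\cdot] = (\mathcal{G}[x{:}S])^{-x:S}$, one application of \ruleref{T-Weaken}, and $e_1 \le e_1 \sqcup e_2$.
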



Using the substitution lemma, we can show the preservation for expression reduction as follows.

\begin{lemma}[Preservation for $ \longrightarrow_\beta $]
    If $\ottnt{M}  \longrightarrow_\beta  \ottnt{M'}$ and $\ottnt{C}  \vdash  \ottnt{M}  :  \ottnt{T}  \mid  \ottnt{e}$, then $\ottnt{C}  \vdash  \ottnt{M'}  :  \ottnt{T}  \mid  \ottnt{e}$.

\begin{FULLVERSION}
    \proof See \propref{subj/beta}.
\end{FULLVERSION}
\end{lemma}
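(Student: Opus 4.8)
The plan is to proceed by case analysis on the derivation of $M \longrightarrow_\beta M'$, that is, on which of the six rules \ruleref{RE-Beta}, \ruleref{RE-UBeta}, \ruleref{RE-RBeta}, \ruleref{RE-LBeta}, \ruleref{RE-ULet}, \ruleref{RE-OLet} is applied. In every case $M$ is a redex of a fixed shape (an application of an abstraction to a value, or a \texttt{let} destructing a pair of values), so the work is to invert the typing derivation $C  \vdash  M : T \mid e$ down to the premises of the matching introduction and elimination rules, and then rebuild a derivation of the contractum using the value substitution lemma. The one complication throughout is that \ruleref{T-Weaken} is not syntax-directed: before inverting I would first argue that any derivation can be normalized so that each syntax-directed step is followed by at most one use of \ruleref{T-Weaken}, using transitivity of $\lesssim$ (\ruleref{GS-Trans}) and of $\le$ on effects to collapse consecutive weakenings. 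Thus inverting the typing of a term of shape, say, $(\lambda^\circ x.M){}^\circ V$ yields a \ruleref{T-UApp} step wrapped by one weakening, and similarly for the abstraction occurring in its function position.

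Consider \ruleref{RE-UBeta}, $(\lambda^\circ x.M){}^\circ V \longrightarrow_\beta M[V/x]$. Inverting $C \vdash (\lambda^\circ x.M){}^\circ V : T \mid e$ gives, via \ruleref{T-UApp}, contexts and effects with $\Gamma_{{\mathrm{1}}} \vdash \lambda^\circ x.M :  S \rightarrowtriangle _{ e' } T  \mid e_1$ and $\Gamma_{{\mathrm{2}}} \vdash V : S \mid e_2$, together with $C \lesssim \Gamma_{{\mathrm{1}}} \parallel \Gamma_{{\mathrm{2}}}$ and $ e' \sqcup e_1 \sqcup e_2 \le e$ from the trailing weakening. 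Inverting the abstraction (only \ruleref{T-UAbs} can type a $\lambda^\circ$) yields $\Gamma_{{\mathrm{1}}}' \parallel x : S \vdash M : T \mid e'$ with $x \notin \mathrm{dom}(\Gamma_{{\mathrm{1}}}')$ and $\Gamma_{{\mathrm{1}}} \lesssim \Gamma_{{\mathrm{1}}}'$. Reading $\Gamma_{{\mathrm{1}}}' \parallel x : S$ as $\mathcal{G}[x:S]$ for $\mathcal{G} = \Gamma_{{\mathrm{1}}}' \parallel []$ and applying the value substitution lemma with $\Gamma_{{\mathrm{2}}} \vdash V : S \mid e_2$ gives $\Gamma_{{\mathrm{1}}}' \parallel \Gamma_{{\mathrm{2}}} \vdash M[V/x] : T \mid e' \sqcup e_2$. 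Finally $C \lesssim \Gamma_{{\mathrm{1}}} \parallel \Gamma_{{\mathrm{2}}} \lesssim \Gamma_{{\mathrm{1}}}' \parallel \Gamma_{{\mathrm{2}}}$ and $e' \sqcup e_2 \le e$, so \ruleref{T-Weaken} concludes $C \vdash M[V/x] : T \mid e$. The cases \ruleref{RE-RBeta} and \ruleref{RE-LBeta} are identical modulo replacing $\parallel$ by the sequential former ``$,$'' and, for the left case, swapping the two sub-contexts to respect the right-to-left evaluation order baked into \ruleref{T-LAbs} and \ruleref{T-LApp}. The two \texttt{let} cases \ruleref{RE-ULet} and \ruleref{RE-OLet} follow the same pattern but with two nested substitutions: invert \ruleref{T-ULet} (resp.\ \ruleref{T-OLet}) and then the pair rule \ruleref{T-UPair} (resp.\ \ruleref{T-OPair}), observe that the elimination premise forces the pair to be typed with effect $0$ (so the component effects are $0$ as well), and substitute $V_{{\mathrm{1}}}$ for $x$ and then $V_{{\mathrm{2}}}$ for $y$ into $\mathcal{G}[x:S_{{\mathrm{1}}} \parallel y:S_{{\mathrm{2}}}]$ (resp.\ $\mathcal{G}[x:S_{{\mathrm{1}}} , y:S_{{\mathrm{2}}}]$); here $x \neq y$ and the disjointness premise supply the freshness side-conditions needed for the substitution lemma.

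Two auxiliary facts carry the argument. The first is the value substitution lemma, which I may use as given; note that every sub-context appearing in the inversions is a run-time context, since $C$ is one and the context formers only split a run-time context into run-time contexts, so the lemma (whose substituted value is typed under a run-time context) applies. The second, which I would need to establish separately, is \emph{monotonicity of the subcontext relation}: if $\Gamma \lesssim \Gamma'$ then $\Gamma \parallel \Delta \lesssim \Gamma' \parallel \Delta$, $\Gamma, \Delta \lesssim \Gamma', \Delta$, $\Delta, \Gamma \lesssim \Delta, \Gamma'$, and more generally $\mathcal{G}[\Gamma] \lesssim \mathcal{G}[\Gamma']$. This is exactly what lets me combine the weakening on the abstraction ($\Gamma_{{\mathrm{1}}} \lesssim \Gamma_{{\mathrm{1}}}'$) with the weakening on the whole redex and close each case by a single \ruleref{T-Weaken}; it follows from the definition of $\lesssim$ by unfolding the context interpretation $\lBrack \Gamma \rBrack$ and checking that plugging into a context pattern preserves the isomorphism and spanning witnesses. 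I expect this congruence of $\lesssim$, together with the bookkeeping needed to push \ruleref{T-Weaken} to the bottom of each derivation during inversion, to be the main obstacle; the substitution steps themselves are routine once the contexts are lined up, and the effect side is trivial because $\sqcup$ is monotone and idempotent and the elimination rules pin the relevant component effects to $0$.
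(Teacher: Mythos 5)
Your proposal is correct and follows essentially the same route as the paper's proof: case analysis on the six reduction rules, inversion of the typing derivation in a form that absorbs \textsc{T-Weaken} into $\lesssim$ and $\le$ side conditions (the paper packages your derivation-normalization step as a stand-alone inversion lemma), application of the value substitution lemma, and a final \textsc{T-Weaken}. The two auxiliary facts you flag are precisely the lemmas the paper invokes, namely congruence of $\lesssim$ under context patterns ($\Gamma_{{\mathrm{1}}} \lesssim \Gamma_{{\mathrm{2}}}$ implies $\mathcal{G} \ottsym{[} \Gamma_{{\mathrm{1}}} \ottsym{]} \lesssim \mathcal{G} \ottsym{[} \Gamma_{{\mathrm{2}}} \ottsym{]}$) and the fact that anything $\lesssim$-above a run-time context is again a run-time context.
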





The remaining part of the preservation is configuration reduction.
We prove each configuration reduction rule preserves the correctness as follows.
Each item tracks the change of a run-time context and a heap for each reduction rule.

\begin{lemma}
    \noindent
    \begin{enumerate}
        \item If $\mathcal{C}  \ottsym{[}   \cdot   \ottsym{]}  \vdash  \mathcal{H}$ and $ \ottmv{l}  \notin   \ottkw{dom} ( \mathcal{H} )  $, then $\mathcal{C}  \ottsym{[}  \ottmv{l}  \mathord:  \ottsym{[}  \ottnt{m}  \ottsym{]}  \ottsym{]}  \vdash   \mathcal{H} \cup \ottsym{\{} \ottmv{l}  \mapsto  \ottsym{(}  \ottsym{0}  \ottsym{,}  \ottnt{m}  \ottsym{,}  \varepsilon  \ottsym{)} \ottsym{\}} $.
        \item If $\mathcal{C}  \ottsym{[}  \ottmv{l}  \mathord:  \ottsym{[}  \ottnt{m'}  \ottsym{]}  \ottsym{]}  \vdash   \mathcal{H} \cup \ottsym{\{} \ottmv{l}  \mapsto  \ottsym{(}  \ottnt{n}  \ottsym{,}  \ottnt{m_{{\mathrm{0}}}}  \ottsym{,}  \ottnt{m}  \ottsym{)} \ottsym{\}} $ and $  \ottnt{m_{{\mathrm{1}}}} \odot \ottnt{m_{{\mathrm{2}}}}  \le \ottnt{m'} $,\\ then $\mathcal{C}  \ottsym{[}  \ottmv{l}  \mathord:  \ottsym{[}  \ottnt{m_{{\mathrm{1}}}}  \ottsym{]}  \ottsym{,}  \ottmv{l}  \mathord:  \ottsym{[}  \ottnt{m_{{\mathrm{2}}}}  \ottsym{]}  \ottsym{]}  \vdash   \mathcal{H} \cup \ottsym{\{} \ottmv{l}  \mapsto  \ottsym{(}  \ottnt{n}  \ottsym{+}  \ottsym{1}  \ottsym{,}  \ottnt{m_{{\mathrm{0}}}}  \ottsym{,}  \ottnt{m}  \ottsym{)} \ottsym{\}} $.
        \item If $\ottmv{l}  \mathord:  \ottsym{[}  \ottnt{m'}  \ottsym{]}  \ottsym{,}  \ottnt{C}  \vdash   \mathcal{H} \cup \ottsym{\{} \ottmv{l}  \mapsto  \ottsym{(}  \ottnt{n}  \ottsym{,}  \ottnt{m_{{\mathrm{0}}}}  \ottsym{,}  \ottnt{m}  \ottsym{)} \ottsym{\}} $ and $  \ottnt{m_{{\mathrm{1}}}} \odot \ottnt{m_{{\mathrm{2}}}}  \le \ottnt{m'} $,\\ then $\ottmv{l}  \mathord:  \ottsym{[}  \ottnt{m_{{\mathrm{2}}}}  \ottsym{]}  \ottsym{,}  \ottnt{C}  \vdash   \mathcal{H} \cup \ottsym{\{} \ottmv{l}  \mapsto  \ottsym{(}  \ottnt{n}  \ottsym{,}  \ottnt{m_{{\mathrm{0}}}}  \ottsym{,}   \ottnt{m} \odot \ottnt{m_{{\mathrm{1}}}}   \ottsym{)} \ottsym{\}} $.
        \item If $\mathcal{C}  \ottsym{[}  \ottmv{l}  \mathord:  \ottsym{[}  \ottnt{m'}  \ottsym{]}  \ottsym{]}  \vdash   \mathcal{H} \cup \ottsym{\{} \ottmv{l}  \mapsto  \ottsym{(}  \ottnt{n}  \ottsym{+}  \ottsym{1}  \ottsym{,}  \ottnt{m_{{\mathrm{0}}}}  \ottsym{,}  \ottnt{m}  \ottsym{)} \ottsym{\}} $ and $ \varepsilon \le \ottnt{m'} $, then $\mathcal{C}  \ottsym{[}   \cdot   \ottsym{]}  \vdash   \mathcal{H} \cup \ottsym{\{} \ottmv{l}  \mapsto  \ottsym{(}  \ottnt{n}  \ottsym{,}  \ottnt{m_{{\mathrm{0}}}}  \ottsym{,}  \ottnt{m}  \ottsym{)} \ottsym{\}} $.
        \item If $\mathcal{C}  \ottsym{[}  \ottmv{l}  \mathord:  \ottsym{[}  \ottnt{m'}  \ottsym{]}  \ottsym{]}  \vdash   \mathcal{H} \cup \ottsym{\{} \ottmv{l}  \mapsto  \ottsym{(}  \ottsym{0}  \ottsym{,}  \ottnt{m_{{\mathrm{0}}}}  \ottsym{,}  \ottnt{m}  \ottsym{)} \ottsym{\}} $ and $ \ottmv{l}  \notin   \ottkw{dom} ( \mathcal{H} )  $, then $\mathcal{C}  \ottsym{[}   \cdot   \ottsym{]}  \vdash  \mathcal{H}$.
    \end{enumerate}
\begin{FULLVERSION}
    \proof See \propref{heap/track/new}, \propref{heap/track/split}, \propref{heap/track/op}, \propref{heap/track/cl1}, and \propref{heap/track/cl2}.
\end{FULLVERSION}
\end{lemma}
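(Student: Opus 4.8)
The plan is to read the five items as the per-rule preservation statements for the five configuration reductions \ruleref{RC-Ne}, \ruleref{RC-Sp}, \ruleref{RC-Op}, \ruleref{RC-Cl1}, and \ruleref{RC-Cl2}, and to prove each by checking the three clauses of the heap-typing judgment (order-definedness of the new context, equality of domains, and the per-location reference-count and usage clauses) against the modified context and heap. Before any case I would record one uniform simplification: for every location $l' \neq l$ the focus $\langle\cdot\rangle_{l'}$ rewrites each $l$-binding to $\cdot$, and since $\cdot$ is the unit of both context formers while the comma-join contributes all transitive cross-edges, the structural edit performed at $l$ collapses and leaves $\langle C \rangle_{l'}$ unchanged up to $\simeq$. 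Consequently $|\lBrack\langle C\rangle_{l'}\rBrack|_\bullet$ and $\overline{\langle C\rangle_{l'}}$ are untouched and the two clauses for every $l'\neq l$ transfer verbatim from the hypothesis. This leaves only the focused location $l$, domain equality, and order-definedness to verify in each item.

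Next I would dispatch the structurally simple edits, items (1), (4), and (5). For (1) the fresh $l$ is an isolated vertex, so its focus has a trivially unique topological order, the $\bullet$-count is $1 = 0+1$, and the usage clause reads $\varepsilon \odot m = m \le m$ by reflexivity. For (5) the original reference-count clause forces $\langle C \rangle_l$ to be a single vertex, so deleting it removes $l$ from $\ottkw{dom}(C)$; together with $l \notin \ottkw{dom}(\mathcal{H})$ this matches the smaller domain and no clause for $l$ remains. For (4) one of the $n+2$ vertices of $l$ is deleted; order-definedness survives because deleting a vertex from a totally ordered focus keeps it totally ordered, the count drops $n+2 \mapsto n+1$, and writing the old usage as $a \odot m' \odot b$ I would use $\varepsilon \le m'$ with the unit law to present the new usage as $a \odot \varepsilon \odot b \le a \odot m' \odot b$, so that $m \odot \overline{\langle C\rangle_l}$ only shrinks and stays below $m_0$, with definedness supplied by downward closure.

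The two substantive cases are (2) for \ruleref{RC-Sp} and (3) for \ruleref{RC-Op}. In (2) the single vertex $l:[m']$ is replaced by the ordered pair $l:[m_1], l:[m_2]$; inside $\langle C \rangle_l$ this substitutes one chain node by two consecutive nodes that inherit all of $m'$'s comparabilities, so total comparability — hence uniqueness of the topological order — is preserved, and the $\bullet$-count rises by one to match $n+1 \mapsto n+2$. Decomposing the old projection as $a \odot m' \odot b$ at the determined position of $m'$, the new projection is $a \odot (m_1 \odot m_2) \odot b$; since $m_1 \odot m_2 \le m'$, monotonicity gives $\overline{\langle C'\rangle_l} \le \overline{\langle C\rangle_l}$, whence $m \odot \overline{\langle C'\rangle_l} \le m \odot \overline{\langle C\rangle_l} \le m_0$, with definedness again by downward closure. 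In (3) the context has the head form $l:[m'], C$, so $l:[m']$ is the minimum of the $l$-chain; the reduction relabels it to $l:[m_2]$ and extends the trace $m \mapsto m \odot m_1$. Relabeling leaves the edge set, hence the unique topological order and the $\bullet$-count, intact, and the usage clause follows by associativity: $(m \odot m_1) \odot (m_2 \odot r)$ rewrites to $m \odot (m_1 \odot m_2) \odot r$, which $m_1 \odot m_2 \le m'$ and monotonicity bound by $m \odot (m' \odot r) \le m_0$.

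The main obstacle I anticipate is not the OPM inequalities, which are all instances of monotonicity, downward closure, associativity, and the unit law applied to a prefix/suffix decomposition, but justifying that decomposition itself: one must argue that order-definedness is preserved by node-splitting in (2), relabeling in (3), and deletion in (4) and (5), and that the edited binding occupies a uniquely determined position in the $l$-chain so that $\overline{\langle C\rangle_l}$ genuinely factors as $a \odot m' \odot b$. I would isolate this as a small structural sublemma describing how $\langle\cdot\rangle_l$ and the topological ordering interact with the four context-forming operations, and invoke it uniformly across the five items.
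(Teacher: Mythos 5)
Your proposal is correct and follows essentially the same route as the paper: each item is proved by verifying the three clauses of heap typing (order-definedness, domain equality, and the per-location count and usage conditions), splitting on whether the focused location is the edited one, handling all other locations by noting their focus is unchanged, and discharging the usage inequalities via the OPM axioms exactly as you describe. The structural sublemma you anticipate is precisely the paper's collection of auxiliary lemmas on focus, vertex-count additivity, traceability under subcontext replacement, head push/pop, and the factorization of $\overline{\langle C \rangle_l}$ as $a \odot m' \odot b$, so nothing essential is missing.
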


To show the preservation for the configuration reduction, we show the preservation for effect-free expressions first, and then we show the preservation for arbitrary expressions.
We use the first lemma to show the second lemma.

\begin{lemma}[Preservation for $ \longrightarrow_\gamma $ with effect-free expressions]
    If $\ottnt{M}  \mid  \mathcal{H}  \longrightarrow_\gamma  \ottnt{M'}  \mid  \mathcal{H}'$, $\ottnt{C}  \vdash  \ottnt{M}  :  \ottnt{T}  \mid  \ottsym{0}$, and $\mathcal{C}  \ottsym{[}  \ottnt{C}  \ottsym{]}  \vdash  \mathcal{H}$, then there exists $\ottnt{C'}$ such that $  \ottsym{(}    \ottkw{dom} ( \ottnt{C'} )  \setminus  \ottkw{dom} ( \ottnt{C} )    \ottsym{)} \cap  \ottkw{dom} ( \mathcal{C}  \ottsym{[}   \cdot   \ottsym{]} )   \ottsym{=} \emptyset $, $\ottnt{C'}  \vdash  \ottnt{M'}  :  \ottnt{T}  \mid  \ottsym{0}$, and $\mathcal{C}  \ottsym{[}  \ottnt{C'}  \ottsym{]}  \vdash  \mathcal{H}'$.

\begin{FULLVERSION}
    \proof See \propref{subj/cmp/noeffect}.
\end{FULLVERSION}
\end{lemma}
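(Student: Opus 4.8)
The plan is to perform a case analysis on the rule deriving $M \mid \mathcal{H} \longrightarrow_\gamma M' \mid \mathcal{H}'$, reconstructing a suitable run-time context $C'$ in each case and discharging the heap-typing obligation with the matching item of the preceding lemma. The crucial first observation is that the effect-free hypothesis $C \vdash M : T \mid 0$ rules out \ruleref{RC-Op}: the constant $\mathtt{op}_m$ has an arrow type with latent effect $1$, so every typing of an application $\mathtt{op}_m\,l$ carries effect $1$, and since \ruleref{T-Weaken} can only raise the effect, $\mathtt{op}_m\,l$ admits no typing with effect $0$. Hence only \ruleref{RC-Ne}, \ruleref{RC-Sp}, \ruleref{RC-Cl1}, and \ruleref{RC-Cl2} remain, corresponding respectively to items (1), (2), (4), and (5) of the preceding lemma.

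For each surviving case I would first apply a generation (inversion) lemma to $C \vdash M : T \mid 0$. Since the only non-syntax-directed rule is \ruleref{T-Weaken} and each redex is a constant applied to $\mathtt{unit}$ or to a location, inverting through \ruleref{T-App} together with the appropriate constant rule pins down both $T$ and $C$ up to $\simeq$: for $\mathtt{new}_m\,\mathtt{unit}$ we obtain $T = [m]$ and $C \simeq \cdot$; for $\mathtt{split}_{m_1,m_2}\,l$ we obtain $T = [m_1] \odot [m_2]$, $C \simeq l \mathord: [m']$, and the side condition $m_1 \odot m_2 \le m'$ from \ruleref{T-Split}; and for $\mathtt{drop}\,l$ we obtain $T = \mathtt{Unit}$, $C \simeq l \mathord: [m']$, and $\varepsilon \le m'$ from \ruleref{T-Drop}. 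These equivalences are tight because a run-time context has no variable bindings and all its location bindings are ordered, so the subcontext steps coming from \ruleref{T-Weaken} can neither discard unrestricted bindings nor add edges to the one-vertex graph underlying a single location binding.

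With the shape of $C$ in hand I would exhibit $C'$ directly and invoke the corresponding item. In the \ruleref{RC-Ne} case take $C' = l \mathord: [m]$, which types $M' = l$ by \ruleref{T-Loc}, and apply item (1) to $\mathcal{C}[\cdot] \vdash \mathcal{H}$ and $l \notin \mathrm{dom}(\mathcal{H})$. In the \ruleref{RC-Sp} case take $C' = l \mathord: [m_1], l \mathord: [m_2]$, which types $M' = l \odot l$ by \ruleref{T-OPair} (its ordered side condition holds since the second component has effect $0$), and apply item (2) using $m_1 \odot m_2 \le m'$. In both drop cases take $C' = \cdot$, which types $M' = \mathtt{unit}$ by \ruleref{T-Unit}, and apply item (4) with $\varepsilon \le m'$, or item (5) whose side condition $l \notin \mathrm{dom}(\mathcal{H}')$ is provided by \ruleref{RC-Cl2}. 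In every case $\mathcal{C}[C'] \vdash \mathcal{H}'$ is exactly the conclusion of the invoked item, and the disjointness requirement $(\mathrm{dom}(C') \setminus \mathrm{dom}(C)) \cap \mathrm{dom}(\mathcal{C}[\cdot]) = \emptyset$ is immediate: it is vacuous for the two drop cases, holds trivially for \ruleref{RC-Sp} where $\mathrm{dom}(C') = \mathrm{dom}(C) = \{l\}$, and for \ruleref{RC-Ne} follows from $l \notin \mathrm{dom}(\mathcal{H}) = \mathrm{dom}(\mathcal{C}[\cdot])$.

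The step I expect to be the main obstacle is bridging between the inversion, which determines $C$ only up to $\simeq$, and the preceding lemma, which is stated for the literal shapes $\mathcal{C}[l \mathord: [m']]$ and $\mathcal{C}[\cdot]$. I would close this gap by using that heap typing refers to contexts only through the interpretation $\lBrack - \rBrack$, which is invariant under $\simeq$ by \Cref{prop:type-system/graph-cmp}, together with the congruence of $\simeq$ under plugging into a run-time context pattern $\mathcal{C}$; granting these, $\mathcal{C}[C] \vdash \mathcal{H}$ with $C \simeq l \mathord: [m']$ upgrades to $\mathcal{C}[l \mathord: [m']] \vdash \mathcal{H}$, which is precisely the hypothesis the preceding lemma requires. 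The remainder is a routine, substitution-free reassembly of the typing derivation for $M'$.
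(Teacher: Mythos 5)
Your proposal is correct and follows essentially the same route as the paper's proof: case analysis on the configuration-reduction rule, dismissal of \textsc{RC-Op} via the effect-$0$ hypothesis, typing inversion to pin down $T$ and the shape of $C$, the same choices of $C'$ (namely $l\mathord:[m]$, the pair $l\mathord:[m_1], l\mathord:[m_2]$, and $\cdot$), and the same heap-tracking items with the same disjointness arguments. The only divergence is your bridging step, which is sound but unnecessary: the paper's inversion already gives $C \lesssim \cdot$ (resp.\ $C \lesssim l\mathord:[m']$), and since heap typing is closed under $\lesssim$-weakening and $\lesssim$ is a congruence for run-time context patterns, the hypothesis $\mathcal{C}[l\mathord:[m']] \vdash \mathcal{H}$ follows directly without your tightness upgrade of $\lesssim$ to $\simeq$ (for which, incidentally, the cited proposition on syntactic context laws is not the relevant fact---the heap-weakening lemma is).
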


\begin{lemma}[Preservation for $ \longrightarrow_\gamma $]
    If $\ottnt{M}  \mid  \mathcal{H}  \longrightarrow_\gamma  \ottnt{M'}  \mid  \mathcal{H}'$, $\ottnt{C}  \vdash  \ottnt{M}  :  \ottnt{T}  \mid  \ottnt{e}$, and $\ottnt{C}  \ottsym{,}  \ottnt{C''}  \vdash  \mathcal{H}$, then there exists $\ottnt{C'}$ such that $  \ottsym{(}    \ottkw{dom} ( \ottnt{C'} )  \setminus  \ottkw{dom} ( \ottnt{C} )    \ottsym{)} \cap  \ottkw{dom} ( \ottnt{C''} )   \ottsym{=} \emptyset $, $\ottnt{C'}  \vdash  \ottnt{M'}  :  \ottnt{T}  \mid  \ottnt{e}$, and $\ottnt{C'}  \ottsym{,}  \ottnt{C''}  \vdash  \mathcal{H}'$.
\begin{FULLVERSION}
    \proof See \propref{subj/cmp/effect}.
\end{FULLVERSION}
\end{lemma}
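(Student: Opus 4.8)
The plan is to proceed by case analysis on the rule of $\longrightarrow_\gamma$ that fires, reducing the four effect-free redexes to the preceding lemma (Preservation for $\longrightarrow_\gamma$ with effect-free expressions) and dispatching the single effectful redex $\ottkw{op}_{ \ottnt{m} }\, \ottmv{l}$ by hand. Before the case split I would normalize away a trailing use of \ruleref{T-Weaken}: if $\ottnt{C}  \vdash  \ottnt{M}  :  \ottnt{T}  \mid  \ottnt{e}$ ends in \ruleref{T-Weaken} from $\ottnt{C_{{\mathrm{0}}}} \vdash \ottnt{M} : \ottnt{T} \mid \ottnt{e_{{\mathrm{0}}}}$ with $\ottnt{C} \lesssim \ottnt{C_{{\mathrm{0}}}}$ and $\ottnt{e_{{\mathrm{0}}}} \le \ottnt{e}$, then since $\ottnt{C}  \ottsym{,}  \ottnt{C''} \vdash \mathcal{H}$ is order-defined with a unique topological order and $\ottnt{C_{{\mathrm{0}}}}$ only adds ordering edges to $\ottnt{C}$ (spanning, over the same vertices), the context $\ottnt{C_{{\mathrm{0}}}}  \ottsym{,}  \ottnt{C''}$ carries the very same unique topological order and hence the same usage projection; thus $\ottnt{C_{{\mathrm{0}}}}  \ottsym{,}  \ottnt{C''} \vdash \mathcal{H}$ still holds. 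Running the argument at $(\ottnt{C_{{\mathrm{0}}}}, \ottnt{e_{{\mathrm{0}}}})$ and then re-applying effect weakening to the result (note $ \ottkw{dom} ( \ottnt{C_{{\mathrm{0}}}} )  =  \ottkw{dom} ( \ottnt{C} ) $, so the domain side condition is unaffected) discharges this case, and I may assume the derivation ends in the rule matching the syntactic form of the redex.

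For the redexes produced by \ruleref{RC-Ne}, \ruleref{RC-Sp}, \ruleref{RC-Cl1}, and \ruleref{RC-Cl2}, the intrinsic effect is $\ottsym{0}$, so $\ottnt{C} \vdash \ottnt{M} : \ottnt{T} \mid \ottsym{0}$. Instantiating the preceding effect-free lemma with the run-time context pattern $\mathcal{C} = (\ottsym{[]}  \ottsym{,}  \ottnt{C''})$ — so that $\mathcal{C}  \ottsym{[}  \ottnt{C}  \ottsym{]} = \ottnt{C}  \ottsym{,}  \ottnt{C''}$ — yields a witness $\ottnt{C'}$ with $\mathcal{C}  \ottsym{[}  \ottnt{C'}  \ottsym{]} = \ottnt{C'}  \ottsym{,}  \ottnt{C''} \vdash \mathcal{H}'$, $\ottnt{C'} \vdash \ottnt{M'} : \ottnt{T} \mid \ottsym{0}$, and the required disjointness $  \ottsym{(}    \ottkw{dom} ( \ottnt{C'} )  \setminus  \ottkw{dom} ( \ottnt{C} )    \ottsym{)} \cap  \ottkw{dom} ( \ottnt{C''} )   \ottsym{=} \emptyset $; effect weakening then lifts the effect from $\ottsym{0}$ to $\ottnt{e}$. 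These cases correspond exactly to items 1, 2, 4, and 5 of the preceding heap-tracking lemma, which is what the effect-free lemma appeals to internally.

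The interesting case is \ruleref{RC-Op}, which the effect-free lemma cannot cover because $\ottkw{op}_{ \ottnt{m} }\, \ottmv{l}$ is never typable with effect $\ottsym{0}$. Here I invert $\ottnt{C} \vdash \ottkw{op}_{ \ottnt{m} }\, \ottmv{l} : \ottsym{[}  \ottnt{m_{{\mathrm{2}}}}  \ottsym{]} \mid \ottnt{e}$ through \ruleref{T-App}, \ruleref{T-Op}, and \ruleref{T-Loc} to obtain $\ottnt{C} = \ottmv{l}  \mathord:  \ottsym{[}  \ottnt{m_{{\mathrm{0}}}}  \ottsym{]}$ together with the OPM inequality $ \ottnt{m} \odot \ottnt{m_{{\mathrm{2}}}}  \le \ottnt{m_{{\mathrm{0}}}} $, where $\mathcal{H} = \mathcal{H}_0 \cup \{ \ottmv{l} \mapsto (\ottnt{n}, \ottnt{m_{\mathrm{c}}}, \ottnt{m'}) \}$ and the step rewrites the trace to $ \ottnt{m'} \odot \ottnt{m} $. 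The key structural point is that in the hypothesis $\ottnt{C}  \ottsym{,}  \ottnt{C''} \vdash \mathcal{H}$ the binding $\ottmv{l}  \mathord:  \ottsym{[}  \ottnt{m_{{\mathrm{0}}}}  \ottsym{]}$ taken from $\ottnt{C}$ is sequenced strictly before every alias of $\ottmv{l}$ in $\ottnt{C''}$, so it is order-minimal in $ \langle \ottnt{C}  \ottsym{,}  \ottnt{C''} \rangle_{ \ottmv{l} } $; this is precisely the shape required by item 3 of the heap-tracking lemma. Applying that item with operation $\ottnt{m}$, residual $\ottnt{m_{{\mathrm{2}}}}$, and side condition $ \ottnt{m} \odot \ottnt{m_{{\mathrm{2}}}}  \le \ottnt{m_{{\mathrm{0}}}} $ gives $(\ottmv{l}  \mathord:  \ottsym{[}  \ottnt{m_{{\mathrm{2}}}}  \ottsym{]})  \ottsym{,}  \ottnt{C''} \vdash \mathcal{H}'$, and I take $\ottnt{C'} = \ottmv{l}  \mathord:  \ottsym{[}  \ottnt{m_{{\mathrm{2}}}}  \ottsym{]}$, which types the reduct $\ottnt{M'} = \ottmv{l}$ by \ruleref{T-Loc} (weakened to effect $\ottnt{e}$); the domain condition is trivial since $ \ottkw{dom} ( \ottnt{C'} )  =  \ottkw{dom} ( \ottnt{C} )  = \{\ottmv{l}\}$.

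The main obstacle is the algebra underlying item 3 in the \ruleref{RC-Op} case: re-establishing heap-typing condition~(2), namely $ ( \ottnt{m'} \odot \ottnt{m} ) \odot \overline{  \langle \ottnt{C'}  \ottsym{,}  \ottnt{C''} \rangle_{ \ottmv{l} }  }  \le \ottnt{m_{\mathrm{c}}} $. Using order-minimality of $\ottmv{l}  \mathord:  \ottsym{[}  \ottnt{m_{{\mathrm{0}}}}  \ottsym{]}$, the old projection factors as $\overline{ \langle \ottnt{C}  \ottsym{,}  \ottnt{C''} \rangle_{ \ottmv{l} } } = \ottnt{m_{{\mathrm{0}}}} \odot \overline{ \langle \ottnt{C''} \rangle_{ \ottmv{l} } }$ and the new one as $\overline{ \langle \ottnt{C'}  \ottsym{,}  \ottnt{C''} \rangle_{ \ottmv{l} } } = \ottnt{m_{{\mathrm{2}}}} \odot \overline{ \langle \ottnt{C''} \rangle_{ \ottmv{l} } }$; associativity then regroups $ ( \ottnt{m'} \odot \ottnt{m} ) \odot \ottnt{m_{{\mathrm{2}}}} = \ottnt{m'} \odot ( \ottnt{m} \odot \ottnt{m_{{\mathrm{2}}}} )$, and monotonicity together with downward closure of $\odot$ promote $ \ottnt{m} \odot \ottnt{m_{{\mathrm{2}}}}  \le \ottnt{m_{{\mathrm{0}}}} $ to $ ( \ottnt{m'} \odot \ottnt{m} ) \odot \overline{ \langle \ottnt{C'}  \ottsym{,}  \ottnt{C''} \rangle_{ \ottmv{l} } } \le \ottnt{m'} \odot \overline{ \langle \ottnt{C}  \ottsym{,}  \ottnt{C''} \rangle_{ \ottmv{l} } } \le \ottnt{m_{\mathrm{c}}} $, the final step being the old condition~(2). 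The only other delicate point, already isolated above, is verifying that adding ordering edges via \ruleref{T-Weaken} cannot change the unique topological order used to compute $\overline{\cdot}$: a uniquely orderable DAG stays uniquely orderable, with the same order, when edges consistent with that order are added, while an inconsistent edge would violate acyclicity of the (valid) weakened context.
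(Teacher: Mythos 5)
Your proof is correct and takes essentially the same route as the paper's: case analysis on the $\longrightarrow_\gamma$ rule, dispatching \textsc{RC-Ne}, \textsc{RC-Sp}, \textsc{RC-Cl1}, and \textsc{RC-Cl2} to the effect-free preservation lemma instantiated with the pattern $\mathcal{C} = [\,]\,,\,C''$ (followed by effect weakening), and handling \textsc{RC-Op} by inversion, heap weakening, and item~3 of the heap-tracking lemma, whose OPM algebra you re-derive inline. The only cosmetic deviations are that the paper reaches the syntax-directed typing rule through its packaged inversion lemma rather than by peeling off trailing uses of \textsc{T-Weaken}, and that inversion actually yields $C \lesssim l\mathord:[m_0]$ rather than the equality $C = l\mathord:[m_0]$ you state---a discrepancy that your appeal to heap weakening (the order-minimality of the binding in $C\,,\,C''$) already absorbs.
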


Summing up the preservation lemmas for expression reduction and configuration reduction, we have the preservation lemma.

\begin{lemma}[Preservation]
    If $\ottnt{M}  \mid  \mathcal{H}  \longrightarrow  \ottnt{M'}  \mid  \mathcal{H}'$, $\ottnt{C}  \vdash  \ottnt{M}  :  \ottnt{T}  \mid  \ottnt{e}$, and $\ottnt{C}  \vdash  \mathcal{H}$, then there exists $\ottnt{C'}$ such that $\ottnt{C'}  \vdash  \ottnt{M'}  :  \ottnt{T}  \mid  \ottnt{e}$ and $\ottnt{C'}  \vdash  \mathcal{H}'$.
\begin{FULLVERSION}
    \proof See \propref{subj}.
\end{FULLVERSION}
\end{lemma}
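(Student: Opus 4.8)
The plan is to argue by case analysis on the rule concluding $\ottnt{M}  \mid  \mathcal{H}  \longrightarrow  \ottnt{M'}  \mid  \mathcal{H}'$, of which there are exactly two: \textsc{R-Exp} and \textsc{R-Cfg}. In both cases the reduction factors as $\ottnt{M} \ottsym{=} \mathcal{E}  \ottsym{[}  \ottnt{M_{{\mathrm{0}}}}  \ottsym{]}$ and $\ottnt{M'} \ottsym{=} \mathcal{E}  \ottsym{[}  \ottnt{M_{{\mathrm{0}}}'}  \ottsym{]}$ for an evaluation context $\mathcal{E}$ and a redex $\ottnt{M_{{\mathrm{0}}}}$, so the real work is a \emph{decomposition and replacement lemma} relating the typing of $\mathcal{E}  \ottsym{[}  \ottnt{M_{{\mathrm{0}}}}  \ottsym{]}$ to a typing of the hole. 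Concretely, I would first prove by induction on $\mathcal{E}$ that from $\ottnt{C}  \vdash  \mathcal{E}  \ottsym{[}  \ottnt{M_{{\mathrm{0}}}}  \ottsym{]}  :  \ottnt{T}  \mid  \ottnt{e}$ one can extract a run-time context $\ottnt{C_{{\mathrm{0}}}}$, a type $\ottnt{S}$, an effect $\ottnt{e_{{\mathrm{0}}}}$, and a frame (a run-time context pattern $\mathcal{C}$, equivalently a residual context $\ottnt{C''}$) such that $\ottnt{C_{{\mathrm{0}}}}  \vdash  \ottnt{M_{{\mathrm{0}}}}  :  \ottnt{S}  \mid  \ottnt{e_{{\mathrm{0}}}}$, and such that re-plugging a replacement for the hole reconstructs a typing of the whole expression.

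For \textsc{R-Exp} the heap is unchanged ($\mathcal{H}' \ottsym{=} \mathcal{H}$) and $\ottnt{M_{{\mathrm{0}}}}  \longrightarrow_\beta  \ottnt{M_{{\mathrm{0}}}'}$. Here the \emph{Preservation for $ \longrightarrow_\beta $} lemma applies to the hole and yields $\ottnt{C_{{\mathrm{0}}}}  \vdash  \ottnt{M_{{\mathrm{0}}}'}  :  \ottnt{S}  \mid  \ottnt{e_{{\mathrm{0}}}}$ with the \emph{same} context, type, and effect. Since the replacement carries identical typing data, the congruence direction of the decomposition lemma gives $\ottnt{C}  \vdash  \mathcal{E}  \ottsym{[}  \ottnt{M_{{\mathrm{0}}}'}  \ottsym{]}  :  \ottnt{T}  \mid  \ottnt{e}$. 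Taking $\ottnt{C'} \ottsym{=} \ottnt{C}$ discharges both goals, as $\ottnt{C}  \vdash  \mathcal{H}$ is untouched.

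For \textsc{R-Cfg} the redex is one of the resource rules (\textsc{RC-Ne}, \textsc{RC-Op}, \textsc{RC-Cl1}, \textsc{RC-Cl2}, \textsc{RC-Sp}) and both expression and heap change. I would use the decomposition to expose the frame $\ottnt{C''}$ coming from $\mathcal{E}$ and rewrite the hypothesis $\ottnt{C}  \vdash  \mathcal{H}$ into the shape $\ottnt{C_{{\mathrm{0}}}}  \ottsym{,}  \ottnt{C''}  \vdash  \mathcal{H}$ demanded by the \emph{Preservation for $ \longrightarrow_\gamma $} lemma. That lemma then supplies a new hole context $\ottnt{C_{{\mathrm{0}}}'}$ with $\ottnt{C_{{\mathrm{0}}}'}  \vdash  \ottnt{M_{{\mathrm{0}}}'}  :  \ottnt{S}  \mid  \ottnt{e_{{\mathrm{0}}}}$, the updated heap typing $\ottnt{C_{{\mathrm{0}}}'}  \ottsym{,}  \ottnt{C''}  \vdash  \mathcal{H}'$, and the disjointness condition guaranteeing that any freshly allocated location in $\ottnt{C_{{\mathrm{0}}}'}$ avoids the frame. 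Re-plugging via the decomposition lemma produces $\ottnt{C'}$ (the frame filled with $\ottnt{C_{{\mathrm{0}}}'}$) with $\ottnt{C'}  \vdash  \mathcal{E}  \ottsym{[}  \ottnt{M_{{\mathrm{0}}}'}  \ottsym{]}  :  \ottnt{T}  \mid  \ottnt{e}$ and $\ottnt{C'}  \vdash  \mathcal{H}'$, which is exactly the conclusion.

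The hard part is the decomposition lemma under the tree-like, ordered contexts. Three features make it delicate. First, each evaluation-context former splits $\ottnt{C}$ differently (\textsc{T-App} sequences the function before the argument as $\Gamma_{{\mathrm{1}}}  \ottsym{,}  \Gamma_{{\mathrm{2}}}$, \textsc{T-UApp} uses $\parallel$, and \textsc{T-LApp} both swaps the premises and reflects the right-to-left order of left application), so the induction must track which subcontext feeds the hole and in which order. Second, \textsc{T-Weaken} may be interleaved anywhere, so the statement should be phrased up to the subcontext relation $ \lesssim $, and I must show the active redex's context can be \emph{brought to the front} as $\ottnt{C_{{\mathrm{0}}}}  \ottsym{,}  \ottnt{C''}$ — justified because a redex in evaluation position is the next thing to run and so may legitimately precede the frame in the ordering. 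Third, and most subtly, the heap typing $\ottnt{C}  \vdash  \mathcal{H}$ must survive this reshuffling: since heap typing inspects only the per-location focus $ \langle \ottnt{C} \rangle_{ \ottmv{l} } $ and its usage product $ \overline{ \ottnt{C} } $, I expect each focused projection to be preserved up to $ \simeq $ by the global reordering, but checking that order-definedness is preserved is where the argument needs the most care. I also note the effect bookkeeping: resource redexes such as $ \ottkw{op} _{ \ottnt{m} }  \, \ottmv{l}$ reduce to values of strictly smaller effect, so the \emph{same-effect} conclusion is recovered by absorbing an upward effect weakening through \textsc{T-Weaken} without perturbing the frame.
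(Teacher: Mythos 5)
Your overall skeleton (case split on \textsc{R-Exp}/\textsc{R-Cfg}, a redex-level lemma plus an argument threading typing through the evaluation context) matches the paper, and your \textsc{R-Exp} case is exactly the paper's. The genuine gap is in \textsc{R-Cfg}: you claim the redex's run-time context can always be ``brought to the front,'' i.e.\ that $\ottnt{C}  \vdash  \mathcal{H}$ can be rewritten into the shape $\ottnt{C_{{\mathrm{0}}}}  \ottsym{,}  \ottnt{C''}  \vdash  \mathcal{H}$ with $\ottnt{C_{{\mathrm{0}}}}$ typing the redex. This is false in this system. The subcontext relation $\lesssim$ only \emph{adds} ordering constraints, and heap typing is order-sensitive: its third condition multiplies the focused bindings of each location in topological order, and OPM multiplication is not commutative. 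Concretely, let a resource $\ottmv{l}$ have been split into a borrow $\ottmv{l}  \mathord:  \ottsym{[}  r^*  \ottsym{]}$ captured by a right-ordered abstraction $\ottnt{V} = \lambda^>  \ottmv{x}  \ottsym{.}  \ottnt{N}$ and a residue $\ottmv{l}  \mathord:  \ottsym{[}  (r|w)^*c  \ottsym{]}$, and consider the redex $\ottkw{split}_{c,\varepsilon}\,\ottmv{l}$ in the frame $\ottnt{V}\,{}^>\,\ottsym{[]}$. By \textsc{T-RApp} the context is $\ottmv{l}  \mathord:  \ottsym{[}  r^*  \ottsym{]}  \ottsym{,}  \ottmv{l}  \mathord:  \ottsym{[}  (r|w)^*c  \ottsym{]}$, and heap typing holds with usage product $r^* \odot (r|w)^*c$; the reordered product $(r|w)^*c \odot r^*$ is not even defined in the OPM, so the hypothesis your plan feeds to the general Preservation-for-$\longrightarrow_\gamma$ lemma cannot be established. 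Being ``next to run'' operationally does not let you commute the redex's bindings past frame bindings on the same location.

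The paper resolves this not by reordering but by stratifying configuration preservation into two lemmas: an \emph{effect-free} version stated over an arbitrary run-time context pattern ($\mathcal{C}  \ottsym{[}  \ottnt{C}  \ottsym{]}  \vdash  \mathcal{H}$, redex context left in place, with the heap-tracking lemmas for \textsc{RC-Ne}, \textsc{RC-Sp}, \textsc{RC-Cl1}, \textsc{RC-Cl2} also phrased over patterns), and an \emph{effectful} version that does require the front position $\ottnt{C}  \ottsym{,}  \ottnt{C''}  \vdash  \mathcal{H}$ but is only ever needed for \textsc{RC-Op}. The induction on $\mathcal{E}$ then switches between them: in exactly those frames where the hole's bindings sit after other resources (the argument of ${}^>$, the function of ${}^<$, the second component of $\odot$, let headers), the typing rules force effect $0$ on the hole, so the pattern-based lemma applies; where the hole is genuinely leftmost, the effectful lemma applies and the front position comes for free. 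Your closing remark about effects only concerns bookkeeping of the effect annotation in the conclusion; in fact the effect discipline is the load-bearing mechanism that makes \textsc{R-Cfg} go through, and without this two-tier formulation your single uniform decomposition lemma is unprovable.
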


Progress can be shown rather straightforwardly.
We need to show the progress for effect-free expressions first and then the objective lemma, too.

\begin{lemma}[Progress of effect-free expressions]
    If $\ottnt{C}  \vdash  \ottnt{M}  :  \ottnt{T}  \mid  \ottsym{0}$ and $\mathcal{C}  \ottsym{[}  \ottnt{C}  \ottsym{]}  \vdash  \mathcal{H}$, then
    \begin{enumerate}
        \item $\ottnt{M}$ is a value, namely $ \ottnt{M} \ottsym{=} \ottnt{V} $ for some $\ottnt{V}$, or
        \item for any $\mathcal{H}$, $\ottnt{M}  \mid  \mathcal{H}  \longrightarrow  \ottnt{M'}  \mid  \mathcal{H}'$ for some $\ottnt{M'}$ and $\mathcal{H}'$.
    \end{enumerate}
\begin{FULLVERSION}
    \proof See \propref{progress/noeffect}.
\end{FULLVERSION}
\end{lemma}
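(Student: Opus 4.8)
The plan is to prove the statement by induction on the structure of $\ottnt{M}$ (using generation/inversion lemmas that see through the subsumption rule \ruleref{T-Weaken}, rather than inducting on the derivation, so that weakening does not become a separate case), appealing throughout to a \emph{canonical forms} lemma that classifies values by their type. Two structural observations set up the argument. First, since $\ottnt{C}$ is a run-time context it has no variable bindings, so \ruleref{T-Var} cannot fire and $\ottnt{M}$ has no free variables; its only free names are locations, each of which is bound in $\ottnt{C}$ and hence lies in $\ottkw{dom} ( \ottnt{C} ) \subseteq \ottkw{dom} ( \mathcal{C}  \ottsym{[}  \ottnt{C}  \ottsym{]} ) = \ottkw{dom} ( \mathcal{H} )$ by the domain condition of heap typing. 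Second, because $\sqcup$ is the maximum, $\ottsym{0}$ is least, and \ruleref{T-Weaken} only raises effects, every immediate subexpression of the effect-free $\ottnt{M}$ is again typable with effect $\ottsym{0}$; this is what makes the induction hypothesis applicable to subterms.

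The value and congruence cases are routine. Constants, locations, abstractions, and pairs of values are values, discharging clause~(1). For every compound non-value form---the four applications, the two pair constructors, and the two pair eliminations---I apply the induction hypothesis to the immediate subterms and lift any subterm step to a step of the whole expression through the matching evaluation context of \Cref{fig:value}. The only wrinkle is left application $\ottnt{M}  {}^<  \ottnt{N}$, whose contexts $\ottnt{M}  {}^<  \mathcal{E}$ and $\mathcal{E}  {}^<  \ottnt{V}$ evaluate the argument before the function, so there I apply the hypothesis to $\ottnt{N}$ first and only then to $\ottnt{M}$.

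The interesting cases arise when all immediate subterms are already values, so a redex sits at the root. For the $\beta$- and $\mathtt{let}$-redexes, canonical forms pins down the head value (respectively the scrutinized pair) and the corresponding rule among \ruleref{RE-Beta}--\ruleref{RE-OLet} fires; the freshness side condition $ \ottmv{x} \neq \ottmv{y} $ of \ruleref{RE-ULet} and \ruleref{RE-OLet} comes from the disjointness premise of \ruleref{T-ULet}/\ruleref{T-OLet}. For the configuration redexes I use a refined canonical forms statement: a value of type $ \ottnt{S} \rightarrow _{ \ottsym{0} } \ottnt{T} $ is either a $\lambda$-abstraction or one of $\ottkw{new}_{\ottnt{m}}$, $\ottkw{split}_{\ottnt{m_{{\mathrm{1}}}},\ottnt{m_{{\mathrm{2}}}}}$, $\ottkw{drop}$---crucially \emph{not} $\ottkw{op}_{\ottnt{m}}$, whose arrow carries latent effect $\ottsym{1}$ and would, via \ruleref{T-App}, force the application's effect above $\ottsym{0}$. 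This is exactly where effect-freeness is used: it excises the single operation whose firing depends on a side condition we do not yet control. The remaining operations step immediately: $\ottkw{new}_{\ottnt{m}} \, \ottkw{unit}$ fires \ruleref{RC-Ne} with any fresh location, and $\ottkw{split}_{\ottnt{m_{{\mathrm{1}}}},\ottnt{m_{{\mathrm{2}}}}} \, \ottmv{l}$ fires \ruleref{RC-Sp} unconditionally (its operational rule has no premise), using $\ottmv{l} \in \ottkw{dom} ( \mathcal{H} )$ from the setup.

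The main obstacle is the $\ottkw{drop} \, \ottmv{l}$ case, the only effect-free configuration redex whose firing consults the heap-typing invariant; I must show that one of \ruleref{RC-Cl1}, \ruleref{RC-Cl2} applies to $\mathcal{H}  \ottsym{(}  \ottmv{l}  \ottsym{)} = \ottsym{(}  \ottnt{n}  \ottsym{,}  \ottnt{m_{{\mathrm{0}}}}  \ottsym{,}  \ottnt{m}  \ottsym{)}$. By the first heap-typing condition, $ |  \lBrack  \langle \mathcal{C}  \ottsym{[}  \ottnt{C}  \ottsym{]} \rangle_{ \ottmv{l} }  \rBrack  |_{\bullet}  = \ottnt{n} + 1$. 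If $\ottnt{n} \ge 1$ there are at least two ordered bindings of $\ottmv{l}$, the stored count is positive, and \ruleref{RC-Cl1} fires with no side condition. If $\ottnt{n} = \ottsym{0}$ there is a unique binding $\ottmv{l}  \mathord:  \ottsym{[}  \ottnt{m'}  \ottsym{]}$ in $\mathcal{C}  \ottsym{[}  \ottnt{C}  \ottsym{]}$, which by generation of the typing of $\ottkw{drop} \, \ottmv{l}$ (rules \ruleref{T-App}, \ruleref{T-Drop}, \ruleref{T-Loc}) satisfies $\varepsilon \le \ottnt{m'}$, and order-definedness makes $ \overline{ \langle \mathcal{C}  \ottsym{[}  \ottnt{C}  \ottsym{]} \rangle_{ \ottmv{l} } } $ equal to that single $\ottnt{m'}$. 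The second heap-typing condition then reads $ \ottnt{m} \odot \ottnt{m'} \le \ottnt{m_{{\mathrm{0}}}} $, and combining $\varepsilon \le \ottnt{m'}$ with OPM monotonicity and downward-closure yields $\ottnt{m} = \ottnt{m} \odot \varepsilon \le \ottnt{m} \odot \ottnt{m'} \le \ottnt{m_{{\mathrm{0}}}}$, i.e.\ $ \ottnt{m} \le \ottnt{m_{{\mathrm{0}}}} $, which is precisely the side condition of \ruleref{RC-Cl2}; its freshness premise is met by decomposing $\mathcal{H}$ as $\mathcal{H}' \cup \{ \ottmv{l} \mapsto \ottsym{(}  \ottsym{0}  \ottsym{,}  \ottnt{m_{{\mathrm{0}}}}  \ottsym{,}  \ottnt{m}  \ottsym{)} \}$ with $\ottmv{l} \notin \ottkw{dom} ( \mathcal{H}' )$. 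Hence $\ottkw{drop} \, \ottmv{l}$ always steps, and the induction is complete. I expect this drop analysis, together with establishing the refined (effect-sensitive) canonical forms lemma, to be the only delicate part; everything else is the standard congruence bookkeeping.
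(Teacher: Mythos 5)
Your proposal is correct and follows essentially the same route as the paper's proof (\propref{progress/noeffect} together with its auxiliary lemmas \propref{progress/new}, \propref{progress/split}, \propref{progress/close}, and \propref{progress/ctx}): canonical forms plus lifting subterm steps through evaluation contexts, exclusion of $ \ottkw{op} _{ \ottnt{m} } $ via the latent-effect mismatch, and the same reference-count case split for $\ottkw{drop}$, deriving the \ruleref{RC-Cl2} side condition from the heap-typing invariant, $ \varepsilon \le \ottnt{m'} $, and OPM monotonicity with downward closure. The only organizational difference is that you induct on the structure of $\ottnt{M}$ with weakening-absorbing inversion lemmas (the paper's \propref{typing/inv}) where the paper inducts on the typing derivation and discharges \ruleref{T-Weaken} as a separate case via \propref{heap/weaken}; both amount to the same bookkeeping.
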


\begin{lemma}[Progress]
    If $\ottnt{C}  \vdash  \ottnt{M}  :  \ottnt{T}  \mid  \ottnt{e}$ and $\ottnt{C}  \ottsym{,}  \ottnt{C'}  \vdash  \mathcal{H}$, then
    \begin{enumerate}
        \item $\ottnt{M}$ is a value, namely $ \ottnt{M} \ottsym{=} \ottnt{V} $ for some $\ottnt{V}$, or
        \item $\ottnt{M}  \mid  \mathcal{H}  \longrightarrow  \ottnt{M'}  \mid  \mathcal{H}'$ for some $\ottnt{M'}$ and $\mathcal{H}'$.
    \end{enumerate}
\begin{FULLVERSION}
    \proof See \propref{progress/effect}.
\end{FULLVERSION}
\end{lemma}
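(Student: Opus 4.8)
The plan is to argue by induction on the derivation of $C \vdash M : T \mid e$, invoking the preceding lemma (Progress of effect-free expressions) on exactly those subexpressions that the typing rules pin to effect $\ottsym{0}$. Two standing facts drive the case analysis. First, a canonical-forms reading of the value grammar in \Cref{fig:value}: a value of type $\ottsym{[}  \ottnt{m}  \ottsym{]}$ is a location, a value of an arrow type is an abstraction of the matching kind ($\lambda$, $\lambda^\circ$, $\lambda^>$, or $\lambda^<$), and a value of a product type is a pair of the matching kind. Second, since $C$ is a run-time context it has no variable bindings, so no subderivation ends in \ruleref{T-Var} at top level and a stuck bare variable cannot arise; the cases \ruleref{T-Unit}, \ruleref{T-New}, \ruleref{T-Op}, \ruleref{T-Split}, \ruleref{T-Close}, \ruleref{T-Loc}, and the four abstraction rules all produce values, discharging clause~(1).

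For the structural elimination and construction forms---\ruleref{T-App} and its variants, \ruleref{T-UPair} and \ruleref{T-OPair}, and \ruleref{T-ULet} and \ruleref{T-OLet}---I follow the evaluation contexts of \Cref{fig:value} (left-to-right, but right-to-left for left application). I inspect the subexpression scheduled next: if it is not a value, I obtain a step for it and lift that step through the surrounding $\mathcal{E}$ by \ruleref{R-Exp} or \ruleref{R-Cfg}; if it is already a value, canonical forms fix its shape and the matching $\beta$- or \texttt{let}-reduction fires through \ruleref{R-Exp}. The only subtlety is which progress statement to apply to that subexpression: whenever the rule forces its effect to $\ottsym{0}$ (the scrutinee of a \texttt{let}, the effect-free side of an ordered pair, the effect-free operand of a right or left application), I appeal to the effect-free lemma, and otherwise to the induction hypothesis. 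This is precisely why the effect-free lemma is stated with a heap typing $\mathcal{C}  \ottsym{[}  \ottnt{C}  \ottsym{]}  \vdash  \mathcal{H}$: when descending into a scrutinee typed under $\Gamma$ where $C = \mathcal{G}  \ottsym{[}  \Gamma  \ottsym{]}$, the ambient heap typing $\mathcal{G}  \ottsym{[}  \Gamma  \ottsym{]}  \ottsym{,}  C'  \vdash  \mathcal{H}$ is exactly of the shape $\mathcal{C}[\Gamma] \vdash \mathcal{H}$ for the context pattern with $\mathcal{C}[\Gamma] = \mathcal{G}[\Gamma], C'$.

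The heart of the argument is the resource primitives applied to value arguments, i.e.\ redexes $\ottkw{new}_m\,\ottkw{unit}$, $\ottkw{op}_m\, l$, $\ottkw{split}_{m_1,m_2}\, l$, and $\ottkw{drop}\, l$, which step by \ruleref{R-Cfg} using \ruleref{RC-Ne}, \ruleref{RC-Op}, \ruleref{RC-Sp}, and \ruleref{RC-Cl1}/\ruleref{RC-Cl2}. Nothing needs checking for \ruleref{RC-Ne} (a fresh location always exists) or \ruleref{RC-Sp}. The obstacle is to discharge the side conditions of \ruleref{RC-Op} and \ruleref{RC-Cl2} and to decide between \ruleref{RC-Cl1} and \ruleref{RC-Cl2}; I recover all three from $C, C' \vdash \mathcal{H}$. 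Writing $\mathcal{H}(l) = (n, m_0, m)$, heap-typing condition~(1), namely $|\lBrack \langle C \rangle_{l} \rBrack|_{\bullet} = n+1$, reads off the reference count and thus selects \ruleref{RC-Cl1} when $n > 0$ and \ruleref{RC-Cl2} when $n = 0$. Condition~(2), $m \odot \overline{ \langle C \rangle_{l} } \le m_0$, supplies the inequalities: inverting the typing of the primitive exposes the index assigned to the reduced occurrence of $l$ together with the rule-level constraint ($m_1 \odot m_2 \le$ that index for \ruleref{T-Op}; $\varepsilon \le m$ for \ruleref{T-Close}), and monotonicity and downward closure of the OPM then yield the witness $m''$ demanded by \ruleref{RC-Op} and the inequality $m \le m_0$ demanded by \ruleref{RC-Cl2}.

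The genuinely delicate point inside this last step is that the occurrence of $l$ sitting in the redex must be the \emph{topologically least} among the occurrences recorded in $\langle C \rangle_{l}$, so that its index forms the leading factor of $\overline{ \langle C \rangle_{l} }$; this is what lets condition~(2) be weakened to the inequalities the reduction rules ask for. Order-definedness of $C$ makes $\overline{ \langle C \rangle_{l} }$ well defined, and the fact that $l$ occupies an evaluation position forces it to the front of the unique topological ordering---establishing this link between evaluation position and the topological order is where the real work lies, and it is the step I expect to be the main obstacle. Finally, \ruleref{T-Weaken} is dispatched by the induction hypothesis on the narrower context $C_1$ with $C_2 \lesssim C_1$, since the side conditions above remain satisfiable under the extra ordering edges introduced by $\lesssim$; collecting the value cases and the stepping cases gives the stated dichotomy.
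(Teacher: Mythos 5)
Your proposal matches the paper's proof in all essentials: induction on the typing derivation, dismissing \textsc{T-Var} because run-time contexts bind no variables, appealing to the effect-free progress lemma exactly where the rules force effect $0$ (let scrutinees, the $0$-effect operands of \textsc{T-RApp}/\textsc{T-LApp}) and to the induction hypothesis elsewhere, lifting sub-steps through evaluation contexts, and discharging the \textsc{RC-Op}/\textsc{RC-Cl2} side conditions from the heap-typing invariants via inversion, subcontext weakening of the heap typing, and OPM monotonicity/downward closure. The point you flag as the main obstacle---that the redex's occurrence of $l$ heads the usage projection---is resolved in the paper exactly by the mechanism you describe (inversion yields $C_2 \lesssim l{:}[m_0]$, so heap weakening puts that binding at the front of $l{:}[m_0], C'$), the only cosmetic difference being that the paper routes $\mathtt{new}$/$\mathtt{split}$/$\mathtt{drop}$ redexes through the effect-free lemma since their latent effect is $0$, handling only $\mathtt{op}$ directly.
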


We compile the preservation and progress lemma as the following single proposition.

\begin{proposition}[Soundness]
    If $ \cdot   \vdash  \ottnt{M}  :  \ottnt{T}  \mid  \ottnt{e}$ and $\ottkw{unr} \, \ottnt{T}$, then
    \begin{enumerate}
        \item $\ottnt{M}  \mid  \ottsym{\{}  \ottsym{\}}  \longrightarrow^*  \ottnt{V}  \mid  \ottsym{\{}  \ottsym{\}}$, or
        \item the evaluation of $\ottnt{M} \mid \ottsym{\{}  \ottsym{\}}$ diverges.
    \end{enumerate}
\end{proposition}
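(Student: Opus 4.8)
The plan is to derive Soundness from the Preservation and Progress lemmas by the usual iteration, adding one extra argument to guarantee that a terminating value leaves the heap empty. First I would record the base case of the heap typing: the empty run-time context types the empty heap, $ \cdot  \vdash \ottsym{\{}  \ottsym{\}}$, since order-definedness and the reference-count/usage conditions are vacuous while $ \ottkw{dom} ( \cdot )  = \emptyset =  \ottkw{dom} ( \mathcal{H} ) $ for $\mathcal{H} = \ottsym{\{}  \ottsym{\}}$. Together with the hypothesis $ \cdot  \vdash \ottnt{M} : \ottnt{T} \mid \ottnt{e}$ this establishes the invariant $\ottnt{C_{\ottmv{i}}} \vdash \ottnt{M_{\ottmv{i}}} : \ottnt{T} \mid \ottnt{e}$ and $\ottnt{C_{\ottmv{i}}} \vdash \mathcal{H}_{\ottmv{i}}$ at $i = 0$, with $\ottnt{C_{\mathrm{0}}} =  \cdot $, $\ottnt{M_{\mathrm{0}}} = \ottnt{M}$, and $\mathcal{H}_{\mathrm{0}} = \ottsym{\{}  \ottsym{\}}$.

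Next I would iterate. Given the invariant at step $i$, I apply Progress, taking its auxiliary context to be $ \cdot $ so that its hypothesis $\ottnt{C_{\ottmv{i}}}  \ottsym{,}   \cdot  \vdash \mathcal{H}_{\ottmv{i}}$ reduces to $\ottnt{C_{\ottmv{i}}} \vdash \mathcal{H}_{\ottmv{i}}$ up to the monoid law $\ottnt{C_{\ottmv{i}}}  \ottsym{,}   \cdot  \simeq \ottnt{C_{\ottmv{i}}}$, which the graph interpretation respects. Either $\ottnt{M_{\ottmv{i}}}$ is a value and the sequence stops, or $\ottnt{M_{\ottmv{i}}} \mid \mathcal{H}_{\ottmv{i}} \longrightarrow \ottnt{M}_{i+1} \mid \mathcal{H}_{i+1}$; in the latter case Preservation supplies a $\ottnt{C}_{i+1}$ with $\ottnt{C}_{i+1} \vdash \ottnt{M}_{i+1} : \ottnt{T} \mid \ottnt{e}$ and $\ottnt{C}_{i+1} \vdash \mathcal{H}_{i+1}$, re-establishing the invariant. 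If the process never stops we obtain an infinite reduction sequence from $\ottnt{M} \mid \ottsym{\{}  \ottsym{\}}$, which is precisely divergence (case~2); that this is the genuine behaviour of $\ottnt{M}$ rests on the semantics being deterministic, which follows from the disjointness of the evaluation-context cases. Otherwise the sequence halts at some value $\ottnt{V} = \ottnt{M_{\ottmv{k}}}$ with $\ottnt{C_{\ottmv{k}}} \vdash \ottnt{V} : \ottnt{T} \mid \ottnt{e}$ and $\ottnt{C_{\ottmv{k}}} \vdash \mathcal{H}_{\ottmv{k}}$.

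The one nonstandard step is showing $\mathcal{H}_{\ottmv{k}} = \ottsym{\{}  \ottsym{\}}$ in the terminating case, for which I would prove a small lemma: if $\ottnt{C} \vdash \ottnt{V} : \ottnt{T} \mid \ottnt{e}$ with $\ottnt{V}$ a value and $\ottkw{unr} \, \ottnt{T}$, then $ \ottkw{dom} ( \ottnt{C} )  = \emptyset$. The argument is by induction on $\ottnt{V}$ with a canonical-forms analysis of unrestricted types. Resource types $\ottsym{[}  \ottnt{m}  \ottsym{]}$ and the three capturing arrows $\rightarrowtriangle, \twoheadrightarrow, \rightarrowtail$ are ordered, never unrestricted, so an unrestricted value can only be $\ottkw{unit}$, a constant, a non-capture abstraction $\lambda  \ottmv{x}  \ottsym{.}  \ottnt{M}$, or a pair $\ottnt{V_{{\mathrm{1}}}}  \otimes  \ottnt{V_{{\mathrm{2}}}}$ or $\ottnt{V_{{\mathrm{1}}}}  \odot  \ottnt{V_{{\mathrm{2}}}}$ whose components are again unrestricted (by the unrestricted-product rules). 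In each case the core derivation uses either the empty context or, for $\lambda  \ottmv{x}  \ottsym{.}  \ottnt{M}$, a context with $\ottkw{unr} \, \Gamma$; but $\ottnt{C}$ is a run-time context, so its only possible bindings are location bindings $\ottmv{l}  \mathord:  \ottsym{[}  \ottnt{m}  \ottsym{]}$, which are ordered and hence cannot occur in an unrestricted context. The use of \ruleref{T-Weaken} does not interfere, because the subcontext relation $\ottnt{C_{\ottmv{k}}}  \lesssim  \ottnt{C}'$ preserves the multiset of ordered bindings: the spanning relation $ \mathrel{<_\rightarrow} $ keeps the vertex set and its labelling fixed and only adds edges, and isomorphism preserves labels, while the set component may differ only in unrestricted bindings. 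Thus weakening can neither introduce nor remove a location binding, so $ \ottkw{dom} ( \ottnt{C_{\ottmv{k}}} )  = \emptyset$, and the domain condition of $\ottnt{C_{\ottmv{k}}} \vdash \mathcal{H}_{\ottmv{k}}$ forces $ \ottkw{dom} ( \mathcal{H}_{\ottmv{k}} )  = \emptyset$, i.e. $\mathcal{H}_{\ottmv{k}} = \ottsym{\{}  \ottsym{\}}$, giving case~1.

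The routine part is the Preservation/Progress loop, which is immediate once those lemmas are in hand, as is the divergence alternative (every well-typed non-value steps and the relation is deterministic). I expect the main obstacle to be the terminal-heap lemma, and within it the observation that \ruleref{T-Weaken} cannot smuggle in resources. Making that precise means reading off, from the graph interpretation, that ordered (location) bindings are invariant under $ \lesssim $ whereas only the unrestricted set changes, and then confirming by the canonical-forms analysis that a closed value of unrestricted type genuinely mentions no locations; this is where the real work lies.
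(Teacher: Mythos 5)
Your proposal is correct and follows the paper's intended argument: the paper gives no explicit proof of this proposition, merely ``compiling'' Preservation and Progress, and your iteration loop (with the base case $\cdot \vdash \{\}$ and the auxiliary context instantiated to $\cdot$) is exactly that compilation. The only place where real work is needed --- your terminal-heap lemma --- is where you and the paper diverge slightly in route, though not in substance: the paper's appendix already contains the needed facts, namely the lemma \texttt{typing/value/unr} (if $\ottnt{C} \vdash \ottnt{V} : \ottnt{T} \mid \ottnt{e}$ with $\ottkw{unr}\,\ottnt{T}$, then $\ottnt{C} \lesssim \cdot$, proved by induction on the typing derivation) together with \texttt{env/sub/runtime/dom} (if $\ottnt{C} \lesssim \Gamma$ then $\ottkw{dom}(\ottnt{C}) = \ottkw{dom}(\Gamma)$), which immediately give $\ottkw{dom}(\ottnt{C_{\ottmv{k}}}) = \emptyset$ and hence $\mathcal{H}_{\ottmv{k}} = \{\}$ by the domain condition of heap typing. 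Your proposed induction on values with canonical forms, plus the observation that $\lesssim$ preserves the multiset of ordered (location) bindings, is a sound alternative proof of the same fact --- your analysis of \textsc{T-Weaken} via the graph interpretation is exactly right --- but it duplicates an existing lemma and is the more laborious of the two, since induction on the typing derivation handles weakening for free. One minor point: determinism of the semantics is not actually required; reading case~2 as ``there exists an infinite reduction sequence from $\ottnt{M} \mid \{\}$'' makes the disjunction follow from Preservation and Progress alone, without any disjointness analysis of evaluation contexts.
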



\newcommand*\HSExprA{M}
\newcommand*\HSExprB{N}
\newcommand*\HSInfExprA{\HSExprA^{\uparrow}}
\newcommand*\HSInfExprB{\HSExprB^{\uparrow}}
\newcommand*\HSChkExprA{\HSExprA^{\downarrow}}
\newcommand*\HSChkExprB{\HSExprB^{\downarrow}}
\newcommand*\HSUnit{\texttt{unit}}
\newcommand*\HSNew[1]{\mathop{\texttt{new}_{#1}}}
\newcommand*\HSOp[1]{\mathop{\texttt{op}_{#1}}}
\newcommand*\HSSplit[1]{\mathop{\texttt{split}_{#1}}}
\newcommand*\HSDrop{\mathop{\texttt{drop}}}
\newcommand*\HSAbs[1]{\mathop{\lambda{#1}.}}
\newcommand*\HSVarA{x}
\newcommand*\HSVarB{y}
\newcommand*\HSApp[2]{\mathop{#1}{#2}}
\newcommand*\HSPair[2]{({#1}\mathop{,}{#2})}
\newcommand*\HSLetPair[4]{\mathop{\texttt{let}}{#1},{#2}={#3}\mathop{\texttt{in}}{#4}}
\newcommand*\HSAnn[2]{{#1}:{#2}}
\newcommand*\HSEmbed[1]{#1}
\newcommand*\HSTypeA{T}
\newcommand*\HSTypeB{S}

\newcommand*\HSCtx{\Gamma}
\newcommand*\HSEff{e}
\newcommand*\HSEffNo{0}
\newcommand*\HSEffYes{1}
\newcommand*\HSCtxCtx{\mathcal{G}}
\newcommand*\HSTyping[6]{{#1} \vdash {#2} {#6} {#3} \mid {#4} \Rightarrow {#5}}
\newcommand*\HSInfTyping[5]{\HSTyping {#1} {#2} {#3} {#4} {#5} {\hspace{1mm}\uparrow\hspace{1mm}}}
\newcommand*\HSChkTyping[5]{\HSTyping {#1} {#2} {#3} {#4} {#5} {\hspace{1mm}\downarrow\hspace{1mm}}}
\newcommand*\HSLogTyping[4]{{#1} \vdash {#2} : {#3} \mid {#4}}
\newcommand*\HSRestrCtx[2]{{#1}\mathop{\downarrow_{#2}}}
\newcommand*\HSSplitCtx[2]{{#1}\mathop{\Downarrow_{#2}}}
\newcommand*\HSFv[1]{\texttt{fv}({#1})}
\newcommand*\HSSubCtx{\lesssim}
\newcommand*\HSIsoCtx{\simeq}
\newcommand*\HSCtxEmpty{\cdot}
\newcommand*\HSPar{\mathop{\|}}
\newcommand*\HSSeq{\mathop{,}}
\newcommand*\HSLub{\mathop{\sqcup}}
\newcommand*\HSDom[1]{\texttt{dom}({#1})}
\newcommand*\HSDomOrd[1]{\texttt{dom}^{\texttt{ord}}({#1})}
\newcommand*\HSUnr[1]{\mathop{\texttt{unr}}{#1}}
\newcommand*\HSOrd[1]{\mathop{\texttt{ord}}{#1}}
\newcommand*\HSSubEff{\leq}
\newcommand*\HSMonOp{\odot}
\newcommand*\HSMonLeq{\leq}
\newcommand*\HSMonNeu{\varepsilon}
\newcommand*\HSCtxCtxHole{[]}
\newcommand*\HSRightCtxCtx{\mathop{\texttt{right}}}
\newcommand*\HSLeftCtxCtx{\mathop{\texttt{left}}}
\newcommand*\HSParCtxCtx{\mathop{\texttt{par}}}
\newcommand*\HSClosedCtxCtx{\mathop{\texttt{closed}}}


\newcommand*\HSRuleAbsUnr{
  \inferrule[AT-Abs]{
    \HSUnr \HSCtx \\
    \HSVarA \not\in \HSDom{\HSCtx} \\
    \HSChkTyping {\HSCtx \HSSeq \HSVarA : \ottnt{S}} {\HSChkExprA} {\ottnt{T}} {\HSEff_1} {\ottnt{M}} \\
    \HSEff_1 \HSSubEff \HSEff_2
  }{
    \HSChkTyping {\HSCtx} {(\HSAbs \HSVarA \HSChkExprA)} {( \ottnt{S} \rightarrow _{ \ottnt{e_{{\mathrm{2}}}} } \ottnt{T} )}
                 {\HSEffNo} {(\lambda  \ottmv{x}  \ottsym{.}  \ottnt{M})}
  }
}

\newcommand*\HSRuleAbsLin{
  \inferrule[AT-UAbs]{
    \HSVarA \not\in \HSDom{\HSCtx} \\
    \HSChkTyping {\HSCtx \HSPar \HSVarA : \ottnt{S}} {\HSChkExprA} {\ottnt{T}} {\HSEff_1} {\ottnt{M}} \\
    \HSEff_1 \HSSubEff \HSEff_2
  }{
    \HSChkTyping {\HSCtx} {(\HSAbs \HSVarA \HSChkExprA)} {( \ottnt{S} \rightarrow _{ \ottnt{e_{{\mathrm{2}}}} } \ottnt{T} )}
                 {\HSEffNo} {(\lambda^\circ  \ottmv{x}  \ottsym{.}  \ottnt{M})}
  }
}

\newcommand*\HSRuleAbsLeft{
  \inferrule[AT-RAbs]{
    \HSVarA \not\in \HSDom{\HSCtx} \\
    \HSChkTyping {\HSCtx \HSSeq \HSVarA : \ottnt{S}} {\HSChkExprA} {\ottnt{T}} {\HSEff_1} {\ottnt{M}} \\
    \HSEff_1 \HSSubEff \HSEff_2
  }{
    \HSChkTyping {\HSCtx} {(\HSAbs \HSVarA \HSChkExprA)} {( \ottnt{S} \rightarrowtail _{ \ottnt{e_{{\mathrm{2}}}} } \ottnt{T} )}
                 {\HSEffNo} {(\lambda^>  \ottmv{x}  \ottsym{.}  \ottnt{M})}
  }
}

\newcommand*\HSRuleAbsRight{
  \inferrule[AT-LAbs]{
    \HSVarA \not\in \HSDom{\HSCtx} \\
    \HSChkTyping {\HSVarA : \ottnt{S} \HSSeq \HSCtx} {\HSChkExprA} {\ottnt{T}} {\HSEff_1} {\ottnt{M}} \\
    \HSEff_1 \HSSubEff \HSEff_2
  }{
    \HSChkTyping {\HSCtx} {(\HSAbs \HSVarA \HSChkExprA)} {( \ottnt{S} \twoheadrightarrow _{ \ottnt{e_{{\mathrm{2}}}} } \ottnt{T} )}
                 {\HSEffNo} {(\lambda^<  \ottmv{x}  \ottsym{.}  \ottnt{M})}
  }
}

\newcommand*\HSRuleChkInf{
  \inferrule[AT-ChkInf]{
    \HSInfTyping {\HSCtx} {\HSInfExprA} {\ottnt{S}} {\HSEff} {\ottnt{M}} \\
    \ottnt{S} = \ottnt{T}
  }{
    \HSChkTyping {\HSCtx} {\HSInfExprA} {\ottnt{T}} {\HSEff} {\ottnt{M}}
  }
}


\newcommand*\HSRuleUnit{
  \inferrule[AT-Unit]{
    \HSCtx \HSSubCtx \HSCtxEmpty
  }{
    \HSInfTyping {\HSCtx} {\HSUnit} { \mathtt{Unit} } {\HSEffNo} {\ottkw{unit}}
  }
}

\newcommand*\HSRuleNew{
  \inferrule[AT-New]{
    \HSCtx \HSSubCtx \HSCtxEmpty
  }{
    \HSInfTyping {\HSCtx} {\HSNew m} {\ottsym{[}  \ottnt{m}  \ottsym{]}} {\HSEffNo} { \ottkw{new} _{ \ottnt{m} } }
  }
}

\newcommand*\HSRuleOp{
  \inferrule[AT-Op]{
    \HSInfTyping {\HSCtx} {\HSInfExprA} {\ottsym{[}  \ottnt{m}  \ottsym{]}} {\HSEff} {\ottnt{M}} \\
    m_1 \HSMonOp m_2 \HSMonLeq m
  }{
    \HSInfTyping {\HSCtx} {\HSOp {m_1} \HSInfExprA} {\ottsym{[}  \ottnt{m_{{\mathrm{2}}}}  \ottsym{]}} {\HSEff} {\ottsym{(}   \ottkw{op} _{ \ottnt{m_{{\mathrm{1}}}} }   \ottsym{)} \, \ottnt{M}}
  }
}

\newcommand*\HSRuleSplit{
  \inferrule[AT-Split]{
    \HSInfTyping {\HSCtx} {\HSInfExprA} {\ottsym{[}  \ottnt{m}  \ottsym{]}} {\HSEff} {\ottnt{M}} \\
    m_1 \HSMonOp m_2 \HSMonLeq m
  }{
    \HSInfTyping {\HSCtx} {\HSSplit {m_1} \HSInfExprA} {\ottsym{[}  \ottnt{m_{{\mathrm{1}}}}  \ottsym{]}  \odot  \ottsym{[}  \ottnt{m_{{\mathrm{2}}}}  \ottsym{]}} {\HSEff} {\ottsym{(}   \ottkw{split} _{ \ottnt{m_{{\mathrm{1}}}} , \ottnt{m_{{\mathrm{2}}}} }   \ottsym{)} \, \ottnt{M}}
  }
}

\newcommand*\HSRuleDrop{
  \inferrule[AT-Drop]{
    \HSInfTyping {\HSCtx} {\HSInfExprA} {\ottsym{[}  \ottnt{m}  \ottsym{]}} {\HSEff} {\ottnt{M}} \\
    \HSMonNeu \HSMonLeq m
  }{
    \HSInfTyping {\HSCtx} {\HSDrop \HSInfExprA} { \mathtt{Unit} } {\HSEff} {\ottsym{(}  \ottkw{drop}  \ottsym{)} \, \ottnt{M}}
  }
}

\newcommand*\HSRuleVar{
  \inferrule[AT-Var]{
    \HSCtx \HSSubCtx \HSVarA : \ottnt{T}
  }{
    \HSInfTyping {\HSCtx} {\HSVarA} {\ottnt{T}} {\HSEffNo} {\ottmv{x}}
  }
}

\newcommand*\HSRuleAppUnr{
  \inferrule[AT-App]{
    \HSCtx_1 = \HSRestrCtx{\HSCtx}{\HSFv{\HSInfExprA}} \and
    \HSCtx_2 = \HSRestrCtx{\HSCtx}{\HSFv{\HSInfExprB}} \and
    \HSCtx \HSSubCtx \HSCtx_1 \HSSeq \HSCtx_2 \\
    \HSInfTyping {\HSCtx_1} {\HSInfExprA} {( \ottnt{S} \rightarrow _{ \ottnt{e} } \ottnt{T} )} {\HSEff_1} {\ottnt{M}} \\
    \HSInfTyping {\HSCtx_2} {\HSInfExprB} {\ottnt{S}} {\HSEff_2} {\ottnt{N}}
  }{
    \HSInfTyping {\HSCtx} {(\HSApp \HSInfExprA \HSInfExprB)} {\ottnt{T}}
                 {(\HSEff \HSLub \HSEff_1 \HSLub \HSEff_2)} {(\ottnt{M} \, \ottnt{N})}
  }
}
\newcommand*\HSRuleAppLin{
  \inferrule[AT-UApp]{
    \HSCtx_1 = \HSRestrCtx{\HSCtx}{\HSFv{\HSInfExprA}} \and
    \HSCtx_2 = \HSRestrCtx{\HSCtx}{\HSFv{\HSInfExprB}} \and
    \HSCtx \HSSubCtx \HSCtx_1 \HSPar \HSCtx_2 \\
    \HSInfTyping {\HSCtx_1} {\HSInfExprA} {( \ottnt{S} \rightarrowtriangle _{ \ottnt{e} } \ottnt{T} )} {\HSEff_1} {\ottnt{M}} \\
    \HSInfTyping {\HSCtx_2} {\HSInfExprB} {\ottnt{S}} {\HSEff_2} {\ottnt{N}}
  }{
    \HSInfTyping {\HSCtx} {(\HSApp \HSInfExprA \HSInfExprB)} {\ottnt{T}}
                 {(\HSEff \HSLub \HSEff_1 \HSLub \HSEff_2)} {(\ottnt{M}  {}^\circ  \ottnt{N})}
  }
}
\newcommand*\HSRuleAppLeft{
  \inferrule[AT-RApp]{
    \HSCtx_1 = \HSRestrCtx{\HSCtx}{\HSFv{\HSInfExprA}} \and
    \HSCtx_2 = \HSRestrCtx{\HSCtx}{\HSFv{\HSInfExprB}} \and
    \HSCtx \HSSubCtx \HSCtx_1 \HSSeq \HSCtx_2 \\
    \HSInfTyping {\HSCtx_1} {\HSInfExprA} {( \ottnt{S} \rightarrowtail _{ \ottnt{e} } \ottnt{T} )} {\HSEff_1} {\ottnt{M}} \\
    \HSInfTyping {\HSCtx_2} {\HSInfExprB} {\ottnt{S}} {\HSEffNo} {\ottnt{N}}
  }{
    \HSInfTyping {\HSCtx} {(\HSApp \HSInfExprA \HSInfExprB)} {\ottnt{T}}
                 {(\HSEff \HSLub \HSEff_1)} {(\ottnt{M}  {}^>  \ottnt{N})}
  }
}
\newcommand*\HSRuleAppRight{
  \inferrule[AT-LApp]{
    \HSCtx_1 = \HSRestrCtx{\HSCtx}{\HSFv{\HSInfExprA}} \and
    \HSCtx_2 = \HSRestrCtx{\HSCtx}{\HSFv{\HSInfExprB}} \and
    \HSCtx \HSSubCtx \HSCtx_2 \HSSeq \HSCtx_1 \\
    \HSInfTyping {\HSCtx_1} {\HSInfExprA} {( \ottnt{S} \twoheadrightarrow _{ \ottnt{e} } \ottnt{T} )} {\HSEffNo} {\ottnt{M}} \\
    \HSInfTyping {\HSCtx_2} {\HSInfExprB} {\ottnt{S}} {\HSEff_2} {\ottnt{N}}
  }{
    \HSInfTyping {\HSCtx} {(\HSApp \HSInfExprA \HSInfExprB)} {\ottnt{T}}
                 {(\HSEff \HSLub \HSEff_2)} {(\ottnt{M}  {}^<  \ottnt{N})}
  }
}

\newcommand*\HSRulePairLin{
  \inferrule[AT-UPair]{
    \HSCtx_1 = \HSRestrCtx{\HSCtx}{\HSFv{\HSInfExprA}} \and
    \HSCtx_2 = \HSRestrCtx{\HSCtx}{\HSFv{\HSInfExprB}} \and
    \HSCtx \HSSubCtx \HSCtx_1 \HSPar \HSCtx_2 \\\\
    \HSInfTyping {\HSCtx_1} {\HSInfExprA} {\ottnt{S}} {\HSEff_1} {\ottnt{M}} \and
    \HSInfTyping {\HSCtx_2} {\HSInfExprB} {\ottnt{T}} {\HSEff_2} {\ottnt{N}}
  }{
    \HSInfTyping {\HSCtx} {\HSPair \HSInfExprA \HSInfExprB} {(\ottnt{S}  \otimes  \ottnt{T})}
                 {(\HSEff_1 \HSLub \HSEff_2)} {(\ottnt{M}  \otimes  \ottnt{N})}
  }
}
\newcommand*\HSRulePairLeft{
  \inferrule[AT-OPair]{
    \HSCtx_1 = \HSRestrCtx{\HSCtx}{\HSFv{\HSInfExprA}} \and
    \HSCtx_2 = \HSRestrCtx{\HSCtx}{\HSFv{\HSInfExprB}} \and
    \HSCtx \HSSubCtx \HSCtx_1 \HSSeq \HSCtx_2 \\\\
    \HSCtx \not\HSSubCtx \HSCtx_1 \HSPar \HSCtx_2 \and
    \HSOrd {\ottnt{S}} \texttt{ implies } \HSEff_2 = \HSEffNo \\\\
    \HSInfTyping {\HSCtx_1} {\HSInfExprA} {\ottnt{S}} {\HSEff_1} {\ottnt{M}} \and
    \HSInfTyping {\HSCtx_2} {\HSInfExprB} {\ottnt{T}} {\HSEff_2} {\ottnt{N}}
  }{
    \HSInfTyping {\HSCtx} {\HSPair \HSInfExprA \HSInfExprB} {(\ottnt{S}  \odot  \ottnt{T})}
                 {(\HSEff_1 \HSLub \HSEff_2)} {(\ottnt{M}  \odot  \ottnt{N})}
  }
}

\newcommand*\HSRuleLetPairLin{
  \inferrule[AT-ULet]{
    \HSSplitCtx{\HSCtx}{\HSFv{\HSInfExprA}} = (\HSCtxCtx, \HSCtx') \\
    \ottsym{\{}  \ottmv{x}  \ottsym{\}}  \uplus  \ottsym{\{}  \ottmv{y}  \ottsym{\}}  \uplus   \ottkw{dom} ( \mathcal{G}  \ottsym{[}  \Gamma'  \ottsym{]} )  \\\\
    \HSInfTyping {\HSCtx'} {\HSInfExprA} {(\ottnt{S_{{\mathrm{1}}}}  \otimes  \ottnt{S_{{\mathrm{2}}}})} {\HSEffNo} {\ottnt{M}} \and
    \HSInfTyping {\HSCtxCtx[\HSVarA : S_1 \HSPar \HSVarB : S_2]} {\HSInfExprB} {\ottnt{T}} {\HSEff} {\ottnt{N}}
  }{
    \HSInfTyping {\HSCtx} {\HSLetPair \HSVarA \HSVarB \HSInfExprA \HSInfExprB} {\ottnt{T}}
                 {\HSEff} {(\ottkw{let} \, \ottmv{x}  \otimes  \ottmv{y}  \ottsym{=}  \ottnt{M} \, \ottkw{in} \, \ottnt{N})}
  }
}
\newcommand*\HSRuleLetPairLeft{
  \inferrule[AT-OLet]{
    \HSSplitCtx{\HSCtx}{\HSFv{\HSInfExprA}} = (\HSCtxCtx, \HSCtx') \\
    \ottsym{\{}  \ottmv{x}  \ottsym{\}}  \uplus  \ottsym{\{}  \ottmv{y}  \ottsym{\}}  \uplus   \ottkw{dom} ( \mathcal{G}  \ottsym{[}  \Gamma'  \ottsym{]} )  \\\\
    \HSInfTyping {\HSCtx'} {\HSInfExprA} {(\ottnt{S_{{\mathrm{1}}}}  \odot  \ottnt{S_{{\mathrm{2}}}})} {\HSEffNo} {\ottnt{M}} \and
    \HSInfTyping {\HSCtxCtx[\HSVarA : S_1 \HSSeq \HSVarB : S_2]} {\HSInfExprB} {\ottnt{T}} {\HSEff} {\ottnt{N}}
  }{
    \HSInfTyping {\HSCtx} {\HSLetPair \HSVarA \HSVarB \HSInfExprA \HSInfExprB} {\ottnt{T}}
                 {\HSEff} {(\ottkw{let} \, \ottmv{x}  \odot  \ottmv{y}  \ottsym{=}  \ottnt{M} \, \ottkw{in} \, \ottnt{N})}
  }
}

\newcommand*\HSRuleAnn{
  \inferrule[AT-Ann]{
    \HSChkTyping {\HSCtx} {\HSChkExprA} {\ottnt{T}} {\HSEff} {\ottnt{M}}
  }{
    \HSInfTyping {\HSCtx} {(\HSAnn \HSChkExprA {\ottnt{T}})} {\ottnt{T}} {\HSEff} {\ottnt{M}}
  }
}


\section{Algorithmic Typing and Implementation}
\label{sec:algorithmic-typing}

The logical typing relation from
\Cref{sec:language:type-system:typing-rules} has many rules
that are not deterministic. For example, the application rule
\textsc{T-UApp} requires the typing context to be split into
two subcontexts $\HSCtx_1\HSPar\HSCtx_2$.
To implement a type checker, an efficient way to compute these
splittings is required.
Furthermore, the need to syntactically distinguish between terms with
different ordering constraints poses a burden on the programmer and
introduces syntactic noise, e.g., either $\ottnt{M} \, \ottnt{N}$ or $\ottnt{M}  {}^\circ  \ottnt{N}$ has
to be used for function application depending on whether an
unrestricted or a linear function is applied.

In this section, we present an algorithmic typing, which splits
contexts deterministically and does not require annotating terms with ordering constraints.
The algorithmic typing is formulated using bidirectional
typing~\cite{DBLP:conf/popl/PierceT98} to minimize the number of required type
annotations, and translates terms from a surface syntax to
terms in the internal syntax from \Cref{fig:syntax}.

\begin{figure}
  \centering
  \setlength{\extrarowheight}{\smallskipamount}
  \begin{tabulary}{\linewidth}{@{}r@{\enspace}c@{\enspace}J<{\hfill\null}@{}}
    $\HSInfExprA, \HSInfExprB$ & $\Coloneqq$ & $
      \HSUnit \mid
      \HSNew m \mid
      \HSOp m \HSInfExprA \mid
      \HSSplit m \HSInfExprA \mid
      \HSDrop \HSInfExprA 
                                               $\\
                               & $\mid$ & $
      \HSVarA \mid
      \HSApp \HSInfExprA \HSInfExprB \mid
      \HSPair \HSInfExprA \HSInfExprB \mid
      \HSLetPair \HSVarA \HSVarB \HSInfExprA \HSInfExprB \mid
      \HSAnn \HSChkExprA \HSTypeA                $\\
    $\HSChkExprA, \HSChkExprB$ & $\Coloneqq$ & $
      \HSAbs x \HSChkExprA \mid
      \HSEmbed \HSInfExprA$
  \end{tabulary}
  \caption{Surface Syntax}
  \label{fig:surface-syntax}
\end{figure}

The surface syntax is shown in \Cref{fig:surface-syntax}.
As customary with bidirectional typing,
we distinguish between \emph{inferable} expressions $\HSInfExprA$ and
\emph{checkable} expressions $\HSChkExprA$.
An inferable expression can also be checked, but a checkable
expression $\HSChkExprA$ can only be inferred by wrapping it in a type
annotation $\HSAnn \HSChkExprA \HSTypeA$.
Lambda abstractions are the only non-inferable terms, thus they require a type annotation.
Function constants related to resource management are bundled with their arguments, e.g.,
instead of the constant $ \ottkw{op} _{ \ottnt{m} } $, there is the expression $\HSOp m \HSInfExprA$.
This bundling enables inferring the types of resource operations without introducing resource polymorphism.
The introduction and elimination forms of function and product types
are collapsed into unannotated terms, e.g., the surface expression $\HSApp\HSInfExprA\HSInfExprB$ represents
all four internal expressions $\ottnt{M} \, \ottnt{N}$, $\ottnt{M}  {}^\circ  \ottnt{N}$, $\ottnt{M}  {}^>  \ottnt{N}$, and $\ottnt{M}  {}^<  \ottnt{N}$.

\begin{figure}
  \centering
  \begin{mathpar}
    \HSRulePairLin \and
    \HSRuleVar \and
    \HSRulePairLeft \and
    \HSRuleUnit \and
    \HSRuleLetPairLeft \and
  \end{mathpar}
  \caption{Algorithmic Typing (Selected Rules)}
  \label{fig:algorithmic-typing-selected}
\end{figure}

\Cref{fig:algorithmic-typing-selected} shows selected rules from the algorithmic typing.
Having decidability in mind, we think of the checking relation
$\HSChkTyping\HSCtx\HSChkExprA{\ottnt{T}}\HSEff{\ottnt{M}}$ as taking $\HSCtx$, $\HSChkExprA$, and $\ottnt{T}$ as inputs
and producing $\HSEff$ and $\ottnt{M}$ as outputs, and of the inference relation
$\HSInfTyping\HSCtx\HSInfExprA{\ottnt{T}}\HSEff{\ottnt{M}}$ as taking $\HSCtx$ and $\HSInfExprA$ as inputs
and producing $\ottnt{T}$, $\HSEff$, and $\ottnt{M}$ as outputs.

The \textsc{AT-UPair} rule describes how to infer the type of an unordered pair.
Here the input context $\HSCtx$ needs to be rearranged into 
two parallel subcontexts, i.e.\ $\HSCtx\HSSubCtx\HSCtx_1\HSPar\HSCtx_2$,
which are used to infer the types of the subexpressions.
As each subexpression requires bindings for precisely its free variables, we can compute the subcontexts
by restricting $\HSCtx$ to the free variables of the respective subexpression.
This is achieved with the \emph{context restriction} function $\HSRestrCtx{\HSCtx}{X}$, which replaces
each variable binding $x : T$ in $\HSCtx$, where $x \not\in X$, with the empty context.
To check that $\HSCtx$ indeed decomposes to $\HSCtx_1\HSPar\HSCtx_2$, we require
$\HSCtx_1\HSSubCtx\HSCtx_1\HSPar\HSCtx_2$, which is efficiently decidable.

The \textsc{AT-OPair} rule describes how to infer the type of an ordered pair.
It is analogous to \textsc{AT-UPair}, but additionally requires $\HSCtx\not\HSSubCtx\HSCtx_1\HSPar\HSCtx_2$.
This way, typing is deterministic and a pair expression is only typed as an ordered pair
if it cannot be typed as an unordered pair.

The \textsc{AT-OLet} rule describes inference for elimination of ordered pairs.
Here the input context $\HSCtx$ needs to rearranged into a subcontext
and a context pattern,
i.e., $\HSCtx\HSSubCtx\HSCtxCtx[\HSCtx']$, where
$\HSCtx'$ and $\HSCtxCtx$ are used to infer the types of the first and
second subexpression, respectively.
Ideally, we would like to compute $\HSCtx'$ using context restriction
as shown for the \textsc{AT-UPair} rule, and compute $\HSCtxCtx$ by
replacing the occurence of $\HSCtx'$ in $\HSCtx$ with a hole.
However, $\HSCtx'$ does not need to occur in $\HSCtx$ as a whole,
but the bindings of $\HSCtx'$ could be interleaved with other bindings.
Consequently, the task at hand is to figure out if $\HSCtx$ can be rearranged
to some context $\overline{\HSCtx}$ with $\HSCtx\HSSubCtx\overline{\HSCtx}$
such that $\overline{\HSCtx}$ contains the bindings for the free variables of the first
subexpression as one subcontext.
This is achieved with the \emph{context decomposition} function
$\HSSplitCtx{\HSCtx}{X}$, which either fails or returns some $\HSCtxCtx$ and
$\HSCtx'$ such that $\HSCtx\HSSubCtx\HSCtxCtx[\HSCtx']$,
$\HSDom{\HSCtx'} \subseteq X$, and $X$ is disjoint from the variables
with ordered types in $\HSCtxCtx$.
Essentially, this function systematically applies context isomorphisms $\HSIsoCtx$
and the subcontext rule
$(\HSCtx_1 \HSSeq \HSCtx_2) \HSPar (\HSCtx_3 \HSSeq \HSCtx_4) \HSSubCtx
(\HSCtx_1 \HSPar \HSCtx_3) \HSSeq (\HSCtx_2 \HSPar \HSCtx_4)$
to separate bindings with variables in $X$ from other bindings, and ensures
that unrestricted bindings appear both in $\HSCtx'$ and $\HSCtxCtx$.

The \textsc{AT-Var} and \textsc{AT-Unit} rules describe how to infer
the types of variables and the \texttt{unit} constant.
These expressions are leaves in the syntax tree, so we need to ensure
that no restricted variables are dropped
by requiring
$\HSCtx\HSSubCtx\HSVarA:\ottnt{T}$ for variables $\HSVarA$, and
$\HSCtx\HSSubCtx\HSCtxEmpty$ for \texttt{unit}.


The rules discussed above are representative, and the omitted rules
work similarly.
The formal definition of context restriction, context decomposition,
and the omitted typing rules can be found in the supplementary
material and the Agda formalization.

\subsection{Properties}
We have proved decidability of algorithmic typing and soundness with
respect to the logical typing in the theorem prover
Agda~\cite{Agda}.

\begin{theorem}[Soundness]
  If $\HSInfTyping\HSCtx\HSInfExprA{\ottnt{T}}\HSEff{\ottnt{M}}$
  or $\HSChkTyping\HSCtx\HSChkExprA{\ottnt{T}}\HSEff{\ottnt{M}}$
  then $\HSLogTyping\HSCtx{\ottnt{M}}{\ottnt{T}}\HSEff$.
\end{theorem}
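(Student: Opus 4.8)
The plan is to prove the theorem by simultaneous structural induction on the two algorithmic derivations, that is, a mutual induction over the inference judgment $\HSInfTyping\HSCtx\HSInfExprA{\ottnt{T}}\HSEff{\ottnt{M}}$ and the checking judgment $\HSChkTyping\HSCtx\HSChkExprA{\ottnt{T}}\HSEff{\ottnt{M}}$, with a single combined goal $\HSLogTyping\HSCtx{\ottnt{M}}{\ottnt{T}}\HSEff$ for each. The mutual structure is forced by the two rules that switch modes: \textsc{AT-ChkInf}, which promotes an inferred judgment to a checked one under the type-equality side condition $\ottnt{S} = \ottnt{T}$, and \textsc{AT-Ann}, which turns a checked judgment into an inferred one for an annotated term. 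Both leave the elaborated internal term $\ottnt{M}$ unchanged, so in these two cases soundness transfers directly from the induction hypothesis. For every other rule I would match the algorithmic rule to its logical counterpart in \Cref{fig:type/system}.

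The remaining cases all follow one uniform pattern. Each structural algorithmic rule first records a subcontext obligation relating the input context to a decomposition --- $\HSCtx \HSSubCtx \HSCtx_1 \HSPar \HSCtx_2$ for the parallel rules (\textsc{AT-UPair}, \textsc{AT-UApp}), $\HSCtx \HSSubCtx \HSCtx_1 \HSSeq \HSCtx_2$ for the sequential ones (\textsc{AT-App}, \textsc{AT-RApp}, \textsc{AT-OPair}), and $\HSCtx \HSSubCtx \HSCtxCtx[\HSCtx']$ for the pair eliminations (\textsc{AT-ULet}, \textsc{AT-OLet}) --- and then types the subexpressions under the decomposed contexts. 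Applying the induction hypothesis to each subderivation yields logical typings for the elaborated subterms under $\HSCtx_1$, $\HSCtx_2$, $\HSCtx'$, or $\HSCtxCtx[\cdots]$. I then assemble these with the matching logical rule (\ruleref{T-UPair}, \ruleref{T-UApp}, \ruleref{T-OPair}, \ruleref{T-OLet}, and so on), obtaining a logical judgment at the decomposed context, and finally discharge the obligation with one application of \ruleref{T-Weaken}, whose context premise is exactly the recorded $\HSSubCtx$ fact and whose effect premise $\ottnt{e_{{\mathrm{1}}}} \le \ottnt{e_{{\mathrm{2}}}}$ absorbs any subeffecting. The leaf rules \textsc{AT-Var}, \textsc{AT-Unit}, and \textsc{AT-New} are the same pattern with a logical axiom in place of a composite rule, discharging $\HSCtx \HSSubCtx \HSVarA : \ottnt{T}$ or $\HSCtx \HSSubCtx \HSCtxEmpty$ by \ruleref{T-Weaken}; the bundled resource primitives \textsc{AT-Op}, \textsc{AT-Split}, \textsc{AT-Drop} combine the constant's axiom (\ruleref{T-Op}, \ruleref{T-Split}, \ruleref{T-Drop}) with \ruleref{T-App} on the elaborated argument, where one checks that the effect computed by the logical application rule agrees with the effect recorded algorithmically. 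The abstraction rules \textsc{AT-Abs}, \textsc{AT-UAbs}, \textsc{AT-RAbs}, \textsc{AT-LAbs} need a minor twist: the induction hypothesis gives the body effect $\ottnt{e_{{\mathrm{1}}}}$, so I first raise it to the annotated $\ottnt{e_{{\mathrm{2}}}}$ by \ruleref{T-Weaken} (reflexively on the context) before applying the logical abstraction rule, which then records the latent effect in the arrow type.

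Two auxiliary facts about the context-manipulating functions carry the real weight, and I would isolate them as lemmas proved through the graph interpretation $\INTRP{\cdot}$ of contexts. First, the \emph{context-restriction} function satisfies $\HSDom{\HSRestrCtx\HSCtx X} \subseteq X$, and the efficiently decidable check that the algorithm actually performs (e.g.\ $\HSCtx_1 \HSSubCtx \HSCtx_1 \HSPar \HSCtx_2$ standing in for $\HSCtx \HSSubCtx \HSCtx_1 \HSPar \HSCtx_2$) must be shown to entail the genuine subcontext relation required by \ruleref{T-Weaken}; this reduction rests on the definition of $\HSSubCtx$ and on \Cref{prop:type-system/graph-cmp}. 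Second --- and this is the step I expect to be the main obstacle --- the \emph{context-decomposition} function must be proved correct: whenever $\HSSplitCtx\HSCtx{\HSFv{\HSInfExprA}} = (\HSCtxCtx, \HSCtx')$ it returns a pattern $\HSCtxCtx$ and subcontext $\HSCtx'$ with $\HSCtx \HSSubCtx \HSCtxCtx[\HSCtx']$, with $\HSDom{\HSCtx'}$ contained in the header's free variables and the ordered bindings of $\HSCtxCtx$ disjoint from them. Proving this amounts to showing that the function's systematic use of context isomorphisms $\HSIsoCtx$ together with the distributivity step $(\HSCtx_1 \HSSeq \HSCtx_2) \HSPar (\HSCtx_3 \HSSeq \HSCtx_4) \HSSubCtx (\HSCtx_1 \HSPar \HSCtx_3) \HSSeq (\HSCtx_2 \HSPar \HSCtx_4)$ only ever adds edges at the level of graph interpretations --- hence is a legitimate $\HSSubCtx$ step --- and that it terminates with the requested bindings isolated as a single subcontext. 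Once these lemmas are available the induction closes routinely; in particular the negative determinism side conditions, such as $\HSCtx \not\HSSubCtx \HSCtx_1 \HSPar \HSCtx_2$ in \textsc{AT-OPair}, are never needed, since soundness consumes only the positive subcontext facts.
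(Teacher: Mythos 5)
Your proposal is correct and takes essentially the same approach as the paper: the paper's own proof is precisely a mutual rule induction on the inference and checking derivations (with all case details delegated to the Agda formalization in \texttt{TypingSound.agda}), matching each algorithmic rule to its logical counterpart and discharging the recorded subcontext and subeffect premises via \ruleref{T-Weaken}. Your isolation of the correctness lemmas for context restriction and context decomposition is exactly the supporting material that formalization carries, so nothing in your outline diverges from the paper's route.
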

\begin{proof}
  Mutual rule induction on the typing derivations.
  A detailed proof can be found in our Agda formalization
  in the file \texttt{LawAndOrder/Algorithmic/TypingSound.agda}.
\end{proof}

Decidability of typing requires that certain operations of the OPM are decidable.
\begin{definition}[Decidable OPM]
  Let $\mathcal{M} = (M, \odot, \HSMonNeu, \HSMonLeq)$ be an OPM.
  We say $\mathcal{M}$ is \emph{decidable},
  iff for all $x,y \in M$ it is decidable whether
  $x = y$, $x \HSMonLeq y$, and $\exists z \in M.\ x \HSMonOp z \HSMonLeq y$.
\end{definition}

\begin{theorem}[Decidability]
  Let $\mathcal{M}$ be a decidable OPM, then
  \begin{enumerate}
  \item 
    for all $\HSCtx$ and $\HSInfExprA$ it is decidable if
    there exist $\ottnt{T}$, $\HSEff$, and $\ottnt{M}$ such that
    $\HSInfTyping\HSCtx\HSInfExprA{\ottnt{T}}\HSEff{\ottnt{M}}$.
  \item 
    for all $\HSCtx$, $\HSChkExprA$ and \ottnt{T} it is decidable if
    there exist $\HSEff$ and $\ottnt{M}$ such that
    $\HSChkTyping\HSCtx\HSChkExprA{\ottnt{T}}\HSEff{\ottnt{M}}$.
  \end{enumerate}
\end{theorem}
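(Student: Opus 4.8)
The plan is to prove both clauses simultaneously by well-founded induction on the surface expression, treating the inference and checking judgments as partial recursive functions that, on given inputs, either fail or return their outputs. The first ingredient I would establish is a \emph{determinism} lemma: if $\Gamma \vdash M^\uparrow \Rightarrow T \mid e$ producing $M$ and also producing $T', e', M'$, then $T = T'$, $e = e'$, and $M = M'$ (and likewise for checking against a fixed target type). Determinism is precisely what the disambiguating premises of the algorithmic rules are designed to provide — for instance the negative condition $\Gamma \not\lesssim \Gamma_1 \parallel \Gamma_2$ in \textsc{AT-OPair} forces the unordered reading to take precedence — and it is what lets us identify ``decide whether outputs exist'' with ``run a deterministic procedure,'' so that each recursive premise becomes a call of that procedure on a strict subterm.

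The recursion is structural on the surface syntax, with termination measure the lexicographic pair consisting of the size of the surface term and the mode, ranking checking above inference. The only rule that recurses at an unchanged term size is \textsc{AT-ChkInf}, which switches from checking to inference (size equal, mode strictly smaller); every other rule either recurses on a strictly smaller subterm (\textsc{AT-Op}, \textsc{AT-Split}, \textsc{AT-Drop}, the application rules, the pair rules, the let rules, and \textsc{AT-Abs} on the body) or strips an annotation in \textsc{AT-Ann}. The system is syntax-directed, so the head of the expression selects the candidate rule; for the elimination forms, the kind of arrow or product \emph{inferred} for the principal subterm then selects among \textsc{AT-App}/\textsc{AT-UApp}/\textsc{AT-RApp}/\textsc{AT-LApp} and among the two pair and let variants. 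Consequently only finitely many rules need be tried at each node.

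What remains is to discharge every non-recursive premise as decidable. The leaf conditions are exactly the three predicates of the decidable-OPM hypothesis: $m_1 \odot m_2 \le m$ and, in \textsc{AT-Op}/\textsc{AT-Split}, the existence of an extension $\exists z.\ m_1 \odot z \le m$, together with $\varepsilon \le m$ in \textsc{AT-Drop} and the type equalities of \textsc{AT-ChkInf}, which reduce to $=$ on the OPM elements embedded in trace types; the $\mathtt{unr}$ and $\mathtt{ord}$ predicates are then decidable by structural recursion on types. The context-level premises reduce to three computations: the total, computable context-restriction $\Gamma{\downarrow_X}$; decidability of $\Gamma_1 \lesssim \Gamma_2$ and of its negation; and computability of the context-decomposition $\Gamma \Downarrow_X$. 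For the subcontext relation I would unfold its definition through the graph interpretation $\lBrack\cdot\rBrack$: deciding $\Gamma_1 \lesssim \Gamma_2$ amounts to searching, over the finite labeled DAGs $\lBrack\Gamma_1\rBrack$ and $\lBrack\Gamma_2\rBrack$, for isomorphic rearrangements related by the spanning relation $\mathrel{<_\rightarrow}$ together with the superset inclusion on the unrestricted sets. Since the graphs are finite and their vertex labels are compared only up to decidable type (hence OPM) equality, this is a bounded finite search, and \Cref{prop:type-system/graph-cmp} can be used to pass between the graph-level decision and the syntactic contexts.

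I expect the main obstacle to be the context-decomposition step: showing that $\Gamma \Downarrow_X$ is a terminating procedure that succeeds \emph{exactly} when the \textsc{AT-ULet}/\textsc{AT-OLet} premise is satisfiable — that is, that searching for a rearrangement $\overline{\Gamma} \gtrsim \Gamma$ which exposes the bindings for $X$ as one contiguous subcontext (while keeping unrestricted bindings available on both sides and keeping $X$ disjoint from the ordered variables of the surrounding pattern) is both bounded and complete. The completeness direction is the delicate part, since it must confirm that the finitely many isomorphism-and-subcontext rearrangements generated by the defining equations of $\Downarrow$ suffice to find a witness whenever any derivation exists. All of these pieces — the concrete definitions of $\downarrow$, $\Downarrow$, and the decision procedure for $\lesssim$ — are given as terminating functions in the Agda development, which discharges the termination and completeness obligations; the proof itself is then the mutual induction that assembles these decidable components rule by rule.
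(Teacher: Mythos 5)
Your proposal follows essentially the same route as the paper: the paper's proof \emph{is} the mutual induction you describe, realized as a terminating decision function in Agda (structural recursion on the surface term, with the checking-to-inference mode decrease at \textsc{AT-ChkInf} and strict subterms everywhere else), with each side condition discharged by the decidable-OPM hypothesis, decidability of $\mathtt{unr}$/$\mathtt{ord}$ and type equality, computability of context restriction and decomposition, and decidability of $\lesssim$; your determinism lemma is the standard device that makes the syntax-directed recursion sufficient, and it holds by design (e.g.\ the negative premise of \textsc{AT-OPair}).

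The one point to correct is the obstacle you single out as the main difficulty: it is not an obligation of this theorem. In \textsc{AT-ULet} and \textsc{AT-OLet} the premise is literally the equation $\HSSplitCtx{\HSCtx}{\HSFv{\HSInfExprA}} = (\HSCtxCtx, \HSCtx')$, where $\HSSplitCtx{\cdot}{\cdot}$ is the concrete partial function defined in the appendix by structural recursion on $\HSCtx$. Deciding that premise therefore amounts to running this function and checking whether it succeeds; termination is immediate because every clause recurses on proper subcontexts, and no search over rearrangements is involved. The completeness property you anticipate needing --- that $\HSSplitCtx{\HSCtx}{X}$ succeeds whenever \emph{some} rearrangement $\overline{\HSCtx}$ with $\HSCtx \lesssim \overline{\HSCtx}$ exposes the $X$-bindings as one subcontext --- would be required only for a completeness theorem of algorithmic typing relative to the logical system, which the paper neither states nor proves (it proves soundness only). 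Dropping that obligation, your induction assembles exactly the components the Agda development assembles, and the remaining decidability facts (the finite search for $\lesssim$ through the graph interpretation, totality of $\HSRestrCtx{\HSCtx}{X}$) are as you describe.
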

\begin{proof}
  Mutual rule induction on the typing derivations.
  A detailed proof can be found in our Agda formalization
  in the file \texttt{LawAndOrder/Algorithmic/TypingDecidable.agda}.
\end{proof}



\subsection{Implementation}
\label{sec:implementation}

We have implemented a type-checker and interpreter for our
language in Rust using OPMs of regular expressions as resources.
In contrast to \Cref{example:opm-files}, we do not fix a
particular envelope language, but assign each resource the envelope language
according to the regular expression with which it was created.
That is, we can describe resources by arbitrary non-empty regular
languages.

Here is an example program supported by our implementation that
revisits \Cref{example:opm-files}.
\begin{lstlisting}[numbers=left,xleftmargin=2\parindent]
let copy : {r*} ox {w*} -[u 1]-> Unit
    copy (rc, wc) = drop (!{r} rc); drop (!{w} wc)
in
let if0     = new {(r|w)*c} in  let of0     = new {(r|w)*c} in
let b1, if1 = split {r*} if0 in let b2, of1 = split {w*} of0 in
let _ = copy (b1, b2) in
drop (!{c} if1); drop (!{c} of1)
\end{lstlisting}
In the implementation, a resource type is written as $\{r\}$ where $r$
is a regular expression. The expression \lstinline/new {(r|w)*c}/ (in
line~4) creates a new resource with envelope language
$E = (r|w)^*c$. Line~5 splits off a borrow for reading (writing)
operations indicated by the \lstinline/{r*}/ (\lstinline/{w*}/) argument
to split. The implementation computes the best continuation of a split from
the type of the borrow: Its resource corresponds to the \emph{product
  derivative} (a variation of the standard language derivative
operation \cite{suzuki08:_produc_deriv_regul_expres}) of the argument
resource by the borrowed resource. In the example, the product
derivative of $E$ by $r^*$ ($w^*$) is again $E$. The function
\lstinline/copy/, defined in lines~1/2, takes an unordered pair of
borrows for reading and writing and returns the unit value. The
function type is annotated with \lstinline/u/ for unrestricted and
\lstinline/1/ for the latent effect of the function (it performs
resource operations). The function body performs a read and a write
operation and discards the borrows using the \lstinline/drop/
operation. The expression \lstinline/!{r} rc/
symbolizes a resource operation for \lstinline/{r}/, i.e., a single
read. The type would allow further reads and writes. In line~9, we
close the resources using \lstinline/!{c}/ and discard them
afterwards.

The implementation mostly saves the programmer from making ordered
types explicit. It infers a viable ordering mode for a
pair, a function, or a let by working through the rules from least
restrictive to most restrictive: first the unrestricted variant, then
the unordered linear variant, and finally the ordered variants. We
illustrate this approach with the let expressions in the
example.
\begin{itemize}
\item[Line~1] This let binds an unrestricted value, so it creates an
  unrestricted binding.
\item[Line~4.1] adds an ordered binding to the context.
\item[Line~4.2] the second let in this line is more tricky. If we
  interpreted this let as ordered, we would create an artificial
  constraint between \lstinline/if0/ and \lstinline/of0/. But using
  the the unordered linear variant is successful and results in
  independent bindings for  \lstinline/if0/ and \lstinline/of0/.
\item[Line~5] each split creates two ordered pair bindings, but the
  pairs are independent, so the second let must be unordered linear,
  again. Here is a place where context-contexts come into play: we
  have to replace the ordered binding \lstinline/of0/ by
  \lstinline/b2,of1/ without disturbing the surrounding context
  structure. 
\item[Line~6] creates an unrestricted binding. The variables
  \lstinline/b1/ and \lstinline/b2/ are correctly removed.
\item[Line~7] processes the remaining bindings. This line has similar
  issues as a let, because the semicolon can also be interpreted as
  independent or sequential.
\end{itemize}



\section{Related Work}
\label{sec:related-work}

\subsection{Aliasing, Linear Types, and Ordered Types}
\label{sec:ordered-typing}

The issue of aliasing has prompted
extensive research into methods for restricting and analyzing aliases
\cite{DBLP:series/lncs/7850}. However, these approaches tend to be
quite intricate. For instance, Boyland's  fractional permissions
\cite{boyland03:_check_inter_fract_permis} 
treat permissions for memory cells algebraically, allowing permissions
to be divided among multiple references. By doing so, if the
references are combined, the original (complete) permission can be
recovered. However, aside from basic reference counting, general
fractional permissions have not seen widespread adoption in practical
languages.

A lot of investigation has focused on ownership types
\cite{DBLP:conf/oopsla/ClarkePN98}, of which there are many
variations. Ownership types enforce encapsulation by ensuring
that an object's reference remains within its owner's control. Our
work focuses on ensuring sound typestate semantics.

Gordon et al. \cite{DBLP:conf/oopsla/GordonPPBD12} outline a type
system employing permissions to facilitate safe concurrency. While
this work concentrates on concurrency, it does not assist in reasoning
about object protocols, as typestate does.

Substructural logic imposes restrictions on the principles of
weakening, contraction, and exchange of hypotheses that are common in
intuitionistic as well as classical logic. Most famously, linear logic
elides weakening and contraction, so that hypotheses must be used
exactly once. This property makes linear logic a powerful foundation
for reasoning about resources \cite{Girard1987}.

\citet{DBLP:conf/ifip2/Wadler90} brought
linear types in focus of much research,
yet they were not widely adopted. Rust
\cite{DBLP:conf/sigada/MatsakisK14, 
  team24:_rust_progr_languag} stands as an exception, utilizing a form
of linearity to confine aliases to mutable objects. However, this
limited usage of linearity in Rust does not necessitate a
sophisticated permission system; for instance, Rust types cannot
directly express the states of referenced objects. Alms
\cite{DBLP:conf/popl/TovP11} is an ML-like language supporting affine
types, but not linearity. Typestate is not directly supported, but it
could be encoded. Session types
\cite{DBLP:journals/csur/HuttelLVCCDMPRT16} offer another approach to
linear types in programming languages, as seen in Concurrent C0
\cite{DBLP:journals/corr/WillseyPP17}. However, session types are more
directly suited to communicating concurrent processes. Recent versions
of Haskell support linearity
\cite{DBLP:journals/pacmpl/BernardyBNJS18,
  DBLP:journals/pacmpl/SpiwackKBWE22}, so that typestate could also be
encoded with some limitations. 

\citet{DBLP:journals/fuin/AhmedFM07} defined a core functional programming language
$\lambda$\textit{URAL} supporting strong updates, i.e., changing the type of
an object in a reference cell. They employ substructural typing in
several flavors: \textbf{U}nrestricted, \textbf{R}elevant,
\textbf{A}ffine, and \textnormal{L}inear. They introduced two
languages, L3 and extended L3. While L3 permits aliasing, it only
allows exclusive access through a capability, restricting a single
reference's ability to read/write to an object. In contrast, full,
shared, and pure access permissions allow for a broader range of
aliasing patterns. Extended L3 enables the recovery of a capability,
but programmers must prove that no other capabilities to the reference
cell exist. Extended L3 operates within a parametrized framework,
requiring the addition of one's own type system to associate proof
with the capability request. 

\citet{DBLP:journals/corr/abs-2310-18166} apply the theory of graded
types to control ownership and uniqueness. In their calculus,
permissions are described by a grade algebra, i.e., a preordered
semiring. We were expecting such structures to model resources, but
found them inadequate in our examples.

Ordered logic goes one step beyond linear logic in eliding exchange.
The Lambek calculus \cite{lambek58:_mathem_senten_struc,
  lambek60:_calcul_syntac_types} uses this rigid structure to reason
about sentence structure in natural languages. There are further
foundational studies exploring non-commutative logic
\cite{DBLP:journals/mlq/Abrusci90a, ABRUSCI199929}, but they are not
geared towards computer science applications.

\citet{DBLP:conf/tlca/PolakowP99} investigate the proof theory of
intuitionistic non-commutative linear logic (INCLL). Their presentation is based on
natural deduction and comes with a term calculus. They prove subject
reduction and canonicity and thus establish their calculus as a
foundation for a logical framework.
The typing rules of their calculus differ subtly from ours: they are
defined from a linear logic point of view, where every assumption is
linear, and their hypotheses are categorized in intuitionistic,
linear, or ordered contexts. In contrast, types in our system carry substructural
qualifications that determine their possible modes of use, thus our
contexts hold a mixture of intuitionistic, linear, and ordered
hypotheses. 

\citet{DBLP:journals/entcs/PolakowP99} extends the study of INCLL by
considering a sequent calculus and developing some of its proof
theory.
\citet{DBLP:conf/ppdp/Polakow00} further extends this development with
proof search in the style of logic programming.
Polakow's thesis \cite{polakow01:_order_linear_logic_applic} offers a
comprehensive review of the developments arising from INCLL.
\citet{DBLP:conf/flops/PolakovY01} present an application of an
ordered logical framework to obtain a theoretical foundation for a
stack-based compilation technique for exceptions.
DeYoung's thesis \cite{deyoung20:_session_typed_order_logic_specif} provides a more
recent overview of further work on INCLL and related systems.

\citet{Walker2005-attapl} discusses substructural type systems in
general. One part introduces an ordered type system with the goal to
``provide a foundation for managing memory allocated on the
stack''. The idea is to have the ordering on type assumptions
guarantee that stack-allocated values are used in a last-in/first-out
manner. The system restricts ordering to pairs and base types, thus
avoiding the need to introduce a complex environment structure or to
discuss several distinct function spaces.

We conclude this discussion with an observation. 
Most published work on ordered logic and typing relies on the core idea
that resources must not be shuffled. However, it does not
preclude reasoning steps that process resources ``in the middle'' of
a sequence of resources.
This view is quite different from our use of order. We require that
resources are processed in the prescribed sequence and do not permit
operations on resources in the middle of a sequence.

\subsection{Typestate}
\label{sec:typestate}

Strom and Yemini \cite{DBLP:journals/tse/StromY86} introduced the
concept of typestate for programming languages. The main difficulty in
making typestate useful is tracking aliases. Doing so is easy in the
presence of linearly handled resources, but real programs are rarely
written in this way so that any typestate system has to deal with
aliasing, e.g., by supporting a concept of borrowing.
Often, a typestate checker is implemented as an additional analysis on
top of a (typed) program. Techniques rooted in abstract
interpretation, such as those discussed by Fink et al. \cite{DBLP:journals/tosem/FinkYDRG08}, perform
a global alias analysis and typically assume the correct implementation of the
protocol in the resource class while focusing on verifying client
conformance. Naeem and Lhoták \cite{DBLP:conf/oopsla/NaeemL08} devised an analysis for
scrutinizing typestate properties across multiple interacting
objects. Such global analyses are expensive as they execute on the
entire code base of the system. 

Fugue \cite{DBLP:conf/ecoop/DeLineF04} was the first modular typestate
verification system for object-oriented programs. It categorizes
objects as "unaliased" or "maybe aliased", and permits state changes
only for "unaliased" objects.  \citet{DBLP:conf/oopsla/BierhoffA07} extended
this approach by introducing expressive run-time dependent method specifications based on
linear logic \cite{Girard1987}.  They introduced access permissions to
enable state changes in the presence of aliasing. They rely on
fractional permissions \cite{boyland03:_check_inter_fract_permis} to
accommodate patterns like borrowing and adoption
\cite{DBLP:conf/popl/BoylandR05}.  Plural, a tool for modular
typestate checking with 
access permissions for Java, has been studied in several practical
examples \cite{DBLP:conf/ecoop/BierhoffBA09}.  Nanda et al. \cite{DBLP:conf/oopsla/NandaGC05} outlined a
system for deriving typestate information from Java
programs.

Distributed session types \cite{DBLP:journals/corr/abs-1205-5344} offer a similar
expressiveness to Plural, albeit with protocols based on
structural types rather than nominal typestates. They  deal
with communication over distributed channels and object protocols but
do not permit aliasing for objects with protocols. 

Voinea et al \cite{DBLP:conf/forte/VoineaDG20} have developed a static
checker for typestate protocols in Java based on ideas from session types.
Bacchiani et al \cite{DBLP:journals/scp/BacchianiBGMR22} presented a
tool that statically verifies that during the execution of a Java
program: sequences of method calls obey the objects’ protocols; the
objects’ protocols are completed; there are no null-pointer
exceptions; instances of subclasses respect the
protocol of their superclasses.

The aforementioned approaches do not integrate typestate within the
programming model, but rather overlay a static typestate analysis on
top of an existing language. Typestate-oriented programming (TSOP),
proposed by Aldrich et al.  \cite{DBLP:conf/oopsla/AldrichSSS09}, distinguishes itself by supporting
changes to object representations at run time both in the statics as
in the dynamics. The programming language Plaid was the first to
integrate typestates in its programming model. Saini et al. \cite{DBLP:conf/ecoop/SainiSA10}
developed a core calculus for a TSOP language, which is object-based
and relies on structural types. Featherweight Typestate
\cite{DBLP:journals/toplas/GarciaTWA14} builds upon this work but
adapts it to a class-based, nominal approach with shared access
permissions and state guarantees for reasoning about typestate in the
presence of aliasing. Subsequently it has been extended to a
concurrent setting \cite{DBLP:journals/corr/abs-1904-01286}. Earlier work related to TSOP includes the Fickle
system \cite{DBLP:conf/ecoop/DrossopoulouDDG01}, which is capable of changing an object's
class at run time (dynamic reclassification) but with limited ability
to reason about field states.

\citet{DBLP:conf/popl/IgarashiK02,DBLP:conf/vmcai/KobayashiSW06,DBLP:conf/pepm/IwamaIK06}
propose a type system for \emph{resource usage}, which analyzes how a program accesses resources.
Their uses are expressed in terms of sets of sequences of program
points, denoted by a usage algebra with sophisticated operators.
Their system can analyze arbitrary
programs including interleaving usage patterns of aliases, whereas our
system is prescriptive as it sequentializes the processing of aliases.

We conclude this discussion with two observations.

Several typestate systems can reason about programs with multiple
active aliases where operations  can be interleaved. \emph{State
  guarantees} \cite{DBLP:journals/toplas/GarciaTWA14} are an example
for such a mechanism.
We can only implement a sequential state guarantee by using an index
structure with sufficiently high granularity. In fact, the first
example for split in the introduction is such an example. 

All existing typestate systems are state-oriented, in the sense that resources are annotated with pairs
of input and output state (cf.\
\Cref{sec:example-from-garcia}), whereas our system is
transition-oriented as we ask for (an abstraction of)
a sequence of operations. We believe that the latter provides significant flexibility as it can
encompass all transitions that result from applying the operations; a
state-oriented system would have to resort to structures like
intersection types or the logical specifications of \citet{DBLP:conf/oopsla/BierhoffA07}.
That is, our view to state change is dual to the traditional view.



\section{Conclusion}
\label{sec:conclusion}

We presented a novel transition-based foundation for typestate that
admits borrowing and has a range of interesting applications. Our
paper contributes a new perspective on ordered types using inspiration
from BI.
We believe that the shift from a state-based
presentation of typestate to a transition-based one can address a concern
that is often raised in the context of typestate. As
\citet[Sec.~4]{DBLP:journals/toplas/CoblenzOEKBBMSA20} put it: ``[A language]
including typestate could result in a design that was hard to
use, since typical typestate languages require users to understand a
complex permission model''. In our design, users have to
understand the resource API and traces. Future user studies will have
to clarify the usability of the transition-based approach.


\section*{Data Availability Statement}
A full version of this paper with all proofs is available as a
technical report \cite{saffrich2024lawordertypestateborrowing}.
The implementation is available in a GitHub repository
\url{https://github.com/m0rphism/bsession-impl.git}. 

\bibliography{main}

\begin{FULLVERSION}
  \FloatBarrier
  \appendix

\newpage
\section{Algorithmic Typing}
\label{sec:algorithmic-typing-appendix}

\begin{definition}[Context Restriction]
  Let $\HSCtx$ be a context and $X$ be a set of variables.
  The \emph{context restriction} of $\HSCtx$ with respect to $X$ is defined as
  \begin{align*}
    \HSRestrCtx\cdot\cdot &: \mathtt{Context} \times \mathcal{P}(\mathtt{Var}) \to \mathtt{Context} \\
    \HSRestrCtx\HSCtx{X} &=
    \begin{cases}
      \HSCtxEmpty & \HSCtx = \HSCtxEmpty \\
      \HSVarA : \ottnt{T} & \HSCtx = \HSVarA : \ottnt{T} \land \HSVarA \in X \\
      \HSCtxEmpty & \HSCtx = \HSVarA : \ottnt{T} \land \HSVarA \not\in X \\
      (\HSRestrCtx{\HSCtx_1}{X}) \HSSeq (\HSRestrCtx{\HSCtx_2}{X}) & \HSCtx = \HSCtx_1 \HSSeq \HSCtx_2 \\
      (\HSRestrCtx{\HSCtx_1}{X}) \HSPar (\HSRestrCtx{\HSCtx_2}{X}) & \HSCtx = \HSCtx_1 \HSPar \HSCtx_2
    \end{cases}
  \end{align*}
\end{definition}

\begin{definition}[Context Decomposition]
  Let $\HSCtx$ be a context and $X$ be a set of variables.
  The \emph{context decomposition} of $\HSCtx$ with $X$ is defined as
  \begin{align*}
    \HSSplitCtx\cdot\cdot &: \mathtt{Context} \times \mathcal{P}(\mathtt{Var}) \rightharpoonup
                             \mathtt{ContextContext} \times \mathtt{Context} \\
    \HSSplitCtx\HSCtx{X} &=
    \begin{cases}
      (\HSCtxCtxHole, \HSCtxEmpty)
        & \HSCtx = \HSCtxEmpty \\
      ((\HSCtxCtxHole\HSPar\HSVarA), \HSCtxEmpty)
        & \HSCtx = \HSVarA : \ottnt{T} \land \HSVarA \not\in X \\
      (\HSCtxCtxHole, \HSVarA : \ottnt{T})
        & \HSCtx = \HSVarA : \ottnt{T} \land \HSVarA \in X \land \HSOrd {\ottnt{T}}\\
      ((\HSCtxCtxHole\HSPar\HSVarA), \HSVarA : \ottnt{T})
        & \HSCtx = \HSVarA : \ottnt{T} \land \HSVarA \in X \land \HSUnr {\ottnt{T}} \\
      ((\HSCtx_1 \HSSeq \HSCtxCtx_2), \HSCtx_2')
        & \HSCtx = \HSCtx_1 \HSSeq \HSCtx_2 \land
          \HSDom{\HSCtx_1} \uplus X \land
          \HSSplitCtx{\HSCtx_2}{X} = (\HSCtxCtx_2, \HSCtx_2') \\
      ((\HSCtxCtx_1 \HSSeq \HSCtx_2), \HSCtx_1')
        & \HSCtx = \HSCtx_1 \HSSeq \HSCtx_2 \land
          \HSDom{\HSCtx_2} \uplus X \land
          \HSSplitCtx{\HSCtx_1}{X} = (\HSCtxCtx_1, \HSCtx_1') \\
      ((\HSCtx_r \HSSeq \HSCtxCtxHole \HSSeq \HSCtx_l), (\HSCtx_1' \HSSeq \HSCtx_2'))
        & \parbox[t]{.55\textwidth}{$
          \HSCtx = \HSCtx_1 \HSSeq \HSCtx_2 \land
          \HSSplitCtx{\HSCtx_1}{X} = (\HSCtxCtx_1, \HSCtx_1') \land
          \HSSplitCtx{\HSCtx_2}{X} = (\HSCtxCtx_2, \HSCtx_2') \land
          \HSRightCtxCtx{\HSCtxCtx_1} = \HSCtx_r \land
          \HSLeftCtxCtx{\HSCtxCtx_2} = \HSCtx_l
          $} \\
      ((\HSCtx_1 \HSPar \HSCtxCtx_2), \HSCtx_2')
        & \HSCtx = \HSCtx_1 \HSPar \HSCtx_2 \land
          \HSDom{\HSCtx_1} \uplus X \land
          \HSSplitCtx{\HSCtx_2}{X} = (\HSCtxCtx_2, \HSCtx_2') \\
      ((\HSCtxCtx_1 \HSPar \HSCtx_2), \HSCtx_1')
        & \HSCtx = \HSCtx_1 \HSPar \HSCtx_2 \land
          \HSDom{\HSCtx_2} \uplus X \land
          \HSSplitCtx{\HSCtx_1}{X} = (\HSCtxCtx_1, \HSCtx_1') \\
      ((\HSCtx_1'' \HSPar \HSCtx_2'' \HSPar \HSCtxCtxHole), (\HSCtx_1' \HSPar \HSCtx_2'))
        & \parbox[t]{.55\textwidth}{$
          \HSCtx = \HSCtx_1 \HSPar \HSCtx_2 \land
          \HSSplitCtx{\HSCtx_1}{X} = (\HSCtxCtx_1, \HSCtx_1') \land
          \HSSplitCtx{\HSCtx_2}{X} = (\HSCtxCtx_2, \HSCtx_2') \land
          \HSParCtxCtx{\HSCtxCtx_1} = \HSCtx_1'' \land
          \HSParCtxCtx{\HSCtxCtx_2} = \HSCtx_2''
          $} \\
      ((\HSCtxCtxHole \HSSeq (\HSCtx_1'' \HSPar \HSCtx_2'')), (\HSCtx_1' \HSPar \HSCtx_2'))
        & \parbox[t]{.55\textwidth}{$
          \HSCtx = \HSCtx_1 \HSPar \HSCtx_2 \land
          \HSSplitCtx{\HSCtx_1}{X} = (\HSCtxCtx_1, \HSCtx_1') \land
          \HSSplitCtx{\HSCtx_2}{X} = (\HSCtxCtx_2, \HSCtx_2') \land
          \HSLeftCtxCtx{\HSCtxCtx_1} = \HSCtx_1'' \land
          \HSLeftCtxCtx{\HSCtxCtx_2} = \HSCtx_2''
          $} \\
      (((\HSCtx_1'' \HSPar \HSCtx_2'') \HSSeq \HSCtxCtxHole), (\HSCtx_1' \HSPar \HSCtx_2'))
        & \parbox[t]{.55\textwidth}{$
          \HSCtx = \HSCtx_1 \HSPar \HSCtx_2 \land
          \HSSplitCtx{\HSCtx_1}{X} = (\HSCtxCtx_1, \HSCtx_1') \land
          \HSSplitCtx{\HSCtx_2}{X} = (\HSCtxCtx_2, \HSCtx_2') \land
          \HSRightCtxCtx{\HSCtxCtx_1} = \HSCtx_1'' \land
          \HSRightCtxCtx{\HSCtxCtx_2} = \HSCtx_2''
          $} \\
      ((\HSCtx_r \HSSeq \HSCtxCtxHole \HSSeq \HSCtx_l), (\HSCtx_2' \HSSeq \HSCtx_1'))
        & \parbox[t]{.55\textwidth}{$
          \HSCtx = \HSCtx_1 \HSPar \HSCtx_2 \land
          \HSSplitCtx{\HSCtx_1}{X} = (\HSCtxCtx_1, \HSCtx_1') \land
          \HSSplitCtx{\HSCtx_2}{X} = (\HSCtxCtx_2, \HSCtx_2') \land
          \HSRightCtxCtx{\HSCtxCtx_1} = \HSCtx_r \land
          \HSLeftCtxCtx{\HSCtxCtx_2} = \HSCtx_l
          $} \\
      ((\HSCtx_r \HSSeq \HSCtxCtxHole \HSSeq \HSCtx_l), (\HSCtx_1' \HSSeq \HSCtx_2'))
        & \parbox[t]{.55\textwidth}{$
          \HSCtx = \HSCtx_1 \HSPar \HSCtx_2 \land
          \HSSplitCtx{\HSCtx_1}{X} = (\HSCtxCtx_1, \HSCtx_1') \land
          \HSSplitCtx{\HSCtx_2}{X} = (\HSCtxCtx_2, \HSCtx_2') \land
          \HSLeftCtxCtx{\HSCtxCtx_1} = \HSCtx_l \land
          \HSRightCtxCtx{\HSCtxCtx_2} = \HSCtx_r
          $} \\
      (((\HSCtx_{11}'' \HSPar \HSCtx_{21}'') \HSSeq \HSCtxCtxHole \HSSeq (\HSCtx_{12}'' \HSPar \HSCtx_{22}'')), (\HSCtx_1' \HSPar \HSCtx_2'))
        & \parbox[t]{.55\textwidth}{$
          \HSCtx = \HSCtx_1 \HSPar \HSCtx_2 \land
          \HSSplitCtx{\HSCtx_1}{X} = (\HSCtxCtx_1, \HSCtx_1') \land
          \HSSplitCtx{\HSCtx_2}{X} = (\HSCtxCtx_2, \HSCtx_2') \land
          \HSClosedCtxCtx{\HSCtxCtx_1} = (\HSCtx_{11}'', \HSCtx_{12}'') \land
          \HSClosedCtxCtx{\HSCtxCtx_2} = (\HSCtx_{21}'', \HSCtx_{22}'')
          $}
    \end{cases}
  \end{align*}
\end{definition}

\begin{figure}
  \centering
  \begin{mathpar}
    \HSRuleAbsUnr \and
    \HSRuleAbsLin \and
    \HSRuleAbsLeft \and
    \HSRuleAbsRight \and
    \HSRuleChkInf
  \end{mathpar}
  \caption{Algorithmic Typing (Checking Rules)}
  \label{fig:algorithmic-typing-check}
\end{figure}

\begin{figure}
  \centering
  \begin{mathpar}
    \HSRuleUnit \and
    \HSRuleNew \and
    \HSRuleOp \and
    \HSRuleSplit \and
    \HSRuleDrop \and
    \HSRuleVar \and
    \HSRuleAppUnr \and
    \HSRuleAppLin \and
    \HSRuleAppLeft \and
    \HSRuleAppRight
  \end{mathpar}
  \caption{Algorithmic Typing (Inference Rules, Part I)}
  \label{fig:algorithmic-typing-infer-I}
\end{figure}

\begin{figure}
  \centering
  \begin{mathpar}
    \HSRulePairLin \and
    \HSRulePairLeft \and
    \HSRuleLetPairLin \and
    \HSRuleLetPairLeft \and
    \HSRuleAnn
  \end{mathpar}
  \caption{Algorithmic Typing (Inference Rules, Part II)}
  \label{fig:algorithmic-typing-infer-II}
\end{figure}

Figure~\ref{fig:algorithmic-typing-check},
\ref{fig:algorithmic-typing-infer-I}, and
\ref{fig:algorithmic-typing-infer-II}
shows the rules for the algorithmic typing relation.


  \clearpage
  \section{Proof of Proposition~\ref{prop:type-system/graph-cmp}}\label{sec:graph-cmp}
  \begin{proposition}[Proposition 3.13 in the paper]\label{prop:graph-cmp}
    Contexts $\Gamma_{{\mathrm{1}}}$ and $\Gamma_{{\mathrm{2}}}$ are related by applying the monoid laws,
    commutativity of unordered composition, and demotion with
    unrestricted bindings, if and only if  $\Gamma_{{\mathrm{1}}}  \simeq  \Gamma_{{\mathrm{2}}}$.
\end{proposition}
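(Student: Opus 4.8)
The statement equates two relations on well-formed contexts: the congruence $\approx$ \emph{generated} by the monoid laws for ``$,$'' and ``$\parallel$'' with unit $\cdot$, commutativity of $\parallel$, and demotion $\mathcal{G}[x \mathord: T] \approx \mathcal{G}[\cdot] \parallel x \mathord: T$ for $\mathrm{unr}\,T$; and the semantic equivalence $\simeq$, where $\Gamma_1 \simeq \Gamma_2$ iff their interpretations have isomorphic graphs and equal unrestricted sets. The plan is to prove the two inclusions separately: soundness ($\approx\ \subseteq\ \simeq$) by checking the generators, and completeness ($\simeq\ \subseteq\ \approx$) through a normal-form argument whose core is the uniqueness of series--parallel decomposition.

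For soundness it suffices to show that $\simeq$ is a congruence for the two context formers and that each generating equation is valid under $\INTRP{\cdot}$. Congruence is immediate because join and union on graph representations respect graph isomorphism and $\cup$ on sets respects equality. The generator checks are direct computations from the definitions of join and union: $\INTRP{\cdot} = (0,\emptyset,\emptyset);\emptyset$ is neutral for both $\GJOIN{\mathfrak{G}_1}{\mathfrak{G}_2}$ and $\mathfrak{G}_1 \cup \mathfrak{G}_2$ since it contributes no vertices, no edges, and the empty set; associativity of ``$,$'' and ``$\parallel$'' follows from associativity of join and union (the block reindexings compose to a graph isomorphism); commutativity of $\parallel$ follows from the block-swap bijection on a disjoint union; and demotion holds because an unrestricted binding contributes the empty graph and a singleton set, so attaching it by join or by union yields the same interpretation. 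Hence every $\approx$-step preserves $\INTRP{\cdot}$ up to the equivalence, giving $\approx\ \subseteq\ \simeq$.

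For completeness I would introduce \emph{normal forms} and prove (i) every $\Gamma$ is $\approx$-equal to a normal form, and (ii) $\INTRP{\cdot}$ is injective on normal forms up to $\approx$. A normal form splits the unrestricted bindings from an ordered skeleton: applying demotion repeatedly floats every unrestricted binding to a top-level parallel position, so $\Gamma \approx \Gamma_{\mathrm{ord}} \parallel U$, where $\Gamma_{\mathrm{ord}}$ contains only ordered bindings and $U$ is a parallel block of unrestricted bindings whose underlying set is exactly the $\mathbb{S}$ component of $\INTRP{\Gamma}$. I would then normalise $\Gamma_{\mathrm{ord}}$, using the unit and associativity laws together with commutativity of $\parallel$, into a canonical series--parallel term that strictly alternates ``$,$''-nodes and ``$\parallel$''-nodes and lists the children of each $\parallel$-node in a fixed order. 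The crux is then the ordered part: since join preserves transitive closure (an internal path followed by a cross edge is again a cross edge, and symmetrically) and union keeps blocks incomparable, every interpretation $\mathfrak{G}$ is a transitively closed acyclic graph, i.e.\ a strict partial order, and by construction it is \emph{series--parallel}. The key lemma is the converse-uniqueness statement: the order-isomorphism class of such a $\mathfrak{G}$ determines the canonical series--parallel term of $\Gamma_{\mathrm{ord}}$ up to the reordering of $\parallel$-children already built into the normal form. I would obtain this from the classical unique-decomposition theorem for series--parallel orders: an order with at least two elements is either series-decomposable (a unique coarsest split into an initial and a final block with all forward edges present between them) or parallel-decomposable (a disjoint union of maximal incomparable suborders), exactly one case occurs, and the decomposition is unique. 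Matching the applicable case with the top constructor of the normal-form term and inducting on the number of vertices --- using that the vertex labelling pins the isomorphism down to the bindings --- shows that isomorphic $\mathfrak{G}$ force $\approx$-equal ordered normal forms. Together with $\mathbb{S}_1 = \mathbb{S}_2$ fixing $U$, this gives injectivity on normal forms, and transitivity of $\approx$ closes the argument.

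I expect the main obstacle to be exactly this uniqueness result and its interface with the syntax: proving (or importing) that the join/union fragment generates precisely the series--parallel orders, that their canonical decomposition is unique, and that this decomposition is reflected by an $\approx$-canonical term. A secondary subtlety is the treatment of unrestricted bindings: the set semantics is insensitive to multiplicity and placement, so reconciling a context with repeated unrestricted bindings against the set $\mathbb{S}$ needs care --- demotion floats them out, and, reading the statement modulo the \emph{set} of unrestricted bindings (equivalently, allowing contraction of unrestricted bindings, which their substructural reading sanctions), their contribution is captured exactly by $\mathbb{S}$.
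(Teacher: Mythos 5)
Your proposal is correct and follows the same two-phase strategy as the paper: soundness by checking each generator against the interpretation, and completeness by first floating all unrestricted bindings to a top-level parallel block (the paper's first observation as well) and then analysing the purely ordered remainder by induction on the number of vertices, using the mutually exclusive series/parallel decomposability of the interpretation. Where you differ is in how that decomposition is organised. The paper never invokes the classical theory of series--parallel orders; instead it defines a partial \emph{realization} map $\REP{\mathfrak{G}}$ from graph representations back to contexts, built from the \emph{minimum union cut} (an index with no crossing edges) or the \emph{minimum join cut} (an index across which all edges are present), and proves from scratch: that the two kinds of cut are mutually exclusive; that interpretations of contexts are topologically ordered, which is why realization behaves on the image of $\INTRP{\cdot}$ even though it misbehaves on arbitrary graphs; that $\REP{\INTRP{\Gamma}} \equiv \Gamma$; and that isomorphic interpretations have $\equiv$-related realizations --- with union decomposition into weakly connected components playing the role of your parallel case, and preservation of the minimum join cut under isomorphism playing the role of your series case. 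Your route imports the unique-decomposition theorem for series--parallel orders instead of re-proving it, which is shorter if the theorem may be cited; what the paper's bespoke cut machinery buys is a self-contained argument, and it also repairs a canonicity point your sketch glosses over: your ``unique coarsest split into an initial and a final block'' for the series case is not well-defined (a three-element chain $a<b<c$ admits the incomparable splits $\{a\}\,;\{b,c\}$ and $\{a,b\}\,;\{c\}$), which is exactly why the paper fixes the \emph{minimum} join cut and shows that this particular cut is preserved by isomorphism.

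One point deserves to be made explicit rather than left as a closing hedge: your generator set omits contraction of unrestricted bindings, and completeness genuinely fails without it, since $x \mathord: T \parallel x \mathord: T \simeq x \mathord: T$ (the set component of the interpretation is insensitive to multiplicity) yet the two contexts are not related by the unit, associativity, commutativity, and position-demotion laws alone. Your proposed fix --- ``allowing contraction of unrestricted bindings'' --- is precisely the paper's resolution: its formalization of the syntactic relation includes contraction of duplicated unrestricted bindings as an explicit rule alongside the position-demotion rule, and both are needed for the equivalence to hold as stated.
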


We start by formally defining a syntactic relation mentioned in the property as follows.

\begin{ynrules}
    \ynjudgment{$\Gamma_{{\mathrm{1}}}  \equiv  \Gamma_{{\mathrm{2}}}$}
    \yninfer[CIdL]{}{
        $ \cdot   \ottsym{,}  \Gamma  \equiv  \Gamma$
    }
    \yninfer[CIdR]{}{
        $\Gamma  \ottsym{,}   \cdot   \equiv  \Gamma$
    }
    \yninfer[CAssoc]{}{
        $\Gamma_{{\mathrm{1}}}  \ottsym{,}  \ottsym{(}  \Gamma_{{\mathrm{2}}}  \ottsym{,}  \Gamma_{{\mathrm{3}}}  \ottsym{)}  \equiv  \ottsym{(}  \Gamma_{{\mathrm{1}}}  \ottsym{,}  \Gamma_{{\mathrm{2}}}  \ottsym{)}  \ottsym{,}  \Gamma_{{\mathrm{3}}}$
    }
    \yninfer[PId]{}{
        $ \cdot   \parallel  \Gamma  \equiv  \Gamma$
    }
    \yninfer[PComm]{}{
        $\Gamma_{{\mathrm{1}}}  \parallel  \Gamma_{{\mathrm{2}}}  \equiv  \Gamma_{{\mathrm{2}}}  \parallel  \Gamma_{{\mathrm{1}}}$
    }
    \yninfer[PAssoc]{}{
        $\Gamma_{{\mathrm{1}}}  \parallel  \ottsym{(}  \Gamma_{{\mathrm{2}}}  \parallel  \Gamma_{{\mathrm{3}}}  \ottsym{)}  \equiv  \ottsym{(}  \Gamma_{{\mathrm{1}}}  \parallel  \Gamma_{{\mathrm{2}}}  \ottsym{)}  \parallel  \Gamma_{{\mathrm{3}}}$
    }
    \yninfer[UUnr]{
        $\ottkw{unr} \, \ottnt{T}$
    }{
        $\mathcal{G}  \ottsym{[}  \ottmv{x}  \mathord:  \ottnt{T}  \ottsym{]}  \equiv  \mathcal{G}  \ottsym{[}   \cdot   \ottsym{]}  \parallel  \ottmv{x}  \mathord:  \ottnt{T}$
    }
    \yninfer[UDemo]{
        $\ottkw{unr} \, \ottnt{T}$
    }{
        $\ottmv{x}  \mathord:  \ottnt{T}  \parallel  \ottmv{x}  \mathord:  \ottnt{T}  \equiv  \ottmv{x}  \mathord:  \ottnt{T}$
    }
    \yninfer[Ctx]{
        $\Gamma_{{\mathrm{1}}}  \equiv  \Gamma_{{\mathrm{2}}}$
    }{
        $\mathcal{G}  \ottsym{[}  \Gamma_{{\mathrm{1}}}  \ottsym{]}  \equiv  \mathcal{G}  \ottsym{[}  \Gamma_{{\mathrm{2}}}  \ottsym{]}$
    }
    \yninfer[Refl]{}{
        $\Gamma  \equiv  \Gamma$
    }
    \yninfer[Sym]{
        $\Gamma_{{\mathrm{2}}}  \equiv  \Gamma_{{\mathrm{1}}}$
    }{
        $\Gamma_{{\mathrm{1}}}  \equiv  \Gamma_{{\mathrm{2}}}$
    }
    \yninfer[Trans]{
        $\Gamma_{{\mathrm{1}}}  \equiv  \Gamma_{{\mathrm{2}}}$ \\
        $\Gamma_{{\mathrm{2}}}  \equiv  \Gamma_{{\mathrm{3}}}$
    }{
        $\Gamma_{{\mathrm{1}}}  \equiv  \Gamma_{{\mathrm{3}}}$
    }
\end{ynrules}

The first three rules determine the monoid lows for the ordered composition.
The next three rules determine the monoid lows and commutativity for the unordered composition.
\ruleref{UUnr} and \ruleref{UDemo} are about unrestricted bindings.
\ruleref{UUnr} characterizes embed treatment of unrestricted bindings; that is their position in a context is irrelevant.
\ruleref{UDemo} determines the demotion (contraction) of unrestricted bindings.
The last four rules give the congruence relation as usual.

Now we can re-state the proposition more formally as follows.

\begin{proposition}
    Let $\mathfrak{G}_{{\mathrm{1}}}; S_1 =  \lBrack \Gamma_{{\mathrm{1}}} \rBrack $ and $\mathfrak{G}_{{\mathrm{2}}}; S_2 =  \lBrack \Gamma_{{\mathrm{2}}} \rBrack $.
    Then, $\Gamma_{{\mathrm{1}}}  \equiv  \Gamma_{{\mathrm{2}}}$ iff $\mathfrak{G}_{{\mathrm{1}}}  \simeq  \mathfrak{G}_{{\mathrm{2}}}$ and $S_1 = S_2$.
\end{proposition}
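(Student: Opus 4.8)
The plan is to prove the two directions separately, viewing the interpretation $ \lBrack \cdot \rBrack $ as a homomorphism from the syntax of contexts onto labelled series--parallel posets paired with finite sets of unrestricted bindings.

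For the forward direction ($\Gamma_{{\mathrm{1}}} \equiv \Gamma_{{\mathrm{2}}} \Rightarrow \mathfrak{G}_{{\mathrm{1}}} \simeq \mathfrak{G}_{{\mathrm{2}}} \wedge S_1 = S_2$) I would induct on the derivation of $\Gamma_{{\mathrm{1}}} \equiv \Gamma_{{\mathrm{2}}}$. Each axiom reflects an algebraic law of join and union: \textsc{CIdL}, \textsc{CIdR}, \textsc{PId} are the unit laws, i.e.\ joining or uniting with the empty graph $(0,\emptyset,\emptyset)$; \textsc{CAssoc} is associativity of $\GJOIN{\mathfrak{G}_{{\mathrm{1}}}}{\mathfrak{G}_{{\mathrm{2}}}}$; \textsc{PAssoc} and \textsc{PComm} are associativity and commutativity of $\cup$ up to the evident vertex renumbering, which supplies exactly the bijection witnessing $\simeq$; and \textsc{UUnr}, \textsc{UDemo} only float or idempotently merge unrestricted bindings, which never become graph vertices and merely reshuffle or collapse the set component. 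The only non-routine case is the congruence rule \textsc{Ctx}, for which I would first establish a compositionality lemma: $ \lBrack \mathcal{G}  \ottsym{[}  \Gamma  \ottsym{]} \rBrack $ is obtained from $ \lBrack \Gamma \rBrack $ by a fixed sequence of joins and unions with the interpretations of the surrounding pattern, together with the fact that both $\simeq$ and set-equality are congruences for $\GJOIN{\cdot}{\cdot}$ and $\cup$. The rules \textsc{Refl}, \textsc{Sym}, \textsc{Trans} follow because $\simeq$ is an equivalence.

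For the converse I would first observe that every graph produced by $ \lBrack \cdot \rBrack $ has an acyclic, \emph{transitively closed} edge relation: this is an easy induction, since join adds precisely the complete bipartite edge set between the two blocks. Hence each $\mathfrak{G}$ is a labelled strict partial order, and moreover a \emph{series--parallel} (N-free) one, being built from singletons by $\GJOIN{\cdot}{\cdot}$ (series) and $\cup$ (parallel), so that $\simeq$ is isomorphism of labelled series--parallel posets. The reduction to a canonical form then proceeds in two stages. First, using \textsc{UUnr} to float every unrestricted binding to the top and \textsc{UDemo} to delete duplicates, I would show each $\Gamma$ is $\equiv$ to $\Gamma^{\mathrm{ord}} \parallel \ottnt{b}_1 \parallel \cdots \parallel \ottnt{b}_k$, where $\Gamma^{\mathrm{ord}}$ has only ordered bindings and $\{\ottnt{b}_1,\dots,\ottnt{b}_k\}$ enumerates the set component $\mathbb{S}$ without repetition; this cleanly separates the graph part (the interpretation of $\Gamma^{\mathrm{ord}}$, with empty set) from the set part. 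Second, and this is the crux, I would prove that two ordered-only contexts with isomorphic graphs are related by the monoid-plus-commutativity fragment of $\equiv$. This rests on uniqueness of the series--parallel decomposition: a non-singleton labelled SP-poset is \emph{either} a parallel composition of two or more series-indecomposable blocks \emph{or} a series composition of two or more parallel-indecomposable blocks, and these alternatives are mutually exclusive exactly because the poset is N-free. Inducting on the number of vertices shows the decomposition tree is unique up to reordering the children of parallel nodes, which is precisely what \textsc{PComm} and \textsc{PAssoc} provide, and I would check that every ordered context $\equiv$-reduces to the context read off this tree.

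Combining the pieces: if $\mathfrak{G}_{{\mathrm{1}}} \simeq \mathfrak{G}_{{\mathrm{2}}}$ and $S_1 = S_2$, the ordered normal forms have isomorphic graphs and hence are $\equiv$, while the equal sets make the parallel listings of unrestricted bindings $\equiv$-equal up to \textsc{PComm}; chaining through the two normal forms with \textsc{Trans} yields $\Gamma_{{\mathrm{1}}} \equiv \Gamma_{{\mathrm{2}}}$. The main obstacle is the series--parallel uniqueness step: the delicate points are the empty context acting as shared unit of \emph{both} compositions, singletons being simultaneously series- and parallel-prime, and verifying that the N-freeness guaranteed by the construction genuinely rules out the ambiguous ``N'' configuration that would otherwise break canonicity. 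Getting that induction and its case analysis exactly right, rather than the bookkeeping of extracting unrestricted bindings, is where the real work lies.
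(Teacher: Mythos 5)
Your proof is correct, and its skeleton is the same as the paper's: both first split out the unrestricted bindings (your stage~1 is exactly the paper's opening observation), then reduce the ordered part to a canonical decomposition of the graph, and finally chain $\Gamma_1$ and $\Gamma_2$ through their canonical forms with \textsc{Trans}. Where you differ is the machinery used for the crux. The paper never invokes poset theory: it stays with the concrete numbered graph representations, defines a partial \emph{realization} function $\REP{\mathfrak{G}}$ that recursively splits a graph at its minimum ``union cut'' (no edges cross an index threshold) or ``join cut'' (all pairs cross), proves $\REP{\lBrack\Gamma\rBrack}\equiv\Gamma$ using the fact that interpretations are \emph{topologically ordered} (every edge goes from a smaller to a larger index---note this is a different structural property than the transitive closure you rely on), and then shows isomorphic interpretations have $\equiv$-related realizations by complete induction on the vertex count, with weakly connected components playing the role of your parallel-indecomposable blocks and with extra lemmas showing cuts are preserved under isomorphism. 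Your route instead imports the uniqueness of the decomposition tree from the standard theory of labelled series--parallel (N-free) posets, which makes the argument numbering-independent and spares you those preservation lemmas; the price is establishing transitive closure and the SP-poset correspondence up front, which the paper's index-based development never needs. One small correction: the mutual exclusivity of the series and parallel alternatives does not come from N-freeness---it holds for \emph{every} poset with at least two elements, since a series split makes any two elements connected through the cross edges (this is precisely the paper's two-line lemma that union cuts and join cuts cannot coexist); N-freeness is what gives \emph{existence} of one of the two splits. That misattribution does not affect the soundness of your argument.
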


It is rather straightforward to show the only-if direction (left to right) by induction on the given derivation (in the cases \ruleref{UUnr} and \ruleref{Ctx} require other induction proofs though).
So, we concentrate on a proof of the if direction (right to left).

The first observation to show the property is that we can split unrestricted bindings out from a typing context by using \ruleref{UUnr}.
That is, for any $\Gamma$, there is a syntactic equivalent $\Gamma'$, namely $\Gamma  \equiv  \Gamma'$, of the form $\Gamma_{{\mathrm{0}}}  \parallel  \ottmv{x_{{\mathrm{1}}}}  \mathord:  \ottnt{T_{{\mathrm{1}}}}  \parallel  \dots  \parallel  \ottmv{x_{\ottmv{k}}}  \mathord:  \ottnt{T_{\ottmv{k}}}$ where $\Gamma_{{\mathrm{0}}}$  has no unrestricted bindings and all $\ottnt{T_{{\mathrm{1}}}} \dots \ottnt{T_{\ottmv{k}}}$ are unrestricted.
Using this observation, we can examine the restricted bindings and unrestricted bindings separately and see that the unrestricted part ($S_1 = S_2$) of semantic equivalence can be related syntactically via that form.

So, in the following, we discuss how to relate the restricted binding part ($\mathfrak{G}_{{\mathrm{1}}}  \simeq  \mathfrak{G}_{{\mathrm{2}}}$) of semantic equivalence syntactically.
Thus, from here, we only consider typing contexts not containing unrestricted bindings, that is $\Gamma$ such that $  \lBrack \Gamma \rBrack  \ottsym{=} \mathfrak{G} ; \emptyset$; and, for ease of writing, we abuse $ \lBrack \Gamma \rBrack $ to denote the graph part $\mathfrak{G}$.

The proof strategy is as follows.
Firstly, we define \emph{realization} of a graph representation $\mathfrak{G}$, denoted by $ \REP{ \mathfrak{G} } $, which is a partial function from graph representations to typing contexts providing a syntactic presentation of $\mathfrak{G}$.
Note that the realization is surely partial as some graph representations have no corresponding typing context.
For instance, no typing context is interpreted as the following graph.
\begin{center}
    \begin{tikzpicture}[every node/.style={node distance=1em}]
        \node (A) {$\ottmv{x_{{\mathrm{1}}}}  \mathord:  \ottnt{T_{{\mathrm{1}}}}$};
        \node [right=of A] (B) {$\ottmv{x_{{\mathrm{2}}}}  \mathord:  \ottnt{T_{{\mathrm{2}}}}$};
        \node [below=of A] (C) {$\ottmv{x_{{\mathrm{3}}}}  \mathord:  \ottnt{T_{{\mathrm{3}}}}$};
        \node [right=of C] (D) {$\ottmv{x_{{\mathrm{4}}}}  \mathord:  \ottnt{T_{{\mathrm{4}}}}$};
        \draw[->] (A) -- (B);
        \draw[->] (A) -- (D);
        \draw[->] (C) -- (D);
    \end{tikzpicture}
\end{center}
Even so, we will show the following proposition that says given realization suffices for interpretations of typing contexts, meaning that it deserves our purpose.

\begin{proposition}\label{prop:realize/correct}
    $ \REP{  \lBrack \Gamma \rBrack  }   \equiv  \Gamma$.
\end{proposition}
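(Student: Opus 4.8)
The plan is to prove $\REP{\lBrack \Gamma \rBrack} \equiv \Gamma$ by structural induction on $\Gamma$, recalling that we have already reduced to contexts with no unrestricted bindings, so that $\lBrack \Gamma \rBrack$ is a pure graph representation $\mathfrak{G}$ (the set part being $\emptyset$). The two base cases are immediate from the definitions: for $\Gamma =  \cdot $ we have $\lBrack  \cdot  \rBrack = (0,\emptyset,\emptyset)$ whose realization is $ \cdot $, and for a singleton ordered binding $\Gamma = \ottnt{b}$ we have $\lBrack \ottnt{b} \rBrack = (1,\{0\mapsto \ottnt{b}\},\emptyset)$ whose realization is $\ottnt{b}$; both conclusions follow by \ruleref{Refl}. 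The inductive cases are $\Gamma = \Gamma_1 , \Gamma_2$ and $\Gamma = \Gamma_1 \parallel \Gamma_2$, where by definition of the interpretation $\lBrack \Gamma_1 , \Gamma_2 \rBrack = \GJOIN{\mathfrak{G}_1}{\mathfrak{G}_2}$ and $\lBrack \Gamma_1 \parallel \Gamma_2 \rBrack = \mathfrak{G}_1 \cup \mathfrak{G}_2$ with $\mathfrak{G}_i = \lBrack \Gamma_i \rBrack$.

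To push the induction through, the key is to commute $\REP{\cdot}$ with the two graph operations up to $\equiv$. I would therefore isolate two compatibility lemmas: whenever $\REP{\mathfrak{G}_1}$ and $\REP{\mathfrak{G}_2}$ are defined,
\[
  \REP{\GJOIN{\mathfrak{G}_1}{\mathfrak{G}_2}} \equiv \REP{\mathfrak{G}_1} , \REP{\mathfrak{G}_2}
  \qquad\text{and}\qquad
  \REP{\mathfrak{G}_1 \cup \mathfrak{G}_2} \equiv \REP{\mathfrak{G}_1} \parallel \REP{\mathfrak{G}_2}.
\]
Granting these, the inductive step is routine. For the ordered case, $\REP{\lBrack \Gamma_1 , \Gamma_2 \rBrack} = \REP{\GJOIN{\mathfrak{G}_1}{\mathfrak{G}_2}} \equiv \REP{\mathfrak{G}_1} , \REP{\mathfrak{G}_2}$, and the induction hypotheses $\REP{\mathfrak{G}_i} \equiv \Gamma_i$, applied in each component via the congruence rule \ruleref{Ctx} (with context patterns $[] , \REP{\mathfrak{G}_2}$ and $\Gamma_1 , []$) and chained by \ruleref{Trans}, yield $\REP{\lBrack \Gamma_1 , \Gamma_2 \rBrack} \equiv \Gamma_1 , \Gamma_2$; the unordered case is symmetric, using $\parallel$.

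The hard part will be establishing the two compatibility lemmas, because the realization must decompose a graph according to its own canonical series–parallel decomposition, which need not agree with the top-level split exhibited by $\GJOIN{\mathfrak{G}_1}{\mathfrak{G}_2}$ or $\mathfrak{G}_1 \cup \mathfrak{G}_2$. For instance, if $\mathfrak{G}_1$ is itself a join $\GJOIN{\mathfrak{G}_a}{\mathfrak{G}_b}$, then associativity of the join makes $\GJOIN{\mathfrak{G}_1}{\mathfrak{G}_2}$ equal to $\GJOIN{\mathfrak{G}_a}{\GJOIN{\mathfrak{G}_b}{\mathfrak{G}_2}}$, and the realization may well cut at the $\mathfrak{G}_a$ boundary instead; reconciling the two outcomes requires replaying the monoid laws \ruleref{CAssoc}, \ruleref{CIdL}, \ruleref{CIdR} at the level of $\equiv$ (and, for the union case, \ruleref{PComm}, \ruleref{PAssoc}, \ruleref{PId}). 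I would handle this by a secondary induction on the number of vertices, proving the stronger statement that any two series–parallel decompositions of the same graph realize to $\equiv$-equivalent contexts — i.e.\ that $\REP{\cdot}$ is well defined up to $\equiv$ — and that a series (resp.\ parallel) cut corresponds, up to $\equiv$, to an ordered (resp.\ unordered) composition of the realizations of its two sides. The delicate point throughout is the absorption of empty sub-pieces (realizing to $ \cdot $) and singleton blocks via the unit laws, so that the canonical decomposition and the given one can be aligned; once this uniqueness-up-to-$\equiv$ of series–parallel decomposition is in hand, both compatibility lemmas, and hence the proposition, follow.
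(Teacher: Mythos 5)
Your proof skeleton coincides with the paper's: it, too, proceeds by structural induction on $\Gamma$ (after splitting off the unrestricted bindings), and the two compatibility lemmas you isolate are exactly the paper's \cref{prop:union-comm,prop:join-comm}, with the inductive step closed by \ruleref{Ctx} and \ruleref{Trans} just as you describe. The genuine divergence is in how those lemmas get discharged. You propose a vertex-count induction through a stronger statement --- that \emph{any} two series--parallel decompositions of a graph realize to $\equiv$-equivalent contexts. The paper's argument is leaner: it inducts on the number of union (resp.\ join) cuts of $\mathfrak{G}_1$, using that the boundary $n_1$ is always a cut of the composite (\cref{prop:union-pt,prop:join-pt}) and that any cut of the composite strictly below $n_1$ is already a cut of $\mathfrak{G}_1$. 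Hence the minimum cut chosen by $\REP{\cdot}$ is either the boundary itself (base case: $\mathfrak{G}_1$ has no cut, and the claim holds by definition of $\REP{\cdot}$) or $\mathfrak{G}_1$'s own minimum cut $n'$ (inductive step: $\REP{\mathfrak{G}_1 \cup \mathfrak{G}_2} = \REP{(\mathfrak{G}_1)_{<n'}} \parallel \REP{(\mathfrak{G}_1)_{\ge n'} \cup \mathfrak{G}_2}$, then apply the induction hypothesis and \ruleref{PAssoc}); no uniqueness-of-decomposition statement is ever needed. Your route should work, but it is substantially more machinery, and if you pursue it you must add one hypothesis you currently omit: the compatibility lemmas are stated and used in the paper for \emph{topologically ordered} graphs only, an invariant supplied by \cref{prop:intrp-ordered} and preserved under index cuts (\cref{prop:cut-closed-left,prop:cut-closed-right}). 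That invariant is what makes a cut faithfully reconstruct the graph as a union or join (\cref{prop:union-cut,prop:join-cut}); the paper's own two-vertex example with a backward edge shows that $\REP{\cdot}$ is not faithful without it, so any induction that recurses into the sub-graphs of a decomposition has to thread this condition through.
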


Lastly, we show the following proposition.

\begin{proposition}\label{prop:realize/equiv}
    If $ \lBrack \Gamma_{{\mathrm{1}}} \rBrack   \simeq   \lBrack \Gamma_{{\mathrm{2}}} \rBrack $, then $ \REP{  \lBrack \Gamma_{{\mathrm{1}}} \rBrack  }   \equiv   \REP{  \lBrack \Gamma_{{\mathrm{2}}} \rBrack  } $.
\end{proposition}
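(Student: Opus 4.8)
The plan is to prove the statement by well-founded induction on the number of vertices $|\mathfrak{G}|_{\bullet}$ of the graph representations, exploiting the fact that $\REP{\cdot}$ is essentially a series–parallel decomposition and that $\equiv$ axiomatizes exactly the associativity, commutativity and identity laws that govern such decompositions. First I would unfold the defining recursion of $\REP{\cdot}$ to expose its shape: on a graph $\mathfrak{G} = (n,v,E)$ the realization either terminates in a base case (the empty context $ \cdot $ for $n=0$, the single binding $v(0)$ for $n=1$) or decomposes $\mathfrak{G}$ into strictly smaller sub-graphs---produced by the restrictions $\mathfrak{G}_{<n}$, $\mathfrak{G}_{\ge n}$ and their connected-component analogues---and returns either a sequential composition $\REP{\mathfrak{G}_A}  \ottsym{,}  \REP{\mathfrak{G}_B}$ or a parallel composition $\REP{\mathfrak{G}_A}  \parallel  \REP{\mathfrak{G}_B}$. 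Since being a series–parallel DAG (the absence of the forbidden ``N''-subgraph exhibited in the excerpt) is preserved under isomorphism, $\REP{\cdot}$ is defined on $ \lBrack \Gamma_{{\mathrm{2}}} \rBrack $ whenever it is on $ \lBrack \Gamma_{{\mathrm{1}}} \rBrack $, so both sides of the desired $\equiv$ are well-defined throughout the induction.

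For the inductive step I would fix an isomorphism $f$ witnessing $\mathfrak{G}_{{\mathrm{1}}}  \simeq  \mathfrak{G}_{{\mathrm{2}}}$ and first observe that the \emph{mode} of the top-level split (sequential versus parallel) is an isomorphism invariant: it is determined by whether the underlying order is connected, which $f$ preserves. Hence $\REP{\cdot}$ takes the same branch on both graphs. In the sequential case the graph admits a cut separating a prefix layer from a suffix layer with the cross-edges forming the complete bipartite relation; because $f$ preserves the entire edge relation this cut is transported to a corresponding cut in $\mathfrak{G}_{{\mathrm{2}}}$, the restricted prefixes and suffixes are isomorphic via (a restriction of) $f$, and the induction hypothesis combined with the congruence rule \ruleref{Ctx} and the series laws \ruleref{CIdL}, \ruleref{CIdR}, \ruleref{CAssoc} closes the case even when the two realizations cut at different thresholds. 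In the parallel case $\mathfrak{G}_{{\mathrm{1}}}$ splits into comparability-connected components; $f$ maps components bijectively to components, so the two graphs have the same family of components up to componentwise isomorphism, and the induction hypothesis gives $\REP{\cdot}$-equality of corresponding components.

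The hard part will be the parallel case, because this is where $\REP{\cdot}$ makes genuinely numbering-dependent choices: even though $ \lBrack \Gamma_{{\mathrm{1}}} \rBrack $ and $ \lBrack \Gamma_{{\mathrm{2}}} \rBrack $ are isomorphic, the realization may group and order the independent components differently on the two graphs, so the two syntactic contexts need not coincide and all the reconciling work is pushed onto $\parallel$. To handle this cleanly I would factor out a lemma stating that the $\parallel$-composition of a fixed finite family of sub-contexts is determined up to $\equiv$ independently of its bracketing and order, proved by a separate induction on the number of components using \ruleref{PId}, \ruleref{PComm}, \ruleref{PAssoc} and \ruleref{Ctx}; the main induction then only has to match components along $f$ and invoke this lemma together with transitivity \ruleref{Trans}.

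A secondary technical point I expect to spend care on is the bookkeeping of the vertex renumbering performed by $\mathfrak{G}_{<n}$ and $\mathfrak{G}_{\ge n}$ (and by the component extraction). I would verify a small commutation fact---restricting $\mathfrak{G}_{{\mathrm{1}}}$ and then transporting along $f$ yields the same graph up to isomorphism as transporting first and then restricting---so that the induction hypothesis is always applied to sub-graphs that are genuinely isomorphic and not merely related by some ad hoc reindexing. With this in place the three cases assemble, via \ruleref{Refl}, \ruleref{Sym}, \ruleref{Trans} and \ruleref{Ctx}, into the conclusion $ \REP{  \lBrack \Gamma_{{\mathrm{1}}} \rBrack  }   \equiv   \REP{  \lBrack \Gamma_{{\mathrm{2}}} \rBrack  } $; combined with Proposition~\ref{prop:realize/correct} this is exactly what is needed to derive the ``if'' direction $\Gamma_{{\mathrm{1}}}  \equiv  \Gamma_{{\mathrm{2}}}$.
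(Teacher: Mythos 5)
Your proposal is correct and follows essentially the same route as the paper's proof: complete induction on the vertex count, with the weakly-connected case handled by transporting the minimum join cut along the isomorphism (the paper's join-cut lemma, whose correctness crucially rests on interpretations being topologically ordered, so the two minimum cuts in fact coincide and your ``different thresholds'' fallback never arises) and the disconnected case handled by matching weakly connected components bijectively and reconciling their order and bracketing with the $\parallel$ monoid laws. The only cosmetic difference is that definedness of both realizations should be obtained from Proposition~\ref{prop:realize/correct} (as you implicitly do at the end) rather than from isomorphism-invariance of series--parallel structure, since definedness of $\REP{\cdot}$ is a numbering-dependent notion that the paper never identifies with N-freeness.
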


Combining the propositions, we can see $\Gamma_{{\mathrm{1}}}  \equiv  \Gamma_{{\mathrm{2}}}$ if $ \lBrack \Gamma_{{\mathrm{1}}} \rBrack   \simeq   \lBrack \Gamma_{{\mathrm{2}}} \rBrack $ as $\Gamma_{{\mathrm{1}}}  \equiv   \REP{  \lBrack \Gamma_{{\mathrm{1}}} \rBrack  }   \equiv   \REP{  \lBrack \Gamma_{{\mathrm{2}}} \rBrack  }   \equiv  \Gamma_{{\mathrm{2}}}$.

We start by defining several \emph{cuts}.
Since the vertices of a graph representation are indexed by natural numbers, we can specify a cut by a natural number.
We formally define the sense by giving the pair of induced sub-graphs from a given natural number as follows.

\begin{definition}[Index cut]
    Given $\mathfrak{G} = (n, v, E)$ and $n'$ such that $0 < n' < n$, we define a pair of graph representations, denoted by $( \mathfrak{G} {}_{<  \ottnt{n'} } ,  \mathfrak{G} {}_{\ge  \ottnt{n'} } )$, as follows.
    \begin{align*}
         \mathfrak{G} {}_{<  \ottnt{n'} }  & = (n', v\vert_{\NAT_{< n'}}, \{(i, j) \in E \mid i, j < n' \})                              \\
         \mathfrak{G} {}_{\ge  \ottnt{n'} }  & = (n - n', i \mapsto v(i + n'), \{(i - n', j - n') \mid (i, j) \in E \wedge n' \le i, j \})
    \end{align*}
\end{definition}

Note that interpretations of typing contexts are closed under index cut.

\begin{lemma}\label{prop:cut-closed-left}
    For any index cut $(  \lBrack \Gamma \rBrack  {}_{<  \ottnt{n'} } ,   \lBrack \Gamma \rBrack  {}_{\ge  \ottnt{n'} } )$, there exist $\Gamma'$ such that $   \lBrack \Gamma \rBrack  {}_{<  \ottnt{n'} }  \ottsym{=}  \lBrack \Gamma' \rBrack  $.

    \proof By structural induction on $\Gamma$.
    \begin{itemize}
        \item ($ \Gamma \ottsym{=}  \cdot  $ or $ \Gamma \ottsym{=} \ottnt{b} $)
              Let $ \lBrack \Gamma \rBrack  = (n, v, E)$.
              In these cases, $n \le 1$.
              Thus, we have no index cut for $ \lBrack \Gamma \rBrack $.
              Consequently, this case is vacuously true.
        \item ($ \Gamma \ottsym{=} \Gamma_{{\mathrm{1}}}  \parallel  \Gamma_{{\mathrm{2}}} )$
              In this case, $  \lBrack \Gamma \rBrack  \ottsym{=}  \lBrack \Gamma_{{\mathrm{1}}} \rBrack   \cup   \lBrack \Gamma_{{\mathrm{2}}} \rBrack  $.
              Let $ \lBrack \Gamma_{{\mathrm{1}}} \rBrack  = (n_1, v_1, E_1)$ and $ \lBrack \Gamma_{{\mathrm{2}}} \rBrack  = (n_2, v_2, E_2)$.
              We consider the order relationship between $n'$ and $n_1$.
              \begin{itemize}
                  \item ($n' < n_1$)
                        In this case, $   \lBrack \Gamma \rBrack  {}_{<  \ottnt{n'} }  \ottsym{=}   \lBrack \Gamma_{{\mathrm{1}}} \rBrack  {}_{<  \ottnt{n'} }  $.
                        So, we have some $\Gamma'_{{\mathrm{1}}}$ such that $   \lBrack \Gamma_{{\mathrm{1}}} \rBrack  {}_{<  \ottnt{n'} }  \ottsym{=}  \lBrack \Gamma'_{{\mathrm{1}}} \rBrack  $.
                        We choose $\Gamma'_{{\mathrm{1}}}$ as $\Gamma'$ for the goal.
                  \item ($n' = n_1$)
                        In this case, $   \lBrack \Gamma \rBrack  {}_{<  \ottnt{n'} }  \ottsym{=}  \lBrack \Gamma_{{\mathrm{1}}} \rBrack  $.
                        So, choosing $\Gamma_{{\mathrm{1}}}$ as $\Gamma'$ suffices the goal.
                  \item ($n' > n_1$)
                        In this case, $   \lBrack \Gamma \rBrack  {}_{<  \ottnt{n'} }  \ottsym{=}  \lBrack \Gamma_{{\mathrm{1}}} \rBrack    \cup   \lBrack \Gamma_{{\mathrm{2}}} \rBrack _{<n' - n_1}$.
                        We have $\Gamma'_{{\mathrm{2}}}$ such that $ \lBrack \Gamma'_{{\mathrm{2}}} \rBrack  =  \lBrack \Gamma_{{\mathrm{2}}} \rBrack _{<n'-n_1}$ by IH.
                        We choose $\Gamma_{{\mathrm{1}}}  \parallel  \Gamma'_{{\mathrm{2}}}$ as $\Gamma'$ to show the goal.
                        In fact, $  \lBrack \Gamma_{{\mathrm{1}}}  \parallel  \Gamma'_{{\mathrm{2}}} \rBrack  \ottsym{=}  \lBrack \Gamma_{{\mathrm{1}}} \rBrack   \cup   \lBrack \Gamma'_{{\mathrm{2}}} \rBrack   =   \lBrack \Gamma \rBrack  {}_{<  \ottnt{n'} } $.
              \end{itemize}
        \item ($ \Gamma \ottsym{=} \Gamma_{{\mathrm{1}}}  \ottsym{,}  \Gamma_{{\mathrm{2}}} $) Similar to the case above.
              \qedhere
    \end{itemize}
\end{lemma}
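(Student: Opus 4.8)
The plan is to proceed by structural induction on $\Gamma$, exploiting the fact that in this part of the argument all contexts are free of unrestricted bindings, so that every binding contributes exactly one vertex and $\lBrack \Gamma \rBrack$ is a pure graph representation. The crucial bookkeeping fact I would rely on throughout is the indexing convention built into both join and union: in $\lBrack \Gamma_1 , \Gamma_2 \rBrack$ and $\lBrack \Gamma_1 \parallel \Gamma_2 \rBrack$ the vertices of $\Gamma_1$ occupy the initial segment $0,\dots,n_1-1$ and those of $\Gamma_2$ the segment $n_1,\dots,n_1+n_2-1$, where $n_1$ and $n_2$ are the vertex counts of $\lBrack \Gamma_1 \rBrack$ and $\lBrack \Gamma_2 \rBrack$. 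Because a lower index cut retains precisely an initial segment of the vertices, it interacts cleanly with this convention.

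For the base cases $\Gamma = \cdot$ and $\Gamma = b$ the graph has at most one vertex, so no cut index satisfies $0 < n' < n$ and the claim holds vacuously. In each inductive case I would split on the position of $n'$ relative to $n_1$. When $n' < n_1$ the retained vertices lie entirely inside $\Gamma_1$ and no edge leaving $\Gamma_1$ survives, so $\lBrack \Gamma \rBrack_{<n'} = \lBrack \Gamma_1 \rBrack_{<n'}$ and the induction hypothesis applied to $\Gamma_1$ furnishes the desired $\Gamma'$. When $n' = n_1$ the cut is exactly $\lBrack \Gamma_1 \rBrack$, so $\Gamma' = \Gamma_1$ works. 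The interesting subcase is $n' > n_1$: here all of $\Gamma_1$ is retained together with an initial segment of $\Gamma_2$ of length $n' - n_1$. I would apply the induction hypothesis to $\Gamma_2$ at the shifted cut $n' - n_1$ to obtain $\Gamma_2'$ with $\lBrack \Gamma_2' \rBrack = \lBrack \Gamma_2 \rBrack_{<n'-n_1}$, and then take $\Gamma' = \Gamma_1 \parallel \Gamma_2'$ in the union case and $\Gamma' = \Gamma_1 , \Gamma_2'$ in the join case.

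The step that needs real care, and which I expect to be the main obstacle, is verifying that for $n' > n_1$ the induced subgraph is exactly the union (resp.\ join) of $\lBrack \Gamma_1 \rBrack$ and $\lBrack \Gamma_2 \rBrack_{<n'-n_1}$. For the union this is immediate, since union introduces no cross edges and restriction only drops the internal $\Gamma_2$-edges that leave the retained segment. For the join one must check that the complete bipartite cross-edge set $E'$ behaves correctly: an edge $(n, n_1+m) \in E'$ with $n < n_1$ survives iff $n_1 + m < n'$, that is iff $m < n' - n_1$, so the surviving cross edges form precisely the complete bipartite graph from the $n_1$ vertices of $\Gamma_1$ to the first $n' - n_1$ vertices of $\Gamma_2$ --- exactly the cross-edge set produced by $\GJOIN{\lBrack \Gamma_1 \rBrack}{\lBrack \Gamma_2 \rBrack_{<n'-n_1}}$. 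Once this edge-set identity is confirmed, together with the analogous and easier checks on the internal edges $E_1$ and the restricted $E_2'$, equality of the graph representations follows and the induction closes. A symmetric companion statement for the upper cut $\lBrack \Gamma \rBrack_{\ge n'}$ would be handled the same way, the only new point being that the upper segment of a join still carries all cross edges emanating from the truncated $\Gamma_1$ part.
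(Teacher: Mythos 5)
Your proposal is correct and follows essentially the same route as the paper's proof: structural induction on $\Gamma$, the same three-way case split on $n'$ versus $n_1$, and the same witnesses ($\Gamma_1'$, $\Gamma_1$, and $\Gamma_1 \parallel \Gamma_2'$ or $\Gamma_1 , \Gamma_2'$ respectively). The only difference is that you spell out the cross-edge bookkeeping for the join case, which the paper dismisses with ``similar to the case above''; your verification of that identity is accurate.
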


\begin{lemma}\label{prop:cut-closed-right}
    For any index cut $(  \lBrack \Gamma \rBrack  {}_{<  \ottnt{n'} } ,   \lBrack \Gamma \rBrack  {}_{\ge  \ottnt{n'} } )$, there exist $\Gamma'$ such that $   \lBrack \Gamma \rBrack  {}_{\ge  \ottnt{n'} }  \ottsym{=}  \lBrack \Gamma' \rBrack  $.

    \proof Similar to the proof or \cref{prop:cut-closed-left}. \qedhere
\end{lemma}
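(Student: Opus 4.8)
The plan is to prove this by structural induction on $\Gamma$, mirroring the argument for \cref{prop:cut-closed-left} but keeping the upper (rather than lower) block of the cut. As there, we work with contexts free of unrestricted bindings, so each interpretation is a bare graph representation. In the base cases $\Gamma =  \cdot $ and $\Gamma = \ottnt{b}$ the graph has at most one vertex, so no index cut with $0 < n' < n$ exists and the claim holds vacuously. The two inductive cases are the composition formers, and I would dispatch each by comparing the cut point $n'$ with the size $n_1$ of $ \lBrack \Gamma_{{\mathrm{1}}} \rBrack $.

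For the parallel case $\Gamma = \Gamma_{{\mathrm{1}}}  \parallel  \Gamma_{{\mathrm{2}}}$, recall that $ \lBrack \Gamma \rBrack  =  \lBrack \Gamma_{{\mathrm{1}}} \rBrack  \cup  \lBrack \Gamma_{{\mathrm{2}}} \rBrack $ and that the union introduces no edges between the two blocks. Hence the right cut splits cleanly into three subcases. When $n' > n_1$ the surviving vertices all lie in the $\Gamma_{{\mathrm{2}}}$ block, so $ \lBrack \Gamma \rBrack {}_{\ge n'} =  \lBrack \Gamma_{{\mathrm{2}}} \rBrack {}_{\ge (n' - n_1)}$ and the induction hypothesis on $\Gamma_{{\mathrm{2}}}$ supplies the witness. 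When $n' = n_1$ we have $ \lBrack \Gamma \rBrack {}_{\ge n'} =  \lBrack \Gamma_{{\mathrm{2}}} \rBrack $, so $\Gamma_{{\mathrm{2}}}$ itself works. When $n' < n_1$ the cut keeps the upper part of the $\Gamma_{{\mathrm{1}}}$ block together with all of $\Gamma_{{\mathrm{2}}}$, and since there are no cross edges this is $ \lBrack \Gamma_{{\mathrm{1}}} \rBrack {}_{\ge n'} \cup  \lBrack \Gamma_{{\mathrm{2}}} \rBrack $; applying the induction hypothesis to $\Gamma_{{\mathrm{1}}}$ to obtain some $\Gamma_{{\mathrm{1}}}'$ with $ \lBrack \Gamma_{{\mathrm{1}}}' \rBrack  =  \lBrack \Gamma_{{\mathrm{1}}} \rBrack {}_{\ge n'}$, I would take $\Gamma' = \Gamma_{{\mathrm{1}}}'  \parallel  \Gamma_{{\mathrm{2}}}$.

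The sequential case $\Gamma = \Gamma_{{\mathrm{1}}}  \ottsym{,}  \Gamma_{{\mathrm{2}}}$ is where the real work lies, because now $ \lBrack \Gamma \rBrack  = \GJOIN{ \lBrack \Gamma_{{\mathrm{1}}} \rBrack }{ \lBrack \Gamma_{{\mathrm{2}}} \rBrack }$ carries the full set of cross edges $E'$ from every vertex of the first block to every vertex of the second. For $n' \ge n_1$ these cross edges all have sources below the cut and therefore disappear, reducing the situation to $ \lBrack \Gamma_{{\mathrm{2}}} \rBrack $ (when $n' = n_1$) or $ \lBrack \Gamma_{{\mathrm{2}}} \rBrack {}_{\ge (n' - n_1)}$ (when $n' > n_1$), handled exactly as in the parallel case. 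The delicate subcase is $n' < n_1$: here every cross-edge target, lying in the $\Gamma_{{\mathrm{2}}}$ block, survives, while a cross-edge source survives precisely when it is $\ge n'$. I expect the main obstacle to be verifying that this restriction of the complete bipartite edge set again yields the complete bipartite edge set from the surviving upper part of $ \lBrack \Gamma_{{\mathrm{1}}} \rBrack $ to all of $ \lBrack \Gamma_{{\mathrm{2}}} \rBrack $ --- that is, that $ \lBrack \Gamma \rBrack {}_{\ge n'}$ equals $\GJOIN{ \lBrack \Gamma_{{\mathrm{1}}} \rBrack {}_{\ge n'}}{ \lBrack \Gamma_{{\mathrm{2}}} \rBrack }$ after reindexing by $n'$. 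Once this edge-bookkeeping identity is in place, the induction hypothesis on $\Gamma_{{\mathrm{1}}}$ gives $\Gamma_{{\mathrm{1}}}'$ with $ \lBrack \Gamma_{{\mathrm{1}}}' \rBrack  =  \lBrack \Gamma_{{\mathrm{1}}} \rBrack {}_{\ge n'}$, and I would conclude with $\Gamma' = \Gamma_{{\mathrm{1}}}'  \ottsym{,}  \Gamma_{{\mathrm{2}}}$.
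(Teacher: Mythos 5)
Your proposal is correct and matches the paper's intent exactly: the paper proves this lemma by the same structural induction used for \cref{prop:cut-closed-left}, dispatching the composition cases by comparing the cut point $n'$ with the order $n_1$ of $\lBrack \Gamma_{{\mathrm{1}}} \rBrack$, which is precisely your case analysis. Your verification that, in the sequential case with $n' < n_1$, the surviving cross edges again form the complete bipartite set---so that $\lBrack \Gamma \rBrack {}_{\ge n'} = \GJOIN{\lBrack \Gamma_{{\mathrm{1}}} \rBrack {}_{\ge n'}}{\lBrack \Gamma_{{\mathrm{2}}} \rBrack}$---is exactly the bookkeeping the paper leaves implicit under ``similar.''
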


We define the following characteristic cuts to define the realization.

\begin{definition}[Union cut]
    Given $\mathfrak{G} = (n, v, E)$, we call a natural number $n'$ \emph{union cut} iff
    \begin{itemize}
        \item $0 < n' < n$; and
        \item $(i, j) \notin E$ for any $0 \le i < n'$ and $n' \le j < n$.
    \end{itemize}
\end{definition}

\begin{definition}[Join cut]
    Given $\mathfrak{G} = (n, v, E)$, we call a natural number $n'$ \emph{join cut} iff
    \begin{itemize}
        \item $0 < n' < n$; and
        \item $(i, j) \in E$ for any $0 \le i < n'$ and $n' \le j < n$.
    \end{itemize}
\end{definition}

Union and join cut correspond to the connecting point of the union and join of graph representations, respectively, as we state as follows.

\begin{lemma}\label{prop:union-pt}
    Suppose $\mathfrak{G}_{{\mathrm{1}}} = (n_1, v_1, E_1)$ and $\mathfrak{G}_{{\mathrm{2}}} = (n_2, v_2, E_2)$ where $n_1, n_2 > 0$.
    Then, $n_1$ is a union cut of $\mathfrak{G}_{{\mathrm{1}}}  \cup  \mathfrak{G}_{{\mathrm{2}}}$, and $  \ottsym{(}  \mathfrak{G}_{{\mathrm{1}}}  \cup  \mathfrak{G}_{{\mathrm{2}}}  \ottsym{)} {}_{<  \ottnt{n_{{\mathrm{1}}}} }  \ottsym{=} \mathfrak{G}_{{\mathrm{1}}} $, $  \ottsym{(}  \mathfrak{G}_{{\mathrm{1}}}  \cup  \mathfrak{G}_{{\mathrm{2}}}  \ottsym{)} {}_{\ge  \ottnt{n_{{\mathrm{1}}}} }  \ottsym{=} \mathfrak{G}_{{\mathrm{2}}} $.
    \proof By definition. \qedhere
\end{lemma}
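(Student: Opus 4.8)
The plan is to prove all three claims by directly unfolding the definitions of union (\Cref{fig:language:graph-join-union}), union cut, and index cut, and then comparing the three components --- vertex count, vertex labeling, and edge relation --- of the graph representations on each side of each equation. Writing $\mathfrak{G}_{{\mathrm{1}}} \cup \mathfrak{G}_{{\mathrm{2}}} = (n_1 + n_2, v, E_1 \cup E_2')$ with $E_2' = \{(n_1 + n, n_1 + m) \mid (n,m) \in E_2\}$, the structural observation driving every step is that the two edge sets occupy disjoint index ranges: since $E_1$ is a relation on $\NAT_{<n_1}$, every edge of $E_1$ has both endpoints below $n_1$, whereas every edge of $E_2'$ has both endpoints in $[n_1, n_1 + n_2)$. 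This clean separation at $n_1$ is exactly what makes the cut work.

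First I would establish that $n_1$ is a union cut. The bound $0 < n_1 < n_1 + n_2$ is immediate from $n_1, n_2 > 0$. For the cross-edge condition, I must check that no edge of $E_1 \cup E_2'$ runs from some $i < n_1$ to some $j \ge n_1$: edges of $E_1$ have $j < n_1$ and so cannot reach the upper range, while edges of $E_2'$ have $i \ge n_1$ and so cannot start in the lower range. Hence no such cross edge exists.

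For the two recovery equations I would unfold the index-cut definition at $n_1$ and verify each component in turn. For $(\mathfrak{G}_{{\mathrm{1}}} \cup \mathfrak{G}_{{\mathrm{2}}})_{<n_1}$ the vertex count is $n_1$ by definition, the labeling is $v$ restricted to $\NAT_{<n_1}$ where it agrees with $v_1$, and the retained edges are those of $E_1 \cup E_2'$ with both endpoints below $n_1$ --- which by the disjointness observation is precisely $E_1$, giving $\mathfrak{G}_{{\mathrm{1}}}$. Symmetrically, for $(\mathfrak{G}_{{\mathrm{1}}} \cup \mathfrak{G}_{{\mathrm{2}}})_{\ge n_1}$ the vertex count is $(n_1 + n_2) - n_1 = n_2$, the labeling $i \mapsto v(i + n_1)$ equals $v_2$ because $v(i + n_1) = v_2(i)$ for $0 \le i < n_2$, and the retained-and-shifted edges are those of $E_1 \cup E_2'$ with both endpoints at least $n_1$ --- which are exactly the edges of $E_2'$, each of which the $-n_1$ translation sends from $(n_1 + n, n_1 + m)$ back to $(n, m)$, recovering $E_2$ and hence $\mathfrak{G}_{{\mathrm{2}}}$.

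The only genuine care required is the index bookkeeping: confirming that the $-n_1$ translation built into the definition of $\mathfrak{G}_{\ge n_1}$ exactly inverts the $+n_1$ offset introduced by the union, and that the filtering conditions ($i,j < n_1$ versus $n_1 \le i,j$) partition the combined edge set cleanly along the two summands with no leftover or double-counted edges. There is no conceptual obstacle here --- the statement holds essentially by construction, which is what the author's terse ``by definition'' signals; the value of spelling it out is merely to make the index arithmetic auditable.
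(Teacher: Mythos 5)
Your proposal is correct and is exactly the argument the paper compresses into ``by definition'': unfolding the definitions of union, union cut, and index cut, and using the fact that $E_1$ and the shifted $E_2'$ occupy disjoint index ranges separated at $n_1$. The index bookkeeping you spell out (restriction recovering $v_1$ and $E_1$, the $-n_1$ shift inverting the $+n_1$ offset to recover $v_2$ and $E_2$) is precisely what the paper leaves implicit.
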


\begin{lemma}\label{prop:join-pt}
    Suppose $\mathfrak{G}_{{\mathrm{1}}} = (n_1, v_1, E_1)$ and $\mathfrak{G}_{{\mathrm{2}}} = (n_2, v_2, E_2)$ where $n_1, n_2 > 0$.
    Then $n_1$ is a join cut of $ \GJOIN{ \mathfrak{G}_{{\mathrm{1}}} }{ \mathfrak{G}_{{\mathrm{2}}} } $, and $  \ottsym{(}   \GJOIN{ \mathfrak{G}_{{\mathrm{1}}} }{ \mathfrak{G}_{{\mathrm{2}}} }   \ottsym{)} {}_{<  \ottnt{n_{{\mathrm{1}}}} }  \ottsym{=} \mathfrak{G}_{{\mathrm{1}}} $, $  \ottsym{(}   \GJOIN{ \mathfrak{G}_{{\mathrm{1}}} }{ \mathfrak{G}_{{\mathrm{2}}} }   \ottsym{)} {}_{\ge  \ottnt{n_{{\mathrm{1}}}} }  \ottsym{=} \mathfrak{G}_{{\mathrm{2}}} $.
    \proof By definition. \qedhere
\end{lemma}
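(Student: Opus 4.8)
The plan is to prove all three claims by directly unfolding the definition of the join $\GJOIN{\mathfrak{G}_{{\mathrm{1}}}}{\mathfrak{G}_{{\mathrm{2}}}}$ from \Cref{fig:language:graph-join-union} and comparing it componentwise against the definitions of join cut and of the index cut. Write $\GJOIN{\mathfrak{G}_{{\mathrm{1}}}}{\mathfrak{G}_{{\mathrm{2}}}} = (n_1 + n_2, v, E)$ with $E = E_1 \cup E_2' \cup E'$, where $E_2' = \{(n_1 + n, n_1 + m) \mid (n,m) \in E_2\}$ and $E'=\{(n, n_1+m) \mid 0 \le n < n_1,\ 0 \le m < n_2\}$ are as in the figure. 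The structural fact driving everything is that $E'$ consists of exactly the edges running from the lower block $\{0,\dots,n_1-1\}$ into the upper block $\{n_1,\dots,n_1+n_2-1\}$, and contains all of them, while $E_1$ stays internal to the lower block and $E_2'$ internal to the upper block.

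First I would check that $n_1$ is a join cut of $\GJOIN{\mathfrak{G}_{{\mathrm{1}}}}{\mathfrak{G}_{{\mathrm{2}}}}$. The bound $0 < n_1 < n_1 + n_2$ is immediate from the hypotheses $n_1, n_2 > 0$. For the edge condition, fix any $i$ with $0 \le i < n_1$ and any $j$ with $n_1 \le j < n_1 + n_2$; writing $j = n_1 + m$ with $0 \le m < n_2$, the pair $(i, n_1 + m)$ lies in $E'$ by definition, hence in $E$, which is precisely what the join-cut condition demands.

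Next I would verify $ \ottsym{(} \GJOIN{\mathfrak{G}_{{\mathrm{1}}}}{\mathfrak{G}_{{\mathrm{2}}}} \ottsym{)} {}_{< \ottnt{n_{{\mathrm{1}}}}} = \mathfrak{G}_{{\mathrm{1}}}$ and $ \ottsym{(} \GJOIN{\mathfrak{G}_{{\mathrm{1}}}}{\mathfrak{G}_{{\mathrm{2}}}} \ottsym{)} {}_{\ge \ottnt{n_{{\mathrm{1}}}}} = \mathfrak{G}_{{\mathrm{2}}}$ by matching vertex counts, labelings, and edge sets. The vertex counts ($n_1$ and $(n_1+n_2)-n_1 = n_2$) and the labelings ($v$ restricted to $\NAT_{<n_1}$ equals $v_1$, and $i \mapsto v(i+n_1)$ equals $i \mapsto v_2(i)$) fall out directly from the case split defining $v$. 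The only genuine content is the edge sets. For the lower cut I would intersect $E$ with $\{(i,j) \mid i,j < n_1\}$: the family $E_1$ survives untouched, whereas every edge of $E_2'$ and of $E'$ has target index $\ge n_1$ and is filtered out, leaving exactly $E_1$. For the upper cut I would keep the edges of $E$ with both endpoints $\ge n_1$ and then subtract $n_1$ from each coordinate: $E_1$ is discarded (both endpoints $< n_1$), $E'$ is discarded (source $< n_1$), and $E_2'$ survives, with the shift sending $(n_1+n,\,n_1+m)$ back to $(n,m)$ and thereby recovering exactly $E_2$.

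The main obstacle is purely bookkeeping: keeping the three edge families $E_1, E_2', E'$ and the two index ranges straight, so that one confirms each cross edge in $E'$ is rejected by \emph{both} cuts (it is never internal to either block), while $E_2'$ reappears only after the $-n_1$ reindexing. There is no real mathematical difficulty here, which is exactly why the statement can be discharged ``by definition''; the plan simply makes that unfolding explicit, appealing along the way to \Cref{prop:cut-closed-left,prop:cut-closed-right} only insofar as they guarantee that the cuts are themselves interpretations of contexts.
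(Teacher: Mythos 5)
Your proposal is correct and is exactly the explicit unfolding of what the paper's one-line proof (``By definition'') leaves implicit: the cross-edge set $E'$ witnesses the join-cut condition, and the two index cuts filter $E_1 \cup E_2' \cup E'$ down to $E_1$ and (after reindexing) $E_2$ respectively. The only superfluous element is your closing appeal to \Cref{prop:cut-closed-left,prop:cut-closed-right} --- this lemma concerns bare graph representations, not interpretations of typing contexts, so those results play no role here.
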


We put the fact union cuts of a graph representation do not influence each other.
A similar property holds for join cuts.
Such properties are handy for doing mathematical induction proof for the number of union and join cuts.
Note that, however, a union cut and a join cut will influence each other; that is a join cut may spawn in a sub-graph, obtained by a union cut.

\begin{lemma}
    Suppose $\mathfrak{G}' = (n', v', E')$ and $n$ is a union cut of $\mathfrak{G}$.
    Then,
    \begin{itemize}
        \item $n$ is a union cut of $\mathfrak{G}  \cup  \mathfrak{G}'$; and
        \item $n + n'$ is a union cut of $\mathfrak{G}'  \cup  \mathfrak{G}$.
    \end{itemize}
    \proof By definition. \qedhere
\end{lemma}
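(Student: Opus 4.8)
The plan is to prove the statement by directly unfolding the definition of the union of graph representations and the definition of a union cut, and then checking the two defining conditions of a union cut for each of the two combined graphs. Write $\mathfrak{G} = (n_0, v_0, E_0)$. The hypothesis that $n$ is a union cut of $\mathfrak{G}$ unpacks to $0 < n < n_0$ together with the key edge condition that no pair $(i,j) \in E_0$ has $i \in [0, n)$ and $j \in [n, n_0)$.

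For the first bullet, I would expand $\mathfrak{G} \cup \mathfrak{G}' = (n_0 + n', v, E_0 \cup \widehat{E'})$, where $\widehat{E'} = \{(n_0 + i, n_0 + j) \mid (i,j) \in E'\}$ is the shifted copy of the edges of $\mathfrak{G}'$. The range condition $0 < n < n_0 + n'$ is immediate from $0 < n < n_0$. For the edge condition I would note that every edge in $\widehat{E'}$ has source index at least $n_0 > n$, so it cannot originate in $[0, n)$; and every edge of $E_0$ that originates in $[0,n)$ has, by hypothesis, target strictly below $n$, hence certainly not in $[n, n_0 + n')$. So no edge crosses the cut at $n$, and $n$ is a union cut of $\mathfrak{G} \cup \mathfrak{G}'$.

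For the second bullet the only difference is index bookkeeping, since now the edges of $\mathfrak{G}$ are the ones that get shifted: $\mathfrak{G}' \cup \mathfrak{G} = (n' + n_0, v, E' \cup \widehat{E_0})$ with $\widehat{E_0} = \{(n' + i, n' + j) \mid (i,j) \in E_0\}$, and the original cut position $n$ is displaced to $n + n'$. The range condition $0 < n + n' < n' + n_0$ again follows from $0 < n < n_0$. For the edge condition, every edge of $E'$ has both endpoints below $n' \le n + n'$ and so lies entirely in the left block; and a shifted edge $(n' + i, n' + j)$ crosses the cut at $n + n'$ precisely when $i < n$ and $j \ge n$, i.e.\ precisely when $(i,j)$ is an $E_0$-edge from $[0,n)$ into $[n, n_0)$---which the union-cut hypothesis on $\mathfrak{G}$ rules out.

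The one observation carrying the whole argument, and the only spot where I would slow down, is that the union constructor adds no edges between its two summands: its edge set is simply $E_0 \cup \widehat{E'}$, with no counterpart of the join's connecting block $E'$ from \Cref{fig:language:graph-join-union}. Hence any edge that could cross a proposed cut must be contained in a single summand, and both claims collapse to the original union-cut hypothesis on $\mathfrak{G}$ together with elementary index arithmetic. No induction on the structure of $\mathfrak{G}$ or $\mathfrak{G}'$ is required, which is exactly why the paper can dispatch this lemma with ``by definition.''
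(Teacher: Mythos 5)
Your proof is correct and is exactly the unfolding that the paper's one-line justification ``by definition'' intends: since $\mathfrak{G}_1 \cup \mathfrak{G}_2$ has edge set $E_1 \cup E_2'$ with no connecting edges between the two summands, every potential cut-crossing edge lies in a single summand, and both claims reduce to the union-cut hypothesis on $\mathfrak{G}$ plus index arithmetic. Your range checks and the case analysis on $E_0$ versus the shifted edge set are all accurate, so nothing is missing.
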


\begin{lemma}
    Suppose $n_1 < n_2$ are union cuts of $\mathfrak{G} = (n, v, E)$.
    Then,
    \begin{itemize}
        \item $n_1$ is a union cut of $ \mathfrak{G} {}_{<  \ottnt{n_{{\mathrm{2}}}} } $; and
        \item $n_2 - n_1$ is a union cut of $ \mathfrak{G} {}_{\ge  \ottnt{n_{{\mathrm{1}}}} } $.
    \end{itemize}
    \proof By definition. \qedhere
\end{lemma}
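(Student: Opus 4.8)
The plan is to prove both claims by directly unfolding the definitions of \emph{index cut} and \emph{union cut} and then chasing indices; no induction is needed, since the statement is purely combinatorial and local to the edge relation $E$. Throughout I write $\mathfrak{G} = (n, v, E)$ and recall that a number $n'$ is a union cut of $\mathfrak{G}$ exactly when $0 < n' < n$ and $E$ contains no edge $(i,j)$ with $i < n' \le j < n$; that is, $n'$ witnesses that the vertices split into a left block $\NAT_{<n'}$ and a right block with no edge crossing from left to right.

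For the first claim I would show that $n_1$ is a union cut of $\mathfrak{G}_{<n_2}$. The boundary condition $0 < n_1 < n_2$ is immediate: $n_1 > 0$ because $n_1$ is a union cut of $\mathfrak{G}$, and $n_1 < n_2$ is a hypothesis. For the non-crossing condition, note that the edge set of $\mathfrak{G}_{<n_2}$ is the restriction $\{(i,j) \in E \mid i,j < n_2\} \subseteq E$. Any edge crossing $n_1$ inside $\mathfrak{G}_{<n_2}$ would be some $(i,j) \in E$ with $i < n_1 \le j < n_2$. Since $n_2 < n$ (as $n_2$ is a union cut), such a $j$ satisfies $n_1 \le j < n$, so $(i,j)$ would be a left-to-right crossing edge for $n_1$ in $\mathfrak{G}$, which is forbidden. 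Hence no such edge exists, and restriction only removes edges, so none is introduced either.

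For the second claim I would show that $n_2 - n_1$ is a union cut of $\mathfrak{G}_{\ge n_1}$. The key point is that $\mathfrak{G}_{\ge n_1}$ relabels vertices by subtracting $n_1$, with edge set $\{(i-n_1, j-n_1) \mid (i,j) \in E,\ n_1 \le i, j\}$ over $\NAT_{<n-n_1}$. The boundary condition $0 < n_2 - n_1 < n - n_1$ follows from $n_1 < n_2 < n$. For non-crossing, suppose toward a contradiction that $(i', j')$ is an edge of $\mathfrak{G}_{\ge n_1}$ with $i' < n_2 - n_1 \le j' < n - n_1$. Lifting back, $(i'+n_1, j'+n_1) \in E$ with both endpoints at least $n_1$, and the bounds translate to $n_1 \le i'+n_1 < n_2$ and $n_2 \le j'+n_1 < n$. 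Thus $(i'+n_1, j'+n_1)$ is a left-to-right crossing edge for the cut $n_2$ in $\mathfrak{G}$, contradicting that $n_2$ is a union cut. This establishes the claim.

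The argument is entirely routine; the only place demanding care is the index translation by $n_1$ in the second part, where one must verify that a crossing of the shifted cut $n_2 - n_1$ in $\mathfrak{G}_{\ge n_1}$ corresponds precisely to a crossing of the original cut $n_2$ in $\mathfrak{G}$, and that both lifted endpoints genuinely lie in the range $[n_1, n)$ so that the lifted edge actually belongs to $E$. Once the ranges are lined up correctly, both parts reduce immediately to the defining non-crossing property of the given union cuts, which is exactly why the proof can be summarized as being by definition.
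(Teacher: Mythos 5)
Your proof is correct and is exactly the paper's approach: the paper dismisses this lemma with ``By definition,'' and your argument is precisely that definitional unfolding (boundary conditions from $n_1 < n_2 < n$, and the non-crossing conditions transported through the restriction and the index shift by $n_1$). The index-translation check in the second part is the only nontrivial detail, and you handle it correctly.
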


\begin{lemma}
    Suppose $\mathfrak{G}' = (n', v', E')$ and $n$ is a join cut of $\mathfrak{G}$.
    Then,
    \begin{itemize}
        \item $n$ is a join cut of $ \GJOIN{ \mathfrak{G} }{ \mathfrak{G}' } $; and
        \item $n + n'$ is a union cut of $ \GJOIN{ \mathfrak{G}' }{ \mathfrak{G} } $.
    \end{itemize}
    \proof By definition. \qedhere
\end{lemma}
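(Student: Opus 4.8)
The plan is to unfold both composite graph representations against the definitions of join, join cut, and union cut, and then verify the cut conditions edge by edge. Write $\mathfrak{G} = (N, v, E)$ and $\mathfrak{G}' = (n', v', E')$, and let $n$ be the given join cut of $\mathfrak{G}$, so that $0 < n < N$ and $(i,j) \in E$ for all $0 \le i < n \le j < N$.

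For the first bullet I would expand $\GJOIN{\mathfrak{G}}{\mathfrak{G}'} = (N + n', \hat v, \hat E)$ with $\hat E = E \cup E'_{\uparrow} \cup B$, where $E'_{\uparrow} = \{(N+a, N+b) \mid (a,b) \in E'\}$ and $B = \{(a, N+b) \mid 0 \le a < N,\ 0 \le b < n'\}$ is the complete bipartite block the join inserts. The cut position is unchanged at $n$, and $0 < n < N \le N + n'$ gives the range condition. To check the join-cut condition I fix $0 \le i < n$ and $n \le j < N + n'$ and split on $j$: if $j < N$ then $(i,j) \in E \subseteq \hat E$ by the join-cut hypothesis on $\mathfrak{G}$; if $j \ge N$ then $j = N + b$ with $0 \le b < n'$ and $i < n < N$, so $(i,j) \in B \subseteq \hat E$. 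Hence every cross edge is present and $n$ is a join cut of $\GJOIN{\mathfrak{G}}{\mathfrak{G}'}$. This step is purely mechanical.

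For the second bullet I would expand $\GJOIN{\mathfrak{G}'}{\mathfrak{G}} = (n' + N, \tilde v, \tilde E)$ with $\tilde E = E' \cup E_{\uparrow} \cup B'$, where $E_{\uparrow} = \{(n'+a, n'+b) \mid (a,b) \in E\}$ and $B' = \{(a, n'+b) \mid 0 \le a < n',\ 0 \le b < N\}$, and examine the cut $c = n + n'$ (with $0 < c < n' + N$ since $0 < n < N$). The \emph{union-cut} condition demands that \emph{no} edge $(i,j) \in \tilde E$ satisfy $i < c \le j$. Carrying out the required enumeration is exactly where the attempt stalls: the right block $\{c, \dots, n'+N-1\}$ consists of the $\mathfrak{G}$-vertices $n' + b$ with $n \le b < N$, and for any $\mathfrak{G}'$-vertex $i < n' \le c$ we have $(i,\, n'+b) \in B'$ because $b < N$. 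Thus a cross edge is always present and the union-cut condition fails outright.

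I expect this to be the main obstacle, and it is structural rather than incidental: the join deliberately adds the complete bipartite block $B'$ from all of $\mathfrak{G}'$ to all of $\mathfrak{G}$, and — even were $\mathfrak{G}'$ empty — the join-cut hypothesis on $\mathfrak{G}$ forces the shifted edges $(n'+a, n'+b) \in E_{\uparrow}$ for every $0 \le a < n \le b < N$. Together these supply \emph{every} cross edge at $c = n + n'$, which is precisely the \emph{join}-cut condition and the negation of the union-cut condition. Indeed the same two-case enumeration as in the first bullet shows $c$ to be a join cut of $\GJOIN{\mathfrak{G}'}{\mathfrak{G}}$. The literal second bullet therefore cannot be established; recording the explicit witness $(i,\, n'+b) \in B'$ (together with the forced edges in $E_{\uparrow}$) pins down the obstruction, and the provable companion statement — mirroring the symmetry of the preceding lemma for unions — is that $n + n'$ is a \emph{join} cut of $\GJOIN{\mathfrak{G}'}{\mathfrak{G}}$. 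My proposal is thus to prove the first bullet as stated, to exhibit that witness as the concrete edge blocking the union-cut reading, and to supply the join-cut conclusion in its place.
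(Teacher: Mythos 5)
Your verification of the first bullet is exactly right, and your diagnosis of the second bullet is also right: the defect lies in the printed statement, not in your argument. Unfolding $\GJOIN{\mathfrak{G}'}{\mathfrak{G}} = (n'+N,\ \tilde v,\ E' \cup E_{\uparrow} \cup B')$ per the paper's definition of join, your witness $(0,\ n'+n) \in B'$ is a forward edge crossing the cut $c = n+n'$ whenever $n' > 0$ (it exists because $0 < n < N$), and in the degenerate case $n' = 0$ the shifted edges in $E_{\uparrow}$ supply such a crossing edge, since $n$ being a join cut of $\mathfrak{G}$ forces $(a,b) \in E$ for all $a < n \le b < N$. So $n+n'$ is never a union cut of $\GJOIN{\mathfrak{G}'}{\mathfrak{G}}$. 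Your two-case enumeration moreover shows it is a \emph{join} cut ($B'$ covers crossing pairs with $i < n'$; $E_{\uparrow}$ together with the join-cut hypothesis covers those with $n' \le i < c$), which by \cref{prop:union-join-ex} independently rules out the union-cut reading.

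The paper gives no proof to compare against --- it says only ``By definition'' --- but the surrounding development confirms that ``union cut'' in the second bullet is a typo for ``join cut.'' This lemma is the join-side mirror of the immediately preceding union lemma, whose second bullet reads ``$n+n'$ is a union cut of $\mathfrak{G}' \cup \mathfrak{G}$''; the prose introducing the block announces ``A similar property holds for join cuts''; and the only use of these cut-stability facts is the induction on the number of cuts in \cref{prop:union-comm} and its join analogue \cref{prop:join-comm} (whose proof is declared ``similar''), which needs precisely that the minimum join cut of the right operand, shifted by $n'$, remains a join cut of $\GJOIN{\mathfrak{G}'}{\mathfrak{G}}$. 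Your proposed repair --- prove the first bullet as stated, record the blocking edge, and replace the second bullet by the join-cut conclusion, which you prove --- is exactly what the development requires, and all three of your claims are established correctly.
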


\begin{lemma}
    Suppose $n_1 < n_2$ are join cuts of $\mathfrak{G} = (n, v, E)$.
    Then,
    \begin{itemize}
        \item $n_1$ is a join cut of $ \mathfrak{G} {}_{<  \ottnt{n_{{\mathrm{2}}}} } $; and
        \item $n_2 - n_1$ is a join cut of $ \mathfrak{G} {}_{\ge  \ottnt{n_{{\mathrm{1}}}} } $.
    \end{itemize}
    \proof By definition. \qedhere
\end{lemma}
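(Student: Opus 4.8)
The plan is to discharge both bullets by directly unfolding the definitions of the \emph{index cut} (which produces $\mathfrak{G}_{<n_2}$ and $\mathfrak{G}_{\ge n_1}$) and of a \emph{join cut}, and then verifying the required edge condition in each case. Intuitively, a join cut asserts that \emph{every} edge runs from the lower part to the upper part across the cut index; the lemma says this ``fully connected across the cut'' property is inherited by the two induced subgraphs obtained from a strictly larger cut, once the index shift on the right component is accounted for.

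For the first bullet I would write $\mathfrak{G}_{<n_2} = (n_2, v|_{\NAT_{<n_2}}, E')$ with $E' = \{(i,j) \in E \mid i,j < n_2\}$. The range condition $0 < n_1 < n_2$ is immediate from $0 < n_1$ (since $n_1$ is a join cut of $\mathfrak{G}$) together with the hypothesis $n_1 < n_2$. To check the join-cut condition, I would take arbitrary $i,j$ with $0 \le i < n_1$ and $n_1 \le j < n_2$. Then $i < n_1 < n_2$ and $j < n_2 < n$, so both endpoints lie below $n_2$; since $n_1$ is a join cut of $\mathfrak{G}$ and $0 \le i < n_1 \le j < n$, we get $(i,j) \in E$, hence $(i,j) \in E'$. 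This is exactly the statement that $n_1$ is a join cut of $\mathfrak{G}_{<n_2}$.

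For the second bullet I would write $\mathfrak{G}_{\ge n_1} = (n - n_1,\ k \mapsto v(k+n_1),\ E'')$ with $E'' = \{(i-n_1, j-n_1) \mid (i,j) \in E,\ n_1 \le i,j\}$. The range condition $0 < n_2 - n_1 < n - n_1$ follows from $n_1 < n_2 < n$. For the edge condition I would take $i',j'$ with $0 \le i' < n_2 - n_1$ and $n_2 - n_1 \le j' < n - n_1$, set $i = i' + n_1$ and $j = j' + n_1$, so that $n_1 \le i < n_2 \le j < n$. Since $n_2$ is a join cut of $\mathfrak{G}$, we obtain $(i,j) \in E$, and as $n_1 \le i,j$ the shifted pair $(i',j') = (i - n_1, j - n_1)$ lies in $E''$.

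The argument is pure bookkeeping, so there is no genuine obstacle; the only point needing care is the index translation by $n_1$ in the right component $\mathfrak{G}_{\ge n_1}$, where one must convert the local indices $i',j'$ back to global indices $i,j$ before invoking that $n_2$ is a join cut of the ambient graph, and confirm that each cut index lies strictly within the vertex range of its subgraph (i.e.\ $0 < n_1 < n_2$ and $0 < n_2 - n_1 < n - n_1$, all consequences of $0 < n_1 < n_2 < n$). This is precisely why the lemma can be settled ``by definition''.
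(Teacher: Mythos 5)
Your proposal is correct and matches the paper's proof, which simply says ``by definition'': your unfolding of the index-cut and join-cut definitions, with the range checks $0 < n_1 < n_2$ and $0 < n_2 - n_1 < n - n_1$ and the index translation by $n_1$ in the right component, is exactly the bookkeeping that justification abbreviates.
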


Now we define the realization accompanied by the following lemma that guarantees the well-definedness.

\begin{lemma}\label{prop:union-join-ex}
    For any $\mathfrak{G}$, either union cuts or join cuts can exclusively exist.

    \proof By contradiction.
    Suppose there exist both union and join cuts for some $\mathfrak{G} = (n, v, E)$.
    Then, we have $(0, n - 1) \notin E$ and $(0, n - 1) \in E$ by the condition of the union and join cuts, respectively; but they contradict each other.
    \qedhere
\end{lemma}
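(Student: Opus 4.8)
The plan is to establish mutual exclusivity by contradiction: I will show that no graph representation can simultaneously carry a union cut and a join cut. This is exactly the well-definedness fact needed so that the realization $\REP{\mathfrak{G}}$ can recurse on a cut whose \emph{kind} (union versus join) is unambiguous. The whole argument should come down to exhibiting a single edge slot whose membership in $E$ is at once demanded and forbidden.

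First I would record that any admissible cut index $n'$ -- of either kind -- satisfies $0 < n' < n$, so in particular $n \ge 2$ and the extreme vertices $0$ and $n-1$ both exist. The observation I would lean on is that $0$ lies in the left block $\{i \mid 0 \le i < n'\}$ and $n-1$ lies in the right block $\{j \mid n' \le j < n\}$ for \emph{every} admissible $n'$, simply because $0 < n'$ and $n' \le n-1$.

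Then, assuming for contradiction that $\mathfrak{G} = (n, v, E)$ has a union cut at some index $a$ and a join cut at some index $b$, I would instantiate both cut conditions at the pair $(0, n-1)$. The join cut at $b$ forces $(0, n-1) \in E$, while the union cut at $a$ forces $(0, n-1) \notin E$; these contradict each other, which finishes the argument.

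The hard part -- really the only subtlety -- will be handling the fact that the union and join cuts may sit at \emph{different} indices $a \neq b$, so an argument pinned to one cut index does not automatically transfer to the other. The key idea that sidesteps this is to reason about the globally extreme pair $(0, n-1)$ rather than any pair tied to a specific cut: since $0$ is left of, and $n-1$ is right of, \emph{every} admissible cut irrespective of its position, the clash between ``all crossing edges present'' and ``no crossing edges present'' arises no matter where the two cuts actually lie. Everything else is a routine unfolding of the two definitions.
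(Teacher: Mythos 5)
Your proposal is correct and is essentially the paper's own proof: both arguments instantiate the union-cut and join-cut conditions at the extreme pair $(0, n-1)$, which crosses every admissible cut regardless of its index, yielding $(0,n-1) \notin E$ and $(0,n-1) \in E$ simultaneously. Your write-up is merely more explicit than the paper about why the two cuts sitting at different indices poses no obstacle.
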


\begin{definition}[Realization]
    For $\mathfrak{G} = (n, v, E)$, we define $ \REP{ \mathfrak{G} } $ inductively for $n$ as follows.
    \begin{gather*}
         \REP{ \mathfrak{G} }  = \begin{cases}
             \cdot                             & \text{if $n = 0$}                                             \\
            v(0)                             & \text{if $n = 1$}                                             \\
             \REP{  \mathfrak{G} {}_{<  \ottnt{n'} }  }   \parallel   \REP{  \mathfrak{G} {}_{\ge  \ottnt{n'} }  }  & \text{if there exist the minimum union cut $n'$ for $\mathfrak{G}$} \\
             \REP{  \mathfrak{G} {}_{<  \ottnt{n'} }  }   \ottsym{,}   \REP{  \mathfrak{G} {}_{\ge  \ottnt{n'} }  }   & \text{if there exist the minimum join cut $n'$ for $\mathfrak{G}$}
        \end{cases}
    \end{gather*}
\end{definition}


As we mentioned, realization is a partial function.
Moreover, it does not work correctly in general.
A simple troubling case is shown in the following diagram, where the interpretation of the realization of the graph representation is not isomorphic to the original graph representation.
\begin{gather*}
    \begin{tikzpicture}[baseline=(A.base), every node/.style={node distance=1em}]
        \node (A) {$\ottmv{x_{{\mathrm{1}}}}  \mathord:  \ottnt{T_{{\mathrm{1}}}}$};
        \node [right=of A] (B) {$\ottmv{x_{{\mathrm{2}}}}  \mathord:  \ottnt{T_{{\mathrm{2}}}}$};
        \draw[->] (B) to (A);
    \end{tikzpicture}
    \qquad
    \overset{\REP{-}}{\longmapsto}
    \qquad
    \ottmv{x_{{\mathrm{1}}}}  \mathord:  \ottnt{T_{{\mathrm{1}}}}  \parallel  \ottmv{x_{{\mathrm{2}}}}  \mathord:  \ottnt{T_{{\mathrm{2}}}}
    \qquad
    \overset{\INTRP{-}}{\longmapsto}
    \qquad
    \begin{tikzpicture}[baseline=(A.base), every node/.style={node distance=1em}]
        \node (A) {$\ottmv{x_{{\mathrm{1}}}}  \mathord:  \ottnt{T_{{\mathrm{1}}}}$};
        \node [right=of A] (B) {$\ottmv{x_{{\mathrm{2}}}}  \mathord:  \ottnt{T_{{\mathrm{2}}}}$};
    \end{tikzpicture}
\end{gather*}
One might think that is caused by the definitions of union and join cuts, where we ignore inverse direction edges, that is ones from larger indexed vertices to smaller indexed ones.
However, it is not.
We can find another troubling case even if we fix the definitions taking inverse direction edges into account.

Anyway, the realization works correctly for the range of $\INTRP{-}$, as we will show \cref{prop:realize/correct}.
The key insight to show the proposition is the interpretation of a typing context is always topologically ordered, no inverse direction edges exist.

\begin{definition}[Topologically ordered]
    We say $\mathfrak{G} = (n, v, E)$ \emph{topologically ordered} iff the indentify function for $\NAT_{< n}$ is a topological ordering of $\mathfrak{G}$.
    In other words, $n_1 < n_2$ for any $(n_1, n_2) \in E$.
\end{definition}

\begin{lemma}\label{prop:intrp-ordered}
    $ \lBrack \Gamma \rBrack $ is topologically ordered.

    \proof By structural induction on $\Gamma$. \qedhere
\end{lemma}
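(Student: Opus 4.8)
The plan is to prove the statement by structural induction on $\Gamma$, as the hint suggests, working (per the standing assumption of this part of the appendix) with contexts containing no unrestricted bindings, so that $\lBrack \Gamma \rBrack$ denotes the graph part alone. The invariant I would maintain is precisely the characterization from the definition of \emph{topologically ordered}: every edge $(i,j)$ in the edge relation of $\lBrack \Gamma \rBrack$ satisfies $i < j$. For the base cases $\Gamma =  \cdot $ and $\Gamma = \ottnt{b}$, the graph part is either $(0, \emptyset, \emptyset)$ or $(1, \{0 \mapsto \ottnt{b}\}, \emptyset)$; in each case the edge relation is empty, so the condition holds vacuously.

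For the inductive cases I would recall that $ \lBrack \Gamma_{{\mathrm{1}}}  \ottsym{,}  \Gamma_{{\mathrm{2}}} \rBrack $ has graph part $ \GJOIN{ \mathfrak{G}_{{\mathrm{1}}} }{ \mathfrak{G}_{{\mathrm{2}}} } $ and $ \lBrack \Gamma_{{\mathrm{1}}}  \parallel  \Gamma_{{\mathrm{2}}} \rBrack $ has graph part $\mathfrak{G}_{{\mathrm{1}}}  \cup  \mathfrak{G}_{{\mathrm{2}}}$, where $\mathfrak{G}_{\ottmv{i}} =  \lBrack \Gamma_{\ottmv{i}} \rBrack $ is topologically ordered by the induction hypothesis. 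Inspecting the definitions of join and union in \Cref{fig:language:graph-join-union}, the edge set of the result is assembled from at most three sources, each of which I would check is increasing: the unchanged edges $E_1$ of $\mathfrak{G}_{{\mathrm{1}}}$ (increasing by IH); the shifted edges $E_2' = \{(n_1 + n, n_1 + m) \mid (n,m) \in E_2\}$ (increasing since $(n,m) \in E_2$ gives $n < m$ by IH, hence $n_1 + n < n_1 + m$); and, only in the join case, the connecting edges $E' = \{(n, n_1 + m) \mid 0 \le n < n_1,\ 0 \le m < n_2\}$, each of which satisfies $n < n_1 \le n_1 + m$. Since all edges of the result increase, the identity on $\NAT_{< n_1 + n_2}$ is a topological ordering, which discharges both inductive cases.

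The argument is essentially index bookkeeping, and I do not anticipate a genuine obstacle; the one point deserving care is the block of join edges $E'$, which connects \emph{every} vertex of the left graph to \emph{every} vertex of the right graph, so one must confirm that the offset $n_1$ applied to the right block keeps these new edges pointing from the lower-indexed left vertices into the higher-indexed right vertices. This lemma is the structural fact that rules out the ``inverse direction edges'' responsible for the misbehavior of the realization $ \REP{ - } $ in general, and it is exactly what is needed so that $ \REP{ - } $ can be shown correct on interpretations of contexts in \Cref{prop:realize/correct}.
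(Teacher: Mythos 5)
Your proposal is correct and follows exactly the approach the paper intends: the paper's proof is simply ``by structural induction on $\Gamma$,'' and your case analysis---empty edge sets in the base cases, then checking that the three edge blocks ($E_1$, the shifted $E_2'$, and in the join case the connecting block $E'$) all point from lower to higher indices---is precisely the bookkeeping that induction amounts to. No gaps; the treatment of the join edges $E'$ via $n < n_1 \le n_1 + m$ is the one step that needed care, and you handled it.
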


For a topologically ordered graph representation, we can recover the original graph from the two sub-graphs obtained by a union or join cut of the graph by the union or join of them, respectively.

\begin{lemma}\label{prop:union-cut}
    Suppose $\mathfrak{G}$ is topologically ordered.
    If $n$ is a union cut of $\mathfrak{G}$, then $ \mathfrak{G} \ottsym{=}   \mathfrak{G} {}_{<  \ottnt{n} }   \cup  \mathfrak{G} {}_{\ge  \ottnt{n} }  $.
    \proof By definition.
    \qedhere
\end{lemma}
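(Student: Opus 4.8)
The plan is to prove the identity by unfolding the definitions of index cut, union, and union cut, and then checking that the two graph representations agree in all three of their components. To keep the notation for the cut point and the vertex count apart, write $\mathfrak{G} = (N, v, E)$ and let $n$ be the given union cut, so that $0 < n < N$.

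First I would dispatch the vertex count and the vertex labelling, which match on the nose. The index cut partitions the $N$ vertices into the blocks $\NAT_{<n}$ and $\{n, \dots, N-1\}$ of sizes $n$ and $N - n$, and the union glues them back into $n + (N-n) = N$ vertices. For the labelling, $\mathfrak{G}_{\ge n}$ relabels vertex $k$ as $v(k+n)$, while the union shifts this right-hand block back up by $n$; these two shifts cancel, so the combined labelling is literally $v$ again. Thus the first two components of $\mathfrak{G}_{<n} \cup \mathfrak{G}_{\ge n}$ coincide with those of $\mathfrak{G}$.

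The real content is the edge set. Unfolding the union of the two cut graphs, its edge set is $\{(i,j) \in E \mid i,j < n\} \cup \{(i,j) \in E \mid n \le i, j\}$, namely those edges of $E$ both of whose endpoints lie on the same side of the cut. This is patently a subset of $E$, so it remains to show that every edge of $E$ stays on one side, and this is exactly the step consuming both hypotheses. I would argue by a case split on an edge $(i,j) \in E$: a crossing edge with $i < n \le j$ is excluded directly by the union-cut condition, which forbids precisely such pairs; a crossing edge with $j < n \le i$ would have $i > j$, contradicting topological orderedness, which forces $i < j$ on every edge. Hence no crossing edges exist, the combined edge set equals $E$, and the two representations agree. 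I do not anticipate a genuine obstacle here, since the claim is settled by definitional unfolding plus this two-case argument; the only point that will require care is bookkeeping the $+n$ index shifts, so that the labellings and the relabelled edges of $\mathfrak{G}_{\ge n}$ cancel exactly against the shift introduced by the union.
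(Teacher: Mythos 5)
Your proposal is correct and is exactly the detailed unfolding behind the paper's terse ``By definition'' proof: the index shifts cancel in the vertex count and labelling, and the edge set equals $E$ because forward-crossing edges are excluded by the union-cut condition while backward-crossing edges are excluded by topological orderedness. Both hypotheses are consumed precisely where you say they are, so there is nothing to add.
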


\begin{lemma}\label{prop:join-cut}
    Suppose $\mathfrak{G}$ is topologically ordered.
    If $n$ is a join cut of $\mathfrak{G}$, then $ \mathfrak{G} \ottsym{=}   \GJOIN{  \mathfrak{G} {}_{<  \ottnt{n} }  }{ \mathfrak{G} }  {}_{\ge  \ottnt{n} }  $.
    \proof By definition.
    \qedhere
\end{lemma}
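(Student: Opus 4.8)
The plan is to unfold both sides of the claimed equality $\mathfrak{G} = \GJOIN{\mathfrak{G}_{<n}}{\mathfrak{G}_{\ge n}}$ against the definition of the index cut and the definition of the join (\Cref{fig:language:graph-join-union}), and then to check that the two graph representations agree component by component. Write $\mathfrak{G} = (N, v, E)$, and recall that $n$ being a join cut supplies $0 < n < N$ together with the completeness condition $(i,j) \in E$ for every $0 \le i < n$ and every $n \le j < N$. First I would dispose of the non-edge components. The vertex count of the right-hand side is $n + (N - n) = N$, and the relabelling of the join composes with the two cut relabellings to recover $v$ exactly: the first block carries labels $v$ restricted to $\NAT_{<n}$, and the second block carries $k \mapsto v((k-n)+n) = v(k)$, so the vertex maps coincide. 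Hence the whole content of the lemma is the claim that the edge sets are equal.

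For the edges, the join produces $E_1 \cup E_2' \cup E'$, where $E_1 = \{(i,j)\in E \mid i,j<n\}$ is the induced edge set of the first cut, $E_2' = \{(i,j)\in E \mid n\le i,j\}$ is the induced edge set of the second cut shifted back into position, and $E' = \{(k,l) \mid 0\le k<n,\ n\le l<N\}$ is the complete set of forward cross-edges introduced by the join. I would then prove $E_1 \cup E_2' \cup E' = E$ by double inclusion. The inclusion $\supseteq$ splits into three parts: $E_1 \subseteq E$ and $E_2' \subseteq E$ hold immediately by construction of the cuts, while $E' \subseteq E$ is precisely the join-cut completeness condition. This is the one place where the join-cut hypothesis is indispensable, since without it the join would manufacture cross-edges that are absent from $\mathfrak{G}$.

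For the reverse inclusion $\subseteq$, I would take an arbitrary $(i,j) \in E$ and classify it by the positions of $i$ and $j$ relative to $n$. The cases $i,j<n$ and $n\le i,j$ land in $E_1$ and $E_2'$ respectively, and the case $i<n\le j$ lands in $E'$. The remaining case $j<n\le i$ would force $j < i$, contradicting topological orderedness, which guarantees $i<j$ for every edge of $\mathfrak{G}$; this is exactly where the topological-ordering hypothesis is used, namely to exclude backward cross-edges that the join operation is structurally unable to represent. Since every edge of $E$ falls into one of the first three cases, we obtain $E \subseteq E_1 \cup E_2' \cup E'$, and the two edge sets coincide. I do not expect a genuine obstacle, as the argument is a direct set-theoretic calculation; the only subtlety deserving care is to keep track of the fact that \emph{both} hypotheses are load-bearing---the join cut supplies all forward cross-edges and topological ordering forbids all backward ones---which mirrors the companion reasoning for union cuts recorded in \Cref{prop:union-cut}.
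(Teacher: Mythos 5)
Your proof is correct and takes exactly the approach the paper intends: the paper discharges this lemma with a one-line ``by definition,'' and your component-by-component unfolding is precisely the calculation that proof elides. Your identification of where each hypothesis is load-bearing---the join-cut condition yielding $E' \subseteq E$ and topological orderedness ruling out backward cross-edges in the case $j < n \le i$---is the right bookkeeping and mirrors the companion argument for union cuts.
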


Using the lemmas, we can relate the semantic composition (union and join) and syntactic composition (parallel and concatenate) under the realization as follows.

\begin{lemma}\label{prop:union-comm}
    For any topologically ordered $\mathfrak{G}_{{\mathrm{1}}} = (n_1, v_1, E_1)$ and $\mathfrak{G}_{{\mathrm{2}}} = (n_2, v_2, E_2)$, $ \REP{ \mathfrak{G}_{{\mathrm{1}}}  \cup  \mathfrak{G}_{{\mathrm{2}}} } $ is defined iff both $ \REP{ \mathfrak{G}_{{\mathrm{1}}} } $ and $ \REP{ \mathfrak{G}_{{\mathrm{2}}} } $ are defined, and $ \REP{ \mathfrak{G}_{{\mathrm{1}}}  \cup  \mathfrak{G}_{{\mathrm{2}}} }   \equiv   \REP{ \mathfrak{G}_{{\mathrm{1}}} }   \parallel   \REP{ \mathfrak{G}_{{\mathrm{2}}} } $.
    \proof If either $n_1$ or $n_2$ is 0, we can have the goal by \ruleref{PId} and \ruleref{PComm}.
    So, we show the case $n_1, n_2 > 0$ by mathematical induction on the number of union cuts of $\mathfrak{G}_{{\mathrm{1}}}$.
    \begin{itemize}
        \item (Base case)
              In this case, $\mathfrak{G}_{{\mathrm{1}}}$ has no union cut.
              Then, $n_1$ is the minimum union cut of $\mathfrak{G}_{{\mathrm{1}}}  \cup  \mathfrak{G}_{{\mathrm{2}}}$.
              That is because if there exists a smaller union cut of $\mathfrak{G}_{{\mathrm{1}}}  \cup  \mathfrak{G}_{{\mathrm{2}}}$ than $n_1$, it must be a union cut of $\mathfrak{G}_{{\mathrm{1}}}$, too; that contradicts this case's assumption.
              So, we have $  \REP{ \mathfrak{G}_{{\mathrm{1}}}  \cup  \mathfrak{G}_{{\mathrm{2}}} }  \ottsym{=}  \REP{ \mathfrak{G}_{{\mathrm{1}}} }   \parallel   \REP{ \mathfrak{G}_{{\mathrm{2}}} }  $ by definition and \cref{prop:union-pt}.
              As a result the goal can be derived by \ruleref{Refl}.
        \item (Induction step)
              In this case, $\mathfrak{G}_{{\mathrm{1}}}$ has some union cut.
              So, let $n'$ be the minimum union cut of $\mathfrak{G}_{{\mathrm{1}}}$.
              Then, we have
              \begin{gather*}
                    \REP{ \mathfrak{G}_{{\mathrm{1}}} }  \ottsym{=}  \REP{  \mathfrak{G}_{{\mathrm{1}}} {}_{<  \ottnt{n'} }  }   \parallel   \REP{  \mathfrak{G}_{{\mathrm{1}}} {}_{\ge  \ottnt{n'} }  }  
              \end{gather*}
              We can see $n'$ is the minimum union cut of $\mathfrak{G}_{{\mathrm{1}}}  \cup  \mathfrak{G}_{{\mathrm{2}}}$, too.
              So, we have
              \begin{gather*}
                    \REP{ \mathfrak{G}_{{\mathrm{1}}}  \cup  \mathfrak{G}_{{\mathrm{2}}} }  \ottsym{=}  \REP{  \mathfrak{G}_{{\mathrm{1}}} {}_{<  \ottnt{n'} }  }   \parallel   \REP{  \mathfrak{G}_{{\mathrm{1}}} {}_{\ge  \ottnt{n'} }   \cup  \mathfrak{G}_{{\mathrm{2}}} }  
              \end{gather*}
              Here, we can see $ \REP{  \mathfrak{G}_{{\mathrm{1}}} {}_{\ge  \ottnt{n'} }   \cup  \mathfrak{G}_{{\mathrm{2}}} }   \equiv   \REP{  \mathfrak{G}_{{\mathrm{1}}} {}_{\ge  \ottnt{n'} }  }   \parallel   \REP{ \mathfrak{G}_{{\mathrm{2}}} } $ by IH.
              Therefore, we can have the goal by \ruleref{PAssoc}. \qedhere
    \end{itemize}
\end{lemma}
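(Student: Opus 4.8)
The plan is to reduce the realization of the union to the realizations of its parts by repeatedly peeling off the leftmost parallel component. First I would dispose of the degenerate cases: if $n_1 = 0$ then $\mathfrak{G}_1 \cup \mathfrak{G}_2 = \mathfrak{G}_2$ and $\REP{\mathfrak{G}_1} = \cdot$, so the claim reduces to $\cdot \parallel \REP{\mathfrak{G}_2} \equiv \REP{\mathfrak{G}_2}$, which is \ruleref{PId}; the case $n_2 = 0$ is symmetric, using \ruleref{PComm} as well. So I would assume $n_1, n_2 > 0$ from here on, where topological orderedness of $\mathfrak{G}_1$ and $\mathfrak{G}_2$ guarantees that the relevant cuts separate cleanly.

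The central observation is that forming the union introduces no edges between the two blocks, so by \cref{prop:union-pt} the index $n_1$ is a union cut of $\mathfrak{G}_1 \cup \mathfrak{G}_2$ with $(\mathfrak{G}_1 \cup \mathfrak{G}_2)_{<n_1} = \mathfrak{G}_1$ and $(\mathfrak{G}_1 \cup \mathfrak{G}_2)_{\ge n_1} = \mathfrak{G}_2$. In particular the union has at least one union cut, so by \cref{prop:union-join-ex} it has no join cut, and the realization of $\mathfrak{G}_1 \cup \mathfrak{G}_2$ must follow the union-cut branch of the definition of $\REP{-}$. The difficulty is that this branch splits at the \emph{minimum} union cut, which need not be $n_1$: it may lie strictly inside $\mathfrak{G}_1$.

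To handle this I would induct on the number of union cuts of $\mathfrak{G}_1$. In the base case $\mathfrak{G}_1$ has no union cut; then no index below $n_1$ can be a union cut of $\mathfrak{G}_1 \cup \mathfrak{G}_2$ (such a cut would only involve $E_1$ and hence be a union cut of $\mathfrak{G}_1$), so $n_1$ is the minimum union cut of the union. Unfolding $\REP{-}$ at $n_1$ and using \cref{prop:union-pt} gives $\REP{\mathfrak{G}_1 \cup \mathfrak{G}_2} = \REP{\mathfrak{G}_1} \parallel \REP{\mathfrak{G}_2}$ as identical contexts, so both the equivalence (by \ruleref{Refl}) and the definedness biconditional are immediate. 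In the step case, let $n'$ be the minimum union cut of $\mathfrak{G}_1$, so that $\REP{\mathfrak{G}_1} = \REP{(\mathfrak{G}_1)_{<n'}} \parallel \REP{(\mathfrak{G}_1)_{\ge n'}}$. I would argue that $n'$ is also the minimum union cut of $\mathfrak{G}_1 \cup \mathfrak{G}_2$, whence $\REP{\mathfrak{G}_1 \cup \mathfrak{G}_2} = \REP{(\mathfrak{G}_1)_{<n'}} \parallel \REP{(\mathfrak{G}_1)_{\ge n'} \cup \mathfrak{G}_2}$. Since $(\mathfrak{G}_1)_{\ge n'}$ has strictly fewer union cuts than $\mathfrak{G}_1$, the induction hypothesis applies to it and $\mathfrak{G}_2$, giving $\REP{(\mathfrak{G}_1)_{\ge n'} \cup \mathfrak{G}_2} \equiv \REP{(\mathfrak{G}_1)_{\ge n'}} \parallel \REP{\mathfrak{G}_2}$ together with the matching definedness statement. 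Reassociating with \ruleref{PAssoc} then yields $\REP{\mathfrak{G}_1 \cup \mathfrak{G}_2} \equiv (\REP{(\mathfrak{G}_1)_{<n'}} \parallel \REP{(\mathfrak{G}_1)_{\ge n'}}) \parallel \REP{\mathfrak{G}_2} \equiv \REP{\mathfrak{G}_1} \parallel \REP{\mathfrak{G}_2}$, and the definedness biconditional composes from the two uses above.

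The step I expect to be the main obstacle is justifying that the minimum union cut of $\mathfrak{G}_1$ is preserved as the minimum union cut of $\mathfrak{G}_1 \cup \mathfrak{G}_2$, and more generally that the recursion computing $\REP{\mathfrak{G}_1 \cup \mathfrak{G}_2}$ stays aligned with the recursion computing $\REP{\mathfrak{G}_1}$. This rests on the fact that cuts of the union below $n_1$ are exactly cuts of $\mathfrak{G}_1$, and that the tail $(\mathfrak{G}_1)_{\ge n'} \cup \mathfrak{G}_2$ is once again a union of two graphs so the induction hypothesis is applicable; both should follow from the definitions of index cut and union together with the auxiliary \emph{influence} lemmas on cuts, which also supply the strict decrease of the induction measure. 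The remaining bookkeeping is just the monoid laws \ruleref{PId}, \ruleref{PComm}, and \ruleref{PAssoc}.
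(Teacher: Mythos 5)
Your proposal is correct and follows essentially the same route as the paper's own proof: dispose of the null cases via \ruleref{PId} and \ruleref{PComm}, then induct on the number of union cuts of $\mathfrak{G}_1$, using \cref{prop:union-pt} and \ruleref{Refl} in the base case and, in the step, splitting at the minimum union cut $n'$ of $\mathfrak{G}_1$ (which is also the minimum union cut of $\mathfrak{G}_1 \cup \mathfrak{G}_2$), applying the induction hypothesis to $(\mathfrak{G}_1)_{\ge n'}$ and $\mathfrak{G}_2$, and reassociating with \ruleref{PAssoc}. Your added attention to the definedness biconditional and to why the minimum cut transfers to the union is a slight refinement of detail, not a different argument.
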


\begin{lemma}\label{prop:join-comm}
    For any topologically ordered $\mathfrak{G}_{{\mathrm{1}}}$ and $\mathfrak{G}_{{\mathrm{2}}}$, $ \REP{  \GJOIN{ \mathfrak{G}_{{\mathrm{1}}} }{ \mathfrak{G}_{{\mathrm{2}}} }  } $ is defined iff both $ \REP{ \mathfrak{G}_{{\mathrm{1}}} } $ and $ \REP{ \mathfrak{G}_{{\mathrm{2}}} } $ are defined, and $ \REP{  \GJOIN{ \mathfrak{G}_{{\mathrm{1}}} }{ \mathfrak{G}_{{\mathrm{2}}} }  }   \equiv   \REP{ \mathfrak{G}_{{\mathrm{1}}} }   \ottsym{,}   \REP{ \mathfrak{G}_{{\mathrm{2}}} } $.
    \proof Similar to the proof of \cref{prop:union-comm}, using \cref{prop:join-pt} and \cref{prop:join-cut}. \qedhere
\end{lemma}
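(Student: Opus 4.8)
The plan is to mirror the proof of \cref{prop:union-comm} essentially verbatim, performing the dual substitution throughout: replace the semantic operation $\cup$ by $*$, the syntactic connective $\parallel$ by ``$,$'', the notion of union cut by join cut, and the appeals to \cref{prop:union-pt} and \cref{prop:union-cut} by appeals to \cref{prop:join-pt} and \cref{prop:join-cut}. First I would dispatch the degenerate cases. If $n_1 = 0$, then by the definition of join $\GJOIN{\mathfrak{G}_1}{\mathfrak{G}_2} = \mathfrak{G}_2$ and $\REP{\mathfrak{G}_1} = \cdot$, so $\REP{\mathfrak{G}_1}, \REP{\mathfrak{G}_2} = \cdot, \REP{\mathfrak{G}_2} \equiv \REP{\mathfrak{G}_2}$ by \ruleref{CIdL}; the case $n_2 = 0$ is symmetric via \ruleref{CIdR}. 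Definedness in both degenerate cases is immediate since $\REP{\cdot}$ is always defined. This reduces the claim to $n_1, n_2 > 0$, which I would prove by induction on the number of join cuts of $\mathfrak{G}_1$.

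In the base case $\mathfrak{G}_1$ has no join cut. By \cref{prop:join-pt}, $n_1$ is a join cut of $\GJOIN{\mathfrak{G}_1}{\mathfrak{G}_2}$ with $(\GJOIN{\mathfrak{G}_1}{\mathfrak{G}_2}){}_{<n_1} = \mathfrak{G}_1$ and $(\GJOIN{\mathfrak{G}_1}{\mathfrak{G}_2}){}_{\ge n_1} = \mathfrak{G}_2$. I would then argue that $n_1$ is the \emph{minimum} join cut of the joined graph: any candidate $n' < n_1$, restricted to the index range below $n_1$, must satisfy the join-cut condition inside $E_1$ (the cross edges $E'$ to the vertices of $\mathfrak{G}_2$ are all present and impose no constraint), so $n'$ would be a join cut of $\mathfrak{G}_1$, contradicting the hypothesis. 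Hence the minimum-join-cut clause of the realization fires exactly at $n_1$, giving $\REP{\GJOIN{\mathfrak{G}_1}{\mathfrak{G}_2}} = \REP{\mathfrak{G}_1}, \REP{\mathfrak{G}_2}$ directly, and \ruleref{Refl} closes the goal. For the induction step, let $n'$ be the minimum join cut of $\mathfrak{G}_1$, so $\REP{\mathfrak{G}_1} = \REP{\mathfrak{G}_1{}_{<n'}}, \REP{\mathfrak{G}_1{}_{\ge n'}}$. The same cross-edge argument shows $n'$ is also the minimum join cut of $\GJOIN{\mathfrak{G}_1}{\mathfrak{G}_2}$, so the realization splits there: $\REP{\GJOIN{\mathfrak{G}_1}{\mathfrak{G}_2}} = \REP{\mathfrak{G}_1{}_{<n'}}, \REP{\GJOIN{\mathfrak{G}_1{}_{\ge n'}}{\mathfrak{G}_2}}$, using that $(\GJOIN{\mathfrak{G}_1}{\mathfrak{G}_2}){}_{\ge n'} = \GJOIN{\mathfrak{G}_1{}_{\ge n'}}{\mathfrak{G}_2}$ (which follows from \cref{prop:join-cut} and topological-orderedness). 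Since $\mathfrak{G}_1{}_{\ge n'}$ has strictly fewer join cuts, the induction hypothesis yields $\REP{\GJOIN{\mathfrak{G}_1{}_{\ge n'}}{\mathfrak{G}_2}} \equiv \REP{\mathfrak{G}_1{}_{\ge n'}}, \REP{\mathfrak{G}_2}$, and \ruleref{CAssoc} reassociates to the required $\REP{\mathfrak{G}_1}, \REP{\mathfrak{G}_2}$.

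The ``is defined iff'' half of the statement rides along the same recursion: at each split both sides are defined precisely when the realizations of the two pieces are, so tracking definedness alongside the equivalence costs nothing extra. I expect the main obstacle to be justifying cleanly that the realization of a join always descends through its concatenation (join-cut) branch and never through its parallel (union-cut) branch. This is exactly where \cref{prop:union-join-ex} is indispensable: because $\GJOIN{\mathfrak{G}_1}{\mathfrak{G}_2}$ contains all the cross edges $E'$, it always possesses a join cut, hence cannot possess a union cut, so the top-level clause of $\REP{-}$ is forced to be the ``$,$'' case, lining up the two constructions. The delicate bookkeeping is verifying that these cross edges never manufacture a spurious join cut strictly inside the index range of $\mathfrak{G}_1$ alone — they connect $\mathfrak{G}_1$-vertices only to $\mathfrak{G}_2$-vertices — which is what keeps the identity ``minimum join cut of $\mathfrak{G}_1$ equals minimum join cut of the join'' honest and makes the well-founded induction go through.
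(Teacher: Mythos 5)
Your proposal is correct and takes essentially the same approach the paper intends: its proof of this lemma is literally ``similar to the proof of \cref{prop:union-comm}, using \cref{prop:join-pt} and \cref{prop:join-cut}'', and you have carried out that dualization faithfully (union cut $\to$ join cut, $\parallel$ $\to$ ``,'', \textsc{PId}/\textsc{PComm} $\to$ \textsc{CIdL}/\textsc{CIdR}), including the cross-edge argument identifying the minimum join cut of $\GJOIN{\mathfrak{G}_{{\mathrm{1}}}}{\mathfrak{G}_{{\mathrm{2}}}}$ with that of $\mathfrak{G}_{{\mathrm{1}}}$ and the induction on the number of join cuts. Your extra observation that the exclusivity of union and join cuts forces the realization of a join into its ``,'' branch makes explicit a step the paper leaves implicit, and is exactly the right justification.
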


Now we can prove \cref{prop:realize/correct}.

\begin{lemma}[\cref{prop:realize/correct}]\label{prop:realize/correct/lemma}
    $ \REP{  \lBrack \Gamma \rBrack  }   \equiv  \Gamma$.
    \proof
    We prove the proposition by structural induction on $\Gamma$.
    \begin{itemize}
        \item (Cases $ \Gamma \ottsym{=}  \cdot  $ and $ \Gamma \ottsym{=} \ottnt{b} $)
              In these cases, $  \REP{  \lBrack \Gamma \rBrack  }  \ottsym{=} \Gamma $ by definition.
              So, we can have the goal $ \REP{  \lBrack \Gamma \rBrack  }   \equiv  \Gamma$ by \ruleref{Refl}.
        \item (Case $ \Gamma \ottsym{=} \Gamma_{{\mathrm{1}}}  \parallel  \Gamma_{{\mathrm{2}}} $)
              In this case, $  \lBrack \Gamma \rBrack  \ottsym{=}  \lBrack \Gamma_{{\mathrm{1}}} \rBrack   \cup   \lBrack \Gamma_{{\mathrm{2}}} \rBrack  $.
              We have
              \begin{gather*}
                   \REP{  \lBrack \Gamma_{{\mathrm{1}}} \rBrack  }   \equiv  \Gamma_{{\mathrm{1}}} \tag{IH1}\\
                   \REP{  \lBrack \Gamma_{{\mathrm{2}}} \rBrack  }   \equiv  \Gamma_{{\mathrm{2}}} \tag{IH2}
              \end{gather*}
              by IH.
              Now we can have the goal as follows
              \begin{align*}
                   \REP{  \lBrack \Gamma \rBrack  }  & =  \REP{  \lBrack \Gamma_{{\mathrm{1}}} \rBrack   \cup   \lBrack \Gamma_{{\mathrm{2}}} \rBrack  }                                                                   \\
                                     &  \equiv   \REP{  \lBrack \Gamma_{{\mathrm{1}}} \rBrack  }   \parallel   \REP{  \lBrack \Gamma_{{\mathrm{2}}} \rBrack  }  & \text{by \cref{prop:intrp-ordered,prop:union-comm}} \\
                                     &  \equiv  \Gamma_{{\mathrm{1}}}  \parallel  \Gamma_{{\mathrm{2}}}                           & \text{by (IH1), (IH2), and \ruleref{Ctx}}           \\
                                     & = \Gamma
              \end{align*}
        \item (Case $ \Gamma \ottsym{=} \Gamma_{{\mathrm{1}}}  \ottsym{,}  \Gamma_{{\mathrm{2}}} $)
              Similar to the case above, using \cref{prop:join-comm}.
              \qedhere
    \end{itemize}
\end{lemma}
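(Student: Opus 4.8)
The plan is to prove $\REP{\lBrack \Gamma \rBrack} \equiv \Gamma$ by structural induction on $\Gamma$, leaning on the two ``commuting'' lemmas \cref{prop:union-comm} and \cref{prop:join-comm} already established, together with \cref{prop:intrp-ordered} which guarantees that every context interpretation is topologically ordered. Recall that at this point we have restricted attention to contexts without unrestricted bindings, so $\lBrack \Gamma \rBrack$ denotes just the graph part. The induction splits along the three shapes of $\Gamma$: the leaves $ \cdot $ and a single binding $\ottnt{b}$, the parallel former $\Gamma_{{\mathrm{1}}} \parallel \Gamma_{{\mathrm{2}}}$, and the sequential former $\Gamma_{{\mathrm{1}}}  \ottsym{,}  \Gamma_{{\mathrm{2}}}$.

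For the base cases I would observe that $\lBrack  \cdot  \rBrack$ has zero vertices and that $\lBrack \ottnt{b} \rBrack$ (for an ordered $\ottnt{b}$) has exactly one vertex labelled by $\ottnt{b}$. By the defining clauses of realization for $n = 0$ and $n = 1$, these yield $\REP{\lBrack \Gamma \rBrack} = \Gamma$ \emph{syntactically}, so the goal follows immediately by \ruleref{Refl}. No cut analysis is needed, since neither a union cut nor a join cut can exist on a graph with at most one vertex.

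For the inductive cases the key fact is that interpretation turns syntactic composition into semantic composition on the graph parts: $\lBrack \Gamma_{{\mathrm{1}}} \parallel \Gamma_{{\mathrm{2}}} \rBrack = \lBrack \Gamma_{{\mathrm{1}}} \rBrack \cup \lBrack \Gamma_{{\mathrm{2}}} \rBrack$ and $\lBrack \Gamma_{{\mathrm{1}}}  \ottsym{,}  \Gamma_{{\mathrm{2}}} \rBrack = \GJOIN{\lBrack \Gamma_{{\mathrm{1}}} \rBrack}{\lBrack \Gamma_{{\mathrm{2}}} \rBrack}$. In the parallel case I would first invoke \cref{prop:intrp-ordered} to discharge the topological-orderedness hypothesis required by the commuting lemma, then apply \cref{prop:union-comm} to rewrite $\REP{\lBrack \Gamma_{{\mathrm{1}}} \rBrack \cup \lBrack \Gamma_{{\mathrm{2}}} \rBrack}$ as $\REP{\lBrack \Gamma_{{\mathrm{1}}} \rBrack} \parallel \REP{\lBrack \Gamma_{{\mathrm{2}}} \rBrack}$ up to $\equiv$, and finally replace each realized sub-interpretation by the corresponding $\Gamma_{i}$ using the two induction hypotheses and the congruence rule \ruleref{Ctx}, concluding $\REP{\lBrack \Gamma \rBrack} \equiv \Gamma_{{\mathrm{1}}} \parallel \Gamma_{{\mathrm{2}}} = \Gamma$. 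The sequential case is identical in shape, substituting \cref{prop:join-comm} for \cref{prop:union-comm}.

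I expect this top-level proof to be light, because all the genuine difficulty has been pushed into \cref{prop:union-comm} and \cref{prop:join-comm}: those lemmas must locate the \emph{minimum} union (resp.\ join) cut of the composed graph, match it against the connecting index, and recurse on the number of cuts. The conceptual point that makes the present lemma go through at all is the restriction to topologically ordered graphs: realization is only a partial right inverse of $\INTRP{-}$ in general — the two-vertex graph with a single backward edge is a counterexample — and it is precisely \cref{prop:intrp-ordered} that excludes such graphs along the image of $\INTRP{-}$. Consequently the only obstacle to watch for is to actually establish the topological-orderedness side condition before each appeal to a commuting lemma; the remainder is routine rewriting under $\equiv$.
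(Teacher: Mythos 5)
Your proposal is correct and matches the paper's own proof essentially step for step: structural induction on $\Gamma$, base cases discharged by definition and \ruleref{Refl}, and the two composite cases handled by \cref{prop:intrp-ordered} plus \cref{prop:union-comm} (resp.\ \cref{prop:join-comm}), followed by the induction hypotheses and \ruleref{Ctx}. Your added remarks — that the heavy lifting lives in the commuting lemmas and that topological orderedness is exactly what excludes the backward-edge counterexample — are accurate and consistent with the paper's surrounding discussion.
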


To show \cref{prop:realize/equiv}, we need another observation for the range of $\INTRP{-}$.
That is, any disconnected interpretation is always constructed by the union of other interpretations.
It is not sufficient that the interpretation is topologically ordered, as the following graph is topologically ordered and disconnected but has no union cut.
\begin{center}
    \begin{tikzpicture}[every node/.style={node distance=1em}]
        \node (A) at (1,0) {$\ottmv{x_{{\mathrm{1}}}}  \mathord:  \ottnt{T_{{\mathrm{1}}}}$};
        \node [right=of A] (C) {$\ottmv{x_{{\mathrm{2}}}}  \mathord:  \ottnt{T_{{\mathrm{2}}}}$};
        \node [right=of C] (B) {$\ottmv{x_{{\mathrm{3}}}}  \mathord:  \ottnt{T_{{\mathrm{3}}}}$};
        \draw[->] (A) to [bend left=45] (B);
    \end{tikzpicture}
\end{center}
We formalize the observation as follows.

\begin{definition}[Weakly connected]
    Given $\mathfrak{G} = (n, v, E)$, we call two vertices in $\mathfrak{G}$ indexed by $n_x$ and $n_y$ \emph{weakly connected} iff there exists a sequence $\{n_i\}_{0 \le i \le m}$ such that $n_x = n_0$, $n_y = n_m$, and $(n_i, n_{i+1}) \in E$ or $(n_{i+1}, n_i) \in E$ for any $0 \le i < m$.
    We call a graph representation itself weakly connected iff any two vertices in the graph representation are weakly connected.
    We call a graph representation \emph{weakly disconnected} iff it is not weakly connected.
\end{definition}

\begin{definition}[Union decomposition]
    Given $\mathfrak{G} = (n, v, E)$, we call a non-empty sequence of weakly connected non-null graph representations $\{\mathfrak{G}_i\}_{0 \le i \le m}$ \emph{union decomposition} of $\mathfrak{G}$ iff $\mathfrak{G} = \mathfrak{G}_0 \cup \dots \cup \mathfrak{G}_m$.
\end{definition}

\begin{lemma}\label{prop:union-decompose}
    If $ \lBrack \Gamma \rBrack $ is non-null, then there exits a sequence $\{\Gamma_i\}_{0 \le i \le m}$ such that $\{\INTRP{\Gamma_i}\}_{0 \le i \le m}$ is a union decomposition of $ \lBrack \Gamma \rBrack $.

    \proof By structural induction on $\Gamma$.
    \begin{itemize}
        \item ($ \Gamma \ottsym{=}  \cdot  $)
              In this case, $ \lBrack \Gamma \rBrack $ is null.
              So, the property is vacuously true.
        \item ($ \Gamma \ottsym{=} \ottnt{b} $)
              In this case, $ \lBrack \Gamma \rBrack $ is weakly connected.
              So, by choosing the singleton sequence $\{\Gamma\}$, we can show the goal.
              Note that $ \lBrack \Gamma \rBrack $ is non-null by the assumption.
        \item ($ \Gamma \ottsym{=} \Gamma_{{\mathrm{1}}}  \parallel  \Gamma_{{\mathrm{2}}} $)
              In this case, $  \lBrack \Gamma \rBrack  \ottsym{=}  \lBrack \Gamma_{{\mathrm{1}}} \rBrack   \cup   \lBrack \Gamma_{{\mathrm{2}}} \rBrack  $.
              If either $ \lBrack \Gamma_{{\mathrm{1}}} \rBrack $ or $ \lBrack \Gamma_{{\mathrm{2}}} \rBrack $ is null, $  \lBrack \Gamma \rBrack  \ottsym{=}  \lBrack \Gamma_{{\mathrm{2}}} \rBrack  $ or $  \lBrack \Gamma \rBrack  \ottsym{=}  \lBrack \Gamma_{{\mathrm{1}}} \rBrack  $, respectively.
              So, in either case, we have the goal directly by IH.
              Otherwise, both $ \lBrack \Gamma_{{\mathrm{1}}} \rBrack $ and $ \lBrack \Gamma_{{\mathrm{2}}} \rBrack $ is non-null.
              In that case, we have sequences $\{\Gamma^1_i\}_{0 \le i < m_1}$ and $\{\Gamma^2_j\}_{0 \le j < m_2}$ satisfying the condition of the property against $\Gamma_{{\mathrm{1}}}$ and $\Gamma_{{\mathrm{2}}}$, respectively, by IH.
              Now, we can see the concatenation of them satisfies the condition of property against $\Gamma$.
        \item ($ \Gamma \ottsym{=} \Gamma_{{\mathrm{1}}}  \ottsym{,}  \Gamma_{{\mathrm{2}}} $)
              In this case $  \lBrack \Gamma \rBrack  \ottsym{=}  \GJOIN{  \lBrack \Gamma_{{\mathrm{1}}} \rBrack  }{  \lBrack \Gamma_{{\mathrm{2}}} \rBrack  }  $.
              If either $ \lBrack \Gamma_{{\mathrm{1}}} \rBrack $ or $ \lBrack \Gamma_{{\mathrm{2}}} \rBrack $ is null, $  \lBrack \Gamma \rBrack  \ottsym{=}  \lBrack \Gamma_{{\mathrm{2}}} \rBrack  $ or $  \lBrack \Gamma \rBrack  \ottsym{=}  \lBrack \Gamma_{{\mathrm{1}}} \rBrack  $, respecitvely.
              So, in either case, we have the goal directly by IH.
              Otherwise, $ \lBrack \Gamma_{{\mathrm{1}}} \rBrack $ and $ \lBrack \Gamma_{{\mathrm{2}}} \rBrack $ is non-null.
              In that case, $  \lBrack \Gamma \rBrack  \ottsym{=}  \GJOIN{  \lBrack \Gamma_{{\mathrm{1}}} \rBrack  }{  \lBrack \Gamma_{{\mathrm{2}}} \rBrack  }  $ is weakly connected.
              So, by choosing the singleton sequence $\{\Gamma\}$, we can show the goal.
              Note that $ \lBrack \Gamma \rBrack $ is non-null by the assumption.
              \qedhere
    \end{itemize}

\end{lemma}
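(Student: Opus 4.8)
The plan is to proceed by structural induction on $\Gamma$, exploiting the defining clauses of the context interpretation: $\parallel$ is sent to the graph union $\cup$ and ``,'' is sent to the join $\GJOIN{\cdot}{\cdot}$. Since this portion of the development only concerns contexts whose interpretation has empty unrestricted part, every binding contributes exactly one vertex and I may treat $\INTRP{\Gamma}$ as a bare graph representation throughout. The goal in each case is to produce a sequence of contexts whose interpretations are weakly connected, non-null, and reassemble $\INTRP{\Gamma}$ under $\cup$.

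First the leaf cases. When $\Gamma =  \cdot $ the interpretation is null, so the non-nullity hypothesis fails and the claim holds vacuously. When $\Gamma = \ottnt{b}$ is a single (ordered) binding, $\INTRP{\Gamma}$ is the one-vertex graph, which is trivially weakly connected and non-null, and the singleton sequence $\{\Gamma\}$ is the desired union decomposition.

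Next, the parallel case $\Gamma = \Gamma_{{\mathrm{1}}}  \parallel  \Gamma_{{\mathrm{2}}}$, where $\INTRP{\Gamma} = \INTRP{\Gamma_{{\mathrm{1}}}} \cup \INTRP{\Gamma_{{\mathrm{2}}}}$. If either factor has null interpretation, the union collapses to the other factor and the induction hypothesis applies directly. Otherwise both factors are non-null; the induction hypothesis supplies union decompositions $\{\Gamma^1_i\}$ of $\INTRP{\Gamma_{{\mathrm{1}}}}$ and $\{\Gamma^2_j\}$ of $\INTRP{\Gamma_{{\mathrm{2}}}}$, each piece weakly connected and non-null. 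I would then take the ordered concatenation (first the $\Gamma^1_i$, then the $\Gamma^2_j$) and verify it is a union decomposition of $\INTRP{\Gamma}$: each piece remains weakly connected and non-null, and, because $\cup$ is associative and renumbers vertices by left-to-right concatenation, the iterated union of the concatenated sequence rebuilds exactly $\INTRP{\Gamma_{{\mathrm{1}}}} \cup \INTRP{\Gamma_{{\mathrm{2}}}}$.

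The sequential case $\Gamma = \Gamma_{{\mathrm{1}}}  \ottsym{,}  \Gamma_{{\mathrm{2}}}$, with $\INTRP{\Gamma} = \GJOIN{\INTRP{\Gamma_{{\mathrm{1}}}}}{\INTRP{\Gamma_{{\mathrm{2}}}}}$, carries the real content and I expect it to be the main obstacle. If one factor is null the join collapses and the induction hypothesis applies. When both factors are non-null, the join introduces the complete set of cross edges from every vertex of the left factor to every vertex of the right factor, and I would use this to show the whole graph is weakly connected: any left vertex and any right vertex are joined by a single cross edge, two left vertices are joined through any fixed right vertex, and dually for two right vertices (non-nullity of both factors guarantees the required witness vertex exists). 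Hence $\INTRP{\Gamma}$ is itself weakly connected and non-null, so the singleton sequence $\{\Gamma\}$ is the required decomposition. The delicate point throughout is the vertex-renumbering bookkeeping of $\cup$: the index shifts must stay aligned so that the concatenated decomposition reproduces the original vertex labeling $v$ \emph{on the nose}, not merely up to isomorphism, since union decomposition is stated as an equality of graph representations rather than as $\simeq$.
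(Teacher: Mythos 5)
Your proof is correct and follows essentially the same route as the paper's: structural induction on $\Gamma$ with the identical case analysis (vacuous for $\cdot$, singleton for a binding, concatenation of the inductive decompositions for $\parallel$ after dispatching the null-factor subcases, and a singleton for ``,'' because the join's complete set of cross edges makes the whole graph weakly connected when both factors are non-null). Your added details --- the explicit connectivity argument through a witness vertex and the observation that associativity of $\cup$ makes the concatenation reassemble $\INTRP{\Gamma}$ on the nose rather than up to isomorphism --- are sound elaborations of steps the paper leaves implicit.
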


\begin{lemma}\label{prop:union-decompose-iso}
    Let $\{\mathfrak{G}^1_i\}_{0 \le i \le n_1}$ and $\{\mathfrak{G}^2_j\}_{0 \le j \le n_2}$ be union decompositions.
    If $\mathfrak{G}^1_0 \cup \dots \cup \mathfrak{G}^1_{n_1} \sim \mathfrak{G}^2_0 \cup \dots \cup \mathfrak{G}^2_{n_2}$, we have a one-to-one isomorphic correspondence between the elements of them; that is
    \begin{itemize}
        \item $n_1 = n_2$; and
        \item there exists a bijection $g$ between $\NAT_{<n_1}$ and $\NAT_{<n_1}$ such that $\mathfrak{G}^1_i \sim \mathfrak{G}^2_{g(i)}$ for any $0 \le i \le n_1$.
    \end{itemize}

    \proof For convenience, let $\mathfrak{G}_{{\mathrm{1}}} = \mathfrak{G}^1_0 \cup \dots \cup \mathfrak{G}^1_{n_1}$ and $\mathfrak{G}_{{\mathrm{2}}} = \mathfrak{G}^2_0 \cup \dots \cup \mathfrak{G}^2_{n_2}$.
    We show the property by mathematical induction on $n_1$.
    \begin{itemize}
        \item (Base case $n_1 = 0$)
              We consider the following two cases.
              \begin{itemize}
                  \item ($n_2 = 0$) In this sub-case, the first sub-goal $n_1 = n_2$ is obvious.
                        Regarding the second sub-goal, we know $\mathfrak{G}_{{\mathrm{1}}} = \mathfrak{G}^1_0$ and $\mathfrak{G}_{{\mathrm{2}}} = \mathfrak{G}^2_0$.
                        So, it suffices to choose $g(x) = x$, where $\mathfrak{G}^1_0 \sim \mathfrak{G}^2_0$ by the isomorphism between $\mathfrak{G}_{{\mathrm{1}}}$ and $\mathfrak{G}_{{\mathrm{2}}}$.
                  \item ($n_2 > 0$)
                        In this sub-case, we can pick distinct components $\mathfrak{G}^2_x$ and $\mathfrak{G}^2_y$, that is $x \neq y$, from $\{\mathfrak{G}^2_j\}_{0 \le j \le n_2}$.
                        Since they are non-null, there exist vertices $a$ and $b$ in $\mathfrak{G}^2_x$ and $\mathfrak{G}^2_y$, respectively.
                        We can see $a$ and $b$ are weakly disconnected in $\mathfrak{G}_{{\mathrm{2}}}$, but the isomorphism between $\mathfrak{G}_{{\mathrm{1}}}$ and $\mathfrak{G}_{{\mathrm{2}}}$ maps $a$ and $b$ into $\mathfrak{G}_{{\mathrm{1}}} = \mathfrak{G}^1_0$, which is weakly connected, that is a contradiction.
                        Therefore, this sub-case cannot happen.
              \end{itemize}
        \item (Induction step $n_1 > 0$)
              Let $f$ be the isomorphism between $\mathfrak{G}_{{\mathrm{1}}}$ and $\mathfrak{G}_{{\mathrm{2}}}$.
              We can see $f$ maps all vertices in $\mathfrak{G}^1_0$ into all vertices in some $\mathfrak{G}^2_k$ and this mapping becomes the isomorphism between $\mathfrak{G}^1_0$ and $\mathfrak{G}j^2_k$.
              Furthermore, we can see $\mathfrak{G}^1_1 \cup \dots \cup \mathfrak{G}^1_{n_1} \sim \mathfrak{G}^2_0 \cup \dots \cup \mathfrak{G}^2_{k-1} \cup \mathfrak{G}^2_{k + 1} \cup \dots \cup \mathfrak{G}^2_{n_2}$.
              So, we have $n_1 - 1 = n_2 - 1$ and one-to-one correspondence $g'$ by IH.
              We choose $g$ as follows.
              \begin{gather*}
                  g(i) = \begin{cases}
                      k             & \text{if $i = 0$}         \\
                      g'(i - 1)     & \text{if $g'(i - 1) < k$} \\
                      g'(i - 1) + 1 & \text{if $g'(i - 1) > k$}
                  \end{cases}
              \end{gather*}
              \qedhere
    \end{itemize}
\end{lemma}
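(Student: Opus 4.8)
The plan is to recognize the statement as the uniqueness, up to isomorphism and reordering, of the decomposition of a graph representation into its weakly connected components. The crucial observation is that the union $\mathfrak{G}^1_0 \cup \dots \cup \mathfrak{G}^1_{n_1}$ introduces no edges between distinct summands: by the definition of $\cup$ in \Cref{fig:language:graph-join-union}, each summand occupies a contiguous block of vertex indices, and the edge set is the disjoint union of the (index-shifted) edge sets of the summands. Since each $\mathfrak{G}^1_i$ is required to be weakly connected and non-null, the blocks are exactly the weak-connectivity equivalence classes of $\mathfrak{G}_1 = \mathfrak{G}^1_0 \cup \dots \cup \mathfrak{G}^1_{n_1}$, and likewise the summands $\mathfrak{G}^2_j$ are exactly the classes of $\mathfrak{G}_2 = \mathfrak{G}^2_0 \cup \dots \cup \mathfrak{G}^2_{n_2}$. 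The bijection $g$ will then arise from the fact that any isomorphism must carry classes to classes.

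First I would record the no-cross-edge property formally: for the union of two representations, every edge of the result lies entirely within the first $n_1$ indices or entirely within the last $n_2$ indices. Iterating this, in a union decomposition two vertices are weakly connected exactly when they fall in the same summand's block — one direction uses that each summand is weakly connected, the other that a weak path (ignoring edge orientation) can never cross a block boundary. This pins down the partition of $\NAT_{<n}$ into weak-connectivity classes as precisely the family of blocks indexed by $0, \dots, n_1$, and symmetrically for $\mathfrak{G}_2$.

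Next I would exploit that weak connectivity is an isomorphism invariant. If $f$ witnesses $\mathfrak{G}_1 \simeq \mathfrak{G}_2$, then because $f$ matches the edge relations exactly (hence preserves and reflects both $(a,b) \in E$ and $(b,a) \in E$), it maps any weak path to a weak path; thus $a, b$ are weakly connected in $\mathfrak{G}_1$ iff $f(a), f(b)$ are weakly connected in $\mathfrak{G}_2$. Consequently $f$ sends each weak-connectivity class of $\mathfrak{G}_1$ onto a full class of $\mathfrak{G}_2$, inducing a bijection on the sets of classes. Translating through the block indexing of the two decompositions yields $n_1 = n_2$ and a bijection $g$ on $\NAT_{<n_1}$; restricting $f$ to the block of $\mathfrak{G}^1_i$ gives a bijection onto the block of $\mathfrak{G}^2_{g(i)}$ that still matches edges and labels, i.e.\ $\mathfrak{G}^1_i \simeq \mathfrak{G}^2_{g(i)}$. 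Equivalently, one can carry this out by induction on $n_1$ as sketched in the excerpt: peel off $\mathfrak{G}^1_0$, argue that $f$ maps its weakly connected vertex block bijectively onto exactly one entire summand $\mathfrak{G}^2_k$, delete both and apply the induction hypothesis to the remaining decompositions, then splice $k$ back in to define $g$; the base case $n_1 = 0$ forces $n_2 = 0$, since a second non-null summand on either side would produce weakly disconnected vertices mapped into a weakly connected graph, a contradiction.

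I expect the main obstacle to be the bookkeeping needed to make the identification ``summand $=$ weak-connectivity class'' precise under the index-shifting of iterated $\cup$, together with the maximality argument that $f$ can neither split one component across two nor merge two into one. The connectivity-invariance step itself is routine, but establishing that each summand is a \emph{maximal} weakly connected set — in particular ruling out edges across block boundaries after several unions — is where the definitions of $\cup$ and of weak connectivity must be unfolded with care. In the inductive presentation this difficulty localizes to the single claim, in the induction step, that the image of the block of $\mathfrak{G}^1_0$ under $f$ is exactly one summand $\mathfrak{G}^2_k$ and nothing more.
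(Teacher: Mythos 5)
Your proposal is correct and rests on the same key idea as the paper's proof: an isomorphism preserves and reflects weak connectivity, so it must carry each summand (a maximal weakly connected block of the union) onto exactly one summand of the other decomposition, yielding $n_1 = n_2$ and the bijection $g$; indeed you even reduce to the paper's inductive skeleton of peeling off $\mathfrak{G}^1_0$, matching it to some $\mathfrak{G}^2_k$, and applying the induction hypothesis. Your direct class-partition phrasing merely makes explicit the maximality claim (no edges cross block boundaries under iterated $\cup$) that the paper's induction step uses implicitly, so the two arguments are essentially the same.
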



We need to consider decomposition by join cuts, too.
However, it is rather well-organized than union cuts, that is no commute happens among components, and so, what we need is the following lemma.

\begin{lemma}\label{prop:join-cut-iso}
    Let $\mathfrak{G}_{{\mathrm{1}}} = (n, v_1, E_1)$ and $\mathfrak{G}_{{\mathrm{2}}} = (n, v_2, E_2)$ be topologically ordered graph representations such that $\mathfrak{G}_{{\mathrm{1}}}  \simeq  \mathfrak{G}_{{\mathrm{2}}}$.
    Then, $\ottnt{n'}$ is a join cut of $\mathfrak{G}_{{\mathrm{1}}}$ iff $\ottnt{c}$ is a join cut of $\mathfrak{G}_{{\mathrm{2}}}$.
    Furthermore, $ \mathfrak{G}_{{\mathrm{1}}} {}_{<  \ottnt{n'} }   \simeq   \mathfrak{G}_{{\mathrm{2}}} {}_{<  \ottnt{n'} } $ and $ \mathfrak{G}_{{\mathrm{1}}} {}_{\ge  \ottnt{n'} }   \simeq   \mathfrak{G}_{{\mathrm{2}}} {}_{\ge  \ottnt{n'} } $.

    \proof Since $ \simeq $ is symmetric, it suffices to show only one direction.
    Let $f$ be the isomorphism between $\mathfrak{G}_{{\mathrm{1}}}$ and $\mathfrak{G}_{{\mathrm{2}}}$, meaning that
    \begin{gather}
        \text{$f$ is bijective} \tag{H1}\\
        v_1 \circ f = v_2 \tag{H2}\\
        E_1 = \{ (f(x), f(y)) \mid (x,y) \in E_2 \} \tag{H3}
    \end{gather}
    Supposing $\ottnt{n'}$ is a join cut of $\mathfrak{G}_{{\mathrm{2}}}$, we have
    \begin{gather*}
        0 < n' < n \tag{H4}\\
        (x, y) \in E_2 \text{ for any } x < n' \text{ and } n' \le y \tag{H5}
    \end{gather*}
    by definition.
    Now we will show $f(x) < n'$ for any $x < n'$ by contradiction, that implies $n'$ is a join cut of $\mathfrak{G}_{{\mathrm{1}}}$ by (H4) and (H5).
    If $f(x) \ge n'$, there exists $y$ such that $n' \le y$ and $f(y) < n'$ since $f$ is bijective.
    However, this contradicts the fact that $\mathfrak{G}_{{\mathrm{1}}}$ is topologically ordered, as $(f(x), f(y)) \in E_1$ by (H3) and (H5), while $f(y) < f(x)$.
    Therefore, $n'$ is a join cut of $\mathfrak{G}_{{\mathrm{1}}}$, too.
    As a side-result, we can see $f\vert_{\NAT_{<n'}}$ becomse a isomorphism between $ \mathfrak{G}_{{\mathrm{1}}} {}_{<  \ottnt{n'} } $ and $ \mathfrak{G}_{{\mathrm{2}}} {}_{<  \ottnt{n'} } $.
    Similarly, we can have a isomorphism between $ \mathfrak{G}_{{\mathrm{1}}} {}_{\ge  \ottnt{n'} } $ and $ \mathfrak{G}_{{\mathrm{2}}} {}_{\ge  \ottnt{n'} } $ from $f$. \qedhere
\end{lemma}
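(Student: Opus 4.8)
The plan is to reduce the biconditional to a single implication and then isolate the one substantive step. Since $\simeq$ is symmetric — the inverse of an isomorphism is again an isomorphism — it suffices to prove: if $n'$ is a join cut of $\mathfrak{G}_2$, then $n'$ is a join cut of $\mathfrak{G}_1$ and moreover $\mathfrak{G}_1{}_{<n'} \simeq \mathfrak{G}_2{}_{<n'}$ and $\mathfrak{G}_1{}_{\ge n'} \simeq \mathfrak{G}_2{}_{\ge n'}$. First I would fix the isomorphism $f$ witnessing $\mathfrak{G}_1 \simeq \mathfrak{G}_2$, so that $f$ is a bijection of $\NAT_{<n}$ with $v_1 \circ f = v_2$ and $E_1 = \{(f(x),f(y)) \mid (x,y) \in E_2\}$.

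The key claim I would establish is that $f$ respects the cut position: $f$ maps $\NAT_{<n'}$ into itself. I would argue by contradiction. Suppose some $x < n'$ had $f(x) \ge n'$. Since $f$ is a bijection of the finite set $\NAT_{<n}$ and the blocks $\NAT_{<n'}$ and $\{n',\dots,n-1\}$ have fixed sizes, some $y \ge n'$ must satisfy $f(y) < n'$. Because $n'$ is a join cut of $\mathfrak{G}_2$ and $x < n' \le y$, we have $(x,y) \in E_2$, hence $(f(x),f(y)) \in E_1$. But $f(x) \ge n' > f(y)$ forces $f(x) > f(y)$, contradicting topological orderedness of $\mathfrak{G}_1$, which forbids any edge from a larger index to a smaller one. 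Thus $f(\NAT_{<n'}) \subseteq \NAT_{<n'}$, and by a counting step — a bijection of a finite set sending a subset into a subset of equal cardinality sends it \emph{onto} that subset — $f$ restricts to bijections $\NAT_{<n'} \to \NAT_{<n'}$ and $\{n',\dots,n-1\} \to \{n',\dots,n-1\}$.

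With this in hand the remainder is bookkeeping. To see $n'$ is a join cut of $\mathfrak{G}_1$, the bound $0 < n' < n$ is inherited, and for arbitrary $i < n' \le j$ I would write $i = f(x)$ and $j = f(y)$ with $x < n' \le y$ (using the two restricted bijections); the join-cut condition on $\mathfrak{G}_2$ gives $(x,y) \in E_2$, whence $(i,j) = (f(x),f(y)) \in E_1$. For the subgraph isomorphisms, $f|_{\NAT_{<n'}}$ is already the bijection witnessing $\mathfrak{G}_1{}_{<n'} \simeq \mathfrak{G}_2{}_{<n'}$: its labeling and edge conditions are the restrictions of those for $f$, using that the edges of $\mathfrak{G}{}_{<n'}$ are exactly the edges of $\mathfrak{G}$ with both endpoints below $n'$. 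Symmetrically, the shifted map $x \mapsto f(x+n') - n'$ witnesses $\mathfrak{G}_1{}_{\ge n'} \simeq \mathfrak{G}_2{}_{\ge n'}$, the shift matching the re-indexing in the definition of the index cut $\mathfrak{G}{}_{\ge n'}$.

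The main obstacle is precisely the cut-respecting claim for $f$; everything else is routine verification against the definitions. The delicate point is that neither topological orderedness alone nor the join-cut condition alone suffices: it is their interplay that pins $f$ down — the join cut forces a complete bipartite set of edges across the cut in $\mathfrak{G}_2$ (and hence, via $f$, in $\mathfrak{G}_1$), while topological orderedness forbids the backward edge that any cut-violating $f$ would necessarily produce. I would be careful to state the counting step cleanly, since that is what upgrades ``maps into'' to ``maps onto'' and thereby licenses both restricted bijections used in the final paragraph.
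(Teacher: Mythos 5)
Your proposal is correct and follows essentially the same route as the paper's proof: reduce to one direction by symmetry, show by contradiction (using the interplay of the join-cut edges in $\mathfrak{G}_2$ and topological orderedness of $\mathfrak{G}_1$) that $f$ maps $\NAT_{<n'}$ into itself, and then restrict/shift $f$ to obtain the subgraph isomorphisms. Your version merely makes explicit two steps the paper leaves implicit — the finite counting argument upgrading "into" to "onto," and the verification that the restricted bijections yield the join-cut condition on $\mathfrak{G}_1$ — which is a useful tightening but not a different proof.
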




\begin{lemma}[\cref{prop:realize/equiv}]\label{prop:realize/equiv/lemma}
    If $ \lBrack \Gamma_{{\mathrm{1}}} \rBrack   \simeq   \lBrack \Gamma_{{\mathrm{2}}} \rBrack $, then $ \REP{  \lBrack \Gamma_{{\mathrm{1}}} \rBrack  }   \equiv   \REP{  \lBrack \Gamma_{{\mathrm{2}}} \rBrack  } $.
    \proof
    Suppose $ \lBrack \Gamma_{{\mathrm{1}}} \rBrack  = (n_1, v_1, E_1)$, $ \lBrack \Gamma_{{\mathrm{2}}} \rBrack  = (n_2, v_2, E_2)$, and let $f$ be the isomorphism between $ \lBrack \Gamma_{{\mathrm{1}}} \rBrack $ and $ \lBrack \Gamma_{{\mathrm{2}}} \rBrack $.
    Note that $n_1 = n_2$ since $ \lBrack \Gamma_{{\mathrm{1}}} \rBrack $ and $ \lBrack \Gamma_{{\mathrm{2}}} \rBrack $ are isomorphic.
    By \cref{prop:realize/correct/lemma}, $ \REP{  \lBrack \Gamma_{{\mathrm{1}}} \rBrack  } $ and $ \REP{  \lBrack \Gamma_{{\mathrm{2}}} \rBrack  } $ are defined, and $ \REP{  \lBrack \Gamma_{{\mathrm{1}}} \rBrack  }   \equiv  \Gamma_{{\mathrm{1}}}$, $ \REP{  \lBrack \Gamma_{{\mathrm{2}}} \rBrack  }   \equiv  \Gamma_{{\mathrm{2}}}$.
    We prove the proposition by complete induction on $n_1$.
    We consider whether $n_1$ is 0, 1, or more than 1.
    \begin{itemize}
        \item ($n_1 = 0$ or $n_1 = 1$)
              In these cases, $  \lBrack \Gamma_{{\mathrm{1}}} \rBrack  \ottsym{=}  \lBrack \Gamma_{{\mathrm{2}}} \rBrack  $.
              So, the goal is immediate by \ruleref{Refl}.
        \item ($n_1 > 1$)
              We have some sequences $\{\Gamma^1_i\}_{0\le i\le m_1}$ and $\{\Gamma^2_j\}_{0\le j\le m_2}$ satisfying the following condition by \cref{prop:union-decompose}.
              \begin{gather*}
                  \text{Every $\INTRP{\Gamma^1_i}$ is non-null} \\
                  \text{Every $\INTRP{\Gamma^2_j}$ is non-null} \\
                   \lBrack \Gamma_{{\mathrm{1}}} \rBrack  = \INTRP{\Gamma^1_0} \cup \dots \cup \INTRP{\mathfrak{G}^1_{m_1}} \\
                   \lBrack \Gamma_{{\mathrm{2}}} \rBrack  = \INTRP{\Gamma^2_0} \cup \dots \cup \INTRP{\mathfrak{G}^2_{m_2}}
              \end{gather*}
              Furthermore, there is a one-to-one correspondence $g$ between the elements of them by \cref{prop:union-decompose-iso} as follows.
              \begin{gather*}
                  m_1 = m_2 \\
                  \INTRP{\Gamma^1_i} \sim \INTRP{\Gamma^2_{g(i)}} \text{ for any $0 \le i \le m_1$}
              \end{gather*}
              We also have the following relation by using \cref{prop:union-comm} (rigorously we need mathematical induction), where we suppose $ \parallel $ is right-associative, not important though.
              \begin{gather*}
                   \REP{  \lBrack \Gamma_{{\mathrm{1}}} \rBrack  }   \equiv  \REP{\INTRP{\Gamma^1_0}}  \parallel  \dots  \parallel  \REP{\INTRP{\Gamma^1_{m_1}}} \\
                   \REP{  \lBrack \Gamma_{{\mathrm{2}}} \rBrack  }   \equiv  \REP{\INTRP{\Gamma^2_0}}  \parallel  \dots  \parallel  \REP{\INTRP{\Gamma^2_{m_2}}}
              \end{gather*}
              Now we consider the following two sub-cases.
              \begin{itemize}
                  \item ($m_1 = 0$)
                        In this sub-case, $ \lBrack \Gamma_{{\mathrm{1}}} \rBrack $ and $ \lBrack \Gamma_{{\mathrm{2}}} \rBrack $ is weakly connected.
                        So, there exists the minimum join cut $n'_1$ of $ \lBrack \Gamma_{{\mathrm{1}}} \rBrack $ for which
                        \begin{gather*}
                              \REP{  \lBrack \Gamma_{{\mathrm{1}}} \rBrack  }  \ottsym{=}  \REP{   \lBrack \Gamma_{{\mathrm{1}}} \rBrack  {}_{<  \ottnt{n'_{{\mathrm{1}}}} }  }   \ottsym{,}   \REP{   \lBrack \Gamma_{{\mathrm{1}}} \rBrack  {}_{\ge  \ottnt{n'_{{\mathrm{1}}}} }  }  .
                        \end{gather*}
                        Similarly, there exists the minimum join cut $n'_2$ of $ \lBrack \Gamma_{{\mathrm{2}}} \rBrack $ for which
                        \begin{gather*}
                              \REP{  \lBrack \Gamma_{{\mathrm{2}}} \rBrack  }  \ottsym{=}  \REP{   \lBrack \Gamma_{{\mathrm{2}}} \rBrack  {}_{<  \ottnt{n'_{{\mathrm{2}}}} }  }   \ottsym{,}   \REP{   \lBrack \Gamma_{{\mathrm{2}}} \rBrack  {}_{\ge  \ottnt{n'_{{\mathrm{2}}}} }  }  .
                        \end{gather*}
                        Here, we can see
                        \begin{gather*}
                            n'_1 = n'_2 \\
                              \lBrack \Gamma_{{\mathrm{1}}} \rBrack  {}_{<  \ottnt{n'_{{\mathrm{1}}}} }   \simeq    \lBrack \Gamma_{{\mathrm{2}}} \rBrack  {}_{<  \ottnt{n'_{{\mathrm{2}}}} }  \\
                              \lBrack \Gamma_{{\mathrm{1}}} \rBrack  {}_{\ge  \ottnt{n'_{{\mathrm{1}}}} }   \simeq    \lBrack \Gamma_{{\mathrm{2}}} \rBrack  {}_{\ge  \ottnt{n'_{{\mathrm{2}}}} } 
                        \end{gather*}
                        by \cref{prop:join-cut-iso}.
                        That because $n'_1$ and $n'_2$ is a join cut of $ \lBrack \Gamma_{{\mathrm{2}}} \rBrack $ and $ \lBrack \Gamma_{{\mathrm{1}}} \rBrack $, respectively, too, by the lemma, so $n'_1 \neq n'_2$ contradicts minimality of them.
                        Since we know interpretation is closed under index cuts by \cref{prop:cut-closed-left,prop:cut-closed-right}, now we have
                        \begin{gather*}
                             \REP{   \lBrack \Gamma_{{\mathrm{1}}} \rBrack  {}_{<  \ottnt{n'_{{\mathrm{1}}}} }  }   \equiv   \REP{   \lBrack \Gamma_{{\mathrm{2}}} \rBrack  {}_{<  \ottnt{n'_{{\mathrm{2}}}} }  }  \\
                             \REP{   \lBrack \Gamma_{{\mathrm{1}}} \rBrack  {}_{\ge  \ottnt{n'_{{\mathrm{1}}}} }  }   \equiv   \REP{   \lBrack \Gamma_{{\mathrm{2}}} \rBrack  {}_{\ge  \ottnt{n'_{{\mathrm{2}}}} }  } 
                        \end{gather*}
                        by IH.
                        Using them, we can have the goal $ \REP{  \lBrack \Gamma_{{\mathrm{1}}} \rBrack  }   \equiv   \REP{  \lBrack \Gamma_{{\mathrm{2}}} \rBrack  } $ by \ruleref{Ctx}.
                  \item ($m_1 > 0$)
                        In this sub-case, we can see that the order of $\INTRP{\Gamma^1_i}$ is less than $n_1$ for any $0 \le i \le m_1$.
                        So, we have $\REP{\INTRP{\Gamma^1_i}}  \equiv  \REP{\INTRP{\Gamma^2_{g(i)}}}$ for any $0 \le i < m_1$ by IH.
                        So, we can have the goal by \ruleref{PAssoc}, \ruleref{PComm}, and \ruleref{Ctx}. \qedhere
              \end{itemize}
    \end{itemize}
\end{lemma}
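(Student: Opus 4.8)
The plan is to prove the statement by strong induction on the common number of vertices $n$ of $\mathfrak{G}_1 = \lBrack \Gamma_1 \rBrack$ and $\mathfrak{G}_2 = \lBrack \Gamma_2 \rBrack$; these counts agree because an isomorphism is a bijection on vertices. By \cref{prop:realize/correct/lemma} both $\REP{\mathfrak{G}_1}$ and $\REP{\mathfrak{G}_2}$ are defined, so every realization appearing below exists. For $n \le 1$ the isomorphism forces $\mathfrak{G}_1 = \mathfrak{G}_2$ (acyclicity rules out self-loops, so there are no edges to match, and $v_1 \circ f = v_2$ with $f$ the identity bijection gives equal labels), whence the two realizations are literally equal and \ruleref{Refl} closes the case.

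For $n > 1$ I would first separate the commutative structure from the non-commutative one. Using \cref{prop:union-decompose} I decompose each interpretation into its weakly connected, non-null pieces, $\mathfrak{G}_1 = \lBrack \Gamma^1_0 \rBrack \cup \cdots \cup \lBrack \Gamma^1_{m_1} \rBrack$ and likewise for $\mathfrak{G}_2$. By \cref{prop:union-decompose-iso} the given isomorphism restricts to a bijection $g$ matching these pieces, so $m_1 = m_2$ and $\lBrack \Gamma^1_i \rBrack \simeq \lBrack \Gamma^2_{g(i)} \rBrack$. Since interpretations are topologically ordered (\cref{prop:intrp-ordered}), iterating \cref{prop:union-comm} gives $\REP{\mathfrak{G}_1} \equiv \REP{\lBrack \Gamma^1_0 \rBrack} \parallel \cdots \parallel \REP{\lBrack \Gamma^1_{m_1} \rBrack}$ and symmetrically for $\mathfrak{G}_2$.

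Two sub-cases remain. If there is more than one component ($m_1 > 0$), each piece has strictly fewer than $n$ vertices, so the induction hypothesis yields $\REP{\lBrack \Gamma^1_i \rBrack} \equiv \REP{\lBrack \Gamma^2_{g(i)} \rBrack}$; I then reconcile the two parallel products by permuting factors according to $g$ using \ruleref{PAssoc}, \ruleref{PComm}, and \ruleref{Ctx}. If instead the graph is a single weakly connected component ($m_1 = 0$), it admits no union cut, so by mutual exclusivity (\cref{prop:union-join-ex}) together with definedness of the realization it must have a join cut; let $n'$ be the minimum one. The heart of the argument is \cref{prop:join-cut-iso}: for topologically ordered isomorphic graphs the join cuts coincide and the corresponding halves are again isomorphic, so the minimum join cut of $\mathfrak{G}_2$ is also $n'$. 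Closure of interpretations under index cuts (\cref{prop:cut-closed-left}, \cref{prop:cut-closed-right}) makes each half an interpretation with fewer vertices, the induction hypothesis gives $\REP{(\mathfrak{G}_1)_{<n'}} \equiv \REP{(\mathfrak{G}_2)_{<n'}}$ and likewise for the $\ge n'$ halves, and the join-cut clause of the realization together with \ruleref{Ctx} yields $\REP{\mathfrak{G}_1} \equiv \REP{\mathfrak{G}_2}$.

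The main obstacle I anticipate is the connected sub-case, specifically justifying that realization splits both graphs at the same join cut. This is where the asymmetry between the two context formers is essential: union pieces may be reshuffled by the isomorphism, which is harmless because $\parallel$ is commutative (\ruleref{PComm}), but a join cut fixes a genuine left-to-right ordering, so the correctness of the whole argument hinges on \cref{prop:join-cut-iso} establishing that an isomorphism cannot permute across a join cut. A secondary bookkeeping hazard is the bijection $g$ in the disconnected case: I must take care that applying \ruleref{PComm} and \ruleref{PAssoc} reorders the parallel product exactly along $g$ rather than matching components positionally.
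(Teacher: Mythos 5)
Your proposal is correct and follows essentially the same route as the paper's proof: complete induction on the vertex count, the base cases closed by \ruleref{Refl}, the union decomposition via \cref{prop:union-decompose} and \cref{prop:union-decompose-iso} with \cref{prop:union-comm} splitting off the commutative structure, the disconnected sub-case closed componentwise by the IH with \ruleref{PAssoc}, \ruleref{PComm}, \ruleref{Ctx}, and the connected sub-case hinging on \cref{prop:join-cut-iso} plus cut-closure (\cref{prop:cut-closed-left,prop:cut-closed-right}) before applying the IH. Your explicit appeal to \cref{prop:union-join-ex} to justify existence of a join cut, and to \cref{prop:intrp-ordered} before invoking \cref{prop:union-comm}, are minor tightenings of steps the paper leaves implicit, not a different argument.
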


  \clearpage
  \section{Detailed Proofs for Lemmas in Section~\ref{sec:language}}
  \hypstricttrue
  \proplabeltrue


\begin{prop}{subst/void}
    If $ \ottmv{x}  \notin   \ottkw{fv} ( \ottnt{M} )  $, then $ \ottnt{M}  \ottsym{[}  \ottnt{V}  \ottsym{/}  \ottmv{x}  \ottsym{]} \ottsym{=} \ottnt{M} $.
\end{prop}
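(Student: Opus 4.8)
The plan is to prove the statement by structural induction on the expression $M$, following the grammar of \Cref{fig:syntax}. The induction hypothesis is that for every proper subexpression $M'$ of $M$, if $ \ottmv{x}  \notin   \ottkw{fv} ( \ottnt{M'} )  $ then $ \ottnt{M'}  \ottsym{[}  \ottnt{V}  \ottsym{/}  \ottmv{x}  \ottsym{]} \ottsym{=} \ottnt{M'} $. Throughout I rely on the standard recursive definitions of $ \ottkw{fv} ( \ottnt{M} ) $ and of capture-avoiding substitution $ \ottnt{M}  \ottsym{[}  \ottnt{V}  \ottsym{/}  \ottmv{x}  \ottsym{]} $.

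First I would dispatch the leaf cases. For $ \ottnt{M} \ottsym{=} \ottnt{c} $ and $ \ottnt{M} \ottsym{=} \ottmv{l} $, substitution acts as the identity by definition, so the claim is immediate (and $ \ottkw{fv} ( \ottnt{M} )  = \emptyset$ anyway). For the variable case $ \ottnt{M} \ottsym{=} \ottmv{y} $ we have $ \ottkw{fv} ( \ottmv{y} )  = \ottsym{\{}  \ottmv{y}  \ottsym{\}}$, so the hypothesis $ \ottmv{x}  \notin   \ottkw{fv} ( \ottmv{y} )  $ forces $ \ottmv{x} \neq \ottmv{y} $, and hence $ \ottmv{y}  \ottsym{[}  \ottnt{V}  \ottsym{/}  \ottmv{x}  \ottsym{]} \ottsym{=} \ottmv{y} $ by the definition of substitution on a non-matching variable.

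Next I would handle the purely congruent cases, where substitution commutes with the term former and the free-variable set is a union of those of the immediate subterms: the four applications ($\ottnt{M} \, \ottnt{N}$, $\ottnt{M}  {}^\circ  \ottnt{N}$, $\ottnt{M}  {}^>  \ottnt{N}$, $\ottnt{M}  {}^<  \ottnt{N}$) and the two pair constructors ($\ottnt{M}  \otimes  \ottnt{N}$, $\ottnt{M}  \odot  \ottnt{N}$). In each case $ \ottmv{x}  \notin   \ottkw{fv} ( \ottnt{M} )  $ together with $ \ottkw{fv} ( \ottnt{M} \, \ottnt{N} )  =  \ottkw{fv} ( \ottnt{M} )  \cup  \ottkw{fv} ( \ottnt{N} ) $ (and analogously for the others) gives $ \ottmv{x}  \notin   \ottkw{fv} ( \ottnt{M} )  $ and $ \ottmv{x}  \notin   \ottkw{fv} ( \ottnt{N} )  $; applying the induction hypothesis to both subterms and then the congruence clause of the substitution definition yields the result.

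The only delicate cases are the binders, and this is where I expect the main (though still routine) work: the abstractions $\lambda  \ottmv{y}  \ottsym{.}  \ottnt{M}$, $\lambda^\circ  \ottmv{y}  \ottsym{.}  \ottnt{M}$, $\lambda^>  \ottmv{y}  \ottsym{.}  \ottnt{M}$, $\lambda^<  \ottmv{y}  \ottsym{.}  \ottnt{M}$, and the two pair eliminations $\ottkw{let} \, \ottmv{y}  \otimes  \ottmv{z}  \ottsym{=}  \ottnt{M} \, \ottkw{in} \, \ottnt{N}$ and $\ottkw{let} \, \ottmv{y}  \odot  \ottmv{z}  \ottsym{=}  \ottnt{M} \, \ottkw{in} \, \ottnt{N}$. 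Here I would split on whether $\ottmv{x}$ coincides with a bound variable. For a single binder $\lambda  \ottmv{y}  \ottsym{.}  \ottnt{M}$ we have $ \ottkw{fv} ( \lambda  \ottmv{y}  \ottsym{.}  \ottnt{M} )  =  \ottkw{fv} ( \ottnt{M} )  \setminus \ottsym{\{}  \ottmv{y}  \ottsym{\}}$. If $ \ottmv{x} \ottsym{=} \ottmv{y} $, substitution does not descend under the binder, so the term is returned unchanged. If $ \ottmv{x} \neq \ottmv{y} $, then from $ \ottmv{x}  \notin   \ottkw{fv} ( \ottnt{M} )  \setminus \ottsym{\{}  \ottmv{y}  \ottsym{\}} $ we get $ \ottmv{x}  \notin   \ottkw{fv} ( \ottnt{M} )  $, the induction hypothesis gives $ \ottnt{M}  \ottsym{[}  \ottnt{V}  \ottsym{/}  \ottmv{x}  \ottsym{]} \ottsym{=} \ottnt{M} $, and the substitution clause $\ottsym{(}  \lambda  \ottmv{y}  \ottsym{.}  \ottnt{M}  \ottsym{)}  \ottsym{[}  \ottnt{V}  \ottsym{/}  \ottmv{x}  \ottsym{]} \ottsym{=} \lambda  \ottmv{y}  \ottsym{.}  \ottsym{(}  \ottnt{M}  \ottsym{[}  \ottnt{V}  \ottsym{/}  \ottmv{x}  \ottsym{]}  \ottsym{)}$ (applicable since $V$ is closed, or after $\alpha$-renaming $\ottmv{y}$ away from the free variables of $V$) closes the case. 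The let-eliminations are identical except that two variables $\ottmv{y}, \ottmv{z}$ are bound in $\ottnt{N}$ while $\ottnt{M}$ is not under the binder: I would apply the induction hypothesis directly to $\ottnt{M}$, and for $\ottnt{N}$ argue that $ \ottmv{x}  \notin  \ottsym{(}   \ottkw{fv} ( \ottnt{N} )  \setminus \ottsym{\{}  \ottmv{y}  \ottsym{,}  \ottmv{z}  \ottsym{\}}  \ottsym{)}$ implies either $\ottmv{x}$ is bound (substitution stops) or $ \ottmv{x}  \notin   \ottkw{fv} ( \ottnt{N} )  $ (use the induction hypothesis). The only subtlety is the usual capture-avoidance side condition on binders, which is harmless because $V$ is a value and we may assume bound variables are chosen fresh for $V$; no real obstacle arises.
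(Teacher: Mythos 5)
Your proof is correct: it is the standard structural induction on $\ottnt{M}$, with the only delicate points (the binder cases and capture-avoidance) handled properly. Note that the paper itself states this proposition with no proof at all, treating it as a routine fact about capture-avoiding substitution, so your argument is precisely the canonical reasoning the paper implicitly relies on.
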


  \subsection{Properties for graph}
  \begin{prop}{graph/iso}
    Let $\mathfrak{G}_{{\mathrm{0}}} = (0, \emptyset, \emptyset)$.
    \begin{statements}
        \item $  \GJOIN{ \mathfrak{G} }{ \mathfrak{G}_{{\mathrm{0}}} }  \ottsym{=} \mathfrak{G} $.
        \item $  \GJOIN{ \mathfrak{G}_{{\mathrm{0}}} }{ \mathfrak{G} }  \ottsym{=} \mathfrak{G} $.
        \item $ \mathfrak{G}  \cup  \mathfrak{G}_{{\mathrm{0}}} \ottsym{=} \mathfrak{G} $.
        \item $ \mathfrak{G}_{{\mathrm{0}}}  \cup  \mathfrak{G} \ottsym{=} \mathfrak{G} $.
        \item $\mathfrak{G}_{{\mathrm{1}}}  \cup  \mathfrak{G}_{{\mathrm{2}}}  \simeq  \mathfrak{G}_{{\mathrm{2}}}  \cup  \mathfrak{G}_{{\mathrm{1}}}$.
        \item $  \GJOIN{ \mathfrak{G}_{{\mathrm{1}}} }{ \ottsym{(}   \GJOIN{ \mathfrak{G}_{{\mathrm{2}}} }{ \mathfrak{G}_{{\mathrm{3}}} }   \ottsym{)} }  \ottsym{=}  \GJOIN{ \ottsym{(}   \GJOIN{ \mathfrak{G}_{{\mathrm{1}}} }{ \mathfrak{G}_{{\mathrm{2}}} }   \ottsym{)} }{ \mathfrak{G}_{{\mathrm{3}}} }  $.
        \item $ \mathfrak{G}_{{\mathrm{1}}}  \cup  \ottsym{(}  \mathfrak{G}_{{\mathrm{2}}}  \cup  \mathfrak{G}_{{\mathrm{3}}}  \ottsym{)} \ottsym{=} \ottsym{(}  \mathfrak{G}_{{\mathrm{1}}}  \cup  \mathfrak{G}_{{\mathrm{2}}}  \ottsym{)}  \cup  \mathfrak{G}_{{\mathrm{3}}} $.
    \end{statements}

    \proof By definition except the fifth one.
    Regarding the fifth one, let $(n_1, v_1, E_1) = \mathfrak{G}_{{\mathrm{1}}}$ and $(n_2, v_2, E_2) = \mathfrak{G}_{{\mathrm{2}}}$.
    We can choose $f:\NAT_{<n_1 + n_2}$ such that $f(i) = i + n_2$ for $0 \le i < n_1$ and $f(i) = i - n_1$ for $n_1 \le i < n_1 + n_2$ as an witness bijection.
\end{prop}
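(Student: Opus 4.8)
The plan is to split the seven statements into two groups according to whether they assert literal equality of graph representations or only isomorphism. Statements (1)--(4), (6), and (7) are all literal equalities of triples $(n, v, E)$, so for each I would unfold the definitions of $\GJOIN{-}{-}$ and $-\cup-$ from \Cref{fig:language:graph-join-union} and check that the three components---vertex count, labelling map, and edge relation---coincide componentwise on the two sides. Only statement (5) genuinely requires producing a witnessing bijection, and it is where I expect to spend real effort.

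For the unit laws (1)--(4), the key observation is that $\mathfrak{G}_0 = (0, \emptyset, \emptyset)$ contributes $0$ to the vertex count, nothing to the labelling, and no edges. When $\mathfrak{G}_0$ is the right argument (statements (1),(3)), the offset $n_1$ equals the original $n$, but the shifted edge set $E_2'$ and, in the join case, the cross-edge set $E' = \{(n, n_1 + m) \mid 0 \le n < n_1,\ 0 \le m < n_2\}$ are both empty because $n_2 = 0$ makes the range for $m$ empty; the labelling is likewise unchanged on $\NAT_{<n}$. When $\mathfrak{G}_0$ is the left argument (statements (2),(4)), the offset $n_1 = 0$ forces $E_2' = E$, $E' = \emptyset$, and reduces the labelling to $v$. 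In every case the result is syntactically the triple $\mathfrak{G}$. For the associativity laws (6) and (7), I would compute both bracketings explicitly: each side has $n_1 + n_2 + n_3$ vertices and the same labelling (the concatenation of $v_1, v_2, v_3$ under offsets $0$, $n_1$, $n_1 + n_2$), so only the edge set is at issue. Union (7) is immediate since no cross-edges arise. Join (6) additionally requires checking that the three families of cross-edges---$\mathfrak{G}_1 \to \mathfrak{G}_2$, $\mathfrak{G}_1 \to \mathfrak{G}_3$, and $\mathfrak{G}_2 \to \mathfrak{G}_3$---are generated identically by both bracketings; this is the one spot in the equality group where the offset arithmetic is delicate enough to warrant care.

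The substantive case is statement (5), commutativity of union \emph{up to isomorphism}: literal equality fails because $\mathfrak{G}_1 \cup \mathfrak{G}_2$ and $\mathfrak{G}_2 \cup \mathfrak{G}_1$ renumber vertices differently. Writing $\mathfrak{G}_1 = (n_1, v_1, E_1)$ and $\mathfrak{G}_2 = (n_2, v_2, E_2)$, I would take the block-swap map $f$ on $\NAT_{<n_1 + n_2}$ given by $f(i) = i + n_2$ for $0 \le i < n_1$ and $f(i) = i - n_1$ for $n_1 \le i < n_1 + n_2$, which exchanges the $\mathfrak{G}_1$-block with the $\mathfrak{G}_2$-block. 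It is a bijection because its two images $\{n_2, \dots, n_1 + n_2 - 1\}$ and $\{0, \dots, n_2 - 1\}$ partition $\NAT_{<n_1+n_2}$. It then remains to verify the two clauses of the isomorphism definition (which is symmetric in its arguments, so the orientation of $f$ is immaterial): the labelling condition, by the two-way case split on whether an index lies below or above the split point; and the edge condition, namely that $f$ carries the low copy of $E_1$ onto its shifted copy in $\mathfrak{G}_2 \cup \mathfrak{G}_1$ and carries the shifted copy of $E_2$ back onto $E_2$.

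Both the vertex and edge checks in (5) are short case analyses, so the genuine obstacle throughout is purely the bookkeeping of the offsets $n_1$ and $n_1 + n_2$; accordingly I would slow down only at statement (5) and at the three cross-edge families of statement (6), and dispatch the remaining cases by direct unfolding.
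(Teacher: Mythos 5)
Your proposal is correct and follows exactly the paper's route: the six equality statements are dispatched by unfolding the definitions of join and union, and statement (5) is handled with precisely the same block-swap bijection $f(i) = i + n_2$ for $0 \le i < n_1$, $f(i) = i - n_1$ for $n_1 \le i < n_1 + n_2$ that the paper exhibits as witness. You simply spell out the component-wise bookkeeping that the paper's ``by definition'' compresses.
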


\begin{prop}{graph/span}
    \noindent
    \begin{statements}
        \item $\mathfrak{G}_{{\mathrm{1}}}  \cup  \mathfrak{G}_{{\mathrm{2}}}  \mathrel{<_\rightarrow}   \GJOIN{ \mathfrak{G}_{{\mathrm{1}}} }{ \mathfrak{G}_{{\mathrm{2}}} } $.
        \item $\mathfrak{G}_{{\mathrm{1}}}  \cup  \ottsym{(}   \GJOIN{ \mathfrak{G}_{{\mathrm{2}}} }{ \mathfrak{G}_{{\mathrm{3}}} }   \ottsym{)}  \mathrel{<_\rightarrow}   \GJOIN{ \ottsym{(}  \mathfrak{G}_{{\mathrm{1}}}  \cup  \mathfrak{G}_{{\mathrm{2}}}  \ottsym{)} }{ \mathfrak{G}_{{\mathrm{3}}} } $.
        \item $\ottsym{(}   \GJOIN{ \mathfrak{G}_{{\mathrm{1}}} }{ \mathfrak{G}_{{\mathrm{2}}} }   \ottsym{)}  \cup  \mathfrak{G}_{{\mathrm{3}}}  \mathrel{<_\rightarrow}   \GJOIN{ \mathfrak{G}_{{\mathrm{1}}} }{ \ottsym{(}  \mathfrak{G}_{{\mathrm{2}}}  \cup  \mathfrak{G}_{{\mathrm{3}}}  \ottsym{)} } $.
    \end{statements}

    \proof By definition.
\end{prop}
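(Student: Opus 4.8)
The plan is to reduce each of the three claims to a mechanical unfolding of the definitions of $\cup$, $\GJOIN{\cdot}{\cdot}$, and $\mathrel{<_\rightarrow}$. Recall that $\mathfrak{G}_{{\mathrm{1}}}  \mathrel{<_\rightarrow}  \mathfrak{G}_{{\mathrm{2}}}$ is defined only when the two representations share the same vertex count $n$ and the same labeling $v$, and then additionally demands $E_1 \subseteq E_2$. So for each statement I would first confirm that the two sides agree on their $(n,v)$ components, and only then compare edge sets.

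The $(n,v)$ agreement is immediate: both $\cup$ and $\GJOIN{\cdot}{\cdot}$ produce $n_1 + n_2$ vertices with the identical concatenated labeling $v$, since the clause defining $v$ is literally the same in both definitions. For the nested constructions in the second and third statements, the total vertex count is $n_1 + n_2 + n_3$ and the labeling is the left-to-right concatenation of $v_1, v_2, v_3$ on both sides, irrespective of whether $\cup$ or $*$ is applied at each level; this is exactly the on-the-nose associativity of vertex concatenation already recorded for these constructions. Hence in every case the spanning relation is well-typed and collapses to edge containment.

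The core of the argument is then edge-set inclusion. Writing $\mathfrak{G}_i = (n_i, v_i, E_i)$, in the first statement both sides carry $E_1$ together with the shifted copy $\{(n_1 + n, n_1 + m) \mid (n,m) \in E_2\}$, while the join additionally contains the complete bipartite set $E' = \{(n, n_1 + m) \mid 0 \le n < n_1,\ 0 \le m < n_2\}$; thus the union's edge set is a subset of the join's, giving $\mathfrak{G}_{{\mathrm{1}}}  \cup  \mathfrak{G}_{{\mathrm{2}}}  \mathrel{<_\rightarrow}   \GJOIN{ \mathfrak{G}_{{\mathrm{1}}} }{ \mathfrak{G}_{{\mathrm{2}}} } $. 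For the second and third statements I would expand both sides as the union of three shifted copies of $E_1, E_2, E_3$ (shifted by $0$, $n_1$, and $n_1 + n_2$ respectively) plus bipartite ordering sets. The shifted copies coincide on the two sides, so the comparison reduces to the bipartite sets: in the second statement the left side contributes only edges from the $\mathfrak{G}_{{\mathrm{2}}}$-block (sources in $[n_1, n_1 + n_2)$) to the $\mathfrak{G}_{{\mathrm{3}}}$-block, whereas the right side contributes edges from all of $\mathfrak{G}_{{\mathrm{1}}} * \mathfrak{G}_{{\mathrm{2}}}$ (sources in $[0, n_1 + n_2)$) to the $\mathfrak{G}_{{\mathrm{3}}}$-block; symmetrically, in the third statement the left side carries the $\mathfrak{G}_{{\mathrm{1}}}$-to-$\mathfrak{G}_{{\mathrm{2}}}$ edges while the right side carries the $\mathfrak{G}_{{\mathrm{1}}}$-to-$(\mathfrak{G}_{{\mathrm{2}}} \cup \mathfrak{G}_{{\mathrm{3}}})$ edges. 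In both cases the extra edges present only on the right are precisely the $\mathfrak{G}_{{\mathrm{1}}}$-to-$\mathfrak{G}_{{\mathrm{3}}}$ ordering edges, so $E_{\text{LHS}} \subseteq E_{\text{RHS}}$.

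The only genuine obstacle is the index arithmetic: correctly tracking how the shift-by-$n_1$ and shift-by-$n_2$ operations compose under nesting, so as to verify that the shifted copies of $E_1, E_2, E_3$ genuinely match on both sides and that the residual bipartite sets stand in the claimed inclusion. This is purely bookkeeping, which is presumably why the statement is discharged ``by definition''; concretely I would fix the three consecutive index ranges $[0, n_1)$, $[n_1, n_1 + n_2)$, and $[n_1 + n_2, n_1 + n_2 + n_3)$ and read off each edge family on them, after which the inclusions are visible by inspection.
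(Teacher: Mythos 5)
Your proposal is correct and matches the paper's proof, which is simply ``by definition'': both sides of each statement have the same vertex count and concatenated labeling, so the spanning relation reduces to the edge-set inclusions you verify, and your bookkeeping of the shifted copies of $E_1, E_2, E_3$ and the residual bipartite ordering edges (the $\mathfrak{G}_1$-to-$\mathfrak{G}_3$ edges present only on the right in statements 2 and 3) is exactly right.
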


\begin{prop}{graph/iso/num/bindings}
    If $(n_1, v_1, E_1) \sim (n_2, v_2, E_2)$, then $|\{\,n' \in \NAT_{<n_1} \mid v_1(n') = \ottnt{b}\,\}| = |\{\,n' \in \NAT_{<n_2} \mid v_2(n') = \ottnt{b}\,\}|$ for any $\ottnt{b}$.

    \proof
    By definition, $n_1 = n_2$, and we have some bijection $f$ between $\NAT_{<n_1}$ and $\NAT_{<n_1}$ such that
    \begin{gather}
        v_2 = v_1 \circ f.
    \end{gather}
    So, the right-hand side of the goal becomes to
    \begin{gather*}
        |\{\,n' \in \NAT_{<n_1} \mid v_1(f(n')) = \ottnt{b}\,\}|,
    \end{gather*}
    which equals the left-hand side of the goal since $f$ is a bijection.
\end{prop}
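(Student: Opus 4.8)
The plan is to unfold the definition of isomorphic graph representations and reduce the claim to the elementary fact that a bijection preserves the cardinality of preimages. By the definition of $\simeq$ for graph representations, the hypothesis $(n_1, v_1, E_1) \simeq (n_2, v_2, E_2)$ already forces $n_1 = n_2$; call this common value $n$. Moreover it supplies a bijection $f : \NAT_{<n} \to \NAT_{<n}$ with $v_1 \circ f = v_2$. The edge-preservation condition $E_1 = \{(f(i), f(j)) \mid (i,j) \in E_2\}$ plays no role here, since the statement counts occurrences of a label, not edges; only the label-preserving bijection $f$ is needed.

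First I would rewrite the right-hand set using $v_2 = v_1 \circ f$:
\[
\{\, n' \in \NAT_{<n} \mid v_2(n') = \ottnt{b} \,\}
= \{\, n' \in \NAT_{<n} \mid v_1(f(n')) = \ottnt{b} \,\}
= f^{-1}\!\left(\{\, m \in \NAT_{<n} \mid v_1(m) = \ottnt{b} \,\}\right).
\]
Since $f$ is a bijection, one has $|f^{-1}(S)| = |S|$ for every $S \subseteq \NAT_{<n}$. Taking $S = \{\, m \mid v_1(m) = \ottnt{b} \,\}$ gives exactly the desired equality of cardinalities with the left-hand side.

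There is essentially no obstacle in this proof: the only content is the definitional fact that an isomorphism carries a label-preserving bijection, together with the standard observation that bijections preserve preimage cardinalities. If a fully self-contained argument is wanted, the single point worth spelling out is that $f$ restricts to a bijection between $\{\,n' \mid v_2(n') = \ottnt{b}\,\}$ and $\{\,m \mid v_1(m) = \ottnt{b}\,\}$; this follows from injectivity of $f$ (well-definedness and injectivity of the restriction) and from surjectivity of $f$ combined with $v_2 = v_1 \circ f$ (every element of the target preimage is hit). Accordingly, I expect this to be a short two-line argument rather than an inductive one, in contrast to the neighbouring graph lemmas.
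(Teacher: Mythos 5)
Your proposal is correct and follows essentially the same route as the paper's proof: unfold the definition of $\simeq$ to obtain $n_1 = n_2$ and the bijection $f$ with $v_2 = v_1 \circ f$, rewrite the right-hand set as a preimage under $f$, and conclude equality of cardinalities because $f$ is a bijection. The extra care you take in spelling out that $f$ restricts to a bijection between the two label-preimage sets is just a more explicit rendering of the paper's final one-line appeal to bijectivity.
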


\begin{prop}{graph/span/num/bindings}
    If $(n_1, v_1, E_1)  \mathrel{<_\rightarrow}  (n_2, v_2, E_2)$, then $|\{\,n' \in \NAT_{<n_1} \mid v_1(n') = \ottnt{b}\,\}| = |\{\,n' \in \NAT_{<n_2} \mid v_2(n') = \ottnt{b}\,\}|$ for any $\ottnt{b}$.

    \proof By definition, $n_1 = n_2$ and $v_1 = v_2$. So, the goal is obvious.
\end{prop}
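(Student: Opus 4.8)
The plan is to proceed directly from the definition of the spanning relation, which makes the statement essentially immediate. Recall that in the definition of spanning graph representations, the judgment $\mathfrak{G}_{{\mathrm{1}}} \mathrel{<_\rightarrow} \mathfrak{G}_{{\mathrm{2}}}$ is only declared for graph representations of the shape $\mathfrak{G}_{{\mathrm{1}}} = (n, v, E_1)$ and $\mathfrak{G}_{{\mathrm{2}}} = (n, v, E_2)$ that already share both a common vertex count $n$ and a common labeling map $v$; the relation constrains only the edge sets, demanding $E_1 \subseteq E_2$. So the first and only real step is to unfold this definition against the hypothesis $(n_1, v_1, E_1) \mathrel{<_\rightarrow} (n_2, v_2, E_2)$, which forces $n_1 = n_2$ and $v_1 = v_2$.

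Writing $n := n_1 = n_2$ and $v := v_1 = v_2$, the two sets whose cardinalities are compared both reduce syntactically to $\{\, n' \in \NAT_{<n} \mid v(n') = \ottnt{b} \,\}$. They are therefore the very same set, so their cardinalities agree on the nose, for every $\ottnt{b}$, with no case analysis and no auxiliary construction required.

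It is worth contrasting this with the preceding proposition, the analogous counting statement for \emph{isomorphic} graph representations. There the relation $\simeq$ supplies only a bijection $f$ on vertices with $v_2 = v_1 \circ f$, so one must argue that precomposition with a bijection preserves the cardinality of each label-fibre $\{\, n' \mid v(n') = \ottnt{b} \,\}$. For the spanning relation no such permutation intervenes: the labeling is preserved exactly, and the edge-set refinement $E_1 \subseteq E_2$ is simply irrelevant to the vertex labels being counted.

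Consequently I expect no genuine obstacle. The single point requiring care is purely bookkeeping: to invoke the actual definition of $\mathrel{<_\rightarrow}$ (which already fixes a shared $n$ and $v$) rather than treating it as an arbitrary relation between two independently-presented triples, since it is precisely that built-in sharing that collapses the argument to a one-line observation.
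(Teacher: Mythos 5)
Your proposal is correct and takes exactly the same route as the paper's proof: unfold the definition of $\mathrel{<_\rightarrow}$, which forces $n_1 = n_2$ and $v_1 = v_2$, so the two label-fibres are literally the same set. The paper's proof is the same one-liner, and your contrast with the isomorphism case accurately reflects why that neighbouring lemma needs more work while this one does not.
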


\begin{prop}{graph/iso/refl}
    $\mathfrak{G}  \simeq  \mathfrak{G}$.

    \proof We can choose the identity function as the witness bijection for the isomorphism.
\end{prop}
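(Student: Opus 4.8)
The statement is reflexivity of the isomorphism relation $\simeq$ on graph representations, so the plan is to unfold the Definition of isomorphic graph representations and supply an explicit witness. First I would recall that for $\mathfrak{G}_1 = (n, v_1, E_1)$ and $\mathfrak{G}_2 = (n, v_2, E_2)$ the judgement $\mathfrak{G}_1 \simeq \mathfrak{G}_2$ holds exactly when there is a bijection $f : \NAT_{<n} \to \NAT_{<n}$ with $E_1 = \{(f(k), f(l)) \mid (k,l) \in E_2\}$ and $v_1 \circ f = v_2$. Since here both arguments coincide, $\mathfrak{G}_1 = \mathfrak{G}_2 = \mathfrak{G} = (n, v, E)$, the two representations trivially share the same vertex count $n$, which is the only implicit side condition the definition imposes before a witness can be named.

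The key step is then to take $f$ to be the identity function $\mathrm{id}$ on $\NAT_{<n}$. This is patently a bijection, discharging the first requirement. For the edge condition I would compute $\{(\mathrm{id}(k), \mathrm{id}(l)) \mid (k,l) \in E\} = \{(k,l) \mid (k,l) \in E\} = E$, so the required equation $E = E$ holds by reflexivity of set equality; for the labelling condition, $v \circ \mathrm{id} = v$ holds by the definition of composition with the identity. No induction on the structure of $\mathfrak{G}$ is needed, because the witness is uniform in $n$, $v$, and $E$.

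I do not expect any real obstacle: the argument is a single verification once the identity witness is supplied, and it is the standard reason why any relation defined through ``there exists a structure-preserving bijection'' is reflexive. The only point worth stating carefully is that the bijection, the edge-preservation equation, and the vertex-labelling equation are each checked against the precise form given in the Definition, so that this lemma can later be cited cleanly when assembling the fact that $\simeq$ is an equivalence relation, alongside the symmetry and transitivity on which the surrounding graph development relies.
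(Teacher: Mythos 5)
Your proof is correct and takes exactly the same approach as the paper: choose the identity function on $\NAT_{<n}$ as the witness bijection and verify the edge and labelling conditions trivially. The paper states this in one line; your version simply spells out the verification in full detail.
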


\begin{prop}{graph/iso/join}
    If $\mathfrak{G}_{{\mathrm{11}}}  \simeq  \mathfrak{G}_{{\mathrm{12}}}$ and $\mathfrak{G}_{{\mathrm{21}}}  \simeq  \mathfrak{G}_{{\mathrm{22}}}$, then $ \GJOIN{ \mathfrak{G}_{{\mathrm{11}}} }{ \mathfrak{G}_{{\mathrm{21}}} }   \simeq   \GJOIN{ \mathfrak{G}_{{\mathrm{12}}} }{ \mathfrak{G}_{{\mathrm{22}}} } $.

    \proof Let $f_1$ and $f_2$ be the first and second isomorphism, respectively; and $n_1 =  | \mathfrak{G}_{{\mathrm{11}}} |_{\bullet} $ and $n_2 =  | \mathfrak{G}_{{\mathrm{21}}} |_{\bullet} $.
    Then, the following $f$ becomes the isomorphism for the goal.
    \begin{align*}
        f(n) = \begin{cases}
            f_1(n) & \text{if $0 \le n < n_1$} \\
            f_2(n - n_1) & \text{if $n_1 \le n < n_1 + n_2$}
        \end{cases}
    \end{align*}
\end{prop}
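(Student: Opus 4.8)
The plan is to exhibit an explicit witnessing bijection for the target isomorphism, assembled block-diagonally from the two given witnesses. First I would record that because $\mathfrak{G}_{{\mathrm{11}}} \simeq \mathfrak{G}_{{\mathrm{12}}}$ and $\mathfrak{G}_{{\mathrm{21}}} \simeq \mathfrak{G}_{{\mathrm{22}}}$, the members of each pair share a vertex count; write $n_1 = |\mathfrak{G}_{{\mathrm{11}}}|_{\bullet}$ and $n_2 = |\mathfrak{G}_{{\mathrm{21}}}|_{\bullet}$, so both joins have exactly $n_1 + n_2$ vertices, which is the precondition for the isomorphism relation even to be stated. Let $f_1$ and $f_2$ be the witnessing bijections of the two hypotheses, so that $v_{11} \circ f_1 = v_{12}$ and $E_{11} = \{(f_1(a), f_1(b)) \mid (a,b) \in E_{12}\}$, and likewise for $f_2$.

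Next I would define the candidate bijection $f$ on $\NAT_{<n_1+n_2}$ by applying $f_1$ on the first block and a shifted copy of $f_2$ on the second: $f(n) = f_1(n)$ for $n < n_1$, and $f(n) = n_1 + f_2(n - n_1)$ for $n_1 \le n < n_1 + n_2$. The $n_1 +$ offset is essential, since $f$ must map the second block into the second block. That $f$ is a bijection follows from $f_1$ and $f_2$ being bijections together with the fact that $f$ preserves the two-block partition. The label condition $v' \circ f = v''$, where $v'$ and $v''$ are the vertex labelings of the two joins, is checked by cases on the block and reduces directly to $v_{11} \circ f_1 = v_{12}$ and $v_{21} \circ f_2 = v_{22}$.

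The substance of the argument is the edge condition, and this is where I would concentrate effort. The key observation is that the edge set of a join splits into three pairwise-disjoint pieces determined purely by the block structure: the intra-first-block edges (a copy of $E_{11}$, resp.\ $E_{12}$), the intra-second-block edges (the shifted $E_{21}'$, resp.\ $E_{22}'$), and the complete forward bipartite bundle $E' = \{(a, n_1 + b) \mid a < n_1,\, b < n_2\}$. Crucially $E'$ depends only on $n_1$ and $n_2$, so it is literally the same set on both sides. Since $f$ respects the partition, I would verify that $f$ carries each piece onto its counterpart: the two intra-block pieces are handled by the respective hypotheses, and the bipartite bundle is carried to itself precisely because $f_1$ and $f_2$ are bijections onto their blocks. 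Reassembling the three disjoint images yields exactly the edge set of $\GJOIN{\mathfrak{G}_{{\mathrm{11}}}}{\mathfrak{G}_{{\mathrm{21}}}}$.

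The main obstacle is the careful index bookkeeping in the edge step: one must track which of the three edge families a given pair belongs to, apply the correct component of $f$ with the right offset, and confirm that the three images are both disjoint and exhaustive. Everything else is routine, and no earlier result is strictly needed beyond the definitions of join and of isomorphism; acyclicity is inherited automatically, since $f$ merely relabels a graph already acyclic by construction of the join.
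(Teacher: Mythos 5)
Your proof is correct and follows the same block-diagonal construction as the paper's: apply $f_1$ on the first block and a shifted copy of $f_2$ on the second. In fact your write-up is more careful than the paper's, whose displayed witness reads $f(n) = f_2(n - n_1)$ on the second block --- missing the $n_1 + {}$ offset --- and as literally written is not even a bijection on $\NAT_{<n_1+n_2}$ (its image lies inside $[0, \max(n_1,n_2))$); your observation that the offset is essential, together with your explicit three-way split of the join's edge set (the two intra-block copies plus the bipartite bundle $E'$, which is the same set on both sides and is preserved because $f$ respects the blocks), supplies exactly the verification the paper leaves implicit.
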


\begin{prop}{graph/iso/union}
    If $\mathfrak{G}_{{\mathrm{11}}}  \simeq  \mathfrak{G}_{{\mathrm{12}}}$ and $\mathfrak{G}_{{\mathrm{21}}}  \simeq  \mathfrak{G}_{{\mathrm{22}}}$, then $\mathfrak{G}_{{\mathrm{11}}}  \cup  \mathfrak{G}_{{\mathrm{21}}}  \simeq  \mathfrak{G}_{{\mathrm{12}}}  \cup  \mathfrak{G}_{{\mathrm{22}}}$.

    \proof Similar to the proof of \propref{graph/iso/join}.
\end{prop}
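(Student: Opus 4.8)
The plan is to follow the recipe of \propref{graph/iso/join}: out of the two hypothesized isomorphisms I would assemble a single bijection on the index set of the combined graph and then check that it respects labels and edges. First I would unpack the data. Write $\mathfrak{G}_{11} = (n_1, v_{11}, E_{11})$ and $\mathfrak{G}_{12} = (n_1, v_{12}, E_{12})$; since $\mathfrak{G}_{11} \simeq \mathfrak{G}_{12}$ the two share the vertex count $n_1$, and there is a bijection $f_1$ on $\NAT_{<n_1}$ with $E_{11} = \{(f_1(i), f_1(j)) \mid (i,j) \in E_{12}\}$ and $v_{11} \circ f_1 = v_{12}$. Symmetrically write $\mathfrak{G}_{21} = (n_2, v_{21}, E_{21})$ and $\mathfrak{G}_{22} = (n_2, v_{22}, E_{22})$ with witnessing bijection $f_2$ on $\NAT_{<n_2}$. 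Then I would propose the combined bijection $f$ on $\NAT_{<n_1+n_2}$ given by
\[
  f(n) =
  \begin{cases}
    f_1(n)             & \text{if } 0 \le n < n_1, \\
    n_1 + f_2(n - n_1) & \text{if } n_1 \le n < n_1 + n_2.
  \end{cases}
\]
The shift by $n_1$ in the second clause is the one bit of bookkeeping worth stating explicitly: it keeps the two blocks inside their own index ranges.

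The verification then breaks into three routine checks. For bijectivity, the first clause is $f_1$, a permutation of $[0, n_1)$, and the second is a permutation of $[n_1, n_1 + n_2)$, so $f$ permutes each block separately and hence permutes all of $\NAT_{<n_1+n_2}$. For label preservation, letting $v_L$ and $v_R$ denote the label maps of $\mathfrak{G}_{11} \cup \mathfrak{G}_{21}$ and $\mathfrak{G}_{12} \cup \mathfrak{G}_{22}$, the identity $v_L \circ f = v_R$ reduces on the first block to $v_{11} \circ f_1 = v_{12}$ and on the second block to $v_{21} \circ f_2 = v_{22}$, both of which are hypotheses. For edge preservation, the definition of $\cup$ gives the edge set of $\mathfrak{G}_{12} \cup \mathfrak{G}_{22}$ as $E_{12} \cup \{(n_1 + i, n_1 + j) \mid (i,j) \in E_{22}\}$; applying $f$ pointwise sends the first part to $\{(f_1(i), f_1(j)) \mid (i,j) \in E_{12}\} = E_{11}$ and the second part to $\{(n_1 + f_2(i), n_1 + f_2(j)) \mid (i,j) \in E_{22}\} = \{(n_1 + a, n_1 + b) \mid (a,b) \in E_{21}\}$, which is precisely the shifted copy of $E_{21}$ appearing in $\mathfrak{G}_{11} \cup \mathfrak{G}_{21}$. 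Hence the image of the right-hand edge set under $f$ is exactly the left-hand edge set.

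I do not anticipate a real obstacle. The structural fact that makes the three checks decouple so cleanly is that $\cup$ introduces no edges between the two blocks, in contrast with the join $\GJOIN{\mathfrak{G}_{11}}{\mathfrak{G}_{21}}$, whose proof must additionally verify that the full bipartite set of cross edges is carried onto itself by $f$. Since no cross edges are present here, the argument is, if anything, a strict simplification of \propref{graph/iso/join}; the only place demanding care is the consistent $+\,n_1$ offset on the second block, which must be applied uniformly to indices, labels, and edges.
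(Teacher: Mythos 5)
Your proof is correct and takes essentially the same approach as the paper: the paper's proof defers to \propref{graph/iso/join}, which constructs exactly this blockwise bijection ($f_1$ on the first block, $f_2$ shifted by $n_1$ on the second), and your three checks (bijectivity, labels, edges) are the routine verification the paper leaves implicit. Your explicit $+\,n_1$ offset is indeed the right bookkeeping (the displayed formula in the paper's \propref{graph/iso/join} actually omits this shift in its second clause), and your remark that union requires no cross-edge check is precisely why the paper can dismiss this case as ``similar.''
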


\begin{prop}{graph/span/refl}
    $\mathfrak{G}  \mathrel{<_\rightarrow}  \mathfrak{G}$

    \proof By definition.
\end{prop}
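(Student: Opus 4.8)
The plan is to unfold the definition of the spanning relation and observe that reflexivity collapses to reflexivity of subset inclusion. Recall that for $\mathfrak{G}_{{\mathrm{1}}} = (n, v, E_1)$ and $\mathfrak{G}_{{\mathrm{2}}} = (n, v, E_2)$ sharing the same vertex count $n$ and the same labeling $v$, we have $\mathfrak{G}_{{\mathrm{1}}} \mathrel{<_\rightarrow} \mathfrak{G}_{{\mathrm{2}}}$ exactly when $E_1 \subseteq E_2$. So the entire content of the spanning relation between two graphs on a fixed vertex set and labeling is an inclusion of their edge relations.

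First I would write $\mathfrak{G} = (n, v, E)$. To establish $\mathfrak{G} \mathrel{<_\rightarrow} \mathfrak{G}$, I instantiate the definition with $\mathfrak{G}_{{\mathrm{1}}} = \mathfrak{G}_{{\mathrm{2}}} = \mathfrak{G}$: the vertex-count condition $n = n$ and the labeling condition $v = v$ hold trivially because both arguments are literally the same triple, and the remaining obligation is $E \subseteq E$, which holds by reflexivity of set inclusion. This matches the structure of the companion lemma \propref{graph/iso/refl}, where reflexivity of $\simeq$ was witnessed by the identity bijection; here the analogue is that the identity edge inclusion witnesses $\mathrel{<_\rightarrow}$.

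There is no real obstacle: the statement is a direct definitional unfolding, which is why the expected justification is simply ``by definition.'' The only thing worth double-checking is that the definition of $\mathrel{<_\rightarrow}$ already presupposes a common $n$ and common $v$ for the two graphs being compared, so that no nontrivial reconciliation of vertex sets or labelings is needed in the reflexive case — and indeed it does, so the proof is immediate.
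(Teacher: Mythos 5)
Your proof is correct and is exactly the argument the paper compresses into ``by definition'': since $\mathrel{<_\rightarrow}$ is only defined between representations sharing the same vertex count $n$ and labeling $v$, instantiating both sides with $\mathfrak{G}=(n,v,E)$ reduces the claim to $E\subseteq E$, which is immediate. No gap; your version just spells out the unfolding the paper leaves implicit.
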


\begin{prop}{graph/span/join}
    If $\mathfrak{G}_{{\mathrm{11}}}  \mathrel{<_\rightarrow}  \mathfrak{G}_{{\mathrm{12}}}$ and $\mathfrak{G}_{{\mathrm{21}}}  \mathrel{<_\rightarrow}  \mathfrak{G}_{{\mathrm{22}}}$, then $ \GJOIN{ \mathfrak{G}_{{\mathrm{11}}} }{ \mathfrak{G}_{{\mathrm{21}}} }   \mathrel{<_\rightarrow}   \GJOIN{ \mathfrak{G}_{{\mathrm{12}}} }{ \mathfrak{G}_{{\mathrm{22}}} } $.

    \proof This follows by \propref{graph/iso/join} and the definition of the spanning relation.
\end{prop}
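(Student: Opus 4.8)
The plan is to unfold the definition of the spanning relation $\mathrel{<_\rightarrow}$ on both hypotheses and then compute the two joins directly from the clause for $\GJOIN{-}{-}$ in \Cref{fig:language:graph-join-union}, checking the three components of a graph representation—vertex count, labeling, and edge relation—one at a time.

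First I would record what the hypotheses give. Since $\mathfrak{G}_{11} \mathrel{<_\rightarrow} \mathfrak{G}_{12}$, the definition forces these two representations to share their vertex count and labeling, so I may write $\mathfrak{G}_{11} = (n_1, v_1, E_{11})$ and $\mathfrak{G}_{12} = (n_1, v_1, E_{12})$ with $E_{11} \subseteq E_{12}$; likewise $\mathfrak{G}_{21} = (n_2, v_2, E_{21})$ and $\mathfrak{G}_{22} = (n_2, v_2, E_{22})$ with $E_{21} \subseteq E_{22}$. The key point captured here is that the \emph{only} difference between the two members of each pair is in the edge relation.

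Next I would evaluate the two joins. Both $\GJOIN{\mathfrak{G}_{11}}{\mathfrak{G}_{21}}$ and $\GJOIN{\mathfrak{G}_{12}}{\mathfrak{G}_{22}}$ have $n_1 + n_2$ vertices, and—because the labeling of a join is assembled solely from the component labelings $v_1$ and $v_2$, which agree across each pair—they carry the identical labeling $v$. For the edges, the join of $\mathfrak{G}_{1i}$ and $\mathfrak{G}_{2i}$ is $E_{1i} \cup E_{2i}' \cup E'$, where $E_{2i}' = \{(n_1 + n, n_1 + m) \mid (n,m) \in E_{2i}\}$ is the shift of the second component's edges by $n_1$ and $E' = \{(n, n_1 + m) \mid 0 \le n < n_1,\, 0 \le m < n_2\}$ is the full set of cross edges. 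Crucially, $E'$ depends only on $n_1$ and $n_2$, so it is the same set for both joins.

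Finally I would assemble the inclusion. From $E_{11} \subseteq E_{12}$ and the monotonicity of the shift operation (so $E_{21} \subseteq E_{22}$ yields $E_{21}' \subseteq E_{22}'$), together with $E' \subseteq E'$, I obtain $E_{11} \cup E_{21}' \cup E' \subseteq E_{12} \cup E_{22}' \cup E'$. Since the two joins already agree on vertex count and labeling, this inclusion is precisely the condition required by the definition of $\mathrel{<_\rightarrow}$, giving $\GJOIN{\mathfrak{G}_{11}}{\mathfrak{G}_{21}} \mathrel{<_\rightarrow} \GJOIN{\mathfrak{G}_{12}}{\mathfrak{G}_{22}}$. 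I expect no genuine obstacle, as the argument is purely definitional; the only thing to be careful about is verifying that the labelings and the cross-edge set $E'$ really do coincide between the two joins. This bookkeeping is the analogue, for the spanning relation, of the bijection argument that \propref{graph/iso/join} carries out for isomorphism, which is why that lemma can stand in for the present one in a more terse proof.
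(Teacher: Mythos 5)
Your proof is correct, and it establishes exactly what the paper's one-line proof gestures at: the spanning relation for the two joins is checked by verifying that they share vertex count and labeling and that their edge sets are related by inclusion. The packaging differs slightly, and the difference is worth noting. The paper cites \propref{graph/iso/join} together with the definition of spanning, but that lemma cannot literally be applied to the hypotheses, since representations related by $\mathrel{<_\rightarrow}$ are in general \emph{not} isomorphic (their edge sets differ); the citation can only stand in for the observation that the join construction assembles its vertex count and labeling from the components independently of any edges. You verify this directly from the defining clause of $\GJOIN{ \mathfrak{G}_{{\mathrm{1}}} }{ \mathfrak{G}_{{\mathrm{2}}} }$ in \Cref{fig:language:graph-join-union}, and you also supply the two facts the paper leaves entirely implicit: the index shift applied to the second component's edges is monotone with respect to inclusion, and the cross-edge set $E'$ depends only on $n_1$ and $n_2$, hence coincides in both joins. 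So your argument is a self-contained (and arguably more precise) rendering of the same underlying computation; it buys independence from \propref{graph/iso/join}, at the cost of the explicit bookkeeping that the paper compresses into a citation that fits only by analogy.
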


\begin{prop}{graph/span/union}
    If $\mathfrak{G}_{{\mathrm{11}}}  \mathrel{<_\rightarrow}  \mathfrak{G}_{{\mathrm{12}}}$ and $\mathfrak{G}_{{\mathrm{21}}}  \mathrel{<_\rightarrow}  \mathfrak{G}_{{\mathrm{22}}}$, then $\mathfrak{G}_{{\mathrm{11}}}  \cup  \mathfrak{G}_{{\mathrm{21}}}  \mathrel{<_\rightarrow}  \mathfrak{G}_{{\mathrm{12}}}  \cup  \mathfrak{G}_{{\mathrm{22}}}$.

    \proof Similar to the proof of \propref{graph/span/join}.
\end{prop}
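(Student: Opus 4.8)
The plan is to unfold the definition of the spanning relation on both sides and reduce the goal to a single inclusion of edge sets, paralleling \propref{graph/span/join} with union in place of join. First I would record what the hypotheses supply: by the definition of $\mathrel{<_\rightarrow}$, the graphs $\mathfrak{G}_{{\mathrm{11}}}$ and $\mathfrak{G}_{{\mathrm{12}}}$ carry the same vertex count $n_1$ and the same labelling $v_1$ with $E_{11}\subseteq E_{12}$, and symmetrically $\mathfrak{G}_{{\mathrm{21}}}$ and $\mathfrak{G}_{{\mathrm{22}}}$ share $n_2,v_2$ with $E_{21}\subseteq E_{22}$.

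Next I would compute both unions according to \Cref{fig:language:graph-join-union}. Each union has $n_1+n_2$ vertices, and the two vertex maps coincide, since the union's map is determined only by $n_1$ and the two component maps, which agree on each side; this matching of the vertex part is exactly what \propref{graph/iso/union} isolates, and it is what the spanning relation requires for everything except edges. Writing $E_{21}'$ and $E_{22}'$ for the shifted sets $\{(n_1+n,n_1+m)\mid(n,m)\in E_{2i}\}$, the two unions carry edge sets $E_{11}\cup E_{21}'$ and $E_{12}\cup E_{22}'$. Since $E_{11}\subseteq E_{12}$ and the index shift is monotone (so $E_{21}\subseteq E_{22}$ yields $E_{21}'\subseteq E_{22}'$), the inclusion $E_{11}\cup E_{21}'\subseteq E_{12}\cup E_{22}'$ follows, which by the definition of $\mathrel{<_\rightarrow}$ gives $\mathfrak{G}_{{\mathrm{11}}}\cup\mathfrak{G}_{{\mathrm{21}}}\mathrel{<_\rightarrow}\mathfrak{G}_{{\mathrm{12}}}\cup\mathfrak{G}_{{\mathrm{22}}}$, as required.

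Since every step is just unfolding definitions together with monotonicity of union and of the index shift, I expect no genuine obstacle; union is in fact slightly easier than join because it contributes no cross-edge set $E'$, so the edge comparison has one fewer component than in \propref{graph/span/join}. The only point that needs a moment's care is confirming that the two vertex maps produced by the union are literally equal rather than merely isomorphic, as the spanning relation is defined on graphs sharing the same $(n,v)$; this is settled either by the direct computation above or by appeal to \propref{graph/iso/union}.
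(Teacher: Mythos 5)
Your proof is correct and takes essentially the same route as the paper: the paper's proof simply defers to the join case, which in turn unfolds the definition of $\mathrel{<_\rightarrow}$ together with the union/join construction exactly as you do, the union case being easier precisely because the cross-edge set $E'$ is absent. One cosmetic caveat: \propref{graph/iso/union} concerns isomorphism (agreement up to a relabelling bijection) rather than literal equality of vertex maps, so your direct computation of the shared vertex map is the right justification and the appeal to that lemma is unnecessary.
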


\begin{prop}[type=definition]{graph/replace/vert}
    We write $\mathfrak{G}_{{\mathrm{1}}} [ \mathfrak{G}_{{\mathrm{2}}} / i ]$ to express the replacement of vertex $i$ in $\mathfrak{G}_{{\mathrm{1}}}$ by $\mathfrak{G}_{{\mathrm{2}}}$.
    Let $\mathfrak{G}_{{\mathrm{1}}} [ \mathfrak{G}_{{\mathrm{2}}} / i ] = (n, v, E)$, $\mathfrak{G}_{{\mathrm{1}}} = (n_1, v_1, E_1)$, and $\mathfrak{G}_{{\mathrm{2}}} = (n_2, v_2, E_2)$.
    The replacement is defined as follows.
    \begin{align*}
        n    & = n_1 + n_2 - 1                                                           \\
        v(n) & = \begin{cases}
                     v_1(n)           & (0 \le n < i)                   \\
                     v_2(n - i)       & (i \le n < n_2 + i)             \\
                     v_1(n - n_2 + 1) & (n_2 + i \le n < n_1 + n_2 - 1)
                 \end{cases}                      \\
        E    & = \begin{aligned}
                      & \{\,(f(n), f(m)) \mid n \neq i, m \neq i, (n, m) \in E_1\,\} \cup {} \\
                      & \{\,(f(n), m) \mid i \le m < n_2 + i, (n, i) \in E_1\,\} \cup {}     \\
                      & \{\,(n, f(m)) \mid i \le n < n_2 + i, (i, m) \in E_1\,\} \cup {}     \\
                      & \{\,(n + i, m + i) \mid (n, m) \in E_2\,\}
                 \end{aligned} & \text{where } f(n) = \begin{cases}
                                                          n           & (0 \le n < i) \\
                                                          n + n_2 - 1 & (i < n < n_1)
                                                      \end{cases}
    \end{align*}
    Note that $f$ is undefined for $i$, but it makes no problem because $(i, i) \notin E$ for an acyclic graph.
\end{prop}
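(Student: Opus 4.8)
The statement to discharge is a well-definedness obligation: since every graph representation is required to be acyclic, I must check that $\mathfrak{G}_1[\mathfrak{G}_2/i]$ as specified is again a legal graph representation, i.e.\ that (i) the vertex labeling $v$ is a total function on $\NAT_{<n}$ with $n = n_1 + n_2 - 1$, (ii) the edge set $E$ is well-defined despite $f$ being undefined at $i$, and (iii) the resulting graph is acyclic, given that $\mathfrak{G}_1$ and $\mathfrak{G}_2$ are. The plan is to treat these three points in order; the first is routine and the third carries the real content. For (i) I would simply check that the three case-ranges $[0,i)$, $[i, n_2+i)$, and $[n_2+i, n_1+n_2-1)$ partition $\NAT_{<n}$ (using $0 \le i < n_1$) and that each branch feeds a legal argument to $v_1$ or $v_2$; this is pure index arithmetic, matching the ranges of $f$ on the low block $\{0,\dots,i-1\}$ and the high block $\{i+1,\dots,n_1-1\}$.

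For (ii), the only applications of $f$ that could fall outside its domain $\{0,\dots,n_1-1\}\setminus\{i\}$ occur in the second summand $(f(n),m)$, guarded by $(n,i)\in E_1$, and the third summand $(n,f(m))$, guarded by $(i,m)\in E_1$. I would argue exactly as the note suggests: acyclicity of $\mathfrak{G}_1$ forbids the self-loop $(i,i)$, so $(n,i)\in E_1$ forces $n\neq i$ and $(i,m)\in E_1$ forces $m\neq i$, whence $f(n)$ and $f(m)$ are defined. Thus $E$ is a well-defined subset of $\NAT_{<n}\times\NAT_{<n}$.

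The crux is (iii). I would introduce the collapse map $\pi\colon\NAT_{<n}\to\NAT_{<n_1}$ that undoes the blow-up: $\pi(p)=p$ for $p<i$, $\pi(p)=i$ for $i\le p<n_2+i$ (the inserted copy of $\mathfrak{G}_2$), and $\pi(p)=p-n_2+1$ for $n_2+i\le p<n$. Inspecting the four summands shows that every edge $(p,q)\in E$ satisfies either $\pi(p)=\pi(q)=i$ (precisely the edges from the fourth summand, internal to the copy of $\mathfrak{G}_2$) or $(\pi(p),\pi(q))\in E_1$. Now fix a topological ordering of $\mathfrak{G}_1$ with rank function $r$ and set $\phi(p)=r(\pi(p))$. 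Every edge of $E$ is weakly increasing in $\phi$, and strictly increasing unless it comes from the fourth summand. Summing $\phi$-increments around any purported directed cycle gives zero, so every edge of the cycle must be block-internal; subtracting $i$ then exhibits a directed cycle in $\mathfrak{G}_2$, contradicting its acyclicity. Hence no cycle exists. The main obstacle is packaging this collapse-and-potential argument cleanly---in particular verifying the edge-image dichotomy across all four summands and handling the degenerate cases $n_2=0$ (empty $\mathfrak{G}_2$, where edges touching $i$ simply vanish) and edges incident to $i$---rather than any deep difficulty; once the dichotomy is established, the rank argument closes acyclicity immediately.
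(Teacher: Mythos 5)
Your proposal is correct, but note that the paper offers no proof for this item at all: it is stated as a \emph{definition}, and the paper's entire justification is the single closing remark that $f$ is undefined at $i$ yet this is harmless because an acyclic graph has no self-loop at $i$. Your point (ii) reproduces exactly that remark (and reads it correctly: the $(i,i)\notin E$ in the paper should be $(i,i)\notin E_1$, i.e.\ the guards $(n,i)\in E_1$ and $(i,m)\in E_1$ force $n\neq i$ and $m\neq i$). Your points (i) and (iii) go beyond what the paper records: the paper nowhere verifies that the replacement is again a legal (acyclic) graph representation, even though acyclicity is part of its notion of graph representation and is silently relied upon later (e.g.\ in the lemmas \textsc{graph/replace/vert/join}, \textsc{graph/replace/join}, and \textsc{graph/replace/iso}, which are all discharged ``by definition'' or by induction on top of this construction). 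Your collapse-map argument is sound: every edge of $E$ either projects under $\pi$ to an edge of $E_1$ (summands one through three, using $\pi(f(n))=n$ for $n\neq i$) or is internal to the inserted block (summand four), so composing $\pi$ with the rank of any topological ordering of $\mathfrak{G}_1$ gives a potential that is strictly increasing on the former edges and constant on the latter, forcing any cycle to live inside the block and hence to yield a cycle in $\mathfrak{G}_2$. So rather than diverging from the paper's route, you have filled a verification gap the paper leaves implicit; the only caveat is that, measured against the paper itself, most of your effort proves an obligation the authors chose not to state.
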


\begin{prop}[type=definition]{graph/replace}
    We write $\mathfrak{G}_{{\mathrm{1}}}  \ottsym{[}  \mathfrak{G}_{{\mathrm{2}}}  \ottsym{/}  \ottnt{b}  \ottsym{]}$ to express the replacement of all vertices labeled $\ottnt{b}$ in $\mathfrak{G}_{{\mathrm{1}}}$ by $\mathfrak{G}_{{\mathrm{2}}}$.
    Let $\mathfrak{G}_{{\mathrm{1}}} = (n, v, E)$ and $S = \{\, i \mid v(i) = \ottnt{b}\,\}$.
    We define the replacement $\mathfrak{G}_{{\mathrm{1}}}  \ottsym{[}  \mathfrak{G}_{{\mathrm{2}}}  \ottsym{/}  \ottnt{b}  \ottsym{]}$ as $T(\mathfrak{G}_{{\mathrm{1}}}, \mathfrak{G}_{{\mathrm{2}}}, S)$ defined as follows.
    \begin{align}
        T(\mathfrak{G}_{{\mathrm{1}}}, \mathfrak{G}_{{\mathrm{2}}}, S) = \begin{cases}
                                     \mathfrak{G}_{{\mathrm{1}}}                                                & |S| = 0             \\
                                     T(\mathfrak{G}_{{\mathrm{1}}} [ \mathfrak{G}_{{\mathrm{2}}} / i ], \mathfrak{G}_{{\mathrm{2}}}, S \setminus \{i\}) & i = \mathrm{max}(S)
                                 \end{cases}
    \end{align}
\end{prop}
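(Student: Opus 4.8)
Since the final statement is a recursive \emph{definition} of vertex-label replacement $\mathfrak{G}_{{\mathrm{1}}}  \ottsym{[}  \mathfrak{G}_{{\mathrm{2}}}  \ottsym{/}  \ottnt{b}  \ottsym{]}$ rather than a claim, the obligation I would discharge is twofold: first, that the recursion is well-defined, and second, that it serves the purpose for which it is introduced, namely that graph interpretation commutes with context substitution. The plan is to dispatch well-definedness quickly and then concentrate on the commutation property, which carries the real content.

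For well-definedness, the single-vertex operation $\mathfrak{G}_{{\mathrm{1}}}[\mathfrak{G}_{{\mathrm{2}}}/i]$ from the preceding definition is total and deterministic, and the outer recursion $T(\mathfrak{G}_{{\mathrm{1}}}, \mathfrak{G}_{{\mathrm{2}}}, S)$ terminates because each step strictly decreases $|S|$, while determinism is immediate since $\max(S)$ is uniquely determined. The only subtle point is the correctness of the \emph{static} index set $S = \{\, i \mid v(i) = \ottnt{b}\,\}$, which is computed once from the original $\mathfrak{G}_{{\mathrm{1}}}$ and never recomputed. First I would establish the invariant that, at every recursive step, each index still in $S$ labels a $\ottnt{b}$-vertex of the current graph. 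This hinges on the relabeling map $f$ built into $\mathfrak{G}_{{\mathrm{1}}}[\mathfrak{G}_{{\mathrm{2}}}/i]$ being the identity on $\NAT_{< i}$: replacing the vertex at the \emph{maximal} remaining index $i$ leaves every vertex with index $< i$ at its old index and with its old label. Hence the not-yet-processed elements of $S \setminus \{i\}$, all of which are $< i$, continue to point at exactly the $\ottnt{b}$-vertices that remain. This is precisely why the definition processes indices from the top down, and why a fixed $S$ is sound.

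With the definition justified, I would state and prove the commutation lemma it exists for: the graph part of $ \lBrack \Gamma_{{\mathrm{1}}}  \ottsym{[}  \Gamma_{{\mathrm{2}}}  \ottsym{/}  \ottnt{b}  \ottsym{]} \rBrack $ is isomorphic to $\mathfrak{G}_{{\mathrm{1}}}  \ottsym{[}  \mathfrak{G}_{{\mathrm{2}}}  \ottsym{/}  \ottnt{b}  \ottsym{]}$, where $\mathfrak{G}_j$ is the graph part of $ \lBrack \Gamma_{\mathrm{j}} \rBrack $ for $j \in \{1,2\}$. The proof goes by structural induction on $\Gamma_{{\mathrm{1}}}$. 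The empty and singleton cases reduce to the base clauses of both substitutions: an ordered singleton binding $\ottnt{b}$ is the one-vertex graph, and $\mathfrak{G}_{{\mathrm{1}}}[\mathfrak{G}_{{\mathrm{2}}}/0]$ rewires the (absent) predecessors and successors into all of $\mathfrak{G}_{{\mathrm{2}}}$ and retains $\mathfrak{G}_{{\mathrm{2}}}$'s internal edges, yielding $\mathfrak{G}_{{\mathrm{2}}}$ itself. For $\Gamma_{{\mathrm{1}}} = \Gamma_{{\mathrm{1}}}'  \ottsym{,}  \Gamma_{{\mathrm{1}}}''$ and $\Gamma_{{\mathrm{1}}} = \Gamma_{{\mathrm{1}}}'  \parallel  \Gamma_{{\mathrm{1}}}''$ I would push the substitution inside the join $\GJOIN{\mathfrak{G}}{\mathfrak{G}'}$ and union $\mathfrak{G}  \cup  \mathfrak{G}'$ respectively, using the already-established facts that $\simeq$ is a congruence for join and for union. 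The crux is to check that vertex replacement distributes over join and union: each $\ottnt{b}$-vertex lies in exactly one operand, and the edge-rewiring of $\mathfrak{G}_{{\mathrm{1}}}[\mathfrak{G}_{{\mathrm{2}}}/i]$ (all predecessors of $i$ to every vertex of $\mathfrak{G}_{{\mathrm{2}}}$, every vertex of $\mathfrak{G}_{{\mathrm{2}}}$ to all successors of $i$, plus $\mathfrak{G}_{{\mathrm{2}}}$'s internal edges) reproduces exactly the edges that join or union would introduce for the substituted subgraph, so the inserted copy faithfully inherits the ordering role of the vertex it replaces.

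The main obstacle I expect is the index bookkeeping where the concrete arithmetic of the single-vertex definition meets the structural join/union view. Composing the relabeling maps $f$ from successive single-vertex replacements into one coherent reindexing, and matching the shifted edge sets against those produced by $\GJOIN{\mathfrak{G}}{\mathfrak{G}'}$ and $\mathfrak{G}  \cup  \mathfrak{G}'$, is where off-by-one errors hide. I would tame this by working entirely up to isomorphism $\simeq$: rather than tracking numeric indices through the recursion, I would exhibit, for each structural case, an explicit bijection between the vertices of the two sides and verify that it preserves labels and edges, leaning on the congruence lemmas for join and union so that only the leaf (binding) case requires direct manipulation of the definition of $\mathfrak{G}_{{\mathrm{1}}}[\mathfrak{G}_{{\mathrm{2}}}/i]$.
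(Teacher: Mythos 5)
You were handed a \emph{definition}, and the paper accordingly attaches no proof to it, so the right measure of your proposal is whether the obligations you chose to discharge are the right ones and whether your arguments for them hold up; they do. Your well-definedness argument is correct and covers a point the paper leaves entirely implicit: termination because $|S|$ strictly decreases, determinism because $\max(S)$ is unique, and---the genuinely subtle part---the invariant that the statically computed set $S$ remains valid throughout the recursion, since the single-vertex replacement $\mathfrak{G}_{{\mathrm{1}}}[\mathfrak{G}_{{\mathrm{2}}}/i]$ is the identity on indices and labels below $i$ and every unprocessed element of $S$ lies below the maximum; this max-first discipline is also what prevents divergence when $\mathfrak{G}_{{\mathrm{2}}}$ itself contains $\ottnt{b}$-labelled vertices. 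Your second half, the commutation of interpretation with substitution, is not part of this definition in the paper but is exactly its Lemma \texttt{env/replace/ord} (stated there for the case $\ottkw{ord} \, \ottnt{b}$, together with \texttt{env/remove/unr} for the unrestricted case), and your proof plan mirrors the paper's: structural induction on the context, the singleton case collapsing to $\mathfrak{G}[\mathfrak{G}'/\ottnt{b}] = \mathfrak{G}'$, and the concatenation/parallel cases discharged by distributing replacement over join and union---the paper's \texttt{graph/replace/join} and \texttt{graph/replace/union}, which it proves by induction on $|S|$ from the single-vertex distribution facts \texttt{graph/replace/vert/join} and \texttt{graph/replace/vert/union}. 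The one substantive divergence is that you work up to isomorphism $\simeq$ with explicit bijections, whereas the paper keeps the concrete index arithmetic and obtains on-the-nose equality $ \lBrack \Gamma  \ottsym{[}  \Gamma'  \ottsym{/}  \ottnt{b}  \ottsym{]} \rBrack  \ottsym{=} \mathfrak{G}  \ottsym{[}  \mathfrak{G}'  \ottsym{/}  \ottnt{b}  \ottsym{]}  \ottsym{;}   \mathbb{S} \cup \mathbb{S}' $; equality is what lets the paper rewrite with this fact directly in later proofs, so if you stay at the level of $\simeq$ you incur the (dischargeable, via \texttt{graph/replace/iso}) extra obligation that every downstream use is a $\simeq$-congruence, in exchange for avoiding the off-by-one bookkeeping you rightly identify as the main hazard.
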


\begin{prop}{graph/replace/vert/join}
    Let $\mathfrak{G}_{{\mathrm{1}}} = (n_1, v_1, E_1)$ and $\mathfrak{G}_{{\mathrm{2}}} = (n_2, v_2, E_2)$.
    Then,
    \begin{gather*}
        \ottsym{(}   \GJOIN{ \mathfrak{G}_{{\mathrm{1}}} }{ \mathfrak{G}_{{\mathrm{2}}} }   \ottsym{)} [ \mathfrak{G} / i ] =
        \begin{cases}
            \mathfrak{G}_{{\mathrm{1}}} [ \mathfrak{G} / i ] + \mathfrak{G}_{{\mathrm{2}}} & (0 \le i < n_1)          \\
            \mathfrak{G}_{{\mathrm{1}}} + \mathfrak{G}_{{\mathrm{2}}} [ \mathfrak{G} / i ] & (n_1 \le i < n_1 + n_2).
        \end{cases}
    \end{gather*}

    \proof We can simply show the proposition by definition, we need to carefully examine intervals, though.
\end{prop}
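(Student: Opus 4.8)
The plan is to prove the identity by unfolding both sides into their underlying triples $(n, v, E)$ and checking that the vertex count, the labeling, and the edge relation agree componentwise; since a graph representation is literally such a triple, componentwise agreement is exactly equality. I would treat the two cases $0 \le i < n_1$ and $n_1 \le i < n_1 + n_2$ separately, but they are mirror images, so it suffices to carry out the first in detail and transcribe it. Throughout, write $n_0$ for the number of vertices of the inserted graph $\mathfrak{G}$, recalling from \propref{graph/replace/vert} that replacing one vertex by $\mathfrak{G}$ adds $n_0 - 1$ vertices.

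For the vertex count, both sides have $n_1 + n_2 + n_0 - 1$ vertices: on the left, replacing one vertex of the $(n_1 + n_2)$-vertex graph $\GJOIN{\mathfrak{G}_1}{\mathfrak{G}_2}$ by $\mathfrak{G}$ contributes $n_0 - 1$; on the right, $\mathfrak{G}_1[\mathfrak{G}/i]$ has $n_1 + n_0 - 1$ vertices and joining $\mathfrak{G}_2$ adds $n_2$. For the labeling $v$, I would trace the piecewise clauses of the replacement definition against those of the join: both sides assign, in this order, the prefix of $v_1$ below index $i$, then the $n_0$ labels of $\mathfrak{G}$, then the remaining labels of $v_1$ (shifted by $n_0 - 1$), and finally the labels of $\mathfrak{G}_2$ reached through the $-n_1$ offset. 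This reduces to a routine case split on which interval a given index falls into.

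The crux is the edge relation. I would start from the join's edge set $E_1 \cup E_2' \cup E'$ and push it through the replacement, whose output splits into four families: the reindexed edges that touch neither endpoint $i$, the in-edges of $i$ fanned out to all inserted $\mathfrak{G}$-vertices, the out-edges of $i$ fanned out from all inserted $\mathfrak{G}$-vertices, and the internal edges of $\mathfrak{G}$. The delicate part is the all-to-all cross family $E' = \{(n, n_1 + m)\}$ of the join, which includes edges \emph{emanating from the replaced vertex} $i$: after replacement these become edges from every inserted $\mathfrak{G}$-vertex into the whole $\mathfrak{G}_2$-block. I must check this coincides with the fresh all-to-all cross edges produced on the right-hand side when $\mathfrak{G}_1[\mathfrak{G}/i]$ (which now \emph{contains} the $\mathfrak{G}$-vertices) is joined with $\mathfrak{G}_2$.

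The main obstacle will be bookkeeping the two reindexing maps so that they compose correctly: the function $f$ of \propref{graph/replace/vert}, which shifts indices above $i$ by $n_0 - 1$, and the $+n_1$ offset that the join applies to $\mathfrak{G}_2$'s vertices and edges. I expect to verify, interval by interval, that the composite ``join, then replace at $i$'' sends a $\mathfrak{G}_2$-index $n_1 + n$ to $n_1 + n_0 - 1 + n$, which is exactly the shift produced by ``replace at $i$, then join'', and that under this identification each of the four edge families maps onto its intended counterpart (in particular the $E'$ and out-edge families onto the outer join's cross edges). Once the index arithmetic is pinned down the two triples are syntactically equal, closing the first case; the second case, with $i$ in the $\mathfrak{G}_2$-block, follows symmetrically, leaving $\mathfrak{G}_1$ untouched and letting the replacement act inside the offset block.
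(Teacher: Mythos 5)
Your proposal is correct and follows essentially the same route as the paper, whose proof is exactly the sketch you elaborate: unfold both sides into triples by definition and verify vertex count, labeling, and edges componentwise with careful interval bookkeeping. You also correctly isolate the only delicate point, namely that the join's all-to-all cross edges leaving the replaced vertex $i$ must, after replacement, coincide with the outer join's cross edges out of the inserted block, which comes down to checking that the shift $f$ (by $n_0 - 1$ above $i$) composes coherently with the $+n_1$ offset of the join.
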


\begin{prop}{graph/replace/vert/union}
    Let $\mathfrak{G}_{{\mathrm{1}}} = (n_1, v_1, E_1)$ and $\mathfrak{G}_{{\mathrm{2}}} = (n_2, v_2, E_2)$.
    Then,
    \begin{gather*}
        \ottsym{(}  \mathfrak{G}_{{\mathrm{1}}}  \cup  \mathfrak{G}_{{\mathrm{2}}}  \ottsym{)} [ \mathfrak{G} / i ] =
        \begin{cases}
            \mathfrak{G}_{{\mathrm{1}}} [ \mathfrak{G} / i ] \cup \mathfrak{G}_{{\mathrm{2}}} & (0 \le i < n_1)          \\
            \mathfrak{G}_{{\mathrm{1}}} \cup \mathfrak{G}_{{\mathrm{2}}} [ \mathfrak{G} / i ] & (n_1 \le i < n_1 + n_2).
        \end{cases}
    \end{gather*}

    \proof Similar to the proof of \propref{graph/replace/vert/join}.
\end{prop}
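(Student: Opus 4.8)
The plan is to unfold both sides of the claimed identity directly from the defining equations for union (\Cref{fig:language:graph-join-union}) and for vertex replacement $\mathfrak{G}_{{\mathrm{1}}} [ \mathfrak{G}_{{\mathrm{2}}} / i ]$ (\propref{graph/replace/vert}), and then to check that the two resulting triples $(n, v, E)$ coincide componentwise. As in the companion \propref{graph/replace/vert/join}, the argument is a case analysis on whether the replaced index $i$ lies in the first block ($0 \le i < n_1$) or the second block ($n_1 \le i < n_1 + n_2$), followed by careful interval bookkeeping; in the second case the notation $\mathfrak{G}_{{\mathrm{2}}}[\mathfrak{G}/i]$ is read with the implicit reindexing $i \mapsto i - n_1$, so that the hole refers to the intended vertex of $\mathfrak{G}_{{\mathrm{2}}}$.

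First I would record the vertex count, which is immediate: writing $\mathfrak{G} = (n', v', E')$ for the inserted graph, the left side has $n_1 + n_2$ vertices, of which one is expanded into the $n'$ vertices of $\mathfrak{G}$, giving $n_1 + n_2 + n' - 1$ in total; and in either case the right side splits this same total as $(n_1 + n' - 1) + n_2$ or $n_1 + (n_2 + n' - 1)$. Next I would verify the labeling map $v$ by matching the three intervals of the replacement definition against the two intervals of the union definition. The essential structural point — and the reason the identity holds as stated — is that the union construction introduces \emph{no} edges between the two blocks (its edge set is just $E_1 \cup E_2'$, with no analogue of the crossing set $E'$ that appears in the join). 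Consequently, expanding a vertex inside one block leaves every edge incident only to the other block untouched, so the replacement genuinely localizes to a single component.

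The real work is in the edge set, where I would show that the relabeling prescribed by the replacement definition, applied to the host edge set $E_1 \cup E_2'$, reproduces exactly the edges of the claimed right-hand union. In the case $0 \le i < n_1$, every edge of $E_2'$ has both endpoints $\ge n_1 > i$, hence is shifted uniformly by $n' - 1$, and these become precisely the second-block edges of $\mathfrak{G}_{{\mathrm{1}}}[\mathfrak{G}/i] \cup \mathfrak{G}_{{\mathrm{2}}}$; meanwhile the edges of $E_1$, together with the relabeled internal edges of the inserted $\mathfrak{G}$ and the redirected edges into and out of the hole, assemble the first-block edges exactly as in the definition of $\mathfrak{G}_{{\mathrm{1}}}[\mathfrak{G}/i]$. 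The case $n_1 \le i < n_1 + n_2$ is symmetric after the reindexing $i \mapsto i - n_1$, with the roles of the two blocks exchanged.

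I expect the main obstacle to be purely the index arithmetic: keeping the replacement's renaming function (which skips the hole and offsets indices above it by $n' - 1$) synchronized with the $+n_1$ offset that the union applies to $\mathfrak{G}_{{\mathrm{2}}}$'s vertices and edges, and confirming that the boundary cases of the interval splits (at $n = i$, $n = n_1$, $n = n_2 + i$, and so on) land on the correct branch in both constructions. Since union omits the crossing edges, this case of the lemma is in fact cleaner than the join case it is modeled on, so no genuinely new idea is required beyond transcribing the proof of \propref{graph/replace/vert/join} with the crossing-edge set $E'$ deleted.
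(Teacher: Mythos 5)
Your proposal is correct and matches the paper's intended argument: the paper proves this proposition exactly as you do, by unfolding the definitions of union and vertex replacement and checking the triples componentwise with careful interval bookkeeping, deferring to the join case (\propref{graph/replace/vert/join}) as the template. Your additional observations --- the implicit reindexing $i \mapsto i - n_1$ in the second case and the fact that the absence of the crossing-edge set $E'$ makes the union case strictly simpler than the join case --- are accurate refinements of what the paper leaves implicit.
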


\begin{prop}{graph/replace/join}
    $ \ottsym{(}   \GJOIN{ \mathfrak{G}_{{\mathrm{1}}} }{ \mathfrak{G}_{{\mathrm{2}}} }   \ottsym{)}  \ottsym{[}  \mathfrak{G}  \ottsym{/}  \ottnt{b}  \ottsym{]} \ottsym{=}  \GJOIN{ \mathfrak{G}_{{\mathrm{1}}}  \ottsym{[}  \mathfrak{G}  \ottsym{/}  \ottnt{b}  \ottsym{]} }{ \mathfrak{G}_{{\mathrm{2}}} }   \ottsym{[}  \mathfrak{G}  \ottsym{/}  \ottnt{b}  \ottsym{]} $.

    \proof Let $\mathfrak{G}_{{\mathrm{1}}} = (n_1, v_1, E_1)$, $\mathfrak{G}_{{\mathrm{2}}} = (n_2, v_2, E_2)$, $ \GJOIN{ \mathfrak{G}_{{\mathrm{1}}} }{ \mathfrak{G}_{{\mathrm{2}}} }  = (n, v, E)$, $S_1 = \{\, i \mid v_1(i) = \ottnt{b}\,\}$, $S_2 = \{\, i \mid v_2(i) = \ottnt{b}\,\}$, and $S = \{\, i \mid v(i) = \ottnt{b}\,\}$.
    We show the goal by induction on the size of $S$.
    \begin{match}
        \item[Base case]
        In this case, $|S| = 0$.
        Therefore, we can see $|S_1| = |S_2| = 0$ by the definition of $v$.
        So, both sides become $ \GJOIN{ \mathfrak{G}_{{\mathrm{1}}} }{ \mathfrak{G}_{{\mathrm{2}}} } $ by definition.

        \item[Inductive step]
        In this case, we have some $i = \mathrm{max}(S)$.
        So,
        \begin{gather*}
            \ottsym{(}   \GJOIN{ \mathfrak{G}_{{\mathrm{1}}} }{ \mathfrak{G}_{{\mathrm{2}}} }   \ottsym{)}  \ottsym{[}  \mathfrak{G}  \ottsym{/}  \ottnt{b}  \ottsym{]} = T(\ottsym{(}   \GJOIN{ \mathfrak{G}_{{\mathrm{1}}} }{ \mathfrak{G}_{{\mathrm{2}}} }   \ottsym{)} [ \mathfrak{G} / i ], \mathfrak{G}, S \setminus \{i\})
        \end{gather*}
        by definition.
        We consider the following cases by \propref{graph/replace/vert/join}.
        \begin{match}
            \item[$0 \le i < n_1$]
            In this case,
            \begin{gather*}
                \ottsym{(}   \GJOIN{ \mathfrak{G}_{{\mathrm{1}}} }{ \mathfrak{G}_{{\mathrm{2}}} }   \ottsym{)} [ \mathfrak{G} / i ] = \mathfrak{G}_{{\mathrm{1}}} [ \mathfrak{G} / i ] + \mathfrak{G}_{{\mathrm{2}}}.
            \end{gather*}
            So, we have
            \begin{gather*}
                T(\ottsym{(}   \GJOIN{ \mathfrak{G}_{{\mathrm{1}}} }{ \mathfrak{G}_{{\mathrm{2}}} }   \ottsym{)} [ \mathfrak{G} / i ], \mathfrak{G}, S \setminus \{i\}) = T(\mathfrak{G}_{{\mathrm{1}}} [ \mathfrak{G} / i ], \mathfrak{G}, S_1 \setminus \{i\}) + T(\mathfrak{G}_{{\mathrm{2}}}, \mathfrak{G}, S_2)
            \end{gather*}
            by the induction hypohtesis.
            Since we have $S_1 \subseteq S$ and $i = \mathrm{max}(S)$, we can have $i = \mathrm{max}(S_1)$.
            Consequently,
            \begin{gather*}
                T(\mathfrak{G}_{{\mathrm{1}}} [ \mathfrak{G} / i ], \mathfrak{G}, S_1 \setminus \{i\}) = T(\mathfrak{G}_{{\mathrm{1}}}, \mathfrak{G}, S_1)
            \end{gather*}
            follows by definition.

            \item[$n_1 \le i < n_1 + n_2$]
            Similar to the case above.
        \end{match}
    \end{match}
\end{prop}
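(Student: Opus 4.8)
The plan is to prove the identity by induction on the cardinality of the index set $S = \{\, i \mid v(i) = \ottnt{b} \,\}$, where $(n, v, E) = \GJOIN{\mathfrak{G}_{{\mathrm{1}}}}{\mathfrak{G}_{{\mathrm{2}}}}$. Writing $\mathfrak{G}_{{\mathrm{1}}} = (n_1, v_1, E_1)$ and $\mathfrak{G}_{{\mathrm{2}}} = (n_2, v_2, E_2)$, the first thing I would record is how the label set of the join splits: since $v$ restricts to $v_1$ on $\NAT_{<n_1}$ and to a copy of $v_2$ shifted by $n_1$ above that, we have $S = S_1 \cup \{\, n_1 + j \mid j \in S_2 \,\}$, where $S_1$ and $S_2$ are the $\ottnt{b}$-labelled indices of $\mathfrak{G}_{{\mathrm{1}}}$ and $\mathfrak{G}_{{\mathrm{2}}}$ respectively. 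This disjoint decomposition is what lets me peel apart the two factors.

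For the base case $|S| = 0$ both $S_1$ and $S_2$ are empty, so by the definition of $T$ (via \propref{graph/replace}) every replacement is the identity $T(\mathfrak{G}', \mathfrak{G}, \emptyset) = \mathfrak{G}'$, and both sides collapse to $\GJOIN{\mathfrak{G}_{{\mathrm{1}}}}{\mathfrak{G}_{{\mathrm{2}}}}$. For the inductive step I would strip off the largest index $i = \max(S)$, which is exactly the vertex that the recursive clause of $T$ (hence of $-[\mathfrak{G}/\ottnt{b}]$) replaces first. Unfolding one step gives $(\GJOIN{\mathfrak{G}_{{\mathrm{1}}}}{\mathfrak{G}_{{\mathrm{2}}}})[\mathfrak{G}/\ottnt{b}] = T((\GJOIN{\mathfrak{G}_{{\mathrm{1}}}}{\mathfrak{G}_{{\mathrm{2}}}})[\mathfrak{G}/i], \mathfrak{G}, S \setminus \{i\})$, and I would case split on whether $i$ lands in the left factor ($0 \le i < n_1$) or the right factor ($n_1 \le i < n_1 + n_2$). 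Here \propref{graph/replace/vert/join} does the real work, pushing the single-vertex replacement into the appropriate component, e.g. $(\GJOIN{\mathfrak{G}_{{\mathrm{1}}}}{\mathfrak{G}_{{\mathrm{2}}}})[\mathfrak{G}/i] = \GJOIN{\mathfrak{G}_{{\mathrm{1}}}[\mathfrak{G}/i]}{\mathfrak{G}_{{\mathrm{2}}}}$ in the left case. Applying the induction hypothesis to the now-smaller index set distributes $T$ across the join, and folding the definitions back up recovers $\GJOIN{\mathfrak{G}_{{\mathrm{1}}}[\mathfrak{G}/\ottnt{b}]}{\mathfrak{G}_{{\mathrm{2}}}[\mathfrak{G}/\ottnt{b}]}$.

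The delicate point, and the step I expect to be the main obstacle, is the index bookkeeping hidden inside ``applying the induction hypothesis.'' Because $T$ operates on a \emph{fixed} set of indices rather than recomputing which vertices carry the label $\ottnt{b}$, I must verify that after replacing vertex $i$ from the top downward, the indices remaining in $S \setminus \{i\}$ still point to exactly the same vertices in each factor. This is guaranteed precisely by the ``replace the maximum first'' design: inserting $\mathfrak{G}$ at position $i$ leaves every index below $i$ untouched, and the inserted graph's own vertices are never members of $S$, so they are not re-replaced. To make the recursion line up cleanly, I would phrase the induction for $T(\GJOIN{A}{B}, \mathfrak{G}, S_A \cup (n_A + S_B))$ with \emph{arbitrary} index sets $S_A, S_B$ in range, rather than only for the canonical label set, so that the two recursive subcalls produced in the left and right cases match the hypothesis exactly. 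The corresponding statement for union would be handled identically, substituting \propref{graph/replace/vert/union} for \propref{graph/replace/vert/join}.
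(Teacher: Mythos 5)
Your proposal is correct and follows essentially the same route as the paper's proof: induction on $|S|$, peeling off $i = \max(S)$, case-splitting on which factor $i$ lies in, and using \propref{graph/replace/vert/join} to push the single-vertex replacement into that factor. Your one refinement --- phrasing the induction for $T$ with arbitrary in-range index sets rather than only the canonical label set --- is a sound way to make rigorous the index bookkeeping that the paper's appeal to ``the induction hypothesis'' leaves implicit, but it does not change the substance of the argument.
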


\begin{prop}{graph/replace/union}
    $ \ottsym{(}  \mathfrak{G}_{{\mathrm{1}}}  \cup  \mathfrak{G}_{{\mathrm{2}}}  \ottsym{)}  \ottsym{[}  \mathfrak{G}  \ottsym{/}  \ottnt{b}  \ottsym{]} \ottsym{=} \mathfrak{G}_{{\mathrm{1}}}  \ottsym{[}  \mathfrak{G}  \ottsym{/}  \ottnt{b}  \ottsym{]}  \cup  \mathfrak{G}_{{\mathrm{2}}}  \ottsym{[}  \mathfrak{G}  \ottsym{/}  \ottnt{b}  \ottsym{]} $.

    \proof Similar to the proof of \propref{graph/replace/join}.
\end{prop}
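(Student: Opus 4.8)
The plan is to mirror the proof of \propref{graph/replace/join} almost verbatim, substituting the appeal to \propref{graph/replace/vert/join} by \propref{graph/replace/vert/union}. First I would fix notation: write $\mathfrak{G}_1 = (n_1, v_1, E_1)$, $\mathfrak{G}_2 = (n_2, v_2, E_2)$, and $\mathfrak{G}_1 \cup \mathfrak{G}_2 = (n, v, E)$, and set $S_1 = \{i \mid v_1(i) = b\}$, $S_2 = \{i \mid v_2(i) = b\}$, and $S = \{i \mid v(i) = b\}$. Since the labeling $v$ of the union agrees with $v_1$ on $[0, n_1)$ and with $v_2(\,\cdot\, - n_1)$ on $[n_1, n_1 + n_2)$, the $b$-index set splits cleanly as $S = S_1 \cup \{n_1 + j \mid j \in S_2\}$; this disjoint splitting is the bookkeeping fact that drives the whole argument. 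I would then induct on $|S|$, exactly as in \propref{graph/replace/join}. In the base case $|S| = 0$, the splitting forces $|S_1| = |S_2| = 0$, so all three replacements are identities and both sides collapse to $\mathfrak{G}_1 \cup \mathfrak{G}_2$.

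For the inductive step I would take $i = \mathrm{max}(S)$ and unfold the definition of $T$ from \propref{graph/replace} to obtain $(\mathfrak{G}_1 \cup \mathfrak{G}_2)[\mathfrak{G}/b] = T((\mathfrak{G}_1 \cup \mathfrak{G}_2)[\mathfrak{G}/i], \mathfrak{G}, S \setminus \{i\})$. Now \propref{graph/replace/vert/union} lets me push the single-vertex replacement into the correct block: if $0 \le i < n_1$ then $(\mathfrak{G}_1 \cup \mathfrak{G}_2)[\mathfrak{G}/i] = \mathfrak{G}_1[\mathfrak{G}/i] \cup \mathfrak{G}_2$, and if $n_1 \le i < n_1 + n_2$ then it equals $\mathfrak{G}_1 \cup \mathfrak{G}_2[\mathfrak{G}/(i - n_1)]$.

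In the first case, the induction hypothesis applied to $\mathfrak{G}_1[\mathfrak{G}/i] \cup \mathfrak{G}_2$ with the smaller set $S \setminus \{i\}$ (whose block-projections are $S_1 \setminus \{i\}$ and $S_2$) yields $T(\mathfrak{G}_1[\mathfrak{G}/i], \mathfrak{G}, S_1 \setminus \{i\}) \cup T(\mathfrak{G}_2, \mathfrak{G}, S_2)$. Because $i = \mathrm{max}(S)$ lies in the first block, I have $i = \mathrm{max}(S_1)$, so the first factor folds back to $T(\mathfrak{G}_1, \mathfrak{G}, S_1) = \mathfrak{G}_1[\mathfrak{G}/b]$ by the definition of $T$, while the second factor is already $\mathfrak{G}_2[\mathfrak{G}/b]$; together they give the desired right-hand side. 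The case $n_1 \le i < n_1 + n_2$ is symmetric, with the roles of the two projections interchanged.

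The step I expect to require the most care is clerical rather than conceptual: keeping the index arithmetic consistent when I claim that the block-projections of $S \setminus \{i\}$ are exactly $S_1 \setminus \{i\}$ and $S_2$ (respectively $S_1$ and $S_2 \setminus \{i - n_1\}$ in the symmetric case), and that $\mathrm{max}(S)$ sitting in one block makes it the maximum of that block's projection while leaving the other projection untouched. This is precisely where the $+\,n_1$ offset in the definition of $\cup$ and the index map $f$ from \propref{graph/replace/vert} must line up. Verifying this alignment is the direct analogue of the corresponding bookkeeping in \propref{graph/replace/join}, and once it is nailed down the remainder is a faithful transcription of that proof.
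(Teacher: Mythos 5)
Your proposal is correct and takes essentially the same approach as the paper: the paper's own proof is just ``Similar to the proof of the join version,'' and you spell out precisely that adaptation---induction on $|S|$, pushing the single-vertex replacement into the appropriate block via the union analogue of the vertex-replacement lemma, and folding the first factor back with the $i = \max(S_1)$ observation. If anything, your explicit handling of the $i - n_1$ offset in the second block is slightly more careful than the paper's statement of the auxiliary lemma.
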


\begin{prop}{graph/replace/iso}
    If $\mathfrak{G}_{{\mathrm{1}}}  \simeq  \mathfrak{G}_{{\mathrm{2}}}$, then $\mathfrak{G}_{{\mathrm{1}}}  \ottsym{[}  \mathfrak{G}  \ottsym{/}  \ottnt{b}  \ottsym{]}  \simeq  \mathfrak{G}_{{\mathrm{2}}}  \ottsym{[}  \mathfrak{G}  \ottsym{/}  \ottnt{b}  \ottsym{]}$.

    \proof Let $(n_1, v_1, E_1) = \mathfrak{G}_{{\mathrm{1}}}$, $(n_2, v_2, E_2) = \mathfrak{G}_{{\mathrm{2}}}$, $S_1 = \{i \mid v_1(i) = b\}$, and $S_2 = \{i \mid v_2(i) = b\}$.
    Note that $n_1 = n_2$ and $|S_1| = |S_2|$ by the assumption.
    We show the goal by size induction of $S_1$.
\end{prop}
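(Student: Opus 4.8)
The plan is to prove the statement by induction on $|S_1| = |S_2|$ as the stub indicates, but the argument hinges on two auxiliary facts that I would isolate first: a single-vertex version of the claim, and an order-independence property of the replacement operation $T$. Throughout, write $\mathfrak{G} = (n_g, v_g, E_g)$ for the replacement graph, and let $f$ be the isomorphism witnessing $\mathfrak{G}_1 \simeq \mathfrak{G}_2$, so $v_1 \circ f = v_2$ and $E_1 = \{(f(a),f(c)) \mid (a,c) \in E_2\}$.

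First I would establish a \emph{single-vertex} helper: if $\mathfrak{G}_1 \simeq \mathfrak{G}_2$ via $f$ and $i \in \NAT_{<n}$ with $j = f^{-1}(i)$, then $\mathfrak{G}_1[\mathfrak{G}/i] \simeq \mathfrak{G}_2[\mathfrak{G}/j]$. The witnessing bijection maps the inserted block of $\mathfrak{G}$ (positions $j,\dots,j+n_g-1$ on the right, $i,\dots,i+n_g-1$ on the left) by the identity on the offset, and maps every surviving vertex $k \neq j$ to $f(k)$, both sides reindexed past their respective blocks. That labels agree is immediate from $v_1 \circ f = v_2$ together with the block labels being $v_g$ on both sides; that edges agree uses that the blow-up replaces each edge incident to $i$ (resp.\ $j$) by edges incident to the whole block uniformly, so $f$'s edge-preservation lifts. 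The distribution laws for vertex replacement over join and union feed directly into this index bookkeeping. I would also record that this bijection sends $S_2 \setminus \{j\}$ onto $S_1 \setminus \{i\}$ (with the block reindexing), i.e.\ the remaining $b$-labeled vertices still correspond; this is exactly what lets the induction hypothesis fire.

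Second, and this is the main obstacle, I would prove that $T(\mathfrak{G}',\mathfrak{G},S')$ is, up to isomorphism, independent of the order in which the $b$-vertices are processed: for any $j_1 \in S'$ one has $T(\mathfrak{G}',\mathfrak{G},S') \simeq T(\mathfrak{G}'[\mathfrak{G}/j_1], \mathfrak{G}, S'')$, where $S''$ is $S' \setminus \{j_1\}$ with all indices above $j_1$ shifted by $n_g - 1$. The heart of this is a commutation (``diamond'') lemma: for distinct vertices $i \neq j$, replacing $i$ and then $j$ at its shifted index produces, up to a canonical reindexing, the same graph as replacing $j$ and then $i$. Because blow-ups at two distinct vertices occupy disjoint index ranges and each contributes its edges independently, the two results differ only by a transposition of the two inserted blocks and are therefore isomorphic. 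Iterating this commutation moves any chosen $j_1$ to the front, giving order-independence by a secondary induction on $|S'|$. I expect this step to be the most delicate, since it requires tracking how the index-shift function of the vertex-replacement definition composes under two successive replacements.

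With these in hand the main induction is short. The base case $|S_1| = 0$ gives $\mathfrak{G}_1[\mathfrak{G}/b] = \mathfrak{G}_1 \simeq \mathfrak{G}_2 = \mathfrak{G}_2[\mathfrak{G}/b]$. For the step, set $i_1 = \max(S_1)$ and $j_1 = f^{-1}(i_1) \in S_2$. The single-vertex helper gives $\mathfrak{G}_1[\mathfrak{G}/i_1] \simeq \mathfrak{G}_2[\mathfrak{G}/j_1]$ with corresponding residual $b$-sets of common size $|S_1| - 1$, so the induction hypothesis yields $T(\mathfrak{G}_1[\mathfrak{G}/i_1],\mathfrak{G},S_1\setminus\{i_1\}) \simeq T(\mathfrak{G}_2[\mathfrak{G}/j_1],\mathfrak{G},S'')$. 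Since $i_1$ is the maximum of $S_1$, the left-hand side is exactly $\mathfrak{G}_1[\mathfrak{G}/b]$ by the defining recursion of $T$; by the order-independence property the right-hand side is isomorphic to $\mathfrak{G}_2[\mathfrak{G}/b]$. Transitivity of $\simeq$ on graph representations (composition of the three bijections) then delivers $\mathfrak{G}_1[\mathfrak{G}/b] \simeq \mathfrak{G}_2[\mathfrak{G}/b]$, which is the goal.
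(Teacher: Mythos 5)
Your proposal is correct and takes essentially the same route as the paper, whose proof is only the stub ``size induction of $S_1$'': you carry out precisely that induction and supply the two ingredients it needs, namely a single-vertex isomorphism lemma and an order-independence (commutation) property of the replacement operator $T$. The latter is the genuine subtlety here---since $T$ always recurses on the maximum index while $f^{-1}(\max S_1)$ need not equal $\max S_2$, the naive induction does not close without it---and your diamond argument for it is sound (indeed, with careful index bookkeeping the two orders of replacement yield equal graphs, not merely isomorphic ones).
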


\begin{prop}{graph/replace/span}
    If $\mathfrak{G}_{{\mathrm{1}}}  \mathrel{<_\rightarrow}  \mathfrak{G}_{{\mathrm{2}}}$, then $\mathfrak{G}_{{\mathrm{1}}}  \ottsym{[}  \mathfrak{G}  \ottsym{/}  \ottnt{b}  \ottsym{]}  \mathrel{<_\rightarrow}  \mathfrak{G}_{{\mathrm{2}}}  \ottsym{[}  \mathfrak{G}  \ottsym{/}  \ottnt{b}  \ottsym{]}$.

    \proof This follows \propref{graph/replace/iso} and the definition of spanning.
\end{prop}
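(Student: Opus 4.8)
The plan is to mirror the proof of \propref{graph/replace/iso}, but to exploit that the spanning relation is strictly more rigid than isomorphism: if $\mathfrak{G}_{{\mathrm{1}}}  \mathrel{<_\rightarrow}  \mathfrak{G}_{{\mathrm{2}}}$, then by definition the two representations share the same vertex count $n$ and the \emph{same} labelling $v$, differing only in that $E_1 \subseteq E_2$. Consequently the set $S = \{\, i \mid v(i) = \ottnt{b} \,\}$ of vertices to be replaced is literally identical for both graphs, so the two iterated replacements $\mathfrak{G}_{{\mathrm{1}}}  \ottsym{[}  \mathfrak{G}  \ottsym{/}  \ottnt{b}  \ottsym{]}$ and $\mathfrak{G}_{{\mathrm{2}}}  \ottsym{[}  \mathfrak{G}  \ottsym{/}  \ottnt{b}  \ottsym{]}$ (\propref{graph/replace}) visit exactly the same indices in the same top-down order. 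This removes the need to track a vertex bijection, which was the delicate point in the isomorphism case.

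First I would isolate the single-vertex step as the core lemma: for a fixed index $i$ and fixed insert $\mathfrak{G}$, the operation $\mathfrak{G}_{{\mathrm{1}}}[\mathfrak{G}/i]$ of \propref{graph/replace/vert} is monotone in the host edge set. Inspecting its definition, the reindexing function $f$, the vertex count $n_1 + n_2 - 1$, and the new labelling depend only on the shared data $n_1$, $v_1$, $i$, and $\mathfrak{G}$; hence the two replaced hosts again share vertex count and labelling. Of the four families of edges, the three generated from the host relation are each monotone in it, while the fourth, $\{(n+i, m+i) \mid (n,m) \in E_{\mathfrak{G}}\}$, is copied verbatim from the insert and is therefore identical on both sides. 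Pushing $E_1 \subseteq E_2$ through these clauses yields edge inclusion for the replaced graphs, i.e.\ $\mathfrak{G}_{{\mathrm{1}}}[\mathfrak{G}/i] \mathrel{<_\rightarrow} \mathfrak{G}_{{\mathrm{2}}}[\mathfrak{G}/i]$, with the common labelling preserved.

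Then I would close the argument by induction on $|S|$, exactly as in \propref{graph/replace/iso}. The base case $|S| = 0$ is immediate, since both replacements act as the identity and return the hypothesis $\mathfrak{G}_{{\mathrm{1}}} \mathrel{<_\rightarrow} \mathfrak{G}_{{\mathrm{2}}}$ unchanged. For the step with $i = \max(S)$, the single-vertex lemma gives $\mathfrak{G}_{{\mathrm{1}}}[\mathfrak{G}/i] \mathrel{<_\rightarrow} \mathfrak{G}_{{\mathrm{2}}}[\mathfrak{G}/i]$ and keeps the labelling shared; since $i$ is maximal, all lower indices (in particular the members of $S \setminus \{i\}$, which are the ones recursed on) keep their positions and still label $\ottnt{b}$, so the two recursive calls stay perfectly synchronized and the induction hypothesis applies.

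The hard part will be purely the index bookkeeping in the single-vertex step: verifying that the clause-by-clause monotonicity genuinely composes to edge inclusion, and, more importantly, that replacing from the maximum index downward keeps the vertex-and-label data of the two intermediate graphs literally equal at every stage — this equality is precisely what makes $\mathrel{<_\rightarrow}$ well-defined at each subsequent induction step. Everything else follows directly from the definitions, and the argument is in fact lighter than its isomorphism counterpart because no nontrivial bijection between vertices must be maintained.
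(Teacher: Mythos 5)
Your proof is correct and is essentially the fleshed-out version of what the paper intends: its one-line proof (``follows by the replacement-under-isomorphism lemma and the definition of spanning'') is precisely a pointer to rerunning the same induction on $|S|$, using the fact that spanning graphs share the vertex count and labelling (so the replacement set $S$ and the index $i=\max(S)$ coincide) while edge inclusion is preserved clause-by-clause through the single-vertex replacement. Your explicit isolation of the monotonicity of $\mathfrak{G}_{{\mathrm{1}}}[\mathfrak{G}/i]$ in the host edge set is exactly the missing content, so no further changes are needed.
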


\begin{prop}{graph/ordering/exists}
    For any graph representation, a topological ordering exists.

    \proof 
    This is a well-known property for DAGs.
    We could employ this property as the acyclic condition.
\end{prop}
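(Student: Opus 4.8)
The plan is to prove the statement by strong induction on the number of vertices $n$ of the graph representation $\mathfrak{G} = (n, v, E)$, exploiting the acyclicity condition that is built into the definition of a graph representation. The base case $n = 0$ is immediate: the empty map is vacuously a topological ordering, since then $E = \emptyset$ and the ordering condition has no instances to satisfy.

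For the inductive step I would first isolate a \emph{source lemma}: every nonempty acyclic $\mathfrak{G}$ has a vertex $s \in \NAT_{<n}$ with no incoming edge, i.e.\ $(i,s) \notin E$ for all $i$. I would prove this by contradiction. If every vertex had a predecessor, then starting from an arbitrary vertex and repeatedly stepping backward along an incoming edge would produce an infinite sequence of vertices drawn from the finite set $\NAT_{<n}$; by the pigeonhole principle some vertex must recur, and the segment of the walk between two of its occurrences exhibits a cycle in $E$, contradicting acyclicity.

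Given such a source $s$, I would delete it and pass to the induced subgraph on the remaining $n-1$ vertices, re-indexed to $\NAT_{<n-1}$ exactly as in the index-cut and vertex-replacement bookkeeping already developed in the appendix. This subgraph is again acyclic, since deleting a vertex cannot create a cycle, so the induction hypothesis supplies a topological ordering $g$ of it. I would then assemble the ordering $f$ for $\mathfrak{G}$ by placing $s$ in the first position and listing the remaining vertices in the order dictated by $g$, translating $g$'s indices back to the original vertex labels. Verifying that $f$ is a topological ordering splits into two easy checks: edges between two non-source vertices are ordered correctly because $g$ already orders them, and there are no edges into $s$, so every edge leaving $s$ necessarily points to a later position.

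The routine parts are genuinely routine, so the main obstacle is purely the re-indexing bookkeeping: removing vertex $s$ shifts the labels of all vertices above it, and one must track this relabeling both through the construction of $f$ and through the verification of the ordering condition, which is phrased in terms of $f^{-1}$ on the original labels. This is the same flavour of index arithmetic that appears in the join, union, and replacement definitions, so it can be discharged mechanically. Alternatively, as the paper itself remarks that one "could employ this property as the acyclic condition," the statement becomes immediate if acyclicity is instead \emph{defined} as the existence of a topological ordering, collapsing the proof to an unfolding of definitions.
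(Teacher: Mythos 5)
Your proof is correct, but it is genuinely more than what the paper does: the paper offers no argument at all, merely citing that existence of topological orderings is ``a well-known property for DAGs'' and remarking that the property could instead be \emph{taken} as the definition of acyclicity. Your proposal supplies the standard self-contained proof behind that citation: the source lemma (every nonempty finite acyclic graph has a vertex with no incoming edge, by the backward-walk/pigeonhole argument), followed by source removal, re-indexing, and strong induction on $n$, with the two-case verification that the assembled bijection respects every edge. All three steps are sound; the only real labor is the index-shifting bookkeeping you identify, which is indeed of the same mechanical flavour as the paper's join/union/replacement definitions on $\NAT_{<n}$. What your route buys is a proof that could actually be checked or mechanized (relevant given that adjacent parts of this development are formalized in Agda), at the cost of that relabeling arithmetic; what the paper's route buys is economy, and you independently recover its definitional escape hatch in your closing remark --- if acyclicity is defined as existence of a topological ordering, the proposition collapses to an unfolding of definitions.
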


\begin{prop}[type=definition]{graph/ordering/concat}
    Let $f_1$ and $f_2$ be topological orderings of graph representations of order $n_1$ and $n_2$.
    Then, the \emph{join} of $f_1$ and $f_2$, denoted by $f_1 + f_2$, is the function defined as follows.
    \begin{align*}
        (f_1 + f_2)(n) & = \begin{cases}
                               f_1(n)             & (0 \le n < n_1)         \\
                               f_2(n - n_1) + n_1 & (n_1 \le n < n_1 + n_2)
                           \end{cases}
    \end{align*}
\end{prop}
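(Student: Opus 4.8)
The final statement is a definition rather than a claim, so the content that actually requires verification is its implicit well-formedness: that the map $f_1 + f_2$ it constructs is indeed a topological ordering of the join graph $\GJOIN{\mathfrak{G}_1}{\mathfrak{G}_2}$, where $f_1$ and $f_2$ are topological orderings of $\mathfrak{G}_1 = (n_1, v_1, E_1)$ and $\mathfrak{G}_2 = (n_2, v_2, E_2)$. The plan is to discharge the two conditions from the definition of a topological ordering directly: first that $f_1 + f_2$ is a bijection on $\NAT_{<n_1+n_2}$, and second that it respects every edge of the join, i.e.\ that $(f_1+f_2)^{-1}(a) < (f_1+f_2)^{-1}(b)$ holds for each edge $(a,b)$ of the join.

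For bijectivity, I would observe that $f_1 + f_2$ maps the block $\NAT_{<n_1}$ onto itself via $f_1$ and maps the block $\{n_1, \dots, n_1+n_2-1\}$ onto itself via the shifted map $n \mapsto f_2(n - n_1) + n_1$. Since $f_1$ and $f_2$ are bijections on their respective index sets and the two image blocks are disjoint and together cover $\NAT_{<n_1+n_2}$, the join is a bijection. Its inverse is the analogous block-wise join of $f_1^{-1}$ and $f_2^{-1}$, which is the form I would use in the edge check: a vertex $a < n_1$ occupies position $f_1^{-1}(a) < n_1$, whereas a vertex $a \ge n_1$ occupies position $f_2^{-1}(a - n_1) + n_1 \ge n_1$.

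The edge condition then splits along the three components of the join's edge set $E_1 \cup E_2' \cup E'$ (in the notation of \Cref{fig:language:graph-join-union}). Edges in $E_1$ stay within the first block, where the required inequality follows because $f_1$ is a topological ordering of $\mathfrak{G}_1$; edges in $E_2'$ stay within the second block, where it follows because $f_2$ is a topological ordering of $\mathfrak{G}_2$ and adding the common offset $n_1$ preserves strict order; and the cross edges in $E'$ run from a first-block source at position $< n_1$ to a second-block target at position $\ge n_1$, so the inequality holds automatically, independently of $f_1$ and $f_2$.

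I expect no genuine obstacle: the proof is pure index bookkeeping. The only points deserving care are keeping the position/vertex distinction straight (the definition of topological ordering sequences vertices through $f^{-1}$, not $f$) and noticing that the cross edges $E'$, the sole edges coupling the two blocks, are precisely the forward-pointing ones that the block layout satisfies for free. The same argument transfers verbatim to the union $\mathfrak{G}_1 \cup \mathfrak{G}_2$, whose edge set simply omits $E'$, so $f_1 + f_2$ is a topological ordering there as well; this matching of the join/union structure to the two position blocks is presumably why the definition is isolated for later use in the realization arguments such as \Cref{prop:realize/correct}.
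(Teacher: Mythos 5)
Your verification is correct and matches the paper's treatment: the statement is indeed a definition, and the accompanying proposition (\propref{graph/ordering/join}) asserting that $f_1 + f_2$ is a topological ordering of $\GJOIN{\mathfrak{G}_{{\mathrm{1}}}}{\mathfrak{G}_{{\mathrm{2}}}}$ is proved in the paper simply ``by definition,'' which is precisely the block-wise index bookkeeping you spell out (bijectivity on the two blocks, edges in $E_1$ and $E_2'$ handled by the respective orderings, cross edges in $E'$ satisfied automatically). Your closing remark about the union case likewise matches \propref{graph/ordering/union}, so there is nothing to correct.
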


\begin{prop}{graph/ordering/join}
    If $f_1$ and $f_2$ are topological orderings of $\mathfrak{G}_{{\mathrm{1}}}$ and $\mathfrak{G}_{{\mathrm{2}}}$, respectively, then $f_1 + f_2$ is a topological ordering of $ \GJOIN{ \mathfrak{G}_{{\mathrm{1}}} }{ \mathfrak{G}_{{\mathrm{2}}} } $.

    \proof By definition.
\end{prop}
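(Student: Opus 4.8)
The plan is to verify directly from the definitions that $f_1 + f_2$ satisfies the two conditions required of a topological ordering of $\GJOIN{\mathfrak{G}_1}{\mathfrak{G}_2} = (n_1 + n_2, v, E_1 \cup E_2' \cup E')$: that it is a bijection on $\NAT_{<n_1 + n_2}$, and that it respects every edge (recall the condition is phrased in terms of $f^{-1}$, so I must track \emph{positions} of vertices, not vertex labels, throughout).

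First I would establish bijectivity. By construction $f_1 + f_2$ sends the block of positions $[0, n_1)$ into $\NAT_{<n_1}$ via $f_1$, and the block $[n_1, n_1+n_2)$ onto $\{n_1, \dots, n_1+n_2-1\}$ via $n \mapsto f_2(n - n_1) + n_1$. Since $f_1$ and $f_2$ are bijections on their respective domains, each block is a bijection onto a half-open interval; as the two target intervals are disjoint and together cover $\NAT_{<n_1+n_2}$, the whole map is a bijection. Its inverse is then immediate: a $\mathfrak{G}_1$-vertex $a < n_1$ has position $f_1^{-1}(a) < n_1$, while a $\mathfrak{G}_2$-vertex $n_1 + w$ with $w < n_2$ has position $f_2^{-1}(w) + n_1 \ge n_1$.

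Next I would check the ordering condition $(f_1+f_2)^{-1}(a) < (f_1+f_2)^{-1}(b)$ for each $(a,b)$ in the edge set, splitting on which of the three components $E_1$, $E_2'$, $E'$ the edge comes from. For $(a,b) \in E_1$ both endpoints are $\mathfrak{G}_1$-vertices, so the claim reduces to $f_1^{-1}(a) < f_1^{-1}(b)$, which holds because $f_1$ topologically orders $\mathfrak{G}_1$. For $(a,b) \in E_2'$ we have $a = n_1 + a'$ and $b = n_1 + b'$ with $(a',b') \in E_2$, and the positions are $f_2^{-1}(a') + n_1$ and $f_2^{-1}(b') + n_1$; the common shift by $n_1$ preserves the strict inequality $f_2^{-1}(a') < f_2^{-1}(b')$ supplied by $f_2$. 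For $(a,b) \in E'$ the source $a$ is a $\mathfrak{G}_1$-vertex and the target $b = n_1 + m$ is a $\mathfrak{G}_2$-vertex, so the source has position $< n_1$ and the target position $\ge n_1$, and the inequality holds automatically.

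This last case is the only one where the specific shape of the join matters, and it is precisely where the argument is cleanest: because the join lays all of $\mathfrak{G}_1$ strictly before all of $\mathfrak{G}_2$ in the index layout, the all-to-all cross edges $E'$ are satisfied for free. I do not expect a genuine obstacle — the entire content is careful index bookkeeping (the $+\,n_1$ shifts and the domain splits) together with the observation that $(f_1+f_2)^{-1}$ decomposes along the two blocks. This is exactly why the paper is entitled to discharge the proposition with \emph{By definition}.
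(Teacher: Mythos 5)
Your proof is correct and is exactly the expansion of what the paper compresses into ``By definition'': unfold the definitions of the join, of $f_1+f_2$, and of topological ordering, check bijectivity blockwise, and verify the edge condition in the three cases $E_1$, $E_2'$, $E'$ (with the cross edges $E'$ holding automatically because positions in the first block are $<n_1$ and in the second are $\ge n_1$). No gaps; nothing further is needed.
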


\begin{prop}{graph/ordering/join/inv}
    If $f$ is a topological ordering of $ \GJOIN{ \mathfrak{G}_{{\mathrm{1}}} }{ \mathfrak{G}_{{\mathrm{2}}} } $, then there exist topological orderings $f_1$ and $f_2$ of $\mathfrak{G}_{{\mathrm{1}}}$ and $\mathfrak{G}_{{\mathrm{2}}}$, respectively, satisfying $f = f_1 + f_2$.

    \proof We choose $f_1$ and $f_2$ as follows.
    \begin{align*}
        f_1(n) & = f(n)                                   & (0 \le n <  | \mathfrak{G}_{{\mathrm{1}}} |_{\bullet} ) \\
        f_2(n) & = f(n +  | \mathfrak{G}_{{\mathrm{1}}} |_{\bullet} ) -  | \mathfrak{G}_{{\mathrm{1}}} |_{\bullet}  & (0 \le n <  | \mathfrak{G}_{{\mathrm{2}}} |_{\bullet} )
    \end{align*}
    The conditions follow by the definition.
\end{prop}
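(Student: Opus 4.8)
The plan is to exploit the fact that in the join $\GJOIN{\mathfrak{G}_1}{\mathfrak{G}_2}$ every vertex of $\mathfrak{G}_1$ is connected to \emph{every} vertex of $\mathfrak{G}_2$ through the edge set $E'$, so that any topological ordering is forced to list all of $\mathfrak{G}_1$'s vertices before any of $\mathfrak{G}_2$'s. Writing $n_1 = |\mathfrak{G}_1|_\bullet$ and $n_2 = |\mathfrak{G}_2|_\bullet$, recall that $\GJOIN{\mathfrak{G}_1}{\mathfrak{G}_2}$ has vertex set $\NAT_{<n_1+n_2}$, where indices $0,\dots,n_1-1$ carry the labels of $\mathfrak{G}_1$ and indices $n_1,\dots,n_1+n_2-1$ those of $\mathfrak{G}_2$, and its edge set contains $E' = \{(i, n_1+j) \mid 0 \le i < n_1,\ 0 \le j < n_2\}$. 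First I would fix a topological ordering $f$ of the join and consider the two position sets $P_1 = \{ f^{-1}(i) \mid 0 \le i < n_1 \}$ and $P_2 = \{ f^{-1}(n_1+j) \mid 0 \le j < n_2 \}$.

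The key step is to show $P_1 = \NAT_{<n_1}$ and $P_2 = \{n_1,\dots,n_1+n_2-1\}$. Since $(i, n_1+j) \in E'$ for every choice of $i$ and $j$, the defining condition of a topological ordering gives $f^{-1}(i) < f^{-1}(n_1+j)$ for all such pairs; hence every element of $P_1$ is strictly smaller than every element of $P_2$. As $f$ is a bijection, $P_1$ and $P_2$ are disjoint, cover $\NAT_{<n_1+n_2}$, and satisfy $|P_1| = n_1$ and $|P_2| = n_2$, so the only way $P_1$ can lie entirely below $P_2$ is the split at $n_1$. Equivalently, $f$ restricts to a bijection of $\NAT_{<n_1}$ onto itself and of $\{n_1,\dots,n_1+n_2-1\}$ onto itself.

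With this block structure in hand, I would define $f_1(n) = f(n)$ for $0 \le n < n_1$ and $f_2(n) = f(n+n_1) - n_1$ for $0 \le n < n_2$, exactly the components of the join of orderings from \propref{graph/ordering/concat}. The block structure makes both well-defined bijections, and $f = f_1 + f_2$ is immediate by unfolding the definition of $+$ on the two index ranges. It then remains to verify the topological-ordering conditions. For an edge $(a,b) \in E_1$ the corresponding join edge lies in $E_1$ with both endpoints below $n_1$, so $f_1^{-1}(a) = f^{-1}(a) < f^{-1}(b) = f_1^{-1}(b)$. For an edge $(a,b) \in E_2$ the join edge is $(n_1+a, n_1+b) \in E_2'$; subtracting the shift $n_1$ from $f^{-1}(n_1+a) < f^{-1}(n_1+b)$ yields $f_2^{-1}(a) < f_2^{-1}(b)$.

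The main obstacle is the block-separation argument of the second paragraph: although it can be dispatched tersely with ``by definition'', its real content is the counting observation that a bipartite-complete edge set leaves no freedom to interleave the two vertex blocks in any valid ordering. Once that is pinned down, the construction of $f_1$ and $f_2$ and the check that they are topological orderings summing to $f$ are routine index bookkeeping mirroring \propref{graph/ordering/join} in the reverse direction.
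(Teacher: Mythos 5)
Your proof is correct and uses exactly the same construction as the paper: $f_1(n) = f(n)$ on $\NAT_{<n_1}$ and $f_2(n) = f(n + n_1) - n_1$ on $\NAT_{<n_2}$, where $n_1 = |\mathfrak{G}_1|_\bullet$ and $n_2 = |\mathfrak{G}_2|_\bullet$; the paper simply compresses all verification into ``the conditions follow by the definition.'' Your block-separation argument (the bipartite-complete edge set $E'$ forces every position of a $\mathfrak{G}_1$-vertex below every position of a $\mathfrak{G}_2$-vertex, so by counting $f$ restricts to bijections on each index block) is precisely the implicit content needed to make that construction well-defined, so you have filled in, rather than deviated from, the paper's argument.
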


\begin{prop}{graph/ordering/union}
    If $f_1$ and $f_2$ are topological orderings of $\mathfrak{G}_{{\mathrm{1}}}$ and $\mathfrak{G}_{{\mathrm{2}}}$, respectively, then both
    \begin{align*}
        f(n)  & = \begin{cases}
                      f_1(n)                                   & (0 \le n <  | \mathfrak{G}_{{\mathrm{1}}} |_{\bullet} )                    \\
                      f_2(n -  | \mathfrak{G}_{{\mathrm{1}}} |_{\bullet} ) +  | \mathfrak{G}_{{\mathrm{1}}} |_{\bullet}  & ( | \mathfrak{G}_{{\mathrm{1}}} |_{\bullet}  \le n <  |  \GJOIN{ \mathfrak{G}_{{\mathrm{1}}} }{ \mathfrak{G}_{{\mathrm{2}}} }  |_{\bullet} )
                  \end{cases} \\
        f'(n) & = \begin{cases}
                      f_2(n) +  | \mathfrak{G}_{{\mathrm{1}}} |_{\bullet}  & (0 \le n <  | \mathfrak{G}_{{\mathrm{2}}} |_{\bullet} )                    \\
                      f_1(n -  | \mathfrak{G}_{{\mathrm{2}}} |_{\bullet} ) & ( | \mathfrak{G}_{{\mathrm{2}}} |_{\bullet}  \le n <  |  \GJOIN{ \mathfrak{G}_{{\mathrm{1}}} }{ \mathfrak{G}_{{\mathrm{2}}} }  |_{\bullet} )
                  \end{cases}
    \end{align*}
    are topological orderings of $\mathfrak{G}_{{\mathrm{1}}}  \cup  \mathfrak{G}_{{\mathrm{2}}}$.

    \proof By definition.
\end{prop}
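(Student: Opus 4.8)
The plan is to argue by direct appeal to the definitions of union (\Cref{fig:language:graph-join-union}), of topological ordering, and of spanning, treating the two functions separately. Write $\mathfrak{G}_1 = (n_1, v_1, E_1)$ and $\mathfrak{G}_2 = (n_2, v_2, E_2)$, so that $\mathfrak{G}_1 \cup \mathfrak{G}_2 = (n_1 + n_2, v, E_1 \cup E_2')$ with $E_2' = \{(n_1 + a, n_1 + b) \mid (a,b) \in E_2\}$ and no edges crossing between the two blocks $\{0,\dots,n_1-1\}$ and $\{n_1,\dots,n_1+n_2-1\}$. Note that $| \mathfrak{G}_1 |_{\bullet} = n_1$ and $|  \GJOIN{ \mathfrak{G}_1 }{ \mathfrak{G}_2 }  |_{\bullet} = | \mathfrak{G}_1 \cup \mathfrak{G}_2 |_{\bullet} = n_1 + n_2$, since join and union share the same vertex set.

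For $f$, the case split in its definition shows that $f$ is exactly the concatenation $f_1 + f_2$ of \propref{graph/ordering/concat}. By \propref{graph/ordering/join}, $f_1 + f_2$ is a topological ordering of the \emph{join} $ \GJOIN{ \mathfrak{G}_1 }{ \mathfrak{G}_2 } $. I would then record the one-line auxiliary fact that a topological ordering of a graph $\mathfrak{G}''$ is also a topological ordering of every $\mathfrak{G}'$ with $\mathfrak{G}' \mathrel{<_\rightarrow} \mathfrak{G}''$: both graphs have the same vertex set, the candidate is still a bijection, and the constraint $f^{-1}(a) < f^{-1}(b)$ now only has to be checked on the smaller edge set $E' \subseteq E''$, where it already holds. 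Since $\mathfrak{G}_1 \cup \mathfrak{G}_2 \mathrel{<_\rightarrow}  \GJOIN{ \mathfrak{G}_1 }{ \mathfrak{G}_2 } $ by \propref{graph/span}, this immediately yields that $f$ is a topological ordering of $\mathfrak{G}_1 \cup \mathfrak{G}_2$.

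For $f'$ I would verify the two defining properties by hand. Bijectivity: the block $0 \le n < n_2$ maps onto $\{n_1,\dots,n_1+n_2-1\}$ via $n \mapsto f_2(n) + n_1$, and the block $n_2 \le n < n_1 + n_2$ maps onto $\{0,\dots,n_1-1\}$ via $n \mapsto f_1(n - n_2)$; these codomains are disjoint and exhaustive, and each map is a bijection (a bijection composed with a shift), so $f'$ is a bijection. For the ordering condition I would compute the inverse blockwise: for a vertex $a \in \{0,\dots,n_1-1\}$ one gets $(f')^{-1}(a) = f_1^{-1}(a) + n_2$, and for a vertex $n_1 + a$ with $a \in \{0,\dots,n_2-1\}$ one gets $(f')^{-1}(n_1 + a) = f_2^{-1}(a)$. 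Each edge of the union then falls into exactly one family: for $(a,b) \in E_1$ the inequality $(f')^{-1}(a) < (f')^{-1}(b)$ reduces to $f_1^{-1}(a) < f_1^{-1}(b)$, which holds because $f_1$ is a topological ordering of $\mathfrak{G}_1$; for $(n_1+a, n_1+b) \in E_2'$ it reduces to $f_2^{-1}(a) < f_2^{-1}(b)$, which holds because $f_2$ orders $\mathfrak{G}_2$. This establishes $f'$.

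I expect no genuine obstacle: the content is entirely bookkeeping of index shifts, and the only structural input is the trivial observation that topological orderings are inherited along the spanning relation (used for $f$) together with the blockwise inversion of $f'$. The one place to be careful is keeping the shift directions straight in the inverse of $f'$ — in particular that the $\mathfrak{G}_1$-block, which $f'$ places \emph{after} the $\mathfrak{G}_2$-block, contributes the $+\,n_2$ offset to $(f')^{-1}$ while the $\mathfrak{G}_2$-block contributes none — and confirming that the union has no cross-block edges, so that every edge is handled by exactly one of the two reductions.
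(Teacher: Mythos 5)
Your proposal is correct and takes essentially the paper's approach: the paper's proof is literally ``By definition,'' i.e.\ exactly the direct index-shift verification you carry out, and your bookkeeping (the inverse computations $(f')^{-1}(a) = f_1^{-1}(a) + n_2$ for $\mathfrak{G}_1$-vertices and $(f')^{-1}(n_1+a) = f_2^{-1}(a)$ for $\mathfrak{G}_2$-vertices, plus the absence of cross-block edges in the union) is accurate. Your shortcut for $f$ --- identifying it with $f_1 + f_2$, invoking that this is a topological ordering of $\GJOIN{\mathfrak{G}_1}{\mathfrak{G}_2}$, and transporting it along $\mathfrak{G}_1 \cup \mathfrak{G}_2 \mathrel{<_\rightarrow} \GJOIN{\mathfrak{G}_1}{\mathfrak{G}_2}$ --- only reuses facts the paper itself records (the concatenation and join-ordering lemmas, the spanning fact for union versus join, and inheritance of orderings along the spanning relation), so nothing is missing.
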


\begin{prop}{graph/ordering/proj/iso}
    Suppose $\mathfrak{G}_{{\mathrm{1}}} = (n, v_1, E_1)$ and $\mathfrak{G}_{{\mathrm{2}}} = (n, v_2, E_2)$.
    If $\mathfrak{G}_{{\mathrm{1}}}  \simeq  \mathfrak{G}_{{\mathrm{2}}}$ and $f_1$ is a topological ordering of $\mathfrak{G}_{{\mathrm{1}}}$, then there exists a topological ordering $f_2$ of $\mathfrak{G}_{{\mathrm{2}}}$ such that $v_1 \circ f_1 = v_2 \circ f_2$.

    \proof We have some bijection $g$ from $\NAT_{<n}$ to $\NAT_{<n}$ such that $v_2 = v_1 \circ g$ since $\mathfrak{G}_{{\mathrm{1}}}  \simeq  \mathfrak{G}_{{\mathrm{2}}}$.
    To this end, we choose $f_2 = g^{-1} \circ f_1$.
    This is a bijection from $\NAT_{<n}$ to $\NAT_{<n}$, and for any $(n_1, n_2) \in E_2$, $f_2^{-1}(n_1) < f_2^{-1}(n_2)$ since $f_1^{-1}(g(n_1)) < f_1^{-1}(g(n_2))$.
    So, $f_2$ is a topological ordering of $\mathfrak{G}_{{\mathrm{2}}}$.
    Lastly, we have $v_2 \circ g^{-1} = v_1$ from $v_2 = v_1 \circ g$, and then $v_2 \circ f_2 = v_2 \circ g^{-1} \circ f_1 = v_1 \circ f_1$.
\end{prop}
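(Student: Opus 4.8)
The plan is to transport the given topological ordering $f_1$ across the isomorphism by pre-composing with the inverse of the isomorphism bijection. First I would unfold the hypothesis $\mathfrak{G}_{{\mathrm{1}}} \simeq \mathfrak{G}_{{\mathrm{2}}}$ to obtain a bijection $g \colon \NAT_{<n} \to \NAT_{<n}$ satisfying $v_1 \circ g = v_2$ and $E_1 = \{(g(a), g(b)) \mid (a,b) \in E_2\}$. The natural candidate for the required ordering of $\mathfrak{G}_{{\mathrm{2}}}$ is $f_2 = g^{-1} \circ f_1$, which is automatically a bijection from $\NAT_{<n}$ to $\NAT_{<n}$ as a composite of two bijections, so that part of the definition of a topological ordering is immediate.

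Next I would verify that $f_2$ respects the edges of $\mathfrak{G}_{{\mathrm{2}}}$. Fix an arbitrary edge $(n_1, n_2) \in E_2$; by the edge correspondence above we then have $(g(n_1), g(n_2)) \in E_1$, and since $f_1$ is a topological ordering of $\mathfrak{G}_{{\mathrm{1}}}$ this gives $f_1^{-1}(g(n_1)) < f_1^{-1}(g(n_2))$. Because $f_2^{-1} = f_1^{-1} \circ g$, the left- and right-hand sides here are exactly $f_2^{-1}(n_1)$ and $f_2^{-1}(n_2)$, so $f_2^{-1}(n_1) < f_2^{-1}(n_2)$, which is the defining condition for $f_2$ to be a topological ordering of $\mathfrak{G}_{{\mathrm{2}}}$.

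Finally I would discharge the labelling equation $v_1 \circ f_1 = v_2 \circ f_2$. Starting from $v_1 \circ g = v_2$ and post-composing with $g^{-1}$ yields $v_1 = v_2 \circ g^{-1}$; substituting this together with $f_2 = g^{-1} \circ f_1$ gives $v_2 \circ f_2 = v_2 \circ g^{-1} \circ f_1 = v_1 \circ f_1$, as desired. The one point that demands care — and really the entire substance of the argument — is the \emph{direction} of the isomorphism: the defining equation for $\simeq$ relates $E_1$ to the image of $E_2$ under $g$ rather than the reverse, so one must pull back along $g^{-1}$ (not $g$) to turn $f_1$ into an ordering of $\mathfrak{G}_{{\mathrm{2}}}$. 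Once that direction is pinned down, both the edge-preservation check and the labelling identity reduce to routine substitution, and I anticipate no deeper obstacle.
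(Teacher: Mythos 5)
Your proposal is correct and follows exactly the same route as the paper's own proof: extract the isomorphism bijection $g$ with $v_1 \circ g = v_2$, take $f_2 = g^{-1} \circ f_1$, verify edge preservation via $f_2^{-1} = f_1^{-1} \circ g$, and conclude the labelling identity by substitution. Your writeup is in fact slightly more explicit than the paper's about the edge correspondence and the direction of the isomorphism, but there is no substantive difference.
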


\begin{prop}{graph/ordering/span}
    If $\mathfrak{G}_{{\mathrm{1}}}  \mathrel{<_\rightarrow}  \mathfrak{G}_{{\mathrm{2}}}$ and $f$ is a topological ordering of $\mathfrak{G}_{{\mathrm{2}}}$, then $f$ is also a topological ordering of $\mathfrak{G}_{{\mathrm{1}}}$.

    \proof By the assumption, we have some $(n, v, E_1) = \mathfrak{G}_{{\mathrm{1}}}$ and $(n, v, E_2) = \mathfrak{G}_{{\mathrm{2}}}$ where $E_1 \subseteq E_2$.
    We know $f$ is a bijection between $\NAT_{<n}$ to $\NAT_{<n}$, and $f^{-1}(n_1) < f^{-1}(n_2)$ for any $(n_1, n_2) \in E_2$.
    To this end, it suffices to show that $f^{-1}(n_1) < f^{-1}(n_2)$ for any $(n_1, n_2) \in E_1$, which actually holds since $E_1 \subseteq E_2$.
\end{prop}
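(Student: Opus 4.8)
The plan is to unfold both definitions and observe that the ordering condition transfers along the edge-set inclusion. First I would expand the spanning hypothesis $\mathfrak{G}_{{\mathrm{1}}} \mathrel{<_\rightarrow} \mathfrak{G}_{{\mathrm{2}}}$: by its definition this means there are representations $\mathfrak{G}_{{\mathrm{1}}} = (n, v, E_1)$ and $\mathfrak{G}_{{\mathrm{2}}} = (n, v, E_2)$ sharing the same vertex count $n$ and the same labeling $v$, with $E_1 \subseteq E_2$. In particular, the two graph representations differ only in their edge sets, and both are indexed by the same set $\NAT_{<n}$.

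Next I would expand what it means for $f$ to be a topological ordering of $\mathfrak{G}_{{\mathrm{2}}}$: namely, $f$ is a bijection from $\NAT_{<n}$ to $\NAT_{<n}$ such that $f^{-1}(n_1) < f^{-1}(n_2)$ holds for every $(n_1, n_2) \in E_2$. To conclude that $f$ is a topological ordering of $\mathfrak{G}_{{\mathrm{1}}}$, I must verify the two requirements of that definition for $\mathfrak{G}_{{\mathrm{1}}}$. The bijectivity requirement is immediate, because $\mathfrak{G}_{{\mathrm{1}}}$ and $\mathfrak{G}_{{\mathrm{2}}}$ share the vertex index set $\NAT_{<n}$, so the very map $f$ already serves as a bijection $\NAT_{<n} \to \NAT_{<n}$.

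For the ordering requirement I would take an arbitrary edge $(n_1, n_2) \in E_1$. By the inclusion $E_1 \subseteq E_2$ obtained above, this edge also lies in $E_2$, whence the topological-ordering property of $f$ for $\mathfrak{G}_{{\mathrm{2}}}$ yields $f^{-1}(n_1) < f^{-1}(n_2)$. Since the edge was arbitrary, the condition holds for all edges of $\mathfrak{G}_{{\mathrm{1}}}$, and thus $f$ is a topological ordering of $\mathfrak{G}_{{\mathrm{1}}}$, as required.

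This argument has essentially no hard part: the content is carried entirely by monotonicity of the universally quantified edge condition under the inclusion $E_1 \subseteq E_2$. The only point worth flagging is definitional bookkeeping, namely that the spanning relation forces equality of vertex counts and labelings, so a single map $f$ can be reused verbatim as a bijection for both representations; once that is noted, the proof is a one-line consequence of $E_1 \subseteq E_2$.
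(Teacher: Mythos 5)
Your proposal is correct and follows essentially the same route as the paper's proof: unfold the spanning relation to get shared $n$, $v$, and $E_1 \subseteq E_2$, note that bijectivity of $f$ carries over unchanged, and transfer the ordering condition from $E_2$ to $E_1$ by edge-set inclusion. No gaps; the extra care you take in explicitly checking both clauses of the topological-ordering definition is just a more spelled-out version of what the paper does.
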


\begin{prop}{graph/traceable/join}
    $ \GJOIN{ \mathfrak{G}_{{\mathrm{1}}} }{ \mathfrak{G}_{{\mathrm{2}}} } $ is traceable if and only if both $\mathfrak{G}_{{\mathrm{1}}}$ and $\mathfrak{G}_{{\mathrm{2}}}$ are traceable.

    \proof This is a corrorally of \propref{graph/ordering/join} and \propref{graph/ordering/join/inv}.
\end{prop}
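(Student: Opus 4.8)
The plan is to read ``traceable'' as the property of admitting a \emph{unique} topological ordering, and then to transport uniqueness across the join using the two structural lemmas already established. Recall that by \propref{graph/ordering/exists} every graph representation has at least one topological ordering, so $\mathfrak{G}$ being traceable is equivalent to saying that any two topological orderings of $\mathfrak{G}$ coincide. Thus the whole argument reduces to comparing orderings of $\GJOIN{\mathfrak{G}_1}{\mathfrak{G}_2}$ with orderings of the two factors, which is exactly what \propref{graph/ordering/join} and \propref{graph/ordering/join/inv} let me do.

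For the ``if'' direction, suppose $\mathfrak{G}_1$ and $\mathfrak{G}_2$ are traceable with unique orderings $f_1$ and $f_2$. By \propref{graph/ordering/join}, $f_1 + f_2$ is a topological ordering of $\GJOIN{\mathfrak{G}_1}{\mathfrak{G}_2}$, so the join has at least one. For uniqueness I would take an arbitrary ordering $g$ of the join; \propref{graph/ordering/join/inv} decomposes it as $g = g_1 + g_2$ with $g_1$ and $g_2$ topological orderings of $\mathfrak{G}_1$ and $\mathfrak{G}_2$ respectively. Traceability of the factors forces $g_1 = f_1$ and $g_2 = f_2$, hence $g = f_1 + f_2$, and the ordering of the join is unique.

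For the ``only if'' direction I would argue contrapositively. If, say, $\mathfrak{G}_1$ is not traceable it has two distinct orderings $f_1 \neq f_1'$; fix any ordering $f_2$ of $\mathfrak{G}_2$ (again via \propref{graph/ordering/exists}). Then \propref{graph/ordering/join} produces two orderings $f_1 + f_2$ and $f_1' + f_2$ of $\GJOIN{\mathfrak{G}_1}{\mathfrak{G}_2}$, and these are distinct because, by the definition of $+$ in \propref{graph/ordering/concat}, $f_1 + f_2$ agrees with its first summand on the initial segment $\NAT_{<n_1}$ (where $n_1$ is the number of vertices of $\mathfrak{G}_1$), so $f_1 \neq f_1'$ there yields $f_1 + f_2 \neq f_1' + f_2$. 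Hence the join is not traceable. The symmetric case of a non-traceable $\mathfrak{G}_2$ is identical, using that $+$ agrees with its second summand on the final segment.

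The remaining verifications are routine: the componentwise injectivity of $+$ is immediate from its defining formula, and the degenerate cases where a factor has zero vertices are handled by observing that the empty function is vacuously the unique ordering while the join collapses to the other factor (\propref{graph/iso}). The only step I expect to require mild care is aligning the decomposition delivered by \propref{graph/ordering/join/inv} with the two factors so that their uniqueness can be invoked; but since that lemma already returns $g = g_1 + g_2$ with each $g_i$ an ordering of $\mathfrak{G}_i$, this amounts to no more than matching indices, so the argument should go through cleanly.
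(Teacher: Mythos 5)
Your proof is correct and is exactly the argument the paper intends: the paper's one-line proof cites \propref{graph/ordering/join} and \propref{graph/ordering/join/inv} as yielding the result as a corollary, and your write-up is precisely the expansion of that corollary (using \propref{graph/ordering/exists} to equate non-traceability with having at least two orderings, transporting uniqueness forward via the join construction and backward via the decomposition). No gaps; the injectivity of $+$ in each argument and the degenerate zero-vertex cases are handled correctly.
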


\begin{prop}{graph/ordering/multi/iso}
    If $\mathfrak{G}_{{\mathrm{1}}}  \simeq  \mathfrak{G}_{{\mathrm{2}}}$ and $\mathfrak{G}_{{\mathrm{1}}}$ has more than one topological ordering, then $\mathfrak{G}_{{\mathrm{2}}}$ has more than one topological ordering.

    \proof Let $(n, v_1, E_1) = \mathfrak{G}_{{\mathrm{1}}}$ and $(n, v_2, E_2) = \mathfrak{G}_{{\mathrm{2}}}$.
    Since $\mathfrak{G}_{{\mathrm{1}}}  \simeq  \mathfrak{G}_{{\mathrm{2}}}$, we have some bijection $g$ such that $E1 = \{\,(g(n_1), g(n_2) \mid (n_1, n_2) \in E_2)\,\}$ and $v_1 \circ g = v_2$.
    Assuming $f_1$ and $f_2$ are the given topological orderings of $\mathfrak{G}_{{\mathrm{1}}}$, we can see that $g^{-1} \circ f_1$ and $g^{-1} \circ f_2$ are topological orderings of $\mathfrak{G}_{{\mathrm{2}}}$, and they are distinct since $f_1 \neq f_2$.

    To this end, let's see $g^{-1} \circ f_1$ is a topologial ordering of $\mathfrak{G}_{{\mathrm{2}}}$.
    We can easily see $g^{-1} \circ f_1$ is a bijection between $\NAT_{<n}$ to $\NAT_{<n}$ since $g$ and $f_1$ are bijections between $\NAT_{<n}$ to $\NAT_{<n}$.
    Let $(n_1, n_2) \in E_2$.
    We have $f_1^{-1}(g(n_1)) < f_1^{-1}(g(n_2))$ since $(g(n_1), g(n_2)) \in E_1$ and $f_1$ is topological ordering of $\Gamma_{{\mathrm{1}}}$.
    So, $(g^{-1} \circ f_1)^{-1}(n_1) < (g^{-1} \circ f_1)^{-1}(n_2)$.
\end{prop}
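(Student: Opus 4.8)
The plan is to exploit the fact that an isomorphism of graph representations sets up a bijection between their vertex sets that carries the edge relation across, and hence transports topological orderings of one graph to topological orderings of the other. Since this transport is itself a bijection between the two sets of topological orderings, the cardinality of that set is an isomorphism invariant; in particular, possessing at least two distinct orderings is preserved. So it suffices to exhibit an injective map from the topological orderings of $\mathfrak{G}_{{\mathrm{1}}}$ to those of $\mathfrak{G}_{{\mathrm{2}}}$ and feed it the two given orderings of $\mathfrak{G}_{{\mathrm{1}}}$.

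First I would unpack the hypothesis $\mathfrak{G}_{{\mathrm{1}}}  \simeq  \mathfrak{G}_{{\mathrm{2}}}$. Writing $\mathfrak{G}_{{\mathrm{1}}} = (n, v_1, E_1)$ and $\mathfrak{G}_{{\mathrm{2}}} = (n, v_2, E_2)$, the definition of isomorphic graph representations supplies a bijection $g$ on $\NAT_{<n}$ with $E_1 = \{(g(a), g(b)) \mid (a,b) \in E_2\}$ and $v_1 \circ g = v_2$. The operative observation is that for every $(a,b) \in E_2$ we have $(g(a), g(b)) \in E_1$, so $g$ maps the edge relation of $\mathfrak{G}_{{\mathrm{2}}}$ into that of $\mathfrak{G}_{{\mathrm{1}}}$.

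Given any topological ordering $f$ of $\mathfrak{G}_{{\mathrm{1}}}$, I would set $h = g^{-1} \circ f$ and check that $h$ is a topological ordering of $\mathfrak{G}_{{\mathrm{2}}}$. As a composition of bijections on $\NAT_{<n}$, $h$ is a bijection. For the ordering condition, take $(a,b) \in E_2$; then $(g(a), g(b)) \in E_1$, so $f^{-1}(g(a)) < f^{-1}(g(b))$ because $f$ is a topological ordering of $\mathfrak{G}_{{\mathrm{1}}}$. Since $h^{-1} = f^{-1} \circ g$, this is exactly $h^{-1}(a) < h^{-1}(b)$, as required. (Alternatively one could invoke \propref{graph/ordering/proj/iso}, which already constructs such an $h$; here I need only the transport itself, not its extra label-agreement clause.)

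Finally, I would observe that $f \mapsto g^{-1} \circ f$ is injective: if $g^{-1} \circ f_1 = g^{-1} \circ f_2$ then, composing with $g$ on the left, $f_1 = f_2$. Hence the two given distinct topological orderings $f_1 \ne f_2$ of $\mathfrak{G}_{{\mathrm{1}}}$ yield two distinct topological orderings $g^{-1} \circ f_1 \ne g^{-1} \circ f_2$ of $\mathfrak{G}_{{\mathrm{2}}}$, which proves the claim. The only delicate point, and the one I would watch most carefully, is getting the composition direction right: because a topological ordering is indexed so that $f^{-1}$ records vertex \emph{positions}, the ordering condition must be phrased via $h^{-1} = f^{-1} \circ g$, and it is easy to accidentally compose on the wrong side or to conflate $g$ with $g^{-1}$. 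Everything else is routine bookkeeping with bijections.
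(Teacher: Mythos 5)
Your proposal is correct and follows essentially the same route as the paper: transport each topological ordering $f$ of $\mathfrak{G}_1$ to $g^{-1}\circ f$, verify the ordering condition via $(g^{-1}\circ f)^{-1} = f^{-1}\circ g$, and conclude distinctness from injectivity of the transport. The only difference is cosmetic — you spell out the injectivity step that the paper compresses into ``they are distinct since $f_1 \neq f_2$.''
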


\begin{prop}{graph/traceable/iso}
    If $\mathfrak{G}_{{\mathrm{1}}}  \simeq  \mathfrak{G}_{{\mathrm{2}}}$ and $\mathfrak{G}_{{\mathrm{1}}}$ is traceable, then $\mathfrak{G}_{{\mathrm{2}}}$ is traceable.

    \proof This follows by the contraposition of \propref{graph/ordering/multi/iso}.
\end{prop}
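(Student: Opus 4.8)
The plan is to reduce this statement immediately to \propref{graph/ordering/multi/iso} by contraposition. The key preliminary observation is a matter of unfolding definitions: a graph representation is \emph{traceable} exactly when its topological ordering is unique, and by \propref{graph/ordering/exists} every graph representation admits \emph{at least} one topological ordering. Hence, for any $\mathfrak{G}$, the assertion ``$\mathfrak{G}$ is not traceable'' is equivalent to the assertion ``$\mathfrak{G}$ has more than one topological ordering.'' This equivalence is what lets \propref{graph/ordering/multi/iso}, which is phrased in terms of ``more than one ordering,'' speak about traceability.

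First I would record the symmetry of graph isomorphism: from $\mathfrak{G}_1 \simeq \mathfrak{G}_2$ we obtain $\mathfrak{G}_2 \simeq \mathfrak{G}_1$, since the inverse of a witnessing bijection $g$ (together with the inverse relabeling of edges) is again a witnessing bijection. Then I would argue by contradiction. Assume $\mathfrak{G}_1 \simeq \mathfrak{G}_2$ and $\mathfrak{G}_1$ is traceable, but suppose $\mathfrak{G}_2$ were not traceable. By the equivalence above, $\mathfrak{G}_2$ then has at least two distinct topological orderings. Applying \propref{graph/ordering/multi/iso} to the isomorphism $\mathfrak{G}_2 \simeq \mathfrak{G}_1$ (here the symmetry step is essential, so that $\mathfrak{G}_2$ occupies the hypothesis position) yields that $\mathfrak{G}_1$ also has more than one topological ordering. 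This contradicts the assumption that $\mathfrak{G}_1$ is traceable, so $\mathfrak{G}_2$ must be traceable after all.

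The genuine content of the argument is entirely contained in \propref{graph/ordering/multi/iso}, which transports a pair of distinct orderings across the isomorphism by conjugating each ordering with the witnessing bijection $g$ (mapping $f \mapsto g^{-1} \circ f$) and checking that distinctness is preserved; here we merely invoke that result. The main obstacle is therefore only bookkeeping rather than mathematical depth: I must be careful to treat ``traceable'' as ``\emph{unique} ordering'' (invoking \propref{graph/ordering/exists} to dismiss the otherwise vacuous possibility of no ordering at all) and to orient the isomorphism in the correct direction before applying the cited lemma. A direct proof is also available — exhibit the unique ordering of $\mathfrak{G}_2$ as $g^{-1}\circ f$ where $f$ is the unique ordering of $\mathfrak{G}_1$ — but the contrapositive route keeps the uniqueness reasoning packaged inside \propref{graph/ordering/multi/iso} and is cleaner.
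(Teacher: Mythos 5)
Your proposal is correct and takes essentially the same route as the paper, whose entire proof is ``by the contraposition of \propref{graph/ordering/multi/iso}.'' The only difference is that you spell out the two details the paper leaves tacit: the symmetry of $\simeq$ needed to orient the cited lemma, and \propref{graph/ordering/exists} to identify ``not traceable'' with ``more than one topological ordering.''
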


\begin{prop}{graph/ordering/multi/span}
    If $\mathfrak{G}_{{\mathrm{1}}}  \mathrel{<_\rightarrow}  \mathfrak{G}_{{\mathrm{2}}}$ and $\mathfrak{G}_{{\mathrm{2}}}$ has more than one topological ordering, then $\mathfrak{G}_{{\mathrm{1}}}$ has more than one topological ordering.

    \proof This immediately follows by \propref{graph/ordering/span}.
\end{prop}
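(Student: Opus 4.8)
The plan is to reduce the statement directly to \propref{graph/ordering/span}, which establishes that any topological ordering of the denser graph $\mathfrak{G}_{{\mathrm{2}}}$ is automatically a topological ordering of the sparser graph $\mathfrak{G}_{{\mathrm{1}}}$ whenever $\mathfrak{G}_{{\mathrm{1}}}  \mathrel{<_\rightarrow}  \mathfrak{G}_{{\mathrm{2}}}$. The underlying intuition is that the spanning relation only deletes edges ($E_1 \subseteq E_2$, while the vertex count and the labeling $v$ are shared), so the ordering constraints can only weaken; an assignment that satisfies all edge constraints of $\mathfrak{G}_{{\mathrm{2}}}$ a fortiori satisfies the fewer constraints of $\mathfrak{G}_{{\mathrm{1}}}$.

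Concretely, I would first unfold the hypothesis that $\mathfrak{G}_{{\mathrm{2}}}$ has more than one topological ordering to obtain two distinct bijections $f_1 \neq f_2$ of $\NAT_{< n}$, each a topological ordering of $\mathfrak{G}_{{\mathrm{2}}}$. Applying \propref{graph/ordering/span} to $f_1$ and to $f_2$ in turn shows that both are topological orderings of $\mathfrak{G}_{{\mathrm{1}}}$ as well. Since these maps are not altered by the reinterpretation --- they are literally the same functions, and spanning forces $\mathfrak{G}_{{\mathrm{1}}}$ and $\mathfrak{G}_{{\mathrm{2}}}$ to have the same number $n$ of vertices, so the relevant set of bijections of $\NAT_{< n}$ is identical for both graphs --- they remain distinct, whence $\mathfrak{G}_{{\mathrm{1}}}$ admits at least two topological orderings, which is the goal.

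There is essentially no obstacle here: the single point needing any care is that distinctness of the two orderings is preserved under the carryover, and this is immediate from the shared vertex set. The argument is the span analogue of \propref{graph/ordering/multi/iso}, but strictly simpler, since the vertex relabeling used there degenerates to the identity in the spanning case and only the edge inclusion $E_1 \subseteq E_2$ is invoked.
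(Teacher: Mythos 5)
Your proposal is correct and matches the paper's proof exactly: the paper also derives the result immediately from \propref{graph/ordering/span}, i.e., both distinct topological orderings of $\mathfrak{G}_{{\mathrm{2}}}$ carry over unchanged to $\mathfrak{G}_{{\mathrm{1}}}$ (spanning fixes the vertex count and labeling, only shrinking the edge set), so distinctness is preserved for free. Your closing observation that this is the degenerate, relabeling-free analogue of \propref{graph/ordering/multi/iso} is accurate but not needed.
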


\begin{prop}{graph/traceable/span}
    If $\mathfrak{G}_{{\mathrm{1}}}  \mathrel{<_\rightarrow}  \mathfrak{G}_{{\mathrm{2}}}$ and $\mathfrak{G}_{{\mathrm{1}}}$ is traceable, then $\mathfrak{G}_{{\mathrm{2}}}$ is traceable.

    \proof This follows by the contraposition of \propref{graph/ordering/multi/span}.
\end{prop}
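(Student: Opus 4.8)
The plan is to read the statement off as the contrapositive of \propref{graph/ordering/multi/span}, so that almost no new work is required. First I would unfold the definition of \emph{traceable}: a graph representation is traceable exactly when it admits a unique topological ordering, which—given that a topological ordering always exists by \propref{graph/ordering/exists}—is equivalent to saying that it does \emph{not} admit more than one topological ordering. This reformulation is what lets the previous lemma apply verbatim.

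With that in hand, the argument is immediate. \propref{graph/ordering/multi/span} states that, under $\mathfrak{G}_{{\mathrm{1}}}  \mathrel{<_\rightarrow}  \mathfrak{G}_{{\mathrm{2}}}$, if $\mathfrak{G}_{{\mathrm{2}}}$ has more than one topological ordering then so does $\mathfrak{G}_{{\mathrm{1}}}$. Taking the contrapositive while keeping the spanning hypothesis fixed: if $\mathfrak{G}_{{\mathrm{1}}}$ does not have more than one topological ordering—i.e., is traceable—then $\mathfrak{G}_{{\mathrm{2}}}$ does not either. Combined with the existence of at least one ordering for $\mathfrak{G}_{{\mathrm{2}}}$ (\propref{graph/ordering/exists}), this upgrades ``at most one'' to ``exactly one'', so $\mathfrak{G}_{{\mathrm{2}}}$ has a unique topological ordering and is thus traceable.

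The conceptual point that makes this work—and the only place where one could slip—is the direction of the spanning relation. Since $\mathfrak{G}_{{\mathrm{1}}}  \mathrel{<_\rightarrow}  \mathfrak{G}_{{\mathrm{2}}}$ means $E_1 \subseteq E_2$, the graph $\mathfrak{G}_{{\mathrm{2}}}$ carries \emph{more} ordering constraints than $\mathfrak{G}_{{\mathrm{1}}}$, hence admits a \emph{subset} of the latter's topological orderings; this is exactly \propref{graph/ordering/span}, which already underlies \propref{graph/ordering/multi/span}. Uniqueness therefore propagates from the edge-poorer $\mathfrak{G}_{{\mathrm{1}}}$ up to the edge-richer $\mathfrak{G}_{{\mathrm{2}}}$, which is precisely what the contrapositive encodes. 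I expect no genuine obstacle here: all the substance was discharged in \propref{graph/ordering/span} and \propref{graph/ordering/multi/span}, and the remaining task is the purely logical step of negating both sides of that implication and invoking existence.
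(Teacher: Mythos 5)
Your proposal is correct and is essentially the paper's own proof: the paper likewise discharges the statement as the contrapositive of \propref{graph/ordering/multi/span}, which itself rests on \propref{graph/ordering/span}. The only difference is that you make explicit the appeal to \propref{graph/ordering/exists} needed to pass from ``at most one topological ordering'' to ``exactly one,'' a step the paper leaves implicit.
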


\begin{prop}{graph/ordering/union/multi}
    If $0 <  | \mathfrak{G}_{{\mathrm{1}}} |_{\bullet} $ and $0 <  | \mathfrak{G}_{{\mathrm{2}}} |_{\bullet} $, then $\mathfrak{G}_{{\mathrm{1}}}  \cup  \mathfrak{G}_{{\mathrm{2}}}$ has more than one topological ordering.

    \proof We have the following topological orderings of $\mathfrak{G}_{{\mathrm{1}}}  \cup  \mathfrak{G}_{{\mathrm{2}}}$ by \propref{graph/ordering/union}.
    \begin{align*}
        f(n)  & = \begin{cases}
                      f_1(n)                                   & (0 \le n <  | \mathfrak{G}_{{\mathrm{1}}} |_{\bullet} )                    \\
                      f_2(n -  | \mathfrak{G}_{{\mathrm{1}}} |_{\bullet} ) +  | \mathfrak{G}_{{\mathrm{1}}} |_{\bullet}  & ( | \mathfrak{G}_{{\mathrm{1}}} |_{\bullet}  \le n <  |  \GJOIN{ \mathfrak{G}_{{\mathrm{1}}} }{ \mathfrak{G}_{{\mathrm{2}}} }  |_{\bullet} )
                  \end{cases} \\
        f'(n) & = \begin{cases}
                      f_2(n) +  | \mathfrak{G}_{{\mathrm{1}}} |_{\bullet}  & (0 \le n <  | \mathfrak{G}_{{\mathrm{2}}} |_{\bullet} )                    \\
                      f_1(n -  | \mathfrak{G}_{{\mathrm{2}}} |_{\bullet} ) & ( | \mathfrak{G}_{{\mathrm{2}}} |_{\bullet}  \le n <  |  \GJOIN{ \mathfrak{G}_{{\mathrm{1}}} }{ \mathfrak{G}_{{\mathrm{2}}} }  |_{\bullet} )
                  \end{cases}
    \end{align*}
    To this end, we show $f \neq f'$.
    We consider the following three cases.
    \begin{match}
        \item[$  | \mathfrak{G}_{{\mathrm{1}}} |_{\bullet}  \ottsym{=}  | \mathfrak{G}_{{\mathrm{2}}} |_{\bullet}  $]
        We have some $n' <  | \mathfrak{G}_{{\mathrm{2}}} |_{\bullet} $ such that $f_2(n') = 0$ since $f_2$ is a bijection between $\NAT_{< | \mathfrak{G}_{{\mathrm{2}}} |_{\bullet} }$ and $\NAT_{< | \mathfrak{G}_{{\mathrm{2}}} |_{\bullet} }$ and $0 <  | \mathfrak{G}_{{\mathrm{2}}} |_{\bullet} $.
        Now we can see $f(n') = f_1(n')$ and $f'(n') =  | \mathfrak{G}_{{\mathrm{1}}} |_{\bullet} $.
        That implies $f(n') \neq f'(n')$ since $f_1(n') <  | \mathfrak{G}_{{\mathrm{1}}} |_{\bullet} $.
        As a result $f \neq f'$.

        \item[$  | \mathfrak{G}_{{\mathrm{1}}} |_{\bullet}  \ottsym{<}  | \mathfrak{G}_{{\mathrm{2}}} |_{\bullet}  $]
        We can see $f( | \mathfrak{G}_{{\mathrm{1}}} |_{\bullet} ) = f_2(0) +  | \mathfrak{G}_{{\mathrm{1}}} |_{\bullet} $ and $f'( | \mathfrak{G}_{{\mathrm{1}}} |_{\bullet} ) = f_2( | \mathfrak{G}_{{\mathrm{1}}} |_{\bullet} ) +  | \mathfrak{G}_{{\mathrm{1}}} |_{\bullet} $.
        This implies $f( | \mathfrak{G}_{{\mathrm{1}}} |_{\bullet} ) \neq f'( | \mathfrak{G}_{{\mathrm{1}}} |_{\bullet} )$ since $f_2$ is a bijection, and $0 <  | \mathfrak{G}_{{\mathrm{1}}} |_{\bullet} $.
        As a result $f \neq f'$.

        \item[$  | \mathfrak{G}_{{\mathrm{1}}} |_{\bullet}  \ottsym{>}  | \mathfrak{G}_{{\mathrm{2}}} |_{\bullet}  $]
        We can show $f( | \mathfrak{G}_{{\mathrm{2}}} |_{\bullet} ) \neq f'( | \mathfrak{G}_{{\mathrm{2}}} |_{\bullet} )$ in a similar way to the case above. \qedhere
    \end{match}
\end{prop}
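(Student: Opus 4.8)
The plan is to exploit the fact that, by the definition of union in \Cref{fig:language:graph-join-union}, $\mathfrak{G}_{{\mathrm{1}}} \cup \mathfrak{G}_{{\mathrm{2}}}$ has edge set $E_1 \cup E_2'$ with $E_2'$ merely a shifted copy of $E_2$; in particular there is \emph{no} edge linking a vertex coming from $\mathfrak{G}_{{\mathrm{1}}}$ with one coming from $\mathfrak{G}_{{\mathrm{2}}}$. Hence there is no ordering constraint tying the two blocks together, and one is free to schedule either block entirely before the other. Each such choice gives a valid topological ordering, so the strategy is to exhibit two of them and then show they are genuinely distinct.

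Concretely, I would first invoke \propref{graph/ordering/union}, which already supplies two topological orderings $f$ and $f'$ of $\mathfrak{G}_{{\mathrm{1}}} \cup \mathfrak{G}_{{\mathrm{2}}}$ built from fixed topological orderings $f_1$ of $\mathfrak{G}_{{\mathrm{1}}}$ and $f_2$ of $\mathfrak{G}_{{\mathrm{2}}}$: the ordering $f$ lists all of $\mathfrak{G}_{{\mathrm{1}}}$'s vertices first, mapping $\NAT_{<|\mathfrak{G}_{{\mathrm{1}}}|_\bullet}$ into the low block $\{0,\dots,|\mathfrak{G}_{{\mathrm{1}}}|_\bullet-1\}$, whereas $f'$ lists $\mathfrak{G}_{{\mathrm{2}}}$'s vertices first. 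The only remaining obligation is $f \neq f'$, and the cleanest route is to probe both functions at the single index $0$. Since $0 < |\mathfrak{G}_{{\mathrm{1}}}|_\bullet$, the defining formula gives $f(0) = f_1(0)$, and as $f_1$ is a bijection on $\NAT_{<|\mathfrak{G}_{{\mathrm{1}}}|_\bullet}$ we get $f(0) < |\mathfrak{G}_{{\mathrm{1}}}|_\bullet$. Since $0 < |\mathfrak{G}_{{\mathrm{2}}}|_\bullet$, the defining formula for $f'$ gives $f'(0) = f_2(0) + |\mathfrak{G}_{{\mathrm{1}}}|_\bullet \ge |\mathfrak{G}_{{\mathrm{1}}}|_\bullet$. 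Thus $f(0) < |\mathfrak{G}_{{\mathrm{1}}}|_\bullet \le f'(0)$, so $f(0) \neq f'(0)$ and therefore $f \neq f'$; both hypotheses $0 < |\mathfrak{G}_{{\mathrm{1}}}|_\bullet$ and $0 < |\mathfrak{G}_{{\mathrm{2}}}|_\bullet$ are used precisely to ensure that $0$ lands in the first branch of each piecewise definition.

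I do not expect a serious obstacle here; the one delicate point is the interval bookkeeping in the two piecewise definitions, i.e.\ making sure the chosen probe index falls in the intended branch of each. Evaluating at $0$ is what makes this painless, because $0$ is simultaneously the smallest index of the $\mathfrak{G}_{{\mathrm{1}}}$-block under $f$ and of the $\mathfrak{G}_{{\mathrm{2}}}$-block under $f'$; this avoids the three-way case split on whether $|\mathfrak{G}_{{\mathrm{1}}}|_\bullet$ is equal to, less than, or greater than $|\mathfrak{G}_{{\mathrm{2}}}|_\bullet$ that a less careful choice of probe index would force. If one preferred not to lean on the explicit formulas, an equivalent abstract argument is available: the vertex that $f$ schedules last belongs to $\mathfrak{G}_{{\mathrm{2}}}$ while the vertex $f'$ schedules last belongs to $\mathfrak{G}_{{\mathrm{1}}}$, and these differ since the two blocks are disjoint and, by hypothesis, both nonempty.
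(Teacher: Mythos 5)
Your proof is correct and rests on the same pillars as the paper's: both invoke \propref{graph/ordering/union} to obtain the two orderings $f$ and $f'$ (one scheduling the $\mathfrak{G}_{{\mathrm{1}}}$-block first, the other the $\mathfrak{G}_{{\mathrm{2}}}$-block first) and then reduce the claim to $f \neq f'$. Where you genuinely diverge is in how that inequality is established. The paper performs a three-way case split on whether $ | \mathfrak{G}_{{\mathrm{1}}} |_{\bullet} $ is equal to, less than, or greater than $ | \mathfrak{G}_{{\mathrm{2}}} |_{\bullet} $, choosing a different probe index in each case (the $f_2$-preimage of $0$ in the equal case, the index $ | \mathfrak{G}_{{\mathrm{1}}} |_{\bullet} $ or $ | \mathfrak{G}_{{\mathrm{2}}} |_{\bullet} $ in the others). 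Your single probe at index $0$ makes that case analysis unnecessary: $f(0) = f_1(0) <  | \mathfrak{G}_{{\mathrm{1}}} |_{\bullet} $ because $f_1$ maps into $\NAT_{< | \mathfrak{G}_{{\mathrm{1}}} |_{\bullet} }$, while $f'(0) = f_2(0) +  | \mathfrak{G}_{{\mathrm{1}}} |_{\bullet}  \ge  | \mathfrak{G}_{{\mathrm{1}}} |_{\bullet} $, and the two nonemptiness hypotheses are used exactly to place $0$ in the first branch of each piecewise definition. This is a strict simplification of the paper's argument for the same fact, with the interval bookkeeping collapsing to two one-line observations. Your fallback argument via the last-scheduled vertex is also sound: $f$ ends on a vertex of the (shifted) $\mathfrak{G}_{{\mathrm{2}}}$-block and $f'$ on a vertex of the $\mathfrak{G}_{{\mathrm{1}}}$-block, and nonemptiness of both blocks makes these positions well defined and their images distinct.
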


\begin{prop}{graph/traceable/union}
    $\mathfrak{G}_{{\mathrm{1}}}  \cup  \mathfrak{G}_{{\mathrm{2}}}$ is traceable if and only if
    \begin{enumerate}
        \item $  | \mathfrak{G}_{{\mathrm{1}}} |_{\bullet}  \ottsym{=} \ottsym{0} $ and $\mathfrak{G}_{{\mathrm{2}}}$ is traceable, or
        \item $  | \mathfrak{G}_{{\mathrm{2}}} |_{\bullet}  \ottsym{=} \ottsym{0} $ and $\mathfrak{G}_{{\mathrm{1}}}$ is traceable.
    \end{enumerate}

    \proof We show each direction.
    \begin{itemize}
        \item $\Rightarrow$ By the contraposition of \propref{graph/ordering/union/multi}, $  | \mathfrak{G}_{{\mathrm{1}}} |_{\bullet}  \ottsym{=} \ottsym{0} $ or $  | \mathfrak{G}_{{\mathrm{2}}} |_{\bullet}  \ottsym{=} \ottsym{0} $ hold.
        If $  | \mathfrak{G}_{{\mathrm{1}}} |_{\bullet}  \ottsym{=} \ottsym{0} $, $\mathfrak{G}_{{\mathrm{1}}}  \cup  \mathfrak{G}_{{\mathrm{2}}}  \simeq  \mathfrak{G}_{{\mathrm{2}}}$ holds.
        So, the first case of the goal follows by \propref{graph/traceable/iso}.
        The case for $  | \mathfrak{G}_{{\mathrm{2}}} |_{\bullet}  \ottsym{=} \ottsym{0} $ can be shown similary.
        \item $\Leftarrow$ If the first case holds, $\mathfrak{G}_{{\mathrm{1}}}  \cup  \mathfrak{G}_{{\mathrm{2}}}  \simeq  \mathfrak{G}_{{\mathrm{2}}}$. So the goal follows by \propref{graph/traceable/iso}. Another case is similar. \qedhere
    \end{itemize}
\end{prop}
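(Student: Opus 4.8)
The plan is to read ``traceable'' as ``admits a unique topological ordering''; since \propref{graph/ordering/exists} guarantees that at least one topological ordering always exists, traceability is equivalent to admitting \emph{at most} one topological ordering, i.e., to the negation of ``has more than one topological ordering.'' Under this reading the statement becomes a direct consequence of \propref{graph/ordering/union/multi} together with the unit laws for $\cup$ recorded in \propref{graph/iso}.

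For the forward direction I would argue contrapositively at the level of the disjunction. Assume $\mathfrak{G}_{{\mathrm{1}}} \cup \mathfrak{G}_{{\mathrm{2}}}$ is traceable, hence has no more than one topological ordering. By the contrapositive of \propref{graph/ordering/union/multi} we cannot have both $0 < |\mathfrak{G}_{{\mathrm{1}}}|_{\bullet}$ and $0 < |\mathfrak{G}_{{\mathrm{2}}}|_{\bullet}$, so $|\mathfrak{G}_{{\mathrm{1}}}|_{\bullet} = 0$ or $|\mathfrak{G}_{{\mathrm{2}}}|_{\bullet} = 0$. In the first case $\mathfrak{G}_{{\mathrm{1}}} = (0, \emptyset, \emptyset)$ is the null graph, so by \propref{graph/iso} (left unit of $\cup$) we have $\mathfrak{G}_{{\mathrm{1}}} \cup \mathfrak{G}_{{\mathrm{2}}} = \mathfrak{G}_{{\mathrm{2}}}$; since the left side is traceable, $\mathfrak{G}_{{\mathrm{2}}}$ is traceable, which is exactly case~1. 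The case $|\mathfrak{G}_{{\mathrm{2}}}|_{\bullet} = 0$ is symmetric and yields case~2 via the right-unit law.

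For the backward direction I would simply run the two unit laws the other way. If $|\mathfrak{G}_{{\mathrm{1}}}|_{\bullet} = 0$ and $\mathfrak{G}_{{\mathrm{2}}}$ is traceable, then $\mathfrak{G}_{{\mathrm{1}}}$ is null and $\mathfrak{G}_{{\mathrm{1}}} \cup \mathfrak{G}_{{\mathrm{2}}} = \mathfrak{G}_{{\mathrm{2}}}$, so $\mathfrak{G}_{{\mathrm{1}}} \cup \mathfrak{G}_{{\mathrm{2}}}$ is traceable; symmetrically, if $|\mathfrak{G}_{{\mathrm{2}}}|_{\bullet} = 0$ and $\mathfrak{G}_{{\mathrm{1}}}$ is traceable, then $\mathfrak{G}_{{\mathrm{1}}} \cup \mathfrak{G}_{{\mathrm{2}}} = \mathfrak{G}_{{\mathrm{1}}}$ is traceable. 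If one prefers to avoid appealing to the literal equalities in \propref{graph/iso}, the same step goes through using $\simeq$-reflexivity from \propref{graph/iso/refl} together with \propref{graph/traceable/iso}.

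Finally, the main obstacle is already behind us: essentially all of the mathematical content lives in \propref{graph/ordering/union/multi}, whose proof exhibits two genuinely distinct topological orderings of $\mathfrak{G}_{{\mathrm{1}}} \cup \mathfrak{G}_{{\mathrm{2}}}$ by swapping the block order whenever both operands contribute at least one vertex. The only point here that demands care is the bookkeeping that identifies $|\mathfrak{G}|_{\bullet} = 0$ with the null graph $(0, \emptyset, \emptyset)$, so that the union collapses to the other operand and the unit laws apply cleanly. Once that identification is made explicit, no further graph-theoretic work is needed and the equivalence follows immediately.
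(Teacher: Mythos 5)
Your proposal is correct and follows essentially the same route as the paper: the forward direction rests on the contrapositive of \propref{graph/ordering/union/multi}, and both directions collapse the union to its nonempty operand, exactly as in the paper's proof. The only (harmless) difference is that you invoke the literal unit-law equalities of \propref{graph/iso} where the paper phrases the collapse as an isomorphism discharged by \propref{graph/traceable/iso} — a variant you yourself note as an alternative.
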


  \subsection{Properties for environment}
  \begin{prop}{env/equiv}
    \noindent
    \begin{statements}
        \item $\Gamma  \ottsym{,}   \cdot   \simeq  \Gamma$.
        \item $ \cdot   \ottsym{,}  \Gamma  \simeq  \Gamma$.
        \item $\Gamma  \parallel   \cdot   \simeq  \Gamma$.
        \item $ \cdot   \parallel  \Gamma  \simeq  \Gamma$.
        \item $\Gamma_{{\mathrm{1}}}  \parallel  \Gamma_{{\mathrm{2}}}  \simeq  \Gamma_{{\mathrm{2}}}  \parallel  \Gamma_{{\mathrm{1}}}$.
        \item $\Gamma_{{\mathrm{1}}}  \ottsym{,}  \ottsym{(}  \Gamma_{{\mathrm{2}}}  \ottsym{,}  \Gamma_{{\mathrm{3}}}  \ottsym{)}  \simeq  \ottsym{(}  \Gamma_{{\mathrm{1}}}  \ottsym{,}  \Gamma_{{\mathrm{2}}}  \ottsym{)}  \ottsym{,}  \Gamma_{{\mathrm{3}}}$.
        \item $\Gamma_{{\mathrm{1}}}  \parallel  \ottsym{(}  \Gamma_{{\mathrm{2}}}  \parallel  \Gamma_{{\mathrm{3}}}  \ottsym{)}  \simeq  \ottsym{(}  \Gamma_{{\mathrm{1}}}  \parallel  \Gamma_{{\mathrm{2}}}  \ottsym{)}  \parallel  \Gamma_{{\mathrm{3}}}$.
    \end{statements}
\end{prop}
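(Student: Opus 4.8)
The plan is to prove each of the seven equivalences directly from the definition of context equivalence, which reduces $\Gamma_{{\mathrm{1}}} \simeq \Gamma_{{\mathrm{2}}}$ to the two conditions $\mathfrak{G}_{{\mathrm{1}}} \simeq \mathfrak{G}_{{\mathrm{2}}}$ and $S_1 = S_2$, where $ \lBrack \Gamma_{{\mathrm{1}}} \rBrack  = \mathfrak{G}_{{\mathrm{1}}}; S_1$ and $ \lBrack \Gamma_{{\mathrm{2}}} \rBrack  = \mathfrak{G}_{{\mathrm{2}}}; S_2$. No induction is needed: each clause follows by unfolding the interpretation $ \lBrack - \rBrack $ once on both sides and then invoking the corresponding graph-level identity from \propref{graph/iso}, together with a standard set-theoretic law for the unrestricted (set) component.

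Concretely, for the unit laws (items 1--4) I would compute, e.g.\ for $\Gamma  \ottsym{,}   \cdot   \simeq  \Gamma$: writing $ \lBrack \Gamma \rBrack  = \mathfrak{G}; S$ and $ \lBrack  \cdot  \rBrack  = \mathfrak{G}_0; \emptyset$ with $\mathfrak{G}_0 = (0, \emptyset, \emptyset)$, the interpretation gives $ \lBrack \Gamma  \ottsym{,}   \cdot  \rBrack  = (\GJOIN{\mathfrak{G}}{\mathfrak{G}_0}); (S \cup \emptyset)$. By item~1 of \propref{graph/iso} the graph part equals $\mathfrak{G}$ on the nose, and $S \cup \emptyset = S$, so both components coincide with those of $ \lBrack \Gamma \rBrack $ and the required graph isomorphism then holds by reflexivity of $ \simeq $ on graph representations. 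The three remaining unit laws are identical, using items~2--4 and the fact that $\emptyset$ is a two-sided unit for set union.

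The associativity laws (items 6--7) work the same way: unfolding twice yields graph parts $\GJOIN{\mathfrak{G}_{{\mathrm{1}}}}{(\GJOIN{\mathfrak{G}_{{\mathrm{2}}}}{\mathfrak{G}_{{\mathrm{3}}}})}$ versus $\GJOIN{(\GJOIN{\mathfrak{G}_{{\mathrm{1}}}}{\mathfrak{G}_{{\mathrm{2}}}})}{\mathfrak{G}_{{\mathrm{3}}}}$ (respectively the $\cup$ analogues) and set parts $S_1 \cup (S_2 \cup S_3)$ versus $(S_1 \cup S_2) \cup S_3$; item~6 (resp.\ item~7) gives equality of the graph parts, and associativity of set union gives equality of the set parts, whence graph isomorphism again follows by reflexivity. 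The only clause that is genuinely not an equality at the graph level is commutativity of $ \parallel $ (item~5): here the graph parts are $\mathfrak{G}_{{\mathrm{1}}} \cup \mathfrak{G}_{{\mathrm{2}}}$ and $\mathfrak{G}_{{\mathrm{2}}} \cup \mathfrak{G}_{{\mathrm{1}}}$, which item~5 of \propref{graph/iso} asserts are isomorphic (not equal), while the set parts $S_1 \cup S_2$ and $S_2 \cup S_1$ are equal by commutativity of set union.

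The argument has essentially no obstacle of its own, since all the combinatorial content is already discharged by \propref{graph/iso}, itself established by direct computation on graph representations. The only point deserving care is bookkeeping the two components of the interpretation separately: transporting the permutation witness for $ \simeq $ through the single non-trivial clause (item~5), while observing that the unrestricted set component is in every case handled by the ordinary unit, associativity, and commutativity laws of $(\cup, \emptyset)$ on finite sets, and that the unit and associativity clauses collapse to strict equality of graphs so that reflexivity of graph isomorphism suffices there.
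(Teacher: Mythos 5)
Your proof is correct and is essentially the argument the paper intends: the paper states this lemma without an explicit proof, treating it as immediate by unfolding the interpretation $\lBrack - \rBrack$ and appealing to the graph-level identities of \propref{graph/iso} (equalities for the unit and associativity clauses, an explicit isomorphism only for commutativity of $\cup$) together with the unit, associativity, and commutativity laws of set union on the unrestricted component. Your bookkeeping of the two components and your observation that only item~5 needs a genuine (non-identity) bijection match the paper's setup exactly.
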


\begin{prop}{env/sub}
    \noindent
    \begin{statements}
        \item[span] $\Gamma_{{\mathrm{1}}}  \parallel  \Gamma_{{\mathrm{2}}}  \lesssim  \Gamma_{{\mathrm{1}}}  \ottsym{,}  \Gamma_{{\mathrm{2}}}$.
        \item[distl] $\Gamma_{{\mathrm{1}}}  \parallel  \ottsym{(}  \Gamma_{{\mathrm{2}}}  \ottsym{,}  \Gamma_{{\mathrm{3}}}  \ottsym{)}  \lesssim  \ottsym{(}  \Gamma_{{\mathrm{1}}}  \parallel  \Gamma_{{\mathrm{2}}}  \ottsym{)}  \ottsym{,}  \Gamma_{{\mathrm{3}}}$.
        \item[distr] $\ottsym{(}  \Gamma_{{\mathrm{1}}}  \ottsym{,}  \Gamma_{{\mathrm{2}}}  \ottsym{)}  \parallel  \Gamma_{{\mathrm{3}}}  \lesssim  \Gamma_{{\mathrm{1}}}  \ottsym{,}  \ottsym{(}  \Gamma_{{\mathrm{2}}}  \parallel  \Gamma_{{\mathrm{3}}}  \ottsym{)}$.
    \end{statements}
\end{prop}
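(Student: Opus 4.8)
The plan is to observe that all three inequalities are immediate corollaries of the graph-level spanning facts collected in \propref{graph/span}, once the context interpretation $\lBrack\cdot\rBrack$ is unfolded on both sides. First I would record the interpretation of each side, writing $\lBrack\Gamma_i\rBrack = \mathfrak{G}_i;\mathbb{S}_i$. For \emph{span} the definitions give $\lBrack\Gamma_1\parallel\Gamma_2\rBrack = (\mathfrak{G}_1\cup\mathfrak{G}_2);(\mathbb{S}_1\cup\mathbb{S}_2)$ and $\lBrack\Gamma_1,\Gamma_2\rBrack = \GJOIN{\mathfrak{G}_1}{\mathfrak{G}_2};(\mathbb{S}_1\cup\mathbb{S}_2)$; for \emph{distl} the two graph parts are $\mathfrak{G}_1\cup(\GJOIN{\mathfrak{G}_2}{\mathfrak{G}_3})$ and $\GJOIN{(\mathfrak{G}_1\cup\mathfrak{G}_2)}{\mathfrak{G}_3}$; and for \emph{distr} they are $(\GJOIN{\mathfrak{G}_1}{\mathfrak{G}_2})\cup\mathfrak{G}_3$ and $\GJOIN{\mathfrak{G}_1}{(\mathfrak{G}_2\cup\mathfrak{G}_3)}$. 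In each case the unrestricted-binding components on the two sides are the same set up to associativity of $\cup$ (namely $\mathbb{S}_1\cup\mathbb{S}_2$, respectively $\mathbb{S}_1\cup\mathbb{S}_2\cup\mathbb{S}_3$), so the inclusion $\mathbb{S}\supseteq\mathbb{S}$ demanded by the definition of $\lesssim$ holds trivially.

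Then I would discharge the graph side. The definition of $\Gamma_1\lesssim\Gamma_2$ asks for intermediate graphs $\mathfrak{G}_1',\mathfrak{G}_2'$ with $\mathfrak{G}_1\simeq\mathfrak{G}_1'\mathrel{<_\rightarrow}\mathfrak{G}_2'\simeq\mathfrak{G}_2$. Since the three items of \propref{graph/span} deliver the spanning relation directly between exactly the two graphs in question — $\mathfrak{G}_1\cup\mathfrak{G}_2\mathrel{<_\rightarrow}\GJOIN{\mathfrak{G}_1}{\mathfrak{G}_2}$, and analogously for \emph{distl} and \emph{distr} — I would take $\mathfrak{G}_1'$ and $\mathfrak{G}_2'$ to be the left- and right-hand graph parts themselves and supply the two isomorphisms by reflexivity via \propref{graph/iso/refl}. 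This exhibits the witnesses required by the definition of the subcontext relation and closes each of the three cases.

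There is essentially no obstacle here: the mathematical content lives entirely in \propref{graph/span}, and the present statement is just its transport through the interpretation of the two context formers together with the bookkeeping of the unrestricted-binding sets. The only points that need care are (i) matching the precise parenthesization of the joins and unions so that the correct item of \propref{graph/span} applies verbatim, and (ii) checking that the interpretation rules for ``$,$'' and ``$\parallel$'' are applied in the right order so that the $\mathbb{S}$-components coincide (via associativity of $\cup$) rather than merely up to $\simeq$. Both are routine once the two interpretations are written out.
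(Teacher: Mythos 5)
Your proposal is correct and matches the argument the paper intends: the paper states this proposition without an explicit proof precisely because it follows immediately from unfolding the interpretation of the two context formers, applying the corresponding item of \propref{graph/span} for the graph parts (with the isomorphisms supplied by \propref{graph/iso/refl}), and observing that the unrestricted-binding sets on both sides are literally equal by associativity of set union. Nothing is missing.
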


\begin{prop}{env/eq/ctx}
    \noindent
    \begin{statements}
        \item[null] If $  \cdot  \ottsym{=} \mathcal{G}  \ottsym{[}  \Gamma  \ottsym{]} $, then $ \mathcal{G} \ottsym{=} \ottsym{[]} $ and $ \Gamma \ottsym{=}  \cdot  $.
        \item[binding] If $ \ottmv{x}  \mathord:  \ottnt{T} \ottsym{=} \mathcal{G}  \ottsym{[}  \Gamma  \ottsym{]} $, then $ \mathcal{G} \ottsym{=} \ottsym{[]} $ and $ \Gamma \ottsym{=} \ottmv{x}  \mathord:  \ottnt{T} $.
        \item[location] If $ \ottmv{l}  \mathord:  \ottsym{[}  \ottnt{m}  \ottsym{]} \ottsym{=} \mathcal{G}  \ottsym{[}  \Gamma  \ottsym{]} $, then $ \mathcal{G} \ottsym{=} \ottsym{[]} $ and $ \Gamma \ottsym{=} \ottmv{l}  \mathord:  \ottsym{[}  \ottnt{m}  \ottsym{]} $
        \item[concat] If $ \Gamma_{{\mathrm{1}}}  \ottsym{,}  \Gamma_{{\mathrm{2}}} \ottsym{=} \mathcal{G}  \ottsym{[}  \Gamma  \ottsym{]} $, then
        \begin{itemize}
            \item $ \mathcal{G} \ottsym{=} \ottsym{[]} $ and $ \Gamma \ottsym{=} \Gamma_{{\mathrm{1}}}  \ottsym{,}  \Gamma_{{\mathrm{2}}} $,
            \item $ \mathcal{G} \ottsym{=} \mathcal{G}_{{\mathrm{1}}}  \ottsym{,}  \Gamma_{{\mathrm{2}}} $ and $ \mathcal{G}_{{\mathrm{1}}}  \ottsym{[}  \Gamma  \ottsym{]} \ottsym{=} \Gamma_{{\mathrm{1}}} $ for some $\mathcal{G}_{{\mathrm{1}}}$, or
            \item $ \mathcal{G} \ottsym{=} \Gamma_{{\mathrm{1}}}  \ottsym{,}  \mathcal{G}_{{\mathrm{2}}} $ and $ \mathcal{G}_{{\mathrm{2}}}  \ottsym{[}  \Gamma  \ottsym{]} \ottsym{=} \Gamma_{{\mathrm{2}}} $ for some $\mathcal{G}_{{\mathrm{2}}}$.
        \end{itemize}
        \item[parallel] If $ \Gamma_{{\mathrm{1}}}  \parallel  \Gamma_{{\mathrm{2}}} \ottsym{=} \mathcal{G}  \ottsym{[}  \Gamma  \ottsym{]} $, then
        \begin{itemize}
            \item $ \mathcal{G} \ottsym{=} \ottsym{[]} $ and $ \Gamma \ottsym{=} \Gamma_{{\mathrm{1}}}  \parallel  \Gamma_{{\mathrm{2}}} $,
            \item $ \mathcal{G} \ottsym{=} \mathcal{G}_{{\mathrm{1}}}  \parallel  \Gamma_{{\mathrm{2}}} $ and $ \mathcal{G}_{{\mathrm{1}}}  \ottsym{[}  \Gamma  \ottsym{]} \ottsym{=} \Gamma_{{\mathrm{1}}} $ for some $\mathcal{G}_{{\mathrm{1}}}$, or
            \item $ \mathcal{G} \ottsym{=} \Gamma_{{\mathrm{1}}}  \parallel  \mathcal{G}_{{\mathrm{2}}} $ and $ \mathcal{G}_{{\mathrm{2}}}  \ottsym{[}  \Gamma  \ottsym{]} \ottsym{=} \Gamma_{{\mathrm{2}}} $ for some $\mathcal{G}_{{\mathrm{2}}}$.
        \end{itemize}
    \end{statements}

    \proof By case analysis on $\mathcal{G}$.
\end{prop}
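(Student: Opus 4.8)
The plan is to prove the statement by a direct inversion on the structure of the context pattern $\mathcal{G}$, which is exactly what the proof hint suggests. The key is that the equality $=$ here is syntactic equality of context trees, and that the top-level constructor of $\mathcal{G}[\Gamma]$ is completely determined by the outermost production used to build $\mathcal{G}$, with one exception: when $\mathcal{G}$ is the hole, $\mathcal{G}[\Gamma] = \Gamma$ and the shape is inherited from $\Gamma$. Recall that the grammar of context patterns has precisely five productions — the hole $[]$, the two sequential forms $\Gamma' , \mathcal{G}'$ and $\mathcal{G}' , \Gamma'$, and the two parallel forms $\Gamma' \parallel \mathcal{G}'$ and $\mathcal{G}' \parallel \Gamma'$. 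Accordingly, $\mathcal{G}[\Gamma]$ is either $\Gamma$ itself (hole case), a tree whose root is the sequential former (the two sequential cases), or a tree whose root is $\parallel$ (the two parallel cases). I would set up the proof by listing these five cases once and then reading off, for each target on the left of the equation, which cases survive.

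For the three atomic targets $\cdot$, $\ottmv{x} \mathord: \ottnt{T}$, and $\ottmv{l} \mathord: \ottsym{[} \ottnt{m} \ottsym{]}$, I would observe that none of the four non-hole productions of $\mathcal{G}$ can match: each forces the root of $\mathcal{G}[\Gamma]$ to be a sequential or a parallel node, which is never syntactically equal to the empty context, a variable binding, or a location binding. Hence $\mathcal{G} = []$ is the only possibility, and then $\mathcal{G}[\Gamma] = \Gamma$ forces $\Gamma$ to equal the given atomic context — exactly the claimed conclusion in each case.

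For the sequential target $\Gamma_{{\mathrm{1}}} , \Gamma_{{\mathrm{2}}}$, whose root is the sequential former, the two parallel productions of $\mathcal{G}$ are immediately excluded, since they would yield a $\parallel$-rooted tree. The three remaining productions give exactly the three listed alternatives: the hole case yields $\mathcal{G} = []$ and $\Gamma = \Gamma_{{\mathrm{1}}} , \Gamma_{{\mathrm{2}}}$; the case $\mathcal{G} = \mathcal{G}_{{\mathrm{1}}} , \Gamma'$ gives $\mathcal{G}[\Gamma] = \mathcal{G}_{{\mathrm{1}}}[\Gamma] , \Gamma'$, and matching the two children of the sequential node forces $\Gamma' = \Gamma_{{\mathrm{2}}}$ and $\mathcal{G}_{{\mathrm{1}}}[\Gamma] = \Gamma_{{\mathrm{1}}}$; symmetrically $\mathcal{G} = \Gamma' , \mathcal{G}_{{\mathrm{2}}}$ forces $\Gamma' = \Gamma_{{\mathrm{1}}}$ and $\mathcal{G}_{{\mathrm{2}}}[\Gamma] = \Gamma_{{\mathrm{2}}}$. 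The parallel target $\Gamma_{{\mathrm{1}}} \parallel \Gamma_{{\mathrm{2}}}$ is handled identically after swapping the roles of the sequential and parallel formers, with the two sequential productions now excluded. Each bullet in the statement thus corresponds to exactly one surviving production of $\mathcal{G}$.

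The argument is entirely mechanical and carries no genuine difficulty. The only point demanding care — and the one I expect to state explicitly — is that the two immediate children of a binary context node are uniquely recoverable from the node, so that matching, say, $\mathcal{G}_{{\mathrm{1}}}[\Gamma] , \Gamma'$ against $\Gamma_{{\mathrm{1}}} , \Gamma_{{\mathrm{2}}}$ genuinely decomposes into the separate equations $\mathcal{G}_{{\mathrm{1}}}[\Gamma] = \Gamma_{{\mathrm{1}}}$ and $\Gamma' = \Gamma_{{\mathrm{2}}}$ on the left and right subtrees. This holds because the sequential and parallel formers are ordinary binary tree constructors with explicit parenthesization available for grouping, so no production of one former can ever be confused with a production of the other. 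Everything else follows by simply reading off the five productions.
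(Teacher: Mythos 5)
Your proposal is correct and is exactly the paper's argument: the paper proves this lemma ``by case analysis on $\mathcal{G}$,'' which is precisely your five-way inversion on the productions of the context-pattern grammar, with the non-matching constructors discarded by comparing root constructors. Nothing further is needed.
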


\begin{prop}{env/num/bindings/ctx}
    $  \ottsym{\#} _{ \ottnt{b} }( \mathcal{G}  \ottsym{[}  \Gamma  \ottsym{]} )  \ottsym{=}  \ottsym{\#} _{ \ottnt{b} }( \mathcal{G} )   \ottsym{+}   \ottsym{\#} _{ \ottnt{b} }( \Gamma )  $.

    \proof Routine by structural induction on $\mathcal{G}$.
\end{prop}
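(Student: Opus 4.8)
The plan is to prove the identity by straightforward structural induction on the context pattern $\mathcal{G}$, exactly as the author indicates with ``Routine''. The only auxiliary fact I would rely on is that the occurrence-counting function $\#_b$ distributes additively over both context formers, i.e.\ $\#_b(\Gamma_1, \Gamma_2) = \#_b(\Gamma_1) + \#_b(\Gamma_2)$ and $\#_b(\Gamma_1 \parallel \Gamma_2) = \#_b(\Gamma_1) + \#_b(\Gamma_2)$, together with the fact that the hole contributes nothing, $\#_b([]) = 0$. Both are immediate from the definition of $\#_b$ and its lifting to context patterns: counting a fixed binding just sums over the syntax tree, and a hole is not itself a binding.

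First I would dispatch the base case $\mathcal{G} = []$. Here $\mathcal{G}[\Gamma] = \Gamma$ while $\#_b(\mathcal{G}) = \#_b([]) = 0$, so the goal reduces to $\#_b(\Gamma) = 0 + \#_b(\Gamma)$, which holds. Then for each binary former I would unfold the plug-in and the counting in lockstep. For the representative case $\mathcal{G} = \Gamma', \mathcal{G}'$ we have $\mathcal{G}[\Gamma] = \Gamma', (\mathcal{G}'[\Gamma])$, so additivity gives $\#_b(\mathcal{G}[\Gamma]) = \#_b(\Gamma') + \#_b(\mathcal{G}'[\Gamma])$; the induction hypothesis on $\mathcal{G}'$ rewrites the second summand to $\#_b(\mathcal{G}') + \#_b(\Gamma)$, and reassociating yields $(\#_b(\Gamma') + \#_b(\mathcal{G}')) + \#_b(\Gamma) = \#_b(\mathcal{G}) + \#_b(\Gamma)$.

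The three remaining binary cases $\mathcal{G}', \Gamma'$, $\Gamma' \parallel \mathcal{G}'$, and $\mathcal{G}' \parallel \Gamma'$ are identical up to commuting the two summands, so each closes in the same way; the parenthesization case $\mathcal{G} = (\mathcal{G}')$ is immediate since parentheses leave the count unchanged and plugging commutes with them. I do not anticipate any genuine obstacle here: the argument is a one-line induction whose entire content is the additivity of $\#_b$ over the formers, a property that holds definitionally. If anything merits care, it is only that the lifting of $\#_b$ from contexts to context patterns (referenced ``mutatis mutandis'' in the notation definition) be spelled out enough that $\#_b([]) = 0$ and the two distributivity clauses are literally available; once those are recorded, every inductive step discharges by a direct calculation.
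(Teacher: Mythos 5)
Your proof is correct and takes exactly the approach the paper intends: structural induction on $\mathcal{G}$, with each case discharged by the definitional additivity of $\#_b$ over the context formers and $\#_b([]) = 0$. The paper only states ``Routine by structural induction on $\mathcal{G}$,'' and your write-up is a faithful, complete elaboration of that same argument.
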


\begin{prop}{env/in/binding}
    $ \ottnt{b_{{\mathrm{1}}}}  \in  \ottnt{b_{{\mathrm{2}}}} $ if and only if $ \ottnt{b_{{\mathrm{1}}}} \ottsym{=} \ottnt{b_{{\mathrm{2}}}} $.

    \proof We will prove each direction.
    \begin{match}
        \item[$\Rightarrow$]
        We show the contraposition.
        So, suppose $ \ottnt{b_{{\mathrm{1}}}} \neq \ottnt{b_{{\mathrm{2}}}} $.
        We have $ \ottsym{\#} _{ \ottnt{b_{{\mathrm{1}}}} }( \ottnt{b_{{\mathrm{2}}}} )  = 0$ by definition.
        Thus, $ \ottnt{b_{{\mathrm{1}}}}  \notin  \ottnt{b_{{\mathrm{2}}}} $.

        \item[$\Leftarrow$]
        We have $ \ottsym{\#} _{ \ottnt{b_{{\mathrm{1}}}} }( \ottnt{b_{{\mathrm{2}}}} )  = 1$ by definition.
        Thus, $ \ottnt{b_{{\mathrm{1}}}}  \in  \ottnt{b_{{\mathrm{2}}}} $. \qedhere
    \end{match}
\end{prop}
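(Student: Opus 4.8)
The plan is to unfold the definition of the occurrence-counting function $\#$ applied to a singleton binding. Recall that $b_1 \in b_2$ abbreviates $\#_{b_1}(b_2) > 0$, where $\#_{b_1}(b_2)$ denotes the number of occurrences of the binding $b_1$ in the context $b_2$ viewed as a singleton context. The whole argument therefore reduces to the base case of the definition of $\#$ on a single binding, which yields $\#_{b_1}(b_2) = 1$ when $b_1 = b_2$ and $\#_{b_1}(b_2) = 0$ otherwise.

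I would prove the two directions separately. For the backward direction, assuming $b_1 = b_2$, the definition gives $\#_{b_1}(b_2) = 1 > 0$, whence $b_1 \in b_2$. For the forward direction, I would argue by contraposition: assuming $b_1 \neq b_2$, the definition gives $\#_{b_1}(b_2) = 0$, so $b_1 \notin b_2$. There is no genuine obstacle here—the statement is a direct consequence of how $\#$ is defined on a singleton, so no induction or auxiliary lemma is needed. The only point worth stating explicitly is the base-case clause for $\#$ on a single binding, after which both implications follow immediately.
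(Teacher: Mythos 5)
Your proposal is correct and follows essentially the same route as the paper's proof: both directions are reduced to the definitional clause of $\#$ on a singleton binding, with the forward implication handled by contraposition and the backward one by observing $\#_{b_1}(b_2)=1>0$ when $b_1=b_2$. Nothing further is needed.
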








\begin{prop}{env/in/ctx/binding}
    $ \ottnt{b}  \in  \mathcal{G}  \ottsym{[}  \ottnt{b}  \ottsym{]} $.

    \proof We have $ \ottnt{b}  \in  \ottnt{b} $ by \propref{env/in/binding}.
    Then, the goal follows by \propref{env/num/bindings/ctx}.
\end{prop}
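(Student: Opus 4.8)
The plan is to reduce the membership claim to the additivity of occurrence counts under context-pattern hole filling. By definition, $ \ottnt{b}  \in  \Gamma $ means $  \ottsym{\#} _{ \ottnt{b} }( \Gamma )  > 0 $, so the goal $ \ottnt{b}  \in  \mathcal{G}  \ottsym{[}  \ottnt{b}  \ottsym{]} $ is equivalent to establishing the lower bound $  \ottsym{\#} _{ \ottnt{b} }( \mathcal{G}  \ottsym{[}  \ottnt{b}  \ottsym{]} )  > 0 $. I would unfold the definition first and then attack this count.

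First I would invoke \propref{env/num/bindings/ctx} with the filling context instantiated to the singleton binding $\ottnt{b}$, which gives $  \ottsym{\#} _{ \ottnt{b} }( \mathcal{G}  \ottsym{[}  \ottnt{b}  \ottsym{]} )  \ottsym{=}  \ottsym{\#} _{ \ottnt{b} }( \mathcal{G} )   \ottsym{+}   \ottsym{\#} _{ \ottnt{b} }( \ottnt{b} )  $. Next I would use the reflexive direction of \propref{env/in/binding}: since trivially $ \ottnt{b} \ottsym{=} \ottnt{b} $, that lemma yields $ \ottnt{b}  \in  \ottnt{b} $, which unfolds to $  \ottsym{\#} _{ \ottnt{b} }( \ottnt{b} )  > 0 $, i.e. $  \ottsym{\#} _{ \ottnt{b} }( \ottnt{b} )  \ge 1 $. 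Combining the two facts, and noting that $  \ottsym{\#} _{ \ottnt{b} }( \mathcal{G} )  \ge 0 $ because an occurrence count is a natural number, I obtain $  \ottsym{\#} _{ \ottnt{b} }( \mathcal{G}  \ottsym{[}  \ottnt{b}  \ottsym{]} )  \ge 1 > 0 $, and therefore $ \ottnt{b}  \in  \mathcal{G}  \ottsym{[}  \ottnt{b}  \ottsym{]} $ by the definition of membership. This completes the argument.

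There is essentially no obstacle: the statement is an immediate corollary of the two cited lemmas, and no induction on $\mathcal{G}$ is needed because the additivity lemma already internalizes the structural recursion over the context pattern. The only points requiring a moment's care are that \propref{env/num/bindings/ctx} is applied with the hole filled by a single binding (a legitimate instance, as a binding is a well-formed context) rather than by an arbitrary $\Gamma$, and that the reflexive half of \propref{env/in/binding} contributes a \emph{strictly} positive — not merely nonnegative — term from the hole, which is exactly what turns the nonnegative count $  \ottsym{\#} _{ \ottnt{b} }( \mathcal{G} )  $ into a strict lower bound for the total. Both are discharged directly from the definitions of $  \ottsym{\#} _{ \ottnt{b} }( \cdot )  $ and $  \in  $.
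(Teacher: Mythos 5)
Your proof is correct and is essentially identical to the paper's own argument: it uses the reflexive direction of \propref{env/in/binding} to get $\ottnt{b} \in \ottnt{b}$ and the additivity lemma \propref{env/num/bindings/ctx} (instantiated with $\Gamma = \ottnt{b}$, which is a legitimate context) to transfer the positive count into $\mathcal{G}[\ottnt{b}]$. You have merely spelled out the arithmetic that the paper's two-line proof leaves implicit.
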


\begin{prop}{env/in/ctx/exists}
    If $ \ottnt{b}  \in  \Gamma $, then $ \Gamma \ottsym{=} \mathcal{G}  \ottsym{[}  \ottnt{b}  \ottsym{]} $ for some $\mathcal{G}$.

    \proof Routine by structural induction on $\Gamma$.
\end{prop}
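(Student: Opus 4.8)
The plan is to prove the statement by structural induction on the typing context $\Gamma$, following its grammar $\Gamma \Coloneqq  \cdot  \mid \ottnt{b} \mid \Gamma_{{\mathrm{1}}}  \ottsym{,}  \Gamma_{{\mathrm{2}}} \mid \Gamma_{{\mathrm{1}}}  \parallel  \Gamma_{{\mathrm{2}}}$. The hypothesis $ \ottnt{b}  \in  \Gamma $ unfolds by definition to $ \ottsym{\#} _{ \ottnt{b} }( \Gamma )  > 0$, so in each case the task is simply to locate where the occurrence count is positive and to read off the corresponding context pattern.

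For the base cases: when $ \Gamma \ottsym{=}  \cdot  $ we have $ \ottsym{\#} _{ \ottnt{b} }(  \cdot  )  = 0$, which contradicts the hypothesis, so this case is vacuous. When $\Gamma$ is a single binding $\ottnt{b'}$, \propref{env/in/binding} gives $ \ottnt{b} \ottsym{=} \ottnt{b'} $, and choosing $\mathcal{G} = \ottsym{[]}$ yields $\mathcal{G}  \ottsym{[}  \ottnt{b}  \ottsym{]} = \ottnt{b'} = \Gamma$ as required.

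For the two inductive cases $ \Gamma \ottsym{=} \Gamma_{{\mathrm{1}}}  \ottsym{,}  \Gamma_{{\mathrm{2}}} $ and $ \Gamma \ottsym{=} \Gamma_{{\mathrm{1}}}  \parallel  \Gamma_{{\mathrm{2}}} $, first I would record that the occurrence count is additive over both context formers, i.e.\ $ \ottsym{\#} _{ \ottnt{b} }( \Gamma_{{\mathrm{1}}}  \ottsym{,}  \Gamma_{{\mathrm{2}}} )  =  \ottsym{\#} _{ \ottnt{b} }( \Gamma_{{\mathrm{1}}} )  +  \ottsym{\#} _{ \ottnt{b} }( \Gamma_{{\mathrm{2}}} ) $ straight from the definition of $ \ottsym{\#} _{ \ottnt{b} }( \Gamma ) $ (the analogue for $\mathcal{G}  \ottsym{[}  \Gamma  \ottsym{]}$ is already recorded in \propref{env/num/bindings/ctx}). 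Positivity of the sum forces $ \ottnt{b}  \in  \Gamma_{{\mathrm{1}}} $ or $ \ottnt{b}  \in  \Gamma_{{\mathrm{2}}} $. In the first subcase the induction hypothesis supplies some $\mathcal{G}_{{\mathrm{1}}}$ with $ \Gamma_{{\mathrm{1}}} \ottsym{=} \mathcal{G}_{{\mathrm{1}}}  \ottsym{[}  \ottnt{b}  \ottsym{]} $, and I set $\mathcal{G} = \mathcal{G}_{{\mathrm{1}}}  \ottsym{,}  \Gamma_{{\mathrm{2}}}$ (respectively $\mathcal{G} = \mathcal{G}_{{\mathrm{1}}}  \parallel  \Gamma_{{\mathrm{2}}}$ in the parallel case); in the second subcase I use $\mathcal{G} = \Gamma_{{\mathrm{1}}}  \ottsym{,}  \mathcal{G}_{{\mathrm{2}}}$ (respectively $\mathcal{G} = \Gamma_{{\mathrm{1}}}  \parallel  \mathcal{G}_{{\mathrm{2}}}$). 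In every situation $\mathcal{G}  \ottsym{[}  \ottnt{b}  \ottsym{]} = \Gamma$, since hole-filling commutes with the context formers.

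The hard part, honestly, is that there is none: this is genuinely routine. The only point demanding a sliver of care is picking the matching clause in the grammar of $\mathcal{G}$ for the composite cases, but the four shapes $\Gamma  \ottsym{,}  \mathcal{G}$, $\mathcal{G}  \ottsym{,}  \Gamma$, $\Gamma  \parallel  \mathcal{G}$, and $\mathcal{G}  \parallel  \Gamma$ provided by the grammar of context patterns are exactly what is needed to place the hole in either operand. The meta-level forms of $\Gamma$ and $\mathcal{G}$ do not arise, as we reason only over the object-level grammar of genuine contexts, so no further cases are required.
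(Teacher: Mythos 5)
Your proof is correct and follows exactly the route the paper intends: the paper's proof is simply ``routine by structural induction on $\Gamma$'', and your case analysis (vacuous empty case, singleton case via the membership-equals-equality fact, and the two composite cases using additivity of the occurrence count plus wrapping the induction hypothesis's pattern in the matching context former) is precisely that routine induction spelled out. No gaps; the filled-in details are the expected ones.
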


\begin{prop}{env/intrp/num/bindings/unr}
    Let $  \lBrack \Gamma \rBrack  \ottsym{=} \mathfrak{G}  \ottsym{;}  \mathbb{S} $. Then, $ \ottnt{b}  \in  \Gamma $ if and only if $\ottnt{b} \in \mathbb{S}$ for any $\ottkw{unr} \, \ottnt{b}$.

    \proof By structural induction on $\Gamma$.
\end{prop}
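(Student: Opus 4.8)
The plan is to proceed by structural induction on $\Gamma$, mirroring exactly the recursive shape of the interpretation $\lBrack - \rBrack$. Throughout, fix an unrestricted binding $b$ (so $\ottkw{unr}\,b$), and write $\lBrack \Gamma \rBrack = \mathfrak{G}; \mathbb{S}$, and in the composite cases $\lBrack \Gamma_i \rBrack = \mathfrak{G}_i; \mathbb{S}_i$. The goal in each case is to show $b \in \Gamma$ iff $b \in \mathbb{S}$, which I will reduce to statements about the occurrence count $\#_b(-)$ on the syntactic side and simple set membership on the semantic side.

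For the base cases: when $\Gamma = \cdot$ the interpretation yields $\mathbb{S} = \emptyset$, while $\#_b(\cdot) = 0$, so both $b \in \Gamma$ and $b \in \mathbb{S}$ are false and the equivalence holds vacuously. When $\Gamma = b'$ is a single binding, I split on whether $b'$ is unrestricted or ordered. If $\ottkw{ord}\,b'$, then $\mathbb{S} = \emptyset$, and since $b$ is unrestricted while $b'$ is ordered we have $b \neq b'$, whence $b \notin b'$ by \propref{env/in/binding}; both sides are false. If $\ottkw{unr}\,b'$, then $\mathbb{S} = \{b'\}$, and \propref{env/in/binding} gives $b \in b'$ iff $b = b'$ iff $b \in \{b'\} = \mathbb{S}$.

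For the inductive cases $\Gamma = \Gamma_1, \Gamma_2$ and $\Gamma = \Gamma_1 \parallel \Gamma_2$, the decisive observation is that the set component of the interpretation is $\mathbb{S} = \mathbb{S}_1 \cup \mathbb{S}_2$ in \emph{both} cases; only the graph component distinguishes join from union, and the graph is irrelevant here. On the syntactic side, $\#_b$ is additive over both context formers by definition, so $b \in \Gamma$ iff $b \in \Gamma_1$ or $b \in \Gamma_2$. Applying the induction hypothesis to each subcontext rewrites these disjuncts as $b \in \mathbb{S}_1$ and $b \in \mathbb{S}_2$, and $b \in \mathbb{S}_1 \cup \mathbb{S}_2 = \mathbb{S}$ closes the equivalence; the two cases are handled identically since their set components coincide.

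This is a routine induction with no genuine obstacle. The only point requiring care is the single-binding case, where one must invoke the mutual exclusivity of $\ottkw{unr}$ and $\ottkw{ord}$ on a given binding in order to discard the ordered subcase, together with \propref{env/in/binding} to reduce membership in a singleton context to equality of bindings. Everything else is bookkeeping driven by the clause-by-clause definition of $\lBrack - \rBrack$ and the additivity of $\#_b(-)$.
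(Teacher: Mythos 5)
Your proof is correct and takes essentially the same approach as the paper, whose entire proof of this lemma is "by structural induction on $\Gamma$" with the case analysis left implicit as routine. The details you supply --- the unrestricted/ordered split in the singleton case, and the observation that the set component of the interpretation is a union under both context formers while the graph component is irrelevant --- are exactly the bookkeeping the paper's one-line proof elides.
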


\begin{prop}{env/intrp/num/bindings/ord}
    Let $ \lBrack \Gamma \rBrack  = (n, v, E); \mathbb{S}$.
    Then, $ \ottsym{\#} _{ \ottnt{b} }( \Gamma )  = |\{\,n' \in \NAT_{<n} \mid v(n') = \ottnt{b}\,\}|$ for any $\ottkw{ord} \, \ottnt{b}$.

    \proof By structural induction on $\Gamma$.
    \begin{match}
        \item[$ \Gamma \ottsym{=}  \cdot  $]
        In this case, $ \ottsym{\#} _{ \ottnt{b} }( \Gamma )  = 0$.
        On the other hand, $ \ottnt{n} \ottsym{=} \ottsym{0} $.
        So, the right hand of the goal is also $0$.

        \item[$ \Gamma \ottsym{=} \ottnt{b'} $]
        We consider whether $\ottkw{unr} \, \ottnt{b'}$ or $\ottkw{ord} \, \ottnt{b'}$.
        \begin{match}
            \item[$\ottkw{unr} \, \ottnt{b'}$]
            In this case, $ \ottsym{\#} _{ \ottnt{b} }( \Gamma )  = 0$ and $ \lBrack \Gamma \rBrack  = (0, \emptyset, \emptyset); \{b'\}$, which implies $|\{\,n' \in \NAT_{<n} \mid v(n') = \ottnt{b}\,\}| = 0$.
            \item[$\ottkw{ord} \, \ottnt{b'}$]
            In this case,
            \begin{gather}
                 \lBrack \Gamma \rBrack  = (1, \{0 \mapsto b'\}, \emptyset); \emptyset. \hyp{2.1}
            \end{gather}
            We consider wether $ \ottnt{b'} \ottsym{=} \ottnt{b} $.
            \begin{match}
                \item[$ \ottnt{b'} \ottsym{=} \ottnt{b} $]
                $ \ottsym{\#} _{ \ottnt{b} }( \Gamma )  = 1$ by definition, and $|\{\,n' \in \NAT_{<n} \mid v(n') = \ottnt{b}\,\}| = 1$ by \hypref{2.1}.

                \item[$ \ottnt{b'} \neq \ottnt{b} $]
                $ \ottsym{\#} _{ \ottnt{b} }( \Gamma )  = 0$ by definition, and $|\{\,n' \in \NAT_{<n} \mid v(n') = \ottnt{b}\,\}| = 0$ by \hypref{2.1}.
            \end{match}
        \end{match}

        \item[$ \Gamma \ottsym{=} \Gamma_{{\mathrm{1}}}  \ottsym{,}  \Gamma_{{\mathrm{2}}} $]
        Let $ \lBrack \Gamma_{{\mathrm{1}}} \rBrack  = (n_1, v_1, E_1); \mathbb{S}_{{\mathrm{1}}}$ and $ \lBrack \Gamma_{{\mathrm{2}}} \rBrack  = (n_2, v_2, E_2); \mathbb{S}_{{\mathrm{2}}}$.
        We have
        \begin{gather}
             \ottsym{\#} _{ \ottnt{b} }( \Gamma_{{\mathrm{1}}} )  = |\{\,n' \in \NAT_{<n_1} \mid v_1(n') = \ottnt{b}\,\}| \hyp{3.1}\\
             \ottsym{\#} _{ \ottnt{b} }( \Gamma_{{\mathrm{2}}} )  = |\{\,n' \in \NAT_{<n_2} \mid v_2(n') = \ottnt{b}\,\}| \hyp{3.2}
        \end{gather}
        by the induction hypothesis.
        We also have
        \begin{gather}
            n = n_1 + n_2 \hyp{3.3}\\
            v(n) = \begin{cases}
                v_1(n)       & \text{if $0 \le n < n_1$}         \\
                v_2(n - n_1) & \text{if $n_1 \le n < n_1 + n_2$}
            \end{cases} \hyp{3.4}
        \end{gather}
        by definition.
        From those, $|\{\,n' \in \NAT_{<n} \mid v(n') = \ottnt{b}\,\}| =  \ottsym{\#} _{ \ottnt{b} }( \Gamma_{{\mathrm{1}}} )   \ottsym{+}   \ottsym{\#} _{ \ottnt{b} }( \Gamma_{{\mathrm{2}}} )  =  \ottsym{\#} _{ \ottnt{b} }( \Gamma ) $.

        \item[$ \Gamma \ottsym{=} \Gamma_{{\mathrm{1}}}  \parallel  \Gamma_{{\mathrm{2}}} $] Similar to the case $ \Gamma \ottsym{=} \Gamma_{{\mathrm{1}}}  \ottsym{,}  \Gamma_{{\mathrm{2}}} $. \qedhere
    \end{match}
\end{prop}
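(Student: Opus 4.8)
The plan is to establish the identity by structural induction on $\Gamma$, leaning on the fact that the defining clauses of $\lBrack - \rBrack$ and of the occurrence-count operation decompose the two context formers in parallel. The crucial observation is that both the join $\GJOIN{\mathfrak{G}_1}{\mathfrak{G}_2}$ and the union $\mathfrak{G}_1 \cup \mathfrak{G}_2$ are defined to have $n_1 + n_2$ vertices carrying the same label map $v$, namely $v_1$ on $\NAT_{<n_1}$ and $n \mapsto v_2(n - n_1)$ on the remaining indices; they differ only in their edge sets. Since the statement speaks solely about vertices and their labels, this difference is irrelevant, and for each former the count $|\{n' \in \NAT_{<n} \mid v(n') = b\}|$ splits as the sum of the analogous counts for $\mathfrak{G}_1$ and $\mathfrak{G}_2$. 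This matches the additivity of $\#_b$, which by its definition as the number of occurrences satisfies $\#_b(\Gamma_1, \Gamma_2) = \#_b(\Gamma_1) + \#_b(\Gamma_2)$ and likewise for $\parallel$.

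First I would dispatch the base cases. For $\Gamma = \cdot$ the interpretation is $(0, \emptyset, \emptyset)$ and $\#_b(\cdot) = 0$, so both sides vanish. For a singleton $\Gamma = b'$ the argument forks on the qualification of $b'$. If $\mathtt{unr}\,b'$, then the graph component is empty while $\#_b(b') = 0$, because an ordered $b$ cannot equal an unrestricted $b'$ --- the qualification is fixed by the type part of a binding, so ordered and unrestricted bindings are disjoint. If $\mathtt{ord}\,b'$, the graph is $(1, \{0 \mapsto b'\}, \emptyset)$, and a further split on whether $b' = b$ makes both sides equal to $1$ or both equal to $0$. This singleton analysis is the only place that requires genuine care, and it is precisely where the hypothesis $\mathtt{ord}\,b$ is used: were $b$ unrestricted the right-hand side would be identically zero, yet $\#_b(\Gamma)$ need not be.

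For the inductive step I would handle $\Gamma = \Gamma_1, \Gamma_2$ via the join clause and $\Gamma = \Gamma_1 \parallel \Gamma_2$ via the union clause. In both cases the induction hypotheses supply $\#_b(\Gamma_i) = |\{n' \in \NAT_{<n_i} \mid v_i(n') = b\}|$, and reading the label map of the composite graph off its interval-based definition shows that its labeled-vertex count equals the sum of the two, which coincides with $\#_b(\Gamma_1) + \#_b(\Gamma_2)$ and hence with $\#_b(\Gamma)$. Because the whole argument reduces to matching two additive decompositions, I do not anticipate any real obstacle beyond the bookkeeping in the singleton case noted above.
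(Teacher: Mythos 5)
Your proposal is correct and follows essentially the same route as the paper's proof: structural induction on $\Gamma$, the same case split on $\ottkw{unr} \, \ottnt{b'}$ versus $\ottkw{ord} \, \ottnt{b'}$ in the singleton case, and the same additive decomposition of the label maps of join and union in the inductive cases. If anything, you are slightly more careful than the paper in the unrestricted-singleton case, where you explicitly justify $ \ottsym{\#} _{ \ottnt{b} }( \ottnt{b'} )  = 0$ by the disjointness of ordered and unrestricted bindings, a point the paper leaves implicit.
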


\begin{prop}{env/in/equiv}
    Suppose $\Gamma_{{\mathrm{1}}}  \simeq  \Gamma_{{\mathrm{2}}}$.
    Then, $ \ottnt{b}  \in  \Gamma_{{\mathrm{1}}} $ if and only if $ \ottnt{b}  \in  \Gamma_{{\mathrm{2}}} $.

    \proof Let $ \mathfrak{G}_{{\mathrm{1}}}  \ottsym{;}  \mathbb{S}_{{\mathrm{1}}} \ottsym{=}  \lBrack \Gamma_{{\mathrm{1}}} \rBrack  $ and $ \mathfrak{G}_{{\mathrm{2}}}  \ottsym{;}  \mathbb{S}_{{\mathrm{2}}} \ottsym{=}  \lBrack \Gamma_{{\mathrm{2}}} \rBrack  $.
    By definition, we have
    \begin{gather}
        \mathfrak{G}_{{\mathrm{1}}}  \simeq  \mathfrak{G}_{{\mathrm{2}}} \hyp{1.1}\\
         \mathbb{S}_{{\mathrm{1}}} \ottsym{=} \mathbb{S}_{{\mathrm{2}}}  \hyp{1.2}.
    \end{gather}
    We consider wether $\ottkw{unr} \, \ottnt{b}$ or $\ottkw{ord} \, \ottnt{b}$.
    \begin{match}
        \item[$\ottkw{unr} \, \ottnt{b}$]
        The goal follows by \propref{env/intrp/num/bindings/unr} and \hypref{1.2}.
        \item[$\ottkw{ord} \, \ottnt{b}$]
        The goal follows by \propref{graph/iso/num/bindings}, \propref{env/intrp/num/bindings/ord}, and \hypref{1.1}.
    \end{match}
\end{prop}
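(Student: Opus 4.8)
The plan is to reduce the proposition to the two occurrence-count characterizations of the interpretation, after splitting bindings according to their substructural qualification. First I would unfold the hypothesis: writing $\lBrack \Gamma_1 \rBrack = \mathfrak{G}_1; \mathbb{S}_1$ and $\lBrack \Gamma_2 \rBrack = \mathfrak{G}_2; \mathbb{S}_2$, the assumption $\Gamma_1 \simeq \Gamma_2$ unpacks, by the definition of equivalent typing contexts, into the conjunction $\mathfrak{G}_1 \simeq \mathfrak{G}_2$ and $\mathbb{S}_1 = \mathbb{S}_2$. Since every type—and hence every binding—is either unrestricted or ordered, I would then case split on whether $\ottkw{unr} \, \ottnt{b}$ or $\ottkw{ord} \, \ottnt{b}$. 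The interpretation is designed so that these two kinds of bindings are recorded in disjoint parts of the pair: unrestricted ones in the set $\mathbb{S}$, ordered ones as labelled vertices of the graph $\mathfrak{G}$. Thus each case is dispatched using exactly the part of the interpretation that $\simeq$ preserves.

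In the unrestricted case, the relevant fact is \propref{env/intrp/num/bindings/unr}, stating that for $\ottkw{unr} \, \ottnt{b}$ the membership $\ottnt{b} \in \Gamma$ is equivalent to $\ottnt{b} \in \mathbb{S}$, where $\mathbb{S}$ is the set part of $\lBrack \Gamma \rBrack$. Chaining this equivalence through $\mathbb{S}_1 = \mathbb{S}_2$ immediately yields $\ottnt{b} \in \Gamma_1 \iff \ottnt{b} \in \Gamma_2$.

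In the ordered case, I would count occurrences through the graph instead. By \propref{env/intrp/num/bindings/ord}, for $\ottkw{ord} \, \ottnt{b}$ the occurrence count $\#_{\ottnt{b}}(\Gamma)$ equals the number of vertices of $\mathfrak{G}$ labelled $\ottnt{b}$. Graph isomorphism preserves these label multiplicities—this is precisely \propref{graph/iso/num/bindings}—so from $\mathfrak{G}_1 \simeq \mathfrak{G}_2$ I obtain $\#_{\ottnt{b}}(\Gamma_1) = \#_{\ottnt{b}}(\Gamma_2)$; in particular one count is positive iff the other is, which is the definition of membership.

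I do not expect a genuine obstacle: the proposition is essentially a corollary of the two occurrence-count lemmas together with the label-preservation property of isomorphisms. The only point requiring care is to confirm that the case split on $\ottkw{unr} \, \ottnt{b}$ versus $\ottkw{ord} \, \ottnt{b}$ is exhaustive, and to observe that the two lemmas jointly cover exactly the data carried by $\simeq$—the set equality handles unrestricted bindings and the graph isomorphism handles ordered ones. This clean partition is what makes the argument immediate and avoids a direct induction on the structure of $\Gamma$.
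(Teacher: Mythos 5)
Your proposal is correct and follows essentially the same route as the paper's proof: unfold the definition of $\simeq$ into $\mathfrak{G}_{{\mathrm{1}}} \simeq \mathfrak{G}_{{\mathrm{2}}}$ and $\mathbb{S}_{{\mathrm{1}}} = \mathbb{S}_{{\mathrm{2}}}$, case split on $\ottkw{unr} \, \ottnt{b}$ versus $\ottkw{ord} \, \ottnt{b}$, and dispatch the cases with \propref{env/intrp/num/bindings/unr} and with \propref{env/intrp/num/bindings/ord} plus \propref{graph/iso/num/bindings} respectively. No gaps; this matches the paper's argument lemma for lemma.
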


\begin{prop}{env/in/sup}
    If $\Gamma_{{\mathrm{1}}}  \lesssim  \Gamma_{{\mathrm{2}}}$ and $ \ottnt{b}  \in  \Gamma_{{\mathrm{2}}} $, then $ \ottnt{b}  \in  \Gamma_{{\mathrm{1}}} $.

    \proof Let $  \lBrack \Gamma_{{\mathrm{1}}} \rBrack  \ottsym{=} \mathfrak{G}_{{\mathrm{1}}}  \ottsym{;}  \mathbb{S}_{{\mathrm{1}}} $ and $  \lBrack \Gamma_{{\mathrm{2}}} \rBrack  \ottsym{=} \mathfrak{G}_{{\mathrm{2}}}  \ottsym{;}  \mathbb{S}_{{\mathrm{2}}} $.
    By definition, we have
    \begin{gather}
        \mathfrak{G}_{{\mathrm{1}}}  \simeq  \mathfrak{G}'_{{\mathrm{1}}} \mathrel{<_\rightarrow} \mathfrak{G}'_{{\mathrm{2}}}  \simeq  \mathfrak{G}_{{\mathrm{2}}} \hyp{1.1}\\
         \mathbb{S}_{{\mathrm{1}}}  \supseteq  \mathbb{S}_{{\mathrm{2}}}  \hyp{1.2}
    \end{gather}
    for some $\mathfrak{G}'_{{\mathrm{1}}}$ and $\mathfrak{G}'_{{\mathrm{2}}}$.
    We consider wether $\ottkw{unr} \, \ottnt{b}$ or $\ottkw{ord} \, \ottnt{b}$.
    \begin{match}
        \item[$\ottkw{unr} \, \ottnt{b}$]
        We have $b \in \mathbb{S}_{{\mathrm{2}}}$ by \propref{env/intrp/num/bindings/unr}.
        So, $b \in \mathbb{S}_{{\mathrm{1}}}$ by \hypref{1.2}.
        Consequently, $ \ottnt{b}  \in  \Gamma_{{\mathrm{1}}} $ follows by \propref{env/intrp/num/bindings/unr}.
        \item[$\ottkw{ord} \, \ottnt{b}$]
        This case follows by \propref{graph/iso/num/bindings}, \propref{graph/span/num/bindings}, \propref{env/intrp/num/bindings/ord}, and \hypref{1.1}. \qedhere
    \end{match}

\end{prop}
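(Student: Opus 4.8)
The plan is to unfold the definition of the subcontext relation and then dispatch the two substructural cases separately, since ordered bindings are tracked by the graph component of the interpretation while unrestricted bindings live in the set component. First I would set $ \lBrack \Gamma_{{\mathrm{1}}} \rBrack  = \mathfrak{G}_{{\mathrm{1}}}; \mathbb{S}_{{\mathrm{1}}}$ and $ \lBrack \Gamma_{{\mathrm{2}}} \rBrack  = \mathfrak{G}_{{\mathrm{2}}}; \mathbb{S}_{{\mathrm{2}}}$, and expand $\Gamma_{{\mathrm{1}}}  \lesssim  \Gamma_{{\mathrm{2}}}$ into the witnessed chain $\mathfrak{G}_{{\mathrm{1}}}  \simeq  \mathfrak{G}'_{{\mathrm{1}}} \mathrel{<_\rightarrow} \mathfrak{G}'_{{\mathrm{2}}}  \simeq  \mathfrak{G}_{{\mathrm{2}}}$ together with the inclusion $ \mathbb{S}_{{\mathrm{1}}}  \supseteq  \mathbb{S}_{{\mathrm{2}}} $. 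Then I would perform a case analysis on whether $\ottkw{unr} \, \ottnt{b}$ or $\ottkw{ord} \, \ottnt{b}$.

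In the unrestricted case, the key observation is that for $\ottkw{unr} \, \ottnt{b}$ the membership $ \ottnt{b}  \in  \Gamma $ is equivalent to set membership $b \in \mathbb{S}$ by \propref{env/intrp/num/bindings/unr}. So from $ \ottnt{b}  \in  \Gamma_{{\mathrm{2}}} $ I obtain $b \in \mathbb{S}_{{\mathrm{2}}}$, hence $b \in \mathbb{S}_{{\mathrm{1}}}$ by the inclusion, and finally $ \ottnt{b}  \in  \Gamma_{{\mathrm{1}}} $ by applying \propref{env/intrp/num/bindings/unr} in the reverse direction. Note that only the inclusion direction $ \mathbb{S}_{{\mathrm{1}}}  \supseteq  \mathbb{S}_{{\mathrm{2}}} $ is used here, matching the fact that weakening may add unrestricted bindings but never removes them.

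In the ordered case, the plan is to show that the occurrence count of $\ottnt{b}$ is actually invariant along the whole chain, not merely nondecreasing. By \propref{env/intrp/num/bindings/ord}, $ \ottsym{\#} _{ \ottnt{b} }( \Gamma ) $ equals the number of graph vertices labeled $\ottnt{b}$. Both graph relations in the chain preserve this vertex-label count: isomorphism preserves it by \propref{graph/iso/num/bindings}, and the spanning relation preserves it by \propref{graph/span/num/bindings}, because spanning only deletes edges and leaves the vertex set and labeling function untouched. Chaining these three facts gives $ \ottsym{\#} _{ \ottnt{b} }( \Gamma_{{\mathrm{1}}} )  =  \ottsym{\#} _{ \ottnt{b} }( \Gamma_{{\mathrm{2}}} ) $; since $ \ottnt{b}  \in  \Gamma_{{\mathrm{2}}} $ means this count is positive, I conclude $ \ottnt{b}  \in  \Gamma_{{\mathrm{1}}} $.

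The step I expect to be the main (albeit mild) subtlety is recognizing that the ordered and unrestricted components must be argued with entirely different invariants: for ordered bindings the relevant quantity is preserved strictly in both directions across the $\simeq$/$\mathrel{<_\rightarrow}$ chain, whereas for unrestricted bindings only a one-directional set inclusion holds. Getting the right direction of each preservation lemma — and observing that spanning touches edges but not the labeled vertices — is what makes the ordered case go through cleanly; the rest is an assembly of the four cited counting lemmas.
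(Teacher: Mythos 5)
Your proposal is correct and follows essentially the same route as the paper's proof: unfolding $\Gamma_{{\mathrm{1}}}  \lesssim  \Gamma_{{\mathrm{2}}}$ into the chain $\mathfrak{G}_{{\mathrm{1}}}  \simeq  \mathfrak{G}'_{{\mathrm{1}}} \mathrel{<_\rightarrow} \mathfrak{G}'_{{\mathrm{2}}}  \simeq  \mathfrak{G}_{{\mathrm{2}}}$ with $ \mathbb{S}_{{\mathrm{1}}}  \supseteq  \mathbb{S}_{{\mathrm{2}}} $, handling the unrestricted case via the set inclusion and the ordered case via the label-count preservation lemmas for isomorphism and spanning. Your write-up merely spells out the counting argument that the paper compresses into a one-line citation of the same four lemmas.
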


\begin{prop}{env/in/sub}
    If $\Gamma_{{\mathrm{1}}}  \lesssim  \Gamma_{{\mathrm{2}}}$, $ \ottnt{b}  \in  \Gamma_{{\mathrm{1}}} $, and $\ottkw{ord} \, \ottnt{b}$, then $ \ottnt{b}  \in  \Gamma_{{\mathrm{2}}} $.

    \proof Let $  \lBrack \Gamma_{{\mathrm{1}}} \rBrack  \ottsym{=} \mathfrak{G}_{{\mathrm{1}}}  \ottsym{;}  \mathbb{S}_{{\mathrm{1}}} $ and $  \lBrack \Gamma_{{\mathrm{2}}} \rBrack  \ottsym{=} \mathfrak{G}_{{\mathrm{2}}}  \ottsym{;}  \mathbb{S}_{{\mathrm{2}}} $.
    By definition, we have
    \begin{gather}
        \mathfrak{G}_{{\mathrm{1}}}  \simeq  \mathfrak{G}'_{{\mathrm{1}}} \mathrel{<_\rightarrow} \mathfrak{G}'_{{\mathrm{2}}}  \simeq  \mathfrak{G}_{{\mathrm{2}}} \hyp{1.1}\\
         \mathbb{S}_{{\mathrm{1}}}  \supseteq  \mathbb{S}_{{\mathrm{2}}}  \hyp{1.2}
    \end{gather}
    for some $\mathfrak{G}'_{{\mathrm{1}}}$ and $\mathfrak{G}'_{{\mathrm{2}}}$.
    So, the goal follows by \propref{graph/iso/num/bindings}, \propref{graph/span/num/bindings}, \propref{env/intrp/num/bindings/ord}, and \hypref{1.1}.

\end{prop}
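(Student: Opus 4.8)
The plan is to push everything down to the graph interpretation and argue by counting the vertices that carry the label $\ottnt{b}$. First I would unfold the definition of the subcontext relation. Writing $\lBrack \Gamma_{{\mathrm{1}}} \rBrack = \mathfrak{G}_{{\mathrm{1}}}; \mathbb{S}_{{\mathrm{1}}}$ and $\lBrack \Gamma_{{\mathrm{2}}} \rBrack = \mathfrak{G}_{{\mathrm{2}}}; \mathbb{S}_{{\mathrm{2}}}$, the hypothesis $\Gamma_{{\mathrm{1}}} \lesssim \Gamma_{{\mathrm{2}}}$ supplies intermediate graphs $\mathfrak{G}'_{{\mathrm{1}}}$ and $\mathfrak{G}'_{{\mathrm{2}}}$ fitting into the chain $\mathfrak{G}_{{\mathrm{1}}} \simeq \mathfrak{G}'_{{\mathrm{1}}} \mathrel{<_\rightarrow} \mathfrak{G}'_{{\mathrm{2}}} \simeq \mathfrak{G}_{{\mathrm{2}}}$, together with $\mathbb{S}_{{\mathrm{1}}} \supseteq \mathbb{S}_{{\mathrm{2}}}$.

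Because $\ottnt{b}$ is ordered, its presence in a context lives entirely in the graph part and is invisible to the unrestricted set $\mathbb{S}$. Concretely, I would use the ordered-count characterization: for $\ottkw{ord} \, \ottnt{b}$, the count $\#_{\ottnt{b}}(\Gamma)$ equals the number of vertices of $\lBrack \Gamma \rBrack$ labeled by $\ottnt{b}$. Thus $\ottnt{b} \in \Gamma_{{\mathrm{1}}}$ says exactly that $\mathfrak{G}_{{\mathrm{1}}}$ has at least one $\ottnt{b}$-labeled vertex, and the goal $\ottnt{b} \in \Gamma_{{\mathrm{2}}}$ reduces to showing the same for $\mathfrak{G}_{{\mathrm{2}}}$.

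The crux is that this label count is invariant along the whole chain. Isomorphism preserves the multiset of vertex labels, so $\mathfrak{G}_{{\mathrm{1}}}$ and $\mathfrak{G}'_{{\mathrm{1}}}$ agree on the number of $\ottnt{b}$-vertices, as do $\mathfrak{G}'_{{\mathrm{2}}}$ and $\mathfrak{G}_{{\mathrm{2}}}$; and the spanning relation keeps the vertex set and the labeling function fixed, shrinking only the edge set, so $\mathfrak{G}'_{{\mathrm{1}}}$ and $\mathfrak{G}'_{{\mathrm{2}}}$ agree as well. Composing these three equalities shows $\mathfrak{G}_{{\mathrm{1}}}$ and $\mathfrak{G}_{{\mathrm{2}}}$ carry the same number of $\ottnt{b}$-labeled vertices. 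Since $\ottnt{b} \in \Gamma_{{\mathrm{1}}}$ forces this number to be positive, it is positive for $\mathfrak{G}_{{\mathrm{2}}}$ too, and re-applying the ordered-count characterization gives $\ottnt{b} \in \Gamma_{{\mathrm{2}}}$.

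There is no obstacle of real depth here; once the supporting counting lemmas for isomorphism and spanning are available, the argument is purely definitional, essentially reusing the ordered case of the companion statement for the reverse containment. The one point that deserves care is that the hypothesis $\ottkw{ord} \, \ottnt{b}$ is genuinely needed: it is precisely what lets me discard the $\mathbb{S}$ components and reason only on the graphs, since the inclusion $\mathbb{S}_{{\mathrm{1}}} \supseteq \mathbb{S}_{{\mathrm{2}}}$ points the wrong way to transport membership of an unrestricted binding from $\Gamma_{{\mathrm{1}}}$ to $\Gamma_{{\mathrm{2}}}$.
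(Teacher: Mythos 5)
Your proposal is correct and follows exactly the paper's own argument: unfold the subcontext relation into the chain $\mathfrak{G}_{{\mathrm{1}}}  \simeq  \mathfrak{G}'_{{\mathrm{1}}}  \mathrel{<_\rightarrow}  \mathfrak{G}'_{{\mathrm{2}}}  \simeq  \mathfrak{G}_{{\mathrm{2}}}$, then transport the $\ottnt{b}$-vertex count along it using the counting lemmas for isomorphism and spanning together with the ordered-count characterization of $ \ottsym{\#} _{ \ottnt{b} }( \Gamma ) $. Your closing observation about why $\ottkw{unr}$ bindings cannot be transported in this direction (the inclusion $\mathbb{S}_{{\mathrm{1}}} \supseteq \mathbb{S}_{{\mathrm{2}}}$ points the wrong way) is a correct and worthwhile clarification of why the hypothesis $\ottkw{ord} \, \ottnt{b}$ is essential.
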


\begin{prop}{env/equiv/runtime}
    Suppose $\Gamma_{{\mathrm{1}}}  \simeq  \Gamma_{{\mathrm{2}}}$.
    Then, $\Gamma_{{\mathrm{1}}}$ is a runtime environment if and only if $\Gamma_{{\mathrm{2}}}$ is a runtime environment.

    \proof We show the contraposition of each direction.
    If $\Gamma_{{\mathrm{1}}}$ is not a runtime environment, we have $ \ottmv{x}  \mathord:  \ottnt{S}  \in  \Gamma_{{\mathrm{1}}} $ for some $\ottmv{x}  \mathord:  \ottnt{S}$.
    So, $ \ottmv{x}  \mathord:  \ottnt{S}  \in  \Gamma_{{\mathrm{2}}} $ follows by \propref{env/in/equiv}, which implies $\Gamma_{{\mathrm{2}}}$ is not a runtime environment.
    Another direction is similar.
\end{prop}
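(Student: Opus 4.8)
The plan is to reduce the proposition to the already-available fact that context equivalence preserves binding membership, \propref{env/in/equiv}. The key observation is that the run-time-environment property is purely a statement about the absence of variable bindings: by the definition of run-time context, $\Gamma$ is a run-time environment exactly when it contains no binding of the form $x:T$, that is, when there is \emph{no} variable $x$ and type $T$ with $x:T \in \Gamma$. Hence ``$\Gamma$ fails to be a run-time environment'' is definitionally equivalent to ``$x:T \in \Gamma$ for some $x$ and $T$.'' Once this reformulation is in place, the biconditional becomes nothing more than transporting variable-binding membership across $\simeq$.

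First I would prove the contrapositive of the forward implication. Assume $\Gamma_1$ is not a run-time environment; unfolding the definition yields a variable binding $x:T$ with $x:T \in \Gamma_1$. Since $\Gamma_1 \simeq \Gamma_2$, \propref{env/in/equiv} applies directly to the binding $b = x:T$ and gives $x:T \in \Gamma_2$. Note that no case analysis on whether the binding is unrestricted or ordered is required on my side: \propref{env/in/equiv} is stated uniformly for an arbitrary binding $b$, so a single appeal suffices. Consequently $\Gamma_2$ also contains a variable binding and is not a run-time environment.

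For the converse I would invoke symmetry of $\simeq$, which is immediate from the interpretation-based definition of context equivalence (graph isomorphism and set equality are both symmetric), to obtain $\Gamma_2 \simeq \Gamma_1$, and then rerun the identical argument with the roles of $\Gamma_1$ and $\Gamma_2$ exchanged. Combining the two contrapositives establishes the if-and-only-if.

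I do not anticipate a real obstacle: essentially all the mathematical content is carried by \propref{env/in/equiv}, and the remainder is routine. The only point demanding a moment's care is making the definitional equivalence ``$\Gamma$ is not a run-time environment'' $\iff$ ``$\exists x,T.\; x:T \in \Gamma$'' fully explicit, because membership $b \in \Gamma$ is defined through the occurrence count $\#_b(\Gamma) > 0$ rather than by a direct structural predicate; one should check that a syntactic occurrence of a variable binding and a positive occurrence count coincide, which they do by definition of $\#_b$.
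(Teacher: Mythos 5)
Your proposal is correct and matches the paper's own proof essentially step for step: contraposition in each direction, extracting a variable binding $x:T \in \Gamma_1$ from the failure of the run-time property, transporting it across $\simeq$ via \propref{env/in/equiv}, and handling the converse symmetrically. The only cosmetic difference is that you make the appeal to symmetry of $\simeq$ explicit where the paper just says "another direction is similar."
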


\begin{prop}{env/sub/runtime}
    If $\ottnt{C}  \lesssim  \Gamma$, then $\Gamma$ is a runtime environment.

    \proof We prove the goal by contraction.
    Assuming $\Gamma$ is not a runtime environment, we have some $ \ottmv{x}  \mathord:  \ottnt{S}  \in  \Gamma $.
    So, $ \ottmv{x}  \mathord:  \ottnt{S}  \in  \ottnt{C} $ by \propref{env/in/sup}.
    However, this contradicts the fact $\ottnt{C}$ is a runtime environment.
\end{prop}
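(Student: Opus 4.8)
The plan is to argue by contradiction (equivalently, by contraposition). I would suppose that $\Gamma$ is \emph{not} a run-time context and derive a contradiction with the hypothesis that $\ottnt{C}$ is one. By the definition of a run-time context, $\Gamma$ failing to be one means precisely that $\Gamma$ contains at least one variable binding; that is, there are $\ottmv{x}$ and $\ottnt{S}$ with $ \ottmv{x}  \mathord:  \ottnt{S}  \in  \Gamma $.

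The crucial step is to transport this variable binding from $\Gamma$ down to $\ottnt{C}$ along the subcontext relation. For this I would invoke Proposition~\propref{env/in/sup}, which says that membership of a binding propagates from the larger (more constrained) context to the smaller one: if $\Gamma_{{\mathrm{1}}}  \lesssim  \Gamma_{{\mathrm{2}}}$ and $ \ottnt{b}  \in  \Gamma_{{\mathrm{2}}} $, then $ \ottnt{b}  \in  \Gamma_{{\mathrm{1}}} $. Instantiating with $\Gamma_{{\mathrm{1}}} := \ottnt{C}$, $\Gamma_{{\mathrm{2}}} := \Gamma$, and $\ottnt{b} := \ottmv{x}  \mathord:  \ottnt{S}$, and combining the hypothesis $\ottnt{C}  \lesssim  \Gamma$ with $ \ottmv{x}  \mathord:  \ottnt{S}  \in  \Gamma $, I obtain $ \ottmv{x}  \mathord:  \ottnt{S}  \in  \ottnt{C} $. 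Since a run-time context has no variable bindings by definition, $\ottnt{C}$ cannot contain $\ottmv{x}  \mathord:  \ottnt{S}$, which is the required contradiction.

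Essentially all of the substance is thus delegated to Proposition~\propref{env/in/sup}: at the level of this statement the proof is a single application of that lemma. The only point worth checking is that \propref{env/in/sup} is stated for an arbitrary binding $\ottnt{b}$ and its proof branches on whether $\ottnt{b}$ is unrestricted or ordered, relying on the fact that isomorphism and spanning of graph representations preserve the number of occurrences of each binding (Propositions~\propref{graph/iso/num/bindings} and~\propref{graph/span/num/bindings}) in the ordered case and on the set inclusion $\mathbb{S}_{{\mathrm{1}}}  \supseteq  \mathbb{S}_{{\mathrm{2}}}$ on the unrestricted part. Because a variable binding $\ottmv{x}  \mathord:  \ottnt{S}$ is classified as unrestricted or ordered according to its type $\ottnt{S}$, the lemma applies directly and no further case split is needed here. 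Consequently I expect no genuine obstacle for this proposition itself: once \propref{env/in/sup} is available, the contradiction argument closes the proof immediately.
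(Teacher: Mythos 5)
Your proof is correct and is essentially identical to the paper's own argument: both proceed by contradiction, extract a variable binding $\ottmv{x}  \mathord:  \ottnt{S}  \in  \Gamma$ from the assumption that $\Gamma$ is not a run-time context, and apply \propref{env/in/sup} with $\ottnt{C}  \lesssim  \Gamma$ to conclude $\ottmv{x}  \mathord:  \ottnt{S}  \in  \ottnt{C}$, contradicting that $\ottnt{C}$ has no variable bindings.
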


\begin{prop}{env/equiv/dom}
    If $\Gamma_{{\mathrm{1}}}  \simeq  \Gamma_{{\mathrm{2}}}$, then $  \ottkw{dom} ( \Gamma_{{\mathrm{1}}} )  \ottsym{=}  \ottkw{dom} ( \Gamma_{{\mathrm{2}}} )  $.

    \proof This follows by \propref{env/in/equiv}.
\end{prop}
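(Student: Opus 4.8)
The plan is to derive this immediately from \propref{env/in/equiv}, which already transfers binding membership across equivalent contexts; the only work is to connect the set $\ottkw{dom}(\Gamma)$ to the binding-membership judgement $\ottnt{b} \in \Gamma$. First I would reduce the equation $\ottkw{dom}(\Gamma_1) = \ottkw{dom}(\Gamma_2)$ to a per-element biconditional, i.e.\ to showing for an arbitrary variable $\ottmv{x}$ (resp.\ location $\ottmv{l}$) that $\ottmv{x} \in \ottkw{dom}(\Gamma_1)$ iff $\ottmv{x} \in \ottkw{dom}(\Gamma_2)$. By the definition of the domain, $\ottmv{x} \in \ottkw{dom}(\Gamma)$ holds exactly when $\Gamma$ contains some binding $\ottmv{x} \mathord: \ottnt{T}$, that is, when $\ottmv{x} \mathord: \ottnt{T} \in \Gamma$ for some type $\ottnt{T}$; the analogous characterization holds for location bindings $\ottmv{l} \mathord: \ottsym{[} \ottnt{m} \ottsym{]}$.

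Next, fixing $\ottmv{x} \in \ottkw{dom}(\Gamma_1)$, I obtain a witness binding $\ottnt{b} = (\ottmv{x} \mathord: \ottnt{T})$ with $\ottnt{b} \in \Gamma_1$. Applying \propref{env/in/equiv} with this $\ottnt{b}$, under the hypothesis $\Gamma_1 \simeq \Gamma_2$, yields $\ottnt{b} \in \Gamma_2$, hence $\ottmv{x} \in \ottkw{dom}(\Gamma_2)$. The reverse inclusion is symmetric, using that $\simeq$ is symmetric (immediate from the definition of context equivalence via graph isomorphism and set equality, both symmetric relations), and the location case is identical with $\ottnt{b} = (\ottmv{l} \mathord: \ottsym{[} \ottnt{m} \ottsym{]})$.

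There is essentially no obstacle here: the statement is a direct corollary of \propref{env/in/equiv}. The only point worth noting is that domain membership asks merely for the \emph{existence} of a binding carrying the given variable or location, so the argument never needs to reason about which type that binding carries; well-formedness would in any case make the type unique, but it is not required for this proof.
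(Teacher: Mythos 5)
Your proof is correct and is exactly the paper's argument: the paper's one-line proof ``follows by \propref{env/in/equiv}'' is precisely your unfolding, namely that $\ottkw{dom}(\Gamma)$ membership is existence of a binding $\ottnt{b} \in \Gamma$, which transfers across $\simeq$ by that lemma. One tiny simplification: you do not need symmetry of $\simeq$ for the reverse inclusion, since \propref{env/in/equiv} is already stated as a biconditional under the single hypothesis $\Gamma_{{\mathrm{1}}} \simeq \Gamma_{{\mathrm{2}}}$.
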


\begin{prop}{env/sub/runtime/dom}
    If $\ottnt{C}  \lesssim  \Gamma$, then $  \ottkw{dom} ( \ottnt{C} )  \ottsym{=}  \ottkw{dom} ( \Gamma )  $.

    \proof We can see $\Gamma$ is a runtime environment by \propref{env/sub/runtime}.
    Then, the goal follows by \propref{env/in/sup} and \propref{env/in/sub}.
\end{prop}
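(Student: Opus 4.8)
The plan is to reduce the domain equality to the two membership-transfer lemmas for the subcontext relation, after first eliminating variable bindings. First I would invoke \propref{env/sub/runtime} to conclude that $\Gamma$ is itself a run-time context; combined with the hypothesis that $\ottnt{C}$ is a run-time context, this guarantees that neither $\ottnt{C}$ nor $\Gamma$ contains any variable binding. Hence every binding occurring in either context is a location binding $\ottmv{l} \mathord: \ottsym{[} \ottnt{m} \ottsym{]}$, and both $\ottkw{dom}(\ottnt{C})$ and $\ottkw{dom}(\Gamma)$ are sets of locations. It therefore suffices to show, for each location $\ottmv{l}$, that $\ottmv{l} \in \ottkw{dom}(\ottnt{C})$ if and only if $\ottmv{l} \in \ottkw{dom}(\Gamma)$, that is, that some binding $\ottmv{l} \mathord: \ottsym{[} \ottnt{m} \ottsym{]}$ occurs in $\ottnt{C}$ exactly when it occurs in $\Gamma$.

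For the inclusion $\ottkw{dom}(\Gamma) \subseteq \ottkw{dom}(\ottnt{C})$ I would take any binding $\ottmv{l} \mathord: \ottsym{[} \ottnt{m} \ottsym{]} \in \Gamma$ and apply \propref{env/in/sup}, which under $\ottnt{C} \lesssim \Gamma$ transfers every binding of $\Gamma$ into $\ottnt{C}$; hence $\ottmv{l} \mathord: \ottsym{[} \ottnt{m} \ottsym{]} \in \ottnt{C}$ and so $\ottmv{l} \in \ottkw{dom}(\ottnt{C})$. For the reverse inclusion I would take $\ottmv{l} \mathord: \ottsym{[} \ottnt{m} \ottsym{]} \in \ottnt{C}$ and apply \propref{env/in/sub}, which transfers a binding of $\ottnt{C}$ into $\Gamma$ but only when that binding is ordered. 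This is exactly where the run-time structure pays off: a trace type $\ottsym{[} \ottnt{m} \ottsym{]}$ is always ordered, so the side condition $\ottkw{ord} \, \ottsym{[} \ottnt{m} \ottsym{]}$ is met automatically, and we obtain $\ottmv{l} \mathord: \ottsym{[} \ottnt{m} \ottsym{]} \in \Gamma$, i.e.\ $\ottmv{l} \in \ottkw{dom}(\Gamma)$.

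The main subtlety to keep in view is that the subcontext relation is deliberately asymmetric on bindings: \propref{env/in/sup} holds for arbitrary bindings, whereas its converse \propref{env/in/sub} is restricted to ordered bindings, reflecting that unrestricted bindings may be freely introduced by weakening when passing to a subcontext. Consequently the equality of domains is not a formality for general contexts and could genuinely fail if variable or unrestricted bindings were present. The essential first step is therefore the appeal to \propref{env/sub/runtime} to rule out variable bindings entirely; once both contexts are known to consist solely of ordered location bindings, the two lemmas combine to give mutual inclusion and the argument closes with no further case analysis.
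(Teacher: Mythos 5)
Your proof is correct and follows the paper's own argument exactly: first apply \propref{env/sub/runtime} to rule out variable bindings in $\Gamma$, then obtain $\ottkw{dom}(\Gamma) \subseteq \ottkw{dom}(\ottnt{C})$ from \propref{env/in/sup} and the reverse inclusion from \propref{env/in/sub}, using that every remaining binding is a location binding $\ottmv{l} \mathord: \ottsym{[} \ottnt{m} \ottsym{]}$ whose trace type is ordered. The paper's proof states this in one line; your write-up merely makes explicit the same membership-transfer steps and the ordered-type side condition.
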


\begin{prop}{env/equiv/ctx}
    If $\Gamma_{{\mathrm{1}}}  \simeq  \Gamma_{{\mathrm{2}}}$, then $\mathcal{G}  \ottsym{[}  \Gamma_{{\mathrm{1}}}  \ottsym{]}  \simeq  \mathcal{G}  \ottsym{[}  \Gamma_{{\mathrm{2}}}  \ottsym{]}$.

    \proof By structural induction on $\mathcal{G}$.
    \begin{match}
        \item[$ \mathcal{G} \ottsym{=} \ottsym{[]} $]
        Immediately by the assumption.

        \item[$ \mathcal{G} \ottsym{=} \mathcal{G}'  \ottsym{,}  \Gamma' $]
        We have
        \begin{gather}
            \mathcal{G}'  \ottsym{[}  \Gamma_{{\mathrm{1}}}  \ottsym{]}  \simeq  \mathcal{G}'  \ottsym{[}  \Gamma_{{\mathrm{2}}}  \ottsym{]} \hyp{2.1}
        \end{gather}
        by IH.
        Let
        \begin{gather}
              \lBrack \mathcal{G}'  \ottsym{[}  \Gamma_{{\mathrm{1}}}  \ottsym{]} \rBrack  \ottsym{=} \mathfrak{G}'_{{\mathrm{1}}}  \ottsym{;}  \mathbb{S}'_{{\mathrm{1}}}  \hyp{2.2}\\
              \lBrack \mathcal{G}'  \ottsym{[}  \Gamma_{{\mathrm{2}}}  \ottsym{]} \rBrack  \ottsym{=} \mathfrak{G}'_{{\mathrm{2}}}  \ottsym{;}  \mathbb{S}'_{{\mathrm{2}}}  \hyp{2.3}\\
              \lBrack \Gamma' \rBrack  \ottsym{=} \mathfrak{G}'  \ottsym{;}  \mathbb{S}'  \hyp{2.4}.
        \end{gather}
        We have
        \begin{gather}
            \mathfrak{G}'_{{\mathrm{1}}}  \simeq  \mathfrak{G}'_{{\mathrm{2}}} \hyp{2.5}\\
             \mathbb{S}'_{{\mathrm{1}}} \ottsym{=} \mathbb{S}'_{{\mathrm{2}}}  \hyp{2.6}
        \end{gather}
        by \hypref{2.1}, \hypref{2.2}, and \hypref{2.3}.
        Then, we have
        \begin{gather}
             \GJOIN{ \mathfrak{G}'_{{\mathrm{1}}} }{ \mathfrak{G}' }   \simeq   \GJOIN{ \mathfrak{G}'_{{\mathrm{2}}} }{ \mathfrak{G}' }  \hyp{2.7}
        \end{gather}
        by \propref{graph/iso/refl} and \propref{graph/iso/join}.
        Furethermore,
        \begin{gather}
            \mathbb{S}'_{{\mathrm{1}}} \cup \mathbb{S}' = \mathbb{S}'_{{\mathrm{2}}} \cup \mathbb{S}'. \hyp{2.8}
        \end{gather}
        From \hypref{2.7} and \hypref{2.8}, we have the goal $\mathcal{G}'  \ottsym{[}  \Gamma_{{\mathrm{1}}}  \ottsym{]}  \ottsym{,}  \Gamma'  \simeq  \mathcal{G}'  \ottsym{[}  \Gamma_{{\mathrm{2}}}  \ottsym{]}  \ottsym{,}  \Gamma'$ by definition.

        \item[$ \mathcal{G} \ottsym{=} \Gamma'  \ottsym{,}  \mathcal{G}' $] Similar to the case above.
        \item[$ \mathcal{G} \ottsym{=} \mathcal{G}'  \parallel  \Gamma' $ and $ \mathcal{G} \ottsym{=} \Gamma'  \parallel  \mathcal{G}' $] Similar to the case $ \mathcal{G} \ottsym{=} \mathcal{G}'  \ottsym{,}  \Gamma' $ except we use \propref{graph/iso/union}. \qedhere
    \end{match}

\end{prop}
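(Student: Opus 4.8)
The plan is to proceed by structural induction on the context pattern $\mathcal{G}$, exploiting that the interpretation $\lBrack - \rBrack$ is defined compositionally and that graph isomorphism is already known to be a congruence for both the join and the union of graph representations.

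For the base case $\mathcal{G} = \ottsym{[]}$, we have $\mathcal{G}\ottsym{[}\Gamma_{{\mathrm{1}}}\ottsym{]} = \Gamma_{{\mathrm{1}}}$ and $\mathcal{G}\ottsym{[}\Gamma_{{\mathrm{2}}}\ottsym{]} = \Gamma_{{\mathrm{2}}}$, so the goal $\mathcal{G}\ottsym{[}\Gamma_{{\mathrm{1}}}\ottsym{]} \simeq \mathcal{G}\ottsym{[}\Gamma_{{\mathrm{2}}}\ottsym{]}$ is exactly the hypothesis $\Gamma_{{\mathrm{1}}} \simeq \Gamma_{{\mathrm{2}}}$.

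For each of the four inductive forms — $\mathcal{G}' \ottsym{,} \Gamma'$, $\Gamma' \ottsym{,} \mathcal{G}'$, $\mathcal{G}' \parallel \Gamma'$, and $\Gamma' \parallel \mathcal{G}'$ — I would first apply the induction hypothesis to obtain $\mathcal{G}'\ottsym{[}\Gamma_{{\mathrm{1}}}\ottsym{]} \simeq \mathcal{G}'\ottsym{[}\Gamma_{{\mathrm{2}}}\ottsym{]}$, i.e.\ that the graph parts of the two interpretations are isomorphic and their set parts are equal. I would then unfold $\lBrack - \rBrack$ on the composite: a sequential former interprets the graph component as the join $\GJOIN{\mathfrak{G}_a}{\mathfrak{G}_b}$ and the set component as $\mathbb{S}_a \cup \mathbb{S}_b$, whereas a parallel former uses the graph union $\mathfrak{G}_a \cup \mathfrak{G}_b$ with the same set union $\mathbb{S}_a \cup \mathbb{S}_b$. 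Writing $\mathfrak{G}'$ and $\mathbb{S}'$ for the interpretation of the fixed subcontext $\Gamma'$, the graph part of the conclusion is the composite of the graph part of $\mathcal{G}'\ottsym{[}\Gamma_i\ottsym{]}$ with $\mathfrak{G}'$ under the appropriate operation. Isomorphism of these composites then follows from \propref{graph/iso/join} in the sequential cases and \propref{graph/iso/union} in the parallel cases, each combined with \propref{graph/iso/refl} to discharge the unchanged factor $\mathfrak{G}'$. The set-part obligation is immediate, since both composites union the same fixed set $\mathbb{S}'$ onto the two equal set parts delivered by the induction hypothesis.

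I do not expect a genuine obstacle here, as this is essentially a routine congruence argument. The only step demanding care is the bookkeeping: one must match each context former to its semantic operation (sequential $\mapsto$ join, parallel $\mapsto$ union) and invoke the matching congruence lemma, while tracking the set component of the interpretation in lockstep with the graph component through every case. The two right-hole forms $\Gamma' \ottsym{,} \mathcal{G}'$ and $\Gamma' \parallel \mathcal{G}'$ are symmetric to their left-hole counterparts and need the same lemmas with the two arguments swapped.
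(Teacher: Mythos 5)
Your proposal is correct and follows essentially the same route as the paper's proof: structural induction on $\mathcal{G}$, applying the induction hypothesis and then the congruence lemmas for join and union (\propref{graph/iso/join}, \propref{graph/iso/union}) together with reflexivity (\propref{graph/iso/refl}) on the fixed factor, while tracking set equality of the unrestricted components. Nothing is missing.
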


\begin{prop}{env/sub/ctx}
    If $\Gamma_{{\mathrm{1}}}  \lesssim  \Gamma_{{\mathrm{2}}}$, then $\mathcal{G}  \ottsym{[}  \Gamma_{{\mathrm{1}}}  \ottsym{]}  \lesssim  \mathcal{G}  \ottsym{[}  \Gamma_{{\mathrm{2}}}  \ottsym{]}$.

    \proof By structural induction on $\mathcal{G}$.
    \begin{match}
        \item[$ \mathcal{G} \ottsym{=} \ottsym{[]} $]
        Immediately by the assumption.

        \item[$ \mathcal{G} \ottsym{=} \mathcal{G}'  \ottsym{,}  \Gamma' $]
        Let
        \begin{gather*}
              \lBrack \mathcal{G}'  \ottsym{[}  \Gamma_{{\mathrm{1}}}  \ottsym{]} \rBrack  \ottsym{=} \mathfrak{G}_{{\mathrm{1}}}  \ottsym{;}  \mathbb{S}_{{\mathrm{1}}}  \\
              \lBrack \mathcal{G}'  \ottsym{[}  \Gamma_{{\mathrm{2}}}  \ottsym{]} \rBrack  \ottsym{=} \mathfrak{G}_{{\mathrm{2}}}  \ottsym{;}  \mathbb{S}_{{\mathrm{2}}}  \\
              \lBrack \Gamma' \rBrack  \ottsym{=} \mathfrak{G}  \ottsym{;}  \mathbb{S} .
        \end{gather*}
        Then, we have
        \begin{gather}
            \mathfrak{G}_{{\mathrm{1}}}  \simeq  \mathfrak{G}'_{{\mathrm{1}}} \mathrel{<_\rightarrow} \mathfrak{G}'_{{\mathrm{2}}}  \simeq  \mathfrak{G}_{{\mathrm{2}}} \hyp{2.1}\\
             \mathbb{S}_{{\mathrm{1}}}  \supseteq  \mathbb{S}_{{\mathrm{2}}}  \hyp{2.2}
        \end{gather}
        for some $\mathfrak{G}'_{{\mathrm{1}}}$ and $\mathfrak{G}'_{{\mathrm{2}}}$ by IH.
        What we need to show is
        \begin{gather*}
             \GJOIN{ \mathfrak{G}_{{\mathrm{1}}} }{ \mathfrak{G} }   \simeq  \mathfrak{G}''_{{\mathrm{1}}} \mathrel{<_\rightarrow} \mathfrak{G}''_{{\mathrm{2}}}  \simeq   \GJOIN{ \mathfrak{G}_{{\mathrm{2}}} }{ \mathfrak{G} }  \\
              \mathbb{S}_{{\mathrm{1}}} \cup \mathbb{S}   \supseteq   \mathbb{S}_{{\mathrm{2}}} \cup \mathbb{S}  
        \end{gather*}
        for some $\mathfrak{G}''_{{\mathrm{1}}}$ and $\mathfrak{G}''_{{\mathrm{2}}}$.
        We choose $ \mathfrak{G}''_{{\mathrm{1}}} \ottsym{=}  \GJOIN{ \mathfrak{G}'_{{\mathrm{1}}} }{ \mathfrak{G}' }  $ and $ \mathfrak{G}''_{{\mathrm{2}}} \ottsym{=}  \GJOIN{ \mathfrak{G}'_{{\mathrm{2}}} }{ \mathfrak{G}' }  $.
        The first goal follows by \propref{graph/iso/join}, \propref{graph/span/join}, and \hypref{2.1}.
        The second goal follows by \hypref{2.2}.

        \item[$ \mathcal{G} \ottsym{=} \Gamma'  \ottsym{,}  \mathcal{G}' $] Similar to the case above.
        \item[$ \mathcal{G} \ottsym{=} \mathcal{G}'  \parallel  \Gamma' $ and $ \mathcal{G} \ottsym{=} \Gamma'  \parallel  \mathcal{G}' $] Similar to the case $ \mathcal{G} \ottsym{=} \mathcal{G}'  \ottsym{,}  \Gamma' $ except we use \propref{graph/iso/union} and \propref{graph/span/union}. \qedhere
    \end{match}




\end{prop}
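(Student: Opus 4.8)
The plan is to proceed by structural induction on the context pattern $\mathcal{G}$, following the skeleton of the proof of \propref{env/equiv/ctx} but threading the two intermediate graph witnesses of the subcontext relation instead of a single isomorphism. Recall that $\lesssim$ unfolds as follows: for contexts whose interpretations are $\mathfrak{G}_1; S_1$ and $\mathfrak{G}_2; S_2$, the relation holds iff there are graphs $\mathfrak{G}_1', \mathfrak{G}_2'$ with $\mathfrak{G}_1 \simeq \mathfrak{G}_1' \mathrel{<_\rightarrow} \mathfrak{G}_2' \simeq \mathfrak{G}_2$ and $S_1 \supseteq S_2$. In each case the task is to build such witnesses for $\lBrack \mathcal{G}[\Gamma_1]\rBrack$ and $\lBrack \mathcal{G}[\Gamma_2]\rBrack$ out of those supplied for $\Gamma_1$ and $\Gamma_2$.

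The base case $\mathcal{G} = []$ is immediate, since $\mathcal{G}[\Gamma_i] = \Gamma_i$ and the goal is literally the hypothesis. For the sequential case $\mathcal{G} = \mathcal{G}', \Gamma'$, the induction hypothesis gives $\mathcal{G}'[\Gamma_1] \lesssim \mathcal{G}'[\Gamma_2]$, i.e.\ writing $\lBrack \mathcal{G}'[\Gamma_i]\rBrack = \mathfrak{G}_i; S_i$ and $\lBrack \Gamma'\rBrack = \mathfrak{G}; S$, there are $\mathfrak{G}_1', \mathfrak{G}_2'$ with $\mathfrak{G}_1 \simeq \mathfrak{G}_1' \mathrel{<_\rightarrow} \mathfrak{G}_2' \simeq \mathfrak{G}_2$ and $S_1 \supseteq S_2$. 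Since $\lBrack \mathcal{G}'[\Gamma_i], \Gamma'\rBrack = \GJOIN{\mathfrak{G}_i}{\mathfrak{G}}; S_i \cup S$, I would take $\GJOIN{\mathfrak{G}_1'}{\mathfrak{G}}$ and $\GJOIN{\mathfrak{G}_2'}{\mathfrak{G}}$ as the new witnesses: the two isomorphism links come from \propref{graph/iso/join} together with \propref{graph/iso/refl} on $\mathfrak{G}$, the spanning link $\GJOIN{\mathfrak{G}_1'}{\mathfrak{G}} \mathrel{<_\rightarrow} \GJOIN{\mathfrak{G}_2'}{\mathfrak{G}}$ from \propref{graph/span/join} with \propref{graph/span/refl}, and $S_1 \cup S \supseteq S_2 \cup S$ is immediate from $S_1 \supseteq S_2$. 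The mirror case $\mathcal{G} = \Gamma', \mathcal{G}'$ is identical with the join operands in the opposite order.

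The parallel cases $\mathcal{G} = \mathcal{G}' \parallel \Gamma'$ and $\mathcal{G} = \Gamma' \parallel \mathcal{G}'$ run the same way, replacing the join $\GJOIN{\cdot}{\cdot}$ by the union $\cup$ and invoking \propref{graph/iso/union} and \propref{graph/span/union} in place of their join analogues; the set-part obligation is again trivial because the common $S$ coming from $\Gamma'$ is unioned on both sides.

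I do not anticipate a genuine obstacle here: all the weight is carried by the already-established compatibility of $\simeq$ and $\mathrel{<_\rightarrow}$ with both graph operations (\propref{graph/iso/join}, \propref{graph/span/join}, \propref{graph/iso/union}, \propref{graph/span/union}), and this proposition is simply the $\lesssim$-analogue of \propref{env/equiv/ctx}. The only point demanding care is the bookkeeping — applying the isomorphism lemmas to the two outer links of the chain $\mathfrak{G}_1 \simeq \mathfrak{G}_1' \mathrel{<_\rightarrow} \mathfrak{G}_2' \simeq \mathfrak{G}_2$ and the spanning lemma to the inner link, and checking that the inclusion $S_1 \supseteq S_2$ survives the union on each side.
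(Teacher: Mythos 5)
Your proposal is correct and follows essentially the same route as the paper's proof: structural induction on $\mathcal{G}$, using the induction hypothesis to obtain the witness chain $\mathfrak{G}_1 \simeq \mathfrak{G}_1' \mathrel{<_\rightarrow} \mathfrak{G}_2' \simeq \mathfrak{G}_2$ and lifting it through $\GJOIN{\cdot}{\cdot}$ (resp.\ $\cup$) via the join (resp.\ union) compatibility lemmas for $\simeq$ and $\mathrel{<_\rightarrow}$, with the set inclusion preserved under union. Your explicit appeal to reflexivity of $\simeq$ and $\mathrel{<_\rightarrow}$ on the $\Gamma'$ component is the only detail the paper leaves implicit.
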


\begin{prop}{env/remove/unr}
    Let $  \lBrack \Gamma \rBrack  \ottsym{=} \mathfrak{G}  \ottsym{;}  \mathbb{S} $ and $\ottkw{unr} \, \ottnt{b}$, then $  \lBrack  \Gamma ^{- \ottnt{b} }  \rBrack  \ottsym{=} \mathfrak{G}  \ottsym{;}   \mathbb{S} \setminus \ottsym{\{}  \ottnt{b}  \ottsym{\}}  $.

    \proof Routine by structural induction on $\Gamma$.
\end{prop}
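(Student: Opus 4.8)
The plan is to unfold the notation $\Gamma^{-b}$ as the replacement $\Gamma[\,\cdot\,/b]$, that is, the context obtained from $\Gamma$ by replacing every occurrence of the binding $b$ with the empty context $\cdot$, and then to proceed by structural induction on $\Gamma$. The single idea driving the whole argument is that, since $\ottkw{unr} \, \ottnt{b}$, the binding $b$ contributes nothing to the graph component of an interpretation and only inserts itself into the set component; deleting it therefore leaves the graph $\mathfrak{G}$ untouched and removes exactly $b$ from the set $\mathbb{S}$.

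First I would dispatch the base cases. For $\Gamma =  \cdot $ we have $\Gamma^{-b} =  \cdot $ and $ \lBrack  \cdot  \rBrack  = (0,\emptyset,\emptyset);\emptyset$, so both sides agree since $\emptyset \setminus \{b\} = \emptyset$. For $\Gamma = b'$ a single binding, I split on whether $b' = b$. If $b' = b$, then $b'$ is unrestricted, so $ \lBrack \ottnt{b} \rBrack  = (0,\emptyset,\emptyset);\{b\}$ while $b^{-b} =  \cdot $, and indeed $\{b\}\setminus\{b\} = \emptyset$ matches $ \lBrack  \cdot  \rBrack $. If $b' \neq b$, then $(b')^{-b} = b'$, so the two interpretations literally coincide, and it remains only to observe $\mathbb{S}\setminus\{b\} = \mathbb{S}$: this holds because $\mathbb{S}$ is either $\emptyset$ (when $\ottkw{ord} \, \ottnt{b'}$) or $\{b'\}$ (when $\ottkw{unr} \, \ottnt{b'}$), and in the latter case $b' \neq b$.

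The inductive cases for $\Gamma_{{\mathrm{1}}}  \ottsym{,}  \Gamma_{{\mathrm{2}}}$ and $\Gamma_{{\mathrm{1}}}  \parallel  \Gamma_{{\mathrm{2}}}$ are uniform. The replacement distributes over both context formers directly from its definition, so $(\Gamma_{{\mathrm{1}}}  \ottsym{,}  \Gamma_{{\mathrm{2}}})^{-b} = \Gamma_{{\mathrm{1}}}^{-b}  \ottsym{,}  \Gamma_{{\mathrm{2}}}^{-b}$, and likewise for $\parallel$. Writing $ \lBrack \Gamma_{\ottmv{i}} \rBrack  = \mathfrak{G}_{\ottmv{i}};\mathbb{S}_{\ottmv{i}}$, the induction hypothesis gives $ \lBrack \Gamma_{\ottmv{i}}^{-b} \rBrack  = \mathfrak{G}_{\ottmv{i}};\mathbb{S}_{\ottmv{i}}\setminus\{b\}$, crucially with the graph parts $\mathfrak{G}_{\ottmv{i}}$ unchanged. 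Hence the join (resp.\ union) of the graphs is exactly the original, $\mathfrak{G} = \GJOIN{\mathfrak{G}_{{\mathrm{1}}}}{\mathfrak{G}_{{\mathrm{2}}}}$ (resp.\ $\mathfrak{G}_{{\mathrm{1}}} \cup \mathfrak{G}_{{\mathrm{2}}}$), and the set part becomes $(\mathbb{S}_{{\mathrm{1}}}\setminus\{b\})\cup(\mathbb{S}_{{\mathrm{2}}}\setminus\{b\}) = (\mathbb{S}_{{\mathrm{1}}}\cup\mathbb{S}_{{\mathrm{2}}})\setminus\{b\} = \mathbb{S}\setminus\{b\}$, using the distribution of set difference over union.

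There is no genuine obstacle here: the statement is bookkeeping about how the two components of $ \lBrack - \rBrack $ behave under deletion of an unrestricted binding. The only two spots that demand a moment of attention are the singleton case, where one must remember that $b'$ may itself be unrestricted yet distinct from $b$ (so that $\{b'\}\setminus\{b\} = \{b'\}$), and the inductive step, where the essential observation is that the induction hypothesis preserves each graph component $\mathfrak{G}_{\ottmv{i}}$ verbatim, so that reassembling them via join or union reproduces $\mathfrak{G}$ with no change to the edge structure or vertex count.
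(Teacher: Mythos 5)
Your proof is correct and follows exactly the route the paper intends: its proof is stated as ``Routine by structural induction on $\Gamma$,'' and your argument is precisely that induction spelled out, with the key observations (an unrestricted $b$ contributes only to the set component, replacement distributes over both context formers, and set difference distributes over union) handled correctly in the singleton and inductive cases.
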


\begin{prop}{env/replace/ord}
    Let $ \ottnt{b}  \in  \Gamma $, $  \lBrack \Gamma \rBrack  \ottsym{=} \mathfrak{G}  \ottsym{;}  \mathbb{S} $, $  \lBrack \Gamma' \rBrack  \ottsym{=} \mathfrak{G}'  \ottsym{;}  \mathbb{S}' $, and $\ottkw{ord} \, \ottnt{b}$, then $  \lBrack \Gamma  \ottsym{[}  \Gamma'  \ottsym{/}  \ottnt{b}  \ottsym{]} \rBrack  \ottsym{=} \mathfrak{G}  \ottsym{[}  \mathfrak{G}'  \ottsym{/}  \ottnt{b}  \ottsym{]}  \ottsym{;}   \mathbb{S} \cup \mathbb{S}'  $.

    \proof By structural induction on $\Gamma$.
    \begin{match}
        \item[$ \Gamma \ottsym{=}  \cdot  $] Vacuously true since $ \ottnt{b}  \in  \Gamma $ cannot hold.
        \item[$ \Gamma \ottsym{=} \ottnt{b'} $] In this case $ \ottnt{b'} \ottsym{=} \ottnt{b} $ since $ \ottnt{b}  \in  \Gamma $ is supposed.
        Then, $\mathfrak{G} = (1, {0 \mapsto b}, \emptyset)$ and $\mathbb{S} = \emptyset$ by definition.
        That implies $ \mathfrak{G}  \ottsym{[}  \mathfrak{G}'  \ottsym{/}  \ottnt{b}  \ottsym{]} \ottsym{=} \mathfrak{G}' $ and $  \mathbb{S} \cup \mathbb{S}'  \ottsym{=} \mathbb{S}' $.
        Consequetly, the goal $  \lBrack \Gamma  \ottsym{[}  \Gamma'  \ottsym{/}  \ottnt{b}  \ottsym{]} \rBrack  \ottsym{=}  \lBrack \Gamma' \rBrack   = \mathfrak{G}  \ottsym{[}  \mathfrak{G}'  \ottsym{/}  \ottnt{b}  \ottsym{]}  \ottsym{;}   \mathbb{S} \cup \mathbb{S}' $ holds.
        \item[$ \Gamma \ottsym{=} \Gamma_{{\mathrm{1}}}  \ottsym{,}  \Gamma_{{\mathrm{2}}} $]
        Let $ \mathfrak{G}_{{\mathrm{1}}}  \ottsym{;}  \mathbb{S}_{{\mathrm{1}}} \ottsym{=}  \lBrack \Gamma_{{\mathrm{1}}} \rBrack  $ and $ \mathfrak{G}_{{\mathrm{2}}}  \ottsym{;}  \mathbb{S}_{{\mathrm{2}}} \ottsym{=}  \lBrack \Gamma_{{\mathrm{2}}} \rBrack  $.
        We have $ \mathfrak{G} \ottsym{=}  \GJOIN{ \mathfrak{G}_{{\mathrm{1}}} }{ \mathfrak{G}_{{\mathrm{2}}} }  $ and $ \mathbb{S} \ottsym{=}  \mathbb{S}_{{\mathrm{1}}} \cup \mathbb{S}_{{\mathrm{2}}}  $ by definition.
        We consider the following cases.
        \begin{match}
            \item[$ \ottnt{b}  \in  \Gamma_{{\mathrm{1}}} $ and $ \ottnt{b}  \in  \Gamma_{{\mathrm{2}}} $]
            We have $  \lBrack \Gamma_{{\mathrm{1}}}  \ottsym{[}  \Gamma'  \ottsym{/}  \ottnt{b}  \ottsym{]} \rBrack  \ottsym{=} \mathfrak{G}_{{\mathrm{1}}}  \ottsym{[}  \mathfrak{G}'  \ottsym{/}  \ottnt{b}  \ottsym{]}  \ottsym{;}   \mathbb{S}_{{\mathrm{1}}} \cup \mathbb{S}'  $ and $  \lBrack \Gamma_{{\mathrm{2}}}  \ottsym{[}  \Gamma'  \ottsym{/}  \ottnt{b}  \ottsym{]} \rBrack  \ottsym{=} \mathfrak{G}_{{\mathrm{2}}}  \ottsym{[}  \mathfrak{G}'  \ottsym{/}  \ottnt{b}  \ottsym{]}  \ottsym{;}   \mathbb{S}_{{\mathrm{2}}} \cup \mathbb{S}'  $ by the induction hypohtesis.
            Then, $  \lBrack \Gamma  \ottsym{[}  \Gamma'  \ottsym{/}  \ottnt{b}  \ottsym{]} \rBrack  \ottsym{=}  \GJOIN{ \mathfrak{G}_{{\mathrm{1}}}  \ottsym{[}  \mathfrak{G}'  \ottsym{/}  \ottnt{b}  \ottsym{]} }{ \mathfrak{G}_{{\mathrm{2}}} }   \ottsym{[}  \mathfrak{G}'  \ottsym{/}  \ottnt{b}  \ottsym{]}  \ottsym{;}     \mathbb{S}_{{\mathrm{1}}} \cup \mathbb{S}'  \cup \mathbb{S}_{{\mathrm{2}}}  \cup \mathbb{S}'  $ by definition.
            To this end, we have $  \GJOIN{ \mathfrak{G}_{{\mathrm{1}}}  \ottsym{[}  \mathfrak{G}'  \ottsym{/}  \ottnt{b}  \ottsym{]} }{ \mathfrak{G}_{{\mathrm{2}}} }   \ottsym{[}  \mathfrak{G}'  \ottsym{/}  \ottnt{b}  \ottsym{]} \ottsym{=} \ottsym{(}   \GJOIN{ \mathfrak{G}_{{\mathrm{1}}} }{ \mathfrak{G}_{{\mathrm{2}}} }   \ottsym{)}  \ottsym{[}  \mathfrak{G}'  \ottsym{/}  \ottnt{b}  \ottsym{]} $ by \propref{graph/replace/join}.

            \item[$ \ottnt{b}  \in  \Gamma_{{\mathrm{1}}} $ and $ \ottnt{b}  \notin  \Gamma_{{\mathrm{2}}} $]
            We have $  \lBrack \Gamma_{{\mathrm{1}}}  \ottsym{[}  \Gamma'  \ottsym{/}  \ottnt{b}  \ottsym{]} \rBrack  \ottsym{=} \mathfrak{G}_{{\mathrm{1}}}  \ottsym{[}  \mathfrak{G}'  \ottsym{/}  \ottnt{b}  \ottsym{]}  \ottsym{;}   \mathbb{S}_{{\mathrm{1}}} \cup \mathbb{S}'  $ by the induction hypothesis.
            On the other hand, $  \lBrack \Gamma_{{\mathrm{2}}}  \ottsym{[}  \Gamma'  \ottsym{/}  \ottnt{b}  \ottsym{]} \rBrack  \ottsym{=}  \lBrack \Gamma_{{\mathrm{2}}} \rBrack   =  \mathfrak{G}_{{\mathrm{2}}}  \ottsym{;}  \mathbb{S}_{{\mathrm{2}}} \ottsym{=} \mathfrak{G}_{{\mathrm{2}}}  \ottsym{[}  \mathfrak{G}'  \ottsym{/}  \ottnt{b}  \ottsym{]}  \ottsym{;}  \mathbb{S}_{{\mathrm{2}}} $.
            Then, $  \lBrack \Gamma  \ottsym{[}  \Gamma'  \ottsym{/}  \ottnt{b}  \ottsym{]} \rBrack  \ottsym{=}  \GJOIN{ \mathfrak{G}_{{\mathrm{1}}}  \ottsym{[}  \mathfrak{G}'  \ottsym{/}  \ottnt{b}  \ottsym{]} }{ \mathfrak{G}_{{\mathrm{2}}} }   \ottsym{[}  \mathfrak{G}'  \ottsym{/}  \ottnt{b}  \ottsym{]}  \ottsym{;}    \mathbb{S}_{{\mathrm{1}}} \cup \mathbb{S}'  \cup \mathbb{S}_{{\mathrm{2}}}  $ by definition.
            To this end, we have $  \GJOIN{ \mathfrak{G}_{{\mathrm{1}}}  \ottsym{[}  \mathfrak{G}'  \ottsym{/}  \ottnt{b}  \ottsym{]} }{ \mathfrak{G}_{{\mathrm{2}}} }   \ottsym{[}  \mathfrak{G}'  \ottsym{/}  \ottnt{b}  \ottsym{]} \ottsym{=} \ottsym{(}   \GJOIN{ \mathfrak{G}_{{\mathrm{1}}} }{ \mathfrak{G}_{{\mathrm{2}}} }   \ottsym{)}  \ottsym{[}  \mathfrak{G}'  \ottsym{/}  \ottnt{b}  \ottsym{]} $ by \propref{graph/replace/join}.

            \item[$ \ottnt{b}  \notin  \Gamma_{{\mathrm{1}}} $ and $ \ottnt{b}  \in  \Gamma_{{\mathrm{2}}} $] Similar to the case above.
            \item[$ \ottnt{b}  \notin  \Gamma_{{\mathrm{1}}} $ and $ \ottnt{b}  \notin  \Gamma_{{\mathrm{2}}} $] Vacuously true since $ \ottnt{b}  \in  \Gamma $.
        \end{match}

        \item[$ \Gamma \ottsym{=} \Gamma_{{\mathrm{1}}}  \parallel  \Gamma_{{\mathrm{2}}} $] Similar to the case above except we use \propref{graph/replace/union}.
    \end{match}
\end{prop}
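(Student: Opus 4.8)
The plan is to proceed by structural induction on $\Gamma$, mirroring the recursive clauses of the interpretation $\lBrack \cdot \rBrack$. The two base cases are immediate. When $\Gamma = \cdot$ the hypothesis $\ottnt{b} \in \Gamma$ is unsatisfiable, so the claim holds vacuously. When $\Gamma = \ottnt{b}'$ is a single binding, membership forces $\ottnt{b}' = \ottnt{b}$, and since $\ottkw{ord}\,\ottnt{b}$ we have $\mathfrak{G} = (1, \{0 \mapsto \ottnt{b}\}, \emptyset)$ and $\mathbb{S} = \emptyset$ by definition of the interpretation. Then the graph-level single-vertex replacement collapses to $\mathfrak{G}[\mathfrak{G}'/\ottnt{b}] = \mathfrak{G}'$ and $\mathbb{S} \cup \mathbb{S}' = \mathbb{S}'$, while syntactically $\Gamma[\Gamma'/\ottnt{b}] = \Gamma'$, so both sides equal $\lBrack \Gamma' \rBrack = \mathfrak{G}'; \mathbb{S}'$.

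For the inductive step I would treat $\Gamma = \Gamma_{{\mathrm{1}}}, \Gamma_{{\mathrm{2}}}$ and $\Gamma = \Gamma_{{\mathrm{1}}} \parallel \Gamma_{{\mathrm{2}}}$ uniformly. Writing $\lBrack \Gamma_i \rBrack = \mathfrak{G}_i; \mathbb{S}_i$, the interpretation gives the graph part of $\Gamma$ as $\GJOIN{\mathfrak{G}_{{\mathrm{1}}}}{\mathfrak{G}_{{\mathrm{2}}}}$ (resp.\ $\mathfrak{G}_{{\mathrm{1}}} \cup \mathfrak{G}_{{\mathrm{2}}}$) and the set part as $\mathbb{S}_{{\mathrm{1}}} \cup \mathbb{S}_{{\mathrm{2}}}$. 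The key observation is that graph replacement is a homomorphism with respect to both graph operations: this is precisely \propref{graph/replace/join} and \propref{graph/replace/union}, which let me push the replacement $[\mathfrak{G}'/\ottnt{b}]$ inside the join or union. I would then case-split on whether $\ottnt{b} \in \Gamma_{{\mathrm{1}}}$ and whether $\ottnt{b} \in \Gamma_{{\mathrm{2}}}$, applying the induction hypothesis to each subcontext that contains $\ottnt{b}$ and using $\mathfrak{G}_i[\mathfrak{G}'/\ottnt{b}] = \mathfrak{G}_i$ (there are no $\ottnt{b}$-labelled vertices to replace) for a subcontext that does not.

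The step requiring the most care, and the main obstacle, is the bookkeeping for the unrestricted set component when $\ottnt{b}$ occurs in both subcontexts. In that subcase the two invocations of the induction hypothesis each contribute a copy of $\mathbb{S}'$, yielding $(\mathbb{S}_{{\mathrm{1}}} \cup \mathbb{S}') \cup (\mathbb{S}_{{\mathrm{2}}} \cup \mathbb{S}')$ as the set part; this must be reconciled with the target $\mathbb{S} \cup \mathbb{S}' = (\mathbb{S}_{{\mathrm{1}}} \cup \mathbb{S}_{{\mathrm{2}}}) \cup \mathbb{S}'$, which follows purely from idempotency and commutativity of $\cup$. One must also confirm that the graph-level lemmas genuinely replace \emph{every} $\ottnt{b}$-labelled vertex across the composite graph, so that no occurrence is missed when $\ottnt{b}$ is spread over both components; this is exactly what \propref{graph/replace/join} and \propref{graph/replace/union} guarantee. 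The remaining subcases, where exactly one of $\Gamma_{{\mathrm{1}}}, \Gamma_{{\mathrm{2}}}$ contains $\ottnt{b}$, are routine once the homomorphism lemmas are in hand, and the subcase where $\ottnt{b}$ is in neither is vacuous under the standing hypothesis $\ottnt{b} \in \Gamma$.
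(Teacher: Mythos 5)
Your proposal is correct and follows essentially the same route as the paper's own proof: structural induction on $\Gamma$, the same treatment of the two base cases, a case split on whether $\ottnt{b}$ occurs in $\Gamma_{{\mathrm{1}}}$ and/or $\Gamma_{{\mathrm{2}}}$, and the same appeal to \propref{graph/replace/join} and \propref{graph/replace/union} to push the replacement through the composite graph. The set-component bookkeeping via idempotency of $\cup$ that you flag as the delicate point is exactly what the paper's proof relies on (silently) in the case where $\ottnt{b}$ occurs in both subcontexts.
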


\begin{prop}{env/remove/equiv}
    If $\Gamma_{{\mathrm{1}}}  \simeq  \Gamma_{{\mathrm{2}}}$, then $ \Gamma_{{\mathrm{1}}} ^{- \ottnt{b} }   \simeq   \Gamma_{{\mathrm{2}}} ^{- \ottnt{b} } $.

    \proof
    We consider wether $ \ottnt{b}  \in  \Gamma_{{\mathrm{1}}} $
    \begin{match}
        \item[$ \ottnt{b}  \notin  \Gamma_{{\mathrm{1}}} $]
        We have $ \ottnt{b}  \notin  \Gamma_{{\mathrm{2}}} $ by \propref{env/in/equiv}.
        So, $  \Gamma_{{\mathrm{1}}} ^{- \ottnt{b} }  \ottsym{=} \Gamma_{{\mathrm{1}}} $ and $  \Gamma_{{\mathrm{2}}} ^{- \ottnt{b} }  \ottsym{=} \Gamma_{{\mathrm{2}}} $.
        Consequently, the goal directly follows by the assumption.
        \item[$ \ottnt{b}  \in  \Gamma_{{\mathrm{1}}} $]
        We have $ \ottnt{b}  \in  \Gamma_{{\mathrm{2}}} $ by \propref{env/in/equiv}.
        Let $ \mathfrak{G}_{{\mathrm{1}}}  \ottsym{;}  \mathbb{S}_{{\mathrm{1}}} \ottsym{=}  \lBrack \Gamma_{{\mathrm{1}}} \rBrack  $ and $ \mathfrak{G}_{{\mathrm{2}}}  \ottsym{;}  \mathbb{S}_{{\mathrm{2}}} \ottsym{=}  \lBrack \Gamma_{{\mathrm{2}}} \rBrack  $.
        We have
        \begin{gather}
            \mathfrak{G}_{{\mathrm{1}}}  \simeq  \mathfrak{G}_{{\mathrm{2}}} \hyp{1.1}\\
             \mathbb{S}_{{\mathrm{1}}} \ottsym{=} \mathbb{S}_{{\mathrm{2}}}  \hyp{1.2}
        \end{gather}
        by the assumption.
        Now we consider whether $\ottkw{unr} \, \ottnt{b}$ or not.
        \begin{match}
            \item[$\ottkw{unr} \, \ottnt{b}$]
            We have $  \lBrack  \Gamma_{{\mathrm{1}}} ^{- \ottnt{b} }  \rBrack  \ottsym{=} \mathfrak{G}_{{\mathrm{1}}}  \ottsym{;}   \mathbb{S}_{{\mathrm{1}}} \setminus \ottsym{\{}  \ottnt{b}  \ottsym{\}}  $ and $  \lBrack  \Gamma_{{\mathrm{2}}} ^{- \ottnt{b} }  \rBrack  \ottsym{=} \mathfrak{G}_{{\mathrm{2}}}  \ottsym{;}   \mathbb{S}_{{\mathrm{2}}} \setminus \ottsym{\{}  \ottnt{b}  \ottsym{\}}  $ by \propref{env/remove/unr}.
            So, the goal follows by \hypref{1.1} and \hypref{1.2}.

            \item[$\ottkw{ord} \, \ottnt{b}$]
            We have $  \lBrack  \Gamma_{{\mathrm{1}}} ^{- \ottnt{b} }  \rBrack  \ottsym{=} \mathfrak{G}_{{\mathrm{1}}}  \ottsym{[}   \lBrack  \cdot  \rBrack   \ottsym{/}  \ottnt{b}  \ottsym{]}  \ottsym{;}  \mathbb{S}_{{\mathrm{1}}} $ and $  \lBrack  \Gamma_{{\mathrm{2}}} ^{- \ottnt{b} }  \rBrack  \ottsym{=} \mathfrak{G}_{{\mathrm{2}}}  \ottsym{[}   \lBrack  \cdot  \rBrack   \ottsym{/}  \ottnt{b}  \ottsym{]}  \ottsym{;}  \mathbb{S}_{{\mathrm{2}}} $ by \propref{env/replace/ord}.
            So, the goal follows by \propref{graph/replace/iso}. \qedhere
        \end{match}
    \end{match}

\end{prop}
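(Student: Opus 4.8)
The plan is to reduce the statement to the corresponding facts about the interpretation $\lBrack - \rBrack$, exploiting that removal $\Gamma^{-b}$ translates into a manipulation of either the graph component or the set component, depending on the qualifier of $b$. First I would note that $\Gamma^{-b}$ differs from $\Gamma$ only when $b$ actually occurs, so I would case split on whether $b \in \Gamma_{{\mathrm{1}}}$. Since $\Gamma_{{\mathrm{1}}} \simeq \Gamma_{{\mathrm{2}}}$, \propref{env/in/equiv} guarantees $ \ottnt{b}  \in  \Gamma_{{\mathrm{1}}} $ iff $ \ottnt{b}  \in  \Gamma_{{\mathrm{2}}} $, so the two contexts agree on this dichotomy.

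In the case $ \ottnt{b}  \notin  \Gamma_{{\mathrm{1}}} $ (hence $ \ottnt{b}  \notin  \Gamma_{{\mathrm{2}}} $), both removals are vacuous: $ \Gamma_{{\mathrm{1}}} ^{- \ottnt{b} }  = \Gamma_{{\mathrm{1}}}$ and $ \Gamma_{{\mathrm{2}}} ^{- \ottnt{b} }  = \Gamma_{{\mathrm{2}}}$, so the conclusion is immediate from the hypothesis.

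The interesting case is $ \ottnt{b}  \in  \Gamma_{{\mathrm{1}}} $ (hence $ \ottnt{b}  \in  \Gamma_{{\mathrm{2}}} $), where I would further split on whether $b$ is unrestricted or ordered. Writing $\lBrack \Gamma_{{\mathrm{1}}} \rBrack = \mathfrak{G}_{{\mathrm{1}}}; \mathbb{S}_{{\mathrm{1}}}$ and $\lBrack \Gamma_{{\mathrm{2}}} \rBrack = \mathfrak{G}_{{\mathrm{2}}}; \mathbb{S}_{{\mathrm{2}}}$, the definition of context equivalence gives $\mathfrak{G}_{{\mathrm{1}}} \simeq \mathfrak{G}_{{\mathrm{2}}}$ and $\mathbb{S}_{{\mathrm{1}}} = \mathbb{S}_{{\mathrm{2}}}$. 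If $\ottkw{unr} \, \ottnt{b}$, I would invoke \propref{env/remove/unr} to compute $\lBrack  \Gamma_{i} ^{- \ottnt{b} }  \rBrack = \mathfrak{G}_{i}; \mathbb{S}_{i} \setminus \{b\}$; the graph components are untouched and remain isomorphic, while the equal set components stay equal after deleting $b$, so the result follows by definition. If $\ottkw{ord} \, \ottnt{b}$, I would instead apply \propref{env/replace/ord} with $\Gamma' = \cdot$ (so that $ \Gamma ^{- \ottnt{b} }  = \Gamma[\cdot/b]$), yielding $\lBrack  \Gamma_{i} ^{- \ottnt{b} }  \rBrack = \mathfrak{G}_{i}[\lBrack \cdot \rBrack/b]; \mathbb{S}_{i}$; here the set components are unchanged (hence equal), and the graph components are related by \propref{graph/replace/iso}, which propagates $\mathfrak{G}_{{\mathrm{1}}} \simeq \mathfrak{G}_{{\mathrm{2}}}$ through the common vertex-replacement operation.

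The main obstacle, though a mild one, is the bookkeeping in the ordered subcase: one must select the correct computation rule for $\lBrack  \Gamma ^{- \ottnt{b} }  \rBrack$ (unrestricted removal deletes from the set, whereas ordered removal substitutes the empty graph for the labeled vertices) and confirm that \propref{graph/replace/iso} applies because the replacement graph $\lBrack \cdot \rBrack$ is literally the same on both sides. Everything else is a direct unfolding of the definition of $\simeq$ on contexts.
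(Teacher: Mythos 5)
Your proposal is correct and follows essentially the same route as the paper's own proof: the same case split on $ \ottnt{b}  \in  \Gamma_{{\mathrm{1}}} $ (transferred to $\Gamma_{{\mathrm{2}}}$ via \propref{env/in/equiv}), the same subcase split on $\ottkw{unr} \, \ottnt{b}$ versus $\ottkw{ord} \, \ottnt{b}$, and the same appeals to \propref{env/remove/unr}, \propref{env/replace/ord} (with $ \Gamma' \ottsym{=}  \cdot  $), and \propref{graph/replace/iso}. Nothing further to add.
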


\begin{prop}{env/remove/sub}
    If $\Gamma_{{\mathrm{1}}}  \lesssim  \Gamma_{{\mathrm{2}}}$, then $ \Gamma_{{\mathrm{1}}} ^{- \ottnt{b} }   \lesssim   \Gamma_{{\mathrm{2}}} ^{- \ottnt{b} } $.

    \proof Let $ \mathfrak{G}_{{\mathrm{1}}}  \ottsym{;}  \mathbb{S}_{{\mathrm{1}}} \ottsym{=}  \lBrack \Gamma_{{\mathrm{1}}} \rBrack  $ and $ \mathfrak{G}_{{\mathrm{2}}}  \ottsym{;}  \mathbb{S}_{{\mathrm{2}}} \ottsym{=}  \lBrack \Gamma_{{\mathrm{2}}} \rBrack  $.
    We have
    \begin{gather}
        \mathfrak{G}_{{\mathrm{1}}}  \simeq  \mathfrak{G}'_{{\mathrm{1}}} \mathrel{<_\rightarrow} \mathfrak{G}'_{{\mathrm{2}}}  \simeq  \mathfrak{G}_{{\mathrm{2}}} \hyp{0.1}\\
         \mathbb{S}_{{\mathrm{1}}}  \supseteq  \mathbb{S}_{{\mathrm{2}}}  \hyp{0.2}
    \end{gather}
    for some $\mathfrak{G}'_{{\mathrm{1}}}$ and $\mathfrak{G}'_{{\mathrm{2}}}$ by the assumption.
    Now, we consider wether $\ottkw{unr} \, \ottnt{b}$ or not.
    \begin{match}
        \item[$\ottkw{unr} \, \ottnt{b}$] We have
        \begin{gather*}
              \lBrack  \Gamma_{{\mathrm{1}}} ^{- \ottnt{b} }  \rBrack  \ottsym{=} \mathfrak{G}_{{\mathrm{1}}}  \ottsym{;}   \mathbb{S}_{{\mathrm{1}}} \setminus \ottsym{\{}  \ottnt{b}  \ottsym{\}}   \\
              \lBrack  \Gamma_{{\mathrm{2}}} ^{- \ottnt{b} }  \rBrack  \ottsym{=} \mathfrak{G}_{{\mathrm{2}}}  \ottsym{;}   \mathbb{S}_{{\mathrm{2}}} \setminus \ottsym{\{}  \ottnt{b}  \ottsym{\}}  
        \end{gather*}
        by \propref{env/remove/unr}.
        So, the goal follows by \hypref{0.1} and \hypref{0.2}.

        \item[$\ottkw{ord} \, \ottnt{b}$] We consider whether $ \ottnt{b}  \in  \Gamma_{{\mathrm{1}}} $ or not.
        \begin{match}
            \item[$ \ottnt{b}  \in  \Gamma_{{\mathrm{1}}} $] We have $ \ottnt{b}  \in  \Gamma_{{\mathrm{2}}} $ by \propref{env/in/sub}.
            We have
            \begin{gather*}
                  \lBrack  \Gamma_{{\mathrm{1}}} ^{- \ottnt{b} }  \rBrack  \ottsym{=} \mathfrak{G}_{{\mathrm{1}}}  \ottsym{[}   \lBrack  \cdot  \rBrack   \ottsym{/}  \ottnt{b}  \ottsym{]}  \ottsym{;}  \mathbb{S}_{{\mathrm{1}}}  \\
                  \lBrack  \Gamma_{{\mathrm{2}}} ^{- \ottnt{b} }  \rBrack  \ottsym{=} \mathfrak{G}_{{\mathrm{2}}}  \ottsym{[}   \lBrack  \cdot  \rBrack   \ottsym{/}  \ottnt{b}  \ottsym{]}  \ottsym{;}  \mathbb{S}_{{\mathrm{2}}} 
            \end{gather*}
            by \propref{env/replace/ord}.
            So, the goal follows by \propref{graph/replace/iso}, \propref{graph/replace/span}, \hypref{0.1}, and \hypref{0.2}.

            \item[$ \ottnt{b}  \notin  \Gamma_{{\mathrm{1}}} $] We have $ \ottnt{b}  \notin  \Gamma_{{\mathrm{2}}} $ by \propref{env/in/sup}.
            So, $  \Gamma_{{\mathrm{1}}} ^{- \ottnt{b} }  \ottsym{=} \Gamma_{{\mathrm{1}}} $ and $  \Gamma_{{\mathrm{2}}} ^{- \ottnt{b} }  \ottsym{=} \Gamma_{{\mathrm{2}}} $.
            Consequently, the goal directly follows by the assumption.
        \end{match}
    \end{match}
\end{prop}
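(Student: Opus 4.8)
The plan is to unfold the definition of the subcontext relation $\lesssim$ on both sides and reduce the claim to the corresponding facts about graph interpretations, all of which are already available as graph-level lemmas. Writing $ \lBrack \Gamma_1 \rBrack = \mathfrak{G}_1; \mathbb{S}_1$ and $ \lBrack \Gamma_2 \rBrack = \mathfrak{G}_2; \mathbb{S}_2$, the hypothesis $\Gamma_1 \lesssim \Gamma_2$ supplies intermediate graphs $\mathfrak{G}'_1, \mathfrak{G}'_2$ with $\mathfrak{G}_1 \simeq \mathfrak{G}'_1 \mathrel{<_\rightarrow} \mathfrak{G}'_2 \simeq \mathfrak{G}_2$ together with $\mathbb{S}_1 \supseteq \mathbb{S}_2$. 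The argument then splits on whether the removed binding $\ottnt{b}$ is unrestricted or ordered, mirroring the fact that $\Gamma^{-\ottnt{b}}$ acts on the set component in one case and on the graph component in the other.

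In the unrestricted case I would invoke \propref{env/remove/unr}, which says that removing an unrestricted $\ottnt{b}$ leaves the graph part untouched and merely deletes $\ottnt{b}$ from the set, so $ \lBrack \Gamma_1^{-\ottnt{b}} \rBrack = \mathfrak{G}_1; \mathbb{S}_1 \setminus \{\ottnt{b}\}$ and symmetrically for $\Gamma_2$. Since the graphs are unchanged the chain $\mathfrak{G}_1 \simeq \mathfrak{G}'_1 \mathrel{<_\rightarrow} \mathfrak{G}'_2 \simeq \mathfrak{G}_2$ still holds, and $\mathbb{S}_1 \supseteq \mathbb{S}_2$ immediately gives $\mathbb{S}_1 \setminus \{\ottnt{b}\} \supseteq \mathbb{S}_2 \setminus \{\ottnt{b}\}$, so the definition of $\lesssim$ is satisfied directly.

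The ordered case is where the real work lies. Here $\Gamma^{-\ottnt{b}}$ is the substitution $\Gamma[\cdot/\ottnt{b}]$, whose effect on the interpretation is governed by \propref{env/replace/ord}: every $\ottnt{b}$-labelled vertex is replaced by the empty graph $ \lBrack \cdot \rBrack$ while the set part is preserved, provided $\ottnt{b}$ occurs. I would therefore sub-split on whether $\ottnt{b} \in \Gamma_1$. If $\ottnt{b} \notin \Gamma_1$, then $\ottnt{b} \notin \Gamma_2$ by \propref{env/in/sup}, both removals are the identity, and the goal is exactly the hypothesis. If $\ottnt{b} \in \Gamma_1$, then $\ottnt{b} \in \Gamma_2$ by \propref{env/in/sub} (using $\ottkw{ord}\,\ottnt{b}$), so both interpretations undergo the replacement $\mathfrak{G}[ \lBrack \cdot \rBrack/\ottnt{b}]$. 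The key is that replacement is a congruence for both structural relations: \propref{graph/replace/iso} propagates each $\simeq$ and \propref{graph/replace/span} propagates the $\mathrel{<_\rightarrow}$ through the substitution, turning the chain into $\mathfrak{G}_1[\cdots] \simeq \mathfrak{G}'_1[\cdots] \mathrel{<_\rightarrow} \mathfrak{G}'_2[\cdots] \simeq \mathfrak{G}_2[\cdots]$; combined with the unchanged inclusion $\mathbb{S}_1 \supseteq \mathbb{S}_2$ this re-establishes $\lesssim$.

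The delicate point I expect to be the main obstacle is ensuring the replacement lemmas apply uniformly to the intermediate graphs $\mathfrak{G}'_1, \mathfrak{G}'_2$, not just to $\mathfrak{G}_1, \mathfrak{G}_2$. Because replacement is defined by \emph{label} rather than by position, \propref{graph/replace/iso} and \propref{graph/replace/span} commute with the whole $\simeq/\mathrel{<_\rightarrow}/\simeq$ chain without my having to track which vertices carry $\ottnt{b}$ at each stage; this is precisely why the companion result \propref{env/remove/equiv} succeeded, and I would reuse that template essentially verbatim, adding only the spanning half. I should also double-check that the ordered-membership transfer in the $\ottnt{b} \in \Gamma_1$ case genuinely uses \propref{env/in/sub} in the stated direction and not its converse, since for ordered bindings membership propagates only one way along $\lesssim$.
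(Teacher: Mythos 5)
Your proposal is correct and follows essentially the same route as the paper's proof: the same case split on $\ottkw{unr} \, \ottnt{b}$ versus $\ottkw{ord} \, \ottnt{b}$, the same sub-split on $ \ottnt{b}  \in  \Gamma_{{\mathrm{1}}} $ using \propref{env/in/sub} and \propref{env/in/sup} in exactly the directions the paper uses them, and the same reduction to \propref{env/remove/unr}, \propref{env/replace/ord}, \propref{graph/replace/iso}, and \propref{graph/replace/span}. Your closing concern about the congruence lemmas applying across the intermediate graphs $\mathfrak{G}'_{{\mathrm{1}}}, \mathfrak{G}'_{{\mathrm{2}}}$ is resolved exactly as you suspect, since replacement acts by label and both relations preserve labels.
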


\begin{prop}{env/replace/equiv}
    If $\Gamma_{{\mathrm{1}}}  \simeq  \Gamma_{{\mathrm{2}}}$ and $\ottkw{ord} \, \ottnt{T}$, then $\Gamma_{{\mathrm{1}}}  \ottsym{[}  \Gamma  \ottsym{/}  \ottmv{x}  \mathord:  \ottnt{T}  \ottsym{]}  \simeq  \Gamma_{{\mathrm{2}}}  \ottsym{[}  \Gamma  \ottsym{/}  \ottmv{x}  \mathord:  \ottnt{T}  \ottsym{]}$.

    \proof Let $ \mathfrak{G}_{{\mathrm{1}}}  \ottsym{;}  \mathbb{S}_{{\mathrm{1}}} \ottsym{=}  \lBrack \Gamma_{{\mathrm{1}}} \rBrack  $, $ \mathfrak{G}_{{\mathrm{2}}}  \ottsym{;}  \mathbb{S}_{{\mathrm{2}}} \ottsym{=}  \lBrack \Gamma_{{\mathrm{2}}} \rBrack  $, and $ \mathfrak{G}  \ottsym{;}  \mathbb{S} \ottsym{=}  \lBrack \Gamma \rBrack  $.
    Then, we have $\mathfrak{G}_{{\mathrm{1}}}  \simeq  \mathfrak{G}_{{\mathrm{2}}}$ and $ \mathbb{S}_{{\mathrm{1}}} \ottsym{=} \mathbb{S}_{{\mathrm{2}}} $ by definition.
    We consider if $ \ottmv{x}  \mathord:  \ottnt{T}  \in  \Gamma_{{\mathrm{1}}} $
    \begin{match}
        \item[$ \ottmv{x}  \mathord:  \ottnt{T}  \in  \Gamma_{{\mathrm{1}}} $]
        We have $ \ottmv{x}  \mathord:  \ottnt{T}  \in  \Gamma_{{\mathrm{2}}} $ by \propref{env/in/equiv}.
        So, $  \lBrack \Gamma_{{\mathrm{1}}}  \ottsym{[}  \Gamma  \ottsym{/}  \ottmv{x}  \mathord:  \ottnt{T}  \ottsym{]} \rBrack  \ottsym{=} \mathfrak{G}_{{\mathrm{1}}}  \ottsym{[}  \mathfrak{G}  \ottsym{/}  \ottmv{x}  \mathord:  \ottnt{T}  \ottsym{]}  \ottsym{;}   \mathbb{S}_{{\mathrm{1}}} \cup \mathbb{S}  $ and $  \lBrack \Gamma_{{\mathrm{2}}}  \ottsym{[}  \Gamma  \ottsym{/}  \ottmv{x}  \mathord:  \ottnt{T}  \ottsym{]} \rBrack  \ottsym{=} \mathfrak{G}_{{\mathrm{2}}}  \ottsym{[}  \mathfrak{G}  \ottsym{/}  \ottmv{x}  \mathord:  \ottnt{T}  \ottsym{]}  \ottsym{;}   \mathbb{S}_{{\mathrm{2}}} \cup \mathbb{S}  $ by \propref{env/replace/ord}.
        To this end, it suffices to show $\mathfrak{G}_{{\mathrm{1}}}  \ottsym{[}  \mathfrak{G}  \ottsym{/}  \ottmv{x}  \mathord:  \ottnt{T}  \ottsym{]}  \simeq  \mathfrak{G}_{{\mathrm{2}}}  \ottsym{[}  \mathfrak{G}  \ottsym{/}  \ottmv{x}  \mathord:  \ottnt{T}  \ottsym{]}$, which follows by \propref{graph/replace/iso}.

        \item[$ \ottmv{x}  \mathord:  \ottnt{T}  \notin  \Gamma_{{\mathrm{1}}} $]
        We have $ \ottmv{x}  \mathord:  \ottnt{T}  \notin  \Gamma_{{\mathrm{2}}} $ by \propref{env/in/equiv}.
        So, $ \Gamma_{{\mathrm{1}}}  \ottsym{[}  \Gamma  \ottsym{/}  \ottmv{x}  \mathord:  \ottnt{T}  \ottsym{]} \ottsym{=} \Gamma_{{\mathrm{1}}} $ and $ \Gamma_{{\mathrm{2}}}  \ottsym{[}  \Gamma  \ottsym{/}  \ottmv{x}  \mathord:  \ottnt{T}  \ottsym{]} \ottsym{=} \Gamma_{{\mathrm{2}}} $.
        Now the goal directly follows by the assumption.
    \end{match}



\end{prop}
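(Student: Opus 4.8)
The plan is to reduce the claim to the corresponding fact at the level of graph representations, where the replacement lemma \propref{graph/replace/iso} already does the heavy lifting. The entire purpose of the context interpretation $ \lBrack  \cdot  \rBrack $ is to turn $\simeq$ into graph isomorphism together with equality of the unrestricted-binding sets, so I would unfold both sides through their interpretations and then check the two components—graph part and set part—separately.

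First I would split on whether the binding $\ottmv{x}  \mathord:  \ottnt{T}$ actually occurs in $\Gamma_{{\mathrm{1}}}$. The easy case is $ \ottmv{x}  \mathord:  \ottnt{T}  \notin  \Gamma_{{\mathrm{1}}} $: membership is preserved under $\simeq$ by \propref{env/in/equiv}, so $ \ottmv{x}  \mathord:  \ottnt{T}  \notin  \Gamma_{{\mathrm{2}}} $ as well, both substitutions are the identity, and the goal is just the hypothesis $\Gamma_{{\mathrm{1}}}  \simeq  \Gamma_{{\mathrm{2}}}$. In the substantive case $ \ottmv{x}  \mathord:  \ottnt{T}  \in  \Gamma_{{\mathrm{1}}} $, \propref{env/in/equiv} again gives $ \ottmv{x}  \mathord:  \ottnt{T}  \in  \Gamma_{{\mathrm{2}}} $. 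Writing $ \lBrack \Gamma_{{\mathrm{1}}} \rBrack  = \mathfrak{G}_{{\mathrm{1}}}; \mathbb{S}_{{\mathrm{1}}}$, $ \lBrack \Gamma_{{\mathrm{2}}} \rBrack  = \mathfrak{G}_{{\mathrm{2}}}; \mathbb{S}_{{\mathrm{2}}}$, and $ \lBrack \Gamma \rBrack  = \mathfrak{G}; \mathbb{S}$, the hypothesis $\Gamma_{{\mathrm{1}}}  \simeq  \Gamma_{{\mathrm{2}}}$ unpacks to $\mathfrak{G}_{{\mathrm{1}}}  \simeq  \mathfrak{G}_{{\mathrm{2}}}$ and $\mathbb{S}_{{\mathrm{1}}} = \mathbb{S}_{{\mathrm{2}}}$. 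Since $\ottkw{ord} \, \ottnt{T}$—hence $\ottkw{ord} \, \ottsym{(}  \ottmv{x}  \mathord:  \ottnt{T}  \ottsym{)}$—I can invoke the interpretation-of-replacement lemma \propref{env/replace/ord} on each side to obtain $ \lBrack \Gamma_{{\mathrm{1}}}  \ottsym{[}  \Gamma  \ottsym{/}  \ottmv{x}  \mathord:  \ottnt{T}  \ottsym{]} \rBrack  = \mathfrak{G}_{{\mathrm{1}}}[\mathfrak{G}/\ottmv{x}  \mathord:  \ottnt{T}]; \mathbb{S}_{{\mathrm{1}}} \cup \mathbb{S}$ and likewise for $\Gamma_{{\mathrm{2}}}$. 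It then remains to match the two components: the graph part $\mathfrak{G}_{{\mathrm{1}}}[\mathfrak{G}/\ottmv{x}  \mathord:  \ottnt{T}]  \simeq  \mathfrak{G}_{{\mathrm{2}}}[\mathfrak{G}/\ottmv{x}  \mathord:  \ottnt{T}]$ follows immediately from $\mathfrak{G}_{{\mathrm{1}}}  \simeq  \mathfrak{G}_{{\mathrm{2}}}$ by \propref{graph/replace/iso}, and the set part $\mathbb{S}_{{\mathrm{1}}} \cup \mathbb{S} = \mathbb{S}_{{\mathrm{2}}} \cup \mathbb{S}$ is immediate from $\mathbb{S}_{{\mathrm{1}}} = \mathbb{S}_{{\mathrm{2}}}$. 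Unfolding the definition of $\simeq$ on contexts then yields the goal.

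The argument is essentially bookkeeping once the graph-level replacement lemma is in hand, so I do not expect a genuine obstacle. The single point requiring care is the ordered-ness side condition: \propref{env/replace/ord} describes the interpretation of a replacement cleanly only when the replaced binding is ordered, so that the graph component is governed by $\mathfrak{G}[\mathfrak{G}/\ottmv{x}  \mathord:  \ottnt{T}]$ and the set component is a plain union. This is exactly why $\ottkw{ord} \, \ottnt{T}$ is assumed and must be threaded through; without it the set side would no longer reduce to a clean union and the reasoning would have to branch on whether an unrestricted copy of the binding survives in $\Gamma$.
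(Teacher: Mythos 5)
Your proposal is correct and follows essentially the same route as the paper's proof: the same case split on $ \ottmv{x}  \mathord:  \ottnt{T}  \in  \Gamma_{{\mathrm{1}}} $, using \propref{env/in/equiv} to transfer membership, \propref{env/replace/ord} to compute the interpretations of the substituted contexts, and \propref{graph/replace/iso} for the graph component. The only cosmetic difference is that you spell out the set-part equality $ \mathbb{S}_{{\mathrm{1}}} \cup \mathbb{S} = \mathbb{S}_{{\mathrm{2}}} \cup \mathbb{S} $ explicitly, which the paper leaves implicit.
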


\begin{prop}{env/replace/runtime/sub}
    If $\Gamma_{{\mathrm{1}}}  \lesssim  \Gamma_{{\mathrm{2}}}$ and $\ottkw{ord} \, \ottnt{T}$, then $\Gamma_{{\mathrm{1}}}  \ottsym{[}  \ottnt{C}  \ottsym{/}  \ottmv{x}  \mathord:  \ottnt{T}  \ottsym{]}  \lesssim  \Gamma_{{\mathrm{2}}}  \ottsym{[}  \ottnt{C}  \ottsym{/}  \ottmv{x}  \mathord:  \ottnt{T}  \ottsym{]}$.

    \proof Let $ \mathfrak{G}_{{\mathrm{1}}}  \ottsym{;}  \mathbb{S}_{{\mathrm{1}}} \ottsym{=}  \lBrack \Gamma_{{\mathrm{1}}} \rBrack  $ and $ \mathfrak{G}_{{\mathrm{2}}}  \ottsym{;}  \mathbb{S}_{{\mathrm{2}}} \ottsym{=}  \lBrack \Gamma_{{\mathrm{2}}} \rBrack  $.
    By definition we have $\mathfrak{G}_{{\mathrm{1}}}  \simeq  \mathfrak{G}'_{{\mathrm{1}}} \mathrel{<_\rightarrow} \mathfrak{G}'_{{\mathrm{2}}}  \simeq  \mathfrak{G}_{{\mathrm{2}}}$ for some $\mathfrak{G}'_{{\mathrm{1}}}$ and $\mathfrak{G}'_{{\mathrm{2}}}$ and $ \mathbb{S}_{{\mathrm{1}}}  \supseteq  \mathbb{S}_{{\mathrm{2}}} $.
    We consider wether $ \ottmv{x}  \mathord:  \ottnt{T}  \in  \Gamma_{{\mathrm{1}}} $.
    \begin{match}
        \item[$ \ottmv{x}  \mathord:  \ottnt{T}  \in  \Gamma_{{\mathrm{1}}} $]
        We have $ \ottmv{x}  \mathord:  \ottnt{T}  \in  \Gamma_{{\mathrm{2}}} $ by \propref{env/in/sub}.
        So, we have
        \begin{gather*}
              \lBrack \Gamma_{{\mathrm{1}}}  \ottsym{[}  \ottnt{C}  \ottsym{/}  \ottmv{x}  \mathord:  \ottnt{T}  \ottsym{]} \rBrack  \ottsym{=} \mathfrak{G}_{{\mathrm{1}}}  \ottsym{[}   \lBrack \ottnt{C} \rBrack   \ottsym{/}  \ottmv{x}  \mathord:  \ottnt{T}  \ottsym{]}  \ottsym{;}  \mathbb{S}_{{\mathrm{1}}}  \\
              \lBrack \Gamma_{{\mathrm{2}}}  \ottsym{[}  \ottnt{C}  \ottsym{/}  \ottmv{x}  \mathord:  \ottnt{T}  \ottsym{]} \rBrack  \ottsym{=} \mathfrak{G}_{{\mathrm{2}}}  \ottsym{[}   \lBrack \ottnt{C} \rBrack   \ottsym{/}  \ottmv{x}  \mathord:  \ottnt{T}  \ottsym{]}  \ottsym{;}  \mathbb{S}_{{\mathrm{2}}} 
        \end{gather*}
        by \propref{env/replace/ord}.
        So, to this end, we get some $\mathfrak{G}''_{{\mathrm{1}}}$ and $\mathfrak{G}''_{{\mathrm{2}}}$ such that $\mathfrak{G}_{{\mathrm{1}}}  \ottsym{[}   \lBrack \ottnt{C} \rBrack   \ottsym{/}  \ottmv{x}  \mathord:  \ottnt{T}  \ottsym{]}  \simeq  \mathfrak{G}''_{{\mathrm{1}}} \mathrel{<_\rightarrow} \mathfrak{G}''_{{\mathrm{2}}}  \simeq  \mathfrak{G}_{{\mathrm{2}}}  \ottsym{[}   \lBrack \ottnt{C} \rBrack   \ottsym{/}  \ottmv{x}  \mathord:  \ottnt{T}  \ottsym{]}$.
        We choose $ \mathfrak{G}''_{{\mathrm{1}}} \ottsym{=} \mathfrak{G}'_{{\mathrm{1}}}  \ottsym{[}   \lBrack \ottnt{C} \rBrack   \ottsym{/}  \ottmv{x}  \mathord:  \ottnt{T}  \ottsym{]} $ and $ \mathfrak{G}''_{{\mathrm{2}}} \ottsym{=} \mathfrak{G}'_{{\mathrm{2}}}  \ottsym{[}   \lBrack \ottnt{C} \rBrack   \ottsym{/}  \ottmv{x}  \mathord:  \ottnt{T}  \ottsym{]} $.
        The condition follows by \propref{graph/replace/iso} and \propref{graph/replace/span}.

        \item[$ \ottmv{x}  \mathord:  \ottnt{T}  \notin  \Gamma_{{\mathrm{1}}} $]
        We have $ \ottmv{x}  \mathord:  \ottnt{T}  \notin  \Gamma_{{\mathrm{2}}} $ by \propref{env/in/sup}.
        So, the goal directly follows by the assumption.
    \end{match}

\end{prop}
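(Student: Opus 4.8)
The plan is to reduce the statement to the compositionality of the graph-replacement operation with respect to the two relations $\simeq$ and $\mathrel{<_\rightarrow}$ that together define $\lesssim$. First I would unfold the definition of the subcontext relation: writing $\lBrack \Gamma_{{\mathrm{1}}} \rBrack = \mathfrak{G}_{{\mathrm{1}}}; \mathbb{S}_{{\mathrm{1}}}$ and $\lBrack \Gamma_{{\mathrm{2}}} \rBrack = \mathfrak{G}_{{\mathrm{2}}}; \mathbb{S}_{{\mathrm{2}}}$, the hypothesis $\Gamma_{{\mathrm{1}}} \lesssim \Gamma_{{\mathrm{2}}}$ supplies intermediate graph representations $\mathfrak{G}'_{{\mathrm{1}}}, \mathfrak{G}'_{{\mathrm{2}}}$ with $\mathfrak{G}_{{\mathrm{1}}} \simeq \mathfrak{G}'_{{\mathrm{1}}} \mathrel{<_\rightarrow} \mathfrak{G}'_{{\mathrm{2}}} \simeq \mathfrak{G}_{{\mathrm{2}}}$ together with $\mathbb{S}_{{\mathrm{1}}} \supseteq \mathbb{S}_{{\mathrm{2}}}$. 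The goal is then to produce an analogous iso--span--iso chain, plus set inclusion, for the interpretations of $\Gamma_{{\mathrm{1}}}[C/x{:}T]$ and $\Gamma_{{\mathrm{2}}}[C/x{:}T]$.

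Second, I would case-split on whether the binding $x{:}T$ occurs in $\Gamma_{{\mathrm{1}}}$. If $x{:}T \notin \Gamma_{{\mathrm{1}}}$, then \propref{env/in/sup} gives $x{:}T \notin \Gamma_{{\mathrm{2}}}$ as well, so both replacements are the identity and the conclusion is literally the hypothesis. The interesting case is $x{:}T \in \Gamma_{{\mathrm{1}}}$: since $\ottkw{ord}\,T$, \propref{env/in/sub} yields $x{:}T \in \Gamma_{{\mathrm{2}}}$, so the replacement acts non-trivially on both sides. Here I would invoke \propref{env/replace/ord} to compute $\lBrack \Gamma_{{\mathrm{1}}}[C/x{:}T] \rBrack = \mathfrak{G}_{{\mathrm{1}}}[\lBrack C \rBrack / x{:}T]; \mathbb{S}_{{\mathrm{1}}}$ and likewise for $\Gamma_{{\mathrm{2}}}$ (the set part is unchanged because $\lBrack C \rBrack$ carries an empty set of unrestricted bindings, $C$ being a run-time context). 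It then suffices to transport the chain through the replacement by taking the witnesses $\mathfrak{G}'_{{\mathrm{1}}}[\lBrack C \rBrack/x{:}T]$ and $\mathfrak{G}'_{{\mathrm{2}}}[\lBrack C \rBrack/x{:}T]$ and applying \propref{graph/replace/iso} to the two isomorphism links and \propref{graph/replace/span} to the spanning link; the inclusion $\mathbb{S}_{{\mathrm{1}}} \supseteq \mathbb{S}_{{\mathrm{2}}}$ is preserved verbatim.

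The hard part will be ensuring that the graph-level replacement genuinely commutes with both $\simeq$ and $\mathrel{<_\rightarrow}$ — that is, the content already packaged in \propref{graph/replace/iso} and \propref{graph/replace/span}. The delicacy is that the witnesses $\mathfrak{G}'_{{\mathrm{1}}}, \mathfrak{G}'_{{\mathrm{2}}}$ are only isomorphic (not equal) to the literal interpretations, so the bijection realizing each $\simeq$ must descend to a bijection on the replaced graphs, and the edge-inclusion realizing $\mathrel{<_\rightarrow}$ must persist once every vertex labelled $x{:}T$ is expanded into the subgraph $\lBrack C \rBrack$ with its incoming and outgoing edges rewired. Granting those two lemmas, the present proposition is essentially bookkeeping around the definition of $\lesssim$ and the two membership lemmas used to handle the case split.
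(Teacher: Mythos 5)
Your proposal is correct and follows essentially the same route as the paper's own proof: unfold $\lesssim$ into the iso--span--iso chain plus set inclusion, case-split on $x{:}T \in \Gamma_1$ using \propref{env/in/sub} and \propref{env/in/sup}, compute the replaced interpretations via \propref{env/replace/ord}, and transport the chain through the replacement with \propref{graph/replace/iso} and \propref{graph/replace/span}, choosing exactly the witnesses $\mathfrak{G}'_1[\lBrack C \rBrack/x{:}T]$ and $\mathfrak{G}'_2[\lBrack C \rBrack/x{:}T]$. Your explicit remark that the set component is unchanged because a run-time context contributes no unrestricted bindings is a detail the paper leaves implicit, but otherwise the arguments coincide.
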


\subsection{Properties for runtime environment}

\begin{prop}{env/runtime/focus/null}
    If $ \ottmv{l}  \notin   \ottkw{dom} ( \ottnt{C} )  $, then $ \langle \ottnt{C} \rangle_{ \ottmv{l} }   \simeq   \cdot $.

    \proof Routine by structural induction on $\ottnt{C}$.
    We use \propref{env/equiv}.
\end{prop}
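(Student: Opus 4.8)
The plan is to prove the statement by structural induction on the run-time context $\ottnt{C}$, exploiting that focusing $ \langle \cdot \rangle_{ \ottmv{l} } $ acts homomorphically on the context formers and replaces each location binding $\ottmv{l'}  \mathord:  \ottsym{[}  \ottnt{m}  \ottsym{]}$ with $ \ottmv{l'} \neq \ottmv{l} $ by $ \cdot $. Since $\ottnt{C}$ is a run-time context, it carries no variable bindings, so only the cases $ \cdot $, $\ottmv{l'}  \mathord:  \ottsym{[}  \ottnt{m}  \ottsym{]}$, $\ottnt{C_{{\mathrm{1}}}}  \ottsym{,}  \ottnt{C_{{\mathrm{2}}}}$, and $\ottnt{C_{{\mathrm{1}}}}  \parallel  \ottnt{C_{{\mathrm{2}}}}$ arise.

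First I would handle the base cases. If $ \ottnt{C} \ottsym{=}  \cdot  $, then $  \langle \ottnt{C} \rangle_{ \ottmv{l} }  \ottsym{=}  \cdot  $ by definition and $ \cdot   \simeq   \cdot $ by reflexivity of $\simeq$, which follows from \propref{graph/iso/refl} together with the fact that the set parts of the two interpretations agree. If $ \ottnt{C} \ottsym{=} \ottmv{l'}  \mathord:  \ottsym{[}  \ottnt{m}  \ottsym{]} $, then the hypothesis $ \ottmv{l}  \notin   \ottkw{dom} ( \ottnt{C} )  $ forces $ \ottmv{l'} \neq \ottmv{l} $, so the definition of focus gives $  \langle \ottnt{C} \rangle_{ \ottmv{l} }  \ottsym{=}  \cdot  $ directly, and the goal is immediate.

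For the inductive cases $ \ottnt{C} \ottsym{=} \ottnt{C_{{\mathrm{1}}}}  \ottsym{,}  \ottnt{C_{{\mathrm{2}}}} $ and $ \ottnt{C} \ottsym{=} \ottnt{C_{{\mathrm{1}}}}  \parallel  \ottnt{C_{{\mathrm{2}}}} $, I would observe that $  \ottkw{dom} ( \ottnt{C} )  \ottsym{=}  \ottkw{dom} ( \ottnt{C_{{\mathrm{1}}}} )  \cup  \ottkw{dom} ( \ottnt{C_{{\mathrm{2}}}} )  $, so $\ottmv{l}$ lies in neither $ \ottkw{dom} ( \ottnt{C_{{\mathrm{1}}}} ) $ nor $ \ottkw{dom} ( \ottnt{C_{{\mathrm{2}}}} ) $; the induction hypothesis then yields $  \langle \ottnt{C_{{\mathrm{1}}}} \rangle_{ \ottmv{l} }   \simeq   \cdot $ and $  \langle \ottnt{C_{{\mathrm{2}}}} \rangle_{ \ottmv{l} }   \simeq   \cdot $. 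Because focusing distributes over the context formers, $  \langle \ottnt{C_{{\mathrm{1}}}}  \ottsym{,}  \ottnt{C_{{\mathrm{2}}}} \rangle_{ \ottmv{l} }  \ottsym{=}  \langle \ottnt{C_{{\mathrm{1}}}} \rangle_{ \ottmv{l} }   \ottsym{,}   \langle \ottnt{C_{{\mathrm{2}}}} \rangle_{ \ottmv{l} }  $ (and analogously for $\parallel$). Plugging the first induction hypothesis into the hole of the context pattern $\mathcal{G} = \ottsym{[]}  \ottsym{,}   \langle \ottnt{C_{{\mathrm{2}}}} \rangle_{ \ottmv{l} } $ and invoking congruence of $\simeq$ (\propref{env/equiv/ctx}) gives $ \langle \ottnt{C_{{\mathrm{1}}}} \rangle_{ \ottmv{l} }   \ottsym{,}   \langle \ottnt{C_{{\mathrm{2}}}} \rangle_{ \ottmv{l} }   \simeq   \cdot   \ottsym{,}   \langle \ottnt{C_{{\mathrm{2}}}} \rangle_{ \ottmv{l} } $; the identity law $ \cdot   \ottsym{,}  \Gamma  \simeq  \Gamma$ from \propref{env/equiv} collapses the right-hand side to $ \langle \ottnt{C_{{\mathrm{2}}}} \rangle_{ \ottmv{l} } $, and a final application of the second induction hypothesis with transitivity of $\simeq$ closes the case. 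The $\parallel$ case is verbatim, using the parallel identity law $ \cdot   \parallel  \Gamma  \simeq  \Gamma$ of \propref{env/equiv} instead.

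I do not expect a serious obstacle: the argument is a routine structural induction. The only point worth checking carefully is that $\simeq$ is a genuine congruence for the binary context formers, which is exactly what \propref{env/equiv/ctx} together with the monoid identity laws of \propref{env/equiv} supply; the rest follows immediately from the definition of focus. The mild bookkeeping is in chaining the several equivalences through transitivity, and this is the step most likely to produce clutter rather than difficulty.
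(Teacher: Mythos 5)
Your proof is correct and follows essentially the same route as the paper: structural induction on the run-time context, with the identity laws of \propref{env/equiv} collapsing the focused components, plus congruence (\propref{env/equiv/ctx}) and transitivity to chain the equivalences — details the paper's one-line proof leaves implicit. Nothing further is needed.
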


\begin{prop}{env/runtime/focus/equiv}
    If $\ottnt{C_{{\mathrm{1}}}}  \simeq  \ottnt{C_{{\mathrm{2}}}}$, then $ \langle \ottnt{C_{{\mathrm{1}}}} \rangle_{ \ottmv{l} }   \simeq   \langle \ottnt{C_{{\mathrm{2}}}} \rangle_{ \ottmv{l} } $.

    \proof By mathematical induction on the number of elements in $ \ottkw{dom} ( \ottnt{C_{{\mathrm{1}}}} ) $.
    We use \propref{env/remove/equiv} in the induction step.
\end{prop}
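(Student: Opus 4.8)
The plan is to realize the focus $ \langle \ottnt{C} \rangle_{ \ottmv{l} } $ as a finite sequence of single-binding removals $ \Gamma ^{- \ottnt{b} } $ and to transport the hypothesis $\ottnt{C_{{\mathrm{1}}}}  \simeq  \ottnt{C_{{\mathrm{2}}}}$ through each removal with \propref{env/remove/equiv}. By its definition, $ \langle \ottnt{C} \rangle_{ \ottmv{l} } $ replaces every binding $\ottmv{l'}  \mathord:  \ottsym{[}  \ottnt{m}  \ottsym{]}$ with $\ottmv{l'} \neq \ottmv{l}$ by $ \cdot $, which is precisely the cumulative effect of applying $ (-) ^{- \ottnt{b} } $ once for each distinct such binding $\ottnt{b}$ occurring in $\ottnt{C}$. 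Two observations make this usable: removing a binding at a location other than $\ottmv{l}$ before focusing leaves the focus unchanged, i.e.\ $ \langle \ottnt{C} \rangle_{ \ottmv{l} }  =  \langle  \ottnt{C} ^{- \ottnt{b} }  \rangle_{ \ottmv{l} } $ for such $\ottnt{b}$; and, by \propref{env/in/equiv}, $\ottnt{C_{{\mathrm{1}}}}$ and $\ottnt{C_{{\mathrm{2}}}}$ contain exactly the same bindings, so the set to be removed is common to both.

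I would run the induction on $| \ottkw{dom} ( \ottnt{C_{{\mathrm{1}}}} ) |$ and split the step into three cases. If $ \ottmv{l}  \notin   \ottkw{dom} ( \ottnt{C_{{\mathrm{1}}}} ) $ (which subsumes the empty-domain base case), then \propref{env/runtime/focus/null} gives $ \langle \ottnt{C_{{\mathrm{1}}}} \rangle_{ \ottmv{l} }   \simeq   \cdot $; \propref{env/equiv/dom} yields $ \ottmv{l}  \notin   \ottkw{dom} ( \ottnt{C_{{\mathrm{2}}}} ) $, hence also $ \langle \ottnt{C_{{\mathrm{2}}}} \rangle_{ \ottmv{l} }   \simeq   \cdot $, and symmetry and transitivity of $\simeq$ close the case. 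If $ \ottmv{l}  \in   \ottkw{dom} ( \ottnt{C_{{\mathrm{1}}}} ) $ but no other location occurs, the focus removes nothing and the goal is just the hypothesis. Otherwise I choose $\ottmv{l'} \in  \ottkw{dom} ( \ottnt{C_{{\mathrm{1}}}} ) $ with $\ottmv{l'} \neq \ottmv{l}$ and strip every binding $\ottmv{l'}  \mathord:  \ottsym{[}  \ottnt{m}  \ottsym{]}$ from both contexts by repeated use of \propref{env/remove/equiv} (one application per distinct usage $\ottnt{m}$ that $\ottmv{l'}$ carries), producing $\ottnt{C'_{{\mathrm{1}}}}  \simeq  \ottnt{C'_{{\mathrm{2}}}}$ with $\ottmv{l'}$ absent from the domain and with $ \langle \ottnt{C'_{{\mathrm{1}}}} \rangle_{ \ottmv{l} }  =  \langle \ottnt{C_{{\mathrm{1}}}} \rangle_{ \ottmv{l} } $ and $ \langle \ottnt{C'_{{\mathrm{2}}}} \rangle_{ \ottmv{l} }  =  \langle \ottnt{C_{{\mathrm{2}}}} \rangle_{ \ottmv{l} } $. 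Since $| \ottkw{dom} ( \ottnt{C'_{{\mathrm{1}}}} ) | < | \ottkw{dom} ( \ottnt{C_{{\mathrm{1}}}} ) |$, the induction hypothesis gives $ \langle \ottnt{C'_{{\mathrm{1}}}} \rangle_{ \ottmv{l} }   \simeq   \langle \ottnt{C'_{{\mathrm{2}}}} \rangle_{ \ottmv{l} } $, which is the goal.

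The hard part is the bookkeeping behind \emph{strip every binding at $\ottmv{l'}$}. Because a run-time context may hold several bindings $\ottmv{l'}  \mathord:  \ottsym{[}  \ottnt{m}  \ottsym{]}$ for one location with different usages $\ottnt{m}$ — these encode the split reference counts constrained by the heap typing — a location does not correspond to a single binding, so one invocation of \propref{env/remove/equiv} cannot clear it. I therefore need the identity expressing $ \langle \ottnt{C} \rangle_{ \ottmv{l} } $ as the composition of $ (-) ^{- \ottnt{b} } $ over the finite set of distinct bindings at locations other than $\ottmv{l}$, together with the fact (from \propref{env/in/equiv}) that this set is identical for $\ottnt{C_{{\mathrm{1}}}}$ and $\ottnt{C_{{\mathrm{2}}}}$, so that the same removals yield both focuses. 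Once this is in place, no graph-theoretic argument remains: \propref{env/remove/equiv} already absorbs the interaction of removal with the graph interpretation, and the equivalence simply rides along each removal step.
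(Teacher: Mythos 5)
Your proposal is correct and follows essentially the same route as the paper: induction on the number of elements in $ \ottkw{dom} ( \ottnt{C_{{\mathrm{1}}}} ) $, discharging the induction step by transporting the equivalence through binding removals via \propref{env/remove/equiv}. The additional bookkeeping you spell out (several bindings $\ottmv{l'}  \mathord:  \ottsym{[}  \ottnt{m}  \ottsym{]}$ per location requiring one removal per distinct usage, the commutation of removal with focusing, and \propref{env/in/equiv} to align the removal sets) is exactly the detail the paper's one-line proof leaves implicit.
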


\begin{prop}{env/runtime/focus/sub}
    If $\ottnt{C_{{\mathrm{1}}}}  \lesssim  \ottnt{C_{{\mathrm{2}}}}$, then $ \langle \ottnt{C_{{\mathrm{1}}}} \rangle_{ \ottmv{l} }   \lesssim   \langle \ottnt{C_{{\mathrm{2}}}} \rangle_{ \ottmv{l} } $.

    \proof By mathematical induction on the number of elements in $ \ottkw{dom} ( \ottnt{C_{{\mathrm{1}}}} ) $.
    We use \propref{env/remove/sub} in the induction step.
\end{prop}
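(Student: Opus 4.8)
The plan is to mirror the proof of \propref{env/runtime/focus/equiv}, replacing the appeal to \propref{env/remove/equiv} with \propref{env/remove/sub}. The guiding observation is that the focus $ \langle \ottnt{C} \rangle_{ \ottmv{l} } $ is obtained from $\ottnt{C}$ by deleting every binding whose location differs from $\ottmv{l}$, and that each such deletion is an instance of the single-binding removal $ \ottnt{C} ^{- \ottnt{b} } $. Hence the argument proceeds by induction on the number of \emph{foreign} locations in $\ottnt{C_{{\mathrm{1}}}}$, i.e.\ on $| \ottkw{dom} ( \ottnt{C_{{\mathrm{1}}}} )  \setminus \{\ottmv{l}\}|$ (equivalently, on $| \ottkw{dom} ( \ottnt{C_{{\mathrm{1}}}} ) |$, as phrased for \propref{env/runtime/focus/equiv}).

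For the base case, suppose $\ottnt{C_{{\mathrm{1}}}}$ binds no location other than $\ottmv{l}$, so that $ \langle \ottnt{C_{{\mathrm{1}}}} \rangle_{ \ottmv{l} } $ leaves $\ottnt{C_{{\mathrm{1}}}}$ unchanged. Since isomorphism and spanning preserve the multiset of vertex labels (\propref{graph/iso/num/bindings} and \propref{graph/span/num/bindings}) and the set part of a run-time interpretation is always empty, $\ottnt{C_{{\mathrm{1}}}}$ and $\ottnt{C_{{\mathrm{2}}}}$ carry exactly the same bindings; in particular $\ottnt{C_{{\mathrm{2}}}}$ also has no foreign binding, whence $ \langle \ottnt{C_{{\mathrm{2}}}} \rangle_{ \ottmv{l} }  = \ottnt{C_{{\mathrm{2}}}}$. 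The goal then reduces to the hypothesis $\ottnt{C_{{\mathrm{1}}}}  \lesssim  \ottnt{C_{{\mathrm{2}}}}$.

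For the inductive step, pick a location $\ottmv{l'} \neq \ottmv{l}$ bound in $\ottnt{C_{{\mathrm{1}}}}$, let $[\ottnt{m'_{{\mathrm{1}}}}], \dots, [\ottnt{m'_{\ottmv{k}}}]$ be the finitely many types under which $\ottmv{l'}$ occurs in $\ottnt{C_{{\mathrm{1}}}}$, and set $\ottnt{b_{\ottmv{i}}} = \ottmv{l'}  \mathord:  [\ottnt{m'_{\ottmv{i}}}]$. Applying \propref{env/remove/sub} once for each $\ottnt{b_{\ottmv{i}}}$ yields $\ottnt{C'_{{\mathrm{1}}}}  \lesssim  \ottnt{C'_{{\mathrm{2}}}}$, where $\ottnt{C'_{{\mathrm{1}}}}$ and $\ottnt{C'_{{\mathrm{2}}}}$ are $\ottnt{C_{{\mathrm{1}}}}$ and $\ottnt{C_{{\mathrm{2}}}}$ with all those bindings deleted. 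Now $\ottmv{l'} \notin  \ottkw{dom} ( \ottnt{C'_{{\mathrm{1}}}} ) $, so the induction hypothesis gives $ \langle \ottnt{C'_{{\mathrm{1}}}} \rangle_{ \ottmv{l} }   \lesssim   \langle \ottnt{C'_{{\mathrm{2}}}} \rangle_{ \ottmv{l} } $. It remains to note that pre-deleting bindings of a \emph{foreign} location does not alter the focus on $\ottmv{l}$: both $ \langle \ottnt{C_{{\mathrm{1}}}} \rangle_{ \ottmv{l} }  =  \langle \ottnt{C'_{{\mathrm{1}}}} \rangle_{ \ottmv{l} } $ and $ \langle \ottnt{C_{{\mathrm{2}}}} \rangle_{ \ottmv{l} }  =  \langle \ottnt{C'_{{\mathrm{2}}}} \rangle_{ \ottmv{l} } $ hold as syntactic identities, because the focus sends every foreign binding to $ \cdot $ whether or not it was already removed, which closes the step.

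The step I expect to be most delicate is the bookkeeping that turns ``replace all foreign bindings by $ \cdot $'' into a finite composition of single-binding removals while keeping the induction measure strictly decreasing. A location may be bound several times with distinct types (this is precisely how reference counts are recorded, cf.\ the heap-typing condition $ |  \lBrack  \langle \ottnt{C} \rangle_{ \ottmv{l} }  \rBrack  |_{\bullet}  = n + 1$), so removing a single foreign location already requires iterating \propref{env/remove/sub}; one must also confirm, by unfolding the definitions of the focus and of $ \ottnt{C} ^{- \ottnt{b} } $, that each removal commutes with the focus in the sense used above. None of this is conceptually hard, but it is where the two operations must be compared carefully.
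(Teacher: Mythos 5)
Your proof is correct and takes essentially the same route as the paper's: induction on the size of $ \ottkw{dom} ( \ottnt{C_{{\mathrm{1}}}} ) $ (your count of foreign locations is an equivalent measure), with \propref{env/remove/sub} applied in the induction step. The extra bookkeeping you make explicit---iterating the removal over the finitely many binding types of one foreign location, and checking that such removals commute syntactically with the focus---is precisely what the paper's two-line sketch leaves implicit.
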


\begin{prop}{env/runtime/intrp/order/ctx}
    $  |  \lBrack \mathcal{C}  \ottsym{[}  \ottnt{C}  \ottsym{]} \rBrack  |_{\bullet}  \ottsym{=}  |  \lBrack \mathcal{C}  \ottsym{[}   \cdot   \ottsym{]} \rBrack  |_{\bullet}   \ottsym{+}   |  \lBrack \ottnt{C} \rBrack  |_{\bullet}  $.

    \proof By structural induction on $\mathcal{C}$.
    \begin{match}
        \item[$ \mathcal{C} \ottsym{=} \ottsym{[]} $]
        In this case,
        \begin{align*}
             |  \lBrack \mathcal{C}  \ottsym{[}  \ottnt{C}  \ottsym{]} \rBrack  |_{\bullet}                    & =  |  \lBrack \ottnt{C} \rBrack  |_{\bullet}  \\
             |  \lBrack \mathcal{C}  \ottsym{[}   \cdot   \ottsym{]} \rBrack  |_{\bullet}   \ottsym{+}   |  \lBrack \ottnt{C} \rBrack  |_{\bullet}  & =  |  \lBrack \ottnt{C} \rBrack  |_{\bullet} 
        \end{align*}
        by definition.
        So, we can see the goal holds.

        \item[$ \mathcal{C} \ottsym{=} \mathcal{C}_{{\mathrm{1}}}  \ottsym{,}  \ottnt{C_{{\mathrm{2}}}} $]
        In this case,
        \begin{align*}
             |  \lBrack \mathcal{C}  \ottsym{[}  \ottnt{C}  \ottsym{]} \rBrack  |_{\bullet}                    & =  |  \lBrack \mathcal{C}_{{\mathrm{1}}}  \ottsym{[}  \ottnt{C}  \ottsym{]} \rBrack  |_{\bullet}   \ottsym{+}   |  \lBrack \ottnt{C_{{\mathrm{2}}}} \rBrack  |_{\bullet}                    \\
             |  \lBrack \mathcal{C}  \ottsym{[}   \cdot   \ottsym{]} \rBrack  |_{\bullet}   \ottsym{+}   |  \lBrack \ottnt{C} \rBrack  |_{\bullet}  & =  |  \lBrack \mathcal{C}_{{\mathrm{1}}}  \ottsym{[}   \cdot   \ottsym{]} \rBrack  |_{\bullet}   \ottsym{+}   |  \lBrack \ottnt{C_{{\mathrm{2}}}} \rBrack  |_{\bullet}   \ottsym{+}   |  \lBrack \ottnt{C} \rBrack  |_{\bullet} 
        \end{align*}
        by definition.
        Now, we have $  |  \lBrack \mathcal{C}_{{\mathrm{1}}}  \ottsym{[}  \ottnt{C}  \ottsym{]} \rBrack  |_{\bullet}  \ottsym{=}  |  \lBrack \mathcal{C}_{{\mathrm{1}}}  \ottsym{[}   \cdot   \ottsym{]} \rBrack  |_{\bullet}   \ottsym{+}   |  \lBrack \ottnt{C} \rBrack  |_{\bullet}  $ by the induction hypothesis, by which we can have the goal.

        \item[$ \mathcal{C} \ottsym{=} \ottnt{C_{{\mathrm{1}}}}  \ottsym{,}  \mathcal{C}_{{\mathrm{2}}} $, $ \mathcal{C} \ottsym{=} \mathcal{C}_{{\mathrm{1}}}  \parallel  \ottnt{C_{{\mathrm{2}}}} $, and $ \mathcal{C} \ottsym{=} \ottnt{C_{{\mathrm{1}}}}  \parallel  \mathcal{C}_{{\mathrm{2}}} $]
        These cases are similar to the case $ \mathcal{C} \ottsym{=} \mathcal{C}_{{\mathrm{1}}}  \ottsym{,}  \ottnt{C_{{\mathrm{2}}}} $. \qedhere
    \end{match}
\end{prop}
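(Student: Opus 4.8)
The plan is to prove the identity by structural induction on the run-time context pattern $\mathcal{C}$, exploiting the fact that the vertex count $ | \cdot |_{\bullet} $ is additive over both graph operations used in the context interpretation. First I would record (or invoke) the two additivity facts $ |  \GJOIN{ \mathfrak{G}_{{\mathrm{1}}} }{ \mathfrak{G}_{{\mathrm{2}}} }  |_{\bullet}  =  | \mathfrak{G}_{{\mathrm{1}}} |_{\bullet}  +  | \mathfrak{G}_{{\mathrm{2}}} |_{\bullet} $ and $ | \mathfrak{G}_{{\mathrm{1}}}  \cup  \mathfrak{G}_{{\mathrm{2}}} |_{\bullet}  =  | \mathfrak{G}_{{\mathrm{1}}} |_{\bullet}  +  | \mathfrak{G}_{{\mathrm{2}}} |_{\bullet} $, both immediate from the definitions in \Cref{fig:language:graph-join-union}, since join and union each produce a graph on $n_1 + n_2$ vertices. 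Because the interpretation sends $\Gamma_{{\mathrm{1}}}  \ottsym{,}  \Gamma_{{\mathrm{2}}}$ to the join and $\Gamma_{{\mathrm{1}}}  \parallel  \Gamma_{{\mathrm{2}}}$ to the union, the vertex count of an interpreted context is additive over both context formers.

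For the base case $ \mathcal{C} \ottsym{=} \ottsym{[]} $, we have $\mathcal{C}  \ottsym{[}  \ottnt{C}  \ottsym{]} = \ottnt{C}$ and $\mathcal{C}  \ottsym{[}   \cdot   \ottsym{]} =  \cdot $, whose interpretation is the empty graph with $ |  \lBrack  \cdot  \rBrack  |_{\bullet}  = 0$, so the equation reduces to $ |  \lBrack \ottnt{C} \rBrack  |_{\bullet}  = 0 +  |  \lBrack \ottnt{C} \rBrack  |_{\bullet} $. For each of the four binary formers --- $\mathcal{C}_{{\mathrm{1}}}  \ottsym{,}  \ottnt{C_{{\mathrm{2}}}}$, $\ottnt{C_{{\mathrm{1}}}}  \ottsym{,}  \mathcal{C}_{{\mathrm{2}}}$, $\mathcal{C}_{{\mathrm{1}}}  \parallel  \ottnt{C_{{\mathrm{2}}}}$, and $\ottnt{C_{{\mathrm{1}}}}  \parallel  \mathcal{C}_{{\mathrm{2}}}$ --- I would unfold the interpretation, apply the relevant additivity fact to peel off the fixed subcontext, and then invoke the induction hypothesis on the subpattern carrying the hole. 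Concretely, in the case $ \mathcal{C} \ottsym{=} \mathcal{C}_{{\mathrm{1}}}  \ottsym{,}  \ottnt{C_{{\mathrm{2}}}} $ additivity over join gives $ |  \lBrack \mathcal{C}  \ottsym{[}  \ottnt{C}  \ottsym{]} \rBrack  |_{\bullet}  =  |  \lBrack \mathcal{C}_{{\mathrm{1}}}  \ottsym{[}  \ottnt{C}  \ottsym{]} \rBrack  |_{\bullet}  +  |  \lBrack \ottnt{C_{{\mathrm{2}}}} \rBrack  |_{\bullet} $, the induction hypothesis rewrites the first summand as $ |  \lBrack \mathcal{C}_{{\mathrm{1}}}  \ottsym{[}   \cdot   \ottsym{]} \rBrack  |_{\bullet}  +  |  \lBrack \ottnt{C} \rBrack  |_{\bullet} $, and reassembling matches the target right-hand side $ |  \lBrack \mathcal{C}  \ottsym{[}   \cdot   \ottsym{]} \rBrack  |_{\bullet}  +  |  \lBrack \ottnt{C} \rBrack  |_{\bullet} $; the other three cases are symmetric, using union in place of join where appropriate.

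I do not expect a genuine obstacle here: the argument is a routine bookkeeping induction. The only points requiring mild care are (i) making sure the join/union additivity for $ | \cdot |_{\bullet} $ is available as a named lemma before the induction, rather than re-deriving it inline in every case, and (ii) keeping straight in each inductive case which of the two subcontexts is fixed and which contains the hole, so that the induction hypothesis is applied to the correct subpattern. The restriction of $\mathcal{C}$ to run-time context patterns (no variable bindings) plays no role in the vertex-count computation, so it can be ignored throughout.
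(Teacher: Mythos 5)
Your proposal is correct and follows essentially the same route as the paper: structural induction on $\mathcal{C}$, unfolding the interpretation (the paper's ``by definition'' step is exactly your join/union additivity of $|\cdot|_{\bullet}$), and applying the induction hypothesis to the subpattern containing the hole in each binary case.
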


\begin{prop}{env/runtime/intrp/traceable/ctx}
    If $ \lBrack \mathcal{C}  \ottsym{[}  \ottnt{C}  \ottsym{]} \rBrack $ is traceable, then $ \lBrack \ottnt{C} \rBrack $ is traceable.

    \proof By structural induction on $\mathcal{C}$.
    \begin{match}
        \item[$ \mathcal{C} \ottsym{=} \ottsym{[]} $]
        This case follows by the assumption.

        \item[$ \mathcal{C} \ottsym{=} \mathcal{C}_{{\mathrm{1}}}  \ottsym{,}  \ottnt{C_{{\mathrm{2}}}} $]
        In this case
        \begin{gather*}
             \GJOIN{  \lBrack \mathcal{C}_{{\mathrm{1}}}  \ottsym{[}  \ottnt{C}  \ottsym{]} \rBrack  }{  \lBrack \ottnt{C_{{\mathrm{2}}}} \rBrack  } 
        \end{gather*}
        is traceable.
        Then, we know $ \lBrack \mathcal{C}_{{\mathrm{1}}}  \ottsym{[}  \ottnt{C}  \ottsym{]} \rBrack $ is traceable by \propref{graph/traceable/join}.
        Now, we can have the goal by the induction hypothesis.

        \item[$ \mathcal{C} \ottsym{=} \ottnt{C_{{\mathrm{1}}}}  \ottsym{,}  \mathcal{C}_{{\mathrm{2}}} $] Similar to the case $ \mathcal{C} \ottsym{=} \mathcal{C}_{{\mathrm{1}}}  \ottsym{,}  \ottnt{C_{{\mathrm{2}}}} $.
        \item[$ \mathcal{C} \ottsym{=} \mathcal{C}_{{\mathrm{1}}}  \parallel  \ottnt{C_{{\mathrm{2}}}} $]
        In this case
        \begin{gather*}
             \lBrack \mathcal{C}_{{\mathrm{1}}}  \ottsym{[}  \ottnt{C}  \ottsym{]} \rBrack   \cup   \lBrack \ottnt{C_{{\mathrm{2}}}} \rBrack 
        \end{gather*}
        is traceable.
        So, we have the following cases by \propref{graph/traceable/union}.
        \begin{match}
            \item[$  |  \lBrack \mathcal{C}_{{\mathrm{1}}}  \ottsym{[}  \ottnt{C}  \ottsym{]} \rBrack  |_{\bullet}  \ottsym{=} \ottsym{0} $]
            We have $  |  \lBrack \ottnt{C} \rBrack  |_{\bullet}  \ottsym{=} \ottsym{0} $ by \propref{env/runtime/intrp/order/ctx}.
            So, $  \lBrack \ottnt{C} \rBrack  \ottsym{=}  \lBrack  \cdot  \rBrack  $ by definition, which is trivially traceable.

            \item[$ \lBrack \mathcal{C}_{{\mathrm{1}}}  \ottsym{[}  \ottnt{C}  \ottsym{]} \rBrack $ is traceable]
            We can have the goal by the induction hypothsis.
        \end{match}

        \item[$ \mathcal{C} \ottsym{=} \ottnt{C_{{\mathrm{1}}}}  \parallel  \mathcal{C}_{{\mathrm{2}}} $] Similar to the case $ \mathcal{C} \ottsym{=} \mathcal{C}_{{\mathrm{1}}}  \parallel  \ottnt{C_{{\mathrm{2}}}} $. \qedhere
    \end{match}
\end{prop}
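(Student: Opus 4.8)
The plan is to prove the statement by structural induction on the context pattern $\mathcal{C}$, recalling that a graph representation is \emph{traceable} precisely when its topological ordering is unique (so that the null graph, having the empty ordering as its only one, is automatically traceable). The induction will lean on the two compositional characterizations of traceability already established at the graph level, namely \propref{graph/traceable/join} for the sequential former and \propref{graph/traceable/union} for the parallel former, together with the vertex-counting identity \propref{env/runtime/intrp/order/ctx}.

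In the base case $\mathcal{C} = \ottsym{[]}$ we have $\lBrack \mathcal{C}  \ottsym{[}  \ottnt{C}  \ottsym{]} \rBrack = \lBrack \ottnt{C} \rBrack$, so the conclusion coincides with the hypothesis. For the sequential cases $\mathcal{C} = \mathcal{C}_1  \ottsym{,}  \ottnt{C}_2$ and $\mathcal{C} = \ottnt{C}_1  \ottsym{,}  \mathcal{C}_2$, the interpretation decomposes as a join, e.g.\ $\lBrack \mathcal{C}  \ottsym{[}  \ottnt{C}  \ottsym{]} \rBrack = \GJOIN{ \lBrack \mathcal{C}_1  \ottsym{[}  \ottnt{C}  \ottsym{]} \rBrack }{ \lBrack \ottnt{C}_2 \rBrack }$. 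Since the whole join is traceable, \propref{graph/traceable/join} gives that the hole-containing component $\lBrack \mathcal{C}_1  \ottsym{[}  \ottnt{C}  \ottsym{]} \rBrack$ is traceable, and the induction hypothesis then yields traceability of $\lBrack \ottnt{C} \rBrack$.

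The parallel cases $\mathcal{C} = \mathcal{C}_1  \parallel  \ottnt{C}_2$ and $\mathcal{C} = \ottnt{C}_1  \parallel  \mathcal{C}_2$ are where the real work lies, and I expect them to be the main obstacle: traceability does not distribute over $\cup$ the way it does over the join. Here the interpretation is a union $\lBrack \mathcal{C}_1  \ottsym{[}  \ottnt{C}  \ottsym{]} \rBrack \cup \lBrack \ottnt{C}_2 \rBrack$, and \propref{graph/traceable/union} forces a disjunction: either the hole-containing component $\lBrack \mathcal{C}_1  \ottsym{[}  \ottnt{C}  \ottsym{]} \rBrack$ is itself traceable, in which case the induction hypothesis closes the case, or that component has no ordered vertices, i.e.\ $| \lBrack \mathcal{C}_1  \ottsym{[}  \ottnt{C}  \ottsym{]} \rBrack |_{\bullet} = 0$.

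In the latter situation I would invoke the counting identity \propref{env/runtime/intrp/order/ctx}, which splits $| \lBrack \mathcal{C}_1  \ottsym{[}  \ottnt{C}  \ottsym{]} \rBrack |_{\bullet}$ into $| \lBrack \mathcal{C}_1  \ottsym{[}   \cdot   \ottsym{]} \rBrack |_{\bullet} + | \lBrack \ottnt{C} \rBrack |_{\bullet}$; since the sum is zero, both summands vanish, so $| \lBrack \ottnt{C} \rBrack |_{\bullet} = 0$, whence $\lBrack \ottnt{C} \rBrack$ is the null graph and is trivially traceable. The remaining parallel case is symmetric. The only subtlety to watch is applying the union lemma to the correct summand (the one holding the hole) and routing the degenerate ``zero ordered vertices'' branch through the counting lemma rather than the induction hypothesis, since in that branch the sub-context need not itself be traceable on the nose — it is merely its vertex count that collapses.
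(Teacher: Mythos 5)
Your proposal is correct and follows essentially the same route as the paper's proof: structural induction on $\mathcal{C}$, using \propref{graph/traceable/join} together with the induction hypothesis in the sequential cases, and in the parallel cases splitting via \propref{graph/traceable/union} into the branch handled by the induction hypothesis and the degenerate branch resolved through the counting identity \propref{env/runtime/intrp/order/ctx}. The one subtlety you flag --- routing the zero-vertex branch through the counting lemma rather than the induction hypothesis --- is exactly how the paper handles it as well.
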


\begin{prop}{env/runtime/intrp/traceable/equiv}
    Suppose $\ottnt{C_{{\mathrm{1}}}}  \simeq  \ottnt{C_{{\mathrm{2}}}}$.
    Then, $ \lBrack \ottnt{C_{{\mathrm{1}}}} \rBrack $ is traceable if and only if $ \lBrack \ottnt{C_{{\mathrm{2}}}} \rBrack $ is traceable.

    \proof This is a corollary from \propref{graph/traceable/iso}.


\end{prop}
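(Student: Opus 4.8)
The plan is to reduce this biconditional to the graph-level statement \propref{graph/traceable/iso}, which already guarantees that traceability is preserved under graph isomorphism. First I would unfold the definition of context equivalence: $\ottnt{C_{{\mathrm{1}}}}  \simeq  \ottnt{C_{{\mathrm{2}}}}$ means that, writing $ \lBrack \ottnt{C_{{\mathrm{1}}}} \rBrack  \ottsym{=} \mathfrak{G}_{{\mathrm{1}}}  \ottsym{;}  \mathbb{S}_{{\mathrm{1}}}$ and $ \lBrack \ottnt{C_{{\mathrm{2}}}} \rBrack  \ottsym{=} \mathfrak{G}_{{\mathrm{2}}}  \ottsym{;}  \mathbb{S}_{{\mathrm{2}}}$, we have both $\mathfrak{G}_{{\mathrm{1}}}  \simeq  \mathfrak{G}_{{\mathrm{2}}}$ and $ \mathbb{S}_{{\mathrm{1}}} \ottsym{=} \mathbb{S}_{{\mathrm{2}}}$. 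Because $\ottnt{C_{{\mathrm{1}}}}$ and $\ottnt{C_{{\mathrm{2}}}}$ are run-time contexts, their set components are empty by the definition of run-time context, so the equality of set parts carries no content and the interpretations coincide with their graph parts. Thus the hypothesis directly yields the graph isomorphism $\mathfrak{G}_{{\mathrm{1}}}  \simeq  \mathfrak{G}_{{\mathrm{2}}}$.

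With this in hand, the forward direction is immediate: assuming $ \lBrack \ottnt{C_{{\mathrm{1}}}} \rBrack $ is traceable, i.e.\ $\mathfrak{G}_{{\mathrm{1}}}$ is traceable, \propref{graph/traceable/iso} applied to $\mathfrak{G}_{{\mathrm{1}}}  \simeq  \mathfrak{G}_{{\mathrm{2}}}$ gives that $\mathfrak{G}_{{\mathrm{2}}}  \ottsym{=}  \lBrack \ottnt{C_{{\mathrm{2}}}} \rBrack $ is traceable. For the converse, I would invoke the symmetry of graph isomorphism---the inverse of the witnessing bijection exhibits $\mathfrak{G}_{{\mathrm{2}}}  \simeq  \mathfrak{G}_{{\mathrm{1}}}$---and apply \propref{graph/traceable/iso} once more in the other orientation. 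Hence both implications hold and the biconditional follows.

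There is essentially no hard step here; the result is a corollary of \propref{graph/traceable/iso}. The only points requiring a moment of care are the bookkeeping ones: confirming that the empty set component of a run-time context lets me pass cleanly from context equivalence to graph isomorphism, and noting explicitly that $\simeq$ on graph representations is symmetric so that a single graph lemma suffices for both directions of the \emph{iff}.
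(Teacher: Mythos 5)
Your proposal is correct and follows essentially the same route as the paper: the paper also derives this as a corollary of \propref{graph/traceable/iso}, and your write-up merely makes explicit the two bookkeeping facts the paper leaves implicit (that for run-time contexts the interpretation is just its graph part, and that symmetry of graph isomorphism yields the reverse implication). No gaps.
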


\begin{prop}{env/runtime/order-defined/equiv}
    Suppose $\ottnt{C_{{\mathrm{1}}}}  \simeq  \ottnt{C_{{\mathrm{2}}}}$.
    Then, $\ottnt{C_{{\mathrm{1}}}}$ is order-defined if and only if $\ottnt{C_{{\mathrm{2}}}}$ is order-defined.

    \proof
    It suffices to show
    \begin{quote}
        $ \lBrack  \langle \ottnt{C_{{\mathrm{1}}}} \rangle_{ \ottmv{l} }  \rBrack $ is traceable if and only if $ \lBrack  \langle \ottnt{C_{{\mathrm{2}}}} \rangle_{ \ottmv{l} }  \rBrack $ is traceable
    \end{quote}
    for any $\ottmv{l}$.
    We have $ \langle \ottnt{C_{{\mathrm{1}}}} \rangle_{ \ottmv{l} }   \simeq   \langle \ottnt{C_{{\mathrm{2}}}} \rangle_{ \ottmv{l} } $ by \propref{env/runtime/focus/equiv}.
    Then, the goal follows by \propref{graph/traceable/iso}.
\end{prop}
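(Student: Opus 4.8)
The plan is to reduce the biconditional to a per-location statement about graph interpretations and then invoke the isomorphism-invariance of unique topological orderings. By the definition of order-defined, a run-time context $\ottnt{C}$ is order-defined exactly when the interpretation $ \lBrack  \langle \ottnt{C} \rangle_{ \ottmv{l} }  \rBrack $ admits a unique topological ordering (is \emph{traceable}) for every location $\ottmv{l}$. Hence it suffices to prove, for a fixed but arbitrary $\ottmv{l}$, that $ \lBrack  \langle \ottnt{C_{{\mathrm{1}}}} \rangle_{ \ottmv{l} }  \rBrack $ is traceable if and only if $ \lBrack  \langle \ottnt{C_{{\mathrm{2}}}} \rangle_{ \ottmv{l} }  \rBrack $ is; quantifying over $\ottmv{l}$ then yields the proposition.

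First I would push the hypothesis $\ottnt{C_{{\mathrm{1}}}}  \simeq  \ottnt{C_{{\mathrm{2}}}}$ through the focus operation. The focus-equivalence lemma \propref{env/runtime/focus/equiv} gives $ \langle \ottnt{C_{{\mathrm{1}}}} \rangle_{ \ottmv{l} }   \simeq   \langle \ottnt{C_{{\mathrm{2}}}} \rangle_{ \ottmv{l} } $ for every $\ottmv{l}$. Because these are run-time contexts, their interpretations carry an empty set component, so context equivalence collapses to isomorphism of the graph parts alone, i.e.\ $ \lBrack  \langle \ottnt{C_{{\mathrm{1}}}} \rangle_{ \ottmv{l} }  \rBrack   \simeq   \lBrack  \langle \ottnt{C_{{\mathrm{2}}}} \rangle_{ \ottmv{l} }  \rBrack $ as graph representations.

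Next I would appeal to the isomorphism-invariance of traceability, \propref{graph/traceable/iso}: from the graph isomorphism above, traceability of the left interpretation transfers to the right. For the converse direction I would reuse the same lemma together with the symmetry of $ \simeq $ (graph isomorphism being an equivalence relation), so that the two implications combine into the desired biconditional. Assembling these per-location facts over all $\ottmv{l}$ establishes the claim.

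Since the heavy lifting is delegated to already-proved lemmas, I expect no genuine obstacle; the only point requiring care is the observation — recorded in the definition of run-time context — that the set component of $ \lBrack \ottnt{C} \rBrack $ is empty, which is exactly what lets the context-level relation $ \simeq $ be read directly as the graph-level relation $ \simeq $ so that \propref{graph/traceable/iso} applies. I would make that reduction explicit to avoid silently conflating the two uses of $ \simeq $.
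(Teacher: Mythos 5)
Your proposal is correct and follows essentially the same route as the paper: reduce to per-location traceability via the definition of order-defined, obtain $ \langle \ottnt{C_{{\mathrm{1}}}} \rangle_{ \ottmv{l} }   \simeq   \langle \ottnt{C_{{\mathrm{2}}}} \rangle_{ \ottmv{l} } $ from the focus-equivalence lemma, and conclude with the isomorphism-invariance of traceability. Your extra remarks about the empty set component of run-time contexts and the symmetry of $\simeq$ are exactly the details the paper leaves implicit (via its stated abuse of notation for $ \lBrack \ottnt{C} \rBrack $ and the symmetry of graph isomorphism), so they are harmless elaborations rather than a different argument.
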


\begin{prop}{env/runtime/order-defined/sub}
    If $\ottnt{C_{{\mathrm{1}}}}  \lesssim  \ottnt{C_{{\mathrm{2}}}}$ and $\ottnt{C_{{\mathrm{1}}}}$ is order-defined, then $\ottnt{C_{{\mathrm{2}}}}$ is order-defined.

    \proof
    It suffices to show
    \begin{quote}
        $ \lBrack  \langle \ottnt{C_{{\mathrm{2}}}} \rangle_{ \ottmv{l} }  \rBrack $ is traceable if $ \lBrack  \langle \ottnt{C_{{\mathrm{1}}}} \rangle_{ \ottmv{l} }  \rBrack $ is traceable
    \end{quote}
    for any $\ottmv{l}$.
    We have $ \langle \ottnt{C_{{\mathrm{1}}}} \rangle_{ \ottmv{l} }   \lesssim   \langle \ottnt{C_{{\mathrm{2}}}} \rangle_{ \ottmv{l} } $ by \propref{env/runtime/focus/sub}.
    Then, the goal follows by \propref{graph/traceable/iso} and \propref{graph/traceable/span}.
\end{prop}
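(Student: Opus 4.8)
The plan is to reduce the statement to a purely graph-theoretic fact about the focus interpretations, mirroring the companion result \propref{env/runtime/order-defined/equiv} but with isomorphism replaced by the subcontext relation. Recall that $C_2$ is order-defined precisely when $\lBrack \langle C_2 \rangle_{l} \rBrack$ admits a \emph{unique} topological ordering (i.e.\ is traceable) for every location $l$. So it suffices to fix an arbitrary $l$ and show that $\lBrack \langle C_2 \rangle_{l} \rBrack$ is traceable, using as the only hypothesis that $\lBrack \langle C_1 \rangle_{l} \rBrack$ is traceable, which holds because $C_1$ is order-defined.

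First I would push the subcontext relation through the focus operation: from $C_1 \lesssim C_2$, \propref{env/runtime/focus/sub} yields $\langle C_1 \rangle_{l} \lesssim \langle C_2 \rangle_{l}$. Next I would unfold the definition of $\lesssim$ on run-time contexts. Since run-time contexts carry no variable bindings, the set component of each interpretation is empty, so the constraint $S_1 \supseteq S_2$ is vacuous and the relation collapses to its graph content: there exist $\mathfrak{G}'_1, \mathfrak{G}'_2$ with $\lBrack \langle C_1 \rangle_{l} \rBrack \simeq \mathfrak{G}'_1 \mathrel{<_\rightarrow} \mathfrak{G}'_2 \simeq \lBrack \langle C_2 \rangle_{l} \rBrack$. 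Traceability then propagates along this chain: \propref{graph/traceable/iso} carries it across each of the two isomorphisms and \propref{graph/traceable/span} carries it across the spanning step, giving traceability of $\lBrack \langle C_2 \rangle_{l} \rBrack$ and hence order-definedness of $C_2$.

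The only genuinely delicate point is the direction of the spanning step, and I expect it to be the conceptual crux rather than a computational one. The spanning relation $\mathfrak{G}'_1 \mathrel{<_\rightarrow} \mathfrak{G}'_2$ \emph{adds} edges ($E'_1 \subseteq E'_2$), so one might naively worry it could create new topological orderings; in fact it does the opposite. More edges impose more ordering constraints, so every topological ordering of $\mathfrak{G}'_2$ is already a topological ordering of $\mathfrak{G}'_1$ (this is \propref{graph/ordering/span}); consequently $\mathfrak{G}'_2$ has at most as many orderings as $\mathfrak{G}'_1$, and since a topological ordering always exists by \propref{graph/ordering/exists}, it has exactly one whenever $\mathfrak{G}'_1$ does. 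This is precisely the content packaged in \propref{graph/traceable/span} (the contrapositive of \propref{graph/ordering/multi/span}), so once the reduction above is in place the argument is immediate and requires no case analysis.
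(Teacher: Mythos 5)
Your proposal is correct and follows essentially the same route as the paper's own proof: reduce to traceability of the focused interpretations, apply \propref{env/runtime/focus/sub}, and transport traceability along the chain $\simeq\,\mathrel{<_\rightarrow}\,\simeq$ via \propref{graph/traceable/iso} and \propref{graph/traceable/span}. Your additional justification of the direction of the spanning step (more edges can only cut down the set of topological orderings, per \propref{graph/ordering/span}) is exactly the content the paper packages into \propref{graph/traceable/span}.
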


\begin{prop}{env/runtime/usage/equiv}
    Suppose $ \lBrack \ottnt{C_{{\mathrm{1}}}} \rBrack $ and $ \lBrack \ottnt{C_{{\mathrm{2}}}} \rBrack $ are traceable.
    If $\ottnt{C_{{\mathrm{1}}}}  \simeq  \ottnt{C_{{\mathrm{2}}}}$, then $  \overline{ \ottnt{C_{{\mathrm{1}}}} }  \ottsym{=}  \overline{ \ottnt{C_{{\mathrm{2}}}} }  $.

    \proof Let $f_1$ and $f_2$ be the unique topological orderings of $ \lBrack \ottnt{C_{{\mathrm{1}}}} \rBrack $ and $ \lBrack \ottnt{C_{{\mathrm{2}}}} \rBrack $, respectively.
    Let $(n_1, v_1, E_1) =  \lBrack \ottnt{C_{{\mathrm{1}}}} \rBrack $ and $(n_2, v_2, E_2) =  \lBrack \ottnt{C_{{\mathrm{2}}}} \rBrack $.
    By definition, we have $n_1 = n_2$.
    Moreover, we have some topological ordering $f_2'$ of $ \lBrack \ottnt{C_{{\mathrm{2}}}} \rBrack $ by \propref{graph/ordering/proj/iso} such that $v_1 \circ f_1 = v_2 \circ f_2'$.
    Since $\ottnt{C_{{\mathrm{2}}}}$ is order-defined, $f_2' = f_2$, and thus, $v_1 \circ f_1 = v_2 \circ f_2$.
    This means $  \overline{ \ottnt{C_{{\mathrm{1}}}} }  \ottsym{=}  \overline{ \ottnt{C_{{\mathrm{2}}}} }  $.

\end{prop}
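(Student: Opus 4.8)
The plan is to reduce the equality of usage projections to the equality of the vertex-labelling sequences induced by the (unique) topological orderings of the two interpretations. First I would unfold the definitions: since $C_1$ and $C_2$ are run-time contexts their interpretations carry an empty set part, so $C_1 \simeq C_2$ amounts to a graph isomorphism $\lBrack C_1 \rBrack \simeq \lBrack C_2 \rBrack$; write $\lBrack C_1 \rBrack = (n, v_1, E_1)$ and $\lBrack C_2 \rBrack = (n, v_2, E_2)$, where the vertex counts agree because isomorphic graph representations have the same order. Traceability of each interpretation means there is a \emph{unique} topological ordering, say $f_1$ of $\lBrack C_1 \rBrack$ and $f_2$ of $\lBrack C_2 \rBrack$; by the definition of the usage projection, $\overline{C_1}$ is the $\odot$-product of the OPM indices read off along $v_1 \circ f_1$, and $\overline{C_2}$ the corresponding product along $v_2 \circ f_2$. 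Thus it suffices to show $v_1 \circ f_1 = v_2 \circ f_2$.

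The heart of the argument is to transport $f_1$ across the isomorphism while keeping the labelling fixed. For this I would invoke \propref{graph/ordering/proj/iso}, which, given $\lBrack C_1 \rBrack \simeq \lBrack C_2 \rBrack$ and the topological ordering $f_1$ of $\lBrack C_1 \rBrack$, produces some topological ordering $f_2'$ of $\lBrack C_2 \rBrack$ with $v_1 \circ f_1 = v_2 \circ f_2'$. The remaining and conceptually key move is to identify $f_2'$ with the ordering $f_2$ that defines $\overline{C_2}$: because $\lBrack C_2 \rBrack$ is traceable, its topological ordering is unique, so $f_2' = f_2$. Combining these gives $v_1 \circ f_1 = v_2 \circ f_2$, whence the two label sequences coincide.

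Finally I would conclude that, since the sequences of labels (each a location binding $l : [m]$) agree pointwise, the induced sequences of OPM indices agree, and therefore their $\odot$-products are literally the same element, i.e.\ $\overline{C_1} = \overline{C_2}$. I expect the main obstacle to be purely a matter of getting the bookkeeping of \propref{graph/ordering/proj/iso} right: confirming that the ordering it yields is label-preserving in the precise sense $v_1 \circ f_1 = v_2 \circ f_2'$ (not merely \emph{some} topological ordering), and then making the appeal to uniqueness that lets me replace the \emph{existentially} produced $f_2'$ by the \emph{canonical} $f_2$ underlying the definition of $\overline{C_2}$. No induction appears necessary here; the whole difficulty lies in aligning the isomorphism transport with the traceability hypotheses.
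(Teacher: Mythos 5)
Your proposal is correct and follows essentially the same route as the paper's own proof: transport the unique topological ordering $f_1$ across the isomorphism via \propref{graph/ordering/proj/iso} to obtain a label-preserving ordering $f_2'$ of $\lBrack C_2 \rBrack$, then use traceability (uniqueness of the topological ordering) to identify $f_2'$ with $f_2$ and conclude $v_1 \circ f_1 = v_2 \circ f_2$, hence equal usage projections. The only cosmetic difference is that you spell out the empty-set part of run-time context interpretations, which the paper handles by notational abuse.
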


\begin{prop}{env/runtime/usage/sub}
    Suppose $ \lBrack \ottnt{C_{{\mathrm{1}}}} \rBrack $ and $ \lBrack \ottnt{C_{{\mathrm{2}}}} \rBrack $ are traceable.
    If $\ottnt{C_{{\mathrm{1}}}}  \lesssim  \ottnt{C_{{\mathrm{2}}}}$, then $  \overline{ \ottnt{C_{{\mathrm{1}}}} }  \ottsym{=}  \overline{ \ottnt{C_{{\mathrm{2}}}} }  $.

    \proof Let $(n_1, v_1, E_1) =  \lBrack \ottnt{C_{{\mathrm{1}}}} \rBrack $ and $(n_2, v_2, E_2) =  \lBrack \ottnt{C_{{\mathrm{2}}}} \rBrack $.
    By definition, we have $(n_1, v_1, E_1) \simeq \mathfrak{G}'_{{\mathrm{1}}} \mathrel{<_\rightarrow} \mathfrak{G}'_{{\mathrm{2}}} \simeq (n_2, v_2, E_2)$ for some $\mathfrak{G}'_{{\mathrm{1}}} = (n_1', v_1', E_1')$ and $\mathfrak{G}'_{{\mathrm{2}}} = (n_2', v_2', E_2')$, where $n_1 = n_1' = n_2' = n_2$  and $v_1' = v_2'$.
    Here $\mathfrak{G}'_{{\mathrm{1}}}$ and $\mathfrak{G}'_{{\mathrm{2}}}$ are also traceable by \propref{graph/traceable/iso}.
    Let $f_1$ be the unique topological ordering of $(n_1, v_1, E_1)$.
    By \propref{graph/ordering/proj/iso}, we have some topological ordering $f_1'$ of $(n_1', v_1', E_1')$ such that $v_1 \circ f_1 = v_1' \circ f_1'$.
    By \propref{graph/ordering/span}, $f_1'$ is also a topological ordering of $(n_2', v_2', E_2')$.
    By \propref{graph/ordering/proj/iso} again, we have some topological ordering $f_2'$ of $(n_2, v_2, E_2)$ such that $v_2' \circ f_1' = v_2 \circ f_2'$.
    Knowing $v_1' = v_2'$, we have
    \begin{gather}
        v_1 \circ f_1 = v_2 \circ f_2' \hyp{1.1}
    \end{gather}
    Since, a topological ordering of $(n_2, v_2, E_2)$ is unique by definition.
    \hypref{1.1} implies $  \overline{ \ottnt{C_{{\mathrm{1}}}} }  \ottsym{=}  \overline{ \ottnt{C_{{\mathrm{2}}}} }  $.

\end{prop}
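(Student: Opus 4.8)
The plan is to reduce the equality of usage projections to the equality of the label sequences read off in topological order, and then to transport the unique topological ordering of $\lBrack C_1 \rBrack$ across the chain of graph relations hidden inside $C_1 \lesssim C_2$. First I would unfold the subcontext relation: since $C_1$ and $C_2$ are run-time contexts, their interpretations have empty set parts, so $C_1 \lesssim C_2$ supplies graphs $\mathfrak{G}_1', \mathfrak{G}_2'$ with $\lBrack C_1 \rBrack \simeq \mathfrak{G}_1' \mathrel{<_\rightarrow} \mathfrak{G}_2' \simeq \lBrack C_2 \rBrack$. Crucially, $\mathrel{<_\rightarrow}$ keeps the vertex count and the labeling fixed (only edges are added), so $\mathfrak{G}_1'$ and $\mathfrak{G}_2'$ share a single vertex map $v'$. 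Since $\overline{C}$ depends only on the sequence $v \circ f$ of bindings listed in the unique topological order $f$ (whose $m$-parts are then multiplied in the ordered monoid), it suffices to prove that this label sequence agrees for $C_1$ and $C_2$.

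Next I would confirm, via \propref{graph/traceable/iso} and \propref{graph/traceable/span}, that traceability propagates from $\lBrack C_1 \rBrack$ and $\lBrack C_2 \rBrack$ to both intermediate graphs, so that $\mathfrak{G}_1'$ and $\mathfrak{G}_2'$ each have a unique topological ordering. Taking $f_1$ to be the unique ordering of $\lBrack C_1 \rBrack$, \propref{graph/ordering/proj/iso} produces an ordering $f_1'$ of $\mathfrak{G}_1'$ with $v_1 \circ f_1 = v' \circ f_1'$, i.e.\ the label sequence is preserved across the left isomorphism. I would then combine \propref{graph/ordering/span} with traceability of both $\mathfrak{G}_1'$ and $\mathfrak{G}_2'$ to see that $f_1'$ is simultaneously the unique ordering of $\mathfrak{G}_2'$. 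A second application of \propref{graph/ordering/proj/iso} across the right isomorphism yields an ordering $f_2'$ of $\lBrack C_2 \rBrack$ with $v' \circ f_1' = v_2 \circ f_2'$. Chaining these equalities and invoking uniqueness of the topological ordering of $\lBrack C_2 \rBrack$ (so that $f_2'$ is the ordering defining $\overline{C_2}$) gives $v_1 \circ f_1 = v_2 \circ f_2'$, hence identical ordered label sequences, hence $\overline{C_1} = \overline{C_2}$.

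The main obstacle is precisely the middle spanning step, which is what distinguishes this statement from the purely isomorphic case \propref{env/runtime/usage/equiv} proved just above. There \propref{graph/ordering/proj/iso} alone transports the ordering in both directions, but here the relation $\mathfrak{G}_1' \mathrel{<_\rightarrow} \mathfrak{G}_2'$ is one-directional in \propref{graph/ordering/span}: that lemma tells me an ordering of the more-constrained $\mathfrak{G}_2'$ is an ordering of $\mathfrak{G}_1'$, not the reverse. The delicate point is therefore to use traceability of \emph{both} graphs to argue that the unique ordering of $\mathfrak{G}_2'$ must coincide with $f_1'$ (being also an ordering of $\mathfrak{G}_1'$, whose ordering is unique), so that the single witness $f_1'$ passes unchanged through the span and the label sequence is never disturbed. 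Once that identification is secured, everything else is bookkeeping with $v' = v_1' = v_2'$.
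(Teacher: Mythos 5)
Your proposal is correct and follows essentially the same path as the paper's proof: unfold $\lesssim$ into the chain $\lBrack C_1 \rBrack \simeq \mathfrak{G}'_1 \mathrel{<_\rightarrow} \mathfrak{G}'_2 \simeq \lBrack C_2 \rBrack$ (with a single shared vertex labeling across the span), transport the unique topological ordering across the left isomorphism, through the span, and across the right isomorphism, and conclude by uniqueness of the ordering of $\lBrack C_2 \rBrack$. If anything, your handling of the span step is more careful than the paper's: the paper directly cites its spanning lemma to assert that $f_1'$ is an ordering of $\mathfrak{G}'_2$, even though that lemma literally transports orderings only from $\mathfrak{G}'_2$ down to $\mathfrak{G}'_1$, whereas your explicit uniqueness argument (the unique ordering of the traceable $\mathfrak{G}'_2$ is also an ordering of the traceable $\mathfrak{G}'_1$, hence coincides with $f_1'$) supplies exactly the justification that the paper leaves implicit.
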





\begin{prop}{env/runtime/equiv/null/ctx}
    If $\mathcal{C}  \ottsym{[}  \ottnt{C}  \ottsym{]}  \simeq   \cdot $, then $\ottnt{C}  \simeq   \cdot $.

    \proof We have $ \lBrack \mathcal{C}  \ottsym{[}  \ottnt{C}  \ottsym{]} \rBrack   \simeq   \lBrack  \cdot  \rBrack $ by definition.
    So, $ |  \lBrack \mathcal{C}  \ottsym{[}   \cdot   \ottsym{]} \rBrack  |_{\bullet}  +  |  \lBrack \ottnt{C} \rBrack  |_{\bullet}  = 0$ by \propref{env/runtime/intrp/order/ctx}.
    From that, we can see $  |  \lBrack \ottnt{C} \rBrack  |_{\bullet}  \ottsym{=} \ottsym{0} $, which implies $ \lBrack \ottnt{C} \rBrack   \simeq   \lBrack  \cdot  \rBrack $, and thus, $\ottnt{C}  \simeq   \cdot $.
\end{prop}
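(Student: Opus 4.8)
The plan is to reduce this statement about context equivalence to a purely combinatorial fact about the number of ordered vertices in the graph interpretation. First I would unfold the definition of $\simeq$ on contexts: since both $\mathcal{C}  \ottsym{[}  \ottnt{C}  \ottsym{]}$ and $ \cdot $ are run-time contexts, their set (unrestricted-binding) parts are empty, so the hypothesis $\mathcal{C}  \ottsym{[}  \ottnt{C}  \ottsym{]}  \simeq   \cdot $ reduces to the graph isomorphism $ \lBrack \mathcal{C}  \ottsym{[}  \ottnt{C}  \ottsym{]} \rBrack   \simeq   \lBrack  \cdot  \rBrack $. That is the only content of the hypothesis one needs to exploit.

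The key observation is that isomorphic graph representations share the same number of vertices by definition, and $ \lBrack  \cdot  \rBrack  = (0, \emptyset, \emptyset)$ has none; hence $ |  \lBrack \mathcal{C}  \ottsym{[}  \ottnt{C}  \ottsym{]} \rBrack  |_{\bullet}  = 0$. Next I would invoke the additivity of the vertex count under hole-filling, \propref{env/runtime/intrp/order/ctx}, which gives $ |  \lBrack \mathcal{C}  \ottsym{[}  \ottnt{C}  \ottsym{]} \rBrack  |_{\bullet}  =  |  \lBrack \mathcal{C}  \ottsym{[}   \cdot   \ottsym{]} \rBrack  |_{\bullet}  +  |  \lBrack \ottnt{C} \rBrack  |_{\bullet} $. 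Since the left-hand side is $0$ and both summands are non-negative naturals, each of them must vanish; in particular $ |  \lBrack \ottnt{C} \rBrack  |_{\bullet}  = 0$.

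Finally I would close the loop: a graph representation with zero vertices is necessarily the empty graph $(0, \emptyset, \emptyset) =  \lBrack  \cdot  \rBrack $, so $ \lBrack \ottnt{C} \rBrack   \simeq   \lBrack  \cdot  \rBrack $, and, the set parts already being empty, this yields $\ottnt{C}  \simeq   \cdot $ by the definition of context equivalence. The argument is essentially mechanical; the only points requiring care are the bookkeeping that the relevant interpretations are run-time contexts (so the unrestricted-binding component of $\simeq$ is trivially satisfied) and the recognition that \propref{env/runtime/intrp/order/ctx} supplies exactly the additive decomposition needed to transfer the vanishing vertex count from the filled pattern to the plug $\ottnt{C}$. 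I expect no genuine obstacle here: the lemma is a small structural consequence of the two facts already established, namely invariance of the vertex count under isomorphism and its additivity under hole-filling.
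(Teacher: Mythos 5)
Your proposal is correct and follows essentially the same route as the paper's own proof: unfold $\simeq$ to the graph isomorphism $\lBrack \mathcal{C}[\ottnt{C}] \rBrack \simeq \lBrack \cdot \rBrack$, apply the additivity lemma \propref{env/runtime/intrp/order/ctx} to conclude $|\lBrack \ottnt{C} \rBrack|_{\bullet} = 0$, and hence $\ottnt{C} \simeq \cdot$. The only difference is that you spell out the intermediate bookkeeping (isomorphism preserves vertex count, non-negativity of the summands, uniqueness of the zero-vertex graph) that the paper leaves implicit.
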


\begin{prop}{env/runtime/intrp/traceable/ctx/replace}
    Suppose $\ottnt{C}  \not\simeq   \cdot $.
    If $ \lBrack \mathcal{C}  \ottsym{[}  \ottnt{C}  \ottsym{]} \rBrack $ and $ \lBrack \ottnt{C'} \rBrack $ are traceable, then $ \lBrack \mathcal{C}  \ottsym{[}  \ottnt{C'}  \ottsym{]} \rBrack $ is traceable.

    \proof By structural induction on $\mathcal{C}$.
    \begin{match}
        \item[$ \mathcal{C} \ottsym{=} \ottsym{[]} $]
        This case immediately follows by the assumption $ \lBrack \ottnt{C'} \rBrack $ is traceable.

        \item[$ \mathcal{C} \ottsym{=} \mathcal{C}_{{\mathrm{1}}}  \ottsym{,}  \ottnt{C_{{\mathrm{2}}}} $]
        In this case, $ \GJOIN{  \lBrack \mathcal{C}_{{\mathrm{1}}}  \ottsym{[}  \ottnt{C}  \ottsym{]} \rBrack  }{  \lBrack \ottnt{C_{{\mathrm{2}}}} \rBrack  } $ is traceable by definition.
        So, $ \lBrack \mathcal{C}_{{\mathrm{1}}}  \ottsym{[}  \ottnt{C}  \ottsym{]} \rBrack $ and $ \lBrack \ottnt{C_{{\mathrm{2}}}} \rBrack $ are tracebale by \propref{graph/traceable/join}.
        We have $ \lBrack \mathcal{C}_{{\mathrm{1}}}  \ottsym{[}  \ottnt{C'}  \ottsym{]} \rBrack $ is traceable by the induction hypothesis.
        Therefore, $  \GJOIN{  \lBrack \mathcal{C}_{{\mathrm{1}}}  \ottsym{[}  \ottnt{C'}  \ottsym{]} \rBrack  }{  \lBrack \ottnt{C_{{\mathrm{2}}}} \rBrack  }  \ottsym{=}  \lBrack \mathcal{C}  \ottsym{[}  \ottnt{C'}  \ottsym{]} \rBrack  $ is traceable by \propref{graph/traceable/join}.

        \item[$ \mathcal{C} \ottsym{=} \ottnt{C_{{\mathrm{1}}}}  \ottsym{,}  \mathcal{C}_{{\mathrm{2}}} $] Similar to the case $ \mathcal{C} \ottsym{=} \mathcal{C}_{{\mathrm{1}}}  \ottsym{,}  \ottnt{C_{{\mathrm{2}}}} $.

        \item[$ \mathcal{C} \ottsym{=} \mathcal{C}_{{\mathrm{1}}}  \parallel  \ottnt{C_{{\mathrm{2}}}} $]
        In this case, $ \lBrack \mathcal{C}_{{\mathrm{1}}}  \ottsym{[}  \ottnt{C}  \ottsym{]} \rBrack   \cup   \lBrack \ottnt{C_{{\mathrm{2}}}} \rBrack $ is traceable by definition.
        So, we have the following cases by \propref{graph/traceable/union}.
        \begin{match}
            \item[$  |  \lBrack \mathcal{C}_{{\mathrm{1}}}  \ottsym{[}  \ottnt{C}  \ottsym{]} \rBrack  |_{\bullet}  \ottsym{=} \ottsym{0} $ and $ \lBrack \ottnt{C_{{\mathrm{2}}}} \rBrack $ is traceable]
            We have $ |  \lBrack \mathcal{C}_{{\mathrm{1}}}  \ottsym{[}   \cdot   \ottsym{]} \rBrack  |_{\bullet}  +  |  \lBrack \ottnt{C} \rBrack  |_{\bullet}  = 0$ by \propref{env/runtime/intrp/order/ctx}.
            That implies $  |  \lBrack \ottnt{C} \rBrack  |_{\bullet}  \ottsym{=} \ottsym{0} $, and thus $  \lBrack \ottnt{C} \rBrack  \ottsym{=}  \lBrack  \cdot  \rBrack  $.
            Therefore, $\ottnt{C}  \simeq   \cdot $, but, this contradicts to the assumption $\ottnt{C}  \not\simeq   \cdot $.
            So, this case is vacuously true.

            \item[$  |  \lBrack \ottnt{C_{{\mathrm{2}}}} \rBrack  |_{\bullet}  \ottsym{=} \ottsym{0} $ and $ \lBrack \mathcal{C}_{{\mathrm{1}}}  \ottsym{[}  \ottnt{C}  \ottsym{]} \rBrack $ is traceable]
            We have $ \lBrack \mathcal{C}_{{\mathrm{1}}}  \ottsym{[}  \ottnt{C'}  \ottsym{]} \rBrack $ is traceable by the induction hypothesis.
            So, $  \lBrack \mathcal{C}  \ottsym{[}  \ottnt{C'}  \ottsym{]} \rBrack  \ottsym{=}  \lBrack \mathcal{C}_{{\mathrm{1}}}  \ottsym{[}  \ottnt{C'}  \ottsym{]} \rBrack   \cup   \lBrack \ottnt{C} \rBrack  $ is traceable by \propref{graph/traceable/union}.
        \end{match}

        \item[$ \mathcal{G} \ottsym{=} \Gamma_{{\mathrm{1}}}  \parallel  \mathcal{G}_{{\mathrm{2}}} $] Similar to the case $ \mathcal{G} \ottsym{=} \Gamma_{{\mathrm{1}}}  \ottsym{,}  \mathcal{G}_{{\mathrm{2}}} $.
    \end{match}
\end{prop}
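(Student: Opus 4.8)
The plan is to prove the statement by structural induction on the context pattern $\mathcal{C}$, keeping in mind that a graph representation is \emph{traceable} exactly when it admits a unique topological ordering, so that the whole argument reduces to pushing this uniqueness property through the join and union connectives via the already-established \propref{graph/traceable/join} and \propref{graph/traceable/union}. The base case $\mathcal{C} = \ottsym{[]}$ is immediate, since there $\lBrack \mathcal{C}  \ottsym{[}  \ottnt{C'}  \ottsym{]} \rBrack = \lBrack \ottnt{C'} \rBrack$, which is traceable by assumption.

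For the sequential cases $\mathcal{C} = \mathcal{C}_1  \ottsym{,}  \ottnt{C_2}$ and its mirror image $\ottnt{C_1}  \ottsym{,}  \mathcal{C}_2$, the interpretation factors as a join, e.g.\ $\GJOIN{ \lBrack \mathcal{C}_1  \ottsym{[}  \ottnt{C}  \ottsym{]} \rBrack }{ \lBrack \ottnt{C_2} \rBrack }$. I would apply \propref{graph/traceable/join} to the hypothesis to obtain traceability of both factors $\lBrack \mathcal{C}_1  \ottsym{[}  \ottnt{C}  \ottsym{]} \rBrack$ and $\lBrack \ottnt{C_2} \rBrack$, invoke the induction hypothesis on $\mathcal{C}_1$ to replace the hole and get $\lBrack \mathcal{C}_1  \ottsym{[}  \ottnt{C'}  \ottsym{]} \rBrack$ traceable, and then reassemble with \propref{graph/traceable/join}. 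These cases are routine because join preserves traceability componentwise and unconditionally, so the side condition $\ottnt{C}  \not\simeq   \cdot$ is never touched here.

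The substantive case is the parallel one, $\mathcal{C} = \mathcal{C}_1  \parallel  \ottnt{C_2}$ (and symmetrically $\ottnt{C_1}  \parallel  \mathcal{C}_2$), where the interpretation is a union $\lBrack \mathcal{C}_1  \ottsym{[}  \ottnt{C}  \ottsym{]} \rBrack  \cup  \lBrack \ottnt{C_2} \rBrack$. Here \propref{graph/traceable/union} forces one summand to have an empty vertex set while the other is traceable. In the branch where $\lBrack \mathcal{C}_1  \ottsym{[}  \ottnt{C}  \ottsym{]} \rBrack$ has no vertices, I would derive a contradiction from the hypothesis $\ottnt{C}  \not\simeq   \cdot$: the vertex-count identity \propref{env/runtime/intrp/order/ctx} gives $| \lBrack \mathcal{C}_1  \ottsym{[}   \cdot   \ottsym{]} \rBrack |_{\bullet} + | \lBrack \ottnt{C} \rBrack |_{\bullet} = 0$, whence $| \lBrack \ottnt{C} \rBrack |_{\bullet} = 0$, i.e.\ $\ottnt{C}  \simeq   \cdot$, contradicting the side condition. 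This is precisely where $\ottnt{C}  \not\simeq   \cdot$ is needed: a nonempty hole cannot sit in parallel with another ordered component, since by \propref{graph/ordering/union/multi} two nontrivial graphs placed in parallel already admit more than one topological ordering. In the surviving branch, $| \lBrack \ottnt{C_2} \rBrack |_{\bullet} = 0$ and $\lBrack \mathcal{C}_1  \ottsym{[}  \ottnt{C}  \ottsym{]} \rBrack$ is traceable, so the induction hypothesis yields traceability of $\lBrack \mathcal{C}_1  \ottsym{[}  \ottnt{C'}  \ottsym{]} \rBrack$, and since $\ottnt{C_2}$ still contributes no vertices, \propref{graph/traceable/union} re-forms the union as traceable.

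I expect the main obstacle to be the parallel case, and specifically the discipline of using the right invariant: one must rely on traceability being literally uniqueness of the topological ordering (so that the union characterization applies) and on $| \cdot |_{\bullet}$ counting vertices additively under hole substitution, so that an empty vertex set on the filled side forces $\ottnt{C}  \simeq   \cdot$. Once these bookkeeping facts are pinned down, the contradiction in the degenerate branch and the reassembly in the surviving branch both fall out immediately, and the remaining mirror-image connective cases are dispatched by the same reasoning, \emph{mutatis mutandis}.
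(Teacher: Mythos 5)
Your proposal is correct and follows essentially the same route as the paper's proof: structural induction on $\mathcal{C}$, with the sequential cases dispatched by \propref{graph/traceable/join}, and the parallel cases split via \propref{graph/traceable/union}, using the additive vertex count of \propref{env/runtime/intrp/order/ctx} to turn the degenerate branch into a contradiction with $\ottnt{C} \not\simeq \cdot$. The only cosmetic difference is that you additionally cite \propref{graph/ordering/union/multi} as intuition, which the paper leaves implicit inside \propref{graph/traceable/union}.
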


\begin{prop}{env/ordering/pop}
    If $\ottmv{l}  \mathord:  \ottsym{[}  \ottnt{m}  \ottsym{]}  \ottsym{,}  \ottnt{C}$ is order-defined, then $\ottnt{C}$ is order-defined.

    \proof We show
    \begin{quote}
        if $ \lBrack  \langle \ottsym{(}  \ottmv{l}  \mathord:  \ottsym{[}  \ottnt{m}  \ottsym{]}  \ottsym{,}  \ottnt{C}  \ottsym{)} \rangle_{ \ottmv{l'} }  \rBrack $ is traceable, then $ \lBrack  \langle \ottnt{C} \rangle_{ \ottmv{l'} }  \rBrack $ is traceable
    \end{quote}
    for any $\ottmv{l'}$.
    We consider wether $ \ottmv{l'} \ottsym{=} \ottmv{l} $ or not.
    \begin{match}
        \item[$ \ottmv{l'} \ottsym{=} \ottmv{l} $]
        In this case, $  \lBrack  \langle \ottsym{(}  \ottmv{l}  \mathord:  \ottsym{[}  \ottnt{m}  \ottsym{]}  \ottsym{,}  \ottnt{C}  \ottsym{)} \rangle_{ \ottmv{l'} }  \rBrack  \ottsym{=}  \GJOIN{  \lBrack \ottmv{l}  \mathord:  \ottsym{[}  \ottnt{m}  \ottsym{]} \rBrack  }{  \lBrack  \langle \ottnt{C} \rangle_{ \ottmv{l'} }  \rBrack  }  $.
        So, we can have $ \lBrack  \langle \ottnt{C} \rangle_{ \ottmv{l'} }  \rBrack $ is traceable by \propref{graph/traceable/join}.

        \item[$ \ottmv{l'} \neq \ottmv{l} $]
        In this case, $  \lBrack  \langle \ottsym{(}  \ottmv{l}  \mathord:  \ottsym{[}  \ottnt{m}  \ottsym{]}  \ottsym{,}  \ottnt{C}  \ottsym{)} \rangle_{ \ottmv{l'} }  \rBrack  \ottsym{=}  \lBrack  \langle \ottnt{C} \rangle_{ \ottmv{l'} }  \rBrack  $.
        So, the goal follows by the assumption.
    \end{match}
\end{prop}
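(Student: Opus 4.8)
The plan is to unfold the definition of \emph{order-defined} and reduce the claim to a per-location statement about graph interpretations. By definition, $\ottmv{l} \mathord: \ottsym{[} \ottnt{m} \ottsym{]} \ottsym{,} \ottnt{C}$ being order-defined means that for every location $\ottmv{l'}$ the graph $\lBrack \langle \ottmv{l} \mathord: \ottsym{[} \ottnt{m} \ottsym{]} \ottsym{,} \ottnt{C} \rangle_{ \ottmv{l'} } \rBrack$ has a unique topological ordering, i.e.\ is traceable; and the goal ``$\ottnt{C}$ order-defined'' unfolds to: for every $\ottmv{l'}$, the graph $\lBrack \langle \ottnt{C} \rangle_{ \ottmv{l'} } \rBrack$ is traceable. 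Hence it suffices to fix an arbitrary location $\ottmv{l'}$ and prove that traceability of $\lBrack \langle \ottmv{l} \mathord: \ottsym{[} \ottnt{m} \ottsym{]} \ottsym{,} \ottnt{C} \rangle_{ \ottmv{l'} } \rBrack$ entails traceability of $\lBrack \langle \ottnt{C} \rangle_{ \ottmv{l'} } \rBrack$. Everything is phrased locally in $\ottmv{l'}$, so no global reasoning about $\ottnt{C}$ is required.

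First I would record the immediate structural fact that the focus operation commutes with the sequential context former, so $\langle \ottmv{l} \mathord: \ottsym{[} \ottnt{m} \ottsym{]} \ottsym{,} \ottnt{C} \rangle_{ \ottmv{l'} }$ equals $\langle \ottmv{l} \mathord: \ottsym{[} \ottnt{m} \ottsym{]} \rangle_{ \ottmv{l'} } \ottsym{,} \langle \ottnt{C} \rangle_{ \ottmv{l'} }$; this is read off the definition of focus, which replaces each off-target location binding by $\cdot$ homomorphically over the context structure. Passing this through the interpretation clause for ``,'', the graph becomes the join $\GJOIN{\lBrack \langle \ottmv{l} \mathord: \ottsym{[} \ottnt{m} \ottsym{]} \rangle_{ \ottmv{l'} } \rBrack}{\lBrack \langle \ottnt{C} \rangle_{ \ottmv{l'} } \rBrack}$. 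I would then split on whether $ \ottmv{l'} \ottsym{=} \ottmv{l} $. If $ \ottmv{l'} \neq \ottmv{l} $, the focused leading binding collapses to $\cdot$, whose interpretation is the null graph, so by the join-with-null identity of \propref{graph/iso} the whole interpretation is literally $\lBrack \langle \ottnt{C} \rangle_{ \ottmv{l'} } \rBrack$ and the assumed traceability transfers verbatim. If $ \ottmv{l'} \ottsym{=} \ottmv{l} $, the interpretation is the genuine join $\GJOIN{\lBrack \ottmv{l} \mathord: \ottsym{[} \ottnt{m} \ottsym{]} \rBrack}{\lBrack \langle \ottnt{C} \rangle_{ \ottmv{l} } \rBrack}$; here I invoke \propref{graph/traceable/join}, whose ``only if'' direction says that each component of a traceable join is itself traceable, yielding traceability of the right component $\lBrack \langle \ottnt{C} \rangle_{ \ottmv{l} } \rBrack$, which is exactly the goal.

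This argument is essentially bookkeeping, and I do not expect a deep obstacle. The single load-bearing ingredient is \propref{graph/traceable/join}, which provides the descent of unique-topological-ordering from the whole join to its components. The only point that genuinely needs care is the interaction between the syntactic focus operation and the context formers: one must be confident that focusing on $\ottmv{l'}$ rewrites the leading ordered binding to either itself or $\cdot$ while leaving the tail $\ottnt{C}$ untouched, so that the interpretation factors cleanly as a join rather than as some entangled graph. Once that factorization is secured, both cases close by a direct appeal to the join identities, and the result follows.
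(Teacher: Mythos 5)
Your proposal is correct and follows essentially the same route as the paper's proof: reduce to per-location traceability, split on $\ottmv{l'} \ottsym{=} \ottmv{l}$, use \propref{graph/traceable/join} in the equal case, and observe that the interpretation collapses to $\lBrack \langle \ottnt{C} \rangle_{ \ottmv{l'} } \rBrack$ in the unequal case (your explicit join-with-null step just spells out the equality the paper asserts directly).
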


\begin{prop}{env/ordering/push}
    If $\ottnt{C}$ is order-defined, then $\ottmv{l}  \mathord:  \ottsym{[}  \ottnt{m}  \ottsym{]}  \ottsym{,}  \ottnt{C}$ is order-defined.

    \proof We show
    \begin{quote}
        if $ \lBrack  \langle \ottnt{C} \rangle_{ \ottmv{l'} }  \rBrack $ is traceable, then $ \lBrack  \langle \ottsym{(}  \ottmv{l}  \mathord:  \ottsym{[}  \ottnt{m}  \ottsym{]}  \ottsym{,}  \ottnt{C}  \ottsym{)} \rangle_{ \ottmv{l'} }  \rBrack $ is traceable
    \end{quote}
    for any $\ottmv{l'}$.
    We consider wether $ \ottmv{l'} \ottsym{=} \ottmv{l} $ or not.
    \begin{match}
        \item[$ \ottmv{l'} \ottsym{=} \ottmv{l} $]
        In this case, $  \lBrack  \langle \ottsym{(}  \ottmv{l}  \mathord:  \ottsym{[}  \ottnt{m}  \ottsym{]}  \ottsym{,}  \ottnt{C}  \ottsym{)} \rangle_{ \ottmv{l'} }  \rBrack  \ottsym{=}  \GJOIN{  \lBrack \ottmv{l}  \mathord:  \ottsym{[}  \ottnt{m}  \ottsym{]} \rBrack  }{  \lBrack  \langle \ottnt{C} \rangle_{ \ottmv{l'} }  \rBrack  }  $.
        So, the goal follows by \propref{graph/traceable/join}.

        \item[$ \ottmv{l'} \neq \ottmv{l} $]
        In this case, $  \lBrack  \langle \ottsym{(}  \ottmv{l}  \mathord:  \ottsym{[}  \ottnt{m}  \ottsym{]}  \ottsym{,}  \ottnt{C}  \ottsym{)} \rangle_{ \ottmv{l'} }  \rBrack  \ottsym{=}  \lBrack  \langle \ottnt{C} \rangle_{ \ottmv{l'} }  \rBrack  $.
        So, the goal follows by the assumption.
    \end{match}
\end{prop}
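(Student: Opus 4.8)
The plan is to unfold the definition of \emph{order-defined} and reduce the claim to a traceability statement about focused interpretations. By definition, $l:[m], C$ is order-defined exactly when $\lBrack \langle l:[m], C \rangle_{l'} \rBrack$ is traceable — that is, has a unique topological ordering — for every location $l'$. So first I would fix an arbitrary $l'$ and aim to show this graph is traceable, using as hypothesis that $C$ is order-defined, i.e.\ that $\lBrack \langle C \rangle_{l''} \rBrack$ is traceable for every $l''$.

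Next I would split on whether $l' = l$. In the case $l' = l$, the focus operation retains the leading binding $l:[m]$ together with the $l$-labelled bindings of $C$, so $\langle l:[m], C \rangle_{l}$ is the ordered composition $l:[m], \langle C \rangle_{l}$, whose interpretation is the join $\GJOIN{\lBrack l:[m] \rBrack}{\lBrack \langle C \rangle_{l} \rBrack}$. Since the hypothesis gives that $\lBrack \langle C \rangle_{l} \rBrack$ is traceable and a single-vertex graph is trivially traceable, \propref{graph/traceable/join} yields traceability of the join. In the case $l' \neq l$, the focus replaces the leading binding $l:[m]$ by $\cdot$ (because $l \neq l'$) and focuses the tail, so $\lBrack \langle l:[m], C \rangle_{l'} \rBrack$ coincides with $\lBrack \langle C \rangle_{l'} \rBrack$ up to the unit law for ordered composition; this is traceable directly by the hypothesis.

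The only delicate point I expect is the bookkeeping that justifies how the focus operation commutes with the ordered context former ``,'' — concretely, establishing the two identities for $\lBrack \langle l:[m], C \rangle_{l'} \rBrack$ that drive each case. These follow immediately from the definitions of \emph{focus on location} and of \emph{context interpretation}, but they must be spelled out rather than assumed. All the substantive graph-theoretic content is packaged inside \propref{graph/traceable/join}, so no genuinely new argument is needed: this lemma is the exact mirror image of \propref{env/ordering/pop}, exploiting the same join decomposition, with the hypothesis and conclusion interchanged.
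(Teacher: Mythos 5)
Your proposal is correct and matches the paper's proof essentially step for step: unfold order-definedness to a per-location traceability claim, split on $l' = l$, use the join decomposition $\lBrack \langle l\mathord:[m], C\rangle_{l}\rBrack = \GJOIN{\lBrack l\mathord:[m]\rBrack}{\lBrack\langle C\rangle_{l}\rBrack}$ together with \propref{graph/traceable/join} in the first case, and the collapse to $\lBrack\langle C\rangle_{l'}\rBrack$ in the second. Your extra care about spelling out how focus commutes with the ordered context former (and that join with the null graph is literally an identity) only makes explicit what the paper leaves implicit.
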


\begin{prop}{env/usage/concat}
    $  \overline{ \ottnt{C_{{\mathrm{1}}}}  \ottsym{,}  \ottnt{C_{{\mathrm{2}}}} }  \ottsym{=}   \overline{ \ottnt{C_{{\mathrm{1}}}} }  \odot  \overline{ \ottnt{C_{{\mathrm{2}}}} }   $.

    \proof By definition, we have some topological ordering $f$ of $ \lBrack \ottnt{C_{{\mathrm{1}}}}  \ottsym{,}  \ottnt{C_{{\mathrm{2}}}} \rBrack $.
    From that, we have topological orderings $f_1$ and $f_2$ of $ \lBrack \ottnt{C_{{\mathrm{1}}}} \rBrack $ and $ \lBrack \ottnt{C_{{\mathrm{2}}}} \rBrack $ such that $f = f_1 + f_2$ by \propref{graph/ordering/join/inv}.
    By the way, each topological ordering of $ \lBrack \ottnt{C_{{\mathrm{1}}}} \rBrack $ and $ \lBrack \ottnt{C_{{\mathrm{2}}}} \rBrack $ is unique.
    So, the goal follows by definition.
\end{prop}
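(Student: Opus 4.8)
The plan is to reduce the statement to the join decomposition of topological orderings. By the definition of the context interpretation, $\lBrack \ottnt{C_{{\mathrm{1}}}}  \ottsym{,}  \ottnt{C_{{\mathrm{2}}}} \rBrack = \GJOIN{\lBrack \ottnt{C_{{\mathrm{1}}}} \rBrack}{\lBrack \ottnt{C_{{\mathrm{2}}}} \rBrack}$ (recalling that for a run-time context the set part of the interpretation is empty, so only the graph part is relevant). The usage projection $\overline{\ottnt{C}}$ is by definition the $\odot$-product of the vertex labels read off along the unique topological ordering of $\lBrack \ottnt{C} \rBrack$, so the whole claim amounts to showing that this product factors through the join of the two interpretations.

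First I would settle well-definedness of the right-hand side. Since the statement presupposes that $\lBrack \ottnt{C_{{\mathrm{1}}}}  \ottsym{,}  \ottnt{C_{{\mathrm{2}}}} \rBrack$ has a unique topological ordering, i.e.\ is traceable, \propref{graph/traceable/join} gives that both $\lBrack \ottnt{C_{{\mathrm{1}}}} \rBrack$ and $\lBrack \ottnt{C_{{\mathrm{2}}}} \rBrack$ are traceable; hence $\overline{\ottnt{C_{{\mathrm{1}}}}}$ and $\overline{\ottnt{C_{{\mathrm{2}}}}}$ are themselves well-defined, and their product makes sense.

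Next I would take the unique topological ordering $f$ of $\lBrack \ottnt{C_{{\mathrm{1}}}}  \ottsym{,}  \ottnt{C_{{\mathrm{2}}}} \rBrack$ and apply \propref{graph/ordering/join/inv} to obtain topological orderings $f_1$ and $f_2$ of $\lBrack \ottnt{C_{{\mathrm{1}}}} \rBrack$ and $\lBrack \ottnt{C_{{\mathrm{2}}}} \rBrack$ with $f = f_1 + f_2$. By traceability of the two components, $f_1$ and $f_2$ are forced to be exactly the unique orderings that define $\overline{\ottnt{C_{{\mathrm{1}}}}}$ and $\overline{\ottnt{C_{{\mathrm{2}}}}}$. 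The definition of the join $f_1 + f_2$ lists the $n_1$ vertices of $\lBrack \ottnt{C_{{\mathrm{1}}}} \rBrack$ first (in the order given by $f_1$) and then the $n_2$ vertices of $\lBrack \ottnt{C_{{\mathrm{2}}}} \rBrack$ (in the order given by $f_2$); crucially, the vertex labelling of $\GJOIN{\lBrack \ottnt{C_{{\mathrm{1}}}} \rBrack}{\lBrack \ottnt{C_{{\mathrm{2}}}} \rBrack}$ restricts on the two blocks to the labellings of the respective components. Hence the label sequence read off along $f$ is precisely the $\ottnt{C_{{\mathrm{1}}}}$-sequence followed by the $\ottnt{C_{{\mathrm{2}}}}$-sequence, and associativity of $\odot$ yields $\overline{\ottnt{C_{{\mathrm{1}}}}  \ottsym{,}  \ottnt{C_{{\mathrm{2}}}}} = \overline{\ottnt{C_{{\mathrm{1}}}}} \odot \overline{\ottnt{C_{{\mathrm{2}}}}}$.

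The main obstacle I anticipate is bookkeeping rather than conceptual: threading the index arithmetic of the join $+$ through the definition of the usage projection so that the concatenation of label sequences is exact (the vertex at position $i < n_1$ carries label $v_1(f_1(i))$ and the vertex at position $n_1 + j$ carries label $v_2(f_2(j))$), together with the care needed to argue that the presupposed uniqueness is exactly what pins $f$ down to be $f_1 + f_2$ with $f_1, f_2$ the component orderings. Once the label sequence is shown to split as such a concatenation, the desired equality is immediate from associativity of the partial monoid multiplication.
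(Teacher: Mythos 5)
Your proposal is correct and follows essentially the same route as the paper's proof: decompose the unique topological ordering of $\lBrack \ottnt{C_{{\mathrm{1}}}}  \ottsym{,}  \ottnt{C_{{\mathrm{2}}}} \rBrack$ as $f = f_1 + f_2$ via \propref{graph/ordering/join/inv}, pin down $f_1, f_2$ by uniqueness of the component orderings, and read off the concatenated label sequence. You are in fact slightly more careful than the paper, which asserts the uniqueness of the component orderings without citing \propref{graph/traceable/join} as you do.
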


\begin{prop}{env/usage/replace/ctx}
    Suppose $\ottnt{C}  \not\simeq   \cdot $.
    If
    \begin{gather}
          \overline{ \mathcal{C}  \ottsym{[}  \ottnt{C}  \ottsym{]} }  \ottsym{=} \ottnt{m}  \hyp{0.3}\\
          \overline{ \ottnt{C} }  \ottsym{=} \ottnt{m_{{\mathrm{2}}}}  \hyp{0.1}\\
          \overline{ \ottnt{C'} }  \ottsym{=} \ottnt{m'_{{\mathrm{2}}}}  \hyp{0.2},
    \end{gather}
    then there exist $\ottnt{m_{{\mathrm{1}}}}$ and $\ottnt{m_{{\mathrm{3}}}}$ such that
    \begin{enumerate}
        \item $ \ottnt{m} \ottsym{=}   \ottnt{m_{{\mathrm{1}}}} \odot \ottnt{m_{{\mathrm{2}}}}  \odot \ottnt{m_{{\mathrm{3}}}}  $ and
        \item $  \overline{ \mathcal{C}  \ottsym{[}  \ottnt{C'}  \ottsym{]} }  \ottsym{=}   \ottnt{m_{{\mathrm{1}}}} \odot \ottnt{m'_{{\mathrm{2}}}}  \odot \ottnt{m_{{\mathrm{3}}}}  $.
    \end{enumerate}

    \proof By structural induction on $\mathcal{C}$.
    \begin{match}
        \item[$ \mathcal{C} \ottsym{=} \ottsym{[]} $]
        In this case, $ \ottnt{m} \ottsym{=} \ottnt{m_{{\mathrm{2}}}} $.
        So, we choose $ \ottnt{m_{{\mathrm{1}}}} \ottsym{=} \varepsilon $ and $ \ottnt{m_{{\mathrm{3}}}} \ottsym{=} \varepsilon $, which clearly
        satisfy the condition in the goal.

        \item[$ \mathcal{C} \ottsym{=} \mathcal{C}_{{\mathrm{1}}}  \ottsym{,}  \ottnt{C_{{\mathrm{2}}}} $]
        In this case, $ \ottnt{m} \ottsym{=}   \overline{ \mathcal{C}_{{\mathrm{1}}}  \ottsym{[}  \ottnt{C}  \ottsym{]} }  \odot  \overline{ \ottnt{C_{{\mathrm{2}}}} }   $ by \propref{env/usage/concat}.
        We have some $\ottnt{m'_{{\mathrm{1}}}}$ and $\ottnt{m'_{{\mathrm{3}}}}$ such that
        \begin{gather}
              \overline{ \mathcal{C}_{{\mathrm{1}}}  \ottsym{[}  \ottnt{C}  \ottsym{]} }  \ottsym{=}   \ottnt{m'_{{\mathrm{1}}}} \odot \ottnt{m_{{\mathrm{2}}}}  \odot \ottnt{m'_{{\mathrm{3}}}}   \hyp{2.1}\\
              \overline{ \mathcal{C}_{{\mathrm{1}}}  \ottsym{[}  \ottnt{C'}  \ottsym{]} }  \ottsym{=}   \ottnt{m'_{{\mathrm{1}}}} \odot \ottnt{m'_{{\mathrm{2}}}}  \odot \ottnt{m'_{{\mathrm{3}}}}   \hyp{2.2}
        \end{gather}
        by the induction hypothesis.
        To this end, we choose $ \ottnt{m_{{\mathrm{1}}}} \ottsym{=} \ottnt{m'_{{\mathrm{1}}}} $ and $ \ottnt{m_{{\mathrm{3}}}} \ottsym{=}  \ottnt{m'_{{\mathrm{3}}}} \odot  \overline{ \ottnt{C_{{\mathrm{2}}}} }   $.
        We can see
        \begin{gather*}
               \ottnt{m_{{\mathrm{1}}}} \odot \ottnt{m_{{\mathrm{2}}}}  \odot \ottnt{m_{{\mathrm{3}}}}  \ottsym{=}    \ottnt{m'_{{\mathrm{1}}}} \odot \ottnt{m_{{\mathrm{2}}}}  \odot \ottnt{m'_{{\mathrm{3}}}}  \odot  \overline{ \ottnt{C_{{\mathrm{2}}}} }    =    \overline{ \mathcal{C}_{{\mathrm{1}}}  \ottsym{[}  \ottnt{C}  \ottsym{]} }  \odot  \overline{ \ottnt{C_{{\mathrm{2}}}} }   \ottsym{=} \ottnt{m} ,
        \end{gather*}
        and
        \begin{gather*}
               \ottnt{m_{{\mathrm{1}}}} \odot \ottnt{m'_{{\mathrm{2}}}}  \odot \ottnt{m_{{\mathrm{3}}}}  \ottsym{=}    \ottnt{m'_{{\mathrm{1}}}} \odot \ottnt{m'_{{\mathrm{2}}}}  \odot \ottnt{m'_{{\mathrm{3}}}}  \odot  \overline{ \ottnt{C_{{\mathrm{2}}}} }    =    \overline{ \mathcal{C}_{{\mathrm{1}}}  \ottsym{[}  \ottnt{C'}  \ottsym{]} }  \odot  \overline{ \ottnt{C_{{\mathrm{2}}}} }   \ottsym{=}  \overline{ \mathcal{C}  \ottsym{[}  \ottnt{C'}  \ottsym{]} }  
        \end{gather*}
        by \propref{env/usage/concat}.

        \item[$ \mathcal{C} \ottsym{=} \ottnt{C_{{\mathrm{1}}}}  \ottsym{,}  \mathcal{C}_{{\mathrm{2}}} $] Similar to the case $ \mathcal{C} \ottsym{=} \mathcal{C}_{{\mathrm{1}}}  \ottsym{,}  \ottnt{C_{{\mathrm{2}}}} $.

        \item[$ \mathcal{C} \ottsym{=} \mathcal{C}_{{\mathrm{1}}}  \parallel  \ottnt{C_{{\mathrm{2}}}} $]
        In this case, $ |  \lBrack \mathcal{C}_{{\mathrm{1}}}  \ottsym{[}  \ottnt{C}  \ottsym{]} \rBrack  |_{\bullet}  = 0$ or $ |  \lBrack \ottnt{C_{{\mathrm{2}}}} \rBrack  |_{\bullet}  = 0$ because if the orders of both graphs are greater than zero $ \overline{ \mathcal{C}_{{\mathrm{1}}}  \ottsym{[}  \ottnt{C}  \ottsym{]}  \parallel  \ottnt{C_{{\mathrm{2}}}} } $ is undefined by \propref{graph/ordering/union/multi}.
        By the way, for any $\ottnt{C}$, we have $ \lBrack \ottnt{C} \rBrack   \simeq   \lBrack  \cdot  \rBrack $ by $  |  \lBrack \ottnt{C} \rBrack  |_{\bullet}  \ottsym{=} \ottsym{0} $, which implies $\ottnt{C}  \simeq   \cdot $ by definition.
        So, it suffices to consider the following cases.
        \begin{match}
            \item[$\mathcal{C}_{{\mathrm{1}}}  \ottsym{[}  \ottnt{C}  \ottsym{]}  \simeq   \cdot $]
            We have $\ottnt{C}  \simeq   \cdot $ by \propref{env/runtime/equiv/null/ctx}.
            This contradicts to the assumption $\ottnt{C}  \not\simeq   \cdot $.
            So, this case is vacuously true.

            \item[$\ottnt{C_{{\mathrm{2}}}}  \simeq   \cdot $]
            We have $\mathcal{C}  \ottsym{[}  \ottnt{C}  \ottsym{]}  \simeq  \mathcal{C}_{{\mathrm{1}}}  \ottsym{[}  \ottnt{C}  \ottsym{]}$ and $\mathcal{C}  \ottsym{[}  \ottnt{C'}  \ottsym{]}  \simeq  \mathcal{C}_{{\mathrm{1}}}  \ottsym{[}  \ottnt{C'}  \ottsym{]}$.
            We know $ \lBrack \mathcal{C}  \ottsym{[}  \ottnt{C}  \ottsym{]} \rBrack $ and $ \lBrack \mathcal{C}  \ottsym{[}  \ottnt{C'}  \ottsym{]} \rBrack $ are traceable.
            So, $ \lBrack \mathcal{C}_{{\mathrm{1}}}  \ottsym{[}  \ottnt{C}  \ottsym{]} \rBrack $ and $ \lBrack \mathcal{C}_{{\mathrm{1}}}  \ottsym{[}  \ottnt{C'}  \ottsym{]} \rBrack $ are traceable by \propref{graph/traceable/iso}.
            So, $  \overline{ \mathcal{C}  \ottsym{[}  \ottnt{C}  \ottsym{]} }  \ottsym{=}  \overline{ \mathcal{C}_{{\mathrm{1}}}  \ottsym{[}  \ottnt{C}  \ottsym{]} }  $ and $  \overline{ \mathcal{C}  \ottsym{[}  \ottnt{C'}  \ottsym{]} }  \ottsym{=}  \overline{ \mathcal{C}_{{\mathrm{1}}}  \ottsym{[}  \ottnt{C'}  \ottsym{]} }  $. by \propref{env/runtime/usage/equiv}.
            Now, the goal follows by the induction hypothesis.
        \end{match}

        \item[$ \mathcal{C} \ottsym{=} \ottnt{C_{{\mathrm{1}}}}  \parallel  \mathcal{C}_{{\mathrm{2}}} $] Similar to the case $ \mathcal{C} \ottsym{=} \mathcal{C}_{{\mathrm{1}}}  \parallel  \ottnt{C_{{\mathrm{2}}}} $.
    \end{match}
\end{prop}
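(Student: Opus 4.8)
The plan is to prove the statement by structural induction on the run-time context pattern $\mathcal{C}$, using the fact that the usage projection distributes over sequential composition, i.e.\ $\overline{C_1, C_2} = \overline{C_1} \odot \overline{C_2}$ (\propref{env/usage/concat}). The intuition I want to make precise is that the usage of $\mathcal{C}[C]$ always surrounds the usage of the hole's contents with a fixed prefix $m_1$ and suffix $m_3$ that do not depend on whether the hole is filled with $C$ or $C'$; everything else is bookkeeping with the monoid laws. Note that both $\overline{\mathcal{C}[C]}$ and $\overline{\mathcal{C}[C']}$ must be defined throughout (the former by hypothesis, the latter as part of the goal), so I will carry along the traceability of the relevant interpretations as an invariant.

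For the base case $\mathcal{C} = []$ I would take $m_1 = m_3 = \varepsilon$, so that the two required equations reduce to $m = m_2$ and $\overline{\mathcal{C}[C']} = m'_2$, which hold by hypothesis. For the two sequential cases $\mathcal{C} = \mathcal{C}_1, C_2$ and $\mathcal{C} = C_1, \mathcal{C}_2$, I would rewrite $\overline{\mathcal{C}[C]}$ and $\overline{\mathcal{C}[C']}$ via \propref{env/usage/concat}, apply the induction hypothesis to the smaller pattern $\mathcal{C}_1$ (with the same $C$ and $C'$) to obtain witnesses $m'_1, m'_3$, and then absorb the flanking usage $\overline{C_2}$ (resp.\ $\overline{C_1}$) into the suffix $m_3 \mathrel{:=} m'_3 \odot \overline{C_2}$ (resp.\ the prefix $m_1 \mathrel{:=} \overline{C_1} \odot m'_1$). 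The two target equations then follow from associativity of $\odot$.

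The main obstacle is the parallel case $\mathcal{C} = \mathcal{C}_1 \| C_2$ (and its mirror image $\mathcal{C} = C_1 \| \mathcal{C}_2$). Here the usage projection is defined only when the interpretation is traceable, i.e.\ has a \emph{unique} topological ordering; but a union of two graph representations each with at least one ordered vertex admits several topological orderings (\propref{graph/ordering/union/multi}). Hence one operand must have no ordered vertices, i.e.\ be $\simeq \cdot$. This is exactly where the hypothesis $C \not\simeq \cdot$ is needed: if the inner sub-pattern $\mathcal{C}_1[C]$ were $\simeq \cdot$, then \propref{env/runtime/equiv/null/ctx} would force $C \simeq \cdot$, contradicting the assumption, so that branch is vacuous. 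In the remaining branch $C_2 \simeq \cdot$, I would use that $\mathcal{C}[C] \simeq \mathcal{C}_1[C]$ and $\mathcal{C}[C'] \simeq \mathcal{C}_1[C']$, transport traceability along these isomorphisms by \propref{graph/traceable/iso}, conclude that the usages agree by \propref{env/runtime/usage/equiv}, and thereby reduce the goal to the induction hypothesis for $\mathcal{C}_1$.

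I expect the delicate part to be confined entirely to this parallel case: one must correctly identify that it is the \emph{outer} flanking context, not the hole's contents, that collapses to $\cdot$, and one must keep track of traceability so that \propref{env/runtime/usage/equiv} is actually applicable. The sequential cases are routine monoid rearrangements, and the base case is immediate, so the argument is really about making the vacuity-and-reduction pattern in the union case airtight.
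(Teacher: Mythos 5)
Your proposal is correct and follows essentially the same route as the paper's proof: the same structural induction, the same use of \propref{env/usage/concat} with associativity for the sequential cases, and the same vacuity-versus-reduction split in the parallel case via \propref{graph/ordering/union/multi}, \propref{env/runtime/equiv/null/ctx}, \propref{graph/traceable/iso}, and \propref{env/runtime/usage/equiv}. Nothing is missing; your handling of the mirror sequential case (absorbing $\overline{C_1}$ into the prefix) is exactly the intended symmetric argument.
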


  \subsection{Properties for typing}

\begin{prop}{typing/env/remove/well-formed}
    If $\Gamma$ is well-formed, then $ \Gamma ^{- \ottnt{b} } $ is well-formed.

    \proof We can see $ \ottsym{\#} _{ \ottnt{b'} }(  \Gamma ^{- \ottnt{b} }  )  \le  \ottsym{\#} _{ \ottnt{b'} }( \Gamma ) $ for any $\ottnt{b'}$.
    So, the goal follows by the aassumption.
\end{prop}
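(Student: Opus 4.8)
The plan is to reduce well-formedness of $\Gamma^{-b}$ to that of $\Gamma$ through a single monotonicity observation: deleting all occurrences of $b$ can only lower occurrence counts, while both clauses of well-formedness are constraints of the form ``a certain count is at most some bound'' (respectively, a uniqueness condition on the types of the bindings that are present). Since $\Gamma^{-b}$ is by definition $\Gamma[\cdot/b]$, the substitution replaces each leaf labelled $b$ by the empty context $\cdot$ and leaves every other leaf intact. So the first thing I would prove is the lemma that for every binding $b'$ we have $\#_{b'}(\Gamma^{-b}) \le \#_{b'}(\Gamma)$.

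This lemma I would prove by structural induction on $\Gamma$, using that $\#_{b'}$ distributes additively over the context formers (cf.\ \propref{env/num/bindings/ctx}). The only nontrivial leaves are the singletons: if the leaf is $b$ it contributes $1$ to $\#_b$ and $0$ after replacement, so $\#_b$ drops there while $\#_{b'}$ for $b' \ne b$ is unaffected; if the leaf is any other binding it is untouched. Summing over the tree gives $\#_b(\Gamma^{-b}) = 0$ and $\#_{b'}(\Gamma^{-b}) = \#_{b'}(\Gamma)$ for $b' \ne b$, hence the inequality in all cases. I would also record the immediate consequence that $b' \in \Gamma^{-b}$ implies $b' \in \Gamma$, since a positive count in $\Gamma^{-b}$ forces a positive count in $\Gamma$; equivalently $\mathrm{dom}(\Gamma^{-b}) \subseteq \mathrm{dom}(\Gamma)$.

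With the lemma in hand I would verify the two well-formedness clauses for $\Gamma^{-b}$ directly. For the uniqueness-of-type clause, any two bindings $x{:}S, x{:}T \in \Gamma^{-b}$ are, by the consequence just noted, also in $\Gamma$, and well-formedness of $\Gamma$ then gives $S = T$. For the linearity clause, fix $x$ and an ordered type $T$; then $\#_{x:T}(\Gamma^{-b}) \le \#_{x:T}(\Gamma) \le 1$, where the first inequality is the lemma and the second is well-formedness of $\Gamma$. Both clauses hold, so $\Gamma^{-b}$ is well-formed.

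The argument is essentially routine; the only place demanding care is the monotonicity lemma, where one must handle every context former and, in particular, keep the two sub-cases $b' = b$ and $b' \ne b$ separate so that the single binding whose count actually changes is isolated from the rest. Everything else is bookkeeping on counts, and the two well-formedness conditions transfer mechanically once the count inequality and the domain inclusion are available.
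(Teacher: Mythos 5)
Your proposal is correct and follows exactly the paper's own argument: the paper's proof consists of precisely the monotonicity observation $\#_{b'}(\Gamma^{-b}) \le \#_{b'}(\Gamma)$ followed by appeal to well-formedness of $\Gamma$, and you have simply spelled out the induction behind that inequality and the transfer of the two well-formedness clauses. Nothing is missing and nothing differs in substance.
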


\begin{prop}{typing/env/hole/runtime/well-formed}
    If $\mathcal{G}  \ottsym{[}  \ottmv{x}  \mathord:  \ottnt{T}  \ottsym{]}$ is well-formed, then $\mathcal{G}  \ottsym{[}  \ottnt{C}  \ottsym{]}$ is well-formed.

    \proof We can see $ \ottsym{\#} _{ \ottmv{y}  \mathord:  \ottnt{S} }( \mathcal{G}  \ottsym{[}  \ottnt{C}  \ottsym{]} )  \le  \ottsym{\#} _{ \ottmv{y}  \mathord:  \ottnt{S} }( \mathcal{G}  \ottsym{[}  \ottmv{x}  \mathord:  \ottnt{T}  \ottsym{]} ) $ for any $\ottmv{y}  \mathord:  \ottnt{S}$ since $\ottnt{C}$ has no variable bindings.
    So, the goal follows by the assumption.
\end{prop}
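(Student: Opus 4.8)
The plan is to reduce both clauses of well-formedness to a single counting inequality on variable bindings, exploiting that $\ottnt{C}$ contains no variable bindings. First I would recall the additivity of occurrence counts under context filling, \propref{env/num/bindings/ctx}, which gives $ \ottsym{\#} _{ \ottnt{b} }( \mathcal{G}  \ottsym{[}  \Gamma  \ottsym{]} )  =  \ottsym{\#} _{ \ottnt{b} }( \mathcal{G} )  +  \ottsym{\#} _{ \ottnt{b} }( \Gamma ) $ for every binding $\ottnt{b}$ and every $\Gamma$. Instantiating this once with $\Gamma = \ottnt{C}$ and once with $\Gamma = \ottmv{x} \mathord: \ottnt{T}$ yields, for an arbitrary variable binding $\ottmv{y} \mathord: \ottnt{S}$,
\begin{align*}
   \ottsym{\#} _{ \ottmv{y} \mathord: \ottnt{S} }( \mathcal{G}  \ottsym{[}  \ottnt{C}  \ottsym{]} )  &=  \ottsym{\#} _{ \ottmv{y} \mathord: \ottnt{S} }( \mathcal{G} ) , \\
   \ottsym{\#} _{ \ottmv{y} \mathord: \ottnt{S} }( \mathcal{G}  \ottsym{[}  \ottmv{x} \mathord: \ottnt{T}  \ottsym{]} )  &=  \ottsym{\#} _{ \ottmv{y} \mathord: \ottnt{S} }( \mathcal{G} )  +  \ottsym{\#} _{ \ottmv{y} \mathord: \ottnt{S} }( \ottmv{x} \mathord: \ottnt{T} ) ,
\end{align*}
where $ \ottsym{\#} _{ \ottmv{y} \mathord: \ottnt{S} }( \ottnt{C} )  = 0$ because $\ottnt{C}$ is a run-time context and hence has no variable bindings. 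Since the remaining summand in the second line is nonnegative, I obtain the key inequality $ \ottsym{\#} _{ \ottmv{y} \mathord: \ottnt{S} }( \mathcal{G}  \ottsym{[}  \ottnt{C}  \ottsym{]} )  \le  \ottsym{\#} _{ \ottmv{y} \mathord: \ottnt{S} }( \mathcal{G}  \ottsym{[}  \ottmv{x} \mathord: \ottnt{T}  \ottsym{]} ) $ for every variable binding $\ottmv{y} \mathord: \ottnt{S}$.

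Next I would discharge the two well-formedness clauses of $\mathcal{G}  \ottsym{[}  \ottnt{C}  \ottsym{]}$ using this inequality and the assumed well-formedness of $\mathcal{G}  \ottsym{[}  \ottmv{x} \mathord: \ottnt{T}  \ottsym{]}$. For the uniqueness-of-types clause, suppose $ \ottmv{y} \mathord: \ottnt{S_{{\mathrm{1}}}}  \in  \mathcal{G}  \ottsym{[}  \ottnt{C}  \ottsym{]} $ and $ \ottmv{y} \mathord: \ottnt{S_{{\mathrm{2}}}}  \in  \mathcal{G}  \ottsym{[}  \ottnt{C}  \ottsym{]} $. Unfolding $\in$ as a positive occurrence count and applying the inequality above to each, both memberships lift to $ \ottmv{y} \mathord: \ottnt{S_{{\mathrm{1}}}}  \in  \mathcal{G}  \ottsym{[}  \ottmv{x} \mathord: \ottnt{T}  \ottsym{]} $ and $ \ottmv{y} \mathord: \ottnt{S_{{\mathrm{2}}}}  \in  \mathcal{G}  \ottsym{[}  \ottmv{x} \mathord: \ottnt{T}  \ottsym{]} $; hence $\ottnt{S_{{\mathrm{1}}}} = \ottnt{S_{{\mathrm{2}}}}$ by the first clause of the hypothesis. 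For the ordered-occurrence clause, for any $\ottmv{y}$ with $\ottkw{ord} \, \ottnt{S}$ the inequality gives $ \ottsym{\#} _{ \ottmv{y} \mathord: \ottnt{S} }( \mathcal{G}  \ottsym{[}  \ottnt{C}  \ottsym{]} )  \le  \ottsym{\#} _{ \ottmv{y} \mathord: \ottnt{S} }( \mathcal{G}  \ottsym{[}  \ottmv{x} \mathord: \ottnt{T}  \ottsym{]} )  \le 1$, the last step being the second clause of the hypothesis. This establishes both clauses, so $\mathcal{G}  \ottsym{[}  \ottnt{C}  \ottsym{]}$ is well-formed.

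The argument is essentially a bookkeeping lemma and I do not expect a genuine obstacle. The one point that requires care, and the only place where the hypothesis on $\ottnt{C}$ is used, is the observation $ \ottsym{\#} _{ \ottmv{y} \mathord: \ottnt{S} }( \ottnt{C} )  = 0$: because well-formedness is phrased purely in terms of variable bindings $\ottmv{y} \mathord: \ottnt{S}$, the location bindings that $\ottnt{C}$ introduces---even if duplicated and of ordered type $\ottsym{[}  \ottnt{m}  \ottsym{]}$---are invisible to both clauses and therefore cannot break well-formedness. I would state explicitly that this is why no assumption on $\ottnt{C}$ beyond ``no variable bindings'' is required, and in particular neither the type $\ottnt{T}$ nor the contents of $\ottnt{C}$ play any further role.
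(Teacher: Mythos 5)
Your proof is correct and takes essentially the same route as the paper: the paper's own argument is exactly your counting inequality $ \ottsym{\#} _{ \ottmv{y}  \mathord:  \ottnt{S} }( \mathcal{G}  \ottsym{[}  \ottnt{C}  \ottsym{]} )  \le  \ottsym{\#} _{ \ottmv{y}  \mathord:  \ottnt{S} }( \mathcal{G}  \ottsym{[}  \ottmv{x}  \mathord:  \ottnt{T}  \ottsym{]} ) $, justified by $\ottnt{C}$ having no variable bindings, from which both well-formedness clauses follow. You merely spell out the details the paper leaves implicit (the appeal to \propref{env/num/bindings/ctx} and the check of each clause), including the correct observation that well-formedness constrains only variable bindings, so duplicated location bindings in $\ottnt{C}$ are harmless.
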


\begin{prop}{typing/env/in/ord/unique}
    If $\mathcal{G}  \ottsym{[}  \ottmv{x}  \mathord:  \ottnt{S}  \ottsym{]}$ is well-formed and $\ottkw{ord} \, \ottnt{S}$, then $ \ottmv{x}  \mathord:  \ottnt{S}  \notin  \mathcal{G} $.

    \proof By definition $ \ottsym{\#} _{ \ottmv{x}  \mathord:  \ottnt{S} }( \mathcal{G}  \ottsym{[}  \ottmv{x}  \mathord:  \ottnt{S}  \ottsym{]} )  =  \ottsym{\#} _{ \ottmv{x}  \mathord:  \ottnt{S} }( \mathcal{G} )  + 1$.
    So, $  \ottsym{\#} _{ \ottmv{x}  \mathord:  \ottnt{S} }( \mathcal{G} )  \ottsym{=} \ottsym{0} $ since $\mathcal{G}  \ottsym{[}  \ottmv{x}  \mathord:  \ottnt{S}  \ottsym{]}$ is well-formed and $\ottkw{ord} \, \ottnt{S}$.
    That implies the goal $ \ottmv{x}  \mathord:  \ottnt{S}  \notin  \mathcal{G} $.
\end{prop}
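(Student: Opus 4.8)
The plan is to reduce the non-membership claim $ \ottmv{x}  \mathord:  \ottnt{S}  \notin  \mathcal{G} $ to a statement about occurrence counts and then discharge it using the well-formedness invariant. By the definition of $ \ottsym{\#} _{ \ottnt{b} }( \Gamma ) $ together with the convention that $ \ottnt{b}  \in  \Gamma $ means $ \ottsym{\#} _{ \ottnt{b} }( \Gamma )  > 0$, it suffices to establish $  \ottsym{\#} _{ \ottmv{x}  \mathord:  \ottnt{S} }( \mathcal{G} )  = 0 $.

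First I would apply additivity of occurrence counts under context-pattern filling, \propref{env/num/bindings/ctx}, instantiated with the hole filled by the binding $\ottmv{x}  \mathord:  \ottnt{S}$ itself; this gives $  \ottsym{\#} _{ \ottmv{x}  \mathord:  \ottnt{S} }( \mathcal{G}  \ottsym{[}  \ottmv{x}  \mathord:  \ottnt{S}  \ottsym{]} )  =  \ottsym{\#} _{ \ottmv{x}  \mathord:  \ottnt{S} }( \mathcal{G} )  +  \ottsym{\#} _{ \ottmv{x}  \mathord:  \ottnt{S} }( \ottmv{x}  \mathord:  \ottnt{S} )  $. The count of a binding in the singleton context consisting of exactly that binding is $1$ by definition (equivalently by \propref{env/in/binding}, which identifies $ \ottnt{b_{{\mathrm{1}}}}  \in  \ottnt{b_{{\mathrm{2}}}} $ with $ \ottnt{b_{{\mathrm{1}}}} = \ottnt{b_{{\mathrm{2}}}} $), so the right-hand side simplifies to $  \ottsym{\#} _{ \ottmv{x}  \mathord:  \ottnt{S} }( \mathcal{G} )  + 1 $.

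Next I would invoke the hypotheses. Since $\ottkw{ord} \, \ottnt{S}$, the binding $\ottmv{x}  \mathord:  \ottnt{S}$ is ordered, and the second clause of well-formedness---read precisely as $ \ottsym{\#} _{ \ottmv{x}  \mathord:  \ottnt{T} }( \Gamma )  \le 1$ for every ordered $\ottnt{T}$---applied to the well-formed context $\mathcal{G}  \ottsym{[}  \ottmv{x}  \mathord:  \ottnt{S}  \ottsym{]}$ yields $ \ottsym{\#} _{ \ottmv{x}  \mathord:  \ottnt{S} }( \mathcal{G}  \ottsym{[}  \ottmv{x}  \mathord:  \ottnt{S}  \ottsym{]} )  \le 1$. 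Combining this with the count identity gives $  \ottsym{\#} _{ \ottmv{x}  \mathord:  \ottnt{S} }( \mathcal{G} )  + 1 \le 1$, whence $  \ottsym{\#} _{ \ottmv{x}  \mathord:  \ottnt{S} }( \mathcal{G} )  = 0 $ and therefore $ \ottmv{x}  \mathord:  \ottnt{S}  \notin  \mathcal{G} $.

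I expect no real obstacle here: the result is a direct bookkeeping consequence of the additivity lemma and the well-formedness invariant. The only point demanding a little care is turning the informal phrase ``ordered bindings can occur at most once'' into the quantitative inequality and confirming that it is instantiated at the compound context $\mathcal{G}  \ottsym{[}  \ottmv{x}  \mathord:  \ottnt{S}  \ottsym{]}$ (not merely at $\mathcal{G}$); once the count identity and this inequality are both in hand, the arithmetic closes the argument immediately.
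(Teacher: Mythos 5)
Your proof is correct and follows essentially the same route as the paper's: both establish $ \ottsym{\#} _{ \ottmv{x}  \mathord:  \ottnt{S} }( \mathcal{G}  \ottsym{[}  \ottmv{x}  \mathord:  \ottnt{S}  \ottsym{]} )  =  \ottsym{\#} _{ \ottmv{x}  \mathord:  \ottnt{S} }( \mathcal{G} )  + 1$ (you just make the appeal to the additivity lemma \propref{env/num/bindings/ctx} and the singleton count explicit, where the paper says ``by definition'') and then use the ordered clause of well-formedness on the compound context to force the count in $\mathcal{G}$ to zero. Nothing is missing; your only addition is spelling out the bookkeeping the paper leaves implicit.
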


\begin{prop}{typing/env/replace/ord}
    If $\mathcal{G}  \ottsym{[}  \ottmv{x}  \mathord:  \ottnt{S}  \ottsym{]}$ is well-formed and $\ottkw{ord} \, \ottnt{S}$, then $ \mathcal{G}  \ottsym{[}  \ottmv{x}  \mathord:  \ottnt{S}  \ottsym{]}  \ottsym{[}  \Gamma  \ottsym{/}  \ottmv{x}  \mathord:  \ottnt{S}  \ottsym{]} \ottsym{=} \mathcal{G}  \ottsym{[}  \Gamma  \ottsym{]} $.

    \proof We know $ \ottmv{x}  \mathord:  \ottnt{S}  \notin  \mathcal{G} $ by \propref{typing/env/in/ord/unique}.
    So, $ \mathcal{G}  \ottsym{[}  \Gamma  \ottsym{/}  \ottmv{x}  \mathord:  \ottnt{S}  \ottsym{]} \ottsym{=} \mathcal{G} $, by which the goal follows.
\end{prop}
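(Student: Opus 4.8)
The plan is to reduce the whole statement to the uniqueness lemma for ordered bindings, \propref{typing/env/in/ord/unique}. The intuition is that the binding $\ottmv{x} \mathord: \ottnt{S}$ occupying the hole of $\mathcal{G}  \ottsym{[}  \ottmv{x}  \mathord:  \ottnt{S}  \ottsym{]}$ already accounts for the single occurrence of this ordered binding that well-formedness permits; hence the pattern part $\mathcal{G}$ must itself be free of $\ottmv{x} \mathord: \ottnt{S}$. Substituting $\Gamma$ for every copy of $\ottmv{x} \mathord: \ottnt{S}$ therefore only affects the hole, converting $\mathcal{G}  \ottsym{[}  \ottmv{x}  \mathord:  \ottnt{S}  \ottsym{]}$ into $\mathcal{G}  \ottsym{[}  \Gamma  \ottsym{]}$.

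First I would apply \propref{typing/env/in/ord/unique} (whose hypotheses are exactly the well-formedness of $\mathcal{G}  \ottsym{[}  \ottmv{x}  \mathord:  \ottnt{S}  \ottsym{]}$ and $\ottkw{ord} \, \ottnt{S}$) to obtain $ \ottmv{x}  \mathord:  \ottnt{S}  \notin  \mathcal{G} $. I would then split the goal into two structural facts about how hole filling $\mathcal{G}  \ottsym{[} \cdot \ottsym{]}$ interacts with binding substitution $\cdot\, \ottsym{[}  \Gamma  \ottsym{/}  \ottmv{x}  \mathord:  \ottnt{S}  \ottsym{]}$. The first is a distributivity fact: substitution commutes with filling except that the binding placed in the hole is precisely the one being replaced, so $\mathcal{G}  \ottsym{[}  \ottmv{x}  \mathord:  \ottnt{S}  \ottsym{]}\, \ottsym{[}  \Gamma  \ottsym{/}  \ottmv{x}  \mathord:  \ottnt{S}  \ottsym{]}$ equals $\ottsym{(} \mathcal{G}\, \ottsym{[}  \Gamma  \ottsym{/}  \ottmv{x}  \mathord:  \ottnt{S}  \ottsym{]} \ottsym{)}  \ottsym{[}  \Gamma  \ottsym{]}$. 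The second uses $ \ottmv{x}  \mathord:  \ottnt{S}  \notin  \mathcal{G} $ to conclude that the substitution leaves the pattern unchanged, i.e. $\mathcal{G}\, \ottsym{[}  \Gamma  \ottsym{/}  \ottmv{x}  \mathord:  \ottnt{S}  \ottsym{]} = \mathcal{G}$. Chaining the two yields $\mathcal{G}  \ottsym{[}  \ottmv{x}  \mathord:  \ottnt{S}  \ottsym{]}\, \ottsym{[}  \Gamma  \ottsym{/}  \ottmv{x}  \mathord:  \ottnt{S}  \ottsym{]} = \mathcal{G}  \ottsym{[}  \Gamma  \ottsym{]}$, as required.

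Both auxiliary facts are established by a routine structural induction on $\mathcal{G}$, running through the cases $\ottsym{[]}$, $\Gamma'  \ottsym{,}  \mathcal{G}'$, $\mathcal{G}'  \ottsym{,}  \Gamma'$, $\Gamma'  \parallel  \mathcal{G}'$, and $\mathcal{G}'  \parallel  \Gamma'$. In the base case $ \mathcal{G} \ottsym{=} \ottsym{[]} $ the hole occupant is literally the replaced binding, which gives $\Gamma$ directly; in each inductive case substitution acts componentwise, the inductive hypothesis handles the subpattern, and the non-hole side is untouched since it contains no $\ottmv{x} \mathord: \ottnt{S}$ (here the additivity of the occurrence count over the context formers, \propref{env/num/bindings/ctx}, makes the "contains no $\ottmv{x} \mathord: \ottnt{S}$" reasoning precise).

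The main obstacle is not any deep argument but aligning the bookkeeping of the two operations. One must be careful that the definition of $\mathcal{G}  \ottsym{[}  \ottmv{x}  \mathord:  \ottnt{S}  \ottsym{]}$ contributes exactly one occupant binding at the hole, and that $\cdot\, \ottsym{[}  \Gamma  \ottsym{/}  \ottmv{x}  \mathord:  \ottnt{S}  \ottsym{]}$ replaces occurrences of the \emph{whole} binding $\ottmv{x} \mathord: \ottnt{S}$ (type included), rather than merely the variable $\ottmv{x}$; only with this alignment does the distributivity step go through cleanly. Once it is in place, the uniqueness lemma carries all the real weight and the proposition follows in a couple of lines.
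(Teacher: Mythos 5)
Your proposal is correct and follows essentially the same route as the paper: both hinge on \propref{typing/env/in/ord/unique} to get $ \ottmv{x}  \mathord:  \ottnt{S}  \notin  \mathcal{G} $, from which the substitution only touches the hole occupant, yielding $\mathcal{G}  \ottsym{[}  \Gamma  \ottsym{]}$. The paper leaves the distributivity of substitution over hole-filling and the identity of substitution on an occurrence-free pattern implicit, whereas you spell them out as two inductive facts; this is just a more explicit rendering of the same two-line argument.
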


\begin{prop}{typing/closed}
    If $\Gamma  \vdash  \ottnt{M}  :  \ottnt{T}  \mid  \ottnt{e}$, then $  \ottkw{fv} ( \ottnt{M} )   \subseteq   \ottkw{dom} ( \Gamma )  $.

    \proof Routine by induction on the given derivation.
    We use \propref{env/in/sup} in the case \ruleref{T-Weaken}.
\end{prop}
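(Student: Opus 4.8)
The plan is to proceed by rule induction on the derivation of $\Gamma \vdash \ottnt{M} : \ottnt{T} \mid \ottnt{e}$, establishing $\ottkw{fv}(\ottnt{M}) \subseteq \ottkw{dom}(\Gamma)$ in each case from the inductive hypotheses together with the way $\ottkw{fv}$ and $\ottkw{dom}$ distribute over the term and context formers. I rely throughout on the elementary identities $\ottkw{dom}(\Gamma_{{\mathrm{1}}}, \Gamma_{{\mathrm{2}}}) = \ottkw{dom}(\Gamma_{{\mathrm{1}}} \parallel \Gamma_{{\mathrm{2}}}) = \ottkw{dom}(\Gamma_{{\mathrm{1}}}) \cup \ottkw{dom}(\Gamma_{{\mathrm{2}}})$ and, for context patterns, on $\ottkw{dom}(\mathcal{G}[\Gamma']) = \ottkw{dom}(\mathcal{G}) \cup \ottkw{dom}(\Gamma')$, which is immediate from \propref{env/num/bindings/ctx}. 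The axiom cases are dispatched first: for the constant rules (\ruleref{T-Unit}, \ruleref{T-New}, \ruleref{T-Op}, \ruleref{T-Split}, \ruleref{T-Drop}) and for \ruleref{T-Loc} the term has no free variables, so the inclusion is trivial, while for \ruleref{T-Var} we have $\ottnt{M} = \ottmv{x}$, $\Gamma = \ottmv{x} \mathord: \ottnt{T}$, and $\ottkw{fv}(\ottmv{x}) = \{\ottmv{x}\} = \ottkw{dom}(\Gamma)$.

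For the four abstraction rules (\ruleref{T-Abs}, \ruleref{T-UAbs}, \ruleref{T-RAbs}, \ruleref{T-LAbs}) the body is typed in a context extending $\Gamma$ by $\ottmv{x} \mathord: \ottnt{S}$ through one of the context formers, each with domain $\ottkw{dom}(\Gamma) \cup \{\ottmv{x}\}$; since taking $\ottkw{fv}$ of an abstraction removes the bound variable, the hypothesis $\ottkw{fv}(\ottnt{M}) \subseteq \ottkw{dom}(\Gamma) \cup \{\ottmv{x}\}$ gives $\ottkw{fv}(\ottnt{M}) \setminus \{\ottmv{x}\} \subseteq \ottkw{dom}(\Gamma)$. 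The application rules (\ruleref{T-App}, \ruleref{T-UApp}, \ruleref{T-RApp}, \ruleref{T-LApp}) and the pair rules (\ruleref{T-UPair}, \ruleref{T-OPair}) are handled uniformly: the conclusion context is a former applied to $\Gamma_{{\mathrm{1}}}$ and $\Gamma_{{\mathrm{2}}}$, whose domain is $\ottkw{dom}(\Gamma_{{\mathrm{1}}}) \cup \ottkw{dom}(\Gamma_{{\mathrm{2}}})$, and the free variables of the compound term are the union of those of its two subterms, so the two inductive hypotheses combine directly.

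The let-elimination rules (\ruleref{T-ULet}, \ruleref{T-OLet}) require slightly more bookkeeping because of the pattern $\mathcal{G}$: the scrutinee $\ottnt{M}$ is typed under $\Gamma$, the body $\ottnt{N}$ under $\mathcal{G}[\ottmv{x} \mathord: \ottnt{S_{{\mathrm{1}}}} \parallel \ottmv{y} \mathord: \ottnt{S_{{\mathrm{2}}}}]$ (respectively its concatenation variant), and the conclusion under $\mathcal{G}[\Gamma]$. The hypotheses give $\ottkw{fv}(\ottnt{M}) \subseteq \ottkw{dom}(\Gamma)$ and $\ottkw{fv}(\ottnt{N}) \subseteq \ottkw{dom}(\mathcal{G}) \cup \{\ottmv{x}, \ottmv{y}\}$; as the bound variables are removed from the free variables of $\ottnt{N}$ in the compound term, the remaining inclusion lands in $\ottkw{dom}(\Gamma) \cup \ottkw{dom}(\mathcal{G}) = \ottkw{dom}(\mathcal{G}[\Gamma])$.

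The step I expect to be the main obstacle is \ruleref{T-Weaken}, the only rule that alters the context without touching the term. Here the hypothesis yields just $\ottkw{fv}(\ottnt{M}) \subseteq \ottkw{dom}(\Gamma_{{\mathrm{1}}})$, whereas the conclusion is under $\Gamma_{{\mathrm{2}}} \lesssim \Gamma_{{\mathrm{1}}}$, so I must show that passing to a subcontext never deletes a binding. This is exactly \propref{env/in/sup}: from $\Gamma_{{\mathrm{2}}} \lesssim \Gamma_{{\mathrm{1}}}$ and $\ottnt{b} \in \Gamma_{{\mathrm{1}}}$ it gives $\ottnt{b} \in \Gamma_{{\mathrm{2}}}$, hence $\ottkw{dom}(\Gamma_{{\mathrm{1}}}) \subseteq \ottkw{dom}(\Gamma_{{\mathrm{2}}})$ and therefore $\ottkw{fv}(\ottnt{M}) \subseteq \ottkw{dom}(\Gamma_{{\mathrm{1}}}) \subseteq \ottkw{dom}(\Gamma_{{\mathrm{2}}})$. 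With this lemma the case closes immediately and the induction is complete.
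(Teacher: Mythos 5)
Your proof is correct and follows exactly the paper's approach: a routine induction on the typing derivation, with \propref{env/in/sup} supplying the inclusion $\ottkw{dom}(\Gamma_{{\mathrm{1}}}) \subseteq \ottkw{dom}(\Gamma_{{\mathrm{2}}})$ needed in the \ruleref{T-Weaken} case. Your instantiation of \propref{env/in/sup} is in the right direction (bindings of the weakened context $\Gamma_{{\mathrm{1}}}$ persist in $\Gamma_{{\mathrm{2}}}$), so nothing is missing.
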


\begin{prop}{typing/inv}
    \noindent
    \begin{statements}
        \item[unit] If $\Gamma  \vdash  \ottkw{unit}  :  \ottnt{T}  \mid  \ottnt{e}$, then $\Gamma  \lesssim   \cdot $ and $ \ottnt{T} \ottsym{=}  \mathtt{Unit}  $.
        \item[new] If $\Gamma  \vdash   \ottkw{new} _{ \ottnt{m} }   :  \ottnt{T}  \mid  \ottnt{e}$, then $\Gamma  \lesssim   \cdot $ and $ \ottnt{T} \ottsym{=}   \mathtt{Unit}  \rightarrow _{ \ottsym{0} } \ottsym{[}  \ottnt{m}  \ottsym{]}  $.
        \item[op] If $\Gamma  \vdash   \ottkw{op} _{ \ottnt{m_{{\mathrm{1}}}} }   :  \ottnt{T}  \mid  \ottnt{e}$, then $\Gamma  \lesssim   \cdot $, $ \ottnt{T} \ottsym{=}  \ottsym{[}  \ottnt{m_{{\mathrm{0}}}}  \ottsym{]} \rightarrow _{ \ottsym{1} } \ottsym{[}  \ottnt{m_{{\mathrm{2}}}}  \ottsym{]}  $, and $  \ottnt{m_{{\mathrm{1}}}} \odot \ottnt{m_{{\mathrm{2}}}}  \le \ottnt{m_{{\mathrm{0}}}} $ for some $\ottnt{m_{{\mathrm{0}}}}$ and $\ottnt{m_{{\mathrm{2}}}}$.
        \item[split] If $\Gamma  \vdash   \ottkw{split} _{ \ottnt{m_{{\mathrm{1}}}} , \ottnt{m_{{\mathrm{2}}}} }   :  \ottnt{T}  \mid  \ottnt{e}$, then $\Gamma  \lesssim   \cdot $, $ \ottnt{T} \ottsym{=}  \ottsym{[}  \ottnt{m_{{\mathrm{0}}}}  \ottsym{]} \rightarrow _{ \ottsym{0} } \ottsym{(}  \ottsym{[}  \ottnt{m_{{\mathrm{1}}}}  \ottsym{]}  \odot  \ottsym{[}  \ottnt{m_{{\mathrm{2}}}}  \ottsym{]}  \ottsym{)}  $, and $  \ottnt{m_{{\mathrm{1}}}} \odot \ottnt{m_{{\mathrm{2}}}}  \le \ottnt{m_{{\mathrm{0}}}} $ for some $\ottnt{m_{{\mathrm{0}}}}$.
        \item[drop] If $\Gamma  \vdash  \ottkw{drop}  :  \ottnt{T}  \mid  \ottnt{e}$, then $\Gamma  \lesssim   \cdot $, $ \ottnt{T} \ottsym{=}  \ottsym{[}  \ottnt{m}  \ottsym{]} \rightarrow _{ \ottsym{0} }  \mathtt{Unit}   $, and $ \varepsilon \le \ottnt{m} $ for some $\ottnt{m}$.
        \item[var] If $\Gamma  \vdash  \ottmv{x}  :  \ottnt{T}  \mid  \ottnt{e}$, then $\Gamma  \lesssim  \ottmv{x}  \mathord:  \ottnt{T}$.
        \item[loc] If $\Gamma  \vdash  \ottmv{l}  :  \ottnt{T}  \mid  \ottnt{e}$, then $\Gamma  \lesssim  \ottmv{l}  \mathord:  \ottsym{[}  \ottnt{m}  \ottsym{]}$ and $ \ottnt{T} \ottsym{=} \ottsym{[}  \ottnt{m}  \ottsym{]} $ for some $\ottnt{m}$.
        \item[abs] If $\Gamma  \vdash  \lambda  \ottmv{x}  \ottsym{.}  \ottnt{M}  :  \ottnt{T}  \mid  \ottnt{e}$, then $\Gamma  \lesssim  \Gamma_{{\mathrm{1}}}  \ottsym{,}  \ottmv{x}  \mathord:  \ottnt{T_{{\mathrm{1}}}}$, $ \ottnt{T} \ottsym{=}  \ottnt{T_{{\mathrm{1}}}} \rightarrow _{ \ottnt{e_{{\mathrm{1}}}} } \ottnt{T_{{\mathrm{2}}}}  $, $ \ottmv{x}  \notin   \ottkw{dom} ( \Gamma_{{\mathrm{1}}} )  $, $\ottkw{unr} \, \Gamma_{{\mathrm{1}}}$, and $\Gamma_{{\mathrm{1}}}  \ottsym{,}  \ottmv{x}  \mathord:  \ottnt{T_{{\mathrm{1}}}}  \vdash  \ottnt{M}  :  \ottnt{T_{{\mathrm{2}}}}  \mid  \ottnt{e_{{\mathrm{1}}}}$ for some $\Gamma_{{\mathrm{1}}}$, $\ottnt{T_{{\mathrm{1}}}}$, $\ottnt{T_{{\mathrm{2}}}}$, and $\ottnt{e_{{\mathrm{1}}}}$.
        \item[uabs] If $\Gamma  \vdash  \lambda^\circ  \ottmv{x}  \ottsym{.}  \ottnt{M}  :  \ottnt{T}  \mid  \ottnt{e}$, then $\Gamma  \lesssim  \Gamma_{{\mathrm{1}}}  \parallel  \ottmv{x}  \mathord:  \ottnt{T_{{\mathrm{1}}}}$, $ \ottnt{T} \ottsym{=}  \ottnt{T_{{\mathrm{1}}}} \rightarrowtriangle _{ \ottnt{e_{{\mathrm{1}}}} } \ottnt{T_{{\mathrm{2}}}}  $, $ \ottmv{x}  \notin   \ottkw{dom} ( \Gamma_{{\mathrm{1}}} )  $ and $\Gamma_{{\mathrm{1}}}  \parallel  \ottmv{x}  \mathord:  \ottnt{T_{{\mathrm{1}}}}  \vdash  \ottnt{M}  :  \ottnt{T_{{\mathrm{2}}}}  \mid  \ottnt{e_{{\mathrm{1}}}}$ for some $\Gamma_{{\mathrm{1}}}$, $\ottnt{T_{{\mathrm{1}}}}$, $\ottnt{T_{{\mathrm{2}}}}$, and $\ottnt{e_{{\mathrm{1}}}}$.
        \item[rabs] If $\Gamma  \vdash  \lambda^>  \ottmv{x}  \ottsym{.}  \ottnt{M}  :  \ottnt{T}  \mid  \ottnt{e}$, then $\Gamma  \lesssim  \Gamma_{{\mathrm{1}}}  \ottsym{,}  \ottmv{x}  \mathord:  \ottnt{T_{{\mathrm{1}}}}$, $ \ottnt{T} \ottsym{=}  \ottnt{T_{{\mathrm{1}}}} \twoheadrightarrow _{ \ottnt{e_{{\mathrm{1}}}} } \ottnt{T_{{\mathrm{2}}}}  $, $ \ottmv{x}  \notin   \ottkw{dom} ( \Gamma_{{\mathrm{1}}} )  $ and $\Gamma_{{\mathrm{1}}}  \ottsym{,}  \ottmv{x}  \mathord:  \ottnt{T_{{\mathrm{1}}}}  \vdash  \ottnt{M}  :  \ottnt{T_{{\mathrm{2}}}}  \mid  \ottnt{e_{{\mathrm{1}}}}$ for some $\Gamma_{{\mathrm{1}}}$, $\ottnt{T_{{\mathrm{1}}}}$, $\ottnt{T_{{\mathrm{2}}}}$, and $\ottnt{e_{{\mathrm{1}}}}$.
        \item[labs] If $\Gamma  \vdash  \lambda^<  \ottmv{x}  \ottsym{.}  \ottnt{M}  :  \ottnt{T}  \mid  \ottnt{e}$, then $\Gamma  \lesssim  \ottmv{x}  \mathord:  \ottnt{T_{{\mathrm{1}}}}  \ottsym{,}  \Gamma_{{\mathrm{1}}}$, $ \ottnt{T} \ottsym{=}  \ottnt{T_{{\mathrm{1}}}} \rightarrowtail _{ \ottnt{e_{{\mathrm{1}}}} } \ottnt{T_{{\mathrm{2}}}}  $, $ \ottmv{x}  \notin   \ottkw{dom} ( \Gamma_{{\mathrm{1}}} )  $ and $\ottmv{x}  \mathord:  \ottnt{T_{{\mathrm{1}}}}  \ottsym{,}  \Gamma_{{\mathrm{1}}}  \vdash  \ottnt{M}  :  \ottnt{T_{{\mathrm{2}}}}  \mid  \ottnt{e_{{\mathrm{1}}}}$ for some $\Gamma_{{\mathrm{1}}}$, $\ottnt{T_{{\mathrm{1}}}}$, $\ottnt{T_{{\mathrm{2}}}}$, and $\ottnt{e_{{\mathrm{1}}}}$.
        \item[app] If $\Gamma  \vdash  \ottnt{M_{{\mathrm{1}}}} \, \ottnt{M_{{\mathrm{2}}}}  :  \ottnt{T}  \mid  \ottnt{e}$, then $\Gamma  \lesssim  \Gamma_{{\mathrm{1}}}  \ottsym{,}  \Gamma_{{\mathrm{2}}}$, $   \ottnt{e_{{\mathrm{0}}}}  \sqcup  \ottnt{e_{{\mathrm{1}}}}   \sqcup  \ottnt{e_{{\mathrm{2}}}}  \le \ottnt{e} $, $\Gamma_{{\mathrm{1}}}  \vdash  \ottnt{M_{{\mathrm{1}}}}  :   \ottnt{S} \rightarrow _{ \ottnt{e_{{\mathrm{0}}}} } \ottnt{T}   \mid  \ottnt{e_{{\mathrm{1}}}}$, and $\Gamma_{{\mathrm{2}}}  \vdash  \ottnt{M_{{\mathrm{2}}}}  :  \ottnt{S}  \mid  \ottnt{e_{{\mathrm{2}}}}$ for some $\Gamma_{{\mathrm{1}}}$, $\Gamma_{{\mathrm{2}}}$, $\ottnt{S}$, $\ottnt{e_{{\mathrm{0}}}}$, $\ottnt{e_{{\mathrm{1}}}}$, and $\ottnt{e_{{\mathrm{2}}}}$.
        \item[uapp]  If $\Gamma  \vdash  \ottnt{M_{{\mathrm{1}}}}  {}^\circ  \ottnt{M_{{\mathrm{2}}}}  :  \ottnt{T}  \mid  \ottnt{e}$, then $\Gamma  \lesssim  \Gamma_{{\mathrm{1}}}  \parallel  \Gamma_{{\mathrm{2}}}$, $   \ottnt{e_{{\mathrm{0}}}}  \sqcup  \ottnt{e_{{\mathrm{1}}}}   \sqcup  \ottnt{e_{{\mathrm{2}}}}  \le \ottnt{e} $, $\Gamma_{{\mathrm{1}}}  \vdash  \ottnt{M_{{\mathrm{1}}}}  :   \ottnt{S} \rightarrowtriangle _{ \ottnt{e_{{\mathrm{0}}}} } \ottnt{T}   \mid  \ottnt{e_{{\mathrm{1}}}}$, and $\Gamma_{{\mathrm{2}}}  \vdash  \ottnt{M_{{\mathrm{2}}}}  :  \ottnt{S}  \mid  \ottnt{e_{{\mathrm{2}}}}$ for some $\Gamma_{{\mathrm{1}}}$, $\Gamma_{{\mathrm{2}}}$, $\ottnt{S}$, $\ottnt{e_{{\mathrm{0}}}}$, $\ottnt{e_{{\mathrm{1}}}}$, and $\ottnt{e_{{\mathrm{2}}}}$.
        \item[rapp]  If $\Gamma  \vdash  \ottnt{M_{{\mathrm{1}}}}  {}^>  \ottnt{M_{{\mathrm{2}}}}  :  \ottnt{T}  \mid  \ottnt{e}$, then $\Gamma  \lesssim  \Gamma_{{\mathrm{1}}}  \ottsym{,}  \Gamma_{{\mathrm{2}}}$, $  \ottnt{e_{{\mathrm{0}}}}  \sqcup  \ottnt{e_{{\mathrm{1}}}}  \le \ottnt{e} $, $\Gamma_{{\mathrm{1}}}  \vdash  \ottnt{M_{{\mathrm{1}}}}  :   \ottnt{S} \twoheadrightarrow _{ \ottnt{e_{{\mathrm{0}}}} } \ottnt{T}   \mid  \ottnt{e_{{\mathrm{1}}}}$, and $\Gamma_{{\mathrm{2}}}  \vdash  \ottnt{M_{{\mathrm{2}}}}  :  \ottnt{S}  \mid  \ottsym{0}$ for some $\Gamma_{{\mathrm{1}}}$, $\Gamma_{{\mathrm{2}}}$, $\ottnt{S}$, $\ottnt{e_{{\mathrm{0}}}}$, and $\ottnt{e_{{\mathrm{1}}}}$.
        \item[lapp]  If $\Gamma  \vdash  \ottnt{M_{{\mathrm{1}}}}  {}^<  \ottnt{M_{{\mathrm{2}}}}  :  \ottnt{T}  \mid  \ottnt{e}$, then $\Gamma  \lesssim  \Gamma_{{\mathrm{1}}}  \ottsym{,}  \Gamma_{{\mathrm{2}}}$, $  \ottnt{e_{{\mathrm{0}}}}  \sqcup  \ottnt{e_{{\mathrm{2}}}}  \le \ottnt{e} $, $\Gamma_{{\mathrm{1}}}  \vdash  \ottnt{M_{{\mathrm{1}}}}  :   \ottnt{S} \rightarrowtail _{ \ottnt{e_{{\mathrm{0}}}} } \ottnt{T}   \mid  \ottsym{0}$, and $\Gamma_{{\mathrm{2}}}  \vdash  \ottnt{M_{{\mathrm{2}}}}  :  \ottnt{S}  \mid  \ottnt{e_{{\mathrm{2}}}}$ for some $\Gamma_{{\mathrm{1}}}$, $\Gamma_{{\mathrm{2}}}$, $\ottnt{S}$, $\ottnt{e_{{\mathrm{0}}}}$, and $\ottnt{e_{{\mathrm{2}}}}$.
        \item[upair] If $\Gamma  \vdash  \ottnt{M_{{\mathrm{1}}}}  \otimes  \ottnt{M_{{\mathrm{2}}}}  :  \ottnt{T}  \mid  \ottnt{e}$, then $\Gamma  \lesssim  \Gamma_{{\mathrm{1}}}  \parallel  \Gamma_{{\mathrm{2}}}$, $ \ottnt{T} \ottsym{=} \ottnt{S_{{\mathrm{1}}}}  \otimes  \ottnt{S_{{\mathrm{2}}}} $, $  \ottnt{e_{{\mathrm{1}}}}  \sqcup  \ottnt{e_{{\mathrm{2}}}}  \le \ottnt{e} $, $\Gamma_{{\mathrm{1}}}  \vdash  \ottnt{M_{{\mathrm{1}}}}  :  \ottnt{S_{{\mathrm{1}}}}  \mid  \ottnt{e_{{\mathrm{1}}}}$, and $\Gamma_{{\mathrm{2}}}  \vdash  \ottnt{M_{{\mathrm{2}}}}  :  \ottnt{S_{{\mathrm{2}}}}  \mid  \ottnt{e_{{\mathrm{2}}}}$ for some $\Gamma_{{\mathrm{1}}}$, $\Gamma_{{\mathrm{2}}}$, $\ottnt{S_{{\mathrm{1}}}}$, $\ottnt{S_{{\mathrm{2}}}}$, $\ottnt{e_{{\mathrm{1}}}}$ and $\ottnt{e_{{\mathrm{2}}}}$.
        \item[opair] If $\Gamma  \vdash  \ottnt{M_{{\mathrm{1}}}}  \odot  \ottnt{M_{{\mathrm{2}}}}  :  \ottnt{T}  \mid  \ottnt{e}$, then $\Gamma  \lesssim  \Gamma_{{\mathrm{1}}}  \ottsym{,}  \Gamma_{{\mathrm{2}}}$, $ \ottnt{T} \ottsym{=} \ottnt{S_{{\mathrm{1}}}}  \odot  \ottnt{S_{{\mathrm{2}}}} $, $  \ottnt{e_{{\mathrm{1}}}}  \sqcup  \ottnt{e_{{\mathrm{2}}}}  \le \ottnt{e} $, $\ottkw{ord} \, \ottnt{S_{{\mathrm{1}}}} \text{ implies }  \ottnt{e_{{\mathrm{2}}}} \ottsym{=} \ottsym{0} $, $\Gamma_{{\mathrm{1}}}  \vdash  \ottnt{M_{{\mathrm{1}}}}  :  \ottnt{S_{{\mathrm{1}}}}  \mid  \ottnt{e_{{\mathrm{1}}}}$, and $\Gamma_{{\mathrm{2}}}  \vdash  \ottnt{M_{{\mathrm{2}}}}  :  \ottnt{S_{{\mathrm{2}}}}  \mid  \ottnt{e_{{\mathrm{2}}}}$ for some $\Gamma_{{\mathrm{1}}}$, $\Gamma_{{\mathrm{2}}}$, $\ottnt{S_{{\mathrm{1}}}}$, $\ottnt{S_{{\mathrm{2}}}}$, $\ottnt{e_{{\mathrm{1}}}}$ and $\ottnt{e_{{\mathrm{2}}}}$.
        \item[ulet] If $\Gamma  \vdash  \ottkw{let} \, \ottmv{x}  \otimes  \ottmv{y}  \ottsym{=}  \ottnt{M} \, \ottkw{in} \, \ottnt{N}  :  \ottnt{T}  \mid  \ottnt{e}$, then $\Gamma  \lesssim  \mathcal{G}  \ottsym{[}  \Gamma_{{\mathrm{1}}}  \ottsym{]}$, $\ottsym{\{}  \ottmv{x}  \ottsym{\}}  \uplus  \ottsym{\{}  \ottmv{y}  \ottsym{\}}  \uplus   \ottkw{dom} ( \mathcal{G}  \ottsym{[}  \Gamma_{{\mathrm{1}}}  \ottsym{]} ) $, $\Gamma_{{\mathrm{1}}}  \vdash  \ottnt{M}  :  \ottnt{S_{{\mathrm{1}}}}  \otimes  \ottnt{S_{{\mathrm{2}}}}  \mid  \ottsym{0}$, and $\mathcal{G}  \ottsym{[}  \ottmv{x}  \mathord:  \ottnt{S_{{\mathrm{1}}}}  \parallel  \ottmv{y}  \mathord:  \ottnt{S_{{\mathrm{2}}}}  \ottsym{]}  \vdash  \ottnt{N}  :  \ottnt{T}  \mid  \ottnt{e}$ for some $\mathcal{G}$, $\Gamma_{{\mathrm{1}}}$, $\ottnt{S_{{\mathrm{1}}}}$ and $\ottnt{S_{{\mathrm{2}}}}$.
        \item[olet] If $\Gamma  \vdash  \ottkw{let} \, \ottmv{x}  \odot  \ottmv{y}  \ottsym{=}  \ottnt{M} \, \ottkw{in} \, \ottnt{N}  :  \ottnt{T}  \mid  \ottnt{e}$, then $\Gamma  \lesssim  \mathcal{G}  \ottsym{[}  \Gamma_{{\mathrm{1}}}  \ottsym{]}$, $\ottsym{\{}  \ottmv{x}  \ottsym{\}}  \uplus  \ottsym{\{}  \ottmv{y}  \ottsym{\}}  \uplus   \ottkw{dom} ( \mathcal{G}  \ottsym{[}  \Gamma_{{\mathrm{1}}}  \ottsym{]} ) $, $\Gamma_{{\mathrm{1}}}  \vdash  \ottnt{M}  :  \ottnt{S_{{\mathrm{1}}}}  \odot  \ottnt{S_{{\mathrm{2}}}}  \mid  \ottsym{0}$, and $\mathcal{G}  \ottsym{[}  \ottmv{x}  \mathord:  \ottnt{S_{{\mathrm{1}}}}  \ottsym{,}  \ottmv{y}  \mathord:  \ottnt{S_{{\mathrm{2}}}}  \ottsym{]}  \vdash  \ottnt{N}  :  \ottnt{T}  \mid  \ottnt{e}$ for some $\mathcal{G}$, $\Gamma_{{\mathrm{1}}}$, $\ottnt{S_{{\mathrm{1}}}}$ and $\ottnt{S_{{\mathrm{2}}}}$.
    \end{statements}

    \proof Routine by induction on the given derivation.
\end{prop}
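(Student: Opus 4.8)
The proposition is a standard \emph{inversion} (generation) lemma, and the plan is to prove all cases simultaneously by induction on the derivation of the typing judgment in the hypothesis, with a case analysis on the last rule applied. The crucial structural observation is that every typing rule in \Cref{fig:type/system} except \textsc{T-Weaken} is syntax-directed: the shape of the subject term uniquely determines which of these rules can conclude the derivation. Consequently, for a fixed term form (say $\ottnt{M_{{\mathrm{1}}}} \, \ottnt{M_{{\mathrm{2}}}}$) the final rule is either the matching syntax-directed rule (here \textsc{T-App}) or \textsc{T-Weaken}, and this dichotomy drives every case uniformly.

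In the \emph{direct} subcase, where the last rule is the syntax-directed one for the term at hand, the required existential witnesses are read directly off the premises of that rule, and the side conditions (domain disjointness, $\ottkw{unr}\,\Gamma_{{\mathrm{1}}}$, the OPM inequalities such as $ \ottnt{m_{{\mathrm{1}}}} \odot \ottnt{m_{{\mathrm{2}}}} \le \ottnt{m_{{\mathrm{0}}}}$, and the effect restrictions) are exactly that rule's premises. The two remaining obligations are the subcontext statement $\Gamma \lesssim \dots$ and the subeffect statement $\dots \le \ottnt{e}$; in the direct subcase $\Gamma$ is literally the conclusion context and $\ottnt{e}$ the conclusion effect, so both hold by reflexivity: $\Gamma \lesssim \Gamma$ follows from \textsc{GS-Equiv} together with reflexivity of $\simeq$ (which rests on \propref{graph/iso/refl}), and $\ottnt{e} \le \ottnt{e}$ from reflexivity of the effect preorder.

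In the \textsc{T-Weaken} subcase we have $\Gamma \vdash \ottnt{M} : \ottnt{T} \mid \ottnt{e}$ obtained from a strictly smaller subderivation $\Gamma' \vdash \ottnt{M} : \ottnt{T} \mid \ottnt{e'}$ with $\Gamma \lesssim \Gamma'$ and $\ottnt{e'} \le \ottnt{e}$, where the term and type are unchanged. Since the subject still has the same form, the induction hypothesis applies to the subderivation and yields all the existentials together with $\Gamma' \lesssim \Gamma_{\mathrm{ex}}$ and $\ottnt{e}_{\mathrm{ex}} \le \ottnt{e'}$ for the appropriate witness context $\Gamma_{\mathrm{ex}}$ and witness effect $\ottnt{e}_{\mathrm{ex}}$. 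Composing the weakening data with these via transitivity of the subcontext relation (\textsc{GS-Trans}) gives $\Gamma \lesssim \Gamma_{\mathrm{ex}}$, and transitivity of $\le$ gives $\ottnt{e}_{\mathrm{ex}} \le \ottnt{e}$; every other component of the conclusion is inherited verbatim from the induction hypothesis. Because \textsc{T-Weaken} alters neither the term nor its type, this step is completely uniform across all term forms.

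The only real obstacle is bookkeeping rather than mathematics. There are roughly twenty cases, and the abstraction and pair-elimination cases deserve particular attention because their bound variables occur inside the premise contexts (e.g.\ $\Gamma_{{\mathrm{1}}} \ottsym{,} \ottmv{x} \mathord: \ottnt{S}$ in \textsc{T-Abs}, or $\mathcal{G}\ottsym{[}\ottmv{x} \mathord: \ottnt{S_{{\mathrm{1}}}} \parallel \ottmv{y} \mathord: \ottnt{S_{{\mathrm{2}}}}\ottsym{]}$ in \textsc{T-ULet}); there the witness context is taken directly from the rule's premise, so the subcontext obligation is still discharged by reflexivity in the direct subcase and by \textsc{GS-Trans} after \textsc{T-Weaken}, precisely as in the other cases. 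Beyond reflexivity and transitivity of $\lesssim$ and of the effect order, no further structural facts are needed, so once the two-subcase template is fixed the remaining work is a routine, if lengthy, case enumeration.
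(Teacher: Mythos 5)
Your proposal is correct and follows essentially the same route as the paper, whose own proof is simply ``routine by induction on the given derivation'': every rule except \textsc{T-Weaken} is syntax-directed, so each case splits into the direct subcase (witnesses read off the premises, $\lesssim$ and $\le$ discharged by reflexivity) and the \textsc{T-Weaken} subcase (compose with \textsc{GS-Trans} and transitivity of $\le$).

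One small correction to your \textsc{T-Weaken} template, though: the claim that ``every other component of the conclusion is inherited verbatim from the induction hypothesis'' fails for the \emph{ulet} and \emph{olet} items. There the statement pins the effect of the second premise to the \emph{same} $e$ as the outer judgment: it demands $\mathcal{G}[x{:}S_1 \parallel y{:}S_2] \vdash N : T \mid e$ (and similarly with ``,'' for \emph{olet}). After \textsc{T-Weaken} with $e' \le e$, the induction hypothesis only yields this judgment at effect $e'$, so you must re-apply \textsc{T-Weaken} (with the reflexive subcontext $\mathcal{G}[x{:}S_1 \parallel y{:}S_2] \lesssim \mathcal{G}[x{:}S_1 \parallel y{:}S_2]$ and $e' \le e$) to lift it to effect $e$. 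This also means your closing remark that nothing beyond reflexivity and transitivity of $\lesssim$ and $\le$ is needed is slightly too strong: these two cases additionally invoke the typing rule \textsc{T-Weaken} itself. The fix is one line, and the rest of your case analysis goes through as written.
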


\begin{prop}{typing/canonical}
    \noindent
    \begin{statements}
        \item[unit] If $\ottnt{C}  \vdash  \ottnt{V}  :   \mathtt{Unit}   \mid  \ottnt{e}$, then $ \ottnt{V} \ottsym{=} \ottkw{unit} $.
        \item[resource] If $\ottnt{C}  \vdash  \ottnt{V}  :  \ottsym{[}  \ottnt{m}  \ottsym{]}  \mid  \ottnt{e}$, then $ \ottnt{V} \ottsym{=} \ottmv{l} $ for some $\ottmv{l}$.
        \item[arrow] If $\ottnt{C}  \vdash  \ottnt{V}  :   \ottnt{S} \rightarrow _{ \ottnt{e'} } \ottnt{T}   \mid  \ottnt{e}$, then
        \begin{itemize}
            \item $ \ottnt{V} \ottsym{=}  \ottkw{new} _{ \ottnt{m} }  $ for some $\ottnt{m}$,
            \item $ \ottnt{V} \ottsym{=}  \ottkw{op} _{ \ottnt{m_{{\mathrm{1}}}} }  $ for some $\ottnt{m_{{\mathrm{1}}}}$,
            \item $ \ottnt{V} \ottsym{=}  \ottkw{split} _{ \ottnt{m_{{\mathrm{1}}}} , \ottnt{m_{{\mathrm{2}}}} }  $ for some $\ottnt{m_{{\mathrm{1}}}}$ and $\ottnt{m_{{\mathrm{2}}}}$,
            \item $ \ottnt{V} \ottsym{=} \ottkw{drop} $, or
            \item $ \ottnt{V} \ottsym{=} \lambda  \ottmv{x}  \ottsym{.}  \ottnt{M} $ for some $\ottmv{x}$ and $\ottnt{M}$.
        \end{itemize}
        \item[uarrow] If $\ottnt{C}  \vdash  \ottnt{V}  :   \ottnt{S} \rightarrowtriangle _{ \ottnt{e'} } \ottnt{T}   \mid  \ottnt{e}$, then $ \ottnt{V} \ottsym{=} \lambda^\circ  \ottmv{x}  \ottsym{.}  \ottnt{M} $ for some $\ottmv{x}$ and $\ottnt{M}$.
        \item[rarrow] If $\ottnt{C}  \vdash  \ottnt{V}  :   \ottnt{S} \twoheadrightarrow _{ \ottnt{e'} } \ottnt{T}   \mid  \ottnt{e}$, then $ \ottnt{V} \ottsym{=} \lambda^>  \ottmv{x}  \ottsym{.}  \ottnt{M} $ for some $\ottmv{x}$ and $\ottnt{M}$.
        \item[larrow] If $\ottnt{C}  \vdash  \ottnt{V}  :   \ottnt{S} \rightarrowtail _{ \ottnt{e'} } \ottnt{T}   \mid  \ottnt{e}$, then $ \ottnt{V} \ottsym{=} \lambda^<  \ottmv{x}  \ottsym{.}  \ottnt{M} $ for some $\ottmv{x}$ and $\ottnt{M}$.
        \item[uprod] If $\ottnt{C}  \vdash  \ottnt{V}  :  \ottnt{S}  \otimes  \ottnt{T}  \mid  \ottnt{e}$, then $ \ottnt{V} \ottsym{=} \ottnt{V_{{\mathrm{1}}}}  \otimes  \ottnt{V_{{\mathrm{2}}}} $ for some $\ottnt{V_{{\mathrm{1}}}}$ and $\ottnt{V_{{\mathrm{2}}}}$.
        \item[oprod] If $\ottnt{C}  \vdash  \ottnt{V}  :  \ottnt{S}  \odot  \ottnt{T}  \mid  \ottnt{e}$, then $ \ottnt{V} \ottsym{=} \ottnt{V_{{\mathrm{1}}}}  \odot  \ottnt{V_{{\mathrm{2}}}} $ for some $\ottnt{V_{{\mathrm{1}}}}$ and $\ottnt{V_{{\mathrm{2}}}}$.
    \end{statements}

    \proof By case analysis on $\ottnt{V}$.
    We drop unsuitable cases by \propref{typing/inv}.
\end{prop}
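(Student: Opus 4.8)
The plan is to derive each of the eight implications directly from the inversion lemma \propref{typing/inv}, with the whole argument structured as a case analysis on the syntactic shape of the value $\ottnt{V}$. Recall from the value grammar that $\ottnt{V}$ is either a constant (one of $\ottkw{unit}$, $ \ottkw{new} _{ \ottnt{m} } $, $ \ottkw{op} _{ \ottnt{m} } $, $ \ottkw{split} _{ \ottnt{m_{{\mathrm{1}}}} , \ottnt{m_{{\mathrm{2}}}} } $, $\ottkw{drop}$), a location $\ottmv{l}$, one of the four abstractions, or one of the two pairs $\ottnt{V_{{\mathrm{1}}}}  \otimes  \ottnt{V_{{\mathrm{2}}}}$ and $\ottnt{V_{{\mathrm{1}}}}  \odot  \ottnt{V_{{\mathrm{2}}}}$; crucially a bare variable $\ottmv{x}$ is \emph{not} a value, so that form need never be considered. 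No induction is required: the engine is purely \propref{typing/inv}, which for each of these syntactic formers pins down the top-level constructor of the assigned type (e.g.\ $\ottkw{unit}$ forces $\ottnt{T}= \mathtt{Unit} $, a location forces $\ottnt{T}=[\ottnt{m}]$, and $\lambda^\circ  \ottmv{x}  \ottsym{.}  \ottnt{M}$ forces a $\rightarrowtriangle$-arrow).

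For each of the eight target types I would fix $\ottnt{V}$, run through each value former in turn, and apply the matching clause of \propref{typing/inv}. In the formers whose inverted type has the \emph{same} head constructor as the target, the inversion clause delivers exactly the conclusion claimed. In every other former the inverted type has a \emph{different} head constructor from the target; since the inversion clauses determine the type up to syntactic equality of its top-level shape, this contradicts the hypothesis on $\ottnt{T}$, so the case is vacuous and is discarded.

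The only clause that is not a one-to-one match between value former and type is the non-capture arrow case \textbf{arrow}: here the target type $\ottnt{S} \rightarrow_{\ottnt{e'}} \ottnt{T}$ is shared by the four resource constants $ \ottkw{new} _{ \ottnt{m} } $, $ \ottkw{op} _{ \ottnt{m} } $, $ \ottkw{split} _{ \ottnt{m_{{\mathrm{1}}}} , \ottnt{m_{{\mathrm{2}}}} } $, $\ottkw{drop}$ and by the plain abstraction $\lambda  \ottmv{x}  \ottsym{.}  \ottnt{M}$, since all five receive a $\rightarrow$-arrow in their introduction rules (\ruleref{T-New}, \ruleref{T-Op}, \ruleref{T-Split}, \ruleref{T-Drop}, \ruleref{T-Abs}); correspondingly the conclusion for this case is the disjunction of exactly these five forms. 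The remaining arrow-typed values $\lambda^\circ  \ottmv{x}  \ottsym{.}  \ottnt{M}$, $\lambda^>  \ottmv{x}  \ottsym{.}  \ottnt{M}$, $\lambda^<  \ottmv{x}  \ottsym{.}  \ottnt{M}$ carry the decorated arrows $\rightarrowtriangle$, $\twoheadrightarrow$, $\rightarrowtail$ and are therefore excluded from this case (and, dually, are each pinned to their own single target type).

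No step here is genuinely hard; the one point that requires a moment's care is that weakening must not interfere with the discrimination. A value may well have been typed through \ruleref{T-Weaken}, but \propref{typing/inv} already absorbs this: it reports the context only up to the subcontext relation $\lesssim$ while fixing the head constructor of $\ottnt{T}$ exactly. Because the type grammar itself carries no conversion or subtyping — all the structural flexibility lives in the contexts, not in types — the eight head constructors $ \mathtt{Unit} $, $[\ottnt{m}]$, $\rightarrow$, $\rightarrowtriangle$, $\twoheadrightarrow$, $\rightarrowtail$, $\otimes$, $\odot$ are pairwise distinct, and so separating matching from non-matching cases is immediate.
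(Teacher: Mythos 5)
Your proposal is correct and is essentially the paper's own proof: a case analysis on the syntactic form of $\ottnt{V}$, using \propref{typing/inv} to fix the head constructor of the type in each case and thereby discard the non-matching value formers (including the observation that \ruleref{T-Weaken} is already absorbed by the inversion lemma). The paper states this in one line; you have merely spelled out the same argument in detail.
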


\begin{prop}{typing/value/unr}
    If $\ottnt{C}  \vdash  \ottnt{V}  :  \ottnt{T}  \mid  \ottnt{e}$ and $\ottkw{unr} \, \ottnt{T}$, then $\ottnt{C}  \lesssim   \cdot $ and $ \cdot   \vdash  \ottnt{V}  :  \ottnt{T}  \mid  \ottsym{0}$.

    \proof Routine by induction on the given typing derivation.
\end{prop}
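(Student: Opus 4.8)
The plan is to proceed by induction on the derivation of $\ottnt{C}  \vdash  \ottnt{V}  :  \ottnt{T}  \mid  \ottnt{e}$, using that $\ottnt{V}$ is a value, that $\ottnt{C}$ is a run-time context (so its only bindings are the ordered location bindings $\ottmv{l}  \mathord:  \ottsym{[}  \ottnt{m}  \ottsym{]}$), and that $\ottkw{unr} \, \ottnt{T}$. The analysis is driven by the last rule. Most rules are discharged at once: \textsc{T-Var}, every application rule, and both \texttt{let} rules conclude with a term that is not a value, so they cannot appear; and \textsc{T-Loc} together with the three capture-abstraction rules \textsc{T-UAbs}, \textsc{T-RAbs}, \textsc{T-LAbs} conclude an \emph{ordered} type (a trace type or a capturing arrow), contradicting $\ottkw{unr} \, \ottnt{T}$. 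This leaves the constant rules, \textsc{T-Abs}, the two pair rules, and \textsc{T-Weaken}.

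For the constant rules \textsc{T-Unit}, \textsc{T-New}, \textsc{T-Op}, \textsc{T-Split}, \textsc{T-Close} the context is already $ \cdot $, the effect is $\ottsym{0}$, and the result type is unrestricted, so both conclusions hold verbatim, with $ \cdot   \lesssim   \cdot $ obtained from reflexivity of $ \simeq $ and \textsc{GS-Equiv}. For \textsc{T-Abs}, where $ \ottnt{V} \ottsym{=} \lambda  \ottmv{x}  \ottsym{.}  \ottnt{M} $, the premise supplies $\ottkw{unr} \, \ottnt{C}$; since a run-time context can hold only ordered location bindings, $\ottkw{unr} \, \ottnt{C}$ forces $\ottnt{C}$ to contain no bindings, hence $\ottnt{C}  \simeq   \cdot $ and thus $\ottnt{C}  \lesssim   \cdot $. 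Rebuilding the abstraction over $ \cdot $ then reduces to retyping the body: from $\ottnt{C}  \ottsym{,}  \ottmv{x}  \mathord:  \ottnt{S}  \vdash  \ottnt{M}  :  \ottnt{T'}  \mid  \ottnt{e'}$ and $\ottnt{C}  \simeq   \cdot $ we get $\ottnt{C}  \ottsym{,}  \ottmv{x}  \mathord:  \ottnt{S}  \simeq  \ottmv{x}  \mathord:  \ottnt{S}$ by congruence of $ \simeq $ and the left-unit law for $``{,}"$, so \textsc{T-Weaken} gives $\ottmv{x}  \mathord:  \ottnt{S}  \vdash  \ottnt{M}  :  \ottnt{T'}  \mid  \ottnt{e'}$ and \textsc{T-Abs} re-derives $ \cdot   \vdash  \lambda  \ottmv{x}  \ottsym{.}  \ottnt{M}  :  \ottnt{T}  \mid  \ottsym{0}$.

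The pair cases are the crux. For \textsc{T-UPair} with $ \ottnt{V} \ottsym{=} \ottnt{V_{{\mathrm{1}}}}  \otimes  \ottnt{V_{{\mathrm{2}}}} $ and $ \ottnt{T} \ottsym{=} \ottnt{S}  \otimes  \ottnt{T'} $, inverting $\ottkw{unr} \, \ottnt{T}$ via \textsc{U-UProd} yields $\ottkw{unr} \, \ottnt{S}$ and $\ottkw{unr} \, \ottnt{T'}$; since $\ottnt{C} \ottsym{=} \Gamma_{{\mathrm{1}}}  \parallel  \Gamma_{{\mathrm{2}}}$ is a run-time context, so are $\Gamma_{{\mathrm{1}}}$ and $\Gamma_{{\mathrm{2}}}$, and the induction hypothesis gives $\Gamma_{{\mathrm{1}}}  \lesssim   \cdot $, $\Gamma_{{\mathrm{2}}}  \lesssim   \cdot $, $ \cdot   \vdash  \ottnt{V_{{\mathrm{1}}}}  :  \ottnt{S}  \mid  \ottsym{0}$, and $ \cdot   \vdash  \ottnt{V_{{\mathrm{2}}}}  :  \ottnt{T'}  \mid  \ottsym{0}$. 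I combine the subcontext facts using congruence of $ \lesssim $ under context patterns and the unit laws: $\Gamma_{{\mathrm{1}}}  \parallel  \Gamma_{{\mathrm{2}}}  \lesssim   \cdot   \parallel  \Gamma_{{\mathrm{2}}}  \simeq  \Gamma_{{\mathrm{2}}}  \lesssim   \cdot $, so $\ottnt{C}  \lesssim   \cdot $ by \textsc{GS-Trans}; and applying \textsc{T-UPair} to the two empty-context typings yields $ \cdot   \parallel   \cdot   \vdash  \ottnt{V_{{\mathrm{1}}}}  \otimes  \ottnt{V_{{\mathrm{2}}}}  :  \ottnt{S}  \otimes  \ottnt{T'}  \mid  \ottsym{0}$, which \textsc{T-Weaken} with $ \cdot   \parallel   \cdot   \simeq   \cdot $ collapses to $ \cdot   \vdash  \ottnt{V}  :  \ottnt{T}  \mid  \ottsym{0}$. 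The case \textsc{T-OPair} is identical, using \textsc{U-OProd} and the $``{,}"$ unit laws, and its side condition $\ottkw{ord} \, \ottnt{S}$ implies $ \ottnt{e_{{\mathrm{2}}}} \ottsym{=} \ottsym{0} $ is trivially met because the reconstructed second effect is $\ottsym{0}$. Finally, \textsc{T-Weaken} itself: from $\ottnt{C}  \lesssim  \Gamma_{{\mathrm{1}}}$ the earlier fact that subcontexts of run-time contexts are run-time makes $\Gamma_{{\mathrm{1}}}$ a run-time context, the hypothesis gives $\Gamma_{{\mathrm{1}}}  \lesssim   \cdot $ and $ \cdot   \vdash  \ottnt{V}  :  \ottnt{T}  \mid  \ottsym{0}$, and $\ottnt{C}  \lesssim   \cdot $ follows by \textsc{GS-Trans}.

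The main obstacle I anticipate is the bookkeeping in the pair cases, namely turning the two componentwise results into a single statement about the composite context. This rests on congruence of $ \lesssim $ under the context formers (the subcontext-congruence lemma proved earlier) chained through the monoid-unit equivalences and \textsc{GS-Equiv}/\textsc{GS-Trans}, together with the routine but necessary observation that $\Gamma_{{\mathrm{1}}}  \parallel  \Gamma_{{\mathrm{2}}}$ and $\Gamma_{{\mathrm{1}}}  \ottsym{,}  \Gamma_{{\mathrm{2}}}$ are run-time exactly when both parts are, so that the induction hypothesis applies to the premises.
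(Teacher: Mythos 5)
Your proposal is correct and is exactly the "routine induction on the typing derivation" that the paper invokes without spelling out: impossible/ordered-typed rules are discharged, constants are immediate, \textsc{T-Abs} uses that an unrestricted run-time context is empty, the pair cases combine the induction hypotheses via congruence of $\lesssim$ and the unit laws, and \textsc{T-Weaken} uses transitivity together with the fact that supercontexts of run-time contexts are run-time. The only nit is in the \textsc{T-Abs} case, where re-applying the rule formally needs the premise under $\cdot\,,\,x\mathord:S$ rather than $x\mathord:S$, which one further application of \textsc{T-Weaken} (via $\cdot\,,\,x\mathord:S \simeq x\mathord:S$) supplies.
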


\begin{prop}{typing/value/ord}
    If $\ottnt{C}  \vdash  \ottnt{V}  :  \ottnt{T}  \mid  \ottnt{e}$ and $\ottkw{ord} \, \ottnt{T}$, then $\ottnt{C}  \vdash  \ottnt{V}  :  \ottnt{T}  \mid  \ottsym{0}$.

    \proof Routine by induction on the given typing derivation.
\end{prop}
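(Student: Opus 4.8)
The plan is to induct on the derivation of $\ottnt{C}  \vdash  \ottnt{V}  :  \ottnt{T}  \mid  \ottnt{e}$ and show that the latent effect attached to a \emph{value} can always be taken to be $\ottsym{0}$. The observation that makes this work is that no value-introduction rule in \Cref{fig:type/system} incurs an effect on its own: the axioms \ruleref{T-Unit}, \ruleref{T-Var}, \ruleref{T-Loc}, the constant rules, and the four abstraction rules \ruleref{T-Abs}, \ruleref{T-UAbs}, \ruleref{T-RAbs}, \ruleref{T-LAbs} all conclude with effect $\ottsym{0}$. Hence each of these base cases is immediate and does not even consult the hypothesis $\ottkw{ord} \, \ottnt{T}$; the conclusion merely repeats the given derivation. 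Since $\ottnt{V}$ is typed under a run-time context, every subcontext arising below stays variable-free, so the induction hypothesis and the companion lemma apply throughout.

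The content is concentrated in three constructs. For \ruleref{T-Weaken} the derivation ends with $\ottnt{C}  \lesssim  \Gamma'$, $ \ottnt{e'} \le \ottnt{e} $, and a subderivation $\Gamma'  \vdash  \ottnt{V}  :  \ottnt{T}  \mid  \ottnt{e'}$; by \propref{env/sub/runtime} the context $\Gamma'$ is again a run-time context, so the induction hypothesis yields $\Gamma'  \vdash  \ottnt{V}  :  \ottnt{T}  \mid  \ottsym{0}$, and reapplying \ruleref{T-Weaken} with $\ottnt{C}  \lesssim  \Gamma'$ and $ \ottsym{0} \le \ottsym{0} $ gives the goal. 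For the pair rules \ruleref{T-UPair} and \ruleref{T-OPair} the value is $\ottnt{V_{{\mathrm{1}}}}  \otimes  \ottnt{V_{{\mathrm{2}}}}$ (resp.\ $\ottnt{V_{{\mathrm{1}}}}  \odot  \ottnt{V_{{\mathrm{2}}}}$) with component derivations $\Gamma_{{\mathrm{1}}}  \vdash  \ottnt{V_{{\mathrm{1}}}}  :  \ottnt{S_{{\mathrm{1}}}}  \mid  \ottnt{e_{{\mathrm{1}}}}$ and $\Gamma_{{\mathrm{2}}}  \vdash  \ottnt{V_{{\mathrm{2}}}}  :  \ottnt{S_{{\mathrm{2}}}}  \mid  \ottnt{e_{{\mathrm{2}}}}$ and overall effect $ \ottnt{e_{{\mathrm{1}}}}  \sqcup  \ottnt{e_{{\mathrm{2}}}} $. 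I would retype each component with effect $\ottsym{0}$: if $\ottkw{ord} \, S_i$, apply the induction hypothesis to that component; if $\ottkw{unr} \, S_i$, apply \propref{typing/value/unr} to obtain $ \cdot   \vdash  V_i  :  S_i  \mid  \ottsym{0}$ together with $\Gamma_i  \lesssim   \cdot $, and then lift this back to $\Gamma_i  \vdash  V_i  :  S_i  \mid  \ottsym{0}$ via \ruleref{T-Weaken}. Recombining with the same pair rule reproduces the context $\ottnt{C}$ and the effect $ \ottsym{0}  \sqcup  \ottsym{0} $, i.e.\ $\ottsym{0}$; in the \ruleref{T-OPair} case the side condition ``$\ottkw{ord} \, \ottnt{S_{{\mathrm{1}}}}$ implies $ \ottnt{e_{{\mathrm{2}}}} \ottsym{=} \ottsym{0} $'' is then satisfied trivially because $ \ottnt{e_{{\mathrm{2}}}} \ottsym{=} \ottsym{0} $.

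The one ingredient that is not purely mechanical is the dispatch in the pair case: invoking either the induction hypothesis or \propref{typing/value/unr} on a component requires knowing that every type is \emph{either} ordered \emph{or} unrestricted. I would establish this ordered/unrestricted dichotomy as a separate, routine structural induction on types ($ \mathtt{Unit} $ and the function types are decided outright, while $\ottnt{S}  \otimes  \ottnt{T}$ and $\ottnt{S}  \odot  \ottnt{T}$ are unrestricted exactly when both components are and ordered exactly when some component is, so the two predicates are complementary). The only other point needing care is bookkeeping of contexts: retyping an unrestricted component momentarily collapses its context to $ \cdot $, so \ruleref{T-Weaken} must be used to restore the original $\Gamma_{{\mathrm{1}}}$ or $\Gamma_{{\mathrm{2}}}$ before the two halves are recombined under $ \parallel $ (resp.\ ``$,$''); because each half is retyped under its own original context, the recombined context is again exactly $\ottnt{C}$. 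I expect the dichotomy and this context bookkeeping to be the only places demanding attention, with all remaining cases immediate.
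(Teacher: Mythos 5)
Your proposal is correct and follows the same route the paper intends: induction on the typing derivation, with the axioms and abstraction rules immediate (they already carry effect $\ottsym{0}$), \ruleref{T-Weaken} handled by the induction hypothesis plus re-weakening (using \propref{env/sub/runtime} to keep the context variable-free), and the pair rules handled by retyping each component at effect $\ottsym{0}$ via either the induction hypothesis or \propref{typing/value/unr} together with \ruleref{T-Weaken}. The ordered/unrestricted dichotomy you isolate is exactly the classification the paper asserts when defining the two predicates, so your argument fills in precisely the "routine" details the paper leaves implicit.
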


\begin{prop}{typing/subst/void}
    If $\Gamma  \vdash  \ottnt{M}  :  \ottnt{T}  \mid  \ottnt{e}$ and $ \ottmv{x}  \notin   \ottkw{dom} ( \Gamma )  $, then $\Gamma  \vdash  \ottnt{M}  \ottsym{[}  \ottnt{V}  \ottsym{/}  \ottmv{x}  \ottsym{]}  :  \ottnt{T}  \mid  \ottnt{e}$ for any $\ottnt{V}$.

    \proof
    It suffices to show $ \ottnt{M} \ottsym{=} \ottnt{M}  \ottsym{[}  \ottnt{V}  \ottsym{/}  \ottmv{x}  \ottsym{]} $.
    We have $ \ottmv{x}  \notin   \ottkw{fv} ( \ottnt{M} )  $ by \propref{typing/closed} and the assumptions.
    Therefore, $ \ottnt{M} \ottsym{=} \ottnt{M}  \ottsym{[}  \ottnt{V}  \ottsym{/}  \ottmv{x}  \ottsym{]} $ follows by \propref{subst/void}.
\end{prop}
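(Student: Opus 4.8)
The plan is to observe that this proposition is a vacuous-substitution result: under the hypotheses, $\ottnt{V}$ is never substituted for $\ottmv{x}$ anywhere in $\ottnt{M}$, so the substitution leaves $\ottnt{M}$ syntactically unchanged, and the desired typing judgment is literally the one we already have. Thus the entire task reduces to proving $ \ottnt{M}  \ottsym{[}  \ottnt{V}  \ottsym{/}  \ottmv{x}  \ottsym{]} \ottsym{=} \ottnt{M} $, after which the conclusion $\Gamma  \vdash  \ottnt{M}  \ottsym{[}  \ottnt{V}  \ottsym{/}  \ottmv{x}  \ottsym{]}  :  \ottnt{T}  \mid  \ottnt{e}$ is obtained by rewriting the hypothesis $\Gamma  \vdash  \ottnt{M}  :  \ottnt{T}  \mid  \ottnt{e}$ along this equation.

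To establish $ \ottnt{M}  \ottsym{[}  \ottnt{V}  \ottsym{/}  \ottmv{x}  \ottsym{]} \ottsym{=} \ottnt{M} $, I would appeal to \propref{subst/void}, which states precisely that substitution for a variable not free in the term is the identity; its only premise is $ \ottmv{x}  \notin   \ottkw{fv} ( \ottnt{M} )  $. So the remaining obligation is to discharge $ \ottmv{x}  \notin   \ottkw{fv} ( \ottnt{M} )  $. Here I would first apply \propref{typing/closed} to the typing hypothesis, yielding $  \ottkw{fv} ( \ottnt{M} )   \subseteq   \ottkw{dom} ( \Gamma )  $. Combining this containment with the second hypothesis $ \ottmv{x}  \notin   \ottkw{dom} ( \Gamma )  $ gives $ \ottmv{x}  \notin   \ottkw{fv} ( \ottnt{M} )  $ by a trivial set-theoretic contrapositive: if $\ottmv{x}$ were free in $\ottnt{M}$ it would lie in $ \ottkw{dom} ( \Gamma ) $, contradicting the assumption.

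Chaining these steps together, the derivation reads $ \ottmv{x}  \notin   \ottkw{dom} ( \Gamma )  $ and $  \ottkw{fv} ( \ottnt{M} )   \subseteq   \ottkw{dom} ( \Gamma )  $ give $ \ottmv{x}  \notin   \ottkw{fv} ( \ottnt{M} )  $, which feeds \propref{subst/void} to give $ \ottnt{M}  \ottsym{[}  \ottnt{V}  \ottsym{/}  \ottmv{x}  \ottsym{]} \ottsym{=} \ottnt{M} $, which finally lets us conclude from the typing hypothesis. Since $\ottnt{V}$ plays no role beyond being the substituted value, the statement holds for arbitrary $\ottnt{V}$ with no further case analysis.

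There is essentially no obstacle: the genuine content lives in the two cited lemmas, \propref{typing/closed} (free variables are bounded by the context, proved by induction on the typing derivation with the \ruleref{T-Weaken} case handled via \propref{env/in/sup}) and \propref{subst/void} (vacuous substitution). The only thing to be careful about is ensuring the free-variable containment lemma is already available at this point in the development — which it is — so that the argument is not circular. If I had to name a ``hard part,'' it would only be recognizing that no substitution lemma proper is needed here, since the substitution is degenerate; the proof is a two-line composition of existing results.
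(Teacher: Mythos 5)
Your proof is correct and follows exactly the same route as the paper's: reduce to showing the substitution is vacuous, obtain $ \ottmv{x}  \notin   \ottkw{fv} ( \ottnt{M} )  $ from \propref{typing/closed} together with $ \ottmv{x}  \notin   \ottkw{dom} ( \Gamma )  $, and conclude $ \ottnt{M}  \ottsym{[}  \ottnt{V}  \ottsym{/}  \ottmv{x}  \ottsym{]} \ottsym{=} \ottnt{M} $ via \propref{subst/void}. No gaps; the extra detail you give about the set-theoretic contrapositive is just an unpacking of what the paper leaves implicit.
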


\begin{prop}{typing/subst/unr}
    If
    \begin{gather}
        \Gamma  \vdash  \ottnt{M}  :  \ottnt{T}  \mid  \ottnt{e} \hyp{0.1}\\
         \ottmv{x}  \mathord:  \ottnt{S}  \in  \Gamma  \hyp{0.2}\\
         \cdot   \vdash  \ottnt{V}  :  \ottnt{S}  \mid  \ottsym{0} \hyp{0.3},
    \end{gather}
    then $ \Gamma ^{- \ottmv{x}  \mathord:  \ottnt{S} }   \vdash  \ottnt{M}  \ottsym{[}  \ottnt{V}  \ottsym{/}  \ottmv{x}  \ottsym{]}  :  \ottnt{T}  \mid  \ottnt{e}$.

    \proof By induction on the given derivation of \hypref{0.1}.
    \begin{match}
        \item[\ruleref{T-Unit}, \ruleref{T-New}, \ruleref{T-Op}, \ruleref{T-Split}, and \ruleref{T-Close}]
        In these cases $ \Gamma \ottsym{=}  \cdot  $.
        So, \hypref{0.2} cannot hold.

        \item[\ruleref{T-Loc}]
        In this case, $ \Gamma \ottsym{=} \ottmv{l}  \mathord:  \ottsym{[}  \ottnt{m}  \ottsym{]} $.
        So, \hypref{0.2} cannot hold.

        \item[\ruleref{T-Var}]
        In this case,
        \begin{gather*}
             \Gamma \ottsym{=} \ottmv{x'}  \mathord:  \ottnt{T}  \hyp{1.1}\\
             \ottnt{M} \ottsym{=} \ottmv{x'}  \hyp{1.2}\\
             \ottnt{e} \ottsym{=} \ottsym{0}  \hyp{1.3}.
        \end{gather*}
        We have $ \ottmv{x} \ottsym{=} \ottmv{x'} $ and $ \ottnt{S} \ottsym{=} \ottnt{T} $ by \propref{env/in/binding}, \hypref{0.2}, and \hypref{1.1}.
        Now the goal $ \cdot   \vdash  \ottnt{V}  :  \ottnt{S}  \mid  \ottsym{0}$ is identical to \hypref{0.3}.

        \item[\ruleref{T-Abs}]
        In this case,
        \begin{gather}
             \ottnt{M} \ottsym{=} \lambda  \ottmv{x_{{\mathrm{1}}}}  \ottsym{.}  \ottnt{M_{{\mathrm{1}}}}  \hyp{3.1}\\
             \ottnt{T} \ottsym{=}  \ottnt{S_{{\mathrm{1}}}} \rightarrow _{ \ottnt{e_{{\mathrm{1}}}} } \ottnt{T_{{\mathrm{1}}}}   \hyp{3.2}\\
             \ottnt{e} \ottsym{=} \ottsym{0}  \hyp{3.3}\\
             \ottmv{x_{{\mathrm{1}}}}  \notin   \ottkw{dom} ( \Gamma )   \hyp{3.4}\\
            \ottkw{unr} \, \Gamma \hyp{3.3.1}\\
            \Gamma  \ottsym{,}  \ottmv{x_{{\mathrm{1}}}}  \mathord:  \ottnt{S_{{\mathrm{1}}}}  \vdash  \ottnt{M_{{\mathrm{1}}}}  :  \ottnt{T_{{\mathrm{1}}}}  \mid  \ottnt{e_{{\mathrm{1}}}} \hyp{3.5}.
        \end{gather}
        We have $ \ottmv{x_{{\mathrm{1}}}} \neq \ottmv{x} $ by \hypref{0.2} and \hypref{3.4}.
        So, we have
        \begin{gather*}
             \Gamma ^{- \ottmv{x}  \mathord:  \ottnt{S} }   \ottsym{,}  \ottmv{x_{{\mathrm{1}}}}  \mathord:  \ottnt{S_{{\mathrm{1}}}}  \vdash  \ottnt{M_{{\mathrm{1}}}}  \ottsym{[}  \ottnt{V}  \ottsym{/}  \ottmv{x}  \ottsym{]}  :  \ottnt{T_{{\mathrm{1}}}}  \mid  \ottnt{e_{{\mathrm{1}}}} \hyp{3.6}
        \end{gather*}
        by the induction hypothesis, and
        \begin{gather*}
             \ottsym{(}  \lambda  \ottmv{x_{{\mathrm{1}}}}  \ottsym{.}  \ottnt{M_{{\mathrm{1}}}}  \ottsym{)}  \ottsym{[}  \ottnt{V}  \ottsym{/}  \ottmv{x}  \ottsym{]} \ottsym{=} \lambda  \ottmv{x_{{\mathrm{1}}}}  \ottsym{.}  \ottnt{M_{{\mathrm{1}}}}  \ottsym{[}  \ottnt{V}  \ottsym{/}  \ottmv{x}  \ottsym{]} .
        \end{gather*}
        It also clearly holds that
        \begin{gather*}
            \ottkw{unr} \,  \Gamma ^{- \ottmv{x}  \mathord:  \ottnt{S} }  \\
             \ottmv{x_{{\mathrm{1}}}}  \notin   \ottkw{dom} (  \Gamma ^{- \ottmv{x}  \mathord:  \ottnt{S} }  )  
        \end{gather*}
        by \hypref{3.3.1} and \hypref{3.4}, respectively.
        So we can derive the goal
        \begin{gather*}
             \Gamma ^{- \ottmv{x}  \mathord:  \ottnt{S} }   \vdash  \lambda  \ottmv{x_{{\mathrm{1}}}}  \ottsym{.}  \ottnt{M_{{\mathrm{1}}}}  \ottsym{[}  \ottnt{V}  \ottsym{/}  \ottmv{x}  \ottsym{]}  :   \ottnt{S_{{\mathrm{1}}}} \rightarrow _{ \ottnt{e_{{\mathrm{1}}}} } \ottnt{T_{{\mathrm{1}}}}   \mid  \ottsym{0}
        \end{gather*}
        by \ruleref{T-Abs}, where $ \Gamma ^{- \ottmv{x}  \mathord:  \ottnt{S} } $ is well-formed because we know $ \Gamma ^{- \ottmv{x}  \mathord:  \ottnt{S} }   \ottsym{,}  \ottmv{x_{{\mathrm{1}}}}  \mathord:  \ottnt{S_{{\mathrm{1}}}}$ is well-formed by \hypref{3.6}.

        \item[\ruleref{T-UAbs}, \ruleref{T-RAbs}, \ruleref{T-LAbs}]
        These cases are similar to the case \ruleref{T-Abs}.

        \item[\ruleref{T-App}]
        In this case,
        \begin{gather}
             \Gamma \ottsym{=} \Gamma_{{\mathrm{1}}}  \ottsym{,}  \Gamma_{{\mathrm{2}}}  \hyp{4.1}\\
             \ottnt{M} \ottsym{=} \ottnt{M_{{\mathrm{1}}}} \, \ottnt{M_{{\mathrm{2}}}}  \hyp{4.2}\\
             \ottnt{e} \ottsym{=}   \ottnt{e'}  \sqcup  \ottnt{e_{{\mathrm{1}}}}   \sqcup  \ottnt{e_{{\mathrm{2}}}}   \hyp{4.3}\\
            \Gamma_{{\mathrm{1}}}  \vdash  \ottnt{M_{{\mathrm{1}}}}  :   \ottnt{S'} \rightarrow _{ \ottnt{e'} } \ottnt{T}   \mid  \ottnt{e_{{\mathrm{1}}}} \hyp{4.4}\\
            \Gamma_{{\mathrm{2}}}  \vdash  \ottnt{M_{{\mathrm{2}}}}  :  \ottnt{S'}  \mid  \ottnt{e_{{\mathrm{2}}}} \hyp{4.5}.
        \end{gather}
        We have $ \ottmv{x}  \mathord:  \ottnt{S}  \in  \Gamma_{{\mathrm{1}}} $ or $ \ottmv{x}  \mathord:  \ottnt{S}  \in  \Gamma_{{\mathrm{2}}} $ by \hypref{0.2} and \hypref{4.1}.
        So, it suffices to consider the following three cases.
        \begin{match}
            \item[$ \ottmv{x}  \mathord:  \ottnt{S}  \in  \Gamma_{{\mathrm{1}}} $ and $ \ottmv{x}  \mathord:  \ottnt{S}  \in  \Gamma_{{\mathrm{2}}} $]
            We have
            \begin{gather*}
                 \Gamma_{{\mathrm{1}}} ^{- \ottmv{x}  \mathord:  \ottnt{S} }   \vdash  \ottnt{M_{{\mathrm{1}}}}  \ottsym{[}  \ottnt{V}  \ottsym{/}  \ottmv{x}  \ottsym{]}  :   \ottnt{S'} \rightarrow _{ \ottnt{e'} } \ottnt{T}   \mid  \ottnt{e_{{\mathrm{1}}}} \\
                 \Gamma_{{\mathrm{2}}} ^{- \ottmv{x}  \mathord:  \ottnt{S} }   \vdash  \ottnt{M_{{\mathrm{2}}}}  \ottsym{[}  \ottnt{V}  \ottsym{/}  \ottmv{x}  \ottsym{]}  :  \ottnt{S'}  \mid  \ottnt{e_{{\mathrm{2}}}}
            \end{gather*}
            by the induction hypothesis.
            So, we can derive the goal $  \Gamma_{{\mathrm{1}}} ^{- \ottmv{x}  \mathord:  \ottnt{S} }   \ottsym{,}  \Gamma_{{\mathrm{2}}} ^{- \ottmv{x}  \mathord:  \ottnt{S} }   \vdash  \ottnt{M_{{\mathrm{1}}}}  \ottsym{[}  \ottnt{V}  \ottsym{/}  \ottmv{x}  \ottsym{]} \, \ottnt{M_{{\mathrm{2}}}}  \ottsym{[}  \ottnt{V}  \ottsym{/}  \ottmv{x}  \ottsym{]}  :  \ottnt{T}  \mid    \ottnt{e'}  \sqcup  \ottnt{e_{{\mathrm{1}}}}   \sqcup  \ottnt{e_{{\mathrm{2}}}} $ by \ruleref{T-App}, where $   \Gamma_{{\mathrm{1}}} ^{- \ottmv{x}  \mathord:  \ottnt{S} }   \ottsym{,}  \Gamma_{{\mathrm{2}}} ^{- \ottmv{x}  \mathord:  \ottnt{S} }  \ottsym{=}  \ottsym{(}  \Gamma_{{\mathrm{1}}}  \ottsym{,}  \Gamma_{{\mathrm{2}}}  \ottsym{)} ^{- \ottmv{x}  \mathord:  \ottnt{S} }  $ is well-formed by \propref{typing/env/remove/well-formed} and \hypref{0.1}.

            \item[$ \ottmv{x}  \mathord:  \ottnt{S}  \in  \Gamma_{{\mathrm{1}}} $ and $ \ottmv{x}  \mathord:  \ottnt{S}  \notin  \Gamma_{{\mathrm{2}}} $]
            We have
            \begin{gather*}
                 \Gamma_{{\mathrm{1}}} ^{- \ottmv{x}  \mathord:  \ottnt{S} }   \vdash  \ottnt{M_{{\mathrm{1}}}}  \ottsym{[}  \ottnt{V}  \ottsym{/}  \ottmv{x}  \ottsym{]}  :   \ottnt{S'} \rightarrow _{ \ottnt{e'} } \ottnt{T}   \mid  \ottnt{e_{{\mathrm{1}}}}
            \end{gather*}
            by the induction hypothesis for \hypref{4.4}.
            We have $ \ottmv{x}  \notin   \ottkw{dom} ( \Gamma_{{\mathrm{2}}} )  $ because if $ \ottmv{x}  \in   \ottkw{dom} ( \Gamma_{{\mathrm{2}}} )  $, we have $ \ottmv{x}  \mathord:  \ottnt{S'}  \in  \Gamma_{{\mathrm{2}}} $ for some $ \ottnt{S'} \neq \ottnt{S} $, but this contradicts to the well-formedness of $\Gamma (=\Gamma_{{\mathrm{1}}}  \ottsym{,}  \Gamma_{{\mathrm{2}}})$.
            So, we have
            \begin{gather*}
                \Gamma_{{\mathrm{2}}}  \vdash  \ottnt{M_{{\mathrm{2}}}}  \ottsym{[}  \ottnt{V}  \ottsym{/}  \ottmv{x}  \ottsym{]}  :  \ottnt{S'}  \mid  \ottnt{e_{{\mathrm{2}}}}
            \end{gather*}
            by applying \propref{typing/subst/void} to \hypref{4.5}.
            We also know $  \Gamma_{{\mathrm{2}}} ^{- \ottmv{x}  \mathord:  \ottnt{S} }  \ottsym{=} \Gamma_{{\mathrm{2}}} $.
            So, we can derive the goal by \ruleref{T-App}, where the environment is well-formed by \propref{typing/env/remove/well-formed} and \hypref{0.1}.

            \item[$ \ottmv{x}  \mathord:  \ottnt{S}  \notin  \Gamma_{{\mathrm{1}}} $ and $ \ottmv{x}  \mathord:  \ottnt{S}  \in  \Gamma_{{\mathrm{2}}} $]
            Similar to the sub case above.
        \end{match}

        \item[\ruleref{T-UApp}, \ruleref{T-RApp}, \ruleref{T-LApp}, \ruleref{T-UPair}, and \ruleref{T-OPair}]
        These cases are similar to the case \ruleref{T-App}.

        \item[\ruleref{T-ULet}]
        In this case,
        \begin{gather}
             \Gamma \ottsym{=} \mathcal{G}  \ottsym{[}  \Gamma_{{\mathrm{1}}}  \ottsym{]}  \hyp{5.1}\\
             \ottnt{M} \ottsym{=} \ottkw{let} \, \ottmv{x_{{\mathrm{1}}}}  \otimes  \ottmv{x_{{\mathrm{2}}}}  \ottsym{=}  \ottnt{M_{{\mathrm{1}}}} \, \ottkw{in} \, \ottnt{N}  \hyp{5.2}\\
            \ottsym{\{}  \ottmv{x_{{\mathrm{1}}}}  \ottsym{\}}  \uplus  \ottsym{\{}  \ottmv{x_{{\mathrm{2}}}}  \ottsym{\}}  \uplus   \ottkw{dom} ( \mathcal{G}  \ottsym{[}  \Gamma_{{\mathrm{1}}}  \ottsym{]} )  \hyp{5.5}\\
            \Gamma_{{\mathrm{1}}}  \vdash  \ottnt{M_{{\mathrm{1}}}}  :  \ottnt{S_{{\mathrm{1}}}}  \otimes  \ottnt{S_{{\mathrm{2}}}}  \mid  \ottsym{0} \hyp{5.3}\\
            \mathcal{G}  \ottsym{[}  \ottmv{x_{{\mathrm{1}}}}  \mathord:  \ottnt{S_{{\mathrm{1}}}}  \parallel  \ottmv{x_{{\mathrm{2}}}}  \mathord:  \ottnt{S_{{\mathrm{2}}}}  \ottsym{]}  \vdash  \ottnt{N}  :  \ottnt{T}  \mid  \ottnt{e} \hyp{5.4}.
        \end{gather}
        First of all, we can see $ \ottmv{x} \neq \ottmv{x_{{\mathrm{1}}}} $ and $ \ottmv{x} \neq \ottmv{x_{{\mathrm{2}}}} $ by \hypref{0.2} and \hypref{5.5}.
        So,
        \begin{gather}
              \mathcal{G}  \ottsym{[}  \ottmv{x_{{\mathrm{1}}}}  \mathord:  \ottnt{S_{{\mathrm{1}}}}  \parallel  \ottmv{x_{{\mathrm{2}}}}  \mathord:  \ottnt{S_{{\mathrm{2}}}}  \ottsym{]} ^{- \ottmv{x}  \mathord:  \ottnt{S} }  \ottsym{=}  \mathcal{G} ^{- \ottmv{x}  \mathord:  \ottnt{S} }   \ottsym{[}  \ottmv{x_{{\mathrm{1}}}}  \mathord:  \ottnt{S_{{\mathrm{1}}}}  \parallel  \ottmv{x_{{\mathrm{2}}}}  \mathord:  \ottnt{S_{{\mathrm{2}}}}  \ottsym{]}  \hyp{5.5.3}.
        \end{gather}
        We also note that
        \begin{gather}
            \ottsym{\{}  \ottmv{x_{{\mathrm{1}}}}  \ottsym{\}}  \uplus  \ottsym{\{}  \ottmv{x_{{\mathrm{2}}}}  \ottsym{\}}  \uplus   \ottkw{dom} (  \mathcal{G} ^{- \ottmv{x}  \mathord:  \ottnt{S} }   \ottsym{[}   \Gamma_{{\mathrm{1}}} ^{- \ottmv{x}  \mathord:  \ottnt{S} }   \ottsym{]} )  \hyp{5.8}
        \end{gather}
        by \hypref{5.5}.
        We have $ \ottmv{x}  \mathord:  \ottnt{S}  \in  \Gamma_{{\mathrm{1}}} $ or $ \ottmv{x}  \mathord:  \ottnt{S}  \in  \mathcal{G} $ by \propref{env/num/bindings/ctx}, \hypref{0.2}, and \hypref{5.1}.
        So, it suffices to consider the following three cases.
        \begin{match}
            \item[$ \ottmv{x}  \mathord:  \ottnt{S}  \in  \Gamma_{{\mathrm{1}}} $ and $ \ottmv{x}  \mathord:  \ottnt{S}  \in  \mathcal{G} $]
            We have $ \ottmv{x}  \mathord:  \ottnt{S}  \in  \mathcal{G}  \ottsym{[}  \ottmv{x_{{\mathrm{1}}}}  \mathord:  \ottnt{S_{{\mathrm{1}}}}  \parallel  \ottmv{x_{{\mathrm{2}}}}  \mathord:  \ottnt{S_{{\mathrm{2}}}}  \ottsym{]} $ by \propref{env/num/bindings/ctx}.
            So, we have
            \begin{gather}
                 \Gamma_{{\mathrm{1}}} ^{- \ottmv{x}  \mathord:  \ottnt{S} }   \vdash  \ottnt{M_{{\mathrm{1}}}}  \ottsym{[}  \ottnt{V}  \ottsym{/}  \ottmv{x}  \ottsym{]}  :  \ottnt{S_{{\mathrm{1}}}}  \otimes  \ottnt{S_{{\mathrm{2}}}}  \mid  \ottsym{0} \hyp{5.6}\\
                 \mathcal{G} ^{- \ottmv{x}  \mathord:  \ottnt{S} }   \ottsym{[}  \ottmv{x_{{\mathrm{1}}}}  \mathord:  \ottnt{S_{{\mathrm{1}}}}  \parallel  \ottmv{x_{{\mathrm{2}}}}  \mathord:  \ottnt{S_{{\mathrm{2}}}}  \ottsym{]}  \vdash  \ottnt{N}  \ottsym{[}  \ottnt{V}  \ottsym{/}  \ottmv{x}  \ottsym{]}  :  \ottnt{T}  \mid  \ottnt{e}, \hyp{5.7}
            \end{gather}
            by the induction hypothesis and \hypref{5.5.3}.
            As a result, we can derive the goal by \ruleref{T-ULet}, \hypref{5.8}, \hypref{5.6}, and \hypref{5.7}, where the environment is well-formed by \propref{typing/env/remove/well-formed}.

            \item[$ \ottmv{x}  \mathord:  \ottnt{S}  \in  \Gamma_{{\mathrm{1}}} $ and $ \ottmv{x}  \mathord:  \ottnt{S}  \notin  \mathcal{G} $]
            We have
            \begin{gather}
                 \Gamma_{{\mathrm{1}}} ^{- \ottmv{x}  \mathord:  \ottnt{S} }   \vdash  \ottnt{M_{{\mathrm{1}}}}  \ottsym{[}  \ottnt{V}  \ottsym{/}  \ottmv{x}  \ottsym{]}  :  \ottnt{S_{{\mathrm{1}}}}  \otimes  \ottnt{S_{{\mathrm{2}}}}  \mid  \ottsym{0} \hyp{5.9}
            \end{gather}
            by the induction hypothesis.
            We have $ \ottmv{x}  \notin   \ottkw{dom} ( \mathcal{G} )  $ because if $ \ottmv{x}  \in   \ottkw{dom} ( \mathcal{G} )  $, we have $ \ottmv{x}  \mathord:  \ottnt{S'}  \in  \mathcal{G} $ for some $ \ottnt{S'} \neq \ottnt{S} $, but this contradicts to the well-formedness of $\Gamma (=\mathcal{G}  \ottsym{[}  \Gamma_{{\mathrm{1}}}  \ottsym{]})$.
            Moreover, since $ \ottmv{x} \neq \ottmv{x_{{\mathrm{1}}}} $ and $ \ottmv{x} \neq \ottmv{x_{{\mathrm{2}}}} $, we have $ \ottmv{x}  \notin   \ottkw{dom} ( \mathcal{G}  \ottsym{[}  \ottmv{x_{{\mathrm{1}}}}  \mathord:  \ottnt{S_{{\mathrm{1}}}}  \parallel  \ottmv{x_{{\mathrm{2}}}}  \mathord:  \ottnt{S_{{\mathrm{2}}}}  \ottsym{]} )  $.
            So, we have
            \begin{gather}
                \mathcal{G}  \ottsym{[}  \ottmv{x_{{\mathrm{1}}}}  \mathord:  \ottnt{S_{{\mathrm{1}}}}  \parallel  \ottmv{x_{{\mathrm{2}}}}  \mathord:  \ottnt{S_{{\mathrm{2}}}}  \ottsym{]}  \vdash  \ottnt{N}  \ottsym{[}  \ottnt{V}  \ottsym{/}  \ottmv{x}  \ottsym{]}  :  \ottnt{T}  \mid  \ottnt{e} \hyp{5.10}
            \end{gather}
            by \propref{typing/subst/void} and \hypref{5.4}.
            We also know $  \mathcal{G} ^{- \ottmv{x}  \mathord:  \ottnt{S} }  \ottsym{=} \mathcal{G} $.
            As a result, we can derive the goal by \ruleref{T-ULet}, \hypref{5.8}, \hypref{5.9}, and \hypref{5.10}, where the environment is well-formed by \propref{typing/env/remove/well-formed}.

            \item[$ \ottmv{x}  \mathord:  \ottnt{S}  \notin  \Gamma_{{\mathrm{1}}} $ and $ \ottmv{x}  \mathord:  \ottnt{S}  \in  \mathcal{G} $]
            Similar to the sub cases above.
        \end{match}

        \item[\ruleref{T-OLet}]
        Similar to the case \ruleref{T-ULet}.

        \item[\ruleref{T-Weaken}]
        In this case,
        \begin{gather}
            \Gamma_{{\mathrm{1}}}  \vdash  \ottnt{M}  :  \ottnt{T}  \mid  \ottnt{e'} \hyp{6.1}\\
            \Gamma  \lesssim  \Gamma_{{\mathrm{1}}} \hyp{6.2}\\
             \ottnt{e'} \le \ottnt{e}  \hyp{6.3}.
        \end{gather}
        We consider wether $ \ottmv{x}  \mathord:  \ottnt{S}  \in  \Gamma_{{\mathrm{1}}} $ or not.
        \begin{match}
            \item[$ \ottmv{x}  \mathord:  \ottnt{S}  \in  \Gamma_{{\mathrm{1}}} $]
            We have
            \begin{gather*}
                 \Gamma_{{\mathrm{1}}} ^{- \ottmv{x}  \mathord:  \ottnt{S} }   \vdash  \ottnt{M}  \ottsym{[}  \ottnt{V}  \ottsym{/}  \ottmv{x}  \ottsym{]}  :  \ottnt{T}  \mid  \ottnt{e'}
            \end{gather*}
            by the induction hypothesis.

            \item[$ \ottmv{x}  \mathord:  \ottnt{S}  \notin  \Gamma_{{\mathrm{1}}} $]
            We have $ \ottmv{x}  \notin   \ottkw{dom} ( \Gamma_{{\mathrm{1}}} )  $.
            That is because if $ \ottmv{x}  \in   \ottkw{dom} ( \Gamma_{{\mathrm{1}}} )  $, we have $ \ottmv{x}  \mathord:  \ottnt{S'}  \in  \Gamma_{{\mathrm{1}}} $ for some $ \ottnt{S'} \neq \ottnt{S} $.
            Under the hypothesis, we have $ \ottmv{x}  \mathord:  \ottnt{S'}  \in  \Gamma $ by \propref{env/in/sup} and \hypref{6.2}.
            However, this contradicts the well-formedness of $\Gamma$ since $ \ottmv{x}  \mathord:  \ottnt{S}  \in  \Gamma $.
            Consequently, we have
            \begin{gather*}
                \Gamma_{{\mathrm{1}}}  \vdash  \ottnt{M}  \ottsym{[}  \ottnt{V}  \ottsym{/}  \ottmv{x}  \ottsym{]}  :  \ottnt{T}  \mid  \ottnt{e'}
            \end{gather*}
            by \propref{typing/subst/void} and \hypref{6.1}, and, in this case, $ \Gamma_{{\mathrm{1}}} \ottsym{=}  \Gamma_{{\mathrm{1}}} ^{- \ottmv{x}  \mathord:  \ottnt{S} }  $.
        \end{match}
        So, in either case, we can have
        \begin{gather}
             \Gamma_{{\mathrm{1}}} ^{- \ottmv{x}  \mathord:  \ottnt{S} }   \vdash  \ottnt{M}  \ottsym{[}  \ottnt{V}  \ottsym{/}  \ottmv{x}  \ottsym{]}  :  \ottnt{T}  \mid  \ottnt{e'} \hyp{6.4}.
        \end{gather}
        We also have
        \begin{gather}
             \Gamma ^{- \ottmv{x}  \mathord:  \ottnt{S} }   \lesssim   \Gamma_{{\mathrm{1}}} ^{- \ottmv{x}  \mathord:  \ottnt{S} }  \hyp{6.5}
        \end{gather}
        by \propref{env/remove/sub} and \hypref{6.2}.
        Now we can derive the goal by \ruleref{T-Weaken}, \hypref{6.3}, \hypref{6.4}, and \hypref{6.5}; and $ \Gamma ^{- \ottmv{x}  \mathord:  \ottnt{S} } $ is well-formed by \propref{typing/env/remove/well-formed}.
    \end{match}
\end{prop}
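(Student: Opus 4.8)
The plan is to prove this by induction on the typing derivation for $\Gamma \vdash M : T \mid e$. The statement is the unrestricted value-substitution lemma: it substitutes a closed, effect-free value $V$ for \emph{every} occurrence of $x{:}S$ at once, which is why the conclusion records the reduced context $\Gamma^{-(x:S)}$ (all $x{:}S$ bindings deleted) rather than removing a single binding. The difficulty is not the core $\beta$-style reasoning but the bookkeeping: the single logical variable $x$ may be copied into several sub-contexts by the rules that split the context, and I must track it through each context former while keeping the environments well-formed.

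First I would dispatch the leaves. For \ruleref{T-Unit}, \ruleref{T-New}, \ruleref{T-Op}, \ruleref{T-Split}, \ruleref{T-Close} the context is $\cdot$, and for \ruleref{T-Loc} it is a single location binding, so $x{:}S \in \Gamma$ cannot hold and these cases are vacuous. The real base case is \ruleref{T-Var}, where $M = x$ and $\Gamma = x{:}T$; here \propref{env/in/binding} forces the assumed occurrence to coincide with the singleton binding, giving $S = T$ and $\Gamma^{-(x:S)} = \cdot$, so the goal is literally the hypothesis $\cdot \vdash V : S \mid 0$. For the abstraction rules I would note that the bound variable $x_1$ differs from $x$ (it is fresh for $\Gamma$, whereas $x \in \mathrm{dom}(\Gamma)$), push the substitution under the binder, apply the induction hypothesis to the body, and reassemble; the side conditions ($\ottkw{unr}\,\Gamma$ for \ruleref{T-Abs}, the freshness of $x_1$) and well-formedness of the reduced context are all preserved under deletion, the latter by \propref{typing/env/remove/well-formed}.

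The core of the work is the rules that split the context — the four application rules, the two pair rules, and the two let rules — where $\Gamma$ has the shape $\Gamma_1 \parallel \Gamma_2$, $\Gamma_1, \Gamma_2$, or $\mathcal{G}[\Gamma_1]$. Using \propref{env/num/bindings/ctx} I would case-split on where $x{:}S$ occurs: in a position that contains it I apply the induction hypothesis to substitute and delete, and in a position that does not I instead invoke \propref{typing/subst/void}. The \textbf{main obstacle}, and the step demanding the most care, is bridging these two cases, since the void lemma needs $x \notin \mathrm{dom}(\Gamma_i)$ whereas the case assumption only gives $x{:}S \notin \Gamma_i$. This gap is closed by well-formedness of the original $\Gamma$: were $x$ present in $\Gamma_i$ with a different type $S'$, the uniqueness clause of well-formedness would be violated. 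I must thread this argument through every splitting position and, in the let cases, through the commutation $\mathcal{G}[\cdots]^{-(x:S)} = \mathcal{G}^{-(x:S)}[\cdots]$, which is legitimate precisely because $x \neq x_1, x_2$ by the $\uplus$-disjointness premise.

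Finally, \ruleref{T-Weaken} is handled by applying the induction hypothesis to the premise $\Gamma_1 \vdash M : T \mid e'$ after deciding whether $x{:}S \in \Gamma_1$; if it is, the induction hypothesis removes it, and if it is not, \propref{env/in/sup} (contraposed) shows $x \notin \mathrm{dom}(\Gamma_1)$, so \propref{typing/subst/void} applies with $\Gamma_1 = \Gamma_1^{-(x:S)}$. In both subcases the reduced subcontext relation $\Gamma^{-(x:S)} \lesssim \Gamma_1^{-(x:S)}$ follows from \propref{env/remove/sub}, and re-applying \ruleref{T-Weaken} with the unchanged effect bound closes the goal. Once the delicate well-formedness/void bridging is set up uniformly, every remaining case is routine reconstruction of the concluding rule, so I expect the proof to be long but mechanical apart from that one obstacle.
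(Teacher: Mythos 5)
Your proposal is correct and follows essentially the same route as the paper's proof: the same induction on the typing derivation, the same vacuous-leaf/\textsc{T-Var} base cases, and crucially the same resolution of the central difficulty — bridging $x{:}S \notin \Gamma_i$ to $x \notin \mathrm{dom}(\Gamma_i)$ via well-formedness so that the void-substitution lemma applies in the non-occurring branches of the splitting rules, with the same commutation of context removal past the hole in the let cases and the same \textsc{T-Weaken} treatment via \propref{env/in/sup} and \propref{env/remove/sub}.
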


\begin{prop}{typing/subst/ord}
    If
    \begin{gather}
        \mathcal{G}  \ottsym{[}  \ottmv{x}  \mathord:  \ottnt{S}  \ottsym{]}  \vdash  \ottnt{M}  :  \ottnt{T}  \mid  \ottnt{e} \hyp{0.1}\\
        \ottkw{ord} \, \ottnt{S} \hyp{0.2}\\
        \ottnt{C}  \vdash  \ottnt{V}  :  \ottnt{S}  \mid  \ottsym{0} \hyp{0.3}
    \end{gather}
    then $\mathcal{G}  \ottsym{[}  \ottnt{C}  \ottsym{]}  \vdash  \ottnt{M}  \ottsym{[}  \ottnt{V}  \ottsym{/}  \ottmv{x}  \ottsym{]}  :  \ottnt{T}  \mid  \ottnt{e}$.

    \proof By induction on the given derivation of \hypref{0.1}.
    \begin{match}
        \item[\ruleref{T-Unit}, \ruleref{T-New}, \ruleref{T-Op}, \ruleref{T-Split}, and \ruleref{T-Close}]
        In these cases, we have $ \mathcal{G}  \ottsym{[}  \ottmv{x}  \mathord:  \ottnt{S}  \ottsym{]} \ottsym{=}  \cdot  $.
        So, $ \ottmv{x}  \mathord:  \ottnt{S} \ottsym{=}  \cdot  $ by \propref{env/eq/ctx(null)}, but this cannot happen.

        \item[\ruleref{T-Loc}]
        In this case, we have $ \mathcal{G}  \ottsym{[}  \ottmv{x}  \mathord:  \ottnt{S}  \ottsym{]} \ottsym{=} \ottmv{l}  \mathord:  \ottsym{[}  \ottnt{m}  \ottsym{]} $.
        So, $ \ottmv{x}  \mathord:  \ottnt{S} \ottsym{=} \ottmv{l}  \mathord:  \ottsym{[}  \ottnt{m}  \ottsym{]} $ by \propref{env/eq/ctx(location)}, but this cannot happen.

        \item[\ruleref{T-Var}]
        In this case,
        \begin{gather}
             \mathcal{G}  \ottsym{[}  \ottmv{x}  \mathord:  \ottnt{S}  \ottsym{]} \ottsym{=} \ottmv{y}  \mathord:  \ottnt{T}  \hyp{1.1}\\
             \ottnt{M} \ottsym{=} \ottmv{y}  \hyp{1.2}\\
             \ottnt{e} \ottsym{=} \ottsym{0}  \hyp{1.3}.
        \end{gather}
        We have
        \begin{gather}
             \mathcal{G} \ottsym{=} \ottsym{[]}  \hyp{1.4}\\
             \ottmv{x}  \mathord:  \ottnt{S} \ottsym{=} \ottmv{y}  \mathord:  \ottnt{T}  \hyp{1.5}
        \end{gather}
        by \propref{env/eq/ctx(binding)} and \hypref{1.1}.
        So, $ \ottmv{x} \ottsym{=} \ottmv{y} $ and $ \ottnt{S} \ottsym{=} \ottnt{T} $ by \hypref{1.5}.
        Now, the goal is identical to \hypref{0.3}.

        \item[\ruleref{T-Abs}]
        In this case, we know $\ottkw{unr} \, \mathcal{G}  \ottsym{[}  \ottmv{x}  \mathord:  \ottnt{S}  \ottsym{]}$.
        However, this contradicts to $\ottkw{ord} \, \ottnt{S}$.
        So, this case cannot happen.

        \item[\ruleref{T-UAbs}]
        In this case,
        \begin{gather}
             \ottnt{M} \ottsym{=} \lambda^\circ  \ottmv{x_{{\mathrm{1}}}}  \ottsym{.}  \ottnt{M_{{\mathrm{1}}}}  \hyp{2.1}\\
             \ottnt{T} \ottsym{=}  \ottnt{T_{{\mathrm{1}}}} \rightarrowtriangle _{ \ottnt{e_{{\mathrm{1}}}} } \ottnt{T_{{\mathrm{2}}}}   \hyp{2.2}\\
             \ottnt{e} \ottsym{=} \ottsym{0}  \hyp{2.3}\\
             \ottmv{x_{{\mathrm{1}}}}  \notin   \ottkw{dom} ( \mathcal{G}  \ottsym{[}  \ottmv{x}  \mathord:  \ottnt{S}  \ottsym{]} )   \hyp{2.4}\\
            \mathcal{G}  \ottsym{[}  \ottmv{x}  \mathord:  \ottnt{S}  \ottsym{]}  \parallel  \ottmv{x_{{\mathrm{1}}}}  \mathord:  \ottnt{T_{{\mathrm{1}}}}  \vdash  \ottnt{M_{{\mathrm{1}}}}  :  \ottnt{T_{{\mathrm{2}}}}  \mid  \ottnt{e_{{\mathrm{1}}}} \hyp{2.5}.
        \end{gather}
        We have
        \begin{gather}
            \mathcal{G}  \ottsym{[}  \ottnt{C}  \ottsym{]}  \parallel  \ottmv{x_{{\mathrm{1}}}}  \mathord:  \ottnt{T_{{\mathrm{1}}}}  \vdash  \ottnt{M_{{\mathrm{1}}}}  \ottsym{[}  \ottnt{V}  \ottsym{/}  \ottmv{x}  \ottsym{]}  :  \ottnt{T_{{\mathrm{2}}}}  \mid  \ottnt{e_{{\mathrm{1}}}} \hyp{2.6}
        \end{gather}
        by the induction hypothesis.
        We have
        \begin{gather}
             \ottmv{x_{{\mathrm{1}}}}  \notin   \ottkw{dom} ( \mathcal{G}  \ottsym{[}  \ottnt{C}  \ottsym{]} )   \hyp{2.7}
        \end{gather}
        by \hypref{2.4} since $ \ottkw{dom} ( \ottnt{C} ) $ has no variable.
        Therefore, we can derive
        \begin{gather*}
            \mathcal{G}  \ottsym{[}  \ottnt{C}  \ottsym{]}  \vdash  \lambda^\circ  \ottmv{x_{{\mathrm{1}}}}  \ottsym{.}  \ottnt{M_{{\mathrm{1}}}}  \ottsym{[}  \ottnt{V}  \ottsym{/}  \ottmv{x}  \ottsym{]}  :   \ottnt{T_{{\mathrm{1}}}} \rightarrowtriangle _{ \ottnt{e_{{\mathrm{1}}}} } \ottnt{T_{{\mathrm{2}}}}   \mid  \ottsym{0}
        \end{gather*}
        by \ruleref{T-UAbs}, \hypref{2.6}, and \hypref{2.7}.
        We can see $\mathcal{G}  \ottsym{[}  \ottnt{C}  \ottsym{]}$ is well-formed because $\mathcal{G}  \ottsym{[}  \ottnt{C}  \ottsym{]}  \parallel  \ottmv{x_{{\mathrm{1}}}}  \mathord:  \ottnt{T_{{\mathrm{1}}}}$ is well-formed by \hypref{2.6}, and $ \lambda^\circ  \ottmv{x_{{\mathrm{1}}}}  \ottsym{.}  \ottnt{M_{{\mathrm{1}}}}  \ottsym{[}  \ottnt{V}  \ottsym{/}  \ottmv{x}  \ottsym{]} \ottsym{=} \ottsym{(}  \lambda^\circ  \ottmv{x_{{\mathrm{1}}}}  \ottsym{.}  \ottnt{M_{{\mathrm{1}}}}  \ottsym{)}  \ottsym{[}  \ottnt{V}  \ottsym{/}  \ottmv{x}  \ottsym{]} $ because we have $ \ottmv{x} \neq \ottmv{x_{{\mathrm{1}}}} $ by \hypref{2.4}.

        \item[\ruleref{T-RAbs} and \ruleref{T-LAbs}]
        Similar to the case \ruleref{T-UAbs}.

        \item[\ruleref{T-App}]
        In this case,
        \begin{gather}
             \mathcal{G}  \ottsym{[}  \ottmv{x}  \mathord:  \ottnt{S}  \ottsym{]} \ottsym{=} \Gamma_{{\mathrm{1}}}  \ottsym{,}  \Gamma_{{\mathrm{2}}}  \hyp{3.1}\\
             \ottnt{M} \ottsym{=} \ottnt{M_{{\mathrm{1}}}} \, \ottnt{M_{{\mathrm{2}}}}  \hyp{3.2}\\
             \ottnt{e} \ottsym{=}   \ottnt{e_{{\mathrm{0}}}}  \sqcup  \ottnt{e_{{\mathrm{1}}}}   \sqcup  \ottnt{e_{{\mathrm{2}}}}   \hyp{3.3}\\
            \Gamma_{{\mathrm{1}}}  \vdash  \ottnt{M_{{\mathrm{1}}}}  :   \ottnt{S} \rightarrow _{ \ottnt{e_{{\mathrm{0}}}} } \ottnt{T}   \mid  \ottnt{e_{{\mathrm{1}}}} \hyp{3.4}\\
            \Gamma_{{\mathrm{2}}}  \vdash  \ottnt{M_{{\mathrm{2}}}}  :  \ottnt{S}  \mid  \ottnt{e_{{\mathrm{2}}}} \hyp{3.5}.
        \end{gather}
        There are three cases by \propref{env/eq/ctx(concat)} and \hypref{3.1}.
        \begin{match}
            \item[$ \mathcal{G} \ottsym{=} \ottsym{[]} $ and $ \ottmv{x}  \mathord:  \ottnt{S} \ottsym{=} \Gamma_{{\mathrm{1}}}  \ottsym{,}  \Gamma_{{\mathrm{2}}} $]
            This case cannot happen since $ \ottmv{x}  \mathord:  \ottnt{S} \ottsym{=} \Gamma_{{\mathrm{1}}}  \ottsym{,}  \Gamma_{{\mathrm{2}}} $ is impossible.

            \item[$ \mathcal{G} \ottsym{=} \mathcal{G}_{{\mathrm{1}}}  \ottsym{,}  \Gamma_{{\mathrm{2}}} $ and $ \mathcal{G}_{{\mathrm{1}}}  \ottsym{[}  \ottmv{x}  \mathord:  \ottnt{S}  \ottsym{]} \ottsym{=} \Gamma_{{\mathrm{1}}} $]
            We have
            \begin{gather}
                \mathcal{G}_{{\mathrm{1}}}  \ottsym{[}  \ottnt{C}  \ottsym{]}  \vdash  \ottnt{M_{{\mathrm{1}}}}  \ottsym{[}  \ottnt{V}  \ottsym{/}  \ottmv{x}  \ottsym{]}  :   \ottnt{S} \rightarrow _{ \ottnt{e_{{\mathrm{0}}}} } \ottnt{T}   \mid  \ottnt{e_{{\mathrm{1}}}} \hyp{3.6}
            \end{gather}
            by the induction hypothesis for \hypref{3.4}.
            Since we know $\mathcal{G}_{{\mathrm{1}}}  \ottsym{[}  \ottmv{x}  \mathord:  \ottnt{S}  \ottsym{]}  \ottsym{,}  \Gamma_{{\mathrm{2}}}$ is well-formed by \hypref{0.1}, $ \ottmv{x}  \mathord:  \ottnt{S}  \notin  \Gamma_{{\mathrm{2}}} $ by \propref{typing/env/in/ord/unique} and \hypref{0.2}.
            From that, we have $ \ottmv{x}  \notin   \ottkw{dom} ( \Gamma_{{\mathrm{2}}} )  $ because if $ \ottmv{x}  \in   \ottkw{dom} ( \Gamma_{{\mathrm{2}}} )  $, we have $ \ottmv{x}  \mathord:  \ottnt{S'}  \in  \Gamma_{{\mathrm{2}}} $ for some $ \ottnt{S'} \neq \ottnt{S} $, but this contradicts to the well-formedness of $\mathcal{G}_{{\mathrm{1}}}  \ottsym{[}  \ottmv{x}  \mathord:  \ottnt{S}  \ottsym{]}  \ottsym{,}  \Gamma_{{\mathrm{2}}}$.
            So, we have
            \begin{gather}
                \Gamma_{{\mathrm{2}}}  \vdash  \ottnt{M_{{\mathrm{2}}}}  \ottsym{[}  \ottnt{V}  \ottsym{/}  \ottmv{x}  \ottsym{]}  :  \ottnt{S}  \mid  \ottnt{e_{{\mathrm{2}}}} \hyp{3.7}
            \end{gather}
            by \propref{typing/subst/void} and \hypref{3.5}.
            Now we can derive the goal
            \begin{gather*}
                \mathcal{G}_{{\mathrm{1}}}  \ottsym{[}  \ottnt{C}  \ottsym{]}  \ottsym{,}  \Gamma_{{\mathrm{2}}}  \vdash  \ottsym{(}  \ottnt{M_{{\mathrm{1}}}} \, \ottnt{M_{{\mathrm{2}}}}  \ottsym{)}  \ottsym{[}  \ottnt{V}  \ottsym{/}  \ottmv{x}  \ottsym{]}  :  \ottnt{T}  \mid    \ottnt{e_{{\mathrm{0}}}}  \sqcup  \ottnt{e_{{\mathrm{1}}}}   \sqcup  \ottnt{e_{{\mathrm{2}}}} 
            \end{gather*}
            by \ruleref{T-App}, \hypref{3.6}, and \hypref{3.7}, where the environment is well-formed by \propref{typing/env/hole/runtime/well-formed}.

            \item[$ \mathcal{G} \ottsym{=} \Gamma_{{\mathrm{1}}}  \ottsym{,}  \mathcal{G}_{{\mathrm{2}}} $ and $ \mathcal{G}_{{\mathrm{2}}}  \ottsym{[}  \ottmv{x}  \mathord:  \ottnt{S}  \ottsym{]} \ottsym{=} \Gamma_{{\mathrm{2}}} $]
            Similar to the sub case above.
        \end{match}

        \item[\ruleref{T-UApp}, \ruleref{T-RApp}, \ruleref{T-LApp}, \ruleref{T-UPair}, and \ruleref{T-OPair}]
        Similar to the case \ruleref{T-App}.

        \item[\ruleref{T-ULet}]
        In this case,
        \begin{gather}
             \mathcal{G}  \ottsym{[}  \ottmv{x}  \mathord:  \ottnt{S}  \ottsym{]} \ottsym{=} \mathcal{G}_{{\mathrm{1}}}  \ottsym{[}  \Gamma_{{\mathrm{1}}}  \ottsym{]}  \hyp{4.1}\\
             \ottnt{M} \ottsym{=} \ottkw{let} \, \ottmv{x_{{\mathrm{1}}}}  \otimes  \ottmv{x_{{\mathrm{2}}}}  \ottsym{=}  \ottnt{M_{{\mathrm{1}}}} \, \ottkw{in} \, \ottnt{M_{{\mathrm{2}}}}  \hyp{4.2}\\
            \ottsym{\{}  \ottmv{x_{{\mathrm{1}}}}  \ottsym{\}}  \uplus  \ottsym{\{}  \ottmv{x_{{\mathrm{2}}}}  \ottsym{\}}  \uplus   \ottkw{dom} ( \mathcal{G}_{{\mathrm{1}}}  \ottsym{[}  \Gamma_{{\mathrm{1}}}  \ottsym{]} )  \hyp{4.3}\\
            \Gamma_{{\mathrm{1}}}  \vdash  \ottnt{M_{{\mathrm{1}}}}  :  \ottnt{S_{{\mathrm{1}}}}  \otimes  \ottnt{S_{{\mathrm{2}}}}  \mid  \ottsym{0} \hyp{4.4}\\
            \mathcal{G}_{{\mathrm{1}}}  \ottsym{[}  \ottmv{x_{{\mathrm{1}}}}  \mathord:  \ottnt{S_{{\mathrm{1}}}}  \parallel  \ottmv{x_{{\mathrm{2}}}}  \mathord:  \ottnt{S_{{\mathrm{2}}}}  \ottsym{]}  \vdash  \ottnt{M_{{\mathrm{2}}}}  :  \ottnt{T}  \mid  \ottnt{e} \hyp{4.5}.
        \end{gather}
        We have
        \begin{gather*}
             \ottmv{x} \neq \ottmv{x_{{\mathrm{1}}}}  \\
             \ottmv{x} \neq \ottmv{x_{{\mathrm{2}}}} 
        \end{gather*}
        by \hypref{4.3}.
        We have $ \ottmv{x}  \mathord:  \ottnt{S}  \in  \mathcal{G}  \ottsym{[}  \ottmv{x}  \mathord:  \ottnt{S}  \ottsym{]} $ by \propref{env/in/ctx/binding}.
        We know $\mathcal{G}_{{\mathrm{1}}}  \ottsym{[}  \Gamma_{{\mathrm{1}}}  \ottsym{]}$ is well-formed by \hypref{0.1}.
        From those, we have $ \ottsym{\#} _{ \ottmv{x}  \mathord:  \ottnt{S} }( \mathcal{G}_{{\mathrm{1}}}  \ottsym{[}  \Gamma_{{\mathrm{1}}}  \ottsym{]} )  = 1$ since $\ottkw{ord} \, \ottnt{S}$.
        So, we have the following cases by \propref{env/num/bindings/ctx}.
        \begin{match}
            \item[$ \ottmv{x}  \mathord:  \ottnt{S}  \in  \Gamma_{{\mathrm{1}}} $ and $ \ottmv{x}  \mathord:  \ottnt{S}  \notin  \mathcal{G}_{{\mathrm{1}}} $]
            We have some $\mathcal{G}'_{{\mathrm{1}}}$ such that $ \mathcal{G}'_{{\mathrm{1}}}  \ottsym{[}  \ottmv{x}  \mathord:  \ottnt{S}  \ottsym{]} \ottsym{=} \Gamma_{{\mathrm{1}}} $ by \propref{env/in/ctx/exists}.
            So, we have
            \begin{gather}
                \mathcal{G}'_{{\mathrm{1}}}  \ottsym{[}  \ottnt{C}  \ottsym{]}  \vdash  \ottnt{M_{{\mathrm{1}}}}  \ottsym{[}  \ottnt{V}  \ottsym{/}  \ottmv{x}  \ottsym{]}  :  \ottnt{S_{{\mathrm{1}}}}  \otimes  \ottnt{S_{{\mathrm{2}}}}  \mid  \ottsym{0} \hyp{4.6}
            \end{gather}
            by the induction hypothesis for \hypref{4.4}.
            We can see $ \ottmv{x}  \notin   \ottkw{dom} ( \mathcal{G}_{{\mathrm{1}}}  \ottsym{[}  \ottmv{x_{{\mathrm{1}}}}  \mathord:  \ottnt{S_{{\mathrm{1}}}}  \parallel  \ottmv{x_{{\mathrm{2}}}}  \mathord:  \ottnt{S_{{\mathrm{2}}}}  \ottsym{]} )  $ because if $ \ottmv{x}  \in   \ottkw{dom} ( \mathcal{G}_{{\mathrm{1}}}  \ottsym{[}  \ottmv{x_{{\mathrm{1}}}}  \mathord:  \ottnt{S_{{\mathrm{1}}}}  \parallel  \ottmv{x_{{\mathrm{2}}}}  \mathord:  \ottnt{S_{{\mathrm{2}}}}  \ottsym{]} )  $, we have $ \ottmv{x}  \mathord:  \ottnt{S'}  \in  \mathcal{G}_{{\mathrm{1}}} $ for some $ \ottnt{S'} \neq \ottnt{S} $ since $ \ottmv{x} \neq \ottmv{x_{{\mathrm{1}}}} $, $ \ottmv{x} \neq \ottmv{x_{{\mathrm{2}}}} $, and $ \ottmv{x}  \mathord:  \ottnt{S}  \notin  \mathcal{G}_{{\mathrm{1}}} $, but this contradicts to the well-formedness of $\mathcal{G}_{{\mathrm{1}}}  \ottsym{[}  \Gamma_{{\mathrm{1}}}  \ottsym{]}$ since $ \ottmv{x}  \mathord:  \ottnt{S}  \in  \Gamma_{{\mathrm{1}}} $.
            So, we have
            \begin{gather}
                \mathcal{G}_{{\mathrm{1}}}  \ottsym{[}  \ottmv{x_{{\mathrm{1}}}}  \mathord:  \ottnt{S_{{\mathrm{1}}}}  \parallel  \ottmv{x_{{\mathrm{2}}}}  \mathord:  \ottnt{S_{{\mathrm{2}}}}  \ottsym{]}  \vdash  \ottnt{M_{{\mathrm{2}}}}  \ottsym{[}  \ottnt{V}  \ottsym{/}  \ottmv{x}  \ottsym{]}  :  \ottnt{T}  \mid  \ottnt{e} \hyp{4.7}
            \end{gather}
            by \propref{typing/subst/void} and \hypref{4.5}.
            We can see
            \begin{gather}
                \ottsym{\{}  \ottmv{x_{{\mathrm{1}}}}  \ottsym{\}}  \uplus  \ottsym{\{}  \ottmv{x_{{\mathrm{2}}}}  \ottsym{\}}  \uplus   \ottkw{dom} ( \mathcal{G}_{{\mathrm{1}}}  \ottsym{[}  \mathcal{G}'_{{\mathrm{1}}}  \ottsym{[}  \ottnt{C}  \ottsym{]}  \ottsym{]} )  \hyp{4.8}
            \end{gather}
            by \hypref{4.3} since $ \mathcal{G}_{{\mathrm{1}}}  \ottsym{[}  \Gamma_{{\mathrm{1}}}  \ottsym{]} \ottsym{=} \mathcal{G}_{{\mathrm{1}}}  \ottsym{[}  \mathcal{G}'_{{\mathrm{1}}}  \ottsym{[}  \ottmv{x}  \mathord:  \ottnt{S}  \ottsym{]}  \ottsym{]} $ and $\ottnt{C}$ has no variable bindings.
            Now, we can derive
            \begin{gather}
                \mathcal{G}_{{\mathrm{1}}}  \ottsym{[}  \mathcal{G}'_{{\mathrm{1}}}  \ottsym{[}  \ottnt{C}  \ottsym{]}  \ottsym{]}  \vdash  \ottkw{let} \, \ottmv{x_{{\mathrm{1}}}}  \otimes  \ottmv{x_{{\mathrm{2}}}}  \ottsym{=}  \ottnt{M_{{\mathrm{1}}}}  \ottsym{[}  \ottnt{V}  \ottsym{/}  \ottmv{x}  \ottsym{]} \, \ottkw{in} \, \ottnt{M_{{\mathrm{2}}}}  \ottsym{[}  \ottnt{V}  \ottsym{/}  \ottmv{x}  \ottsym{]}  :  \ottnt{T}  \mid  \ottnt{e}
            \end{gather}
            by \ruleref{T-ULet}, \hypref{4.6}, \hypref{4.7}, and \hypref{4.8}.
            To this end, we will show
            \begin{gather*}
                 \mathcal{G}_{{\mathrm{1}}}  \ottsym{[}  \mathcal{G}'_{{\mathrm{1}}}  \ottsym{[}  \ottnt{C}  \ottsym{]}  \ottsym{]} \ottsym{=} \mathcal{G}  \ottsym{[}  \ottnt{C}  \ottsym{]} .
            \end{gather*}
            Then, $\mathcal{G}_{{\mathrm{1}}}  \ottsym{[}  \mathcal{G}'_{{\mathrm{1}}}  \ottsym{[}  \ottnt{C}  \ottsym{]}  \ottsym{]}$ is well-formed by \propref{typing/env/hole/runtime/well-formed} and \hypref{0.1}, and the derived judgment matches to the goal.
            We have
            \begin{gather*}
                g[x:S][C/x:S] = g1[g1'[x:S] ][C/x:S]
            \end{gather*}
            by \hypref{4.1}.
            Since, $\ottkw{ord} \, \ottnt{S}$, we have $ \mathcal{G}  \ottsym{[}  \ottnt{C}  \ottsym{]} \ottsym{=} \mathcal{G}_{{\mathrm{1}}}  \ottsym{[}  \mathcal{G}'_{{\mathrm{1}}}  \ottsym{[}  \ottnt{C}  \ottsym{]}  \ottsym{]} $ by applying \propref{typing/env/replace/ord} to both sides.

            \item[$ \ottmv{x}  \mathord:  \ottnt{S}  \in  \mathcal{G}_{{\mathrm{1}}} $ and $ \ottmv{x}  \mathord:  \ottnt{S}  \notin  \Gamma_{{\mathrm{1}}} $]
            We have $ \ottmv{x}  \mathord:  \ottnt{S}  \in  \mathcal{G}_{{\mathrm{1}}}  \ottsym{[}  \ottmv{x_{{\mathrm{1}}}}  \mathord:  \ottnt{S_{{\mathrm{1}}}}  \parallel  \ottmv{x_{{\mathrm{2}}}}  \mathord:  \ottnt{S_{{\mathrm{2}}}}  \ottsym{]} $ by \propref{env/num/bindings/ctx}.
            So, we have some $\mathcal{G}'_{{\mathrm{1}}}$ such that
            \begin{gather}
                 \mathcal{G}'_{{\mathrm{1}}}  \ottsym{[}  \ottmv{x}  \mathord:  \ottnt{S}  \ottsym{]} \ottsym{=} \mathcal{G}_{{\mathrm{1}}}  \ottsym{[}  \ottmv{x_{{\mathrm{1}}}}  \mathord:  \ottnt{S_{{\mathrm{1}}}}  \parallel  \ottmv{x_{{\mathrm{2}}}}  \mathord:  \ottnt{S_{{\mathrm{2}}}}  \ottsym{]}  \hyp{4.9}
            \end{gather}
            by \propref{env/in/ctx/exists}.
            Now, we have
            \begin{gather}
                \mathcal{G}'_{{\mathrm{1}}}  \ottsym{[}  \ottnt{C}  \ottsym{]}  \vdash  \ottnt{M_{{\mathrm{2}}}}  :  \ottnt{S_{{\mathrm{1}}}}  \otimes  \ottnt{S_{{\mathrm{2}}}}  \mid  \ottnt{e} \hyp{4.10}
            \end{gather}
            by the induction hypothesis for \hypref{4.5}.
            Since, $\mathcal{G}'_{{\mathrm{1}}}  \ottsym{[}  \ottmv{x}  \mathord:  \ottnt{S}  \ottsym{]}$ is well-formed and $\ottkw{ord} \, \ottnt{S}$, we have
            \begin{gather*}
                 \mathcal{G}'_{{\mathrm{1}}}  \ottsym{[}  \ottmv{x}  \mathord:  \ottnt{S}  \ottsym{]}  \ottsym{[}  \ottnt{C}  \ottsym{/}  \ottmv{x}  \mathord:  \ottnt{S}  \ottsym{]} \ottsym{=} \mathcal{G}'_{{\mathrm{1}}}  \ottsym{[}  \ottnt{C}  \ottsym{]} 
            \end{gather*}
            by \propref{typing/env/replace/ord}.
            From that and \hypref{4.9}, \hypref{4.10} results in
            \begin{gather}
                \mathcal{G}_{{\mathrm{1}}}  \ottsym{[}  \ottnt{C}  \ottsym{/}  \ottmv{x}  \mathord:  \ottnt{S}  \ottsym{]}  \ottsym{[}  \ottmv{x_{{\mathrm{1}}}}  \mathord:  \ottnt{S_{{\mathrm{1}}}}  \parallel  \ottmv{x_{{\mathrm{2}}}}  \mathord:  \ottnt{S_{{\mathrm{2}}}}  \ottsym{]}  \vdash  \ottnt{M_{{\mathrm{2}}}}  :  \ottnt{S_{{\mathrm{1}}}}  \otimes  \ottnt{S_{{\mathrm{2}}}}  \mid  \ottnt{e} \hyp{4.11}
            \end{gather}
            since $ \ottmv{x} \neq \ottmv{x_{{\mathrm{1}}}} $ and $ \ottmv{x} \neq \ottmv{x_{{\mathrm{2}}}} $.
            We have $ \ottmv{x}  \notin   \ottkw{dom} ( \Gamma_{{\mathrm{1}}} )  $ because if $ \ottmv{x}  \in   \ottkw{dom} ( \Gamma_{{\mathrm{1}}} )  $, we have some $ \ottmv{x}  \mathord:  \ottnt{S'}  \in  \Gamma_{{\mathrm{1}}} $ for some $ \ottnt{S'} \neq \ottnt{S} $ since $ \ottmv{x}  \mathord:  \ottnt{S}  \notin  \Gamma_{{\mathrm{1}}} $ in this case, but this contradicts the well-formedness of $\mathcal{G}_{{\mathrm{1}}}  \ottsym{[}  \Gamma_{{\mathrm{1}}}  \ottsym{]}$ since $ \ottmv{x}  \mathord:  \ottnt{S}  \in  \mathcal{G}_{{\mathrm{1}}} $.
            So, we have
            \begin{gather}
                \Gamma_{{\mathrm{1}}}  \vdash  \ottnt{M_{{\mathrm{1}}}}  \ottsym{[}  \ottnt{V}  \ottsym{/}  \ottmv{x}  \ottsym{]}  :  \ottnt{S_{{\mathrm{1}}}}  \otimes  \ottnt{S_{{\mathrm{2}}}}  \mid  \ottsym{0} \hyp{4.12}
            \end{gather}
            by \propref{typing/subst/void} and \hypref{4.4}.
            We have
            \begin{gather}
                \ottsym{\{}  \ottmv{x_{{\mathrm{1}}}}  \ottsym{\}}  \uplus  \ottsym{\{}  \ottmv{x_{{\mathrm{2}}}}  \ottsym{\}}  \uplus   \ottkw{dom} ( \mathcal{G}_{{\mathrm{1}}}  \ottsym{[}  \ottnt{C}  \ottsym{/}  \ottmv{x}  \mathord:  \ottnt{S}  \ottsym{]}  \ottsym{[}  \Gamma_{{\mathrm{1}}}  \ottsym{]} )  \hyp{4.13}
            \end{gather}
            by \hypref{4.3} since $\ottnt{C}$ has novariable bindings.
            Now, we can derive
            \begin{gather}
                \mathcal{G}_{{\mathrm{1}}}  \ottsym{[}  \ottnt{C}  \ottsym{/}  \ottmv{x}  \mathord:  \ottnt{S}  \ottsym{]}  \ottsym{[}  \Gamma_{{\mathrm{1}}}  \ottsym{]}  \vdash  \ottkw{let} \, \ottmv{x_{{\mathrm{1}}}}  \otimes  \ottmv{x_{{\mathrm{2}}}}  \ottsym{=}  \ottnt{M_{{\mathrm{1}}}}  \ottsym{[}  \ottnt{V}  \ottsym{/}  \ottmv{x}  \ottsym{]} \, \ottkw{in} \, \ottnt{M_{{\mathrm{2}}}}  \ottsym{[}  \ottnt{V}  \ottsym{/}  \ottmv{x}  \ottsym{]}  :  \ottnt{T}  \mid  \ottnt{e}
            \end{gather}
            by \ruleref{T-ULet}, \hypref{4.11}, \hypref{4.12}, and \hypref{4.13}.
            To this end, we will show
            \begin{gather*}
                 \mathcal{G}_{{\mathrm{1}}}  \ottsym{[}  \ottnt{C}  \ottsym{/}  \ottmv{x}  \mathord:  \ottnt{S}  \ottsym{]}  \ottsym{[}  \Gamma_{{\mathrm{1}}}  \ottsym{]} \ottsym{=} \mathcal{G}  \ottsym{[}  \ottnt{C}  \ottsym{]} .
            \end{gather*}
            Then, $\mathcal{G}_{{\mathrm{1}}}  \ottsym{[}  \ottnt{C}  \ottsym{/}  \ottmv{x}  \mathord:  \ottnt{S}  \ottsym{]}  \ottsym{[}  \Gamma_{{\mathrm{1}}}  \ottsym{]}$ is well-formed by \propref{typing/env/hole/runtime/well-formed}, and the derived judgment matches the goal.
            We have
            \begin{gather*}
                 \mathcal{G}  \ottsym{[}  \ottmv{x}  \mathord:  \ottnt{S}  \ottsym{]}  \ottsym{[}  \ottnt{C}  \ottsym{/}  \ottmv{x}  \mathord:  \ottnt{S}  \ottsym{]} \ottsym{=} \mathcal{G}_{{\mathrm{1}}}  \ottsym{[}  \Gamma_{{\mathrm{1}}}  \ottsym{]}  \ottsym{[}  \ottnt{C}  \ottsym{/}  \ottmv{x}  \mathord:  \ottnt{S}  \ottsym{]} 
            \end{gather*}
            by \hypref{4.1}.
            Since $ \ottmv{x}  \mathord:  \ottnt{S}  \notin  \Gamma_{{\mathrm{1}}} $, the right hand side equals to $\mathcal{G}_{{\mathrm{1}}}  \ottsym{[}  \ottnt{C}  \ottsym{/}  \ottmv{x}  \mathord:  \ottnt{S}  \ottsym{]}  \ottsym{[}  \Gamma_{{\mathrm{1}}}  \ottsym{]}$.
            Since $\mathcal{G}  \ottsym{[}  \ottmv{x}  \mathord:  \ottnt{S}  \ottsym{]}$ is well-formed and $\ottkw{ord} \, \ottnt{S}$, the left hand side equals to $\mathcal{G}  \ottsym{[}  \ottnt{C}  \ottsym{]}$ by \propref{typing/env/replace/ord}.
        \end{match}

        \item[\ruleref{T-OLet}]
        Similar to the case \ruleref{T-ULet}.

        \item[\ruleref{T-Weaken}]
        In this case,
        \begin{gather}
            \Gamma_{{\mathrm{1}}}  \vdash  \ottnt{M}  :  \ottnt{T}  \mid  \ottnt{e_{{\mathrm{1}}}} \hyp{6.1}\\
            \mathcal{G}  \ottsym{[}  \ottmv{x}  \mathord:  \ottnt{S}  \ottsym{]}  \lesssim  \Gamma_{{\mathrm{1}}} \hyp{6.2}\\
             \ottnt{e_{{\mathrm{1}}}} \le \ottnt{e}  \hyp{6.3}.
        \end{gather}
        We have $ \ottmv{x}  \mathord:  \ottnt{S}  \in  \mathcal{G}  \ottsym{[}  \ottmv{x}  \mathord:  \ottnt{S}  \ottsym{]} $ by \propref{env/in/ctx/binding}.
        So, we have $ \ottmv{x}  \mathord:  \ottnt{S}  \in  \Gamma_{{\mathrm{1}}} $ by \propref{env/in/sub}, \hypref{0.2}, and \hypref{6.2}.
        From that, we get some $\mathcal{G}'$ such that $ \mathcal{G}'  \ottsym{[}  \ottmv{x}  \mathord:  \ottnt{S}  \ottsym{]} \ottsym{=} \Gamma_{{\mathrm{1}}} $ by \propref{env/in/ctx/exists}.
        Now, we can have
        \begin{gather}
            \mathcal{G}'  \ottsym{[}  \ottnt{C}  \ottsym{]}  \vdash  \ottnt{M}  \ottsym{[}  \ottnt{V}  \ottsym{/}  \ottmv{x}  \ottsym{]}  :  \ottnt{T}  \mid  \ottnt{e_{{\mathrm{1}}}} \hyp{6.4}
        \end{gather}
        by the induction hypothesis.
        To this end, we will show $\mathcal{G}  \ottsym{[}  \ottnt{C}  \ottsym{]}  \lesssim  \mathcal{G}'  \ottsym{[}  \ottnt{C}  \ottsym{]}$.
        Using that, we can show the goal by \ruleref{T-Weaken}, \hypref{6.4}, and \hypref{6.3}.

        Remember that we have had $\mathcal{G}  \ottsym{[}  \ottmv{x}  \mathord:  \ottnt{S}  \ottsym{]}  \lesssim  \mathcal{G}'  \ottsym{[}  \ottmv{x}  \mathord:  \ottnt{S}  \ottsym{]}$.
        So, $\mathcal{G}  \ottsym{[}  \ottmv{x}  \mathord:  \ottnt{S}  \ottsym{]}  \ottsym{[}  \ottnt{C}  \ottsym{/}  \ottmv{x}  \mathord:  \ottnt{S}  \ottsym{]}  \lesssim  \mathcal{G}'  \ottsym{[}  \ottmv{x}  \mathord:  \ottnt{S}  \ottsym{]}  \ottsym{[}  \ottnt{C}  \ottsym{/}  \ottmv{x}  \mathord:  \ottnt{S}  \ottsym{]}$ by \propref{env/replace/runtime/sub} and \hypref{0.2}.
        Here, we know $\mathcal{G}  \ottsym{[}  \ottmv{x}  \mathord:  \ottnt{S}  \ottsym{]}$ and $\mathcal{G}'  \ottsym{[}  \ottmv{x}  \mathord:  \ottnt{S}  \ottsym{]}$ are well-formed by \hypref{0.1} and \hypref{6.1}.
        Consequently, we have $\mathcal{G}  \ottsym{[}  \ottnt{C}  \ottsym{]}  \lesssim  \mathcal{G}'  \ottsym{[}  \ottnt{C}  \ottsym{]}$ by applying \propref{typing/env/replace/ord} to both sides.
    \end{match}
\end{prop}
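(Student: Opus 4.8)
The plan is to argue by induction on the derivation of $\mathcal{G}[x:S] \vdash M : T \mid e$, exploiting that $\ottkw{ord}\,S$ forces the binding $x:S$ to occur \emph{exactly once}, and always inside the hole of $\mathcal{G}$, by well-formedness together with \propref{typing/env/in/ord/unique}. The constant and location rules (\ruleref{T-Unit}, \ruleref{T-New}, \ruleref{T-Op}, \ruleref{T-Split}, \ruleref{T-Close}, \ruleref{T-Loc}) are vacuous: their contexts are $\cdot$ or a single location binding, and \propref{env/eq/ctx} rules out $\mathcal{G}[x:S]$ having that shape. \ruleref{T-Abs} is likewise vacuous, since its premise demands $\ottkw{unr}\,\mathcal{G}[x:S]$, contradicting $\ottkw{ord}\,S$. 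For \ruleref{T-Var} I would use \propref{env/eq/ctx} to conclude $\mathcal{G}=[]$ and $x:S = y:T$, so that $M[V/x]=V$ and $\mathcal{G}[C]=C$, reducing the goal to the assumption $C \vdash V : S \mid 0$.

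For the capture-abstraction rules (\ruleref{T-UAbs}, \ruleref{T-RAbs}, \ruleref{T-LAbs}) the bound variable $x_1$ is fresh, so $x \neq x_1$; I would apply the induction hypothesis to the body and rebuild the abstraction, noting that the freshly added binding commutes with the substitution and that well-formedness of $\mathcal{G}[C]$ follows from \propref{typing/env/hole/runtime/well-formed}. The binary rules (\ruleref{T-App}, \ruleref{T-UApp}, \ruleref{T-RApp}, \ruleref{T-LApp}, \ruleref{T-UPair}, \ruleref{T-OPair}) split the context as $\Gamma_1 \star \Gamma_2$ (with $\star$ being $,$ or $\parallel$); here \propref{env/eq/ctx} locates the hole entirely in $\Gamma_1$ or entirely in $\Gamma_2$, the $\mathcal{G}=[]$ branch being impossible because $x:S$ is not a composite context. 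In the component containing $x$ I apply the induction hypothesis; in the other I invoke \propref{typing/subst/void}, justified because well-formedness forces $x$ out of that component's domain. \ruleref{T-Weaken} is handled by first extracting $x:S\in\Gamma_1$ from $\mathcal{G}[x:S]\lesssim\Gamma_1$ via \propref{env/in/sub}, writing $\Gamma_1 = \mathcal{G}'[x:S]$ through \propref{env/in/ctx/exists}, applying the induction hypothesis, and then transporting $\mathcal{G}[x:S]\lesssim\mathcal{G}'[x:S]$ along the substitution with \propref{env/replace/runtime/sub}, rewriting both sides by \propref{typing/env/replace/ord} to obtain $\mathcal{G}[C]\lesssim\mathcal{G}'[C]$.

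The genuinely delicate cases are the pair eliminations \ruleref{T-ULet} and \ruleref{T-OLet}, whose context is the nested form $\mathcal{G}_1[\Gamma_1]$. Using \propref{env/num/bindings/ctx} I would split on whether the unique occurrence of $x:S$ sits in the scrutinee context $\Gamma_1$ or in the surrounding pattern $\mathcal{G}_1$. In the first sub-case I rewrite $\Gamma_1 = \mathcal{G}_1'[x:S]$, apply the induction hypothesis to the header $M_1$, and use \propref{typing/subst/void} on the body $N$ (after checking, from well-formedness and $x\neq x_1,x_2$, that $x\notin\ottkw{dom}(\mathcal{G}_1[x_1:S_1\parallel x_2:S_2])$); in the second sub-case the roles are reversed. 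I expect the \textbf{main obstacle} to be the final bookkeeping identity $\mathcal{G}[C] = \mathcal{G}_1[\mathcal{G}_1'[C]]$ (respectively $\mathcal{G}_1[C/x{:}S][\Gamma_1] = \mathcal{G}[C]$): this is exactly where \propref{typing/env/replace/ord} is essential, since it is the combination of $\ottkw{ord}\,S$ and well-formedness that makes $x:S$ absent from the rest of the pattern, and hence makes ``replace $x:S$ by $C$'' coincide with ``plug $C$ into the hole.'' Throughout, I would preserve well-formedness of every reconstructed context via \propref{typing/env/hole/runtime/well-formed}, so that each re-application of a typing rule remains legitimate.
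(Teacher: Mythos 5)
Your proposal is correct and follows essentially the same route as the paper's proof: the same rule-by-rule induction, the same use of \propref{env/eq/ctx} and \propref{typing/subst/void} to dispatch the component not containing $x$, the same occurrence-counting split via \propref{env/num/bindings/ctx} in the let cases, and the same bookkeeping identities via \propref{typing/env/replace/ord} and \propref{env/replace/runtime/sub} in the let and weakening cases. Nothing of substance differs from the paper's argument.
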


\begin{prop}{typing/subst}
    If
    \begin{gather}
        \mathcal{G}  \ottsym{[}  \ottmv{x}  \mathord:  \ottnt{S}  \ottsym{]}  \vdash  \ottnt{M}  :  \ottnt{T}  \mid  \ottnt{e_{{\mathrm{1}}}} \hyp{0.1}\\
         \ottmv{x}  \notin   \ottkw{dom} ( \mathcal{G} )   \hyp{0.2}\\
        \ottnt{C}  \vdash  \ottnt{V}  :  \ottnt{S}  \mid  \ottnt{e_{{\mathrm{2}}}}, \hyp{0.3}
    \end{gather}
    then $\mathcal{G}  \ottsym{[}  \ottnt{C}  \ottsym{]}  \vdash  \ottnt{M}  \ottsym{[}  \ottnt{V}  \ottsym{/}  \ottmv{x}  \ottsym{]}  :  \ottnt{T}  \mid   \ottnt{e_{{\mathrm{1}}}}  \sqcup  \ottnt{e_{{\mathrm{2}}}} $.

    \proof By case analysis on whether $\ottkw{unr} \, \ottnt{S}$ or not (namely $\ottkw{ord} \, \ottnt{S}$).
    \begin{match}
        \item[$\ottkw{unr} \, \ottnt{S}$]
        We have $\ottnt{C}  \lesssim   \cdot $ and $ \cdot   \vdash  \ottnt{V}  :  \ottnt{S}  \mid  \ottsym{0}$ by \propref{typing/value/unr} and \hypref{0.3}.
        We have $ \ottmv{x}  \mathord:  \ottnt{S}  \in  \mathcal{G}  \ottsym{[}  \ottmv{x}  \mathord:  \ottnt{S}  \ottsym{]} $ by \propref{env/in/ctx/binding}.
        So, we have $ \ottsym{(}  \mathcal{G}  \ottsym{[}  \ottmv{x}  \mathord:  \ottnt{S}  \ottsym{]}  \ottsym{)} ^{- \ottmv{x}  \mathord:  \ottnt{S} }   \vdash  \ottnt{M}  \ottsym{[}  \ottnt{V}  \ottsym{/}  \ottmv{x}  \ottsym{]}  :  \ottnt{T}  \mid  \ottnt{e_{{\mathrm{1}}}}$ by \propref{typing/subst/unr} and \hypref{0.1}.
        We can see $  \ottsym{(}  \mathcal{G}  \ottsym{[}  \ottmv{x}  \mathord:  \ottnt{S}  \ottsym{]}  \ottsym{)} ^{- \ottmv{x}  \mathord:  \ottnt{S} }  \ottsym{=} \mathcal{G}  \ottsym{[}   \cdot   \ottsym{]} $ from \hypref{0.2}.
        We have $\mathcal{G}  \ottsym{[}  \ottnt{C}  \ottsym{]}  \lesssim  \mathcal{G}  \ottsym{[}   \cdot   \ottsym{]}$ by \propref{env/sub/ctx}.
        Since $ \ottnt{e_{{\mathrm{1}}}} \ottsym{<}  \ottnt{e_{{\mathrm{1}}}}  \sqcup  \ottnt{e_{{\mathrm{2}}}}  $, we can derive the goal by \ruleref{T-Weaken}.

        \item[$\ottkw{ord} \, \ottnt{S}$]
        We have $\ottnt{C}  \vdash  \ottnt{V}  :  \ottnt{S}  \mid  \ottsym{0}$ by \propref{typing/value/ord} and \hypref{0.3}.
        Now we have $\mathcal{G}  \ottsym{[}  \ottnt{C}  \ottsym{]}  \vdash  \ottnt{M}  \ottsym{[}  \ottnt{V}  \ottsym{/}  \ottmv{x}  \ottsym{]}  :  \ottnt{T}  \mid  \ottnt{e_{{\mathrm{1}}}}$ by \propref{typing/subst/ord}.
        Since $ \ottnt{e_{{\mathrm{1}}}} \ottsym{<}  \ottnt{e_{{\mathrm{1}}}}  \sqcup  \ottnt{e_{{\mathrm{2}}}}  $, we can derive the goal by \ruleref{T-Weaken}.
    \end{match}
\end{prop}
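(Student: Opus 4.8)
The plan is to prove the statement by a case split on the substructural qualification of the bound variable's type $S$, reducing it to two specialized substitution lemmas that I would establish first. The reason for the split is structural: an $\ottkw{unr}$ binding is recorded in the \emph{set} component of the context interpretation, so it may occur several times and is subject to weakening and contraction, whereas an $\ottkw{ord}$ binding occupies exactly one vertex of the graph component (by well-formedness) and therefore sits at a fixed position in the context former. Substitution must respect this asymmetry: an unrestricted value is \emph{erased everywhere} and is required to be typeable in the empty context, while an ordered value is spliced \emph{in place}, with the typing context $C$ of $V$ taking over the single occurrence of $x:S$. Accordingly I would first prove a lemma for the unrestricted case (removing all $x:S$ bindings) and one for the ordered case (replacing the unique $x:S$ occurrence by $C$), each by induction on the derivation of $\mathcal{G}[x:S] \vdash M : T \mid e_1$; the present statement is then their common corollary.

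For the unrestricted case, I would first apply value inversion for unrestricted types to the hypothesis $C \vdash V : S \mid e_2$, obtaining both $C \lesssim \cdot$ and a pure, empty-context typing $\cdot \vdash V : S \mid 0$ (an unrestricted value captures no resources). Feeding the latter into the unrestricted substitution lemma, together with the fact that $x:S$ does occur in $\mathcal{G}[x:S]$, yields $(\mathcal{G}[x:S])^{-x:S} \vdash M[V/x] : T \mid e_1$, i.e.\ a typing after deleting \emph{all} copies of $x:S$. Because $x \notin \dom(\mathcal{G})$, the hole carries the only such copy, so this residual context is exactly $\mathcal{G}[\cdot]$. I would then transport $C \lesssim \cdot$ through the surrounding context former using monotonicity of hole-filling with respect to the subcontext order to get $\mathcal{G}[C] \lesssim \mathcal{G}[\cdot]$, and finish with one application of \ruleref{T-Weaken}, which simultaneously relaxes the context from $\mathcal{G}[\cdot]$ down to $\mathcal{G}[C]$ and the effect from $e_1$ up to $e_1 \sqcup e_2$.

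The ordered case is shorter. Value inversion for ordered types lets me strengthen $C \vdash V : S \mid e_2$ to $C \vdash V : S \mid 0$ (an ordered value is pure). Applying the ordered substitution lemma then directly gives $\mathcal{G}[C] \vdash M[V/x] : T \mid e_1$, and a final \ruleref{T-Weaken} relaxes the effect to $e_1 \sqcup e_2$.

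Given the two specialized lemmas, this combining step is routine; the genuine difficulty lives in the ordered lemma's induction, which I expect to be the main obstacle. The awkward cases are the pair eliminations \ruleref{T-ULet} and \ruleref{T-OLet}: there the derivation splits the context as $\mathcal{G}_1[\Gamma_1]$, and the target binding $x:S$ may sit either inside the let-header context $\Gamma_1$ or somewhere in the surrounding pattern $\mathcal{G}_1$, but not both, by well-formedness of the ordered binding. In each subcase I would have to relocate $C$ without disturbing the ambient order constraints, which forces delicate bookkeeping relating context substitution at an ordered binding, composition of context patterns, and the fact that an ordered binding occurs exactly once --- precisely the content of the replacement-at-ordered-binding equalities. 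The \ruleref{T-Weaken} case additionally needs $\lesssim$ to be preserved under the remove and replace operations on contexts. These are the steps where the graph-interpretation machinery developed earlier does the real work.
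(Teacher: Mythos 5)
Your proposal is correct and follows essentially the same route as the paper: the same case split on $\ottkw{unr} \, \ottnt{S}$ versus $\ottkw{ord} \, \ottnt{S}$, the same reduction to a remove-all-copies lemma (via $ \cdot   \vdash  \ottnt{V}  :  \ottnt{S}  \mid  \ottsym{0}$ and $\ottnt{C}  \lesssim   \cdot $) in the unrestricted case and a splice-in-place lemma (via $\ottnt{C}  \vdash  \ottnt{V}  :  \ottnt{S}  \mid  \ottsym{0}$) in the ordered case, and the same closing step combining monotonicity of hole-filling under $\lesssim$ with a single use of \textsc{T-Weaken} to relax both context and effect. Your assessment of where the real work lies (the let-elimination and weakening cases of the ordered lemma's induction, resting on the replacement-at-ordered-binding and $\lesssim$-preservation properties of the graph interpretation) also matches the paper's development.
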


  \subsection{Properties for heap typing}
  \begin{prop}{heap/weaken}
    If
    \begin{gather}
        \ottnt{C_{{\mathrm{1}}}}  \vdash  \mathcal{H} \hyp{0.1}\\
        \ottnt{C_{{\mathrm{1}}}}  \lesssim  \ottnt{C_{{\mathrm{2}}}} \hyp{0.2},
    \end{gather}
    then $\ottnt{C_{{\mathrm{2}}}}  \vdash  \mathcal{H}$.

    \proof
    We will show the following facts, by which $\ottnt{C_{{\mathrm{2}}}}  \vdash  \mathcal{H}$ follows since $\ottnt{C_{{\mathrm{1}}}}  \vdash  \mathcal{H}$.
    \begin{enumerate}
        \item $\ottnt{C_{{\mathrm{2}}}}$ is order-defined.
              We have $\ottnt{C_{{\mathrm{1}}}}$ is order-defined by \hypref{0.1}.
              So, this follows by \propref{env/runtime/order-defined/sub}.

        \item $  \ottkw{dom} ( \ottnt{C_{{\mathrm{1}}}} )  \ottsym{=}  \ottkw{dom} ( \ottnt{C_{{\mathrm{2}}}} )  $.
              This follows by \propref{env/sub/runtime/dom}.

        \item $  \overline{  \langle \ottnt{C_{{\mathrm{1}}}} \rangle_{ \ottmv{l} }  }  \ottsym{=}  \overline{  \langle \ottnt{C_{{\mathrm{2}}}} \rangle_{ \ottmv{l} }  }  $ for any $\ottmv{l}$.
              We have shown $ \lBrack  \langle \ottnt{C_{{\mathrm{1}}}} \rangle_{ \ottmv{l} }  \rBrack $ and $ \lBrack  \langle \ottnt{C_{{\mathrm{2}}}} \rangle_{ \ottmv{l} }  \rBrack $ are traceable.
              So, the goal follows by \propref{env/runtime/usage/sub}. \qedhere
    \end{enumerate}
\end{prop}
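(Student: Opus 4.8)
The plan is to unfold the definition of the target judgement $C_2 \vdash \mathcal{H}$ into its three constituent obligations — that $C_2$ is order-defined, that $\ottkw{dom}(C_2) = \ottkw{dom}(\mathcal{H})$, and that for every $l$ with $\mathcal{H}(l) = (n, m_0, m)$ both the reference-count condition $|\lBrack \langle C_2 \rangle_l \rBrack|_{\bullet} = n + 1$ and the usage condition $m \odot \overline{\langle C_2 \rangle_l} \le m_0$ hold — and to discharge each of them by transporting the corresponding fact for $C_1$ across the subcontext relation $C_1 \lesssim C_2$. Since $C_1 \vdash \mathcal{H}$ already supplies all three facts for $C_1$, the entire argument is a matter of showing that each ingredient is stable under $\lesssim$.

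First I would dispatch the two structural obligations. From $C_1 \vdash \mathcal{H}$ we know $C_1$ is order-defined, so $C_2$ is order-defined by \propref{env/runtime/order-defined/sub}; this step is important not only in its own right but because order-definedness is exactly what guarantees that each focused interpretation $\lBrack \langle C_2 \rangle_l \rBrack$ is traceable, and hence that the projection $\overline{\langle C_2 \rangle_l}$ is even well-defined. The domain equality then follows by combining the domain clause of $C_1 \vdash \mathcal{H}$ with $\ottkw{dom}(C_1) = \ottkw{dom}(C_2)$, which \propref{env/sub/runtime/dom} delivers from $C_1 \lesssim C_2$.

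The two per-location obligations carry the real content, and both factor through the fact that focusing commutes with the subcontext relation: by \propref{env/runtime/focus/sub} I obtain $\langle C_1 \rangle_l \lesssim \langle C_2 \rangle_l$ for each $l$. The reference-count condition then follows because both $\simeq$ and $\mathrel{<_\rightarrow}$ preserve the number of vertices of a graph representation — immediate from their definitions, the run-time set parts being empty — so $|\lBrack \langle C_2 \rangle_l \rBrack|_{\bullet} = |\lBrack \langle C_1 \rangle_l \rBrack|_{\bullet} = n + 1$, the last equality being the count clause of $C_1 \vdash \mathcal{H}$. For the usage condition I would appeal to \propref{env/runtime/usage/sub}: having already argued that $\lBrack \langle C_1 \rangle_l \rBrack$ and $\lBrack \langle C_2 \rangle_l \rBrack$ are both traceable, that lemma yields $\overline{\langle C_1 \rangle_l} = \overline{\langle C_2 \rangle_l}$, whereupon $m \odot \overline{\langle C_2 \rangle_l} \le m_0$ is just the usage clause of the hypothesis rewritten.

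The main obstacle I anticipate is purely one of sequencing the side conditions on traceability: \propref{env/runtime/usage/sub} demands traceability of \emph{both} focused interpretations, so the argument must establish that $C_2$ is order-defined before the projection $\overline{\langle C_2 \rangle_l}$ is ever invoked, and it must note that order-definedness of $C_1$ gives traceability of $\lBrack \langle C_1 \rangle_l \rBrack$ on the source side. Once this dependency order among the steps is respected, every obligation closes by a direct appeal to a single previously established lemma, and no fresh combinatorial reasoning about graph representations is needed.
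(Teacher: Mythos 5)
Your proposal is correct and follows essentially the same route as the paper's proof: unfold the heap-typing judgment for $C_2$ and transport each clause across $\lesssim$ using order-definedness preservation, domain preservation under subcontexts, and equality of usage projections for traceable subcontext-related graphs. The only difference is that you are more explicit than the paper on two points it leaves implicit — that focusing commutes with $\lesssim$ (needed to apply the usage-projection lemma at each location), and that the reference-count clause is preserved because both $\simeq$ and $\mathrel{<_\rightarrow}$ keep the vertex count fixed — which strengthens rather than changes the argument.
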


\begin{prop}{heap/track/new}
    If
    \begin{gather}
        \mathcal{C}  \ottsym{[}   \cdot   \ottsym{]}  \vdash  \mathcal{H} \hyp{0.1}\\
         \ottmv{l}  \notin   \ottkw{dom} ( \mathcal{H} )   \hyp{0.2},
    \end{gather}
    then $\mathcal{C}  \ottsym{[}  \ottmv{l}  \mathord:  \ottsym{[}  \ottnt{m}  \ottsym{]}  \ottsym{]}  \vdash   \mathcal{H} \cup \ottsym{\{} \ottmv{l}  \mapsto  \ottsym{(}  \ottsym{0}  \ottsym{,}  \ottnt{m}  \ottsym{,}  \varepsilon  \ottsym{)} \ottsym{\}} $.

    \proof We show each condition of the heap typing for the goal.
    \begin{enumerate}
        \item $\mathcal{C}  \ottsym{[}  \ottmv{l}  \mathord:  \ottsym{[}  \ottnt{m}  \ottsym{]}  \ottsym{]}$ is order-defined.
              We will show $ \lBrack  \langle \mathcal{C}  \ottsym{[}  \ottmv{l}  \mathord:  \ottsym{[}  \ottnt{m}  \ottsym{]}  \ottsym{]} \rangle_{ \ottmv{l'} }  \rBrack $ is traceable for any $\ottmv{l'}$.
              By \propref{graph/traceable/iso}, it suffices to show the property for some isomorphic graph to $ \lBrack  \langle \mathcal{C}  \ottsym{[}  \ottmv{l}  \mathord:  \ottsym{[}  \ottnt{m}  \ottsym{]}  \ottsym{]} \rangle_{ \ottmv{l'} }  \rBrack $.
              We consider wether $ \ottmv{l'} \ottsym{=} \ottmv{l} $ or not.
              \begin{match}
                  \item[$ \ottmv{l'} \ottsym{=} \ottmv{l} $]
                  In this case, $ \ottmv{l'}  \notin   \ottkw{dom} ( \mathcal{C} )  $ since $  \ottkw{dom} ( \mathcal{C}  \ottsym{[}   \cdot   \ottsym{]} )  \ottsym{=}  \ottkw{dom} ( \mathcal{H} )  $ by \hypref{0.1} and we have \hypref{0.2}.
                  So, we have $ \lBrack  \langle \mathcal{C}  \ottsym{[}  \ottmv{l}  \mathord:  \ottsym{[}  \ottnt{m}  \ottsym{]}  \ottsym{]} \rangle_{ \ottmv{l'} }  \rBrack  \simeq  \lBrack \ottmv{l}  \mathord:  \ottsym{[}  \ottnt{m}  \ottsym{]} \rBrack $.
                  A topological ordering of $ \lBrack \ottmv{l}  \mathord:  \ottsym{[}  \ottnt{m}  \ottsym{]} \rBrack $ is a bijection from $\NAT_{<1}$ to $\NAT_{<1}$, which is unique.

                  \item[$ \ottmv{l'} \neq \ottmv{l} $]
                  In this case, $ \lBrack  \langle \mathcal{C}  \ottsym{[}  \ottmv{l}  \mathord:  \ottsym{[}  \ottnt{m}  \ottsym{]}  \ottsym{]} \rangle_{ \ottmv{l'} }  \rBrack  \simeq  \lBrack  \langle \mathcal{C}  \ottsym{[}   \cdot   \ottsym{]} \rangle_{ \ottmv{l'} }  \rBrack $.
                  Here, a topological ordering of $ \langle \mathcal{C}  \ottsym{[}   \cdot   \ottsym{]} \rangle_{ \ottmv{l'} } $ is unique since $\mathcal{C}  \ottsym{[}   \cdot   \ottsym{]}$ is order-defined by \hypref{0.1}.
              \end{match}

        \item $  \ottkw{dom} ( \mathcal{C}  \ottsym{[}  \ottmv{l}  \mathord:  \ottsym{[}  \ottnt{m}  \ottsym{]}  \ottsym{]} )  \ottsym{=}  \ottkw{dom} (  \mathcal{H} \cup \ottsym{\{} \ottmv{l}  \mapsto  \ottsym{(}  \ottsym{0}  \ottsym{,}  \ottnt{m}  \ottsym{,}  \varepsilon  \ottsym{)} \ottsym{\}}  )  $
              By definition, $  \ottkw{dom} ( \mathcal{C}  \ottsym{[}  \ottmv{l}  \mathord:  \ottsym{[}  \ottnt{m}  \ottsym{]}  \ottsym{]} )  \ottsym{=}   \ottkw{dom} ( \mathcal{C} )  \cup \ottsym{\{}  \ottmv{l}  \ottsym{\}}  $ and $  \ottkw{dom} (  \mathcal{H} \cup \ottsym{\{} \ottmv{l}  \mapsto  \ottsym{(}  \ottsym{0}  \ottsym{,}  \ottnt{m}  \ottsym{,}  \varepsilon  \ottsym{)} \ottsym{\}}  )  \ottsym{=}   \ottkw{dom} ( \mathcal{H} )  \cup \ottsym{\{}  \ottmv{l}  \ottsym{\}}  $.
              So, the subgoal holds.

        \item For any $\ottmv{l'}$, $\ottnt{n}$, $\ottnt{m_{{\mathrm{0}}}}$, and $\ottnt{m''}$ such that $ \ottsym{(}   \mathcal{H} \cup \ottsym{\{} \ottmv{l}  \mapsto  \ottsym{(}  \ottsym{0}  \ottsym{,}  \ottnt{m}  \ottsym{,}  \varepsilon  \ottsym{)} \ottsym{\}}   \ottsym{)}  \ottsym{(}  \ottmv{l'}  \ottsym{)} \ottsym{=} \ottsym{(}  \ottnt{n}  \ottsym{,}  \ottnt{m_{{\mathrm{0}}}}  \ottsym{,}  \ottnt{m''}  \ottsym{)} $, $  |  \lBrack  \langle \mathcal{C}  \ottsym{[}  \ottmv{l}  \mathord:  \ottsym{[}  \ottnt{m}  \ottsym{]}  \ottsym{]} \rangle_{ \ottmv{l'} }  \rBrack  |_{\bullet}  \ottsym{=} \ottnt{n}  \ottsym{+}  \ottsym{1} $ and $  \ottnt{m''} \odot  \overline{  \langle \mathcal{C}  \ottsym{[}  \ottmv{l}  \mathord:  \ottsym{[}  \ottnt{m}  \ottsym{]}  \ottsym{]} \rangle_{ \ottmv{l'} }  }   \le \ottnt{m_{{\mathrm{0}}}} $.
              We consider wether $ \ottmv{l'} \ottsym{=} \ottmv{l} $ or not.
              \begin{match}
                  \item[$ \ottmv{l'} \ottsym{=} \ottmv{l} $]
                  In this case, $ \ottnt{n} \ottsym{=} \ottsym{0} $, $ \ottnt{m_{{\mathrm{0}}}} \ottsym{=} \ottnt{m} $, and $ \ottnt{m''} \ottsym{=} \varepsilon $.
                  Furthermore, $ \ottmv{l'}  \notin   \ottkw{dom} ( \mathcal{C} )  $.
                  So, we have $ \langle \mathcal{C}  \ottsym{[}  \ottmv{l}  \mathord:  \ottsym{[}  \ottnt{m}  \ottsym{]}  \ottsym{]} \rangle_{ \ottmv{l'} }   \simeq  \ottmv{l}  \mathord:  \ottsym{[}  \ottnt{m}  \ottsym{]}$.
                  Consequently, what we need show are
                  \begin{gather*}
                       |  \lBrack \ottmv{l}  \mathord:  \ottsym{[}  \ottnt{m}  \ottsym{]} \rBrack  |_{\bullet}  = 1 \\
                        \varepsilon \odot \ottnt{m}  \le \ottnt{m} .
                  \end{gather*}
                  The first subgoal follows by the definition.
                  The second subgoal follows by the OPM properties.

                  \item[$ \ottmv{l'} \neq \ottmv{l} $]
                  In this case, $ \ottsym{(}   \mathcal{H} \cup \ottsym{\{} \ottmv{l}  \mapsto  \ottsym{(}  \ottsym{0}  \ottsym{,}  \ottnt{m}  \ottsym{,}  \varepsilon  \ottsym{)} \ottsym{\}}   \ottsym{)}  \ottsym{(}  \ottmv{l'}  \ottsym{)} \ottsym{=} \mathcal{H}  \ottsym{(}  \ottmv{l'}  \ottsym{)} $, and $  \langle \mathcal{C}  \ottsym{[}  \ottmv{l}  \mathord:  \ottsym{[}  \ottnt{m}  \ottsym{]}  \ottsym{]} \rangle_{ \ottmv{l'} }  \ottsym{=}  \langle \mathcal{C}  \ottsym{[}   \cdot   \ottsym{]} \rangle_{ \ottmv{l'} }  $.
                  So the subgoal follows by the assumption $\mathcal{C}  \ottsym{[}   \cdot   \ottsym{]}  \vdash  \mathcal{H}$.
              \end{match}
    \end{enumerate}
\end{prop}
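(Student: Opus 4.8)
The plan is to unfold the definition of the heap typing judgment and verify its three defining conditions for $\mathcal{C}[l\mathord:[m]] \vdash \mathcal{H}\cup\{l\mapsto(0,m,\varepsilon)\}$ one at a time, in each case reducing to the corresponding fact about the hypothesis $\mathcal{C}[\cdot]\vdash\mathcal{H}$ by a case split on whether the focused location equals the fresh location $l$. The first thing I would record, from the two hypotheses, is that $l\notin\ottkw{dom}(\mathcal{C})$: since $\mathcal{C}[\cdot]\vdash\mathcal{H}$ forces $\ottkw{dom}(\mathcal{C}[\cdot])=\ottkw{dom}(\mathcal{H})$ and freshness gives $l\notin\ottkw{dom}(\mathcal{H})$, the location $l$ does not already occur in $\mathcal{C}$. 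This is exactly what makes the two focus computations below collapse cleanly.

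For \emph{order-definedness} I would show that $\langle\mathcal{C}[l\mathord:[m]]\rangle_{l'}$ has a unique topological ordering for every $l'$. When $l'=l$, focusing deletes all bindings of $\mathcal{C}$ (none of which mention $l$) and keeps only the new binding, so $\langle\mathcal{C}[l\mathord:[m]]\rangle_{l}\simeq l\mathord:[m]$, whose interpretation is a single-vertex graph with a trivially unique ordering. When $l'\neq l$, focusing deletes the fresh binding instead, so $\langle\mathcal{C}[l\mathord:[m]]\rangle_{l'}\simeq\langle\mathcal{C}[\cdot]\rangle_{l'}$, and uniqueness is inherited from the order-definedness supplied by $\mathcal{C}[\cdot]\vdash\mathcal{H}$. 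In both subcases I would transport traceability across the isomorphism using \propref{graph/traceable/iso}, so that I never have to reason about the concrete graphs directly.

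The \emph{domain} condition is immediate, as both $\ottkw{dom}(\mathcal{C}[l\mathord:[m]])$ and $\ottkw{dom}(\mathcal{H}\cup\{l\mapsto(0,m,\varepsilon)\})$ are obtained by adjoining $\{l\}$ to $\ottkw{dom}(\mathcal{C}[\cdot])$ and $\ottkw{dom}(\mathcal{H})$ respectively, which agree by hypothesis. For the third (per-location) condition I would again split on $l'=l$ versus $l'\neq l$. When $l'=l$ the heap entry is $(0,m,\varepsilon)$, so using $\langle\mathcal{C}[l\mathord:[m]]\rangle_{l}\simeq l\mathord:[m]$ the vertex count is $|\lBrack l\mathord:[m]\rBrack|_{\bullet}=1=0+1$, matching the reference count, and the trace condition reduces to $\varepsilon\odot\overline{l\mathord:[m]}=\varepsilon\odot m\le m$, which holds by the neutral-element law of the OPM together with reflexivity of $\le$. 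When $l'\neq l$, both the heap entry and the focused context are unchanged, so the two subconditions are exactly those provided by $\mathcal{C}[\cdot]\vdash\mathcal{H}$.

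I do not expect a genuine obstacle here: this is a bookkeeping argument, and the only care required is to get the two focus computations right and to phrase everything up to $\simeq$ before invoking the graph lemmas. If anything counts as the crux, it is the $l'=l$ subcase, where one must combine $l\notin\ottkw{dom}(\mathcal{C})$ with the behaviour of focusing to see that the entire context pattern disappears and leaves the isolated binding $l\mathord:[m]$, so that both the vertex-count and the usage-projection computations become trivial.
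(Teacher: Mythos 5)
Your proposal is correct and follows essentially the same route as the paper's own proof: the same three-condition unfolding of heap typing, the same $l'=l$ versus $l'\neq l$ case splits, the same use of $l\notin\ottkw{dom}(\mathcal{C})$ (derived from domain agreement plus freshness) to collapse the focus to the single binding $l\mathord:[m]$, and the same transport of traceability across $\simeq$ via the graph-isomorphism lemma. The only cosmetic difference is that you record $l\notin\ottkw{dom}(\mathcal{C})$ once up front rather than re-deriving it inside each case, which is a harmless reorganization.
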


\begin{prop}{heap/track/split}
    If
    \begin{gather}
        \mathcal{C}  \ottsym{[}  \ottmv{l}  \mathord:  \ottsym{[}  \ottnt{m'}  \ottsym{]}  \ottsym{]}  \vdash   \mathcal{H} \cup \ottsym{\{} \ottmv{l}  \mapsto  \ottsym{(}  \ottnt{n}  \ottsym{,}  \ottnt{m_{{\mathrm{0}}}}  \ottsym{,}  \ottnt{m}  \ottsym{)} \ottsym{\}}  \hyp{0.1}\\
          \ottnt{m_{{\mathrm{1}}}} \odot \ottnt{m_{{\mathrm{2}}}}  \le \ottnt{m'}  \hyp{0.2},
    \end{gather}
    then $\mathcal{C}  \ottsym{[}  \ottmv{l}  \mathord:  \ottsym{[}  \ottnt{m_{{\mathrm{1}}}}  \ottsym{]}  \ottsym{,}  \ottmv{l}  \mathord:  \ottsym{[}  \ottnt{m_{{\mathrm{2}}}}  \ottsym{]}  \ottsym{]}  \vdash   \mathcal{H} \cup \ottsym{\{} \ottmv{l}  \mapsto  \ottsym{(}  \ottnt{n}  \ottsym{+}  \ottsym{1}  \ottsym{,}  \ottnt{m_{{\mathrm{0}}}}  \ottsym{,}  \ottnt{m}  \ottsym{)} \ottsym{\}} $.

    \proof We show each condition of the heap typing for the goal.
    \begin{enumerate}
        \item $\mathcal{C}  \ottsym{[}  \ottmv{l}  \mathord:  \ottsym{[}  \ottnt{m_{{\mathrm{1}}}}  \ottsym{]}  \ottsym{,}  \ottmv{l}  \mathord:  \ottsym{[}  \ottnt{m_{{\mathrm{2}}}}  \ottsym{]}  \ottsym{]}$ is order-defined.
              We have
              \begin{gather}
                  \mathcal{C}  \ottsym{[}  \ottmv{l}  \mathord:  \ottsym{[}  \ottnt{m'}  \ottsym{]}  \ottsym{]} \text{ is order-defined} \hyp{1.1}
              \end{gather}
              by \hypref{0.1}.
              We will show $ \lBrack  \langle \mathcal{C}  \ottsym{[}  \ottmv{l}  \mathord:  \ottsym{[}  \ottnt{m_{{\mathrm{1}}}}  \ottsym{]}  \ottsym{,}  \ottmv{l}  \mathord:  \ottsym{[}  \ottnt{m_{{\mathrm{2}}}}  \ottsym{]}  \ottsym{]} \rangle_{ \ottmv{l'} }  \rBrack $ is traceable for any $\ottmv{l'}$.
              We consider wether $ \ottmv{l'} \ottsym{=} \ottmv{l} $ or not.
              \begin{match}
                  \item[$ \ottmv{l'} \ottsym{=} \ottmv{l} $]
                  In this case, $  \langle \mathcal{C}  \ottsym{[}  \ottmv{l}  \mathord:  \ottsym{[}  \ottnt{m_{{\mathrm{1}}}}  \ottsym{]}  \ottsym{,}  \ottmv{l}  \mathord:  \ottsym{[}  \ottnt{m_{{\mathrm{2}}}}  \ottsym{]}  \ottsym{]} \rangle_{ \ottmv{l'} }  \ottsym{=}  \langle \mathcal{C} \rangle_{ \ottmv{l'} }   \ottsym{[}  \ottmv{l}  \mathord:  \ottsym{[}  \ottnt{m_{{\mathrm{1}}}}  \ottsym{]}  \ottsym{,}  \ottmv{l}  \mathord:  \ottsym{[}  \ottnt{m_{{\mathrm{2}}}}  \ottsym{]}  \ottsym{]} $.
                  We can see $ \lBrack \ottmv{l}  \mathord:  \ottsym{[}  \ottnt{m_{{\mathrm{1}}}}  \ottsym{]}  \ottsym{,}  \ottmv{l}  \mathord:  \ottsym{[}  \ottnt{m_{{\mathrm{2}}}}  \ottsym{]} \rBrack $ is traceable.
                  So, the goal follows by \propref{env/runtime/intrp/traceable/ctx/replace} because $ \langle \mathcal{C} \rangle_{ \ottmv{l'} }   \ottsym{[}  \ottmv{l}  \mathord:  \ottsym{[}  \ottnt{m'}  \ottsym{]}  \ottsym{]}$ is traceable by \hypref{1.1}.

                  \item[$ \ottmv{l'} \neq \ottmv{l} $]
                  In this case, $  \langle \mathcal{C}  \ottsym{[}  \ottmv{l}  \mathord:  \ottsym{[}  \ottnt{m_{{\mathrm{1}}}}  \ottsym{]}  \ottsym{,}  \ottmv{l}  \mathord:  \ottsym{[}  \ottnt{m_{{\mathrm{2}}}}  \ottsym{]}  \ottsym{]} \rangle_{ \ottmv{l'} }  \ottsym{=}  \langle \mathcal{C} \rangle_{ \ottmv{l'} }   \ottsym{[}   \cdot   \ottsym{]} $.
                  Here, $ \langle \mathcal{C} \rangle_{ \ottmv{l'} }   \ottsym{[}   \cdot   \ottsym{]}$ is traceable because $  \langle \mathcal{C}  \ottsym{[}  \ottmv{l}  \mathord:  \ottsym{[}  \ottnt{m'}  \ottsym{]}  \ottsym{]} \rangle_{ \ottmv{l'} }  \ottsym{=}  \langle \mathcal{C} \rangle_{ \ottmv{l'} }   \ottsym{[}   \cdot   \ottsym{]} $ is traceable by \hypref{1.1}.
              \end{match}

        \item $  \ottkw{dom} ( \mathcal{C}  \ottsym{[}  \ottmv{l}  \mathord:  \ottsym{[}  \ottnt{m_{{\mathrm{1}}}}  \ottsym{]}  \ottsym{,}  \ottmv{l}  \mathord:  \ottsym{[}  \ottnt{m_{{\mathrm{2}}}}  \ottsym{]}  \ottsym{]} )  \ottsym{=}  \ottkw{dom} (  \mathcal{H} \cup \ottsym{\{} \ottmv{l}  \mapsto  \ottsym{(}  \ottnt{n}  \ottsym{+}  \ottsym{1}  \ottsym{,}  \ottnt{m_{{\mathrm{0}}}}  \ottsym{,}  \ottnt{m}  \ottsym{)} \ottsym{\}}  )  $.
              We have $  \ottkw{dom} ( \mathcal{C}  \ottsym{[}  \ottmv{l}  \mathord:  \ottsym{[}  \ottnt{m'}  \ottsym{]}  \ottsym{]} )  \ottsym{=}  \ottkw{dom} (  \mathcal{H} \cup \ottsym{\{} \ottmv{l}  \mapsto  \ottsym{(}  \ottnt{n}  \ottsym{,}  \ottnt{m_{{\mathrm{0}}}}  \ottsym{,}  \ottnt{m}  \ottsym{)} \ottsym{\}}  )  $ by \hypref{0.1}.
              So, $   \ottkw{dom} ( \mathcal{C} )  \cup \ottsym{\{}  \ottmv{l}  \ottsym{\}}  \ottsym{=}   \ottkw{dom} ( \mathcal{H} )  \cup \ottsym{\{}  \ottmv{l}  \ottsym{\}}  $ by definition.
              From that, the subgoal holds.

        \item For any $\ottmv{l'}$, $\ottnt{n'}$, $\ottnt{m'_{{\mathrm{0}}}}$, and $\ottnt{m''}$ such that $ \ottsym{(}   \mathcal{H} \cup \ottsym{\{} \ottmv{l}  \mapsto  \ottsym{(}  \ottnt{n}  \ottsym{+}  \ottsym{1}  \ottsym{,}  \ottnt{m_{{\mathrm{0}}}}  \ottsym{,}  \ottnt{m}  \ottsym{)} \ottsym{\}}   \ottsym{)}  \ottsym{(}  \ottmv{l'}  \ottsym{)} \ottsym{=} \ottsym{(}  \ottnt{n'}  \ottsym{,}  \ottnt{m'_{{\mathrm{0}}}}  \ottsym{,}  \ottnt{m''}  \ottsym{)} $, $  |  \lBrack  \langle \mathcal{C}  \ottsym{[}  \ottmv{l}  \mathord:  \ottsym{[}  \ottnt{m_{{\mathrm{1}}}}  \ottsym{]}  \ottsym{,}  \ottmv{l}  \mathord:  \ottsym{[}  \ottnt{m_{{\mathrm{2}}}}  \ottsym{]}  \ottsym{]} \rangle_{ \ottmv{l'} }  \rBrack  |_{\bullet}  \ottsym{=} \ottnt{n'}  \ottsym{+}  \ottsym{1} $ and $  \ottnt{m''} \odot  \overline{  \langle \mathcal{C}  \ottsym{[}  \ottmv{l}  \mathord:  \ottsym{[}  \ottnt{m_{{\mathrm{1}}}}  \ottsym{]}  \ottsym{,}  \ottmv{l}  \mathord:  \ottsym{[}  \ottnt{m_{{\mathrm{2}}}}  \ottsym{]}  \ottsym{]} \rangle_{ \ottmv{l'} }  }   \le \ottnt{m'_{{\mathrm{0}}}} $.
              We consider wether $ \ottmv{l} \neq \ottmv{l} $ or not.
              \begin{match}
                  \item[$ \ottmv{l'} \ottsym{=} \ottmv{l} $]
                  In this case, $ \ottnt{n'} \ottsym{=} \ottnt{n}  \ottsym{+}  \ottsym{1} $, $ \ottnt{m'_{{\mathrm{0}}}} \ottsym{=} \ottnt{m_{{\mathrm{0}}}} $, and $ \ottnt{m''} \ottsym{=} \ottnt{m} $.
                  So, what we need to show are
                  \begin{gather*}
                        |  \lBrack  \langle \mathcal{C}  \ottsym{[}  \ottmv{l}  \mathord:  \ottsym{[}  \ottnt{m_{{\mathrm{1}}}}  \ottsym{]}  \ottsym{,}  \ottmv{l}  \mathord:  \ottsym{[}  \ottnt{m_{{\mathrm{2}}}}  \ottsym{]}  \ottsym{]} \rangle_{ \ottmv{l} }  \rBrack  |_{\bullet}  \ottsym{=} \ottnt{n}  \ottsym{+}  \ottsym{1}  \ottsym{+}  \ottsym{1}  \\
                        \ottnt{m} \odot  \overline{  \langle \mathcal{C}  \ottsym{[}  \ottmv{l}  \mathord:  \ottsym{[}  \ottnt{m_{{\mathrm{1}}}}  \ottsym{]}  \ottsym{,}  \ottmv{l}  \mathord:  \ottsym{[}  \ottnt{m_{{\mathrm{2}}}}  \ottsym{]}  \ottsym{]} \rangle_{ \ottmv{l} }  }   \le \ottnt{m_{{\mathrm{0}}}} .
                  \end{gather*}
                  We have
                  \begin{gather}
                        |  \lBrack  \langle \mathcal{C}  \ottsym{[}  \ottmv{l}  \mathord:  \ottsym{[}  \ottnt{m'}  \ottsym{]}  \ottsym{]} \rangle_{ \ottmv{l} }  \rBrack  |_{\bullet}  \ottsym{=} \ottnt{n}  \ottsym{+}  \ottsym{1}  \hyp{3.1}\\
                        \ottnt{m} \odot  \overline{  \langle \mathcal{C}  \ottsym{[}  \ottmv{l}  \mathord:  \ottsym{[}  \ottnt{m'}  \ottsym{]}  \ottsym{]} \rangle_{ \ottmv{l} }  }   \le \ottnt{m_{{\mathrm{0}}}}  \hyp{3.3}
                  \end{gather}
                  by the assumption.
                  We have $ |  \lBrack  \langle \mathcal{C}  \ottsym{[}  \ottmv{l}  \mathord:  \ottsym{[}  \ottnt{m_{{\mathrm{1}}}}  \ottsym{]}  \ottsym{,}  \ottmv{l}  \mathord:  \ottsym{[}  \ottnt{m_{{\mathrm{2}}}}  \ottsym{]}  \ottsym{]} \rangle_{ \ottmv{l} }  \rBrack  |_{\bullet}  =  |  \lBrack  \langle \mathcal{C} \rangle_{ \ottmv{l} }   \ottsym{[}   \cdot   \ottsym{]} \rBrack  |_{\bullet}  + 2$ and $ |  \lBrack  \langle \mathcal{C}  \ottsym{[}  \ottmv{l}  \mathord:  \ottsym{[}  \ottnt{m'}  \ottsym{]}  \ottsym{]} \rangle_{ \ottmv{l} }  \rBrack  |_{\bullet}  =  |  \lBrack  \langle \mathcal{C} \rangle_{ \ottmv{l} }   \ottsym{[}   \cdot   \ottsym{]} \rBrack  |_{\bullet}  + 1$ by \propref{env/runtime/intrp/order/ctx}.
                  So, the first subgoal follows by \hypref{3.1}.

                  We have some $\ottnt{m_{{\mathrm{01}}}}$ and $\ottnt{m_{{\mathrm{03}}}}$ such that
                  \begin{gather}
                        \overline{  \langle \mathcal{C}  \ottsym{[}  \ottmv{l}  \mathord:  \ottsym{[}  \ottnt{m'}  \ottsym{]}  \ottsym{]} \rangle_{ \ottmv{l} }  }  \ottsym{=}   \ottnt{m_{{\mathrm{01}}}} \odot \ottnt{m'}  \odot \ottnt{m_{{\mathrm{03}}}}   \hyp{3.4}\\
                        \overline{  \langle \mathcal{C}  \ottsym{[}  \ottmv{l}  \mathord:  \ottsym{[}  \ottnt{m_{{\mathrm{1}}}}  \ottsym{]}  \ottsym{,}  \ottmv{l}  \mathord:  \ottsym{[}  \ottnt{m_{{\mathrm{2}}}}  \ottsym{]}  \ottsym{]} \rangle_{ \ottmv{l} }  }  \ottsym{=}    \ottnt{m_{{\mathrm{01}}}} \odot \ottnt{m_{{\mathrm{1}}}}  \odot \ottnt{m_{{\mathrm{2}}}}  \odot \ottnt{m_{{\mathrm{03}}}}   \hyp{3.5}
                  \end{gather}
                  by \propref{env/usage/replace/ctx}.
                  Now the second subgoal follows by \hypref{0.2}, \hypref{3.3}, \hypref{3.4}, \hypref{3.5}, and the properties of {OPM}.

                  \item[$ \ottmv{l'} \neq \ottmv{l} $]
                  In this case, $ \ottsym{(}   \mathcal{H} \cup \ottsym{\{} \ottmv{l}  \mapsto  \ottsym{(}  \ottnt{n}  \ottsym{+}  \ottsym{1}  \ottsym{,}  \ottnt{m_{{\mathrm{0}}}}  \ottsym{,}  \ottnt{m}  \ottsym{)} \ottsym{\}}   \ottsym{)}  \ottsym{(}  \ottmv{l'}  \ottsym{)} \ottsym{=} \ottsym{(}   \mathcal{H} \cup \ottsym{\{} \ottmv{l}  \mapsto  \ottsym{(}  \ottnt{n}  \ottsym{,}  \ottnt{m_{{\mathrm{0}}}}  \ottsym{,}  \ottnt{m}  \ottsym{)} \ottsym{\}}   \ottsym{)}  \ottsym{(}  \ottmv{l'}  \ottsym{)} $, and $ \langle \mathcal{C}  \ottsym{[}  \ottmv{l}  \mathord:  \ottsym{[}  \ottnt{m_{{\mathrm{1}}}}  \ottsym{]}  \ottsym{,}  \ottmv{l}  \mathord:  \ottsym{[}  \ottnt{m_{{\mathrm{2}}}}  \ottsym{]}  \ottsym{]} \rangle_{ \ottmv{l'} }   \simeq   \langle \mathcal{C}  \ottsym{[}  \ottmv{l}  \mathord:  \ottsym{[}  \ottnt{m'}  \ottsym{]}  \ottsym{]} \rangle_{ \ottmv{l'} } $.
                  So, the subgoal follows by \hypref{0.1}.
              \end{match}
    \end{enumerate}
\end{prop}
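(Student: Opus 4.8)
The plan is to verify the three defining conditions of the heap typing judgement $\ottnt{C}  \vdash  \mathcal{H}$ in turn for the target judgement, reusing the structural lemmas about focusing, interpretation order, and usage projection established earlier for run-time contexts. Throughout, the only binding that changes is the one at $\ottmv{l}$: the pattern position previously holding $\ottmv{l}  \mathord:  \ottsym{[}  \ottnt{m'}  \ottsym{]}$ now holds the sequential pair $\ottmv{l}  \mathord:  \ottsym{[}  \ottnt{m_{{\mathrm{1}}}}  \ottsym{]}  \ottsym{,}  \ottmv{l}  \mathord:  \ottsym{[}  \ottnt{m_{{\mathrm{2}}}}  \ottsym{]}$, the reference count moves from $\ottnt{n}$ to $\ottnt{n}  \ottsym{+}  \ottsym{1}$, and the recorded trace $\ottnt{m}$ and original usage $\ottnt{m_{{\mathrm{0}}}}$ are untouched. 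Hence for every location $\ottmv{l'} \neq \ottmv{l}$ the focused context of the new context equals the focus of the old one with both new $\ottmv{l}$-bindings erased to $ \cdot $, so each subgoal for such $\ottmv{l'}$ reduces immediately to the matching fact supplied by \hypref{0.1}; the real content lives in the case $\ottmv{l'} = \ottmv{l}$.

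For the order-definedness condition I would argue by focusing on an arbitrary $\ottmv{l'}$. When $\ottmv{l'} \neq \ottmv{l}$ the focused interpretation is that of $ \langle \mathcal{C} \rangle_{ \ottmv{l'} }   \ottsym{[}   \cdot   \ottsym{]} $, which is traceable since it coincides with the old focus. When $\ottmv{l'} = \ottmv{l}$ the focus becomes $ \langle \mathcal{C} \rangle_{ \ottmv{l} }   \ottsym{[}  \ottmv{l}  \mathord:  \ottsym{[}  \ottnt{m_{{\mathrm{1}}}}  \ottsym{]}  \ottsym{,}  \ottmv{l}  \mathord:  \ottsym{[}  \ottnt{m_{{\mathrm{2}}}}  \ottsym{]}  \ottsym{]} $, and here I would invoke \propref{env/runtime/intrp/traceable/ctx/replace}: the interpretation of $\ottmv{l}  \mathord:  \ottsym{[}  \ottnt{m_{{\mathrm{1}}}}  \ottsym{]}  \ottsym{,}  \ottmv{l}  \mathord:  \ottsym{[}  \ottnt{m_{{\mathrm{2}}}}  \ottsym{]}$ is a single sequential join and hence uniquely orderable, while $ \langle \mathcal{C} \rangle_{ \ottmv{l} }   \ottsym{[}  \ottmv{l}  \mathord:  \ottsym{[}  \ottnt{m'}  \ottsym{]}  \ottsym{]} $ is traceable by \hypref{0.1}, so the replacement is traceable too. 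The domain condition is immediate: both contexts add exactly $\ottmv{l}$ to $ \ottkw{dom} ( \mathcal{C} ) $, and turning one $\ottmv{l}$-binding into two $\ottmv{l}$-bindings does not change the domain set, so equality with $ \ottkw{dom} ( \mathcal{H} )  \cup \{\ottmv{l}\}$ is preserved.

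The substantive work is the third condition at $\ottmv{l'} = \ottmv{l}$. The vertex-count part is routine: by \propref{env/runtime/intrp/order/ctx} the focused count for the new context exceeds that of the old focus by exactly one, matching the increment of the reference counter, so it equals $\ottnt{n}  \ottsym{+}  \ottsym{1}  \ottsym{+}  \ottsym{1}$ as required against $\ottnt{n}  \ottsym{+}  \ottsym{1}$ in the heap. For the trace inequality I would apply \propref{env/usage/replace/ctx} to obtain $\ottnt{m_{{\mathrm{01}}}}$ and $\ottnt{m_{{\mathrm{03}}}}$ with old projection $ \overline{  \langle \mathcal{C}  \ottsym{[}  \ottmv{l}  \mathord:  \ottsym{[}  \ottnt{m'}  \ottsym{]}  \ottsym{]} \rangle_{ \ottmv{l} }  } \ottsym{=} \ottnt{m_{{\mathrm{01}}}} \odot \ottnt{m'} \odot \ottnt{m_{{\mathrm{03}}}}$ and new projection $\ottnt{m_{{\mathrm{01}}}} \odot \ottnt{m_{{\mathrm{1}}}} \odot \ottnt{m_{{\mathrm{2}}}} \odot \ottnt{m_{{\mathrm{03}}}}$, sharing the outer factors. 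The old judgement gives $\ottnt{m} \odot \ottsym{(} \ottnt{m_{{\mathrm{01}}}} \odot \ottnt{m'} \odot \ottnt{m_{{\mathrm{03}}}} \ottsym{)} \le \ottnt{m_{{\mathrm{0}}}}$, and the split side condition \hypref{0.2} gives $\ottnt{m_{{\mathrm{1}}}} \odot \ottnt{m_{{\mathrm{2}}}} \le \ottnt{m'}$; combining these by monotonicity, downward-closedness, and associativity of the OPM yields $\ottnt{m} \odot \ottsym{(} \ottnt{m_{{\mathrm{01}}}} \odot \ottnt{m_{{\mathrm{1}}}} \odot \ottnt{m_{{\mathrm{2}}}} \odot \ottnt{m_{{\mathrm{03}}}} \ottsym{)} \le \ottnt{m_{{\mathrm{0}}}}$, exactly the desired bound.

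The main obstacle I expect is the bookkeeping in this last step: aligning the decomposition produced by \propref{env/usage/replace/ctx} with the associativity-and-monotonicity reasoning in the OPM while keeping the focus operation $ \langle \cdot \rangle_{ \ottmv{l} } $ and the context-pattern plugging straight. In particular I must check that the definedness hypotheses for monotonicity hold — the old projection is defined and bounded by $\ottnt{m_{{\mathrm{0}}}}$, so downward-closedness guarantees the new, smaller projection is defined as well — and that the side condition $\ottnt{C}  \not\simeq   \cdot $ demanded by \propref{env/usage/replace/ctx} is met, which it is because the focused pattern genuinely contains the $\ottmv{l}$-binding being replaced.
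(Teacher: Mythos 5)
Your proposal is correct and follows essentially the same route as the paper's proof: the same three-condition check, the same case split on $\ottmv{l'} = \ottmv{l}$, the same use of \propref{env/runtime/intrp/traceable/ctx/replace} for order-definedness, \propref{env/runtime/intrp/order/ctx} for the vertex count, and \propref{env/usage/replace/ctx} to factor the usage projections before applying the OPM laws with \hypref{0.2}. Your explicit check of the side condition $\ottnt{C}  \not\simeq   \cdot $ for \propref{env/usage/replace/ctx} is a detail the paper leaves implicit, but it does not change the argument.
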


\begin{prop}{heap/track/op}
    If
    \begin{gather}
        \ottmv{l}  \mathord:  \ottsym{[}  \ottnt{m'}  \ottsym{]}  \ottsym{,}  \ottnt{C}  \vdash   \mathcal{H} \cup \ottsym{\{} \ottmv{l}  \mapsto  \ottsym{(}  \ottnt{n}  \ottsym{,}  \ottnt{m_{{\mathrm{0}}}}  \ottsym{,}  \ottnt{m}  \ottsym{)} \ottsym{\}}  \hyp{0.1}\\
          \ottnt{m_{{\mathrm{1}}}} \odot \ottnt{m_{{\mathrm{2}}}}  \le \ottnt{m'}  \hyp{0.2},
    \end{gather}
    then $\ottmv{l}  \mathord:  \ottsym{[}  \ottnt{m_{{\mathrm{2}}}}  \ottsym{]}  \ottsym{,}  \ottnt{C}  \vdash   \mathcal{H} \cup \ottsym{\{} \ottmv{l}  \mapsto  \ottsym{(}  \ottnt{n}  \ottsym{,}  \ottnt{m_{{\mathrm{0}}}}  \ottsym{,}   \ottnt{m} \odot \ottnt{m_{{\mathrm{1}}}}   \ottsym{)} \ottsym{\}} $.

    \proof We show each condition of the heap typing for the goal.
    \begin{enumerate}
        \item $\ottmv{l}  \mathord:  \ottsym{[}  \ottnt{m_{{\mathrm{2}}}}  \ottsym{]}  \ottsym{,}  \ottnt{C}$ is order-defined.
              We know $\ottmv{l}  \mathord:  \ottsym{[}  \ottnt{m'}  \ottsym{]}  \ottsym{,}  \ottnt{C}$ is order-defined by \hypref{0.1}.
              So, this case follows by \propref{env/ordering/pop} and \propref{env/ordering/push}.

        \item $  \ottkw{dom} ( \ottmv{l}  \mathord:  \ottsym{[}  \ottnt{m_{{\mathrm{2}}}}  \ottsym{]}  \ottsym{,}  \ottnt{C} )  \ottsym{=}  \ottkw{dom} (  \mathcal{H} \cup \ottsym{\{} \ottmv{l}  \mapsto  \ottsym{(}  \ottnt{n}  \ottsym{,}  \ottnt{m_{{\mathrm{0}}}}  \ottsym{,}   \ottnt{m} \odot \ottnt{m_{{\mathrm{1}}}}   \ottsym{)} \ottsym{\}}  )  $.
              By definition, $  \ottkw{dom} ( \ottmv{l}  \mathord:  \ottsym{[}  \ottnt{m_{{\mathrm{2}}}}  \ottsym{]}  \ottsym{,}  \ottnt{C} )  \ottsym{=}  \ottsym{\{}  \ottmv{l}  \ottsym{\}} \cup  \ottkw{dom} ( \ottnt{C} )   $ and $  \ottkw{dom} (  \mathcal{H} \cup \ottsym{\{} \ottmv{l}  \mapsto  \ottsym{(}  \ottnt{n}  \ottsym{,}  \ottnt{m_{{\mathrm{0}}}}  \ottsym{,}   \ottnt{m} \odot \ottnt{m_{{\mathrm{1}}}}   \ottsym{)} \ottsym{\}}  )  \ottsym{=}   \ottkw{dom} ( \mathcal{H} )  \cup \ottsym{\{}  \ottmv{l}  \ottsym{\}}  $.
              We have $  \ottsym{\{}  \ottmv{l}  \ottsym{\}} \cup  \ottkw{dom} ( \ottnt{C} )   \ottsym{=}   \ottkw{dom} ( \mathcal{H} )  \cup \ottsym{\{}  \ottmv{l}  \ottsym{\}}  $ by \hypref{0.1}.
              So, the subgoal holds.

        \item For any $\ottmv{l'}$, $\ottnt{n'}$, $\ottnt{m'_{{\mathrm{0}}}}$, and $\ottnt{m''}$ such that $ \ottsym{(}   \mathcal{H} \cup \ottsym{\{} \ottmv{l}  \mapsto  \ottsym{(}  \ottnt{n}  \ottsym{,}  \ottnt{m_{{\mathrm{0}}}}  \ottsym{,}   \ottnt{m} \odot \ottnt{m_{{\mathrm{1}}}}   \ottsym{)} \ottsym{\}}   \ottsym{)}  \ottsym{(}  \ottmv{l'}  \ottsym{)} \ottsym{=} \ottsym{(}  \ottnt{n'}  \ottsym{,}  \ottnt{m'_{{\mathrm{0}}}}  \ottsym{,}  \ottnt{m''}  \ottsym{)} $, $  |  \lBrack  \langle \ottsym{(}  \ottmv{l}  \mathord:  \ottsym{[}  \ottnt{m_{{\mathrm{2}}}}  \ottsym{]}  \ottsym{,}  \ottnt{C}  \ottsym{)} \rangle_{ \ottmv{l'} }  \rBrack  |_{\bullet}  \ottsym{=} \ottnt{n'}  \ottsym{+}  \ottsym{1} $ and $  \ottnt{m''} \odot  \overline{  \langle \ottsym{(}  \ottmv{l}  \mathord:  \ottsym{[}  \ottnt{m_{{\mathrm{2}}}}  \ottsym{]}  \ottsym{,}  \ottnt{C}  \ottsym{)} \rangle_{ \ottmv{l'} }  }   \le \ottnt{m'_{{\mathrm{0}}}} $.
              We consider wether $ \ottmv{l'} \ottsym{=} \ottmv{l} $ or not.
              \begin{match}
                  \item[$ \ottmv{l'} \ottsym{=} \ottmv{l} $]
                  In this case, $ \ottnt{n'} \ottsym{=} \ottnt{n} $, $ \ottnt{m'_{{\mathrm{0}}}} \ottsym{=} \ottnt{m_{{\mathrm{0}}}} $, $ \ottnt{m''} \ottsym{=}  \ottnt{m} \odot \ottnt{m_{{\mathrm{1}}}}  $, $ |  \lBrack  \langle \ottsym{(}  \ottmv{l}  \mathord:  \ottsym{[}  \ottnt{m_{{\mathrm{2}}}}  \ottsym{]}  \ottsym{,}  \ottnt{C}  \ottsym{)} \rangle_{ \ottmv{l'} }  \rBrack  |_{\bullet}  = 1 +  |  \lBrack  \langle \ottnt{C} \rangle_{ \ottmv{l} }  \rBrack  |_{\bullet} $, and $  \overline{  \langle \ottsym{(}  \ottmv{l}  \mathord:  \ottsym{[}  \ottnt{m_{{\mathrm{2}}}}  \ottsym{]}  \ottsym{,}  \ottnt{C}  \ottsym{)} \rangle_{ \ottmv{l'} }  }  \ottsym{=}  \ottnt{m_{{\mathrm{2}}}} \odot  \overline{  \langle \ottnt{C} \rangle_{ \ottmv{l} }  }   $.
                  So, what we need to show are
                  \begin{gather*}
                        |  \lBrack  \langle \ottnt{C} \rangle_{ \ottmv{l} }  \rBrack  |_{\bullet}  \ottsym{=} \ottnt{n}  \\
                          \ottnt{m} \odot \ottnt{m_{{\mathrm{1}}}}  \odot \ottnt{m_{{\mathrm{2}}}}  \odot  \overline{  \langle \ottnt{C} \rangle_{ \ottmv{l} }  }   \le \ottnt{m_{{\mathrm{0}}}} .
                  \end{gather*}
                  We have $  |  \lBrack  \langle \ottsym{(}  \ottmv{l}  \mathord:  \ottsym{[}  \ottnt{m'}  \ottsym{]}  \ottsym{,}  \ottnt{C}  \ottsym{)} \rangle_{ \ottmv{l} }  \rBrack  |_{\bullet}  \ottsym{=} \ottnt{n}  \ottsym{+}  \ottsym{1} $ by \hypref{0.1}.
                  So, the first subgoal follows by definition.
                  We have $  \ottnt{m} \odot  \overline{  \langle \ottsym{(}  \ottmv{l}  \mathord:  \ottsym{[}  \ottnt{m'}  \ottsym{]}  \ottsym{,}  \ottnt{C}  \ottsym{)} \rangle_{ \ottmv{l} }  }   \le \ottnt{m_{{\mathrm{0}}}} $ by \hypref{0.1} too.
                  So,
                  \begin{gather}
                         \ottnt{m} \odot \ottnt{m'}  \odot  \overline{  \langle \ottnt{C} \rangle_{ \ottmv{l} }  }   \le \ottnt{m_{{\mathrm{0}}}}  \hyp{3.2}
                  \end{gather}
                  So, the second subgoal follows by \hypref{0.2}, \hypref{3.2}, and the properties of {OPM}.

                  \item[$ \ottmv{l'} \neq \ottmv{l} $]
                  In this case, $ \ottsym{(}   \mathcal{H} \cup \ottsym{\{} \ottmv{l}  \mapsto  \ottsym{(}  \ottnt{n}  \ottsym{,}  \ottnt{m_{{\mathrm{0}}}}  \ottsym{,}   \ottnt{m} \odot \ottnt{m_{{\mathrm{1}}}}   \ottsym{)} \ottsym{\}}   \ottsym{)}  \ottsym{(}  \ottmv{l'}  \ottsym{)} \ottsym{=} \ottsym{(}   \mathcal{H} \cup \ottsym{\{} \ottmv{l}  \mapsto  \ottsym{(}  \ottnt{n}  \ottsym{,}  \ottnt{m_{{\mathrm{0}}}}  \ottsym{,}  \ottnt{m}  \ottsym{)} \ottsym{\}}   \ottsym{)}  \ottsym{(}  \ottmv{l'}  \ottsym{)} $, and $  \langle \ottsym{(}  \ottmv{l}  \mathord:  \ottsym{[}  \ottnt{m_{{\mathrm{2}}}}  \ottsym{]}  \ottsym{,}  \ottnt{C}  \ottsym{)} \rangle_{ \ottmv{l'} }  \ottsym{=}  \langle \ottsym{(}  \ottmv{l}  \mathord:  \ottsym{[}  \ottnt{m'}  \ottsym{]}  \ottsym{,}  \ottnt{C}  \ottsym{)} \rangle_{ \ottmv{l'} }  $.
                  So, the subgoal follows by \hypref{0.1}.
              \end{match}
    \end{enumerate}
\end{prop}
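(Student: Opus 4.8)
The plan is to unfold the goal $\mathcal{C}[l\colon[m_1], l\colon[m_2]] \vdash \mathcal{H}\cup\{l\mapsto(n{+}1,m_0,m)\}$ into the three defining obligations of heap typing---order-definedness of the run-time context, agreement $\mathrm{dom}(\mathcal{C}[l\colon[m_1],l\colon[m_2]]) = \mathrm{dom}(\mathcal{H}\cup\{l\mapsto(n{+}1,m_0,m)\})$, and, for every location $l'$ in the heap with value $(n',m_0',m'')$, the count equation $|\lBrack\langle\,\cdot\,\rangle_{l'}\rBrack|_\bullet = n'{+}1$ together with the usage bound $m''\odot\overline{\langle\,\cdot\,\rangle_{l'}}\le m_0'$---and to discharge each from the matching obligation of the hypothesis $\mathcal{C}[l\colon[m']] \vdash \mathcal{H}\cup\{l\mapsto(n,m_0,m)\}$. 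Throughout I would case-split on whether $l'=l$. For $l'\neq l$, focusing on $l'$ erases every $l$-binding, so $\langle\mathcal{C}[l\colon[m_1],l\colon[m_2]]\rangle_{l'}$ is (equal to, hence isomorphic to) $\langle\mathcal{C}[l\colon[m']]\rangle_{l'}$ and the entry of the updated heap at $l'$ is unchanged; all three obligations for $l'\neq l$ then follow verbatim from the hypothesis. All genuinely new work lives in the $l'=l$ case, where $\langle\mathcal{C}[l\colon[m_1],l\colon[m_2]]\rangle_l = \langle\mathcal{C}\rangle_l[\,l\colon[m_1],l\colon[m_2]\,]$.

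For order-definedness I would prove $\lBrack\langle\mathcal{C}[l\colon[m_1],l\colon[m_2]]\rangle_{l'}\rBrack$ is traceable for all $l'$. The $l'=l$ instance is $\langle\mathcal{C}\rangle_l[\,l\colon[m_1],l\colon[m_2]\,]$: since $\langle\mathcal{C}\rangle_l[\,l\colon[m']\,]$ is traceable (being a focus of the order-defined context from the hypothesis), since the filling $l\colon[m']$ is not equivalent to $\cdot$, and since the two-vertex chain $\lBrack l\colon[m_1],l\colon[m_2]\rBrack$ is trivially traceable, the traceability-preservation-under-replacement lemma delivers traceability of the result. The domain obligation is immediate, because both $\mathrm{dom}(\mathcal{C}[l\colon[m_1],l\colon[m_2]])$ and $\mathrm{dom}(\mathcal{H}\cup\{l\mapsto(n{+}1,m_0,m)\})$ adjoin exactly $\{l\}$ to $\mathrm{dom}(\mathcal{C})$ and $\mathrm{dom}(\mathcal{H})$, which coincide by the hypothesis.

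The reference-count half of the third obligation at $l'=l$ (where $n'=n{+}1$, $m_0'=m_0$, $m''=m$) is routine: additivity of $|\cdot|_\bullet$ under context-pattern filling gives $|\lBrack\langle\mathcal{C}\rangle_l[\,l\colon[m_1],l\colon[m_2]\,]\rBrack|_\bullet = |\lBrack\langle\mathcal{C}\rangle_l[\,\cdot\,]\rBrack|_\bullet + 2$, while the hypothesis yields $|\lBrack\langle\mathcal{C}\rangle_l[\,l\colon[m']\,]\rBrack|_\bullet = |\lBrack\langle\mathcal{C}\rangle_l[\,\cdot\,]\rBrack|_\bullet + 1 = n{+}1$, so the total is $n{+}2=(n{+}1){+}1$.

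The main obstacle is the usage bound at $l'=l$. I would invoke the usage-decomposition lemma for context-pattern replacement to obtain $m_{01},m_{03}$ with $\overline{\langle\mathcal{C}\rangle_l[\,l\colon[m']\,]} = m_{01}\odot m'\odot m_{03}$ and $\overline{\langle\mathcal{C}\rangle_l[\,l\colon[m_1],l\colon[m_2]\,]} = m_{01}\odot m_1\odot m_2\odot m_{03}$. The hypothesis supplies $m\odot(m_{01}\odot m'\odot m_{03})\le m_0$, and from $m_1\odot m_2\le m'$, treating the surrounding factors as related to themselves by reflexivity of $\le$, repeated use of monotonicity and downward-closure of the OPM multiplication gives that $m\odot(m_{01}\odot m_1\odot m_2\odot m_{03})$ is defined and $\le m\odot(m_{01}\odot m'\odot m_{03})\le m_0$. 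The delicate point is definedness rather than the inequality: I must thread the downward-closure clause so that each subproduct along the chain is known to be defined before appealing to monotonicity, the defining fact being that the larger product $m\odot m_{01}\odot m'\odot m_{03}$ is already defined, which is precisely what licenses replacing $m'$ by the smaller $m_1\odot m_2$.
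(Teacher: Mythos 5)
You have proved the wrong statement. The lemma in question starts from $l\colon[m'],\,C \vdash \mathcal{H}\cup\{l\mapsto(n,m_0,m)\}$ --- with the $l$-binding at the \emph{head} of an ordered concatenation --- and concludes $l\colon[m_2],\,C \vdash \mathcal{H}\cup\{l\mapsto(n,m_0,m\odot m_1)\}$: the single head binding is rewritten from $[m']$ to $[m_2]$, the reference count is unchanged, and the recorded heap \emph{trace} is extended from $m$ to $m\odot m_1$. Your proposal instead establishes $\mathcal{C}[l\colon[m_1],l\colon[m_2]]\vdash\mathcal{H}\cup\{l\mapsto(n{+}1,m_0,m)\}$ from $\mathcal{C}[l\colon[m']]\vdash\mathcal{H}\cup\{l\mapsto(n,m_0,m)\}$, which is the heap-tracking lemma for $\mathtt{split}$ (the paper's heap/track/split), not the one for $\mathtt{op}$ (heap/track/op). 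Every ingredient you deploy --- duplicating the binding, incrementing the count to $n{+}1$, the order-additivity computation yielding two extra vertices, the usage decomposition $\overline{\langle\mathcal{C}\rangle_l[\,l\colon[m']\,]}=m_{01}\odot m'\odot m_{03}$ via the replacement lemma --- belongs to that other lemma; none of the three obligations you discharge is the corresponding obligation for the op lemma, whose usage bound at $l'=l$ reads $(m\odot m_1)\odot m_2\odot\overline{\langle C\rangle_l}\le m_0$ with the \emph{updated} trace $m\odot m_1$ on the left, and whose count condition is $|\lBrack\langle C\rangle_l\rBrack|_\bullet = n$, not $n{+}2=(n{+}1){+}1$.

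The mismatch is not merely a relabeling: your generalization to an arbitrary run-time context pattern $\mathcal{C}$ would make the op statement false. Appending $m_1$ to the recorded trace is only sound because $l\colon[m']$ is the \emph{first} binding in the usage order, so the focused usage has no prefix: $\overline{\langle(l\colon[m_2],C)\rangle_l}=m_2\odot\overline{\langle C\rangle_l}$, and the hypothesis gives $m\odot m'\odot\overline{\langle C\rangle_l}\le m_0$ directly, whence $m\odot m_1\odot m_2\odot\overline{\langle C\rangle_l}\le m_0$ by $m_1\odot m_2\le m'$ together with monotonicity and downward closure of the OPM (your closing observation about threading definedness through downward closure is the one piece that transfers verbatim). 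Had the $l$-binding sat in the middle of a pattern, the focused usage would carry a prefix $m_{01}$ before $m'$, and the new trace $m\odot m_1$ could not be justified, since the operations abstracted by $m_{01}$ must precede those of $m_1$. Correspondingly, order-definedness for the op lemma needs only that popping the head binding and pushing a fresh one preserves unique topological orderings (the paper's pop/push lemmas for $l\colon[m],C$), not the traceability-under-replacement lemma you invoke.
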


\begin{prop}{heap/track/cl1}
    If
    \begin{gather}
        \mathcal{C}  \ottsym{[}  \ottmv{l}  \mathord:  \ottsym{[}  \ottnt{m'}  \ottsym{]}  \ottsym{]}  \vdash   \mathcal{H} \cup \ottsym{\{} \ottmv{l}  \mapsto  \ottsym{(}  \ottnt{n}  \ottsym{+}  \ottsym{1}  \ottsym{,}  \ottnt{m_{{\mathrm{0}}}}  \ottsym{,}  \ottnt{m}  \ottsym{)} \ottsym{\}}  \hyp{0.1}\\
         \varepsilon \le \ottnt{m'}  \hyp{0.2},
    \end{gather}
    then $\mathcal{C}  \ottsym{[}   \cdot   \ottsym{]}  \vdash   \mathcal{H} \cup \ottsym{\{} \ottmv{l}  \mapsto  \ottsym{(}  \ottnt{n}  \ottsym{,}  \ottnt{m_{{\mathrm{0}}}}  \ottsym{,}  \ottnt{m}  \ottsym{)} \ottsym{\}} $.

    \proof We show each condition of the heap typing for the goal.
    \begin{enumerate}
        \item $\mathcal{C}  \ottsym{[}   \cdot   \ottsym{]}$ is order-defined.
              We know
              \begin{gather}
                  \text{$\mathcal{C}  \ottsym{[}  \ottmv{l}  \mathord:  \ottsym{[}  \ottnt{m'}  \ottsym{]}  \ottsym{]}$ is order-defined} \hyp{1.1}
              \end{gather}
              by \hypref{0.1}
              We will show $ \lBrack  \langle \mathcal{C}  \ottsym{[}   \cdot   \ottsym{]} \rangle_{ \ottmv{l'} }  \rBrack $, which equals to $ \lBrack  \langle \mathcal{C} \rangle_{ \ottmv{l'} }   \ottsym{[}   \cdot   \ottsym{]} \rBrack $ by definition, is traceable for any $\ottmv{l'}$.
              We consider wether $ \ottmv{l'} \ottsym{=} \ottmv{l} $ or not.
              \begin{match}
                  \item[$ \ottmv{l'} \ottsym{=} \ottmv{l} $]
                  For this $\ottmv{l'}$, we know $ \lBrack  \langle \mathcal{C} \rangle_{ \ottmv{l'} }   \ottsym{[}  \ottmv{l}  \mathord:  \ottsym{[}  \ottnt{m'}  \ottsym{]}  \ottsym{]} \rBrack $ is traceable by \hypref{1.1}.
                  So, the goal follows by \propref{env/runtime/intrp/traceable/ctx/replace}.

                  \item[$ \ottmv{l'} \neq \ottmv{l} $]
                  For this $\ottmv{l'}$, we know $ \lBrack  \langle \mathcal{C} \rangle_{ \ottmv{l'} }   \ottsym{[}   \cdot   \ottsym{]} \rBrack $ is traceable by \hypref{1.1}.
                  That is equivalent to the goal.
              \end{match}

        \item $  \ottkw{dom} ( \mathcal{C}  \ottsym{[}   \cdot   \ottsym{]} )  \ottsym{=}  \ottkw{dom} (  \mathcal{H} \cup \ottsym{\{} \ottmv{l}  \mapsto  \ottsym{(}  \ottnt{n}  \ottsym{,}  \ottnt{m_{{\mathrm{0}}}}  \ottsym{,}  \ottnt{m}  \ottsym{)} \ottsym{\}}  )  $.
              We have $   \ottkw{dom} ( \mathcal{C} )  \cup \ottsym{\{}  \ottmv{l}  \ottsym{\}}  \ottsym{=}   \ottkw{dom} ( \mathcal{H} )  \cup \ottsym{\{}  \ottmv{l}  \ottsym{\}}  $ by \hypref{0.1}.
              So, it suffices to show $ \ottmv{l}  \in   \ottkw{dom} ( \mathcal{C} )  $.
              We have $  |  \lBrack  \langle \mathcal{C}  \ottsym{[}  \ottmv{l}  \mathord:  \ottsym{[}  \ottnt{m'}  \ottsym{]}  \ottsym{]} \rangle_{ \ottmv{l} }  \rBrack  |_{\bullet}  \ottsym{=} \ottnt{n}  \ottsym{+}  \ottsym{1}  \ottsym{+}  \ottsym{1} $ by \hypref{0.1}.
              So, $  |  \lBrack  \langle \mathcal{C}  \ottsym{[}   \cdot   \ottsym{]} \rangle_{ \ottmv{l} }  \rBrack  |_{\bullet}  \ottsym{=} \ottnt{n}  \ottsym{+}  \ottsym{1} $ by \propref{env/runtime/intrp/order/ctx}.
              That implies $ \ottmv{l}  \in   \ottkw{dom} ( \mathcal{C} )  $.

        \item For any $\ottmv{l'}$, $\ottnt{n'}$, $\ottnt{m'_{{\mathrm{0}}}}$, and $\ottnt{m''}$ such that $ \ottsym{(}   \mathcal{H} \cup \ottsym{\{} \ottmv{l}  \mapsto  \ottsym{(}  \ottnt{n}  \ottsym{,}  \ottnt{m_{{\mathrm{0}}}}  \ottsym{,}  \ottnt{m}  \ottsym{)} \ottsym{\}}   \ottsym{)}  \ottsym{(}  \ottmv{l'}  \ottsym{)} \ottsym{=} \ottsym{(}  \ottnt{n'}  \ottsym{,}  \ottnt{m'_{{\mathrm{0}}}}  \ottsym{,}  \ottnt{m''}  \ottsym{)} $, $  |  \lBrack  \langle \mathcal{C}  \ottsym{[}   \cdot   \ottsym{]} \rangle_{ \ottmv{l'} }  \rBrack  |_{\bullet}  \ottsym{=} \ottnt{n'}  \ottsym{+}  \ottsym{1} $ and $  \ottnt{m''} \odot  \overline{  \langle \mathcal{C}  \ottsym{[}   \cdot   \ottsym{]} \rangle_{ \ottmv{l'} }  }   \le \ottnt{m'_{{\mathrm{0}}}} $.
              We consider whether $ \ottmv{l'} \ottsym{=} \ottmv{l} $ or not.
              \begin{match}
                  \item[$ \ottmv{l'} \ottsym{=} \ottmv{l} $]
                  In this case, $ \ottnt{n'} \ottsym{=} \ottnt{n} $, $ \ottnt{m'_{{\mathrm{0}}}} \ottsym{=} \ottnt{m_{{\mathrm{0}}}} $, and $ \ottnt{m''} \ottsym{=} \ottnt{m} $.
                  So, what we need show are
                  \begin{gather*}
                        |  \lBrack  \langle \mathcal{C}  \ottsym{[}   \cdot   \ottsym{]} \rangle_{ \ottmv{l} }  \rBrack  |_{\bullet}  \ottsym{=} \ottnt{n}  \ottsym{+}  \ottsym{1}  \\
                        \ottnt{m} \odot  \overline{  \langle \mathcal{C}  \ottsym{[}   \cdot   \ottsym{]} \rangle_{ \ottmv{l} }  }   \le \ottnt{m_{{\mathrm{0}}}} .
                  \end{gather*}
                  We have
                  \begin{gather}
                        |  \lBrack  \langle \ottsym{(}  \mathcal{C}  \ottsym{[}  \ottmv{l}  \mathord:  \ottsym{[}  \ottnt{m'}  \ottsym{]}  \ottsym{]}  \ottsym{)} \rangle_{ \ottmv{l} }  \rBrack  |_{\bullet}  \ottsym{=} \ottnt{n}  \ottsym{+}  \ottsym{1}  \ottsym{+}  \ottsym{1}  \hyp{3.1}\\
                        \ottnt{m} \odot  \overline{  \langle \mathcal{C} \rangle_{ \ottmv{l} }   \ottsym{[}  \ottmv{l}  \mathord:  \ottsym{[}  \ottnt{m'}  \ottsym{]}  \ottsym{]} }   \le \ottnt{m_{{\mathrm{0}}}}  \hyp{3.3}
                  \end{gather}
                  by \hypref{0.1}.
                  So, the first subgoal follows by \propref{env/runtime/intrp/order/ctx} and \hypref{3.1}.
                  We have some $\ottnt{m_{{\mathrm{01}}}}$ and $\ottnt{m_{{\mathrm{02}}}}$ such that
                  \begin{gather}
                        \overline{  \langle \mathcal{C} \rangle_{ \ottmv{l} }   \ottsym{[}  \ottmv{l}  \mathord:  \ottsym{[}  \ottnt{m'}  \ottsym{]}  \ottsym{]} }  \ottsym{=}   \ottnt{m_{{\mathrm{01}}}} \odot \ottnt{m'}  \odot \ottnt{m_{{\mathrm{02}}}}   \hyp{3.4}\\
                        \overline{  \langle \mathcal{C} \rangle_{ \ottmv{l} }   \ottsym{[}   \cdot   \ottsym{]} }  \ottsym{=}  \ottnt{m_{{\mathrm{01}}}} \odot \ottnt{m_{{\mathrm{02}}}}   \hyp{3.5}
                  \end{gather}
                  by \propref{env/usage/replace/ctx}.
                  So, the second subgoal follows by \hypref{0.2}, \hypref{3.3}, \hypref{3.4}, and \hypref{3.5}.

                  \item[$ \ottmv{l'} \neq \ottmv{l} $]
                  In this case, $ \ottsym{(}   \mathcal{H} \cup \ottsym{\{} \ottmv{l}  \mapsto  \ottsym{(}  \ottnt{n}  \ottsym{,}  \ottnt{m_{{\mathrm{0}}}}  \ottsym{,}  \ottnt{m}  \ottsym{)} \ottsym{\}}   \ottsym{)}  \ottsym{(}  \ottmv{l'}  \ottsym{)} \ottsym{=} \ottsym{(}   \mathcal{H} \cup \ottsym{\{} \ottmv{l}  \mapsto  \ottsym{(}  \ottnt{n}  \ottsym{+}  \ottsym{1}  \ottsym{,}  \ottnt{m_{{\mathrm{0}}}}  \ottsym{,}  \ottnt{m}  \ottsym{)} \ottsym{\}}   \ottsym{)}  \ottsym{(}  \ottmv{l'}  \ottsym{)} $, and $  \langle \mathcal{C}  \ottsym{[}   \cdot   \ottsym{]} \rangle_{ \ottmv{l'} }  \ottsym{=}  \langle \mathcal{C}  \ottsym{[}  \ottmv{l}  \mathord:  \ottsym{[}  \ottnt{m'}  \ottsym{]}  \ottsym{]} \rangle_{ \ottmv{l'} }  $.
                  So, the subgoal follows by \hypref{0.1} \qedhere
              \end{match}
    \end{enumerate}
\end{prop}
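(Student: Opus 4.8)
The plan is to verify directly the three clauses of the heap-typing definition for the goal $\mathcal{C}[\cdot] \vdash \mathcal{H} \cup \{l \mapsto (n, m_0, m)\}$, reusing the typing hypothesis $\mathcal{C}[l\mathord:[m']] \vdash \mathcal{H}\cup\{l\mapsto(n+1,m_0,m)\}$ clause by clause and splitting throughout on whether the focused location $l'$ equals $l$. The situation is the mirror image of the split and op lemmas: here we replace the single ordered binding $l\mathord:[m']$ in the context by $\cdot$ and decrement the stored count from $n+1$ to $n$, and the essential point distinguishing this from the close-2 lemma is that $l$ must \emph{remain} in both the context domain and the heap, since at least one other reference survives (reflected by $n+1 \ge 1$). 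Throughout I would use that focusing commutes with the hole, i.e.\ $\langle\mathcal{C}[\cdot]\rangle_{l'} = \langle\mathcal{C}\rangle_{l'}[\cdot]$ and $\langle\mathcal{C}[l\mathord:[m']]\rangle_{l'} = \langle\mathcal{C}\rangle_{l'}[l\mathord:[m']]$, and that for $l'\neq l$ these two focused contexts coincide.

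For order-definedness of $\mathcal{C}[\cdot]$ I would show, for every $l'$, that $\lBrack\langle\mathcal{C}\rangle_{l'}[\cdot]\rBrack$ is traceable. When $l' = l$, the hypothesis gives that $\lBrack\langle\mathcal{C}\rangle_{l}[l\mathord:[m']]\rBrack$ is traceable; since $l\mathord:[m']$ is a single ordered binding we have $l\mathord:[m'] \not\simeq \cdot$ and the empty graph is trivially traceable, so \propref{env/runtime/intrp/traceable/ctx/replace} transfers traceability to $\lBrack\langle\mathcal{C}\rangle_{l}[\cdot]\rBrack$. When $l'\neq l$ the two focused contexts coincide, so the conclusion is immediate from the hypothesis. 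For the domain clause, the hypothesis gives $\mathrm{dom}(\mathcal{C})\cup\{l\} = \mathrm{dom}(\mathcal{H})\cup\{l\}$, so it remains only to check $l \in \mathrm{dom}(\mathcal{C})$; this follows because the counting clause of the hypothesis at $l$ reads $|\lBrack\langle\mathcal{C}\rangle_{l}[l\mathord:[m']]\rBrack|_{\bullet} = n+2$, and \propref{env/runtime/intrp/order/ctx} with $|\lBrack l\mathord:[m']\rBrack|_{\bullet}=1$ yields $|\lBrack\langle\mathcal{C}\rangle_{l}[\cdot]\rBrack|_{\bullet} = n+1 \ge 1$, so some vertex of $\mathcal{C}$ is labelled with $l$.

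For the counting-and-usage clause at a location $l'$, the case $l'\neq l$ is immediate, as both the heap entry and the focused context are unchanged. The interesting case is $l'=l$, where the target entry is $(n,m_0,m)$: the count $|\lBrack\langle\mathcal{C}\rangle_{l}[\cdot]\rBrack|_{\bullet} = n+1$ is exactly what was derived above. For the usage inequality I would apply \propref{env/usage/replace/ctx} to the \emph{focused} pattern $\langle\mathcal{C}\rangle_{l}$, with the replaced subcontext $l\mathord:[m']$ (whose usage is $m'$) and its replacement $\cdot$ (usage $\varepsilon$), obtaining $m_{01},m_{02}$ with $\overline{\langle\mathcal{C}\rangle_{l}[l\mathord:[m']]} = m_{01}\odot m'\odot m_{02}$ and $\overline{\langle\mathcal{C}\rangle_{l}[\cdot]} = m_{01}\odot m_{02}$. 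From the hypothesis we have $m \odot m_{01}\odot m'\odot m_{02} \le m_0$, and using the side condition $\varepsilon \le m'$ with monotonicity and downward-closedness of the OPM, $m \odot m_{01}\odot m_{02} \le m \odot m_{01}\odot m'\odot m_{02} \le m_0$, which is the required $m\odot\overline{\langle\mathcal{C}[\cdot]\rangle_{l}} \le m_0$.

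I expect the main obstacle to be the usage inequality in the $l'=l$ case: one must instantiate \propref{env/usage/replace/ctx} on the focused pattern $\langle\mathcal{C}\rangle_l$ rather than on $\mathcal{C}$ itself, discharge its side condition $l\mathord:[m']\not\simeq\cdot$, and then combine the factored usages with $\varepsilon \le m'$ through the monoid's downward-closedness rather than by naive associativity. The secondary subtlety—absent in the close-2 lemma, where the location disappears entirely—is the bookkeeping that keeps $l$ in $\mathrm{dom}(\mathcal{C})$, which hinges on reading off $n+1 \ge 1$ from the vertex-count identity of \propref{env/runtime/intrp/order/ctx}.
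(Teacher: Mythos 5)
Your proposal is correct and follows essentially the same route as the paper's proof: the same three-clause verification with a case split on $l' = l$, using \propref{env/runtime/intrp/traceable/ctx/replace} for order-definedness, the vertex-count identity of \propref{env/runtime/intrp/order/ctx} to keep $l \in \ottkw{dom}(\mathcal{C})$, and \propref{env/usage/replace/ctx} on the focused pattern $\langle\mathcal{C}\rangle_{l}$ to factor the usage and discharge the inequality via $\varepsilon \le m'$. You are in fact slightly more explicit than the paper about the side condition $l\mathord:[m'] \not\simeq \cdot$ and the role of downward-closedness in the final OPM step, which the paper leaves implicit.
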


\begin{prop}{heap/track/cl2}
    If
    \begin{gather}
        \mathcal{C}  \ottsym{[}  \ottmv{l}  \mathord:  \ottsym{[}  \ottnt{m'}  \ottsym{]}  \ottsym{]}  \vdash   \mathcal{H} \cup \ottsym{\{} \ottmv{l}  \mapsto  \ottsym{(}  \ottsym{0}  \ottsym{,}  \ottnt{m_{{\mathrm{0}}}}  \ottsym{,}  \ottnt{m}  \ottsym{)} \ottsym{\}}  \hyp{0.1}\\
         \ottmv{l}  \notin   \ottkw{dom} ( \mathcal{H} )   \hyp{0.2},
    \end{gather}
    then $\mathcal{C}  \ottsym{[}   \cdot   \ottsym{]}  \vdash  \mathcal{H}$.

    \proof We show each condition of the heap typing for the goal.
    \begin{enumerate}
        \item $\mathcal{C}  \ottsym{[}   \cdot   \ottsym{]}$ is order-defined.
              We know
              \begin{gather}
                  \text{$\mathcal{C}  \ottsym{[}  \ottmv{l}  \mathord:  \ottsym{[}  \ottnt{m'}  \ottsym{]}  \ottsym{]}$ is order-defined} \hyp{1.1}
              \end{gather}
              by \hypref{0.1}
              We will show $ \lBrack  \langle \mathcal{C}  \ottsym{[}   \cdot   \ottsym{]} \rangle_{ \ottmv{l'} }  \rBrack $, which equals to $ \lBrack  \langle \mathcal{C} \rangle_{ \ottmv{l'} }   \ottsym{[}   \cdot   \ottsym{]} \rBrack $ by definition, is traceable for any $\ottmv{l'}$.
              We consider wether $ \ottmv{l'} \ottsym{=} \ottmv{l} $ or not.
              \begin{match}
                  \item[$ \ottmv{l'} \ottsym{=} \ottmv{l} $]
                  For this $\ottmv{l'}$, we know $ \lBrack  \langle \mathcal{C} \rangle_{ \ottmv{l'} }   \ottsym{[}  \ottmv{l}  \mathord:  \ottsym{[}  \ottnt{m'}  \ottsym{]}  \ottsym{]} \rBrack $ is traceable by \hypref{1.1}.
                  So, the goal follows by \propref{env/runtime/intrp/traceable/ctx/replace}.

                  \item[$ \ottmv{l'} \neq \ottmv{l} $]
                  For this $\ottmv{l'}$, we know $ \lBrack  \langle \mathcal{C} \rangle_{ \ottmv{l'} }   \ottsym{[}   \cdot   \ottsym{]} \rBrack $ is traceable by \hypref{1.1}.
                  That is equivalent to the goal.
              \end{match}

        \item $  \ottkw{dom} ( \mathcal{C}  \ottsym{[}   \cdot   \ottsym{]} )  \ottsym{=}  \ottkw{dom} ( \mathcal{H} )  $.
              We have $   \ottkw{dom} ( \mathcal{C} )  \cup \ottsym{\{}  \ottmv{l}  \ottsym{\}}  \ottsym{=}   \ottkw{dom} ( \mathcal{H} )  \cup \ottsym{\{}  \ottmv{l}  \ottsym{\}}  $ by \hypref{0.1}, and $ \ottmv{l}  \notin   \ottkw{dom} ( \mathcal{H} )  $ by \hypref{0.2}.
              So, it suffices to show $ \ottmv{l}  \notin   \ottkw{dom} ( \mathcal{C} )  $.
              We have $  |  \lBrack  \langle \mathcal{C}  \ottsym{[}  \ottmv{l}  \mathord:  \ottsym{[}  \ottnt{m'}  \ottsym{]}  \ottsym{]} \rangle_{ \ottmv{l} }  \rBrack  |_{\bullet}  \ottsym{=} \ottsym{1} $ by \hypref{0.1}.
              So, $  |  \lBrack  \langle \mathcal{C}  \ottsym{[}   \cdot   \ottsym{]} \rangle_{ \ottmv{l} }  \rBrack  |_{\bullet}  \ottsym{=} \ottsym{0} $ by \propref{env/runtime/intrp/order/ctx}.
              That implies $ \ottmv{l}  \notin   \ottkw{dom} ( \mathcal{C} )  $.

        \item For any $\ottmv{l'}$, $\ottnt{n'}$, $\ottnt{m'_{{\mathrm{0}}}}$, and $\ottnt{m''}$ such that $ \mathcal{H}  \ottsym{(}  \ottmv{l'}  \ottsym{)} \ottsym{=} \ottsym{(}  \ottnt{n'}  \ottsym{,}  \ottnt{m'_{{\mathrm{0}}}}  \ottsym{,}  \ottnt{m''}  \ottsym{)} $, $  |  \lBrack  \langle \mathcal{C}  \ottsym{[}   \cdot   \ottsym{]} \rangle_{ \ottmv{l'} }  \rBrack  |_{\bullet}  \ottsym{=} \ottnt{n'}  \ottsym{+}  \ottsym{1} $ and $  \ottnt{m''} \odot  \overline{  \langle \mathcal{C}  \ottsym{[}   \cdot   \ottsym{]} \rangle_{ \ottmv{l'} }  }   \le \ottnt{m'_{{\mathrm{0}}}} $.
              We consider whether $ \ottmv{l'} \ottsym{=} \ottmv{l} $ or not.
              \begin{match}
                  \item[$ \ottmv{l'} \ottsym{=} \ottmv{l} $]
                  This case cannot happen since $ \ottmv{l}  \notin   \ottkw{dom} ( \mathcal{H} )  $.

                  \item[$ \ottmv{l'} \neq \ottmv{l} $]
                  In this case, $ \mathcal{H}  \ottsym{(}  \ottmv{l'}  \ottsym{)} \ottsym{=} \ottsym{(}   \mathcal{H} \cup \ottsym{\{} \ottmv{l}  \mapsto  \ottsym{(}  \ottsym{0}  \ottsym{,}  \ottnt{m_{{\mathrm{0}}}}  \ottsym{,}  \ottnt{m}  \ottsym{)} \ottsym{\}}   \ottsym{)}  \ottsym{(}  \ottmv{l'}  \ottsym{)} $, and $  \langle \mathcal{C}  \ottsym{[}   \cdot   \ottsym{]} \rangle_{ \ottmv{l'} }  \ottsym{=}  \langle \mathcal{C}  \ottsym{[}  \ottmv{l}  \mathord:  \ottsym{[}  \ottnt{m'}  \ottsym{]}  \ottsym{]} \rangle_{ \ottmv{l'} }  $.
                  So, the subgoal follows by \hypref{0.1} \qedhere
              \end{match}
    \end{enumerate}
\end{prop}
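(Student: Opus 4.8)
The plan is to verify directly the three clauses of the heap-typing judgment $\mathcal{C}[\cdot]\vdash\mathcal{H}$: that $\mathcal{C}[\cdot]$ is order-defined, that $\mathrm{dom}(\mathcal{C}[\cdot])=\mathrm{dom}(\mathcal{H})$, and that for every $l'\in\mathrm{dom}(\mathcal{H})$ both the reference-count clause and the usage clause hold. The whole argument rests on one observation that I would establish first: focusing commutes with the context pattern, and for any $l'\neq l$ the focus $\langle\mathcal{C}[l\mathord:[m']]\rangle_{l'}$ erases the $l\mathord:[m']$ filling the hole exactly as it erases the empty $\cdot$, so $\langle\mathcal{C}[\cdot]\rangle_{l'}=\langle\mathcal{C}[l\mathord:[m']]\rangle_{l'}$. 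Because hypothesis (0.2) gives $l\notin\mathrm{dom}(\mathcal{H})$, every $l'$ I must inspect for the third clause is distinct from $l$, so the corresponding count and usage obligations transfer verbatim from hypothesis (0.1); no OPM reasoning is needed here, which is precisely what makes \ruleref{RC-Cl2} lighter than the \ruleref{RC-Op} and \ruleref{RC-Sp} cases.

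For order-definedness I would prove $\lBrack\langle\mathcal{C}[\cdot]\rangle_{l'}\rBrack$ traceable for each $l'$ by the same case split. When $l'\neq l$, traceability is inherited from (0.1) via the identity above. When $l'=l$, I would write $\langle\mathcal{C}[\cdot]\rangle_l=\langle\mathcal{C}\rangle_l[\cdot]$ and $\langle\mathcal{C}[l\mathord:[m']]\rangle_l=\langle\mathcal{C}\rangle_l[l\mathord:[m']]$, note that $l\mathord:[m']\not\simeq\cdot$ and that $\lBrack\cdot\rBrack$ is trivially traceable, and invoke \propref{env/runtime/intrp/traceable/ctx/replace} to convert the known traceability of $\langle\mathcal{C}\rangle_l[l\mathord:[m']]$ into traceability of $\langle\mathcal{C}\rangle_l[\cdot]$. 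This is exactly the move already used in the \ruleref{RC-Cl1} case, so I expect it to be routine.

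The genuinely distinctive step — and where I expect the real work to sit — is the domain clause, because \ruleref{RC-Cl2} \emph{shrinks} the domain rather than merely decrementing a count as \ruleref{RC-Cl1} does. From (0.1) I obtain $\mathrm{dom}(\mathcal{C})\cup\{l\}=\mathrm{dom}(\mathcal{H})\cup\{l\}$, so with $l\notin\mathrm{dom}(\mathcal{H})$ it suffices to show $l\notin\mathrm{dom}(\mathcal{C})$. Here I would exploit that the removed entry has reference count $0$: the count clause of (0.1) at $l$ reads $|\lBrack\langle\mathcal{C}[l\mathord:[m']]\rangle_l\rBrack|_{\bullet}=0+1=1$, and by \propref{env/runtime/intrp/order/ctx} (vertex counts add across a hole) this equals $|\lBrack\langle\mathcal{C}[\cdot]\rangle_l\rBrack|_{\bullet}+|\lBrack l\mathord:[m']\rBrack|_{\bullet}=|\lBrack\langle\mathcal{C}[\cdot]\rangle_l\rBrack|_{\bullet}+1$. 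Hence $|\lBrack\langle\mathcal{C}[\cdot]\rangle_l\rBrack|_{\bullet}=0$, so $\mathcal{C}$ carries no $l$-labelled binding and $l\notin\mathrm{dom}(\mathcal{C})$. Cancelling $l$ from both sides then yields $\mathrm{dom}(\mathcal{C}[\cdot])=\mathrm{dom}(\mathcal{H})$, and assembling the three clauses completes the proof. The main obstacle is marshalling this counting argument cleanly — tying the heap's reference count to the number of $l$-vertices in the focused interpretation, and noticing that it is the vanishing count ($n=0$) that forces the location out of the domain — rather than any deep graph or OPM computation.
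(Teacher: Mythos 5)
Your proposal is correct and takes essentially the same route as the paper's proof: the same clause-by-clause verification with a case split on $l' = l$, the same appeal to the traceability-under-replacement lemma (\propref{env/runtime/intrp/traceable/ctx/replace}) for order-definedness, and the same vertex-counting argument via \propref{env/runtime/intrp/order/ctx} — using the reference count $0$ in the count clause of (0.1) — to conclude $ \ottmv{l}  \notin   \ottkw{dom} ( \mathcal{C} )  $ for the domain clause. The third clause is likewise handled identically, with $ \ottmv{l'} \ottsym{=} \ottmv{l} $ impossible by (0.2) and the remaining obligations transferring verbatim.
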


\begin{prop}{heap/par}
    If $\mathcal{C}  \ottsym{[}  \ottnt{C_{{\mathrm{1}}}}  \parallel  \ottnt{C_{{\mathrm{2}}}}  \ottsym{]}  \vdash  \mathcal{H}$, then $   \ottkw{dom} ( \ottnt{C_{{\mathrm{1}}}} )  \cap  \ottkw{dom} ( \ottnt{C_{{\mathrm{2}}}} )   \ottsym{=} \emptyset $.

    \proof
    We show the contraposition.
    So, suppose $   \ottkw{dom} ( \ottnt{C_{{\mathrm{1}}}} )  \cap  \ottkw{dom} ( \ottnt{C_{{\mathrm{2}}}} )   \neq \emptyset $.
    Then, we have some $\ottmv{l}$, $\ottnt{m}$, and $\ottnt{m'}$ such that $ \ottmv{l}  \mathord:  \ottsym{[}  \ottnt{m}  \ottsym{]}  \in  \ottnt{C_{{\mathrm{1}}}} $ and $ \ottmv{l}  \mathord:  \ottsym{[}  \ottnt{m'}  \ottsym{]}  \in  \ottnt{C_{{\mathrm{2}}}} $.
    By definition, we can see $0 <  |  \lBrack  \langle \ottnt{C_{{\mathrm{1}}}} \rangle_{ \ottmv{l} }  \rBrack  |_{\bullet} $ and $0 <  |  \lBrack  \langle \ottnt{C_{{\mathrm{2}}}} \rangle_{ \ottmv{l} }  \rBrack  |_{\bullet} $.
    So, $ \lBrack  \langle \ottnt{C_{{\mathrm{1}}}} \rangle_{ \ottmv{l} }  \rBrack   \cup   \lBrack  \langle \ottnt{C_{{\mathrm{2}}}} \rangle_{ \ottmv{l} }  \rBrack $ has more than one topological ordering by \propref{graph/ordering/union/multi}, which results in $ \lBrack  \langle \ottsym{(}  \ottnt{C_{{\mathrm{1}}}}  \parallel  \ottnt{C_{{\mathrm{2}}}}  \ottsym{)} \rangle_{ \ottmv{l} }  \rBrack $ has more than one topological ordering.
    Now $ \lBrack  \langle \mathcal{C}  \ottsym{[}  \ottnt{C_{{\mathrm{1}}}}  \parallel  \ottnt{C_{{\mathrm{2}}}}  \ottsym{]} \rangle_{ \ottmv{l} }  \rBrack $ has more than one topological ordering by the contraposition of \propref{env/runtime/intrp/traceable/ctx}, which implies $\mathcal{C}  \ottsym{[}  \ottnt{C_{{\mathrm{1}}}}  \parallel  \ottnt{C_{{\mathrm{2}}}}  \ottsym{]}$ is not order-defined.
\end{prop}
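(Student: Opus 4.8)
The plan is to argue by contraposition, exploiting the fact that order-definedness is a necessary condition for the heap-typing judgement: if $\mathcal{C}[C_1 \parallel C_2] \vdash \mathcal{H}$ holds, then $\mathcal{C}[C_1 \parallel C_2]$ must be order-defined. So I would assume $\ottkw{dom}(C_1) \cap \ottkw{dom}(C_2) \neq \emptyset$ and derive that $\mathcal{C}[C_1 \parallel C_2]$ is \emph{not} order-defined, which contradicts the premise. From the nonempty intersection I can extract a shared location $l$ together with bindings $l : [m] \in C_1$ and $l : [m'] \in C_2$; this location is where the ambiguity will arise.

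The core of the argument is local, at this location $l$. Focusing on $l$ retains exactly the $l$-bindings and blanks out everything else, so both focused contexts keep at least one ordered vertex, i.e.\ $0 < |\lBrack \langle C_1 \rangle_l \rBrack|_\bullet$ and $0 < |\lBrack \langle C_2 \rangle_l \rBrack|_\bullet$. Since parallel composition is interpreted as the \emph{union} of graph representations, the focused interpretation $\lBrack \langle C_1 \parallel C_2 \rangle_l \rBrack$ is the union $\lBrack \langle C_1 \rangle_l \rBrack \cup \lBrack \langle C_2 \rangle_l \rBrack$ of two graphs each of positive order. The union of two such graphs admits more than one topological ordering by \propref{graph/ordering/union/multi}, so $\langle C_1 \parallel C_2 \rangle_l$ is not traceable (its interpretation lacks a unique topological ordering).

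It then remains to propagate this non-uniqueness outward through the enclosing context pattern $\mathcal{C}$. Here I would appeal to the contrapositive of \propref{env/runtime/intrp/traceable/ctx}: traceability of an enclosing interpretation forces traceability of the embedded part, so since $\langle C_1 \parallel C_2 \rangle_l$ is not traceable, neither is the focused whole $\langle \mathcal{C}[C_1 \parallel C_2] \rangle_l$. By the definition of order-defined, this exhibits a location $l$ whose focused interpretation has a non-unique topological ordering, so $\mathcal{C}[C_1 \parallel C_2]$ is not order-defined, completing the contrapositive. The main obstacle I anticipate is the outward-propagation step: \propref{env/runtime/intrp/traceable/ctx} is phrased for bare $\mathcal{C}[C]$, whereas I must apply it to the \emph{focused} context, so I first need the fact that focusing commutes with the context hole, $\langle \mathcal{C}[C_1 \parallel C_2] \rangle_l = \langle \mathcal{C} \rangle_l[\langle C_1 \parallel C_2 \rangle_l]$, which follows from the definition of focus on run-time context patterns but must be invoked explicitly to legitimately instantiate the cited lemma with $\langle \mathcal{C} \rangle_l$ and $\langle C_1 \parallel C_2 \rangle_l$.
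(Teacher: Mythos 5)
Your proposal is correct and follows essentially the same route as the paper's proof: contraposition, extracting a shared location $l$, observing both focused interpretations have positive order, invoking \propref{graph/ordering/union/multi} to get multiple topological orderings of the union, and propagating outward via the contrapositive of \propref{env/runtime/intrp/traceable/ctx} to contradict order-definedness. Your explicit remark that one must first use the commutation of focusing with hole-filling, $\langle \mathcal{C}[C_1 \parallel C_2] \rangle_l = \langle \mathcal{C} \rangle_l[\langle C_1 \parallel C_2 \rangle_l]$, in order to legitimately instantiate that lemma is a detail the paper leaves implicit, and it strengthens rather than changes the argument.
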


\begin{prop}{heap/par2}
    If $\mathcal{C}  \ottsym{[}  \ottnt{C_{{\mathrm{1}}}}  \ottsym{,}  \ottnt{C_{{\mathrm{2}}}}  \ottsym{]}  \vdash  \mathcal{H}$ and $   \ottkw{dom} ( \ottnt{C_{{\mathrm{1}}}} )  \cap  \ottkw{dom} ( \ottnt{C_{{\mathrm{2}}}} )   \ottsym{=} \emptyset $, then $\mathcal{C}  \ottsym{[}  \ottnt{C_{{\mathrm{1}}}}  \parallel  \ottnt{C_{{\mathrm{2}}}}  \ottsym{]}  \vdash  \mathcal{H}$.

    \proof
    We will show the following facts, by which $\mathcal{C}  \ottsym{[}  \ottnt{C_{{\mathrm{1}}}}  \parallel  \ottnt{C_{{\mathrm{2}}}}  \ottsym{]}  \vdash  \mathcal{H}$ follows since $\mathcal{C}  \ottsym{[}  \ottnt{C_{{\mathrm{1}}}}  \ottsym{,}  \ottnt{C_{{\mathrm{2}}}}  \ottsym{]}  \vdash  \mathcal{H}$.
    \begin{itemize}
        \item $\mathcal{C}  \ottsym{[}  \ottnt{C_{{\mathrm{1}}}}  \parallel  \ottnt{C_{{\mathrm{2}}}}  \ottsym{]}$ is order-defined.
              Since $   \ottkw{dom} ( \ottnt{C_{{\mathrm{1}}}} )  \cap  \ottkw{dom} ( \ottnt{C_{{\mathrm{2}}}} )   \ottsym{=} \emptyset $, we can observe $ \langle \mathcal{C}  \ottsym{[}  \ottnt{C_{{\mathrm{1}}}}  \ottsym{,}  \ottnt{C_{{\mathrm{2}}}}  \ottsym{]} \rangle_{ \ottmv{l} }   \simeq   \langle \mathcal{C}  \ottsym{[}  \ottnt{C_{{\mathrm{1}}}}  \parallel  \ottnt{C_{{\mathrm{2}}}}  \ottsym{]} \rangle_{ \ottmv{l} } $ for any $\ottmv{l}$ by \propref{env/runtime/focus/null}, \propref{env/equiv}, and \propref{env/equiv/ctx}.
              So, a topological ordering of $ \lBrack  \langle \mathcal{C}  \ottsym{[}  \ottnt{C_{{\mathrm{1}}}}  \parallel  \ottnt{C_{{\mathrm{2}}}}  \ottsym{]} \rangle_{ \ottmv{l} }  \rBrack $ is unique by \propref{graph/ordering/multi/iso}.
              That implies the subgoal.

        \item $  \ottkw{dom} ( \mathcal{C}  \ottsym{[}  \ottnt{C_{{\mathrm{1}}}}  \parallel  \ottnt{C_{{\mathrm{2}}}}  \ottsym{]} )  \ottsym{=}  \ottkw{dom} ( \mathcal{C}  \ottsym{[}  \ottnt{C_{{\mathrm{1}}}}  \ottsym{,}  \ottnt{C_{{\mathrm{2}}}}  \ottsym{]} )  $.
              We have $\mathcal{C}  \ottsym{[}  \ottnt{C_{{\mathrm{1}}}}  \parallel  \ottnt{C_{{\mathrm{2}}}}  \ottsym{]}  \lesssim  \mathcal{C}  \ottsym{[}  \ottnt{C_{{\mathrm{1}}}}  \ottsym{,}  \ottnt{C_{{\mathrm{2}}}}  \ottsym{]}$ by \propref{env/sub(span)} and \propref{env/sub/ctx}.
              So, the subgoal follows by \propref{env/sub/runtime/dom}.

        \item $  \overline{  \langle \mathcal{C}  \ottsym{[}  \ottnt{C_{{\mathrm{1}}}}  \parallel  \ottnt{C_{{\mathrm{2}}}}  \ottsym{]} \rangle_{ \ottmv{l} }  }  \ottsym{=}  \overline{  \langle \mathcal{C}  \ottsym{[}  \ottnt{C_{{\mathrm{1}}}}  \ottsym{,}  \ottnt{C_{{\mathrm{2}}}}  \ottsym{]} \rangle_{ \ottmv{l} }  }  $ for any $\ottmv{l}$.
              Since $   \ottkw{dom} ( \ottnt{C_{{\mathrm{1}}}} )  \cap  \ottkw{dom} ( \ottnt{C_{{\mathrm{2}}}} )   \ottsym{=} \emptyset $, we can observe $ \langle \mathcal{C}  \ottsym{[}  \ottnt{C_{{\mathrm{1}}}}  \ottsym{,}  \ottnt{C_{{\mathrm{2}}}}  \ottsym{]} \rangle_{ \ottmv{l} }   \simeq   \langle \mathcal{C}  \ottsym{[}  \ottnt{C_{{\mathrm{1}}}}  \parallel  \ottnt{C_{{\mathrm{2}}}}  \ottsym{]} \rangle_{ \ottmv{l} } $ for any $\ottmv{l}$ by \propref{env/runtime/focus/null}, \propref{env/equiv}, and \propref{env/equiv/ctx}.
              We know $ \lBrack  \langle \mathcal{C}  \ottsym{[}  \ottnt{C_{{\mathrm{1}}}}  \ottsym{,}  \ottnt{C_{{\mathrm{2}}}}  \ottsym{]} \rangle_{ \ottmv{l} }  \rBrack $ is traceable since $\mathcal{C}  \ottsym{[}  \ottnt{C_{{\mathrm{1}}}}  \ottsym{,}  \ottnt{C_{{\mathrm{2}}}}  \ottsym{]}$ is order-defined, and so $ \lBrack  \langle \mathcal{C}  \ottsym{[}  \ottnt{C_{{\mathrm{1}}}}  \parallel  \ottnt{C_{{\mathrm{2}}}}  \ottsym{]} \rangle_{ \ottmv{l} }  \rBrack $ is traceable by \propref{env/runtime/intrp/traceable/equiv}.
              Now we can have the subgoal by \propref{env/runtime/usage/equiv}.
    \end{itemize}

\end{prop}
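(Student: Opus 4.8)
The statement to prove is that heap typing is preserved when a sequential composition $C_1, C_2$ is relaxed to a parallel composition $C_1 \parallel C_2$, provided the two run-time contexts have disjoint domains. Since we are already handed $\mathcal{C}[C_1, C_2] \vdash \mathcal{H}$, the plan is to verify the three defining conditions of the judgement $\mathcal{C}[C_1 \parallel C_2] \vdash \mathcal{H}$ — order-definedness, domain agreement, and the per-location reference-count/usage conditions — by transferring each of them from the assumption rather than re-proving them from scratch.

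The single technical fact on which everything rests is that, under disjointness, focusing on any location collapses the distinction between ``,'' and ``$\parallel$'': for every $l$ I would show $\langle \mathcal{C}[C_1, C_2]\rangle_l \simeq \langle \mathcal{C}[C_1 \parallel C_2]\rangle_l$. First I note that focus commutes with the pattern structure, so both sides equal $\langle\mathcal{C}\rangle_l$ with its hole filled by $\langle C_1\rangle_l \star \langle C_2\rangle_l$ (where $\star$ is ``,'' or ``$\parallel$''). Because $\mathrm{dom}(C_1) \cap \mathrm{dom}(C_2) = \emptyset$, the location $l$ lies in at most one of the two domains, so by the focus-nullification lemma (\texttt{env/runtime/focus/null}) at least one of $\langle C_1\rangle_l$, $\langle C_2\rangle_l$ is $\simeq \cdot$; the unit and commutativity laws for contexts (\texttt{env/equiv}) then give $\langle C_1\rangle_l, \langle C_2\rangle_l \simeq \langle C_1\rangle_l \parallel \langle C_2\rangle_l$, and lifting this through $\langle\mathcal{C}\rangle_l$ via the congruence lemma \texttt{env/equiv/ctx} yields the equivalence of focuses.

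With the focus equivalence established, the three conditions follow. Order-definedness of $\mathcal{C}[C_1 \parallel C_2]$ holds because uniqueness of a topological ordering is invariant under graph isomorphism (\texttt{graph/ordering/multi/iso}, packaged as \texttt{env/runtime/order-defined/equiv}), so it transfers from $\mathcal{C}[C_1,C_2]$ focus by focus. Domain agreement is immediate, since $\mathrm{dom}$ does not inspect the choice of composition operator, giving $\mathrm{dom}(\mathcal{C}[C_1 \parallel C_2]) = \mathrm{dom}(\mathcal{C}[C_1,C_2]) = \mathrm{dom}(\mathcal{H})$ (alternatively via the span subcontext $\mathcal{C}[C_1 \parallel C_2] \lesssim \mathcal{C}[C_1,C_2]$ and \texttt{env/sub/runtime/dom}). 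For the per-location conditions I would fix $l$ with $\mathcal{H}(l) = (n, m_0, m)$: the bullet count $|\lBrack\langle\mathcal{C}[C_1\parallel C_2]\rangle_l\rBrack|_\bullet$ is preserved under isomorphism and hence still equals $n+1$, and the usage projection $\overline{\langle\mathcal{C}[C_1\parallel C_2]\rangle_l}$ equals $\overline{\langle\mathcal{C}[C_1,C_2]\rangle_l}$ by the usage-invariance lemma (\texttt{env/runtime/usage/equiv}), so $m \odot \overline{\langle\mathcal{C}[C_1\parallel C_2]\rangle_l} \le m_0$ is inherited directly from the hypothesis.

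The hard part will not be any calculation but sequencing the invocations correctly: the usage-invariance lemma requires both focus interpretations to be traceable, and traceability is exactly what order-definedness supplies. Thus I must first establish order-definedness of $\mathcal{C}[C_1 \parallel C_2]$ (obtaining traceability of each focus through \texttt{env/runtime/intrp/traceable/equiv}) before I may appeal to \texttt{env/runtime/usage/equiv} for the usage condition. The only other subtlety is the clean justification that focus commutes with the context-pattern hole and that the unit/commutativity rewriting is carried out underneath the pattern; both are routine given the congruence lemma, but they are precisely the steps where the disjointness hypothesis is genuinely consumed.
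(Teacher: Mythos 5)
Your proposal is correct and takes essentially the same route as the paper's own proof: the same key focus equivalence $\langle\mathcal{C}[C_1,C_2]\rangle_l \simeq \langle\mathcal{C}[C_1\parallel C_2]\rangle_l$ derived from disjointness via the focus-nullification, unit/commutativity, and congruence lemmas, then order-definedness by isomorphism-invariance of unique topological orderings, domain agreement via the span subcontext, and the usage condition via \texttt{env/runtime/usage/equiv} with traceability supplied by order-definedness. Your explicit remarks that the vertex-count condition transfers because isomorphism preserves graph order, and that order-definedness must be established \emph{before} invoking the usage-invariance lemma, only spell out what the paper's proof leaves implicit.
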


  \subsection{Properties for preservation}
  \begin{prop}{subj/beta}
    If
    \begin{gather}
        \ottnt{M}  \longrightarrow_\beta  \ottnt{M'} \hyp{0.1}\\
        \ottnt{C}  \vdash  \ottnt{M}  :  \ottnt{T}  \mid  \ottnt{e}, \hyp{0.2}
    \end{gather}
    then $\ottnt{C}  \vdash  \ottnt{M'}  :  \ottnt{T}  \mid  \ottnt{e}$.

    \proof By case analysis on the last rule of the given derivation of \hypref{0.1}.
    \begin{match}
        \item[\ruleref{RE-Beta}]
        In this case,
        \begin{gather}
             \ottnt{M} \ottsym{=} \ottsym{(}  \lambda  \ottmv{x}  \ottsym{.}  \ottnt{M_{{\mathrm{1}}}}  \ottsym{)} \, \ottnt{V}  \hyp{1.1}\\
             \ottnt{M'} \ottsym{=} \ottnt{M_{{\mathrm{1}}}}  \ottsym{[}  \ottnt{V}  \ottsym{/}  \ottmv{x}  \ottsym{]}  \hyp{1.2}.
        \end{gather}
        So, we have
        \begin{gather}
            \ottnt{C}  \lesssim  \ottnt{C_{{\mathrm{1}}}}  \ottsym{,}  \ottnt{C_{{\mathrm{2}}}} \hyp{1.3}\\
              \ottnt{e_{{\mathrm{1}}}}  \sqcup  \ottnt{e_{{\mathrm{2}}}}  \le \ottnt{e}  \hyp{1.4}\\
            \ottkw{unr} \, \ottnt{C_{{\mathrm{1}}}} \hyp{1.5}\\
            \ottnt{C_{{\mathrm{1}}}}  \ottsym{,}  \ottmv{x}  \mathord:  \ottnt{S}  \vdash  \ottnt{M_{{\mathrm{1}}}}  :  \ottnt{T}  \mid  \ottnt{e_{{\mathrm{1}}}} \hyp{1.6}\\
            \ottnt{C_{{\mathrm{2}}}}  \vdash  \ottnt{V}  :  \ottnt{S}  \mid  \ottnt{e_{{\mathrm{2}}}} \hyp{1.7}
        \end{gather}
        by \propref{typing/inv}, \propref{env/sub/runtime}, and \propref{env/sub/ctx}.
        We know $ \ottmv{x}  \notin   \ottkw{dom} ( \ottnt{C_{{\mathrm{1}}}}  \ottsym{,}  \ottsym{[]} )  $.
        So, we have
        \begin{gather*}
            \ottnt{C_{{\mathrm{1}}}}  \ottsym{,}  \ottnt{C_{{\mathrm{2}}}}  \vdash  \ottnt{M_{{\mathrm{1}}}}  \ottsym{[}  \ottnt{V}  \ottsym{/}  \ottmv{x}  \ottsym{]}  :  \ottnt{T}  \mid   \ottnt{e_{{\mathrm{1}}}}  \sqcup  \ottnt{e_{{\mathrm{2}}}}  \hyp{1.9}
        \end{gather*}
        by \propref{typing/subst} given $ \mathcal{G} \ottsym{=} \ottnt{C_{{\mathrm{1}}}}  \ottsym{,}  \ottsym{[]} $.
        We can derive the goal by \ruleref{T-Weaken}, \hypref{1.3}, \hypref{1.4}, and \hypref{1.9}.

        \item[\ruleref{RE-UBeta}, \ruleref{RE-RBeta}, and \ruleref{RE-LBeta}]
        Similar to the case \ruleref{RE-UBeta}.

        \item[\ruleref{RE-ULet}]
        In this case,
        \begin{gather}
             \ottnt{M} \ottsym{=} \ottkw{let} \, \ottmv{x}  \otimes  \ottmv{y}  \ottsym{=}  \ottnt{V_{{\mathrm{1}}}}  \otimes  \ottnt{V_{{\mathrm{2}}}} \, \ottkw{in} \, \ottnt{M_{{\mathrm{1}}}}  \hyp{5.1}\\
             \ottnt{M'} \ottsym{=} \ottnt{M_{{\mathrm{1}}}}  \ottsym{[}  \ottnt{V_{{\mathrm{1}}}}  \ottsym{/}  \ottmv{x}  \ottsym{]}  \ottsym{[}  \ottnt{V_{{\mathrm{2}}}}  \ottsym{/}  \ottmv{y}  \ottsym{]}  \hyp{5.2}.
        \end{gather}
        So, we have
        \begin{gather}
            \ottnt{C}  \lesssim  \mathcal{C}  \ottsym{[}  \ottnt{C_{{\mathrm{1}}}}  \parallel  \ottnt{C_{{\mathrm{2}}}}  \ottsym{]} \hyp{5.3}\\
             \ottnt{e'} \le \ottnt{e}  \hyp{5.4}\\
            \ottnt{C_{{\mathrm{1}}}}  \vdash  \ottnt{V_{{\mathrm{1}}}}  :  \ottnt{S_{{\mathrm{1}}}}  \mid  \ottsym{0} \hyp{5.5}\\
            \ottnt{C_{{\mathrm{2}}}}  \vdash  \ottnt{V_{{\mathrm{2}}}}  :  \ottnt{S_{{\mathrm{2}}}}  \mid  \ottsym{0} \hyp{5.6}\\
            \mathcal{C}  \ottsym{[}  \ottmv{x}  \mathord:  \ottnt{S_{{\mathrm{1}}}}  \parallel  \ottmv{y}  \mathord:  \ottnt{S_{{\mathrm{2}}}}  \ottsym{]}  \vdash  \ottnt{M_{{\mathrm{1}}}}  :  \ottnt{T}  \mid  \ottnt{e'} \hyp{5.7}\\
            \ottsym{\{}  \ottmv{x}  \ottsym{\}}  \uplus  \ottsym{\{}  \ottmv{y}  \ottsym{\}}  \uplus   \ottkw{dom} ( \mathcal{C}  \ottsym{[}  \ottnt{C_{{\mathrm{1}}}}  \parallel  \ottnt{C_{{\mathrm{2}}}}  \ottsym{]} )  \hyp{5.8}
        \end{gather}
        by \propref{typing/inv}, \propref{env/sub/runtime}, and \propref{env/sub/ctx}.
        Let $ \mathcal{G}_{{\mathrm{1}}} \ottsym{=} \mathcal{C}  \ottsym{[}  \ottsym{[]}  \parallel  \ottmv{y}  \mathord:  \ottnt{S_{{\mathrm{2}}}}  \ottsym{]} $ and $ \mathcal{G}_{{\mathrm{2}}} \ottsym{=} \mathcal{C}  \ottsym{[}  \ottnt{C_{{\mathrm{1}}}}  \parallel  \ottsym{[]}  \ottsym{]} $.
        We have $ \ottmv{x}  \notin   \ottkw{dom} ( \mathcal{G}_{{\mathrm{1}}}  \ottsym{[}   \cdot   \ottsym{]} )  $ and $ \ottmv{y}  \notin   \ottkw{dom} ( \mathcal{G}_{{\mathrm{2}}}  \ottsym{[}   \cdot   \ottsym{]} )  $ by \hypref{5.8}.
        So, we have
        \begin{gather}
            \mathcal{C}  \ottsym{[}  \ottnt{C_{{\mathrm{1}}}}  \parallel  \ottnt{C_{{\mathrm{2}}}}  \ottsym{]}  \vdash  \ottnt{M_{{\mathrm{1}}}}  \ottsym{[}  \ottnt{V_{{\mathrm{1}}}}  \ottsym{/}  \ottmv{x}  \ottsym{]}  \ottsym{[}  \ottnt{V_{{\mathrm{2}}}}  \ottsym{/}  \ottmv{y}  \ottsym{]}  :  \ottnt{T}  \mid  \ottnt{e'} \hyp{5.9}
        \end{gather}
        by using \propref{typing/subst} twice given $ \mathcal{G} \ottsym{=} \mathcal{G}_{{\mathrm{1}}} $ and $ \mathcal{G} \ottsym{=} \mathcal{G}_{{\mathrm{2}}} $ in the order.
        Now the gaol is derived by \ruleref{T-Weaken}, \hypref{5.3}, \hypref{5.4}, and \hypref{5.9}.

        \item[\ruleref{RE-OLet}]
        Similar to the case \ruleref{RE-ULet}.
    \end{match}
\end{prop}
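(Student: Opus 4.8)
The plan is to prove the statement by case analysis on the last rule of the derivation $\ottnt{M} \longrightarrow_\beta \ottnt{M'}$, since each $\beta$-rule rewrites one specific redex into one specific contractum. In every case I would follow the same three-step schema. First, apply the inversion lemma \propref{typing/inv} to the hypothesis $\ottnt{C} \vdash \ottnt{M} : \ottnt{T} \mid \ottnt{e}$ to recover the typing subderivations for the components of the redex, together with the context inequality (some $\ottnt{C} \lesssim \ottnt{C_1} , \ottnt{C_2}$, $\ottnt{C} \lesssim \ottnt{C_1} \parallel \ottnt{C_2}$, or a context-pattern shape $\ottnt{C} \lesssim \mathcal{C}[\ldots]$) and the effect bound. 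Second, use the value substitution lemma \propref{typing/subst} to assemble a typing of the contractum $\ottnt{M'}$ under the recombined run-time context. Third, close the gap between that recombined context and effect and the target $\ottnt{C} \mid \ottnt{e}$ by a single application of \ruleref{T-Weaken}, feeding it the $\lesssim$ fact from inversion (after promoting it through \propref{env/sub/runtime} and \propref{env/sub/ctx} where a context pattern is involved) and the recovered effect inequality.

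For the four application rules \ruleref{RE-Beta}, \ruleref{RE-UBeta}, \ruleref{RE-RBeta}, and \ruleref{RE-LBeta}, inversion produces an abstraction typing whose body is typed with one extra binding $\ottmv{x} : \ottnt{S}$ (placed with the unrestricted/$\parallel$/sequential/swapped structure dictated by the arrow flavour) and an argument value $\ottnt{V}$ typed at $\ottnt{S}$. A single application of \propref{typing/subst} with context pattern $\mathcal{G} = \ottnt{C_1} , [\,]$ (or the $\parallel$ / swapped variant matching the rule) yields the body with $\ottnt{V}$ substituted for $\ottmv{x}$, and \ruleref{T-Weaken} finishes. The freshness side condition $\ottmv{x} \notin \ottkw{dom}(\ldots)$ required by the substitution lemma is exactly one of the premises returned by inversion, so it is available without extra work, and the effect slack is absorbed by subeffecting in \ruleref{T-Weaken}.

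I expect the let rules \ruleref{RE-ULet} and \ruleref{RE-OLet} to be the main obstacle, because they demand \emph{two} substitutions into a context \emph{pattern} rather than one substitution into a flat context. In these cases inversion gives a run-time context pattern $\mathcal{C}$ with the body typed under $\mathcal{C}[\ottmv{x} : \ottnt{S_1} \parallel \ottmv{y} : \ottnt{S_2}]$ (respectively $\mathcal{C}[\ottmv{x} : \ottnt{S_1} , \ottmv{y} : \ottnt{S_2}]$), together with the two projected values $\ottnt{V_1}, \ottnt{V_2}$ typed at $\ottnt{S_1}, \ottnt{S_2}$ under disjoint subcontexts $\ottnt{C_1}, \ottnt{C_2}$. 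The delicate point is choosing the intermediate patterns so that the two successive uses of \propref{typing/subst} hit the correct holes: I would first substitute $\ottnt{V_1}$ for $\ottmv{x}$ using the pattern $\mathcal{C}[[\,] \parallel \ottmv{y} : \ottnt{S_2}]$, then substitute $\ottnt{V_2}$ for $\ottmv{y}$ using $\mathcal{C}[\ottnt{C_1} \parallel [\,]]$, checking at each step that the variable being eliminated is absent from the residual pattern — a fact supplied by the disjointness premise $\{\ottmv{x}\} \uplus \{\ottmv{y}\} \uplus \ottkw{dom}(\mathcal{C}[\ldots])$. Since inversion forces the scrutinee pair to carry effect $\ottsym{0}$, the effect arithmetic in these cases stays trivial, so the only real care is the bookkeeping of the context pattern across the two substitutions.
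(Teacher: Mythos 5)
Your proposal matches the paper's proof essentially step for step: the same case analysis on the last $\beta$-rule, the same three-step schema (inversion via \propref{typing/inv} plus \propref{env/sub/runtime} and \propref{env/sub/ctx}, then \propref{typing/subst}, then a single closing \ruleref{T-Weaken}), and in the let cases even the same choice and ordering of context patterns $\mathcal{C}[\,[\,] \parallel \ottmv{y} \mathord: \ottnt{S_{{\mathrm{2}}}}\,]$ followed by $\mathcal{C}[\,\ottnt{C_{{\mathrm{1}}}} \parallel [\,]\,]$, with freshness discharged by the $\uplus$ disjointness premise exactly as the paper does. No gaps; this is the paper's argument.
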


\begin{prop}{subj/beta/ctx}
    If
    \begin{gather}
        \ottnt{M}  \longrightarrow_\beta  \ottnt{M'} \hyp{0.1}\\
        \ottnt{C}  \vdash  \mathcal{E}  \ottsym{[}  \ottnt{M}  \ottsym{]}  :  \ottnt{T}  \mid  \ottnt{e}, \hyp{0.2}
    \end{gather}
    then $\ottnt{C}  \vdash  \mathcal{E}  \ottsym{[}  \ottnt{M'}  \ottsym{]}  :  \ottnt{T}  \mid  \ottnt{e}$.

    \proof By using \propref{subj/beta}, \propref{typing/inv}, and \propref{env/sub/runtime}, the proof is routine by structural induction on $\mathcal{E}$.
\end{prop}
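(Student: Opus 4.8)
The plan is to proceed by structural induction on the evaluation context $\mathcal{E}$, reducing each case to \propref{subj/beta} at the hole together with a reconstruction of the surrounding typing derivation. In the base case $\mathcal{E} = \ottsym{[]}$ the goal $\ottnt{C}  \vdash  \ottnt{M'}  :  \ottnt{T}  \mid  \ottnt{e}$ is exactly the conclusion of \propref{subj/beta} applied to the hypotheses $\ottnt{M}  \longrightarrow_\beta  \ottnt{M'}$ and $\ottnt{C}  \vdash  \ottnt{M}  :  \ottnt{T}  \mid  \ottnt{e}$, so there is nothing further to do.

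For each inductive case the context has the shape of one syntactic former applied to a smaller context $\mathcal{E}'$ (for instance $\mathcal{E}' \, \ottnt{N}$, $\ottnt{V} \, \mathcal{E}'$, $\mathcal{E}'  \odot  \ottnt{N}$, $\ottkw{let} \, \ottmv{x}  \otimes  \ottmv{y}  \ottsym{=}  \mathcal{E}' \, \ottkw{in} \, \ottnt{N}$, and so on). In every such case I first apply the matching clause of the inversion lemma \propref{typing/inv} to the derivation of $\ottnt{C}  \vdash  \mathcal{E}  \ottsym{[}  \ottnt{M}  \ottsym{]}  :  \ottnt{T}  \mid  \ottnt{e}$. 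This yields a subcontext relation of the form $\ottnt{C}  \lesssim  \Gamma_{{\mathrm{1}}} \mathbin{\star} \Gamma_{{\mathrm{2}}}$ (where $\star$ is the appropriate former $,$ or $\parallel$ dictated by the rule), an effect bound, and typing judgments for the immediate subterms — one of which types the subexpression $\mathcal{E}'  \ottsym{[}  \ottnt{M}  \ottsym{]}$ under one of the $\Gamma_i$, while the other subterm (the one not containing the hole) is typed unchanged.

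The key move is to apply the induction hypothesis to the subderivation typing $\mathcal{E}'  \ottsym{[}  \ottnt{M}  \ottsym{]}$, which requires its context to be a run-time context. This is where \propref{env/sub/runtime} enters: since $\ottnt{C}$ is a run-time context and $\ottnt{C}  \lesssim  \Gamma_{{\mathrm{1}}} \mathbin{\star} \Gamma_{{\mathrm{2}}}$, that lemma forces $\Gamma_{{\mathrm{1}}} \mathbin{\star} \Gamma_{{\mathrm{2}}}$ — and hence each $\Gamma_i$ — to be a run-time context as well. With the induction hypothesis I obtain the same type and effect for $\mathcal{E}'  \ottsym{[}  \ottnt{M'}  \ottsym{]}$ under $\Gamma_i$. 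I then reassemble the derivation by re-applying the very typing rule that \propref{typing/inv} inverted, combining the new subderivation with the unchanged derivation of the hole-free subterm; this produces $\Gamma_{{\mathrm{1}}} \mathbin{\star} \Gamma_{{\mathrm{2}}}  \vdash  \mathcal{E}  \ottsym{[}  \ottnt{M'}  \ottsym{]}  :  \ottnt{T}  \mid  \ottnt{e'}$. Finally, a single application of \ruleref{T-Weaken} using the recovered $\ottnt{C}  \lesssim  \Gamma_{{\mathrm{1}}} \mathbin{\star} \Gamma_{{\mathrm{2}}}$ and the effect bound restores the conclusion $\ottnt{C}  \vdash  \mathcal{E}  \ottsym{[}  \ottnt{M'}  \ottsym{]}  :  \ottnt{T}  \mid  \ottnt{e}$ under the original context and effect.

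The proof is conceptually routine but case-heavy, and the main obstacle is bookkeeping: each of the fourteen non-hole evaluation-context formers corresponds to a distinct typing rule with its own context-splitting discipline (sequential versus parallel) and its own effect side-conditions, and a few of them — notably the left application $\ottnt{M}  {}^<  \mathcal{E}$, which is evaluated right-to-left, the ordered product $\mathcal{E}  \odot  \ottnt{N}$, and the two pair eliminations — must be matched against the correct clause of \propref{typing/inv} so that the reconstructed split and effect agree with what the rule demands. No genuinely new ingredient is needed beyond \propref{subj/beta}, \propref{typing/inv}, \propref{env/sub/runtime}, and \ruleref{T-Weaken}; the care is entirely in selecting, for each hole position, the subderivation that types the filled subcontext and re-threading it through the original rule.
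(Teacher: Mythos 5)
Your proposal is correct and follows essentially the same route as the paper: structural induction on $\mathcal{E}$, with \propref{subj/beta} at the hole, \propref{typing/inv} to decompose the derivation, \propref{env/sub/runtime} to ensure the split contexts are run-time contexts so the induction hypothesis applies, and reassembly via the inverted rule followed by \ruleref{T-Weaken}. This is exactly the pattern the paper uses (spelled out in full in its analogous lemmas for configuration reduction in context), so nothing further is needed.
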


\begin{prop}{subj/cmp/noeffect}
    If
    \begin{gather}
        \ottnt{M}  \mid  \mathcal{H}  \longrightarrow_\gamma  \ottnt{M'}  \mid  \mathcal{H}' \hyp{0.1}\\
        \ottnt{C}  \vdash  \ottnt{M}  :  \ottnt{T}  \mid  \ottsym{0} \hyp{0.2}\\
        \mathcal{C}  \ottsym{[}  \ottnt{C}  \ottsym{]}  \vdash  \mathcal{H} \hyp{0.3},
    \end{gather}
    then there exists $\ottnt{C'}$ such that $  \ottsym{(}    \ottkw{dom} ( \ottnt{C'} )  \setminus  \ottkw{dom} ( \ottnt{C} )    \ottsym{)} \cap  \ottkw{dom} ( \mathcal{C} )   \ottsym{=} \emptyset $, $\ottnt{C'}  \vdash  \ottnt{M'}  :  \ottnt{T}  \mid  \ottsym{0}$, and $\mathcal{C}  \ottsym{[}  \ottnt{C'}  \ottsym{]}  \vdash  \mathcal{H}'$.

    \proof
    By the case analysis on the last rule of the given derivation of \hypref{0.1}.
    \begin{match}
        \item[\ruleref{RC-Ne}]
        In this case,
        \begin{gather}
             \ottnt{M} \ottsym{=}  \ottkw{new} _{ \ottnt{m} }  \, \ottkw{unit}  \hyp{1.1}\\
             \ottnt{M'} \ottsym{=} \ottmv{l}  \hyp{1.2}\\
             \mathcal{H}' \ottsym{=}  \mathcal{H} \cup \ottsym{\{} \ottmv{l}  \mapsto  \ottsym{(}  \ottsym{0}  \ottsym{,}  \ottnt{m}  \ottsym{,}  \varepsilon  \ottsym{)} \ottsym{\}}   \hyp{1.3}\\
             \ottmv{l}  \notin   \ottkw{dom} ( \mathcal{H} )   \hyp{1.4}.
        \end{gather}
        So, we have
        \begin{gather}
            \ottnt{C}  \lesssim   \cdot  \hyp{1.5}\\
             \ottnt{T} \ottsym{=} \ottsym{[}  \ottnt{m}  \ottsym{]}  \hyp{1.6}
        \end{gather}
        by \propref{typing/inv}, \propref{env/sub/runtime}, \propref{env/sub/ctx}, and \hypref{0.2}.
        To this end, we choose $ \ottnt{C'} \ottsym{=} \ottmv{l}  \mathord:  \ottsym{[}  \ottnt{m}  \ottsym{]} $.
        The subgoals we need to show are
        \begin{gather*}
              \ottsym{(}   \ottsym{\{}  \ottmv{l}  \ottsym{\}} \setminus  \ottkw{dom} ( \ottnt{C} )    \ottsym{)} \cap  \ottkw{dom} ( \mathcal{C} )   \ottsym{=} \emptyset  \\
            \ottmv{l}  \mathord:  \ottsym{[}  \ottnt{m}  \ottsym{]}  \vdash  \ottmv{l}  :  \ottsym{[}  \ottnt{m}  \ottsym{]}  \mid  \ottsym{0} \\
            \mathcal{C}  \ottsym{[}  \ottmv{l}  \mathord:  \ottsym{[}  \ottnt{m}  \ottsym{]}  \ottsym{]}  \vdash   \mathcal{H} \cup \ottsym{\{} \ottmv{l}  \mapsto  \ottsym{(}  \ottsym{0}  \ottsym{,}  \ottnt{m}  \ottsym{,}  \varepsilon  \ottsym{)} \ottsym{\}} .
        \end{gather*}
        We have $  \ottkw{dom} ( \mathcal{C}  \ottsym{[}  \ottnt{C}  \ottsym{]} )  \ottsym{=}  \ottkw{dom} ( \mathcal{H} )  $ by \hypref{0.3}.
        So, $ \ottmv{l}  \notin   \ottkw{dom} ( \mathcal{C}  \ottsym{[}  \ottnt{C}  \ottsym{]} )  $ by \hypref{1.4}.
        That results in $ \ottmv{l}  \notin   \ottkw{dom} ( \mathcal{C} )  $, which implies the first subgoal.
        We can derive the second subgoal by \ruleref{T-Loc}.
        As for the third subgoal, firstly we have
        \begin{gather*}
            \mathcal{C}  \ottsym{[}  \ottnt{C}  \ottsym{]}  \lesssim  \mathcal{C}  \ottsym{[}   \cdot   \ottsym{]}
        \end{gather*}
        by \propref{env/sub/ctx} and \hypref{1.5}.
        Then, we get
        \begin{gather*}
            \mathcal{C}  \ottsym{[}   \cdot   \ottsym{]}  \vdash  \mathcal{H}
        \end{gather*}
        by \propref{heap/weaken} and \hypref{0.3}.
        Now the subgoal follows by \propref{heap/track/new} and \hypref{1.4}.

        \item[\ruleref{RC-Cl1}]
        In this case,
        \begin{gather}
             \ottnt{M} \ottsym{=} \ottkw{drop} \, \ottmv{l}  \hyp{3.1}\\
             \ottnt{M'} \ottsym{=} \ottkw{unit}  \hyp{3.2}\\
             \mathcal{H} \ottsym{=}  \mathcal{H}_{{\mathrm{1}}} \cup \ottsym{\{} \ottmv{l}  \mapsto  \ottsym{(}  \ottnt{n}  \ottsym{+}  \ottsym{1}  \ottsym{,}  \ottnt{m_{{\mathrm{0}}}}  \ottsym{,}  \ottnt{m}  \ottsym{)} \ottsym{\}}   \hyp{3.3}\\
             \mathcal{H}' \ottsym{=}  \mathcal{H}_{{\mathrm{1}}} \cup \ottsym{\{} \ottmv{l}  \mapsto  \ottsym{(}  \ottnt{n}  \ottsym{,}  \ottnt{m_{{\mathrm{0}}}}  \ottsym{,}  \ottnt{m}  \ottsym{)} \ottsym{\}}   \hyp{3.4}.
        \end{gather}
        So, we have
        \begin{gather}
            \ottnt{C}  \lesssim  \ottmv{l}  \mathord:  \ottsym{[}  \ottnt{m'}  \ottsym{]} \hyp{3.5}\\
             \ottnt{T} \ottsym{=}  \mathtt{Unit}   \hyp{3.6}\\
             \varepsilon \le \ottnt{m'}  \hyp{3.7}
        \end{gather}
        by \propref{typing/inv}, \propref{env/sub/ctx}, and \hypref{0.2}.
        To this end, we choose $ \ottnt{C'} \ottsym{=}  \cdot  $.
        The subgoals we need to show are
        \begin{gather*}
              \ottsym{(}   \emptyset \setminus  \ottkw{dom} ( \ottnt{C} )    \ottsym{)} \cap  \ottkw{dom} ( \mathcal{C} )   \ottsym{=} \emptyset  \\
             \cdot   \vdash  \ottkw{unit}  :   \mathtt{Unit}   \mid  \ottsym{0} \\
            \mathcal{C}  \ottsym{[}   \cdot   \ottsym{]}  \vdash   \mathcal{H}_{{\mathrm{1}}} \cup \ottsym{\{} \ottmv{l}  \mapsto  \ottsym{(}  \ottnt{n}  \ottsym{,}  \ottnt{m_{{\mathrm{0}}}}  \ottsym{,}  \ottnt{m}  \ottsym{)} \ottsym{\}} .
        \end{gather*}
        The first subgoal trivially holds.
        We can derive the second subgoal by \ruleref{T-Unit}.
        As for the thrid subgoal, we have $\mathcal{C}  \ottsym{[}  \ottmv{l}  \mathord:  \ottsym{[}  \ottnt{m'}  \ottsym{]}  \ottsym{]}  \vdash   \mathcal{H}_{{\mathrm{1}}} \cup \ottsym{\{} \ottmv{l}  \mapsto  \ottsym{(}  \ottnt{n}  \ottsym{+}  \ottsym{1}  \ottsym{,}  \ottnt{m_{{\mathrm{0}}}}  \ottsym{,}  \ottnt{m}  \ottsym{)} \ottsym{\}} $ by \propref{env/sub/ctx}, \propref{heap/weaken}, \hypref{0.3}, and \hypref{3.5}.
        So, the goal follows by \propref{heap/track/cl1} and \hypref{3.7}.

        \item[\ruleref{RC-Cl2}]
        Similar to the case \ruleref{RC-Cl1} except we use \propref{heap/track/cl2}.

        \item[\ruleref{RC-Sp}]
        In this case,
        \begin{gather}
             \ottnt{M} \ottsym{=}  \ottkw{split} _{ \ottnt{m_{{\mathrm{1}}}} , \ottnt{m_{{\mathrm{2}}}} }  \, \ottmv{l}  \hyp{2.1}\\
             \ottnt{M'} \ottsym{=} \ottmv{l}  \odot  \ottmv{l}  \hyp{2.2}\\
             \mathcal{H} \ottsym{=}  \mathcal{H}_{{\mathrm{1}}} \cup \ottsym{\{} \ottmv{l}  \mapsto  \ottsym{(}  \ottnt{n}  \ottsym{,}  \ottnt{m_{{\mathrm{0}}}}  \ottsym{,}  \ottnt{m}  \ottsym{)} \ottsym{\}}   \hyp{2.3}\\
             \mathcal{H}' \ottsym{=}  \mathcal{H}_{{\mathrm{1}}} \cup \ottsym{\{} \ottmv{l}  \mapsto  \ottsym{(}  \ottnt{n}  \ottsym{+}  \ottsym{1}  \ottsym{,}  \ottnt{m_{{\mathrm{0}}}}  \ottsym{,}  \ottnt{m}  \ottsym{)} \ottsym{\}}   \hyp{2.4}.
        \end{gather}
        So, we have
        \begin{gather}
            \ottnt{C}  \lesssim  \ottmv{l}  \mathord:  \ottsym{[}  \ottnt{m'}  \ottsym{]} \hyp{2.5}\\
             \ottnt{T} \ottsym{=} \ottsym{[}  \ottnt{m_{{\mathrm{1}}}}  \ottsym{]}  \odot  \ottsym{[}  \ottnt{m_{{\mathrm{2}}}}  \ottsym{]}  \hyp{2.6}\\
              \ottnt{m_{{\mathrm{1}}}} \odot \ottnt{m_{{\mathrm{2}}}}  \le \ottnt{m'}  \hyp{2.6.1.1}
        \end{gather}
        by \propref{typing/inv}, \propref{env/sub/runtime}, \propref{env/sub/ctx} and \hypref{0.2}.
        To this end, we choose $ \ottnt{C'} \ottsym{=} \ottmv{l}  \mathord:  \ottsym{[}  \ottnt{m_{{\mathrm{1}}}}  \ottsym{]}  \ottsym{,}  \ottmv{l}  \mathord:  \ottsym{[}  \ottnt{m_{{\mathrm{2}}}}  \ottsym{]} $.
        The subgoals we need to show are
        \begin{gather*}
              \ottsym{(}   \ottsym{\{}  \ottmv{l}  \ottsym{\}} \setminus  \ottkw{dom} ( \ottnt{C} )    \ottsym{)} \cap  \ottkw{dom} ( \mathcal{C} )   \ottsym{=} \emptyset  \\
            \ottmv{l}  \mathord:  \ottsym{[}  \ottnt{m_{{\mathrm{1}}}}  \ottsym{]}  \ottsym{,}  \ottmv{l}  \mathord:  \ottsym{[}  \ottnt{m_{{\mathrm{2}}}}  \ottsym{]}  \vdash  \ottmv{l}  \odot  \ottmv{l}  :  \ottsym{[}  \ottnt{m_{{\mathrm{1}}}}  \ottsym{]}  \odot  \ottsym{[}  \ottnt{m_{{\mathrm{2}}}}  \ottsym{]}  \mid  \ottsym{0}\\
            \mathcal{C}  \ottsym{[}  \ottmv{l}  \mathord:  \ottsym{[}  \ottnt{m_{{\mathrm{1}}}}  \ottsym{]}  \ottsym{,}  \ottmv{l}  \mathord:  \ottsym{[}  \ottnt{m_{{\mathrm{2}}}}  \ottsym{]}  \ottsym{]}  \vdash   \mathcal{H}_{{\mathrm{1}}} \cup \ottsym{\{} \ottmv{l}  \mapsto  \ottsym{(}  \ottnt{n}  \ottsym{+}  \ottsym{1}  \ottsym{,}  \ottnt{m_{{\mathrm{0}}}}  \ottsym{,}  \ottnt{m}  \ottsym{)} \ottsym{\}} .
        \end{gather*}
        Having $  \ottkw{dom} ( \ottnt{C} )  \ottsym{=} \ottsym{\{}  \ottmv{l}  \ottsym{\}} $ by \propref{env/sub/runtime} and \hypref{2.5}, we can easily see the first subgoal.
        We can derive the second subgoal by \ruleref{T-Loc} and \ruleref{T-OPair}.
        As for the third subgoal, firstly, we have
        \begin{gather*}
            \mathcal{C}  \ottsym{[}  \ottnt{C}  \ottsym{]}  \lesssim  \mathcal{C}  \ottsym{[}  \ottmv{l}  \mathord:  \ottsym{[}  \ottnt{m'}  \ottsym{]}  \ottsym{]}
        \end{gather*}
        by \propref{env/sub/ctx} and \hypref{2.5}.
        Then, we have
        \begin{gather*}
            \mathcal{C}  \ottsym{[}  \ottmv{l}  \mathord:  \ottsym{[}  \ottnt{m'}  \ottsym{]}  \ottsym{]}  \vdash   \mathcal{H}_{{\mathrm{1}}} \cup \ottsym{\{} \ottmv{l}  \mapsto  \ottsym{(}  \ottnt{n}  \ottsym{,}  \ottnt{m_{{\mathrm{0}}}}  \ottsym{,}  \ottnt{m}  \ottsym{)} \ottsym{\}} 
        \end{gather*}
        by \propref{heap/weaken}, \hypref{0.3}, and \hypref{2.3}.
        Therefore, the subgoal follows by \propref{heap/track/split} and \hypref{2.6.1.1}.

        \item[\ruleref{RC-Op}]
        In this case, $ \ottnt{M} \ottsym{=}  \ottkw{op} _{ \ottnt{m} }  \, \ottmv{l} $, which must have an effect.
        However, it contradicts \hypref{0.2}.
        So, this case is vacuously true. \qedhere
    \end{match}
\end{prop}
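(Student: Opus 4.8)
The plan is to proceed by case analysis on the last rule of the configuration reduction derivation, i.e.\ on which of \ruleref{RC-Ne}, \ruleref{RC-Op}, \ruleref{RC-Cl1}, \ruleref{RC-Cl2}, \ruleref{RC-Sp} was applied. The first observation, which does real work, is that the effect-free typing hypothesis $\ottnt{C}  \vdash  \ottnt{M}  :  \ottnt{T}  \mid  \ottsym{0}$ immediately eliminates \ruleref{RC-Op}: there $\ottnt{M} =  \ottkw{op} _{ \ottnt{m} }  \, \ottmv{l}$, and since \ruleref{T-Op} gives $ \ottkw{op} _{ \ottnt{m} } $ a latent effect $\ottsym{1}$, any typing of the application $ \ottkw{op} _{ \ottnt{m} }  \, \ottmv{l}$ carries a nonzero effect (weakening can only raise it), contradicting $\ottsym{0}$. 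So the RC-Op case is vacuous and only the four resource-structural rules remain.

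For each remaining case I would invert the typing of $\ottnt{M}$ with \propref{typing/inv} (the application, split, or drop clause), iterating the inversion into the head constant and the argument location; since those clauses already fold \ruleref{T-Weaken} into a subcontext relation, using \propref{env/sub/runtime} and \propref{env/sub/ctx} to combine the resulting relations yields three pieces of data: a subcontext relation $\ottnt{C}  \lesssim  \ottnt{C_{{\mathrm{0}}}}$ for the precise run-time context $\ottnt{C_{{\mathrm{0}}}}$ the head constant expects ($ \cdot $ for \ruleref{RC-Ne}, $\ottmv{l}  \mathord:  \ottsym{[}  \ottnt{m'}  \ottsym{]}$ for \ruleref{RC-Sp}, \ruleref{RC-Cl1}, \ruleref{RC-Cl2}), the shape of $\ottnt{T}$, and the governing OPM side condition ($ \ottnt{m_{{\mathrm{1}}}} \odot \ottnt{m_{{\mathrm{2}}}}  \le \ottnt{m'}$ for split, $\varepsilon \le \ottnt{m'}$ for drop). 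I would then choose the witness $\ottnt{C'}$ dictated by the reduction: $\ottmv{l}  \mathord:  \ottsym{[}  \ottnt{m}  \ottsym{]}$ for \ruleref{RC-Ne}, $\ottmv{l}  \mathord:  \ottsym{[}  \ottnt{m_{{\mathrm{1}}}}  \ottsym{]}  \ottsym{,}  \ottmv{l}  \mathord:  \ottsym{[}  \ottnt{m_{{\mathrm{2}}}}  \ottsym{]}$ for \ruleref{RC-Sp}, and $ \cdot $ for both close rules. The obligation $\ottnt{C'}  \vdash  \ottnt{M'}  :  \ottnt{T}  \mid  \ottsym{0}$ then follows directly from the introduction rules: \ruleref{T-Loc} for new, \ruleref{T-Loc} composed with \ruleref{T-OPair} for split, and \ruleref{T-Unit} for the two close cases.

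The freshness obligation $  \ottsym{(}    \ottkw{dom} ( \ottnt{C'} )  \setminus  \ottkw{dom} ( \ottnt{C} )    \ottsym{)} \cap  \ottkw{dom} ( \mathcal{C} )   \ottsym{=} \emptyset $ is trivial in every case but \ruleref{RC-Ne}: for split the domain is the singleton $\ottsym{\{}  \ottmv{l}  \ottsym{\}}$ both before and after (since $\ottnt{C}  \lesssim  \ottmv{l}  \mathord:  \ottsym{[}  \ottnt{m'}  \ottsym{]}$), and for both close rules $\ottnt{C'} =  \cdot $. In the new case I would use the premise $ \ottmv{l}  \notin   \ottkw{dom} ( \mathcal{H} ) $ together with $  \ottkw{dom} ( \mathcal{C}  \ottsym{[}  \ottnt{C}  \ottsym{]} )  \ottsym{=}  \ottkw{dom} ( \mathcal{H} ) $ supplied by the heap typing to conclude $ \ottmv{l}  \notin   \ottkw{dom} ( \mathcal{C} ) $, so the single fresh location lies outside $\mathcal{C}$.

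The heart of each case is the remaining heap-typing obligation $\mathcal{C}  \ottsym{[}  \ottnt{C'}  \ottsym{]}  \vdash  \mathcal{H}'$, and the recurring manoeuvre here is the step I expect to carry most of the weight. The heap-tracking lemmas \propref{heap/track/new}, \propref{heap/track/split}, \propref{heap/track/cl1}, and \propref{heap/track/cl2} all demand the heap typing in the \emph{precise} form $\mathcal{C}  \ottsym{[}  \ottnt{C_{{\mathrm{0}}}}  \ottsym{]}  \vdash  \mathcal{H}$, whereas the hypothesis only provides $\mathcal{C}  \ottsym{[}  \ottnt{C}  \ottsym{]}  \vdash  \mathcal{H}$ with $\ottnt{C}  \lesssim  \ottnt{C_{{\mathrm{0}}}}$. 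I would bridge this by lifting the subcontext relation into the context pattern with \propref{env/sub/ctx}, giving $\mathcal{C}  \ottsym{[}  \ottnt{C}  \ottsym{]}  \lesssim  \mathcal{C}  \ottsym{[}  \ottnt{C_{{\mathrm{0}}}}  \ottsym{]}$, and then weakening the heap typing with \propref{heap/weaken}. With the hypothesis in canonical form, each branch closes by invoking its matching dedicated lemma, discharging exactly the side conditions already extracted: the freshness premise for \propref{heap/track/new}, $ \ottnt{m_{{\mathrm{1}}}} \odot \ottnt{m_{{\mathrm{2}}}}  \le \ottnt{m'}$ for \propref{heap/track/split}, and $\varepsilon \le \ottnt{m'}$ (plus $ \ottmv{l}  \notin   \ottkw{dom} ( \mathcal{H}_{{\mathrm{1}}} ) $ for the last reference) for the two close lemmas. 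The main obstacle is thus not a single deep argument but the disciplined bookkeeping of these weakening and subcontext adjustments so that the precise hypotheses of the four heap-tracking lemmas are met in every branch.
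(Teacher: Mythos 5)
Your proposal matches the paper's proof essentially step for step: the same case analysis on the configuration-reduction rule, the same dismissal of \ruleref{RC-Op} via the effect-free typing, the same witnesses for $\ottnt{C'}$ in each branch ($\ottmv{l}  \mathord:  \ottsym{[}  \ottnt{m}  \ottsym{]}$, $\ottmv{l}  \mathord:  \ottsym{[}  \ottnt{m_{{\mathrm{1}}}}  \ottsym{]}  \ottsym{,}  \ottmv{l}  \mathord:  \ottsym{[}  \ottnt{m_{{\mathrm{2}}}}  \ottsym{]}$, and $ \cdot $), and the same central manoeuvre of lifting $\ottnt{C}  \lesssim  \ottnt{C_{{\mathrm{0}}}}$ through the pattern via \propref{env/sub/ctx}, weakening the heap typing with \propref{heap/weaken}, and closing with the matching \propref{heap/track/new}, \propref{heap/track/split}, \propref{heap/track/cl1}, or \propref{heap/track/cl2}. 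The freshness arguments (via $  \ottkw{dom} ( \mathcal{C}  \ottsym{[}  \ottnt{C}  \ottsym{]} )  \ottsym{=}  \ottkw{dom} ( \mathcal{H} )  $ for \ruleref{RC-Ne} and $  \ottkw{dom} ( \ottnt{C} )  \ottsym{=} \ottsym{\{}  \ottmv{l}  \ottsym{\}} $ for \ruleref{RC-Sp}) also coincide with the paper's, so your proof is correct as stated.
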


\begin{prop}{subj/cmp/effect}
    If
    \begin{gather}
        \ottnt{M}  \mid  \mathcal{H}  \longrightarrow_\gamma  \ottnt{M'}  \mid  \mathcal{H}' \hyp{0.1}\\
        \ottnt{C}  \vdash  \ottnt{M}  :  \ottnt{T}  \mid  \ottnt{e} \hyp{0.2}\\
        \ottnt{C}  \ottsym{,}  \ottnt{C''}  \vdash  \mathcal{H} \hyp{0.3},
    \end{gather}
    then there exists $\ottnt{C'}$ such that $  \ottsym{(}    \ottkw{dom} ( \ottnt{C'} )  \setminus  \ottkw{dom} ( \ottnt{C} )    \ottsym{)} \cap  \ottkw{dom} ( \ottnt{C''} )   \ottsym{=} \emptyset $, $\ottnt{C'}  \vdash  \ottnt{M'}  :  \ottnt{T}  \mid  \ottnt{e}$, and $\ottnt{C'}  \ottsym{,}  \ottnt{C''}  \vdash  \mathcal{H}'$.

    \proof By case analysis on the last rule of the given derivation of \hypref{0.1}.
    \begin{match}
        \item[\ruleref{RC-Op}]
        In this case,
        \begin{gather}
             \ottnt{M} \ottsym{=}  \ottkw{op} _{ \ottnt{m_{{\mathrm{1}}}} }  \, \ottmv{l}  \hyp{1.1}\\
             \ottnt{M'} \ottsym{=} \ottmv{l}  \hyp{1.2}\\
             \mathcal{H} \ottsym{=}  \mathcal{H}_{{\mathrm{1}}} \cup \ottsym{\{} \ottmv{l}  \mapsto  \ottsym{(}  \ottnt{n}  \ottsym{,}  \ottnt{m_{{\mathrm{0}}}}  \ottsym{,}  \ottnt{m}  \ottsym{)} \ottsym{\}}   \hyp{1.3}\\
             \mathcal{H}' \ottsym{=}  \mathcal{H}_{{\mathrm{1}}} \cup \ottsym{\{} \ottmv{l}  \mapsto  \ottsym{(}  \ottnt{n}  \ottsym{,}  \ottnt{m_{{\mathrm{0}}}}  \ottsym{,}   \ottnt{m} \odot \ottnt{m_{{\mathrm{1}}}}   \ottsym{)} \ottsym{\}}   \hyp{1.4}.
        \end{gather}
        So, we have
        \begin{gather}
            \ottnt{C}  \lesssim  \ottmv{l}  \mathord:  \ottsym{[}  \ottnt{m'}  \ottsym{]} \hyp{1.5}\\
             \ottnt{T} \ottsym{=} \ottsym{[}  \ottnt{m_{{\mathrm{2}}}}  \ottsym{]}  \hyp{1.6}\\
              \ottnt{m_{{\mathrm{1}}}} \odot \ottnt{m_{{\mathrm{2}}}}  \le \ottnt{m'}  \hyp{1.6.1}\\
             \ottsym{1} \le \ottnt{e}  \hyp{1.8}
        \end{gather}
        by \propref{typing/inv}, \propref{env/sub/runtime}, \propref{env/sub/ctx}, and \hypref{0.2}.
        To this end, we choose $ \ottnt{C'} \ottsym{=} \ottmv{l}  \mathord:  \ottsym{[}  \ottnt{m_{{\mathrm{2}}}}  \ottsym{]} $.
        The subgoals we need to show are
        \begin{gather*}
              \ottsym{(}   \ottsym{\{}  \ottmv{l}  \ottsym{\}} \setminus  \ottkw{dom} ( \ottnt{C} )    \ottsym{)} \cap  \ottkw{dom} ( \ottnt{C''} )   \ottsym{=} \emptyset  \\
            \ottmv{l}  \mathord:  \ottsym{[}  \ottnt{m_{{\mathrm{2}}}}  \ottsym{]}  \vdash  \ottmv{l}  :  \ottsym{[}  \ottnt{m_{{\mathrm{2}}}}  \ottsym{]}  \mid  \ottnt{e} \\
            \ottmv{l}  \mathord:  \ottsym{[}  \ottnt{m_{{\mathrm{2}}}}  \ottsym{]}  \ottsym{,}  \ottnt{C''}  \vdash   \mathcal{H}_{{\mathrm{1}}} \cup \ottsym{\{} \ottmv{l}  \mapsto  \ottsym{(}  \ottnt{n}  \ottsym{,}  \ottnt{m_{{\mathrm{0}}}}  \ottsym{,}   \ottnt{m} \odot \ottnt{m_{{\mathrm{1}}}}   \ottsym{)} \ottsym{\}} .
        \end{gather*}
        Having $  \ottkw{dom} ( \ottnt{C} )  \ottsym{=} \ottsym{\{}  \ottmv{l}  \ottsym{\}} $ by \propref{env/sub/runtime/dom} and \hypref{1.5}, we can easily see the first subgoal.
        We can derive the second subgoal by \ruleref{T-Loc}, \ruleref{T-Weaken}, and \hypref{1.8}.
        As for the third subgoal, we have
        \begin{gather*}
            \ottnt{C}  \ottsym{,}  \ottnt{C''}  \lesssim  \ottmv{l}  \mathord:  \ottsym{[}  \ottnt{m'}  \ottsym{]}  \ottsym{,}  \ottnt{C''}
        \end{gather*}
        by \propref{env/sub/ctx} and \hypref{1.5}.
        So,
        \begin{gather*}
            \ottmv{l}  \mathord:  \ottsym{[}  \ottnt{m'}  \ottsym{]}  \ottsym{,}  \ottnt{C''}  \vdash   \mathcal{H}_{{\mathrm{1}}} \cup \ottsym{\{} \ottmv{l}  \mapsto  \ottsym{(}  \ottnt{n}  \ottsym{,}  \ottnt{m_{{\mathrm{0}}}}  \ottsym{,}  \ottnt{m}  \ottsym{)} \ottsym{\}} 
        \end{gather*}
        by \propref{heap/weaken}, \hypref{0.3}, and \hypref{1.3}.
        Then, the subgoal follows by \propref{heap/track/op} and \hypref{1.6.1}.

        \item[\ruleref{RC-Ne}, \ruleref{RC-Cl1}, \ruleref{RC-Cl2}, and \ruleref{RC-Sp}]
        In these cases, we can derive $\ottnt{C}  \vdash  \ottnt{M}  :  \ottnt{T}  \mid  \ottsym{0}$.
        So, the goal follows by \propref{subj/cmp/noeffect} given $ \mathcal{C} \ottsym{=} \ottsym{[]}  \ottsym{,}  \ottnt{C''} $.
    \end{match}
\end{prop}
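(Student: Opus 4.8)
The plan is to proceed by case analysis on the rule concluding the configuration reduction $M \mid \mathcal{H} \longrightarrow_\gamma M' \mid \mathcal{H}'$. This relation has five rules, and the crucial observation is that only \textsc{RC-Op} produces a redex whose typing carries a nontrivial effect; the other four (\textsc{RC-Ne}, \textsc{RC-Cl1}, \textsc{RC-Cl2}, \textsc{RC-Sp}) reduce effect-free redexes. This dichotomy is what motivates the two-lemma structure announced in the main text: I will discharge four of the five cases by appealing to the already-established effect-free preservation lemma, leaving only \textsc{RC-Op} for a direct argument.

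For the four effect-free rules, the redex $M$ is one of $\mathrm{new}_m\,\mathtt{unit}$, $\mathrm{drop}\,l$, or $\mathrm{split}_{m_1,m_2}\,l$, none of which mentions the constant $\mathrm{op}$; consequently its typing can be re-derived at effect $0$, giving $C \vdash M : T \mid 0$. I then instantiate the context pattern of the effect-free lemma with $\mathcal{C} = [], C''$, so that $\mathcal{C}[C] = C, C''$ and $\mathcal{C}[C'] = C', C''$, and apply that lemma to $C \vdash M : T \mid 0$ together with $C, C'' \vdash \mathcal{H}$. It returns a witness $C'$ with $C' \vdash M' : T \mid 0$, $C', C'' \vdash \mathcal{H}'$, and $(\mathrm{dom}(C') \setminus \mathrm{dom}(C)) \cap \mathrm{dom}(C'') = \emptyset$, the last because $\mathrm{dom}(\mathcal{C}[\cdot]) = \mathrm{dom}(C'')$. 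A final \textsc{T-Weaken} step raises the effect from $0$ to $e$, which is sound since $0 \le e$.

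The remaining case is \textsc{RC-Op}, where $M = \mathrm{op}_{m_1}\,l$, $M' = l$, $\mathcal{H} = \mathcal{H}_1 \cup \{l \mapsto (n, m_0, m)\}$, and $\mathcal{H}' = \mathcal{H}_1 \cup \{l \mapsto (n, m_0, m \odot m_1)\}$. Inverting the typing of $\mathrm{op}_{m_1}\,l$ and absorbing the intervening \textsc{T-Weaken} steps into the subcontext relation yields $C \lesssim l : [m']$, $T = [m_2]$, the side condition $m_1 \odot m_2 \le m'$, and $1 \le e$. I take $C' = l : [m_2]$. The domain condition is immediate since $\mathrm{dom}(C) = \{l\} = \mathrm{dom}(C')$, so $\mathrm{dom}(C') \setminus \mathrm{dom}(C) = \emptyset$. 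For the typing of $M'$, I derive $l : [m_2] \vdash l : [m_2] \mid 0$ by \textsc{T-Loc} and raise the effect to $e$ with \textsc{T-Weaken} using $1 \le e$. For the heap, I first weaken $C, C'' \vdash \mathcal{H}$ along $C \lesssim l : [m']$ to obtain $l : [m'], C'' \vdash \mathcal{H}_1 \cup \{l \mapsto (n, m_0, m)\}$, and then apply the single-step heap-tracking lemma for operations, whose premise is exactly $m_1 \odot m_2 \le m'$, to conclude $l : [m_2], C'' \vdash \mathcal{H}_1 \cup \{l \mapsto (n, m_0, m \odot m_1)\}$.

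I expect \textsc{RC-Op} to be the only case needing a fresh argument, and within it the delicate point is the heap-typing reconciliation: the recorded trace advances from $m$ to $m \odot m_1$ while the static index retreats from $m'$ to $m_2$, and one must verify that the heap invariant $m \odot m_1 \odot \overline{\langle l : [m_2], C''\rangle_l} \le m_0$ still holds. This is precisely what the heap-tracking lemma encapsulates, relying on associativity and the downward-closed monotonicity of OPM multiplication together with $m_1 \odot m_2 \le m'$. Everything else is bookkeeping that reduces cleanly to the effect-free lemma, so the genuine work is concentrated in aligning the context weakening with the heap-tracking step.
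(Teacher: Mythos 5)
Your proposal is correct and follows essentially the same route as the paper's own proof: the four effect-free rules are discharged by the effect-free preservation lemma instantiated with the context pattern $\mathcal{C} = [], C''$, and \textsc{RC-Op} is handled directly with the witness $C' = l : [m_2]$, the typing subgoal via \textsc{T-Loc} plus \textsc{T-Weaken} using $1 \le e$, and the heap subgoal via context weakening along $C \lesssim l : [m']$ followed by the heap-tracking lemma for operations under the side condition $m_1 \odot m_2 \le m'$. The only difference is that you make explicit the final \textsc{T-Weaken} lifting effect $0$ to $e$ in the effect-free cases, which the paper leaves implicit.
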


\begin{prop}{subj/cmp/noeffect/ctx}
    If
    \begin{gather}
        \ottnt{M}  \mid  \mathcal{H}  \longrightarrow_\gamma  \ottnt{M'}  \mid  \mathcal{H}' \hyp{0.1}\\
        \ottnt{C}  \vdash  \mathcal{E}  \ottsym{[}  \ottnt{M}  \ottsym{]}  :  \ottnt{T}  \mid  \ottsym{0} \hyp{0.2}\\
        \mathcal{C}  \ottsym{[}  \ottnt{C}  \ottsym{]}  \vdash  \mathcal{H} \hyp{0.3},
    \end{gather}
    then there exists $\ottnt{C'}$ such that $  \ottsym{(}    \ottkw{dom} ( \ottnt{C'} )  \setminus  \ottkw{dom} ( \ottnt{C} )    \ottsym{)} \cap  \ottkw{dom} ( \mathcal{C} )   \ottsym{=} \emptyset $, $\ottnt{C'}  \vdash  \mathcal{E}  \ottsym{[}  \ottnt{M'}  \ottsym{]}  :  \ottnt{T}  \mid  \ottsym{0}$, and $\mathcal{C}  \ottsym{[}  \ottnt{C'}  \ottsym{]}  \vdash  \mathcal{H}'$.

    \proof
    By structural induction on $\mathcal{E}$.
    \begin{match}
        \item[$ \mathcal{E} \ottsym{=} \ottsym{[]} $]
        This case directly follows by \propref{subj/cmp/noeffect}.

        \item[$ \mathcal{E} \ottsym{=} \mathcal{E}_{{\mathrm{1}}} \, \ottnt{M_{{\mathrm{2}}}} $]
        In this case, we have
        \begin{gather}
            \ottnt{C}  \lesssim  \ottnt{C_{{\mathrm{1}}}}  \ottsym{,}  \ottnt{C_{{\mathrm{2}}}} \hyp{2.1}\\
            \ottnt{C_{{\mathrm{1}}}}  \vdash  \mathcal{E}_{{\mathrm{1}}}  \ottsym{[}  \ottnt{M}  \ottsym{]}  :   \ottnt{S} \rightarrow _{ \ottnt{e_{{\mathrm{0}}}} } \ottnt{T}   \mid  \ottnt{e_{{\mathrm{1}}}} \hyp{2.2}\\
            \ottnt{C_{{\mathrm{2}}}}  \vdash  \ottnt{M_{{\mathrm{2}}}}  :  \ottnt{S}  \mid  \ottnt{e_{{\mathrm{2}}}} \hyp{2.3}\\
               \ottnt{e_{{\mathrm{0}}}}  \sqcup  \ottnt{e_{{\mathrm{1}}}}   \sqcup  \ottnt{e_{{\mathrm{2}}}}  \le \ottsym{0}  \hyp{2.4}
        \end{gather}
        by \propref{typing/inv}, \propref{env/sub/runtime}, and \hypref{0.2}.
        We have $\mathcal{C}  \ottsym{[}  \ottnt{C}  \ottsym{]}  \lesssim  \mathcal{C}  \ottsym{[}  \ottnt{C_{{\mathrm{1}}}}  \ottsym{,}  \ottnt{C_{{\mathrm{2}}}}  \ottsym{]}$ by \propref{env/sub/ctx} and \hypref{2.1}.
        So,
        \begin{gather}
            \mathcal{C}  \ottsym{[}  \ottnt{C_{{\mathrm{1}}}}  \ottsym{,}  \ottnt{C_{{\mathrm{2}}}}  \ottsym{]}  \vdash  \mathcal{H} \hyp{2.4.1}
        \end{gather}
        by \propref{heap/weaken} and \hypref{0.3}.
        Here, we notice $ \mathcal{C}  \ottsym{[}  \ottnt{C_{{\mathrm{1}}}}  \ottsym{,}  \ottnt{C_{{\mathrm{2}}}}  \ottsym{]} \ottsym{=} \ottsym{(}  \mathcal{C}  \ottsym{[}  \ottsym{[]}  \ottsym{,}  \ottnt{C_{{\mathrm{2}}}}  \ottsym{]}  \ottsym{)}  \ottsym{[}  \ottnt{C_{{\mathrm{1}}}}  \ottsym{]} $.
        Therefore, we can have some $\ottnt{C'_{{\mathrm{1}}}}$ such that
        \begin{gather}
              \ottsym{(}    \ottkw{dom} ( \ottnt{C'_{{\mathrm{1}}}} )  \setminus  \ottkw{dom} ( \ottnt{C_{{\mathrm{1}}}} )    \ottsym{)} \cap  \ottkw{dom} ( \mathcal{C}  \ottsym{[}   \cdot   \ottsym{,}  \ottnt{C_{{\mathrm{2}}}}  \ottsym{]} )   \ottsym{=} \emptyset  \hyp{2.4.2}\\
            \ottnt{C'_{{\mathrm{1}}}}  \vdash  \mathcal{E}_{{\mathrm{1}}}  \ottsym{[}  \ottnt{M'}  \ottsym{]}  :   \ottnt{S} \rightarrow _{ \ottnt{e_{{\mathrm{0}}}} } \ottnt{T}   \mid  \ottnt{e_{{\mathrm{1}}}} \hyp{2.5}\\
            \mathcal{C}  \ottsym{[}  \ottnt{C'_{{\mathrm{1}}}}  \ottsym{,}  \ottnt{C_{{\mathrm{2}}}}  \ottsym{]}  \vdash  \mathcal{H}' \hyp{2.6}
        \end{gather}
        by applying the induction hypothesis to \hypref{2.2} and \hypref{2.4.1}.
        To this end, we choose $ \ottnt{C'} \ottsym{=} \ottnt{C'_{{\mathrm{1}}}}  \ottsym{,}  \ottnt{C_{{\mathrm{2}}}} $.
        The subgoals we need to show are
        \begin{gather*}
              \ottsym{(}    \ottkw{dom} ( \ottnt{C'_{{\mathrm{1}}}}  \ottsym{,}  \ottnt{C_{{\mathrm{2}}}} )  \setminus  \ottkw{dom} ( \ottnt{C} )    \ottsym{)} \cap  \ottkw{dom} ( \mathcal{C}  \ottsym{[}   \cdot   \ottsym{]} )   \ottsym{=} \emptyset  \\
            \ottnt{C'_{{\mathrm{1}}}}  \ottsym{,}  \ottnt{C_{{\mathrm{2}}}}  \vdash  \mathcal{E}_{{\mathrm{1}}}  \ottsym{[}  \ottnt{M'}  \ottsym{]} \, \ottnt{M_{{\mathrm{2}}}}  :  \ottnt{T}  \mid  \ottsym{0}\\
            \mathcal{C}  \ottsym{[}  \ottnt{C'_{{\mathrm{1}}}}  \ottsym{,}  \ottnt{C_{{\mathrm{2}}}}  \ottsym{]}  \vdash  \mathcal{H}'.
        \end{gather*}
        Having $  \ottkw{dom} ( \ottnt{C} )  \ottsym{=}  \ottkw{dom} ( \ottnt{C_{{\mathrm{1}}}}  \ottsym{,}  \ottnt{C_{{\mathrm{2}}}} )  $ by \propref{env/sub/runtime} and \hypref{2.1}, we can see the first subgoal follows by \hypref{2.4.2}.
        We can derive the second subgoal by \ruleref{T-App}, \ruleref{T-Weaken}, \hypref{2.3}, \hypref{2.4}, and \hypref{2.5}.
        The third subgoal is identical to \hypref{2.6}.

        \item[$ \mathcal{E} \ottsym{=} \ottkw{let} \, \ottmv{x}  \otimes  \ottmv{y}  \ottsym{=}  \mathcal{E}_{{\mathrm{1}}} \, \ottkw{in} \, \ottnt{M_{{\mathrm{2}}}} $]
        In this case, we have
        \begin{gather}
            \ottnt{C}  \lesssim  \mathcal{C}_{{\mathrm{1}}}  \ottsym{[}  \ottnt{C_{{\mathrm{1}}}}  \ottsym{]} \hyp{3.1}\\
             \ottnt{e} \le \ottsym{0}  \hyp{3.1.1}\\
            \ottnt{C_{{\mathrm{1}}}}  \vdash  \mathcal{E}_{{\mathrm{1}}}  \ottsym{[}  \ottnt{M}  \ottsym{]}  :  \ottnt{S_{{\mathrm{1}}}}  \otimes  \ottnt{S_{{\mathrm{2}}}}  \mid  \ottsym{0} \hyp{3.2}\\
            \mathcal{C}_{{\mathrm{1}}}  \ottsym{[}  \ottmv{x}  \mathord:  \ottnt{S_{{\mathrm{1}}}}  \parallel  \ottmv{y}  \mathord:  \ottnt{S_{{\mathrm{2}}}}  \ottsym{]}  \vdash  \ottnt{M_{{\mathrm{2}}}}  :  \ottnt{T}  \mid  \ottnt{e} \hyp{3.3}\\
            \ottsym{\{}  \ottmv{x}  \ottsym{\}}  \uplus  \ottsym{\{}  \ottmv{y}  \ottsym{\}}  \uplus   \ottkw{dom} ( \mathcal{C}  \ottsym{[}  \ottnt{C_{{\mathrm{1}}}}  \ottsym{]} )  \hyp{3.4}
        \end{gather}
        by \propref{typing/inv}, \propref{env/sub/runtime}, and \hypref{0.2}.
        We have $\mathcal{C}  \ottsym{[}  \ottnt{C}  \ottsym{]}  \lesssim  \mathcal{C}  \ottsym{[}  \mathcal{C}_{{\mathrm{1}}}  \ottsym{[}  \ottnt{C_{{\mathrm{1}}}}  \ottsym{]}  \ottsym{]}$ by \propref{env/sub/ctx} and \hypref{3.1}.
        So,
        \begin{gather*}
            \mathcal{C}  \ottsym{[}  \mathcal{C}_{{\mathrm{1}}}  \ottsym{[}  \ottnt{C_{{\mathrm{1}}}}  \ottsym{]}  \ottsym{]}  \vdash  \mathcal{H} \hyp{3.5}
        \end{gather*}
        by \propref{heap/weaken}.
        Here we notice that $ \mathcal{C}  \ottsym{[}  \mathcal{C}_{{\mathrm{1}}}  \ottsym{[}  \ottnt{C_{{\mathrm{1}}}}  \ottsym{]}  \ottsym{]} \ottsym{=} \ottsym{(}  \mathcal{C}  \ottsym{[}  \mathcal{C}_{{\mathrm{1}}}  \ottsym{]}  \ottsym{)}  \ottsym{[}  \ottnt{C_{{\mathrm{1}}}}  \ottsym{]} $.
        Therefore, we get some $\ottnt{C'_{{\mathrm{1}}}}$ such that
        \begin{gather}
              \ottsym{(}    \ottkw{dom} ( \ottnt{C'_{{\mathrm{1}}}} )  \setminus  \ottkw{dom} ( \ottnt{C_{{\mathrm{1}}}} )    \ottsym{)} \cap  \ottkw{dom} ( \mathcal{C}  \ottsym{[}  \mathcal{C}_{{\mathrm{1}}}  \ottsym{[}   \cdot   \ottsym{]}  \ottsym{]} )   \ottsym{=} \emptyset  \hyp{3.5.1}\\
            \ottnt{C'_{{\mathrm{1}}}}  \vdash  \mathcal{E}_{{\mathrm{1}}}  \ottsym{[}  \ottnt{M'}  \ottsym{]}  :  \ottnt{S_{{\mathrm{1}}}}  \otimes  \ottnt{S_{{\mathrm{2}}}}  \mid  \ottsym{0} \hyp{3.6}\\
            \mathcal{C}  \ottsym{[}  \mathcal{C}_{{\mathrm{1}}}  \ottsym{[}  \ottnt{C'_{{\mathrm{1}}}}  \ottsym{]}  \ottsym{]}  \vdash  \mathcal{H}' \hyp{3.7}
        \end{gather}
        by applying the induction hypothesis to \hypref{3.2} and \hypref{3.5}.
        To this end, we choose $ \ottnt{C'} \ottsym{=} \mathcal{C}_{{\mathrm{1}}}  \ottsym{[}  \ottnt{C'_{{\mathrm{1}}}}  \ottsym{]} $.
        The subgoals we need to show are
        \begin{gather*}
              \ottsym{(}    \ottkw{dom} ( \mathcal{C}_{{\mathrm{1}}}  \ottsym{[}  \ottnt{C'_{{\mathrm{1}}}}  \ottsym{]} )  \setminus  \ottkw{dom} ( \ottnt{C} )    \ottsym{)} \cap  \ottkw{dom} ( \mathcal{C}  \ottsym{[}   \cdot   \ottsym{]} )   \ottsym{=} \emptyset  \\
            \mathcal{C}_{{\mathrm{1}}}  \ottsym{[}  \ottnt{C'_{{\mathrm{1}}}}  \ottsym{]}  \vdash  \ottkw{let} \, \ottmv{x}  \otimes  \ottmv{y}  \ottsym{=}  \mathcal{E}_{{\mathrm{1}}}  \ottsym{[}  \ottnt{M'}  \ottsym{]} \, \ottkw{in} \, \ottnt{M_{{\mathrm{2}}}}  :  \ottnt{T}  \mid  \ottsym{0} \\
            \mathcal{C}  \ottsym{[}  \mathcal{C}_{{\mathrm{1}}}  \ottsym{[}  \ottnt{C'_{{\mathrm{1}}}}  \ottsym{]}  \ottsym{]}  \vdash  \mathcal{H}'.
        \end{gather*}
        Having $  \ottkw{dom} ( \ottnt{C} )  \ottsym{=}  \ottkw{dom} ( \mathcal{C}_{{\mathrm{1}}}  \ottsym{[}  \ottnt{C_{{\mathrm{1}}}}  \ottsym{]} )  $ by \propref{env/sub/runtime} and \hypref{3.1}, we can see the first subgoal follows by \hypref{3.5.1}.
        We can derive the second subgoal by \ruleref{T-ULet}, \ruleref{T-Weaken}, \hypref{3.1.1}, \hypref{3.3}, \hypref{3.4}, and \hypref{3.6}.
        The third subgoal is identical to \hypref{3.7}.

        \item[Other cases]
        We can show other cases in a similar way to the cases above.
    \end{match}
\end{prop}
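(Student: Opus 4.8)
The plan is to proceed by structural induction on the evaluation context $\mathcal{E}$, reducing the contextual case to the already-established preservation lemma for effect-free configuration reduction, \propref{subj/cmp/noeffect}. The base case $\mathcal{E} = \ottsym{[]}$ is immediate: here $\mathcal{E}  \ottsym{[}  \ottnt{M}  \ottsym{]} = \ottnt{M}$ and $\mathcal{E}  \ottsym{[}  \ottnt{M'}  \ottsym{]} = \ottnt{M'}$, so \propref{subj/cmp/noeffect} applied to the three hypotheses \hypref{0.1}, \hypref{0.2}, and \hypref{0.3} delivers the required $\ottnt{C'}$ together with all three conclusions verbatim.

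For the inductive cases, one per evaluation-context former, I would first invert the typing of $\mathcal{E}  \ottsym{[}  \ottnt{M}  \ottsym{]}$ with \propref{typing/inv} to expose the subcontext (or context pattern) under which the active subterm $\mathcal{E}_{{\mathrm{1}}}  \ottsym{[}  \ottnt{M}  \ottsym{]}$ is typed. Because the overall effect is $\ottsym{0}$ and effects compose by $\sqcup$, the side conditions of the form $ \ottnt{e_{{\mathrm{1}}}}  \sqcup  \ottnt{e_{{\mathrm{2}}}}  \le \ottsym{0} $ force every premise effect to $\ottsym{0}$, so the subterm carrying the redex is typed with effect $\ottsym{0}$, exactly the shape the induction hypothesis expects. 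The key bookkeeping move is to re-associate the ambient context pattern so that the IH applies: for a frame such as $\mathcal{E}_{{\mathrm{1}}} \, \ottnt{M_{{\mathrm{2}}}}$ typed under $\ottnt{C_{{\mathrm{1}}}}  \ottsym{,}  \ottnt{C_{{\mathrm{2}}}}$, I rewrite $\mathcal{C}  \ottsym{[}  \ottnt{C_{{\mathrm{1}}}}  \ottsym{,}  \ottnt{C_{{\mathrm{2}}}}  \ottsym{]}$ as $\ottsym{(}  \mathcal{C}  \ottsym{[}  \ottsym{[]}  \ottsym{,}  \ottnt{C_{{\mathrm{2}}}}  \ottsym{]}  \ottsym{)}  \ottsym{[}  \ottnt{C_{{\mathrm{1}}}}  \ottsym{]}$ and feed the enlarged pattern $\mathcal{C}  \ottsym{[}  \ottsym{[]}  \ottsym{,}  \ottnt{C_{{\mathrm{2}}}}  \ottsym{]}$ to the IH; for a let-elimination frame whose inversion yields a context pattern $\mathcal{C}_{{\mathrm{1}}}$, I use $\mathcal{C}  \ottsym{[}  \mathcal{C}_{{\mathrm{1}}}  \ottsym{[}  \ottnt{C_{{\mathrm{1}}}}  \ottsym{]}  \ottsym{]} = \ottsym{(}  \mathcal{C}  \ottsym{[}  \mathcal{C}_{{\mathrm{1}}}  \ottsym{]}  \ottsym{)}  \ottsym{[}  \ottnt{C_{{\mathrm{1}}}}  \ottsym{]}$ analogously. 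Before invoking the IH I push the heap typing along the subcontext relation using \propref{env/sub/ctx} and \propref{heap/weaken}, turning \hypref{0.3} into a typing of $\mathcal{H}$ against the re-associated pattern.

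Having obtained a fresh $\ottnt{C'_{{\mathrm{1}}}}$ from the IH, I would reassemble $\ottnt{C'}$ by placing $\ottnt{C'_{{\mathrm{1}}}}$ back into the frame (e.g.\ $\ottnt{C'} = \ottnt{C'_{{\mathrm{1}}}}  \ottsym{,}  \ottnt{C_{{\mathrm{2}}}}$ or $\ottnt{C'} = \mathcal{C}_{{\mathrm{1}}}  \ottsym{[}  \ottnt{C'_{{\mathrm{1}}}}  \ottsym{]}$), re-derive $\ottnt{C'}  \vdash  \mathcal{E}  \ottsym{[}  \ottnt{M'}  \ottsym{]}  :  \ottnt{T}  \mid  \ottsym{0}$ by re-applying the governing typing rule (\ruleref{T-App}, \ruleref{T-ULet}, and so on) to the IH's typing of $\mathcal{E}_{{\mathrm{1}}}  \ottsym{[}  \ottnt{M'}  \ottsym{]}$ together with the untouched typings of the sibling subterms, and close any residual $\lesssim$/effect slack with \ruleref{T-Weaken}. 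The heap-typing conclusion is returned directly by the IH. The domain-disjointness side condition follows because \propref{env/sub/runtime} forces $ \ottkw{dom} ( \ottnt{C} )  =  \ottkw{dom} ( \ottnt{C_{{\mathrm{1}}}}  \ottsym{,}  \ottnt{C_{{\mathrm{2}}}} ) $ (resp.\ $ \ottkw{dom} ( \mathcal{C}_{{\mathrm{1}}}  \ottsym{[}  \ottnt{C_{{\mathrm{1}}}}  \ottsym{]} ) $), so the IH's disjointness statement against the larger pattern specializes to $ \ottsym{(}    \ottkw{dom} ( \ottnt{C'} )  \setminus  \ottkw{dom} ( \ottnt{C} )    \ottsym{)} \cap  \ottkw{dom} ( \mathcal{C} )   \ottsym{=} \emptyset $.

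The main obstacle I anticipate is precisely this context-pattern algebra in the product and let frames: recognizing the nesting identity $\mathcal{C}  \ottsym{[}  \mathcal{C}_{{\mathrm{1}}}  \ottsym{[}  \ottnt{C_{{\mathrm{1}}}}  \ottsym{]}  \ottsym{]} = \ottsym{(}  \mathcal{C}  \ottsym{[}  \mathcal{C}_{{\mathrm{1}}}  \ottsym{]}  \ottsym{)}  \ottsym{[}  \ottnt{C_{{\mathrm{1}}}}  \ottsym{]}$ and tracking how a freshly allocated location in $\ottnt{C'_{{\mathrm{1}}}}$ (arising from the $ \ottkw{new} _{ \ottnt{m} } $ and $ \ottkw{split} _{ \ottnt{m_{{\mathrm{1}}}} , \ottnt{m_{{\mathrm{2}}}} } $ reductions inside \propref{subj/cmp/noeffect}) threads through the nested pattern while preserving the disjointness invariant. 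The remaining frames ($\ottnt{V} \, \mathcal{E}_{{\mathrm{1}}}$, the ordered and unordered pair contexts, the ordered-let context, and the left-application contexts $\mathcal{E}  {}^<  \ottnt{V}$ and $\ottnt{M}  {}^<  \mathcal{E}$) follow the same recipe, differing only in the choice of typing rule and the left/right placement of the active subcontext.
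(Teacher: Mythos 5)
Your proposal matches the paper's own proof essentially step for step: the same structural induction on $\mathcal{E}$, the base case discharged by \propref{subj/cmp/noeffect}, the same re-association identities $\mathcal{C}  \ottsym{[}  \ottnt{C_{{\mathrm{1}}}}  \ottsym{,}  \ottnt{C_{{\mathrm{2}}}}  \ottsym{]} \ottsym{=} \ottsym{(}  \mathcal{C}  \ottsym{[}  \ottsym{[]}  \ottsym{,}  \ottnt{C_{{\mathrm{2}}}}  \ottsym{]}  \ottsym{)}  \ottsym{[}  \ottnt{C_{{\mathrm{1}}}}  \ottsym{]}$ and $\mathcal{C}  \ottsym{[}  \mathcal{C}_{{\mathrm{1}}}  \ottsym{[}  \ottnt{C_{{\mathrm{1}}}}  \ottsym{]}  \ottsym{]} \ottsym{=} \ottsym{(}  \mathcal{C}  \ottsym{[}  \mathcal{C}_{{\mathrm{1}}}  \ottsym{]}  \ottsym{)}  \ottsym{[}  \ottnt{C_{{\mathrm{1}}}}  \ottsym{]}$ feeding the enlarged pattern to the induction hypothesis, the same weakening of the heap typing via \propref{env/sub/ctx} and \propref{heap/weaken}, and the same reassembly of $\ottnt{C'}$ with \ruleref{T-App}/\ruleref{T-ULet} plus \ruleref{T-Weaken} and the domain argument via \propref{env/sub/runtime}. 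It is correct as an outline and requires no further changes in approach.
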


\begin{prop}{subj/cmp/effect/ctx}
    If
    \begin{gather}
        \ottnt{M}  \mid  \mathcal{H}  \longrightarrow_\gamma  \ottnt{M'}  \mid  \mathcal{H}' \hyp{0.1}\\
        \ottnt{C}  \vdash  \mathcal{E}  \ottsym{[}  \ottnt{M}  \ottsym{]}  :  \ottnt{T}  \mid  \ottnt{e} \hyp{0.2}\\
        \ottnt{C}  \ottsym{,}  \ottnt{C''}  \vdash  \mathcal{H} \hyp{0.3},
    \end{gather}
    then there exists $\ottnt{C'}$ such that $  \ottsym{(}    \ottkw{dom} ( \ottnt{C'} )  \setminus  \ottkw{dom} ( \ottnt{C} )    \ottsym{)} \cap  \ottkw{dom} ( \ottnt{C''} )   \ottsym{=} \emptyset $, $\ottnt{C'}  \vdash  \mathcal{E}  \ottsym{[}  \ottnt{M'}  \ottsym{]}  :  \ottnt{T}  \mid  \ottnt{e}$, and $\ottnt{C'}  \ottsym{,}  \ottnt{C''}  \vdash  \mathcal{H}'$.

    \proof
    By structural induction on $\mathcal{E}$.
    \begin{match}
        \item[$ \mathcal{E} \ottsym{=} \ottsym{[]} $]
        This case directly follows by \propref{subj/cmp/effect}.

        \item[$ \mathcal{E} \ottsym{=} \mathcal{E}_{{\mathrm{1}}} \, \ottnt{M_{{\mathrm{2}}}} $]
        In this case, we have
        \begin{gather}
            \ottnt{C}  \lesssim  \ottnt{C_{{\mathrm{1}}}}  \ottsym{,}  \ottnt{C_{{\mathrm{2}}}} \hyp{1.1}\\
            \ottnt{C_{{\mathrm{1}}}}  \vdash  \mathcal{E}_{{\mathrm{1}}}  \ottsym{[}  \ottnt{M}  \ottsym{]}  :   \ottnt{S} \rightarrow _{ \ottnt{e_{{\mathrm{0}}}} } \ottnt{T}   \mid  \ottnt{e_{{\mathrm{1}}}} \hyp{1.2}\\
            \ottnt{C_{{\mathrm{2}}}}  \vdash  \ottnt{M_{{\mathrm{2}}}}  :  \ottnt{S}  \mid  \ottnt{e_{{\mathrm{2}}}} \hyp{1.3}\\
               \ottnt{e_{{\mathrm{0}}}}  \sqcup  \ottnt{e_{{\mathrm{1}}}}   \sqcup  \ottnt{e_{{\mathrm{2}}}}  \le \ottnt{e}  \hyp{1.4}
        \end{gather}
        by \propref{typing/inv}, \propref{env/sub/runtime}, and \hypref{0.2}.
        We have $\ottnt{C}  \ottsym{,}  \ottnt{C''}  \lesssim  \ottnt{C_{{\mathrm{1}}}}  \ottsym{,}  \ottnt{C_{{\mathrm{2}}}}  \ottsym{,}  \ottnt{C''}$ by \propref{env/sub/ctx} and \hypref{1.1}.
        So,
        \begin{gather*}
            \ottnt{C_{{\mathrm{1}}}}  \ottsym{,}  \ottnt{C_{{\mathrm{2}}}}  \ottsym{,}  \ottnt{C''}  \vdash  \mathcal{H} \hyp{1.4.1}
        \end{gather*}
        by \propref{heap/weaken} and \hypref{0.3}.
        Now, we can get some $\ottnt{C'_{{\mathrm{1}}}}$ such that
        \begin{gather}
              \ottsym{(}    \ottkw{dom} ( \ottnt{C'_{{\mathrm{1}}}} )  \setminus  \ottkw{dom} ( \ottnt{C_{{\mathrm{1}}}} )    \ottsym{)} \cap  \ottkw{dom} ( \ottnt{C_{{\mathrm{2}}}}  \ottsym{,}  \ottnt{C''} )   \ottsym{=} \emptyset  \hyp{1.4.2}\\
            \ottnt{C'_{{\mathrm{1}}}}  \vdash  \mathcal{E}_{{\mathrm{1}}}  \ottsym{[}  \ottnt{M'}  \ottsym{]}  :   \ottnt{S} \rightarrow _{ \ottnt{e_{{\mathrm{0}}}} } \ottnt{T}   \mid  \ottnt{e_{{\mathrm{1}}}} \hyp{1.5}\\
            \ottnt{C'_{{\mathrm{1}}}}  \ottsym{,}  \ottnt{C_{{\mathrm{2}}}}  \ottsym{,}  \ottnt{C''}  \vdash  \mathcal{H}' \hyp{1.6}
        \end{gather}
        by applying the induction hypothesis to \hypref{1.2} and \hypref{1.4.1}.
        To this end, we choose $ \ottnt{C'} \ottsym{=} \ottnt{C'_{{\mathrm{1}}}}  \ottsym{,}  \ottnt{C_{{\mathrm{2}}}} $.
        The subgoals we need to show are
        \begin{gather*}
              \ottsym{(}    \ottkw{dom} ( \ottnt{C'_{{\mathrm{1}}}}  \ottsym{,}  \ottnt{C_{{\mathrm{2}}}} )  \setminus  \ottkw{dom} ( \ottnt{C} )    \ottsym{)} \cap  \ottkw{dom} ( \ottnt{C''} )   \ottsym{=} \emptyset \\
            \ottnt{C'_{{\mathrm{1}}}}  \ottsym{,}  \ottnt{C_{{\mathrm{2}}}}  \vdash  \mathcal{E}_{{\mathrm{1}}}  \ottsym{[}  \ottnt{M'}  \ottsym{]} \, \ottnt{M_{{\mathrm{2}}}}  :  \ottnt{T}  \mid  \ottnt{e}\\
            \ottnt{C'_{{\mathrm{1}}}}  \ottsym{,}  \ottnt{C_{{\mathrm{2}}}}  \ottsym{,}  \ottnt{C''}  \vdash  \mathcal{H}'.
        \end{gather*}
        Having $  \ottkw{dom} ( \ottnt{C} )  \ottsym{=}  \ottkw{dom} ( \ottnt{C_{{\mathrm{1}}}}  \ottsym{,}  \ottnt{C_{{\mathrm{2}}}} )  $ by \propref{env/sub/runtime} and \hypref{1.1}, we can see the first subgoal follows by \hypref{1.4.2}.
        We can derive the second subgoal by \ruleref{T-App}, \ruleref{T-Weaken}, \hypref{1.3}, \hypref{1.4}, and \hypref{1.5}.
        The third subgoal is identical to \hypref{1.6}.

        \item[$ \mathcal{E} \ottsym{=} \ottnt{V_{{\mathrm{1}}}} \, \mathcal{E}_{{\mathrm{2}}} $]
        In this case, we have
        \begin{gather}
            \ottnt{C}  \lesssim  \ottnt{C_{{\mathrm{1}}}}  \ottsym{,}  \ottnt{C_{{\mathrm{2}}}} \hyp{2.1}\\
            \ottnt{C_{{\mathrm{1}}}}  \vdash  \ottnt{V_{{\mathrm{1}}}}  :   \ottnt{S} \rightarrow _{ \ottnt{e_{{\mathrm{0}}}} } \ottnt{T}   \mid  \ottnt{e_{{\mathrm{1}}}} \hyp{2.2}\\
            \ottnt{C_{{\mathrm{2}}}}  \vdash  \mathcal{E}_{{\mathrm{2}}}  \ottsym{[}  \ottnt{M}  \ottsym{]}  :  \ottnt{S}  \mid  \ottnt{e_{{\mathrm{2}}}} \hyp{2.3}\\
               \ottnt{e_{{\mathrm{0}}}}  \sqcup  \ottnt{e_{{\mathrm{1}}}}   \sqcup  \ottnt{e_{{\mathrm{2}}}}  \le \ottnt{e}  \hyp{2.4}.
        \end{gather}
        by \propref{typing/inv}, \propref{env/sub/runtime}, and \hypref{0.2}.
        We have
        \begin{gather}
            \ottnt{C_{{\mathrm{1}}}}  \lesssim   \cdot  \hyp{2.4.1}\\
             \cdot   \vdash  \ottnt{V_{{\mathrm{1}}}}  :   \ottnt{S} \rightarrow _{ \ottnt{e_{{\mathrm{0}}}} } \ottnt{T}   \mid  \ottsym{0} \hyp{2.4.2}
        \end{gather}
        by \propref{typing/value/unr} and \hypref{2.2}.
        So, we have $\ottnt{C}  \ottsym{,}  \ottnt{C''}  \lesssim  \ottnt{C_{{\mathrm{2}}}}  \ottsym{,}  \ottnt{C''}$ by \propref{env/sub/ctx} and \hypref{2.1}, and then
        \begin{gather*}
            \ottnt{C_{{\mathrm{2}}}}  \ottsym{,}  \ottnt{C''}  \vdash  \mathcal{H} \hyp{2.5}
        \end{gather*}
        by \propref{heap/weaken} and \hypref{0.3}.
        Now we can have some $\ottnt{C'_{{\mathrm{2}}}}$ such that
        \begin{gather*}
              \ottsym{(}    \ottkw{dom} ( \ottnt{C'_{{\mathrm{2}}}} )  \setminus  \ottkw{dom} ( \ottnt{C_{{\mathrm{2}}}} )    \ottsym{)} \cap  \ottkw{dom} ( \ottnt{C''} )   \ottsym{=} \emptyset  \hyp{2.5.1}\\
            \ottnt{C'_{{\mathrm{2}}}}  \vdash  \mathcal{E}_{{\mathrm{2}}}  \ottsym{[}  \ottnt{M'}  \ottsym{]}  :  \ottnt{S}  \mid  \ottnt{e_{{\mathrm{2}}}} \hyp{2.6}\\
            \ottnt{C'_{{\mathrm{2}}}}  \ottsym{,}  \ottnt{C''}  \vdash  \mathcal{H}' \hyp{2.7}
        \end{gather*}
        by applyin the induction hypothesis to \hypref{2.3} and \hypref{2.5}.
        To this end, we choose $ \ottnt{C'} \ottsym{=} \ottnt{C'_{{\mathrm{2}}}} $.
        The subgoals we need to show are
        \begin{gather*}
              \ottsym{(}    \ottkw{dom} ( \ottnt{C'_{{\mathrm{2}}}} )  \setminus  \ottkw{dom} ( \ottnt{C} )    \ottsym{)} \cap  \ottkw{dom} ( \ottnt{C''} )   \ottsym{=} \emptyset \\
            \ottnt{C'_{{\mathrm{2}}}}  \vdash  \ottnt{V_{{\mathrm{1}}}} \, \mathcal{E}_{{\mathrm{2}}}  \ottsym{[}  \ottnt{M'}  \ottsym{]}  :  \ottnt{T}  \mid  \ottnt{e} \\
            \ottnt{C'_{{\mathrm{2}}}}  \ottsym{,}  \ottnt{C''}  \vdash  \mathcal{H}'.
        \end{gather*}
        Having $  \ottkw{dom} ( \ottnt{C} )  \ottsym{=}  \ottkw{dom} ( \ottnt{C_{{\mathrm{2}}}} )  $ by \propref{env/sub/runtime} and \hypref{2.1}, we can see the first subgoal is identical to \hypref{2.5.1}.
        We can derive the second subgoal by \ruleref{T-App}, \ruleref{T-Weaken}, \hypref{2.4.2}, \hypref{2.4}, and \hypref{2.6}.
        The third subgoal is identical to \hypref{2.7}.

        \item[$ \mathcal{E} \ottsym{=} \mathcal{E}_{{\mathrm{1}}}  {}^\circ  \ottnt{M_{{\mathrm{2}}}} $]
        In this case, we have
        \begin{gather}
            \ottnt{C}  \lesssim  \ottnt{C_{{\mathrm{1}}}}  \parallel  \ottnt{C_{{\mathrm{2}}}} \hyp{3.1}\\
            \ottnt{C_{{\mathrm{1}}}}  \vdash  \mathcal{E}_{{\mathrm{1}}}  \ottsym{[}  \ottnt{M}  \ottsym{]}  :   \ottnt{S} \rightarrowtriangle _{ \ottnt{e_{{\mathrm{0}}}} } \ottnt{T}   \mid  \ottnt{e_{{\mathrm{1}}}} \hyp{3.2}\\
            \ottnt{C_{{\mathrm{2}}}}  \vdash  \ottnt{M_{{\mathrm{2}}}}  :  \ottnt{S}  \mid  \ottnt{e_{{\mathrm{2}}}} \hyp{3.3}\\
               \ottnt{e_{{\mathrm{0}}}}  \sqcup  \ottnt{e_{{\mathrm{1}}}}   \sqcup  \ottnt{e_{{\mathrm{2}}}}  \le \ottnt{e}  \hyp{3.4}
        \end{gather}
        by \propref{typing/inv}, \propref{env/sub/runtime}, and \hypref{0.2}.
        We have $\ottnt{C}  \ottsym{,}  \ottnt{C''}  \lesssim  \ottsym{(}  \ottnt{C_{{\mathrm{1}}}}  \parallel  \ottnt{C_{{\mathrm{2}}}}  \ottsym{)}  \ottsym{,}  \ottnt{C''}$ by \propref{env/sub/ctx} and \hypref{3.1}.
        So,
        \begin{gather*}
            \ottsym{(}  \ottnt{C_{{\mathrm{1}}}}  \parallel  \ottnt{C_{{\mathrm{2}}}}  \ottsym{)}  \ottsym{,}  \ottnt{C''}  \vdash  \mathcal{H}
        \end{gather*}
        by \propref{heap/weaken} and \hypref{0.3}.
        From that, we have
        \begin{gather*}
               \ottkw{dom} ( \ottnt{C_{{\mathrm{1}}}} )  \cap  \ottkw{dom} ( \ottnt{C_{{\mathrm{2}}}} )   \ottsym{=} \emptyset  \hyp{3.5.1}
        \end{gather*}
        by \propref{heap/par} and
        \begin{gather*}
            \ottnt{C_{{\mathrm{1}}}}  \ottsym{,}  \ottnt{C_{{\mathrm{2}}}}  \ottsym{,}  \ottnt{C''}  \vdash  \mathcal{H} \hyp{3.5}
        \end{gather*}
        by \propref{heap/weaken} since $\ottsym{(}  \ottnt{C_{{\mathrm{1}}}}  \parallel  \ottnt{C_{{\mathrm{2}}}}  \ottsym{)}  \ottsym{,}  \ottnt{C''}  \lesssim  \ottnt{C_{{\mathrm{1}}}}  \ottsym{,}  \ottnt{C_{{\mathrm{2}}}}  \ottsym{,}  \ottnt{C''}$.
        Now we can have some $\ottnt{C'_{{\mathrm{1}}}}$ such that
        \begin{gather*}
              \ottsym{(}    \ottkw{dom} ( \ottnt{C'_{{\mathrm{1}}}} )  \setminus  \ottkw{dom} ( \ottnt{C_{{\mathrm{1}}}} )    \ottsym{)} \cap  \ottkw{dom} ( \ottnt{C_{{\mathrm{2}}}}  \ottsym{,}  \ottnt{C''} )   \ottsym{=} \emptyset  \hyp{3.5.2}\\
            \ottnt{C'_{{\mathrm{1}}}}  \vdash  \mathcal{E}_{{\mathrm{1}}}  \ottsym{[}  \ottnt{M'}  \ottsym{]}  :   \ottnt{S} \rightarrowtriangle _{ \ottnt{e_{{\mathrm{0}}}} } \ottnt{T}   \mid  \ottnt{e_{{\mathrm{1}}}} \hyp{3.6}\\
            \ottnt{C'_{{\mathrm{1}}}}  \ottsym{,}  \ottnt{C_{{\mathrm{2}}}}  \ottsym{,}  \ottnt{C''}  \vdash  \mathcal{H}' \hyp{3.7}
        \end{gather*}
        by applying the induction hypothesis to \hypref{3.2} and \hypref{3.5}.
        To this end, we choose $ \ottnt{C'} \ottsym{=} \ottnt{C'_{{\mathrm{1}}}}  \parallel  \ottnt{C_{{\mathrm{2}}}} $.
        The subgoals we need to show are
        \begin{gather*}
              \ottsym{(}    \ottkw{dom} ( \ottnt{C'_{{\mathrm{1}}}}  \parallel  \ottnt{C_{{\mathrm{2}}}} )  \setminus  \ottkw{dom} ( \ottnt{C} )    \ottsym{)} \cap  \ottkw{dom} ( \ottnt{C''} )   \ottsym{=} \emptyset \\
            \ottnt{C'_{{\mathrm{1}}}}  \parallel  \ottnt{C_{{\mathrm{2}}}}  \vdash  \mathcal{E}_{{\mathrm{1}}}  \ottsym{[}  \ottnt{M'}  \ottsym{]}  {}^\circ  \ottnt{M_{{\mathrm{2}}}}  :  \ottnt{T}  \mid  \ottnt{e} \\
            \ottsym{(}  \ottnt{C'_{{\mathrm{1}}}}  \parallel  \ottnt{C_{{\mathrm{2}}}}  \ottsym{)}  \ottsym{,}  \ottnt{C''}  \vdash  \mathcal{H}'.
        \end{gather*}
        Having $  \ottkw{dom} ( \ottnt{C} )  \ottsym{=}  \ottkw{dom} ( \ottnt{C_{{\mathrm{1}}}}  \parallel  \ottnt{C_{{\mathrm{2}}}} )  $ by \propref{env/sub/runtime} and \hypref{3.1}, we can see the first subgoal follows by \hypref{3.5.2}.
        We can derive the second subgoal by \ruleref{T-UApp}, \ruleref{T-Weaken}, \hypref{3.3}, \hypref{3.4}, and \hypref{3.6}.
        As for the third subgoal, firstly we have $  \ottsym{(}    \ottkw{dom} ( \ottnt{C'_{{\mathrm{1}}}} )  \setminus  \ottkw{dom} ( \ottnt{C_{{\mathrm{1}}}} )    \ottsym{)} \cap  \ottkw{dom} ( \ottnt{C_{{\mathrm{2}}}} )   \ottsym{=} \emptyset $ by \hypref{3.5.2}.
Then, $   \ottkw{dom} ( \ottnt{C'_{{\mathrm{1}}}} )  \cap  \ottkw{dom} ( \ottnt{C_{{\mathrm{2}}}} )   \ottsym{=} \emptyset $ follows by \hypref{3.5.1}.
        Now, the subgoal follows by \propref{heap/par2} and \hypref{3.7}.

        \item[$ \mathcal{E} \ottsym{=} \ottnt{V_{{\mathrm{1}}}}  {}^\circ  \mathcal{E}_{{\mathrm{2}}} $]
        Similar to the case $ \mathcal{E} \ottsym{=} \mathcal{E}_{{\mathrm{1}}}  {}^\circ  \ottnt{M_{{\mathrm{2}}}} $ since $ \parallel $ is commutative.

        \item[$ \mathcal{E} \ottsym{=} \mathcal{E}_{{\mathrm{1}}}  {}^>  \ottnt{M_{{\mathrm{2}}}} $]
        In this case, we have
        \begin{gather}
            \ottnt{C}  \lesssim  \ottnt{C_{{\mathrm{1}}}}  \ottsym{,}  \ottnt{C_{{\mathrm{2}}}} \hyp{5.1}\\
            \ottnt{C_{{\mathrm{1}}}}  \vdash  \mathcal{E}_{{\mathrm{1}}}  \ottsym{[}  \ottnt{M}  \ottsym{]}  :   \ottnt{S} \twoheadrightarrow _{ \ottnt{e_{{\mathrm{0}}}} } \ottnt{T}   \mid  \ottnt{e_{{\mathrm{1}}}} \hyp{5.2}\\
            \ottnt{C_{{\mathrm{2}}}}  \vdash  \ottnt{M_{{\mathrm{2}}}}  :  \ottnt{S}  \mid  \ottsym{0} \hyp{5.3}\\
              \ottnt{e_{{\mathrm{0}}}}  \sqcup  \ottnt{e_{{\mathrm{1}}}}  \le \ottnt{e}  \hyp{5.4}
        \end{gather}
        by \propref{typing/inv}, \propref{env/sub/runtime}, and \hypref{0.2}.
        We have $\ottnt{C}  \ottsym{,}  \ottnt{C''}  \lesssim  \ottnt{C_{{\mathrm{1}}}}  \ottsym{,}  \ottnt{C_{{\mathrm{2}}}}  \ottsym{,}  \ottnt{C''}$ by \propref{env/sub/ctx} and \hypref{5.1}.
        So,
        \begin{gather}
            \ottnt{C_{{\mathrm{1}}}}  \ottsym{,}  \ottnt{C_{{\mathrm{2}}}}  \ottsym{,}  \ottnt{C''}  \vdash  \mathcal{H} \hyp{5.5}
        \end{gather}
        by \propref{heap/weaken} and \hypref{0.3}.
        Now we can have some $\ottnt{C'_{{\mathrm{1}}}}$ such that
        \begin{gather}
              \ottsym{(}    \ottkw{dom} ( \ottnt{C'_{{\mathrm{1}}}} )  \setminus  \ottkw{dom} ( \ottnt{C_{{\mathrm{1}}}} )    \ottsym{)} \cap  \ottkw{dom} ( \ottnt{C_{{\mathrm{2}}}}  \ottsym{,}  \ottnt{C''} )   \ottsym{=} \emptyset  \hyp{5.6}\\
            \ottnt{C'_{{\mathrm{1}}}}  \vdash  \mathcal{E}_{{\mathrm{1}}}  \ottsym{[}  \ottnt{M'}  \ottsym{]}  :   \ottnt{S} \twoheadrightarrow _{ \ottnt{e_{{\mathrm{0}}}} } \ottnt{T}   \mid  \ottnt{e_{{\mathrm{1}}}} \hyp{5.7}\\
            \ottnt{C'_{{\mathrm{1}}}}  \ottsym{,}  \ottnt{C_{{\mathrm{2}}}}  \ottsym{,}  \ottnt{C''}  \vdash  \mathcal{H}' \hyp{5.8}
        \end{gather}
        by applying the induction hypothesis to \hypref{5.2} and \hypref{5.5}.
        To this end, we choose $ \ottnt{C'} \ottsym{=} \ottnt{C'_{{\mathrm{1}}}}  \ottsym{,}  \ottnt{C_{{\mathrm{2}}}} $.
        The subgoals we need to show are
        \begin{gather*}
              \ottsym{(}    \ottkw{dom} ( \ottnt{C'_{{\mathrm{1}}}}  \ottsym{,}  \ottnt{C_{{\mathrm{2}}}} )  \setminus  \ottkw{dom} ( \ottnt{C} )    \ottsym{)} \cap  \ottkw{dom} ( \ottnt{C''} )   \ottsym{=} \emptyset  \\
            \ottnt{C'_{{\mathrm{1}}}}  \ottsym{,}  \ottnt{C_{{\mathrm{2}}}}  \vdash  \mathcal{E}_{{\mathrm{1}}}  \ottsym{[}  \ottnt{M'}  \ottsym{]}  {}^>  \ottnt{M_{{\mathrm{2}}}}  :  \ottnt{T}  \mid  \ottnt{e} \\
            \ottnt{C'_{{\mathrm{1}}}}  \ottsym{,}  \ottnt{C_{{\mathrm{2}}}}  \ottsym{,}  \ottnt{C''}  \vdash  \mathcal{H}'.
        \end{gather*}
        Having $  \ottkw{dom} ( \ottnt{C} )  \ottsym{=}  \ottkw{dom} ( \ottnt{C_{{\mathrm{1}}}}  \ottsym{,}  \ottnt{C_{{\mathrm{2}}}} )  $ by \propref{env/sub/runtime} and \hypref{5.1}, we can see the first subgoal follows by \hypref{5.6}.
        We can derive the second subgoal by \ruleref{T-RApp}, \ruleref{T-Weaken}, \hypref{5.3}, \hypref{5.4}, and \hypref{5.7}.
        The third subgoal is identical to \hypref{5.8}.

        \item[$ \mathcal{E} \ottsym{=} \ottnt{V_{{\mathrm{1}}}}  {}^>  \mathcal{E}_{{\mathrm{2}}} $]
        In this case, we have
        \begin{gather}
            \ottnt{C}  \lesssim  \ottnt{C_{{\mathrm{1}}}}  \ottsym{,}  \ottnt{C_{{\mathrm{2}}}} \hyp{6.1}\\
            \ottnt{C_{{\mathrm{1}}}}  \vdash  \ottnt{V_{{\mathrm{1}}}}  :   \ottnt{S} \twoheadrightarrow _{ \ottnt{e_{{\mathrm{0}}}} } \ottnt{T}   \mid  \ottnt{e_{{\mathrm{1}}}} \hyp{6.2}\\
            \ottnt{C_{{\mathrm{2}}}}  \vdash  \mathcal{E}_{{\mathrm{2}}}  \ottsym{[}  \ottnt{M}  \ottsym{]}  :  \ottnt{S}  \mid  \ottsym{0} \hyp{6.3}\\
              \ottnt{e_{{\mathrm{0}}}}  \sqcup  \ottnt{e_{{\mathrm{1}}}}  \le \ottnt{e}  \hyp{6.4}
        \end{gather}
        by \propref{typing/inv}, \propref{env/sub/runtime}, and \hypref{0.2}.
        We have $\ottnt{C}  \ottsym{,}  \ottnt{C''}  \lesssim  \ottnt{C_{{\mathrm{1}}}}  \ottsym{,}  \ottnt{C_{{\mathrm{2}}}}  \ottsym{,}  \ottnt{C''}$ by \propref{env/sub/ctx} and \hypref{6.1}.
        So,
        \begin{gather}
            \ottnt{C_{{\mathrm{1}}}}  \ottsym{,}  \ottnt{C_{{\mathrm{2}}}}  \ottsym{,}  \ottnt{C''}  \vdash  \mathcal{H} \hyp{6.5}
        \end{gather}
        by \propref{heap/weaken} and \hypref{0.3}.
        Here we can have some $\ottnt{C'_{{\mathrm{2}}}}$ such that
        \begin{gather}
              \ottsym{(}    \ottkw{dom} ( \ottnt{C'_{{\mathrm{2}}}} )  \setminus  \ottkw{dom} ( \ottnt{C_{{\mathrm{2}}}} )    \ottsym{)} \cap  \ottkw{dom} ( \ottnt{C_{{\mathrm{1}}}}  \ottsym{,}   \cdot   \ottsym{,}  \ottnt{C''} )   \ottsym{=} \emptyset  \hyp{6.6}\\
            \ottnt{C'_{{\mathrm{2}}}}  \vdash  \mathcal{E}_{{\mathrm{2}}}  \ottsym{[}  \ottnt{M'}  \ottsym{]}  :  \ottnt{S}  \mid  \ottsym{0} \hyp{6.7}\\
            \ottnt{C_{{\mathrm{1}}}}  \ottsym{,}  \ottnt{C'_{{\mathrm{2}}}}  \ottsym{,}  \ottnt{C''}  \vdash  \mathcal{H}' \hyp{6.8}
        \end{gather}
        by applying \propref{subj/cmp/noeffect/ctx} to \hypref{0.1}, \hypref{6.3}, and \hypref{6.5}.
        To this end, we choose $ \ottnt{C'} \ottsym{=} \ottnt{C_{{\mathrm{1}}}}  \ottsym{,}  \ottnt{C'_{{\mathrm{2}}}} $.
        The subgoals we need to show are
        \begin{gather}
              \ottsym{(}    \ottkw{dom} ( \ottnt{C_{{\mathrm{1}}}}  \ottsym{,}  \ottnt{C'_{{\mathrm{2}}}} )  \setminus  \ottkw{dom} ( \ottnt{C} )    \ottsym{)} \cap  \ottkw{dom} ( \ottnt{C''} )   \ottsym{=} \emptyset  \\
            \ottnt{C_{{\mathrm{1}}}}  \ottsym{,}  \ottnt{C'_{{\mathrm{2}}}}  \vdash  \ottnt{V_{{\mathrm{1}}}}  {}^>  \mathcal{E}_{{\mathrm{2}}}  \ottsym{[}  \ottnt{M'}  \ottsym{]}  :  \ottnt{T}  \mid  \ottnt{e} \\
            \ottnt{C_{{\mathrm{1}}}}  \ottsym{,}  \ottnt{C'_{{\mathrm{2}}}}  \ottsym{,}  \ottnt{C''}  \vdash  \mathcal{H}'.
        \end{gather}
        Having $  \ottkw{dom} ( \ottnt{C} )  \ottsym{=}  \ottkw{dom} ( \ottnt{C_{{\mathrm{1}}}}  \ottsym{,}  \ottnt{C_{{\mathrm{2}}}} )  $ by \propref{env/sub/runtime} and \hypref{6.1}, we can see the first subgoal follows by \hypref{6.6}.
        We can derive th second subgoal by \ruleref{T-RApp}, \ruleref{T-Weaken}, \hypref{6.2}, \hypref{6.4}, and \hypref{6.7}.
        The third subgoal is identical to \hypref{6.8}.

        \item[$ \mathcal{E} \ottsym{=} \mathcal{E}_{{\mathrm{1}}}  {}^<  \ottnt{V_{{\mathrm{2}}}} $] Similar to the case $\ottnt{V_{{\mathrm{1}}}}  {}^>  \mathcal{E}_{{\mathrm{2}}}$.
        \item[$ \mathcal{E} \ottsym{=} \ottnt{M_{{\mathrm{1}}}}  {}^<  \mathcal{E}_{{\mathrm{2}}} $] Similar to the case $\mathcal{E}_{{\mathrm{1}}}  {}^>  \ottnt{M_{{\mathrm{2}}}}$.
        \item[$ \mathcal{E} \ottsym{=} \mathcal{E}_{{\mathrm{1}}}  \otimes  \ottnt{M_{{\mathrm{2}}}} $] Similar to the case $\mathcal{E}_{{\mathrm{1}}}  {}^\circ  \ottnt{M_{{\mathrm{2}}}}$.
        \item[$ \mathcal{E} \ottsym{=} \ottnt{V_{{\mathrm{1}}}}  \otimes  \mathcal{E}_{{\mathrm{2}}} $] Similar to the case $\ottnt{V_{{\mathrm{1}}}}  {}^\circ  \ottnt{M_{{\mathrm{2}}}}$.
        \item[$ \mathcal{E} \ottsym{=} \mathcal{E}_{{\mathrm{1}}}  \odot  \ottnt{M_{{\mathrm{2}}}} $] Similar to the case $\mathcal{E}_{{\mathrm{1}}}  {}^>  \ottnt{M_{{\mathrm{2}}}}$.
        \item[$ \mathcal{E} \ottsym{=} \ottnt{V_{{\mathrm{1}}}}  \odot  \mathcal{E}_{{\mathrm{2}}} $] Similar to the case $\ottnt{V_{{\mathrm{1}}}} \, \mathcal{E}_{{\mathrm{2}}}$ if $\ottnt{V_{{\mathrm{1}}}}$ has an unrestricted type.
        Otherwise, similar to the case $\ottnt{V_{{\mathrm{1}}}}  {}^>  \mathcal{E}_{{\mathrm{2}}}$.

        \item[$ \mathcal{E} \ottsym{=} \ottkw{let} \, \ottmv{x}  \otimes  \ottmv{y}  \ottsym{=}  \mathcal{E}_{{\mathrm{1}}} \, \ottkw{in} \, \ottnt{M_{{\mathrm{2}}}} $]
        In this case, we have
        \begin{gather}
            \ottnt{C}  \lesssim  \mathcal{C}  \ottsym{[}  \ottnt{C_{{\mathrm{1}}}}  \ottsym{]} \hyp{13.1}\\
             \ottnt{e_{{\mathrm{2}}}} \le \ottnt{e}  \hyp{13.2}\\
            \ottsym{\{}  \ottmv{x}  \ottsym{\}}  \uplus  \ottsym{\{}  \ottmv{y}  \ottsym{\}}  \uplus   \ottkw{dom} ( \mathcal{C}  \ottsym{[}  \ottnt{C_{{\mathrm{1}}}}  \ottsym{]} )  \hyp{13.3}\\
            \ottnt{C_{{\mathrm{1}}}}  \vdash  \mathcal{E}_{{\mathrm{1}}}  \ottsym{[}  \ottnt{M}  \ottsym{]}  :  \ottnt{S_{{\mathrm{1}}}}  \otimes  \ottnt{S_{{\mathrm{2}}}}  \mid  \ottsym{0} \hyp{13.4}\\
            \mathcal{C}  \ottsym{[}  \ottmv{x}  \mathord:  \ottnt{S_{{\mathrm{1}}}}  \parallel  \ottmv{y}  \mathord:  \ottnt{S_{{\mathrm{2}}}}  \ottsym{]}  \vdash  \ottnt{M_{{\mathrm{2}}}}  :  \ottnt{T}  \mid  \ottnt{e_{{\mathrm{2}}}} \hyp{13.5}
        \end{gather}
        by \propref{typing/inv}, \propref{env/sub/runtime}, and \hypref{0.2}.
        We have $\ottnt{C}  \ottsym{,}  \ottnt{C''}  \lesssim  \mathcal{C}  \ottsym{[}  \ottnt{C_{{\mathrm{1}}}}  \ottsym{]}  \ottsym{,}  \ottnt{C''}$ by \propref{env/sub/ctx} and \hypref{13.1}.
        So,
        \begin{gather}
            \mathcal{C}  \ottsym{[}  \ottnt{C_{{\mathrm{1}}}}  \ottsym{]}  \ottsym{,}  \ottnt{C''}  \vdash  \mathcal{H} \hyp{13.6}
        \end{gather}
        by \propref{heap/weaken} and \hypref{13.1}.
        Now, we have some $\ottnt{C'_{{\mathrm{1}}}}$ such that
        \begin{gather}
              \ottsym{(}    \ottkw{dom} ( \ottnt{C'_{{\mathrm{1}}}} )  \setminus  \ottkw{dom} ( \ottnt{C_{{\mathrm{1}}}} )    \ottsym{)} \cap  \ottkw{dom} ( \mathcal{C}  \ottsym{,}  \ottnt{C''} )   \ottsym{=} \emptyset  \hyp{13.7}\\
            \ottnt{C'_{{\mathrm{1}}}}  \vdash  \mathcal{E}_{{\mathrm{1}}}  \ottsym{[}  \ottnt{M'}  \ottsym{]}  :  \ottnt{S_{{\mathrm{1}}}}  \otimes  \ottnt{S_{{\mathrm{2}}}}  \mid  \ottsym{0} \hyp{13.8}\\
            \mathcal{C}  \ottsym{[}  \ottnt{C'_{{\mathrm{1}}}}  \ottsym{]}  \ottsym{,}  \ottnt{C''}  \vdash  \mathcal{H}' \hyp{13.9}
        \end{gather}
        by \propref{subj/cmp/noeffect/ctx}, \hypref{0.1}, \hypref{13.4}, and \hypref{13.6}.
        To this end, we choose $ \ottnt{C'} \ottsym{=} \mathcal{C}  \ottsym{[}  \ottnt{C'_{{\mathrm{1}}}}  \ottsym{]} $.
        The subgoals we need to show are
        \begin{gather*}
             \ottsym{(}    \ottkw{dom} ( \mathcal{C}  \ottsym{[}  \ottnt{C'_{{\mathrm{1}}}}  \ottsym{]} )  \setminus  \ottkw{dom} ( \ottnt{C} )    \ottsym{)} \cap  \ottkw{dom} ( \ottnt{C''} )   \\
            \mathcal{C}  \ottsym{[}  \ottnt{C'_{{\mathrm{1}}}}  \ottsym{]}  \vdash  \ottkw{let} \, \ottmv{x}  \otimes  \ottmv{y}  \ottsym{=}  \mathcal{E}_{{\mathrm{1}}}  \ottsym{[}  \ottnt{M'}  \ottsym{]} \, \ottkw{in} \, \ottnt{M_{{\mathrm{2}}}}  :  \ottnt{T}  \mid  \ottnt{e} \\
            \mathcal{C}  \ottsym{[}  \ottnt{C'_{{\mathrm{1}}}}  \ottsym{]}  \ottsym{,}  \ottnt{C''}  \vdash  \mathcal{H}'.
        \end{gather*}
        Having $  \ottkw{dom} ( \ottnt{C} )  \ottsym{=}  \ottkw{dom} ( \mathcal{C}  \ottsym{[}  \ottnt{C_{{\mathrm{1}}}}  \ottsym{]} )  $ by \propref{env/sub/runtime} and \hypref{13.1}, we can see the first subgoal follows by \hypref{13.7}.
        We have $\ottsym{\{}  \ottmv{x}  \ottsym{\}}  \uplus  \ottsym{\{}  \ottmv{y}  \ottsym{\}}  \uplus   \ottkw{dom} ( \mathcal{C}  \ottsym{[}  \ottnt{C'_{{\mathrm{1}}}}  \ottsym{]} ) $ by \hypref{13.3} since $ \ottkw{dom} ( \mathcal{C}  \ottsym{[}  \ottnt{C'_{{\mathrm{1}}}}  \ottsym{]} ) $ has only locations.
        Therefore, we can derive the second subgoal by \ruleref{T-ULet}, \ruleref{T-Weaken}, \hypref{13.2}, \hypref{13.5}, and \hypref{13.8}.
        The third subgoal is identical to \hypref{13.9}.

        \item[$ \mathcal{E} \ottsym{=} \ottkw{let} \, \ottmv{x}  \odot  \ottmv{y}  \ottsym{=}  \mathcal{E}_{{\mathrm{1}}} \, \ottkw{in} \, \ottnt{M_{{\mathrm{2}}}} $]
        Similar to the case $ \mathcal{E} \ottsym{=} \ottkw{let} \, \ottmv{x}  \otimes  \ottmv{y}  \ottsym{=}  \mathcal{E}_{{\mathrm{1}}} \, \ottkw{in} \, \ottnt{M_{{\mathrm{2}}}} $.
    \end{match}
\end{prop}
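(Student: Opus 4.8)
The plan is to proceed by structural induction on the evaluation context $\mathcal{E}$, mirroring the companion lemma \propref{subj/cmp/noeffect/ctx} but now allowing the redex to carry an effect. In the base case $\mathcal{E}  \ottsym{=} \ottsym{[]}$ the statement is exactly \propref{subj/cmp/effect}, so nothing further is needed. For each compound context former I would first apply the appropriate inversion lemma from \propref{typing/inv} to the hypothesis $\ottnt{C}  \vdash  \mathcal{E}  \ottsym{[}  \ottnt{M}  \ottsym{]}  :  \ottnt{T}  \mid  \ottnt{e}$, obtaining a decomposition $\ottnt{C}  \lesssim  \ottnt{C_{{\mathrm{1}}}}  \ottsym{,}  \ottnt{C_{{\mathrm{2}}}}$ (or the parallel variant $\ottnt{C_{{\mathrm{1}}}}  \parallel  \ottnt{C_{{\mathrm{2}}}}$) together with typings for the two immediate subterms and an effect bound. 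Using \propref{env/sub/ctx} and \propref{heap/weaken} I would transport the heap typing along the $\lesssim$ to the explicit split, recurse on whichever subcontext contains the hole, and finally rebuild the result with the corresponding typing rule followed by \ruleref{T-Weaken} to restore the effect bound.

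The sequential cases such as $\mathcal{E}  \ottsym{=} \mathcal{E}_{{\mathrm{1}}} \, \ottnt{M_{{\mathrm{2}}}}$ are the most direct: the hole sits in the first-evaluated component, which may have an effect, so the induction hypothesis applies immediately, and the disjointness side-condition $ \ottsym{(}    \ottkw{dom} ( \ottnt{C'} )  \setminus  \ottkw{dom} ( \ottnt{C} )    \ottsym{)} \cap  \ottkw{dom} ( \ottnt{C''} )   \ottsym{=} \emptyset $ is obtained by instantiating the hypothesis's ambient context with $\ottnt{C_{{\mathrm{2}}}}  \ottsym{,}  \ottnt{C''}$ and then using $ \ottkw{dom} ( \ottnt{C} )  \ottsym{=}  \ottkw{dom} ( \ottnt{C_{{\mathrm{1}}}}  \ottsym{,}  \ottnt{C_{{\mathrm{2}}}} ) $ from \propref{env/sub/runtime}. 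The cases where the hole lies in a position the type system forces to be effect-free---namely $\ottnt{V_{{\mathrm{1}}}}  {}^>  \mathcal{E}_{{\mathrm{2}}}$, $\mathcal{E}_{{\mathrm{1}}}  {}^<  \ottnt{V_{{\mathrm{2}}}}$, and the relevant components of an ordered pair $\odot$---must be handled differently: there the subterm is typed with effect $\ottsym{0}$, so instead of the induction hypothesis I would invoke the already-proved effect-free contextual lemma \propref{subj/cmp/noeffect/ctx}, which supplies the same bookkeeping for a pure hole.

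The genuinely delicate cases are the parallel context formers, e.g.\ $\mathcal{E}_{{\mathrm{1}}}  {}^\circ  \ottnt{M_{{\mathrm{2}}}}$, $\ottnt{V_{{\mathrm{1}}}}  {}^\circ  \mathcal{E}_{{\mathrm{2}}}$, and the unordered pair $\otimes$. After inverting to $\ottnt{C}  \lesssim  \ottnt{C_{{\mathrm{1}}}}  \parallel  \ottnt{C_{{\mathrm{2}}}}$ and weakening the heap to $\ottsym{(}  \ottnt{C_{{\mathrm{1}}}}  \parallel  \ottnt{C_{{\mathrm{2}}}}  \ottsym{)}  \ottsym{,}  \ottnt{C''}  \vdash  \mathcal{H}$, I would extract the domain disjointness $   \ottkw{dom} ( \ottnt{C_{{\mathrm{1}}}} )  \cap  \ottkw{dom} ( \ottnt{C_{{\mathrm{2}}}} )   \ottsym{=} \emptyset $ with \propref{heap/par}, re-weaken to the sequential form $\ottnt{C_{{\mathrm{1}}}}  \ottsym{,}  \ottnt{C_{{\mathrm{2}}}}  \ottsym{,}  \ottnt{C''}$ so the induction hypothesis becomes applicable, run the induction on the hole component to obtain $\ottnt{C'_{{\mathrm{1}}}}$, and then combine the freshness condition returned by induction with the original disjointness to re-establish $   \ottkw{dom} ( \ottnt{C'_{{\mathrm{1}}}} )  \cap  \ottkw{dom} ( \ottnt{C_{{\mathrm{2}}}} )   \ottsym{=} \emptyset $, finally reassembling the parallel heap typing with \propref{heap/par2}. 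A further wrinkle appears in $\ottnt{V_{{\mathrm{1}}}} \, \mathcal{E}_{{\mathrm{2}}}$ and $\ottnt{V_{{\mathrm{1}}}}  \odot  \mathcal{E}_{{\mathrm{2}}}$, where $\ottnt{V_{{\mathrm{1}}}}$ must be analysed via \propref{typing/value/unr} (its context collapses to $ \cdot $) so that the split degenerates and the hole's context carries essentially the whole heap.

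I expect the main obstacle to be the run-time--context bookkeeping rather than the typing reconstruction: each inductive step must pick the correct ambient context to thread into the hypothesis, and must show that the freshness invariant $ \ottsym{(}    \ottkw{dom} ( \ottnt{C'} )  \setminus  \ottkw{dom} ( \ottnt{C} )    \ottsym{)} \cap  \ottkw{dom} ( \ottnt{C''} )   \ottsym{=} \emptyset $ survives the split-and-recombine, which is exactly where the interplay between the parallel disjointness lemmas \propref{heap/par}/\propref{heap/par2} and the effect-free lemma \propref{subj/cmp/noeffect/ctx} has to be orchestrated with care. Dispatching the remaining symmetric cases ($\ottnt{M_{{\mathrm{1}}}}  {}^<  \mathcal{E}_{{\mathrm{2}}}$, the two $\odot$ subcases, and the second-component $\otimes$ subcase) should then follow the same template, appealing to commutativity of $\parallel$ whenever the hole occupies the second parallel component.
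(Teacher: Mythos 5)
Your proposal is correct and takes essentially the same route as the paper's proof: structural induction on $\mathcal{E}$ with the base case discharged by \propref{subj/cmp/effect}, sequential contexts handled by the induction hypothesis with ambient $\ottnt{C_{{\mathrm{2}}}}  \ottsym{,}  \ottnt{C''}$, parallel contexts by the \propref{heap/par}/\propref{heap/par2} disjointness-and-reassembly argument, value-headed contexts via \propref{typing/value/unr}, and effect-free hole positions dispatched to \propref{subj/cmp/noeffect/ctx}. The only cases you do not name are the two let-elimination contexts, but they fit your effect-free-hole template verbatim (their scrutinee carries effect $\ottsym{0}$ by \ruleref{T-ULet}/\ruleref{T-OLet}, and \propref{subj/cmp/noeffect/ctx} already accepts the pattern-shaped ambient $\mathcal{C}$ needed there), which is precisely how the paper treats them.
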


\begin{prop}{subj}
    If
    \begin{gather}
        \ottnt{M}  \mid  \mathcal{H}  \longrightarrow  \ottnt{M'}  \mid  \mathcal{H}' \hyp{0.1}\\
        \ottnt{C}  \vdash  \ottnt{M}  :  \ottnt{T}  \mid  \ottnt{e} \hyp{0.2}\\
        \ottnt{C}  \vdash  \mathcal{H} \hyp{0.3},
    \end{gather}
    then there exists $\ottnt{C'}$ such that $\ottnt{C'}  \vdash  \ottnt{M'}  :  \ottnt{T}  \mid  \ottnt{e}$ and $\ottnt{C'}  \vdash  \mathcal{H}'$.

    \proof By case analysis on the last rule of the given derivation of \hypref{0.1}.
    \begin{match}
        \item[\ruleref{R-Exp}]
        In this case,
        \begin{gather}
             \ottnt{M} \ottsym{=} \mathcal{E}  \ottsym{[}  \ottnt{M_{{\mathrm{1}}}}  \ottsym{]}  \hyp{1.1}\\
             \ottnt{M'} \ottsym{=} \mathcal{E}  \ottsym{[}  \ottnt{M'_{{\mathrm{1}}}}  \ottsym{]}  \hyp{1.2}\\
             \mathcal{H} \ottsym{=} \mathcal{H}'  \hyp{1.3}\\
            \ottnt{M_{{\mathrm{1}}}}  \longrightarrow_\beta  \ottnt{M'_{{\mathrm{1}}}} \hyp{1.4}.
        \end{gather}
        So, we have $\ottnt{C}  \vdash  \mathcal{E}  \ottsym{[}  \ottnt{M'_{{\mathrm{1}}}}  \ottsym{]}  :  \ottnt{T}  \mid  \ottnt{e}$ by \propref{subj/beta/ctx}.
        To this end, we choose $ \ottnt{C'} \ottsym{=} \ottnt{C} $, which satisfies the required conditions.

        \item[\ruleref{R-Cfg}]
        In this case,
        \begin{gather}
             \ottnt{M} \ottsym{=} \mathcal{E}  \ottsym{[}  \ottnt{M_{{\mathrm{1}}}}  \ottsym{]}  \hyp{2.1}\\
             \ottnt{M'} \ottsym{=} \mathcal{E}  \ottsym{[}  \ottnt{M'_{{\mathrm{1}}}}  \ottsym{]}  \hyp{2.2}\\
            \ottnt{M_{{\mathrm{1}}}}  \mid  \mathcal{H}  \longrightarrow_\gamma  \ottnt{M'_{{\mathrm{1}}}}  \mid  \mathcal{H}' \hyp{2.3}.
        \end{gather}
        So, we have some $\ottnt{C'}$ such that
        \begin{gather}
            \ottnt{C'}  \vdash  \mathcal{E}  \ottsym{[}  \ottnt{M'_{{\mathrm{1}}}}  \ottsym{]}  :  \ottnt{T}  \mid  \ottnt{e} \hyp{2.4}\\
            \ottnt{C'}  \vdash  \mathcal{H}' \hyp{2.5}
        \end{gather}
        by \propref{subj/cmp/effect/ctx}.
        To this end, we choose $\ottnt{C'}$ as the one obtained above, which has satisfied the required conditions.
    \end{match}
\end{prop}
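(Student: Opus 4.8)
The plan is to prove this top-level subject-reduction statement by a straightforward case analysis on the last rule used to derive the contextual step $M \mid \mathcal{H} \longrightarrow M' \mid \mathcal{H}'$. Contextual reduction is generated by exactly the two rules \ruleref{R-Exp} and \ruleref{R-Cfg}, each of which exposes a redex inside an evaluation context $\mathcal{E}$, so these two cases are exhaustive. In each case I will exhibit a witness $C'$ and then discharge the two conclusions $C' \vdash M' : T \mid e$ and $C' \vdash \mathcal{H}'$ separately.

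In the \ruleref{R-Exp} case the step has the shape $\mathcal{E}[M_1] \mid \mathcal{H} \longrightarrow \mathcal{E}[M_1'] \mid \mathcal{H}$ with $M_1 \longrightarrow_\beta M_1'$ and the heap left untouched ($\mathcal{H}' = \mathcal{H}$). Since $\beta$-reduction is a purely syntactic rewrite that neither allocates, operates on, nor disposes of a resource, the run-time context need not change, so I would simply take $C' = C$. The first conclusion $C \vdash \mathcal{E}[M_1'] : T \mid e$ is then exactly what the lemma \propref{subj/beta/ctx} produces from $M_1 \longrightarrow_\beta M_1'$ together with the typing hypothesis $C \vdash \mathcal{E}[M_1] : T \mid e$, and the second conclusion is the heap-typing hypothesis $C \vdash \mathcal{H}$ verbatim.

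The \ruleref{R-Cfg} case is where the real content sits: the step is $\mathcal{E}[M_1] \mid \mathcal{H} \longrightarrow \mathcal{E}[M_1'] \mid \mathcal{H}'$ driven by a configuration reduction $M_1 \mid \mathcal{H} \longrightarrow_\gamma M_1' \mid \mathcal{H}'$ that genuinely mutates the heap. Here I would invoke the contextual configuration-preservation lemma \propref{subj/cmp/effect/ctx}. That lemma is stated with an auxiliary frame context $C''$ and the heap-typing hypothesis $C, C'' \vdash \mathcal{H}$, so the only bookkeeping left is to reconcile its interface with the single-context hypothesis $C \vdash \mathcal{H}$. I would instantiate $C'' = \cdot$: using the unit law $C, \cdot \simeq C$ (\propref{env/equiv}) and that equivalence entails the subcontext relation (rule \textsc{GS-Equiv}), I transport the heap typing via \propref{heap/weaken} to obtain $C, \cdot \vdash \mathcal{H}$. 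The lemma then returns some $C'$ with $C' \vdash \mathcal{E}[M_1'] : T \mid e$ and $C', \cdot \vdash \mathcal{H}'$; transporting back along $C', \cdot \simeq C'$ (again \propref{env/equiv} and \propref{heap/weaken}) yields $C' \vdash \mathcal{H}'$, and the freshness side-condition on $C'$ is vacuous for the empty frame. Choosing this $C'$ discharges the goal.

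Because all the delicate resource-tracking reasoning — threading the run-time context through allocation, operation, splitting, and disposal — has already been pushed into \propref{subj/cmp/effect/ctx}, and beneath it into the effect-free variant \propref{subj/cmp/noeffect/ctx} and the per-rule heap-tracking lemmas, the present statement itself carries essentially no mathematical obstacle. The one point that genuinely demands care is the reconciliation just described: justifying that the empty frame $C'' = \cdot$ is a legitimate instantiation and that the heap-typing judgement is invariant under the monoid unit law. I would therefore make that invariance explicit through \propref{heap/weaken} rather than leaving it tacit, since it is the only nontrivial move in an otherwise routine two-way dispatch.
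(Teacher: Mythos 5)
Your proof is correct and takes essentially the same approach as the paper's: the same two-case analysis on \textsc{R-Exp} and \textsc{R-Cfg}, discharging the first with the $\beta$-preservation-in-context lemma and the witness $C' = C$, and the second with the contextual configuration-preservation lemma. The only difference is that you spell out the instantiation $C'' = \cdot$ and the transport of the heap typing across the unit law $C , \cdot \simeq C$ (via context equivalence and heap weakening), a bookkeeping step the paper's proof applies tacitly.
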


  \subsection{Properties for progress}
  \begin{prop}{progress/new}
    $ \ottkw{new} _{ \ottnt{m} }  \, \ottkw{unit}  \mid  \mathcal{H}  \longrightarrow  \ottnt{M'}  \mid  \mathcal{H}'$ for some $\ottnt{M'}$ and $\mathcal{H}'$.

    \proof We choose $ \ottnt{M'} \ottsym{=} \ottmv{l} $ such that $ \ottmv{l}  \notin   \ottkw{dom} ( \mathcal{H} )  $ --- this is always possible since $\mathcal{H}$ is a finite map --- and $ \mathcal{H}' \ottsym{=}  \mathcal{H} \cup \ottsym{\{} \ottmv{l}  \mapsto  \ottsym{(}  \ottsym{0}  \ottsym{,}  \ottnt{m}  \ottsym{,}  \varepsilon  \ottsym{)} \ottsym{\}}  $.
    Then, what we need to show is
    \begin{gather*}
         \ottkw{new} _{ \ottnt{m} }  \, \ottkw{unit}  \mid  \mathcal{H}  \longrightarrow  \ottmv{l}  \mid   \mathcal{H} \cup \ottsym{\{} \ottmv{l}  \mapsto  \ottsym{(}  \ottsym{0}  \ottsym{,}  \ottnt{m}  \ottsym{,}  \varepsilon  \ottsym{)} \ottsym{\}} ,
    \end{gather*}
    which can be derived by \ruleref{RC-Ne} and \ruleref{R-Cfg}.
\end{prop}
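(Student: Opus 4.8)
The plan is to prove this progress fact by exhibiting an explicit reduction step, since the conclusion is purely existential: it suffices to name one pair $\ottnt{M'}, \mathcal{H}'$ together with a derivation of $ \ottkw{new} _{ \ottnt{m} }  \, \ottkw{unit}  \mid  \mathcal{H}  \longrightarrow  \ottnt{M'}  \mid  \mathcal{H}'$. Unlike the general progress lemmas, there is no typing hypothesis and no case analysis on the shape of the term, because the term is already fixed to the redex $ \ottkw{new} _{ \ottnt{m} }  \, \ottkw{unit} $. Inspecting the defining rules for configuration reduction, the only rule whose left-hand side matches this shape is \ruleref{RC-Ne}, so the whole argument amounts to instantiating that rule and then lifting it to a contextual step.

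First I would discharge the single side condition of \ruleref{RC-Ne}, namely the freshness requirement $ \ottmv{l}  \notin   \ottkw{dom} ( \mathcal{H} )  $. Since a heap is by definition a finite map and locations are drawn from an infinite set, a fresh $\ottmv{l}$ always exists; I would fix any such $\ottmv{l}$ and take the witnesses to be $ \ottnt{M'} \ottsym{=} \ottmv{l} $ and $ \mathcal{H}' \ottsym{=}  \mathcal{H} \cup \ottsym{\{} \ottmv{l}  \mapsto  \ottsym{(}  \ottsym{0}  \ottsym{,}  \ottnt{m}  \ottsym{,}  \varepsilon  \ottsym{)} \ottsym{\}}  $. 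Rule \ruleref{RC-Ne} then yields the configuration reduction $ \ottkw{new} _{ \ottnt{m} }  \, \ottkw{unit}  \mid  \mathcal{H}  \longrightarrow_\gamma  \ottmv{l}  \mid  \mathcal{H}'$ directly.

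Finally I would lift this $ \longrightarrow_\gamma $ step to the contextual relation $ \longrightarrow $ by applying \ruleref{R-Cfg} with the empty evaluation context $\ottsym{[]}$, using that $\ottsym{[]}  \ottsym{[}   \ottkw{new} _{ \ottnt{m} }  \, \ottkw{unit}  \ottsym{]}$ is $ \ottkw{new} _{ \ottnt{m} }  \, \ottkw{unit} $ and $\ottsym{[]}  \ottsym{[}  \ottmv{l}  \ottsym{]}$ is $\ottmv{l}$. This produces $ \ottkw{new} _{ \ottnt{m} }  \, \ottkw{unit}  \mid  \mathcal{H}  \longrightarrow  \ottmv{l}  \mid  \mathcal{H}'$, which is exactly the desired witness. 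I do not expect any genuine obstacle here: there is no inversion, canonical-forms reasoning, or effect bookkeeping to carry out, and the proof does not even require a heap-typing assumption. The only point deserving a line of justification is the existence of a fresh location, and that is immediate from finiteness of $ \ottkw{dom} ( \mathcal{H} ) $.
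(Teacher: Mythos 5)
Your proposal is correct and follows exactly the paper's argument: pick a fresh $\ottmv{l}$ (justified by finiteness of $ \ottkw{dom} ( \mathcal{H} ) $), take $ \ottnt{M'} \ottsym{=} \ottmv{l} $ and $ \mathcal{H}' \ottsym{=}  \mathcal{H} \cup \ottsym{\{} \ottmv{l}  \mapsto  \ottsym{(}  \ottsym{0}  \ottsym{,}  \ottnt{m}  \ottsym{,}  \varepsilon  \ottsym{)} \ottsym{\}}  $, then derive the step by \textsc{RC-Ne} followed by \textsc{R-Cfg}. Your explicit mention of instantiating \textsc{R-Cfg} with the empty evaluation context is just a slightly more detailed spelling of the same derivation the paper gives.
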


\begin{prop}{progress/split}
    If $\mathcal{C}  \ottsym{[}  \ottmv{l}  \mathord:  \ottsym{[}  \ottnt{m}  \ottsym{]}  \ottsym{]}  \vdash  \mathcal{H}$, then $ \ottkw{split} _{ \ottnt{m_{{\mathrm{1}}}} , \ottnt{m_{{\mathrm{2}}}} }  \, \ottmv{l}  \mid  \mathcal{H}  \longrightarrow  \ottnt{M'}  \mid  \mathcal{H}'$ for some $\ottnt{M'}$ and $\mathcal{H}'$.

    \proof We have some $ \ottsym{(}  \ottnt{n}  \ottsym{,}  \ottnt{m_{{\mathrm{0}}}}  \ottsym{,}  \ottnt{m'}  \ottsym{)} \ottsym{=} \mathcal{H}  \ottsym{(}  \ottmv{l}  \ottsym{)} $ by the assumption.
    To this end, we choose $ \ottnt{M'} \ottsym{=} \ottmv{l}  \odot  \ottmv{l} $ and $ \mathcal{H}' \ottsym{=}  \mathcal{H} \cup \ottsym{\{} \ottmv{l}  \mapsto  \ottsym{(}  \ottnt{n}  \ottsym{+}  \ottsym{1}  \ottsym{,}  \ottnt{m_{{\mathrm{0}}}}  \ottsym{,}  \ottnt{m'}  \ottsym{)} \ottsym{\}}  $.
    Then, what we need to show is
    \begin{gather}
         \ottkw{split} _{ \ottnt{m_{{\mathrm{1}}}} , \ottnt{m_{{\mathrm{2}}}} }  \, \ottmv{l}  \mid  \mathcal{H}  \longrightarrow  \ottmv{l}  \odot  \ottmv{l}  \mid   \mathcal{H} \cup \ottsym{\{} \ottmv{l}  \mapsto  \ottsym{(}  \ottnt{n}  \ottsym{+}  \ottsym{1}  \ottsym{,}  \ottnt{m_{{\mathrm{0}}}}  \ottsym{,}  \ottnt{m'}  \ottsym{)} \ottsym{\}} .
    \end{gather}
    Since $ \mathcal{H} \ottsym{=}  \mathcal{H} \cup \ottsym{\{} \ottmv{l}  \mapsto  \ottsym{(}  \ottnt{n}  \ottsym{,}  \ottnt{m_{{\mathrm{0}}}}  \ottsym{,}  \ottnt{m'}  \ottsym{)} \ottsym{\}}  $, we can derive that by \ruleref{RC-Sp} and \ruleref{R-Cfg}.
\end{prop}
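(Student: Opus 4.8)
The plan is to treat this as a one-step progress statement for a fixed redex and to notice that the splitting rule \ruleref{RC-Sp} carries no runtime side condition, so the hypothesis $\mathcal{C}  \ottsym{[}  \ottmv{l}  \mathord:  \ottsym{[}  \ottnt{m}  \ottsym{]}  \ottsym{]}  \vdash  \mathcal{H}$ is needed only to guarantee that the location $\ottmv{l}$ actually occurs in the heap. First I would unfold the heap typing judgment: by the second clause in the definition of $\ottnt{C}  \vdash  \mathcal{H}$ the domains agree, i.e.\ $  \ottkw{dom} ( \mathcal{C}  \ottsym{[}  \ottmv{l}  \mathord:  \ottsym{[}  \ottnt{m}  \ottsym{]}  \ottsym{]} )  \ottsym{=}  \ottkw{dom} ( \mathcal{H} )  $. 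Since the binding $\ottmv{l}  \mathord:  \ottsym{[}  \ottnt{m}  \ottsym{]}$ sits in the hole of the context pattern, $ \ottmv{l}  \in   \ottkw{dom} ( \mathcal{C}  \ottsym{[}  \ottmv{l}  \mathord:  \ottsym{[}  \ottnt{m}  \ottsym{]}  \ottsym{]} )  $, and hence $ \ottmv{l}  \in   \ottkw{dom} ( \mathcal{H} )  $. This produces a triple $ \ottsym{(}  \ottnt{n}  \ottsym{,}  \ottnt{m_{{\mathrm{0}}}}  \ottsym{,}  \ottnt{m'}  \ottsym{)} \ottsym{=} \mathcal{H}  \ottsym{(}  \ottmv{l}  \ottsym{)} $, so that $\mathcal{H}$ can be rewritten in the shape $ \mathcal{H} \cup \ottsym{\{} \ottmv{l}  \mapsto  \ottsym{(}  \ottnt{n}  \ottsym{,}  \ottnt{m_{{\mathrm{0}}}}  \ottsym{,}  \ottnt{m'}  \ottsym{)} \ottsym{\}} $ expected on the left of \ruleref{RC-Sp}.

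Next I would supply the witnesses $ \ottnt{M'} \ottsym{=} \ottmv{l}  \odot  \ottmv{l} $ and $ \mathcal{H}' \ottsym{=}  \mathcal{H} \cup \ottsym{\{} \ottmv{l}  \mapsto  \ottsym{(}  \ottnt{n}  \ottsym{+}  \ottsym{1}  \ottsym{,}  \ottnt{m_{{\mathrm{0}}}}  \ottsym{,}  \ottnt{m'}  \ottsym{)} \ottsym{\}} $ and assemble the contextual step in two moves. Configuration reduction \ruleref{RC-Sp} applies immediately: it only increments the reference counter and duplicates the location, imposing no relation among $\ottnt{m_{{\mathrm{1}}}}$, $\ottnt{m_{{\mathrm{2}}}}$ and the stored usage (as the text explains, those admissibility conditions are discharged statically by the type system, not at run time). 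Finally, \ruleref{R-Cfg} instantiated at the empty evaluation context $ \mathcal{E} \ottsym{=} \ottsym{[]} $ lifts this $ \longrightarrow_\gamma $ step to the desired $ \ottkw{split} _{ \ottnt{m_{{\mathrm{1}}}} , \ottnt{m_{{\mathrm{2}}}} }  \, \ottmv{l}  \mid  \mathcal{H}  \longrightarrow  \ottmv{l}  \odot  \ottmv{l}  \mid  \mathcal{H}'$.

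I do not expect a genuine obstacle here: the entire content of the argument is extracting the heap entry for $\ottmv{l}$ from the domain-agreement clause of heap typing, after which the reduction rules fire mechanically. I would structure the proof to mirror \propref{progress/new} almost verbatim, the only conceptual difference being that there a fresh location is chosen \emph{outside} $ \ottkw{dom} ( \mathcal{H} ) $, whereas here the already-present location is recovered from \emph{inside} it via the hypothesis. If a subtlety arises at all, it will be justifying $ \ottmv{l}  \in   \ottkw{dom} ( \mathcal{C}  \ottsym{[}  \ottmv{l}  \mathord:  \ottsym{[}  \ottnt{m}  \ottsym{]}  \ottsym{]} )  $ cleanly, but this is immediate from the definition of $ \ottkw{dom} (  \cdot  ) $ on context patterns (or a trivial induction on $\mathcal{C}$).
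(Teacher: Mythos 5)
Your proposal is correct and follows essentially the same route as the paper's proof: extract the heap entry $ \ottsym{(}  \ottnt{n}  \ottsym{,}  \ottnt{m_{{\mathrm{0}}}}  \ottsym{,}  \ottnt{m'}  \ottsym{)} \ottsym{=} \mathcal{H}  \ottsym{(}  \ottmv{l}  \ottsym{)} $ from the heap-typing hypothesis, pick the witnesses $\ottmv{l}  \odot  \ottmv{l}$ and $ \mathcal{H} \cup \ottsym{\{} \ottmv{l}  \mapsto  \ottsym{(}  \ottnt{n}  \ottsym{+}  \ottsym{1}  \ottsym{,}  \ottnt{m_{{\mathrm{0}}}}  \ottsym{,}  \ottnt{m'}  \ottsym{)} \ottsym{\}} $, and fire \ruleref{RC-Sp} followed by \ruleref{R-Cfg} at the empty context. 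The only difference is that you spell out the domain-agreement argument for $ \ottmv{l}  \in   \ottkw{dom} ( \mathcal{H} )  $, which the paper compresses into ``by the assumption.''
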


\begin{prop}{progress/op}
    If
    \begin{gather}
        \ottmv{l}  \mathord:  \ottsym{[}  \ottnt{m}  \ottsym{]}  \ottsym{,}  \ottnt{C}  \vdash  \mathcal{H} \hyp{0.1}\\
          \ottnt{m_{{\mathrm{1}}}} \odot \ottnt{m_{{\mathrm{2}}}}  \le \ottnt{m}  \hyp{0.2},
    \end{gather}
    then $ \ottkw{op} _{ \ottnt{m_{{\mathrm{1}}}} }  \, \ottmv{l}  \mid  \mathcal{H}  \longrightarrow  \ottnt{M'}  \mid  \mathcal{H}'$ for some $\ottnt{M'}$ and $\mathcal{H}'$.

    \proof By definition, we have $ \ottsym{(}  \ottnt{n}  \ottsym{,}  \ottnt{m_{{\mathrm{0}}}}  \ottsym{,}  \ottnt{m'}  \ottsym{)} \ottsym{=} \mathcal{H}  \ottsym{(}  \ottmv{l}  \ottsym{)} $ such that
    \begin{gather}
           \ottnt{m'} \odot \ottnt{m}  \odot  \overline{  \langle \ottnt{C'} \rangle_{ \ottmv{l} }  }   \le \ottnt{m_{{\mathrm{0}}}} . \hyp{1.9}
    \end{gather}
    from \hypref{0.1}.
    To this end, we choose $ \ottnt{M'} \ottsym{=} \ottmv{l} $ and $ \mathcal{H}' \ottsym{=}  \mathcal{H} \cup \ottsym{\{} \ottmv{l}  \mapsto  \ottsym{(}  \ottnt{n}  \ottsym{,}  \ottnt{m_{{\mathrm{0}}}}  \ottsym{,}   \ottnt{m'} \odot \ottnt{m_{{\mathrm{1}}}}   \ottsym{)} \ottsym{\}}  $.
    Note that $  \mathcal{H} \cup \ottsym{\{} \ottmv{l}  \mapsto  \ottsym{(}  \ottnt{n}  \ottsym{,}  \ottnt{m_{{\mathrm{0}}}}  \ottsym{,}  \ottnt{m'}  \ottsym{)} \ottsym{\}}  \ottsym{=} \mathcal{H} $.
    So, what we need to show is
    \begin{gather*}
         \ottkw{op} _{ \ottnt{m_{{\mathrm{1}}}} }  \, \ottmv{l}  \mid   \mathcal{H} \cup \ottsym{\{} \ottmv{l}  \mapsto  \ottsym{(}  \ottnt{n}  \ottsym{,}  \ottnt{m_{{\mathrm{0}}}}  \ottsym{,}  \ottnt{m'}  \ottsym{)} \ottsym{\}}   \longrightarrow  \ottmv{l}  \mid   \mathcal{H} \cup \ottsym{\{} \ottmv{l}  \mapsto  \ottsym{(}  \ottnt{n}  \ottsym{,}  \ottnt{m_{{\mathrm{0}}}}  \ottsym{,}   \ottnt{m'} \odot \ottnt{m_{{\mathrm{1}}}}   \ottsym{)} \ottsym{\}} .
    \end{gather*}
    We have $    \ottnt{m'} \odot \ottnt{m_{{\mathrm{1}}}}  \odot \ottnt{m_{{\mathrm{2}}}}  \odot  \overline{  \langle \ottnt{C'} \rangle_{ \ottmv{l} }  }   \le \ottnt{m_{{\mathrm{0}}}} $ by \hypref{0.2} and \hypref{1.9}.
    So, we can derive the goal by \ruleref{RC-Op}, given $ \ottnt{m''} \ottsym{=}  \ottnt{m_{{\mathrm{2}}}} \odot  \overline{  \langle \ottnt{C'} \rangle_{ \ottmv{l} }  }   $, and \ruleref{R-Cfg}.
\end{prop}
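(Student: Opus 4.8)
The plan is to exhibit a single reduction step for $\ottkw{op}_{\ottnt{m_1}}\,\ottmv{l}\mid\mathcal{H}$ using rule \textsc{RC-Op} lifted through \textsc{R-Cfg} under the empty evaluation context, so that the only substantive work is verifying that the side condition of \textsc{RC-Op} is satisfiable. First I would unfold the heap-typing hypothesis $\ottmv{l}\mathord:\ottsym{[}\ottnt{m}\ottsym{]}\ottsym{,}\ottnt{C}\vdash\mathcal{H}$. Since $\ottmv{l}\in\ottkw{dom}(\mathcal{H})$, write $\ottsym{(}\ottnt{n}\ottsym{,}\ottnt{m_0}\ottsym{,}\ottnt{m'}\ottsym{)}=\mathcal{H}\ottsym{(}\ottmv{l}\ottsym{)}$; the third clause of the heap-typing definition then supplies $\ottnt{m'}\odot\overline{\langle\ottmv{l}\mathord:\ottsym{[}\ottnt{m}\ottsym{]}\ottsym{,}\ottnt{C}\rangle_{\ottmv{l}}}\le\ottnt{m_0}$. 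Computing the focus and usage projection on $\ottmv{l}$ — all non-$\ottmv{l}$ location bindings are erased, and $\ottmv{l}\mathord:\ottsym{[}\ottnt{m}\ottsym{]}$ sits first in the (unique, since $\ottnt{C}$ is order-defined) topological order, so that by the concatenation law for usage projection $\overline{\langle\ottmv{l}\mathord:\ottsym{[}\ottnt{m}\ottsym{]}\ottsym{,}\ottnt{C}\rangle_{\ottmv{l}}}=\ottnt{m}\odot\overline{\langle\ottnt{C}\rangle_{\ottmv{l}}}$ — this inequality becomes $\ottnt{m'}\odot\ottnt{m}\odot\overline{\langle\ottnt{C}\rangle_{\ottmv{l}}}\le\ottnt{m_0}$, whose left side is in particular defined.

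Next I would fix the witnesses $\ottnt{M'}=\ottmv{l}$ and $\mathcal{H}'=\mathcal{H}\cup\ottsym{\{}\ottmv{l}\mapsto\ottsym{(}\ottnt{n}\ottsym{,}\ottnt{m_0}\ottsym{,}\ottnt{m'}\odot\ottnt{m_1}\ottsym{)}\ottsym{\}}$, matching the conclusion of \textsc{RC-Op}, and observe $\mathcal{H}=\mathcal{H}\cup\ottsym{\{}\ottmv{l}\mapsto\ottsym{(}\ottnt{n}\ottsym{,}\ottnt{m_0}\ottsym{,}\ottnt{m'}\ottsym{)}\ottsym{\}}$ so the source configuration already has the shape demanded by the rule. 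The side condition of \textsc{RC-Op} asks for some $\ottnt{m''}$ with $\ottnt{m'}\odot\ottnt{m_1}\odot\ottnt{m''}\le\ottnt{m_0}$; I would take $\ottnt{m''}=\ottnt{m_2}\odot\overline{\langle\ottnt{C}\rangle_{\ottmv{l}}}$. To discharge it, I combine the hypothesis $\ottnt{m_1}\odot\ottnt{m_2}\le\ottnt{m}$ with the heap-typing inequality from the first paragraph: since $\ottnt{m'}\odot\ottnt{m}\odot\overline{\langle\ottnt{C}\rangle_{\ottmv{l}}}$ is defined, the downwards-closure/monotonicity axiom of the OPM (applied with reflexivity $\ottnt{m'}\le\ottnt{m'}$ and $\overline{\langle\ottnt{C}\rangle_{\ottmv{l}}}\le\overline{\langle\ottnt{C}\rangle_{\ottmv{l}}}$) yields that $\ottnt{m'}\odot\ottsym{(}\ottnt{m_1}\odot\ottnt{m_2}\ottsym{)}\odot\overline{\langle\ottnt{C}\rangle_{\ottmv{l}}}$ is defined and bounded above by $\ottnt{m_0}$; reassociating via associativity regroups this as $\ottnt{m'}\odot\ottnt{m_1}\odot\ottsym{(}\ottnt{m_2}\odot\overline{\langle\ottnt{C}\rangle_{\ottmv{l}}}\ottsym{)}\le\ottnt{m_0}$, which is exactly the required bound.

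Finally, with the side condition in hand, \textsc{RC-Op} gives $\ottkw{op}_{\ottnt{m_1}}\,\ottmv{l}\mid\mathcal{H}\longrightarrow_\gamma\ottmv{l}\mid\mathcal{H}'$, and a single application of \textsc{R-Cfg} with the empty evaluation context promotes this configuration reduction to the contextual reduction $\longrightarrow$ asserted by the statement.

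The step I expect to be delicate is the OPM bookkeeping in the second paragraph. One must invoke the downwards-closed/monotone axiom in the correct orientation — it runs from the larger, already-known-defined product (containing $\ottnt{m}$) down to the smaller one (containing $\ottnt{m_1}\odot\ottnt{m_2}$), so it simultaneously furnishes definedness of $\ottnt{m'}\odot\ottsym{(}\ottnt{m_1}\odot\ottnt{m_2}\ottsym{)}\odot\overline{\langle\ottnt{C}\rangle_{\ottmv{l}}}$ and the inequality — and then re-associate cleanly to isolate the witness $\ottnt{m''}$. Everything else is a direct reading of the operational rules and the heap-typing definition.
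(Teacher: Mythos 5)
Your proposal is correct and follows essentially the same route as the paper's proof: unfold the heap-typing hypothesis at $\ottmv{l}$ to obtain $ \ottnt{m'} \odot \ottnt{m}  \odot  \overline{  \langle \ottnt{C} \rangle_{ \ottmv{l} }  }   \le \ottnt{m_{{\mathrm{0}}}} $, pick the same witnesses $ \ottnt{M'} \ottsym{=} \ottmv{l} $ and $ \mathcal{H}' \ottsym{=}  \mathcal{H} \cup \ottsym{\{} \ottmv{l}  \mapsto  \ottsym{(}  \ottnt{n}  \ottsym{,}  \ottnt{m_{{\mathrm{0}}}}  \ottsym{,}   \ottnt{m'} \odot \ottnt{m_{{\mathrm{1}}}}   \ottsym{)} \ottsym{\}} $, discharge the side condition of \textsc{RC-Op} with $ \ottnt{m''} \ottsym{=}  \ottnt{m_{{\mathrm{2}}}} \odot  \overline{  \langle \ottnt{C} \rangle_{ \ottmv{l} }  }  $, and close with \textsc{R-Cfg}. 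You in fact spell out two steps the paper leaves implicit — the usage-projection concatenation $ \overline{  \langle \ottsym{(}  \ottmv{l}  \mathord:  \ottsym{[}  \ottnt{m}  \ottsym{]}  \ottsym{,}  \ottnt{C}  \ottsym{)} \rangle_{ \ottmv{l} }  }  \ottsym{=}  \ottnt{m} \odot  \overline{  \langle \ottnt{C} \rangle_{ \ottmv{l} }  }  $ and the exact orientation of the OPM downwards-closure/monotonicity axiom — which is a faithful elaboration, not a different argument.
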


\begin{prop}{progress/close}
    If
    \begin{gather}
        \mathcal{C}  \ottsym{[}  \ottmv{l}  \mathord:  \ottsym{[}  \ottnt{m}  \ottsym{]}  \ottsym{]}  \vdash  \mathcal{H} \hyp{0.1}\\
         \varepsilon \le \ottnt{m}  \hyp{0.2},
    \end{gather}
    then $\ottkw{drop} \, \ottmv{l}  \mid  \mathcal{H}  \longrightarrow  \ottnt{M'}  \mid  \mathcal{H}'$ for some $\ottnt{M'}$ and $\mathcal{H}'$.

    \proof We have some $ \ottsym{(}  \ottnt{n}  \ottsym{,}  \ottnt{m_{{\mathrm{0}}}}  \ottsym{,}  \ottnt{m'}  \ottsym{)} \ottsym{=} \mathcal{H}  \ottsym{(}  \ottmv{l}  \ottsym{)} $ such that
    \begin{gather}
          |  \lBrack  \langle \mathcal{C}  \ottsym{[}   \cdot   \ottsym{]} \rangle_{ \ottmv{l} }  \rBrack  |_{\bullet}  \ottsym{=} \ottnt{n}  \hyp{1.1}\\
          \ottnt{m'} \odot  \overline{  \langle \mathcal{C} \rangle_{ \ottmv{l} }   \ottsym{[}  \ottmv{l}  \mathord:  \ottsym{[}  \ottnt{m}  \ottsym{]}  \ottsym{]} }   \le \ottnt{m_{{\mathrm{0}}}}  \hyp{1.2}
    \end{gather}
    by \hypref{0.1}.
    Let $ \mathcal{H}_{{\mathrm{1}}} \ottsym{=}  \mathcal{H} \setminus \ottsym{\{} \ottmv{l}  \mapsto  \ottsym{(}  \ottnt{n}  \ottsym{,}  \ottnt{m_{{\mathrm{0}}}}  \ottsym{,}  \ottnt{m'}  \ottsym{)} \ottsym{\}}  $.
    Note that $  \mathcal{H}_{{\mathrm{1}}} \cup \ottsym{\{} \ottmv{l}  \mapsto  \ottsym{(}  \ottnt{n}  \ottsym{,}  \ottnt{m_{{\mathrm{0}}}}  \ottsym{,}  \ottnt{m'}  \ottsym{)} \ottsym{\}}  \ottsym{=} \mathcal{H} $, and $ \ottmv{l}  \notin   \ottkw{dom} ( \mathcal{H}_{{\mathrm{1}}} )  $.
    We consider whether $ \ottnt{n} \ottsym{=} \ottsym{0} $ or not.
    \begin{match}
        \item[$ \ottnt{n} \ottsym{=} \ottsym{0} $]
        We have $  \ottnt{m'} \odot \ottnt{m}  \le \ottnt{m_{{\mathrm{0}}}} $ by \hypref{1.1} and \hypref{1.2}.
        So, we can see
        \begin{gather}
             \ottnt{m'} \le \ottnt{m_{{\mathrm{0}}}}  \hyp{2.1}
        \end{gather}
        from \hypref{0.2}.
        To this end, we choose $ \ottnt{M'} \ottsym{=} \ottkw{unit} $ and $ \mathcal{H}' \ottsym{=} \mathcal{H}_{{\mathrm{1}}} $.
        What we need to show is
        \begin{gather}
            \ottkw{drop} \, \ottmv{l}  \mid   \mathcal{H}_{{\mathrm{1}}} \cup \ottsym{\{} \ottmv{l}  \mapsto  \ottsym{(}  \ottsym{0}  \ottsym{,}  \ottnt{m_{{\mathrm{0}}}}  \ottsym{,}  \ottnt{m'}  \ottsym{)} \ottsym{\}}   \longrightarrow  \ottkw{unit}  \mid  \mathcal{H}_{{\mathrm{1}}}.
        \end{gather}
        We can derived that by \ruleref{RC-Cl2}, \ruleref{R-Cfg}, and \hypref{2.1}.

        \item[$ \ottnt{n} \ottsym{=} \ottnt{n'}  \ottsym{+}  \ottsym{1} $ for some $\ottnt{n'}$]
        To this end, we choose $ \ottnt{M'} \ottsym{=} \ottkw{unit} $ and $ \mathcal{H}' \ottsym{=}  \mathcal{H}_{{\mathrm{1}}} \cup \ottsym{\{} \ottmv{l}  \mapsto  \ottsym{(}  \ottnt{n'}  \ottsym{,}  \ottnt{m_{{\mathrm{0}}}}  \ottsym{,}  \ottnt{m'}  \ottsym{)} \ottsym{\}}  $.
        What we need to show is
        \begin{gather}
            \ottkw{drop} \, \ottmv{l}  \mid   \mathcal{H}_{{\mathrm{1}}} \cup \ottsym{\{} \ottmv{l}  \mapsto  \ottsym{(}  \ottnt{n'}  \ottsym{+}  \ottsym{1}  \ottsym{,}  \ottnt{m_{{\mathrm{0}}}}  \ottsym{,}  \ottnt{m'}  \ottsym{)} \ottsym{\}}   \longrightarrow  \ottkw{unit}  \mid   \mathcal{H}_{{\mathrm{1}}} \cup \ottsym{\{} \ottmv{l}  \mapsto  \ottsym{(}  \ottnt{n'}  \ottsym{,}  \ottnt{m_{{\mathrm{0}}}}  \ottsym{,}  \ottnt{m'}  \ottsym{)} \ottsym{\}} ,
        \end{gather}
        which can be derived by \ruleref{RC-Cl1} and \ruleref{R-Cfg}.
    \end{match}
\end{prop}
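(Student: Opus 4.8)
The plan is to read the shape of the heap at $l$ off the hypothesis $\mathcal{C}[l:[m]] \vdash \mathcal{H}$ and let the recorded reference count decide which disposal rule fires. Since heap typing forces $\ottkw{dom}(\mathcal{C}[l:[m]]) = \ottkw{dom}(\mathcal{H})$ and $l$ is bound in the context, I first note $l \in \ottkw{dom}(\mathcal{H})$; write $\mathcal{H}(l) = (n, m_0, m')$ and decompose $\mathcal{H} = \mathcal{H}_1 \cup \{l \mapsto (n, m_0, m')\}$ with $l \notin \ottkw{dom}(\mathcal{H}_1)$. Unfolding the two instance conditions of heap typing at $l$ then supplies the reference-count equation $| \lBrack \langle \mathcal{C}[l:[m]] \rangle_l \rBrack |_{\bullet} = n + 1$ and the usage bound $m' \odot \overline{\langle \mathcal{C} \rangle_l [l:[m]]} \le m_0$. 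Because the target step is produced by \ruleref{R-Cfg} wrapping a $\longrightarrow_\gamma$ step under the empty evaluation context, it suffices to exhibit one configuration reduction for $\ottkw{drop}\,l$.

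Next I would case-split on $n$, matching the split between \ruleref{RC-Cl1} and \ruleref{RC-Cl2}. When $n = n' + 1$ the step is immediate, since \ruleref{RC-Cl1} carries no side condition: $\ottkw{drop}\,l \mid \mathcal{H}_1 \cup \{l \mapsto (n'+1, m_0, m')\} \longrightarrow_\gamma \ottkw{unit} \mid \mathcal{H}_1 \cup \{l \mapsto (n', m_0, m')\}$, and \ruleref{R-Cfg} finishes the case with $M' = \ottkw{unit}$.

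The substantive case is $n = 0$, where \ruleref{RC-Cl2} demands the side condition $m' \le m_0$ (its $l \notin \ottkw{dom}(\mathcal{H}_1)$ premise is already in hand). Here the reference-count equation collapses, after subtracting the single vertex contributed by $l:[m]$ via \propref{env/runtime/intrp/order/ctx}, to $| \lBrack \langle \mathcal{C} \rangle_l [\,\cdot\,] \rBrack |_{\bullet} = 0$; this says the focused context retains no ordered binding other than $l$ itself, so its usage projection degenerates to $\overline{\langle \mathcal{C} \rangle_l [l:[m]]} = m$. Substituting into the usage bound yields $m' \odot m \le m_0$, and the OPM axioms do the rest: from the hypothesis $\varepsilon \le m$, monotonicity and downward closure give $m' = m' \odot \varepsilon \le m' \odot m \le m_0$, hence $m' \le m_0$. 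With this, \ruleref{RC-Cl2} fires and removes $l$, giving $\ottkw{drop}\,l \mid \mathcal{H} \longrightarrow_\gamma \ottkw{unit} \mid \mathcal{H}_1$, again lifted by \ruleref{R-Cfg}.

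I expect this last case to be the main obstacle, and within it the delicate point is twofold: reading $\overline{\langle \mathcal{C} \rangle_l [l:[m]]} = m$ off the vertex-count collapse, so that the recorded usage $m'$ is genuinely bounded by $m' \odot m$; and then chaining the two OPM facts — the droppability hypothesis $\varepsilon \le m$ together with monotonicity/downward closure — to recover precisely the $m' \le m_0$ required by \ruleref{RC-Cl2}. The $n \ge 1$ branch and the lifting from $\longrightarrow_\gamma$ to $\longrightarrow$ are routine by comparison.
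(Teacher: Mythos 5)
Your proof is correct and takes essentially the same route as the paper's: the same decomposition $\mathcal{H} = \mathcal{H}_1 \cup \{l \mapsto (n, m_0, m')\}$, the same case split on the reference count $n$ dispatching to \textsc{RC-Cl1} versus \textsc{RC-Cl2} under \textsc{R-Cfg}, and the same key step of reading $\overline{\langle \mathcal{C} \rangle_l [l \mathord: [m]]} = m$ off the collapsed vertex count when $n = 0$ and then chaining $m' = m' \odot \varepsilon \le m' \odot m \le m_0$ via the OPM axioms. Your write-up merely spells out explicitly the OPM reasoning that the paper compresses into ``we can see $m' \le m_0$ from (0.2).''
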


\begin{prop}{progress/ctx}
    If $ \ottnt{M} \ottsym{=} \mathcal{E}  \ottsym{[}  \ottnt{M_{{\mathrm{1}}}}  \ottsym{]} $ and $\ottnt{M_{{\mathrm{1}}}}  \mid  \mathcal{H}  \longrightarrow  \ottnt{M'_{{\mathrm{1}}}}  \mid  \mathcal{H}'$, then there exists $\ottnt{M'}$ such that $\ottnt{M}  \mid  \mathcal{H}  \longrightarrow  \ottnt{M'}  \mid  \mathcal{H}'$.

    \proof We get some $\mathcal{E}'$, $\ottnt{M_{{\mathrm{2}}}}$, and $\ottnt{M'_{{\mathrm{2}}}}$ such that $ \ottnt{M_{{\mathrm{1}}}} \ottsym{=} \mathcal{E}'  \ottsym{[}  \ottnt{M_{{\mathrm{2}}}}  \ottsym{]} $, $ \ottnt{M'_{{\mathrm{1}}}} \ottsym{=} \mathcal{E}'  \ottsym{[}  \ottnt{M'_{{\mathrm{2}}}}  \ottsym{]} $, and (1) $\ottnt{M_{{\mathrm{2}}}}  \longrightarrow_\beta  \ottnt{M'_{{\mathrm{2}}}}$ and $ \mathcal{H} \ottsym{=} \mathcal{H}' $ or (2) $\ottnt{M_{{\mathrm{2}}}}  \mid  \mathcal{H}  \longrightarrow_\gamma  \ottnt{M'_{{\mathrm{2}}}}  \mid  \mathcal{H}'$ by the second assumption.
    To this end, we choose $ \ottnt{M'} \ottsym{=} \mathcal{E}  \ottsym{[}  \mathcal{E}'  \ottsym{[}  \ottnt{M'_{{\mathrm{2}}}}  \ottsym{]}  \ottsym{]} $.
    What we need to show is $\mathcal{E}  \ottsym{[}  \mathcal{E}'  \ottsym{[}  \ottnt{M_{{\mathrm{2}}}}  \ottsym{]}  \ottsym{]}  \mid  \mathcal{H}  \longrightarrow  \mathcal{E}  \ottsym{[}  \mathcal{E}'  \ottsym{[}  \ottnt{M'_{{\mathrm{2}}}}  \ottsym{]}  \ottsym{]}  \mid  \mathcal{H}'$, which can be derived by using \ruleref{R-Exp} or \ruleref{R-Cfg} depending on whether (1) or (2) holds.
\end{prop}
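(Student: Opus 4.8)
The plan is to reduce the claim to the single structural fact that evaluation contexts compose, i.e.\ that plugging one evaluation context into the hole of another yields an evaluation context. First I would invert the second hypothesis $M_1 \mid \mathcal{H} \longrightarrow M'_1 \mid \mathcal{H}'$. The contextual reduction relation is defined by exactly two rules, \ruleref{R-Exp} and \ruleref{R-Cfg}, each of which exposes an evaluation context. So inversion gives an evaluation context $\mathcal{E}'$, terms $M_2$ and $M'_2$ with $M_1 = \mathcal{E}'[M_2]$ and $M'_1 = \mathcal{E}'[M'_2]$, and one of two cases: either $M_2 \longrightarrow_\beta M'_2$ together with $\mathcal{H}' = \mathcal{H}$ (from \ruleref{R-Exp}), or $M_2 \mid \mathcal{H} \longrightarrow_\gamma M'_2 \mid \mathcal{H}'$ (from \ruleref{R-Cfg}).

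The key step is an auxiliary observation I would state and prove separately: for any evaluation contexts $\mathcal{E}$ and $\mathcal{E}'$, the map $N \mapsto \mathcal{E}[\mathcal{E}'[N]]$ is again described by some evaluation context $\mathcal{E} \circ \mathcal{E}'$. This is proved by a routine structural induction on $\mathcal{E}$: each grammar production for $\mathcal{E}$ has a unique hole, and replacing that hole's contents by the inductively obtained composite context is again one of the productions, while the base case $\mathcal{E} = []$ just returns $\mathcal{E}'$. Crucially, this works uniformly because the rules \ruleref{R-Exp} and \ruleref{R-Cfg} are schematic in the surrounding evaluation context, so once $\mathcal{E} \circ \mathcal{E}'$ is known to be an evaluation context those rules apply verbatim to it.

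Finally I would assemble the two pieces. Taking $M' := \mathcal{E}[\mathcal{E}'[M'_2]] = (\mathcal{E} \circ \mathcal{E}')[M'_2]$ and using $M = \mathcal{E}[M_1] = \mathcal{E}[\mathcal{E}'[M_2]] = (\mathcal{E} \circ \mathcal{E}')[M_2]$, I re-apply \ruleref{R-Exp} in the first case (with the underlying $\beta$-reduction $M_2 \longrightarrow_\beta M'_2$ and unchanged heap) or \ruleref{R-Cfg} in the second case (with the underlying configuration reduction $M_2 \mid \mathcal{H} \longrightarrow_\gamma M'_2 \mid \mathcal{H}'$), now instantiated at the composite context, to conclude $M \mid \mathcal{H} \longrightarrow M' \mid \mathcal{H}'$. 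I expect the only real content to be the composition-of-contexts observation; everything else is bookkeeping. The mild subtlety there is that the grammar of $\mathcal{E}$ places the hole in several different positions (left of an application, right of a value, inside a let-binder, and so on), so the induction has one case per production, but in each case the replacement is immediate and no side conditions on values or evaluation order are disturbed, since $\mathcal{E} \circ \mathcal{E}'$ merely substitutes a subterm deeper in the same hole.
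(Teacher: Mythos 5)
Your proposal is correct and follows essentially the same route as the paper's proof: invert the contextual reduction to expose $\mathcal{E}'$, $M_2$, $M'_2$ and the two cases ($\beta$-step with unchanged heap, or configuration step), then re-apply \textsc{R-Exp} or \textsc{R-Cfg} at the nested context $\mathcal{E}[\mathcal{E}'[\,]]$. The only difference is that you explicitly isolate and prove the composition-of-evaluation-contexts fact, which the paper's proof uses silently when it plugs $\mathcal{E}'[M'_2]$ back into $\mathcal{E}$ — a reasonable bit of extra rigor, not a different argument.
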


\begin{prop}{progress/noeffect}
    If
    \begin{gather}
        \ottnt{C}  \vdash  \ottnt{M}  :  \ottnt{T}  \mid  \ottsym{0} \hyp{0.1}\\
        \mathcal{C}  \ottsym{[}  \ottnt{C}  \ottsym{]}  \vdash  \mathcal{H} \hyp{0.2},
    \end{gather}
    then
    \begin{itemize}
        \item there exist $\ottnt{M'}$ and $\mathcal{H}'$ such that $\ottnt{M}  \mid  \mathcal{H}  \longrightarrow  \ottnt{M'}  \mid  \mathcal{H}'$, or
        \item $ \ottnt{M} \ottsym{=} \ottnt{V} $ for some $\ottnt{V}$.
    \end{itemize}

    \proof By induction on the given derivation of \hypref{0.1}.
    \begin{match}
        \item[]\ruleref{T-Unit}, \ruleref{T-New}, \ruleref{T-Op}, \ruleref{T-Split}, \ruleref{T-Close}, \ruleref{T-Loc}, \ruleref{T-Abs}, \ruleref{T-UAbs}, \ruleref{T-RAbs}, and \ruleref{T-LAbs}.
        $\ottnt{M}$ is a value in these cases.
        Hence, the second item of the goal holds.

        \item[\ruleref{T-Var}]
        This case is vacuously true because $ \ottnt{C} \ottsym{=} \ottmv{x}  \mathord:  \ottnt{T} $, but this is impossible.

        \item[\ruleref{T-App}]
        In this case,
        \begin{gather}
             \ottnt{C} \ottsym{=} \ottnt{C_{{\mathrm{1}}}}  \ottsym{,}  \ottnt{C_{{\mathrm{2}}}}  \hyp{1.1}\\
             \ottnt{M} \ottsym{=} \ottnt{M_{{\mathrm{1}}}} \, \ottnt{M_{{\mathrm{2}}}}  \hyp{1.1.1}\\
               \ottnt{e_{{\mathrm{0}}}}  \sqcup  \ottnt{e_{{\mathrm{1}}}}   \sqcup  \ottnt{e_{{\mathrm{2}}}}  \ottsym{=} \ottsym{0}  \hyp{2.3}\\
            \ottnt{C_{{\mathrm{1}}}}  \vdash  \ottnt{M_{{\mathrm{1}}}}  :   \ottnt{S} \rightarrow _{ \ottnt{e_{{\mathrm{0}}}} } \ottnt{T}   \mid  \ottnt{e_{{\mathrm{1}}}} \hyp{1.2}\\
            \ottnt{C_{{\mathrm{2}}}}  \vdash  \ottnt{M_{{\mathrm{2}}}}  :  \ottnt{S}  \mid  \ottnt{e_{{\mathrm{2}}}} \hyp{2.2}.
        \end{gather}
        We have $ \ottnt{e_{{\mathrm{0}}}} \ottsym{=} \ottnt{e_{{\mathrm{1}}}}  =  \ottnt{e_{{\mathrm{2}}}} \ottsym{=} \ottsym{0} $ by \hypref{2.3}.
        So, we have the following cases by the induction hypothesis for \hypref{1.2}.
        \begin{match}
            \item[$\ottnt{M_{{\mathrm{1}}}}  \mid  \mathcal{H}  \longrightarrow  \ottnt{M'_{{\mathrm{1}}}}  \mid  \mathcal{H}'$ for some $\ottnt{M'_{{\mathrm{1}}}}$ and $\mathcal{H}'$]
            The first item of the goal follows by \propref{progress/ctx} given $ \mathcal{E} \ottsym{=} \ottsym{[]} \, \ottnt{M_{{\mathrm{2}}}} $.

            \item[$ \ottnt{M_{{\mathrm{1}}}} \ottsym{=} \ottnt{V_{{\mathrm{1}}}} $ for some $\ottnt{V_{{\mathrm{1}}}}$]
            We have the following cases by the induction hypothesis for \hypref{2.2}.
            \begin{match}
                \item[$\ottnt{M_{{\mathrm{2}}}}  \mid  \mathcal{H}  \longrightarrow  \ottnt{M'_{{\mathrm{2}}}}  \mid  \mathcal{H}'$ for some $\ottnt{M'_{{\mathrm{2}}}}$ and $\mathcal{H}'$]
                The first item of the goal follows by \propref{progress/ctx} given $ \mathcal{E} \ottsym{=} \ottnt{V_{{\mathrm{1}}}} \, \ottsym{[]} $.

                \item[$ \ottnt{M_{{\mathrm{2}}}} \ottsym{=} \ottnt{V_{{\mathrm{2}}}} $ for some $\ottnt{V_{{\mathrm{2}}}}$]
                Remember that we have had
                \begin{gather*}
                    \ottnt{C_{{\mathrm{1}}}}  \vdash  \ottnt{V_{{\mathrm{1}}}}  :   \ottnt{S} \rightarrow _{ \ottsym{0} } \ottnt{T}   \mid  \ottsym{0} \hyp{2.5}\\
                    \ottnt{C_{{\mathrm{2}}}}  \vdash  \ottnt{V_{{\mathrm{2}}}}  :  \ottnt{S}  \mid  \ottsym{0}. \hyp{2.6}
                \end{gather*}
                We have the following cases by \propref{typing/canonical(arrow)} and \hypref{2.5}.
                \begin{match}
                    \item[$ \ottnt{V_{{\mathrm{1}}}} \ottsym{=}  \ottkw{new} _{ \ottnt{m} }  $]
                    We have $ \ottnt{S} \ottsym{=}  \mathtt{Unit}  $ by \propref{typing/inv} and \hypref{2.5}.
                    Then, $ \ottnt{V_{{\mathrm{2}}}} \ottsym{=} \ottkw{unit} $ by \propref{typing/canonical(unit)}.
                    Now, the first item of the goal follows by \propref{progress/new}.

                    \item[$ \ottnt{V_{{\mathrm{1}}}} \ottsym{=}  \ottkw{op} _{ \ottnt{m_{{\mathrm{1}}}} }  $]
                    We have $  \ottnt{S} \rightarrow _{ \ottsym{0} } \ottnt{T}  \ottsym{=}  \ottsym{[}  \ottnt{m}  \ottsym{]} \rightarrow _{ \ottsym{1} } \ottsym{[}  \ottnt{m_{{\mathrm{2}}}}  \ottsym{]}  $ and $  \ottnt{m_{{\mathrm{1}}}} \odot \ottnt{m_{{\mathrm{2}}}}  \le \ottnt{m} $ for some $\ottnt{m}$ and $\ottnt{m_{{\mathrm{2}}}}$ by \propref{typing/inv} and \hypref{2.5}, but this is impossible since $ \ottsym{0} \neq \ottsym{1} $.

                    \item[$ \ottnt{V_{{\mathrm{1}}}} \ottsym{=}  \ottkw{split} _{ \ottnt{m_{{\mathrm{1}}}} , \ottnt{m_{{\mathrm{2}}}} }  $]
                    We have $ \ottnt{S} \ottsym{=} \ottsym{[}  \ottnt{m_{{\mathrm{0}}}}  \ottsym{]} $ for some $\ottnt{m_{{\mathrm{0}}}}$ by \propref{typing/inv(split)} and \hypref{2.5}.
                    Then, $ \ottnt{V_{{\mathrm{2}}}} \ottsym{=} \ottmv{l} $ for some $\ottmv{l}$ by \propref{typing/canonical(resource)} and \hypref{2.6}.
                    Furthermore, we have $\ottnt{C_{{\mathrm{2}}}}  \lesssim  \ottmv{l}  \mathord:  \ottsym{[}  \ottnt{m_{{\mathrm{0}}}}  \ottsym{]}$ by \propref{typing/inv} and \hypref{2.6}.
                    So, we have $\mathcal{C}  \ottsym{[}  \ottnt{C}  \ottsym{]}  \lesssim  \mathcal{C}  \ottsym{[}  \ottnt{C_{{\mathrm{1}}}}  \ottsym{,}  \ottmv{l}  \mathord:  \ottsym{[}  \ottnt{m_{{\mathrm{0}}}}  \ottsym{]}  \ottsym{]}$ by \propref{env/sub/ctx}.
                    From that, $\mathcal{C}  \ottsym{[}  \ottnt{C_{{\mathrm{1}}}}  \ottsym{,}  \ottmv{l}  \mathord:  \ottsym{[}  \ottnt{m_{{\mathrm{0}}}}  \ottsym{]}  \ottsym{]}  \vdash  \mathcal{H}$ follows by \propref{heap/weaken} and \hypref{0.2}.
                    Now, the first item of the goal follows by \propref{progress/split}.

                    \item[$ \ottnt{V_{{\mathrm{1}}}} \ottsym{=} \ottkw{drop} $]
                    We have
                    \begin{gather}
                         \ottnt{S} \ottsym{=} \ottsym{[}  \ottnt{m}  \ottsym{]}  \hyp{2.7}\\
                         \varepsilon \le \ottnt{m}  \hyp{2.8}
                    \end{gather}
                    for some $\ottnt{m}$ by \propref{typing/inv(drop)} and \hypref{2.5}.
                    So, $ \ottnt{V_{{\mathrm{2}}}} \ottsym{=} \ottmv{l} $ for some $\ottmv{l}$ by \propref{typing/canonical(resource)} and \hypref{2.6}.
                    Furthermore, we have $\ottnt{C_{{\mathrm{2}}}}  \lesssim  \ottmv{l}  \mathord:  \ottsym{[}  \ottnt{m}  \ottsym{]}$ by \propref{typing/inv(loc)}.
                    So, we have $\mathcal{C}  \ottsym{[}  \ottnt{C}  \ottsym{]}  \lesssim  \mathcal{C}  \ottsym{[}  \ottnt{C_{{\mathrm{1}}}}  \ottsym{,}  \ottmv{l}  \mathord:  \ottsym{[}  \ottnt{m}  \ottsym{]}  \ottsym{]}$ by \propref{env/sub/ctx}.
                    From that, $\mathcal{C}  \ottsym{[}  \ottnt{C_{{\mathrm{1}}}}  \ottsym{,}  \ottmv{l}  \mathord:  \ottsym{[}  \ottnt{m}  \ottsym{]}  \ottsym{]}  \vdash  \mathcal{H}$ follows by \propref{heap/weaken} and \hypref{0.2}.
                    Now, the first item of the goal follows by \propref{progress/close} and \hypref{2.8}.

                    \item[$ \ottnt{V_{{\mathrm{1}}}} \ottsym{=} \lambda  \ottmv{x}  \ottsym{.}  \ottnt{M_{{\mathrm{11}}}} $]
                    To show the first item of the goal, we choose $ \ottnt{M'} \ottsym{=} \ottnt{M_{{\mathrm{11}}}}  \ottsym{[}  \ottnt{V_{{\mathrm{2}}}}  \ottsym{/}  \ottmv{x}  \ottsym{]} $ and $ \mathcal{H}' \ottsym{=} \mathcal{H} $.
                    Then, what we need to show is
                    \begin{gather*}
                        \ottsym{(}  \lambda  \ottmv{x}  \ottsym{.}  \ottnt{M_{{\mathrm{11}}}}  \ottsym{)} \, \ottnt{V_{{\mathrm{2}}}}  \mid  \mathcal{H}  \longrightarrow  \ottnt{M_{{\mathrm{11}}}}  \ottsym{[}  \ottnt{V_{{\mathrm{2}}}}  \ottsym{/}  \ottmv{x}  \ottsym{]}  \mid  \mathcal{H},
                    \end{gather*}
                    which can be derived by \ruleref{RE-Beta} and \ruleref{R-Exp}.
                \end{match}
            \end{match}
        \end{match}

        \item[\ruleref{T-UApp}, \ruleref{T-RApp}, \ruleref{T-LApp}]
        Similar to the case \ruleref{T-App}.

        \item[\ruleref{T-UPair} and \ruleref{T-OPair}]
        Similar to the case \ruleref{T-App} except, when we know both elements of a pair are value, the second item of the goal follows.

        \item[\ruleref{T-ULet}]
        In this case,
        \begin{gather}
             \ottnt{C} \ottsym{=} \mathcal{C}_{{\mathrm{1}}}  \ottsym{[}  \ottnt{C_{{\mathrm{1}}}}  \ottsym{]}  \hyp{5.1}\\
             \ottnt{M} \ottsym{=} \ottkw{let} \, \ottmv{x}  \otimes  \ottmv{y}  \ottsym{=}  \ottnt{M_{{\mathrm{1}}}} \, \ottkw{in} \, \ottnt{M_{{\mathrm{2}}}}  \hyp{5.2}\\
            \ottnt{C_{{\mathrm{1}}}}  \vdash  \ottnt{M_{{\mathrm{1}}}}  :  \ottnt{S_{{\mathrm{1}}}}  \otimes  \ottnt{S_{{\mathrm{2}}}}  \mid  \ottsym{0} \hyp{5.3}.
        \end{gather}
        So, we have the following cases by the induction hypothesis for \hypref{5.3}.
        \begin{match}
            \item[$\ottnt{M_{{\mathrm{1}}}}  \mid  \mathcal{H}  \longrightarrow  \ottnt{M'_{{\mathrm{1}}}}  \mid  \mathcal{H}'$ for some $\ottnt{M'_{{\mathrm{1}}}}$ and $\mathcal{H}'$]
            The first item of the goal follows by \propref{progress/ctx} given $ \mathcal{E} \ottsym{=} \ottkw{let} \, \ottmv{x}  \otimes  \ottmv{y}  \ottsym{=}  \ottsym{[]} \, \ottkw{in} \, \ottnt{M_{{\mathrm{2}}}} $.

            \item[$ \ottnt{M_{{\mathrm{1}}}} \ottsym{=} \ottnt{V_{{\mathrm{1}}}} $ for some $\ottnt{V_{{\mathrm{1}}}}$]
            We have
            \begin{gather*}
                 \ottnt{V_{{\mathrm{1}}}} \ottsym{=} \ottnt{V_{{\mathrm{11}}}}  \otimes  \ottnt{V_{{\mathrm{12}}}} 
            \end{gather*}
            by \propref{typing/canonical(uprod)} and \hypref{5.3}.
            To show the first item of the goal, we choose $ \ottnt{M'} \ottsym{=} \ottnt{M_{{\mathrm{2}}}}  \ottsym{[}  \ottnt{V_{{\mathrm{11}}}}  \ottsym{/}  \ottmv{x}  \ottsym{]}  \ottsym{[}  \ottnt{V_{{\mathrm{12}}}}  \ottsym{/}  \ottmv{y}  \ottsym{]} $ and $ \mathcal{H}' \ottsym{=} \mathcal{H} $.
            What we need to show is
            \begin{gather*}
                \ottkw{let} \, \ottmv{x}  \otimes  \ottmv{y}  \ottsym{=}  \ottnt{V_{{\mathrm{11}}}}  \otimes  \ottnt{V_{{\mathrm{12}}}} \, \ottkw{in} \, \ottnt{M_{{\mathrm{2}}}}  \mid  \mathcal{H}  \longrightarrow  \ottnt{M_{{\mathrm{2}}}}  \ottsym{[}  \ottnt{V_{{\mathrm{11}}}}  \ottsym{/}  \ottmv{x}  \ottsym{]}  \ottsym{[}  \ottnt{V_{{\mathrm{12}}}}  \ottsym{/}  \ottmv{y}  \ottsym{]}  \mid  \mathcal{H},
            \end{gather*}
            which can be derived by \ruleref{RE-ULet} and \ruleref{R-Exp}.
        \end{match}

        \item[\ruleref{T-OLet}] Similar to the case \ruleref{T-ULet}.

        \item[\ruleref{T-Weaken}]
        In this case,
        \begin{gather}
            \ottnt{C}  \lesssim  \ottnt{C_{{\mathrm{1}}}} \hyp{4.1}\\
             \ottnt{e_{{\mathrm{1}}}} \ottsym{<} \ottsym{0}  \hyp{4.2}\\
            \ottnt{C_{{\mathrm{1}}}}  \vdash  \ottnt{M}  :  \ottnt{T}  \mid  \ottnt{e_{{\mathrm{1}}}} \hyp{4.3}.
        \end{gather}
        We have $ \ottnt{e_{{\mathrm{1}}}} \ottsym{=} \ottsym{0} $ by \hypref{4.2}.
        We have $\mathcal{C}  \ottsym{[}  \ottnt{C}  \ottsym{]}  \lesssim  \mathcal{C}  \ottsym{[}  \ottnt{C_{{\mathrm{1}}}}  \ottsym{]}$ by \propref{env/sub/ctx} and \hypref{4.1}, and so $\mathcal{C}  \ottsym{[}  \ottnt{C_{{\mathrm{1}}}}  \ottsym{]}  \vdash  \mathcal{H}$ by \propref{heap/weaken} and \hypref{0.2}.
        Now, the goal follows by the induction hypothesis.
    \end{match}
\end{prop}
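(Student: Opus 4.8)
The plan is to proceed by induction on the derivation of $\ottnt{C}  \vdash  \ottnt{M}  :  \ottnt{T}  \mid  \ottsym{0}$, exploiting that the effect annotation is pinned to $\ottsym{0}$. The cases fall into three groups. First, the value-forming rules---\ruleref{T-Unit}, \ruleref{T-New}, \ruleref{T-Op}, \ruleref{T-Split}, \ruleref{T-Close}, \ruleref{T-Loc}, and the four abstraction rules---have $\ottnt{M}$ already a value, so the second disjunct holds immediately. The rule \ruleref{T-Var} is vacuous, since a run-time context $\ottnt{C}$ has no variable bindings and so $ \ottnt{C} \ottsym{=} \ottmv{x}  \mathord:  \ottnt{T} $ cannot occur. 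The rule \ruleref{T-Weaken} is handled by observing that $ \ottnt{e_{{\mathrm{1}}}} \le \ottsym{0} $ forces $ \ottnt{e_{{\mathrm{1}}}} \ottsym{=} \ottsym{0} $; I then transport the heap typing along the subcontext relation using \propref{env/sub/ctx} and \propref{heap/weaken}, and conclude by the induction hypothesis.

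The substantive work is in the elimination rules. For the application rules \ruleref{T-App}, \ruleref{T-UApp}, \ruleref{T-RApp}, \ruleref{T-LApp} (and, analogously, \ruleref{T-UPair}, \ruleref{T-OPair}), the context splits into $\ottnt{C_{{\mathrm{1}}}}$ and $\ottnt{C_{{\mathrm{2}}}}$ and the bound $\ottsym{0}$ forces every sub-effect to vanish. I apply the induction hypothesis to the function subterm: if it steps, I lift the step through the evaluation context by \propref{progress/ctx}; otherwise it is a value, and I repeat the argument on the argument subterm. When both are values, \propref{typing/canonical} pins down the shape of the function value. A $\lambda$-abstraction yields a $\beta$-redex reducible by \ruleref{RE-Beta} under \ruleref{R-Exp}. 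The constant cases $ \ottkw{new} _{ \ottnt{m} } $, $ \ottkw{split} _{ \ottnt{m_{{\mathrm{1}}}} , \ottnt{m_{{\mathrm{2}}}} } $, and $\ottkw{drop}$ invoke the dedicated configuration-progress lemmas \propref{progress/new}, \propref{progress/split}, and \propref{progress/close}. The constant $ \ottkw{op} _{ \ottnt{m_{{\mathrm{1}}}} } $ is ruled out: by \propref{typing/inv} its type carries arrow-effect $\ottsym{1}$, which contradicts the effect-$\ottsym{0}$ hypothesis on the application. The let rules \ruleref{T-ULet}, \ruleref{T-OLet} are similar: the scrutinee is typed with effect $\ottsym{0}$, so the induction hypothesis either yields a step (lifted with \propref{progress/ctx}) or a value, which \propref{typing/canonical} forces to be a pair, enabling the corresponding reduction by \ruleref{RE-ULet} or \ruleref{RE-OLet}.

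The main obstacle will be threading the heap typing correctly into the preconditions of \propref{progress/split} and \propref{progress/close}. Those lemmas require a heap typing of the form $\mathcal{C}  \ottsym{[}  \ottmv{l}  \mathord:  \ottsym{[}  \ottnt{m}  \ottsym{]}  \ottsym{]}  \vdash  \mathcal{H}$, whereas the induction only supplies $\mathcal{C}  \ottsym{[}  \ottnt{C}  \ottsym{]}  \vdash  \mathcal{H}$ with $\ottnt{C}$ a split component. The bridge is to apply \propref{typing/inv} to the value typing to obtain $\ottnt{C_{{\mathrm{2}}}}  \lesssim  \ottmv{l}  \mathord:  \ottsym{[}  \ottnt{m}  \ottsym{]}$, lift this through the surrounding context with \propref{env/sub/ctx} to get $\mathcal{C}  \ottsym{[}  \ottnt{C}  \ottsym{]}  \lesssim  \mathcal{C}  \ottsym{[}  \ottnt{C_{{\mathrm{1}}}}  \ottsym{,}  \ottmv{l}  \mathord:  \ottsym{[}  \ottnt{m}  \ottsym{]}  \ottsym{]}$, and then use \propref{heap/weaken} to transfer the heap typing to the stronger, more sequentialized context before calling the progress lemma. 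Getting the context former right in each application variant and confirming that the inversion and canonical-forms lemmas leave no stray cases is the delicate part; the remaining obligations are routine appeals to \propref{progress/ctx}.
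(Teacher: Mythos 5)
Your proposal is correct and follows essentially the same route as the paper's proof: the same induction on the effect-$0$ typing derivation, the same case grouping (values, vacuous \textsc{T-Var}, weakening via \propref{env/sub/ctx} and \propref{heap/weaken}, eliminations via the induction hypothesis plus \propref{progress/ctx} and canonical forms), and the same treatment of the delicate points, namely ruling out $ \ottkw{op} _{ \ottnt{m_{{\mathrm{1}}}} } $ by the effect-$\ottsym{1}$ arrow annotation and bridging the heap typing for $ \ottkw{split} _{ \ottnt{m_{{\mathrm{1}}}} , \ottnt{m_{{\mathrm{2}}}} } $ and $\ottkw{drop}$ through $\ottnt{C_{{\mathrm{2}}}}  \lesssim  \ottmv{l}  \mathord:  \ottsym{[}  \ottnt{m}  \ottsym{]}$, \propref{env/sub/ctx}, and \propref{heap/weaken}. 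No gaps to report.
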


\begin{prop}{progress/effect}
    If
    \begin{gather}
        \ottnt{C}  \vdash  \ottnt{M}  :  \ottnt{T}  \mid  \ottnt{e} \hyp{0.1}\\
        \ottnt{C}  \ottsym{,}  \ottnt{C'}  \vdash  \mathcal{H} \hyp{0.2},
    \end{gather}
    then
    \begin{itemize}
        \item $\ottnt{M}  \mid  \mathcal{H}  \longrightarrow  \ottnt{M'}  \mid  \mathcal{H}'$ for some $\ottnt{M'}$ and $\mathcal{H}'$, or
        \item $ \ottnt{M} \ottsym{=} \ottnt{V} $ for some $\ottnt{V}$.
    \end{itemize}

    \proof By induction on the given derivation of \hypref{0.1}.
    \begin{match}
        \item[]\ruleref{T-Unit}, \ruleref{T-New}, \ruleref{T-Op}, \ruleref{T-Split}, \ruleref{T-Close}, \ruleref{T-Loc}, \ruleref{T-Abs}, \ruleref{T-UAbs}, \ruleref{T-RAbs}, and \ruleref{T-LAbs}.
        $\ottnt{M}$ is a value in these cases.
        Hence, the second item of the goal holds.

        \item[\ruleref{T-Var}]
        This case is vacuously true because $ \ottnt{C} \ottsym{=} \ottmv{x}  \mathord:  \ottnt{T} $, but this is impossible.

        \item[\ruleref{T-App}]
        In this case,
        \begin{gather}
             \ottnt{C} \ottsym{=} \ottnt{C_{{\mathrm{1}}}}  \ottsym{,}  \ottnt{C_{{\mathrm{2}}}}  \hyp{1.1}\\
             \ottnt{M} \ottsym{=} \ottnt{M_{{\mathrm{1}}}} \, \ottnt{M_{{\mathrm{2}}}}  \hyp{1.2}\\
             \ottnt{e} \ottsym{=}   \ottnt{e_{{\mathrm{0}}}}  \sqcup  \ottnt{e_{{\mathrm{1}}}}   \sqcup  \ottnt{e_{{\mathrm{2}}}}   \hyp{1.3}\\
            \ottnt{C_{{\mathrm{1}}}}  \vdash  \ottnt{M_{{\mathrm{1}}}}  :   \ottnt{S} \rightarrow _{ \ottnt{e_{{\mathrm{0}}}} } \ottnt{T}   \mid  \ottnt{e_{{\mathrm{1}}}} \hyp{1.4}\\
            \ottnt{C_{{\mathrm{2}}}}  \vdash  \ottnt{M_{{\mathrm{2}}}}  :  \ottnt{S}  \mid  \ottnt{e_{{\mathrm{2}}}} \hyp{1.5}.
        \end{gather}
        We have $\ottnt{C_{{\mathrm{1}}}}  \ottsym{,}  \ottnt{C_{{\mathrm{2}}}}  \ottsym{,}  \ottnt{C'}  \vdash  \mathcal{H}$ by \hypref{0.2}.
        So, we have the following cases by the induction hypothesis of \hypref{1.4}.
        \begin{match}
            \item[$\ottnt{M_{{\mathrm{1}}}}  \mid  \mathcal{H}  \longrightarrow  \ottnt{M'_{{\mathrm{1}}}}  \mid  \mathcal{H}'$]
            The first item of the goal follows by \propref{progress/ctx} given $ \mathcal{E} \ottsym{=} \ottsym{[]} \, \ottnt{M_{{\mathrm{2}}}} $.

            \item[$ \ottnt{M_{{\mathrm{1}}}} \ottsym{=} \ottnt{V_{{\mathrm{1}}}} $ for some $\ottnt{V_{{\mathrm{1}}}}$]
            We have $\ottnt{C_{{\mathrm{1}}}}  \lesssim   \cdot $ by \propref{typing/value/unr} and \hypref{1.4}.
            So, $\ottnt{C_{{\mathrm{2}}}}  \ottsym{,}  \ottnt{C'}  \vdash  \mathcal{H}$ follows by \propref{heap/weaken} and \hypref{0.2}.
            Then, we have the following cases by the induction hypothesis of \hypref{1.5}.
            \begin{match}
                \item[$\ottnt{M_{{\mathrm{2}}}}  \mid  \mathcal{H}  \longrightarrow  \ottnt{M'_{{\mathrm{2}}}}  \mid  \mathcal{H}'$]
                The first item of the goal follows by \propref{progress/ctx} given $ \mathcal{E} \ottsym{=} \ottnt{V_{{\mathrm{1}}}} \, \ottsym{[]} $.

                \item[$ \ottnt{M_{{\mathrm{2}}}} \ottsym{=} \ottnt{V_{{\mathrm{2}}}} $ for some $\ottnt{V_{{\mathrm{2}}}}$]
                Remember that here, we have had
                \begin{gather}
                    \ottnt{C_{{\mathrm{1}}}}  \vdash  \ottnt{V_{{\mathrm{1}}}}  :   \ottnt{S} \rightarrow _{ \ottnt{e_{{\mathrm{0}}}} } \ottnt{T}   \mid  \ottnt{e_{{\mathrm{1}}}} \hyp{1.6}\\
                    \ottnt{C_{{\mathrm{2}}}}  \vdash  \ottnt{V_{{\mathrm{2}}}}  :  \ottnt{S}  \mid  \ottnt{e_{{\mathrm{2}}}} \hyp{1.7}\\
                    \ottnt{C_{{\mathrm{2}}}}  \ottsym{,}  \ottnt{C'}  \vdash  \mathcal{H}. \hyp{1.8}
                \end{gather}
                So, $ \ottnt{e_{{\mathrm{1}}}} \ottsym{=} \ottnt{e_{{\mathrm{2}}}}  = 0$ by \propref{typing/value/unr}.
                We have the following cases by \propref{typing/canonical(arrow)} and \hypref{1.6}.
                \begin{match}
                    \item[$ \ottnt{V_{{\mathrm{1}}}} \ottsym{=}  \ottkw{new} _{ \ottnt{m} }  $ for some $\ottnt{m}$]
                    We have $ \ottnt{e_{{\mathrm{0}}}} \ottsym{=} \ottsym{0} $ by \propref{typing/inv(new)} and \hypref{1.6}.
                    So, we can have $\ottnt{C}  \vdash  \ottnt{M}  :  \ottnt{T}  \mid  \ottsym{0}$ and the goal follows by \propref{progress/noeffect}.

                    \item[$ \ottnt{V_{{\mathrm{1}}}} \ottsym{=}  \ottkw{op} _{ \ottnt{m_{{\mathrm{1}}}} }  $ for some $\ottnt{m_{{\mathrm{1}}}}$]
                    We have $ \ottnt{S} \ottsym{=} \ottsym{[}  \ottnt{m_{{\mathrm{0}}}}  \ottsym{]} $ and $  \ottnt{m_{{\mathrm{1}}}} \odot \ottnt{m_{{\mathrm{2}}}}  \le \ottnt{m_{{\mathrm{0}}}} $ for some $\ottnt{m_{{\mathrm{0}}}}$ and $\ottnt{m_{{\mathrm{2}}}}$ by \propref{typing/inv(op)} and \hypref{1.6}.
                    So, $ \ottnt{V_{{\mathrm{2}}}} \ottsym{=} \ottmv{l} $ for some $\ottmv{l}$ by \propref{typing/canonical(resource)} and \hypref{1.7}.
                    Furthermore, $\ottnt{C_{{\mathrm{2}}}}  \lesssim  \ottmv{l}  \mathord:  \ottsym{[}  \ottnt{m_{{\mathrm{0}}}}  \ottsym{]}$ by \propref{typing/inv}.
                    From that, we have $\ottmv{l}  \mathord:  \ottsym{[}  \ottnt{m_{{\mathrm{0}}}}  \ottsym{]}  \ottsym{,}  \ottnt{C'}  \vdash  \mathcal{H}$ by \hypref{1.8}.
                    Now, the first item of the goal follows by \propref{progress/op}.

                    \item[$ \ottnt{V_{{\mathrm{1}}}} \ottsym{=}  \ottkw{split} _{ \ottnt{m_{{\mathrm{1}}}} , \ottnt{m_{{\mathrm{2}}}} }  $ for some $\ottnt{m_{{\mathrm{1}}}}$ and $\ottnt{m_{{\mathrm{2}}}}$]
                    We have $ \ottnt{e_{{\mathrm{0}}}} \ottsym{=} \ottsym{0} $ by \propref{typing/inv(split)} and \hypref{1.6}.
                    So, we can have $\ottnt{C}  \vdash  \ottnt{M}  :  \ottnt{T}  \mid  \ottsym{0}$ and the goal follows by \propref{progress/noeffect}.

                    \item[$ \ottnt{V_{{\mathrm{1}}}} \ottsym{=} \ottkw{drop} $]
                    We have $ \ottnt{e_{{\mathrm{0}}}} \ottsym{=} \ottsym{0} $ by \propref{typing/inv(drop)} and \hypref{1.6}.
                    So, we can have $\ottnt{C}  \vdash  \ottnt{M}  :  \ottnt{T}  \mid  \ottsym{0}$ and the goal follows by \propref{progress/noeffect}.

                    \item[$ \ottnt{V_{{\mathrm{1}}}} \ottsym{=} \lambda  \ottmv{x}  \ottsym{.}  \ottnt{M'_{{\mathrm{1}}}} $]
                    To show the first item of the goal, we choose $ \ottnt{M'} \ottsym{=} \ottnt{M'_{{\mathrm{1}}}}  \ottsym{[}  \ottnt{V_{{\mathrm{2}}}}  \ottsym{/}  \ottmv{x}  \ottsym{]} $ and $ \mathcal{H}' \ottsym{=} \mathcal{H} $.
                    What we need to show is
                    \begin{gather*}
                        \ottsym{(}  \lambda  \ottmv{x}  \ottsym{.}  \ottnt{M'_{{\mathrm{1}}}}  \ottsym{)} \, \ottnt{V_{{\mathrm{2}}}}  \mid  \mathcal{H}  \longrightarrow  \ottnt{M'_{{\mathrm{1}}}}  \ottsym{[}  \ottnt{V_{{\mathrm{2}}}}  \ottsym{/}  \ottmv{x}  \ottsym{]}  \mid  \mathcal{H},
                    \end{gather*}
                    which can be derived by \ruleref{RE-Beta} and \ruleref{R-Exp}.
                \end{match}
            \end{match}
        \end{match}

        \item[\ruleref{T-UApp}]
        In this case,
        \begin{gather}
             \ottnt{C} \ottsym{=} \ottnt{C_{{\mathrm{1}}}}  \parallel  \ottnt{C_{{\mathrm{2}}}}  \hyp{2.1}\\
             \ottnt{M} \ottsym{=} \ottnt{M_{{\mathrm{1}}}}  {}^\circ  \ottnt{M_{{\mathrm{2}}}}  \hyp{2.2}\\
             \ottnt{e} \ottsym{=}   \ottnt{e_{{\mathrm{0}}}}  \sqcup  \ottnt{e_{{\mathrm{1}}}}   \sqcup  \ottnt{e_{{\mathrm{2}}}}   \hyp{2.3}\\
            \ottnt{C_{{\mathrm{1}}}}  \vdash  \ottnt{M_{{\mathrm{1}}}}  :   \ottnt{S} \rightarrowtriangle _{ \ottnt{e_{{\mathrm{0}}}} } \ottnt{T}   \mid  \ottnt{e_{{\mathrm{1}}}} \hyp{2.4}\\
            \ottnt{C_{{\mathrm{2}}}}  \vdash  \ottnt{M_{{\mathrm{2}}}}  :  \ottnt{S}  \mid  \ottnt{e_{{\mathrm{2}}}} \hyp{2.5}.
        \end{gather}
        So, we have $\ottnt{C_{{\mathrm{1}}}}  \ottsym{,}  \ottnt{C_{{\mathrm{2}}}}  \ottsym{,}  \ottnt{C'}  \vdash  \mathcal{H}$ by \propref{heap/weaken} and \hypref{0.2}.
        Now, we have the following cases by the induction hypothesis of \hypref{2.4}.
        \begin{match}
            \item[$\ottnt{M_{{\mathrm{1}}}}  \mid  \mathcal{H}  \longrightarrow  \ottnt{M'_{{\mathrm{1}}}}  \mid  \mathcal{H}'$]
            The first item of the goal follows by \propref{progress/ctx} given $ \mathcal{E} \ottsym{=} \ottsym{[]}  {}^\circ  \ottnt{M_{{\mathrm{2}}}} $.

            \item[$ \ottnt{M_{{\mathrm{1}}}} \ottsym{=} \ottnt{V_{{\mathrm{1}}}} $ for some $\ottnt{V_{{\mathrm{1}}}}$]
            We have $\ottnt{C_{{\mathrm{2}}}}  \ottsym{,}  \ottnt{C_{{\mathrm{1}}}}  \ottsym{,}  \ottnt{C'}  \vdash  \mathcal{H}$ by \propref{heap/weaken} and \hypref{0.2}.
            So, we have the following cases by the induction hypothesis of \hypref{2.5}.
            \begin{match}
                \item[$\ottnt{M_{{\mathrm{2}}}}  \mid  \mathcal{H}  \longrightarrow  \ottnt{M'_{{\mathrm{2}}}}  \mid  \mathcal{H}'$]
                The first item of the goal follows by \propref{progress/ctx} given $ \mathcal{E} \ottsym{=} \ottnt{V_{{\mathrm{1}}}}  {}^\circ  \ottsym{[]} $.

                \item[$ \ottnt{M_{{\mathrm{2}}}} \ottsym{=} \ottnt{V_{{\mathrm{2}}}} $ for some $\ottnt{V_{{\mathrm{2}}}}$]
                We have $ \ottnt{V_{{\mathrm{1}}}} \ottsym{=} \lambda^\circ  \ottmv{x}  \ottsym{.}  \ottnt{M'_{{\mathrm{1}}}} $ by \propref{typing/canonical(uarrow)}.
                To show the first item of the goal, we choose $ \ottnt{M'} \ottsym{=} \ottnt{M'_{{\mathrm{1}}}}  \ottsym{[}  \ottnt{V_{{\mathrm{2}}}}  \ottsym{/}  \ottmv{x}  \ottsym{]} $ and $ \mathcal{H}' \ottsym{=} \mathcal{H} $.
                What we need to show is
                \begin{gather*}
                    \ottsym{(}  \lambda^\circ  \ottmv{x}  \ottsym{.}  \ottnt{M'_{{\mathrm{1}}}}  \ottsym{)}  {}^\circ  \ottnt{V_{{\mathrm{2}}}}  \mid  \mathcal{H}  \longrightarrow  \ottnt{M'_{{\mathrm{1}}}}  \ottsym{[}  \ottnt{V_{{\mathrm{2}}}}  \ottsym{/}  \ottmv{x}  \ottsym{]}  \mid  \mathcal{H},
                \end{gather*}
                which can be derived by \ruleref{RE-UBeta} and \ruleref{R-Exp}.
            \end{match}
        \end{match}

        \item[\ruleref{T-RApp}]
        In this case,
        \begin{gather}
             \ottnt{C} \ottsym{=} \ottnt{C_{{\mathrm{1}}}}  \ottsym{,}  \ottnt{C_{{\mathrm{2}}}}  \hyp{3.1}\\
             \ottnt{M} \ottsym{=} \ottnt{M_{{\mathrm{1}}}}  {}^>  \ottnt{M_{{\mathrm{2}}}}  \hyp{3.2}\\
             \ottnt{e} \ottsym{=}  \ottnt{e_{{\mathrm{0}}}}  \sqcup  \ottnt{e_{{\mathrm{1}}}}   \hyp{3.3}\\
            \ottnt{C_{{\mathrm{1}}}}  \vdash  \ottnt{M_{{\mathrm{1}}}}  :   \ottnt{S} \twoheadrightarrow _{ \ottnt{e_{{\mathrm{0}}}} } \ottnt{T}   \mid  \ottnt{e_{{\mathrm{1}}}} \hyp{3.4}\\
            \ottnt{C_{{\mathrm{2}}}}  \vdash  \ottnt{M_{{\mathrm{2}}}}  :  \ottnt{S}  \mid  \ottsym{0} \hyp{3.5}.
        \end{gather}
        We have the following cases by the induction hypothesis of \hypref{3.4}.
        \begin{match}
            \item[$\ottnt{M_{{\mathrm{1}}}}  \mid  \mathcal{H}  \longrightarrow  \ottnt{M'_{{\mathrm{1}}}}  \mid  \mathcal{H}'$]
            The first item of the goal follows by \propref{progress/ctx} given $ \mathcal{E} \ottsym{=} \ottsym{[]}  {}^>  \ottnt{M_{{\mathrm{2}}}} $.

            \item[$ \ottnt{M_{{\mathrm{1}}}} \ottsym{=} \ottnt{V_{{\mathrm{1}}}} $ for some $\ottnt{V_{{\mathrm{1}}}}$]
            We have the following cases by \propref{progress/noeffect} and \hypref{3.5}.
            \begin{match}
                \item[$\ottnt{M_{{\mathrm{2}}}}  \mid  \mathcal{H}  \longrightarrow  \ottnt{M'_{{\mathrm{2}}}}  \mid  \mathcal{H}'$]
                The first item of the goal follows by \propref{progress/ctx} given $ \mathcal{E} \ottsym{=} \ottnt{V_{{\mathrm{1}}}}  {}^>  \ottsym{[]} $.

                \item[$ \ottnt{M_{{\mathrm{2}}}} \ottsym{=} \ottnt{V_{{\mathrm{2}}}} $ for some $\ottnt{V_{{\mathrm{2}}}}$]
                We have $ \ottnt{V_{{\mathrm{1}}}} \ottsym{=} \lambda^>  \ottmv{x}  \ottsym{.}  \ottnt{M'_{{\mathrm{1}}}} $ by \propref{typing/canonical(rarrow)} and \hypref{3.4}.
                To show the first item of the goal, we choose $ \ottnt{M'} \ottsym{=} \ottnt{M'_{{\mathrm{1}}}}  \ottsym{[}  \ottnt{V_{{\mathrm{2}}}}  \ottsym{/}  \ottmv{x}  \ottsym{]} $ and $ \mathcal{H}' \ottsym{=} \mathcal{H} $.
                What we need to show is
                \begin{gather*}
                    \ottsym{(}  \lambda^>  \ottmv{x}  \ottsym{.}  \ottnt{M'_{{\mathrm{1}}}}  \ottsym{)}  {}^>  \ottnt{V_{{\mathrm{2}}}}  \mid  \mathcal{H}  \longrightarrow  \ottnt{M'_{{\mathrm{1}}}}  \ottsym{[}  \ottnt{V_{{\mathrm{2}}}}  \ottsym{/}  \ottmv{x}  \ottsym{]}  \mid  \mathcal{H},
                \end{gather*}
                which can be derived by \ruleref{RE-RBeta} and \ruleref{R-Exp}.
            \end{match}
        \end{match}

        \item[\ruleref{T-LApp}] Similar to the case \ruleref{T-RApp}.
        \item[\ruleref{T-UPair}]
        In this case,
        \begin{gather*}
             \ottnt{C} \ottsym{=} \ottnt{C_{{\mathrm{1}}}}  \parallel  \ottnt{C_{{\mathrm{2}}}}  \\
             \ottnt{M} \ottsym{=} \ottnt{M_{{\mathrm{1}}}}  \otimes  \ottnt{M_{{\mathrm{2}}}}  \\
             \ottnt{T} \ottsym{=} \ottnt{T_{{\mathrm{1}}}}  \otimes  \ottnt{T_{{\mathrm{2}}}}  \\
             \ottnt{e} \ottsym{=}  \ottnt{e_{{\mathrm{1}}}}  \sqcup  \ottnt{e_{{\mathrm{2}}}}   \\
            \ottnt{C_{{\mathrm{1}}}}  \vdash  \ottnt{M_{{\mathrm{1}}}}  :  \ottnt{T_{{\mathrm{1}}}}  \mid  \ottnt{e_{{\mathrm{1}}}} \\
            \ottnt{C_{{\mathrm{2}}}}  \vdash  \ottnt{M_{{\mathrm{2}}}}  :  \ottnt{T_{{\mathrm{2}}}}  \mid  \ottnt{e_{{\mathrm{2}}}}.
        \end{gather*}
        The rest of the proof is similar to the case \ruleref{T-UApp} except, when we know both $\ottnt{M_{{\mathrm{1}}}}$ and $\ottnt{M_{{\mathrm{2}}}}$ are values, the second item of the goal follows.

        \item[\ruleref{T-OPair}]
        In this case,
        \begin{gather*}
             \ottnt{C} \ottsym{=} \ottnt{C_{{\mathrm{1}}}}  \ottsym{,}  \ottnt{C_{{\mathrm{2}}}}  \\
             \ottnt{M} \ottsym{=} \ottnt{M_{{\mathrm{1}}}}  \odot  \ottnt{M_{{\mathrm{2}}}}  \\
             \ottnt{T} \ottsym{=} \ottnt{T_{{\mathrm{1}}}}  \odot  \ottnt{T_{{\mathrm{2}}}}  \\
             \ottnt{e} \ottsym{=}  \ottnt{e_{{\mathrm{1}}}}  \sqcup  \ottnt{e_{{\mathrm{2}}}}   \\
            \ottnt{C_{{\mathrm{1}}}}  \vdash  \ottnt{M_{{\mathrm{1}}}}  :  \ottnt{T_{{\mathrm{1}}}}  \mid  \ottnt{e_{{\mathrm{1}}}} \\
            \ottnt{C_{{\mathrm{2}}}}  \vdash  \ottnt{M_{{\mathrm{2}}}}  :  \ottnt{T_{{\mathrm{2}}}}  \mid  \ottnt{e_{{\mathrm{2}}}} \\
            \text{$\ottkw{ord} \, \ottnt{T_{{\mathrm{1}}}}$ implies $ \ottnt{e_{{\mathrm{2}}}} \ottsym{=} \ottsym{0} $}.
        \end{gather*}
        We consider wether $\ottkw{ord} \, \ottnt{T_{{\mathrm{1}}}}$ or not, namely $\ottkw{unr} \, \ottnt{T_{{\mathrm{1}}}}$.
        \begin{match}
            \item[$\ottkw{unr} \, \ottnt{T_{{\mathrm{1}}}}$]
            We can prove this case in a similar way to the case \ruleref{T-App} except, when we know both $\ottnt{M_{{\mathrm{1}}}}$ and $\ottnt{M_{{\mathrm{2}}}}$ are value, the second item of the goal follows.

            \item[$\ottkw{ord} \, \ottnt{T_{{\mathrm{1}}}}$]
            We can prove this case in a similar way to the case \ruleref{T-RApp} except, when we know both $\ottnt{M_{{\mathrm{1}}}}$ and $\ottnt{M_{{\mathrm{2}}}}$ are value, the second item of the goal follows.
        \end{match}

        \item[\ruleref{T-ULet}]
        In this case,
        \begin{gather}
             \ottnt{C} \ottsym{=} \mathcal{C}_{{\mathrm{1}}}  \ottsym{[}  \ottnt{C_{{\mathrm{1}}}}  \ottsym{]}  \hyp{5.1}\\
             \ottnt{M} \ottsym{=} \ottkw{let} \, \ottmv{x}  \otimes  \ottmv{y}  \ottsym{=}  \ottnt{M_{{\mathrm{1}}}} \, \ottkw{in} \, \ottnt{M_{{\mathrm{2}}}}  \hyp{5.2}\\
            \ottnt{C_{{\mathrm{1}}}}  \vdash  \ottnt{M_{{\mathrm{1}}}}  :  \ottnt{S_{{\mathrm{1}}}}  \otimes  \ottnt{S_{{\mathrm{2}}}}  \mid  \ottsym{0}. \hyp{5.3}
        \end{gather}
        We have $\mathcal{C}_{{\mathrm{1}}}  \ottsym{[}  \ottnt{C_{{\mathrm{1}}}}  \ottsym{]}  \ottsym{,}  \ottnt{C'}  \vdash  \mathcal{H}$ by \hypref{0.2}.
        So, we have the following cases by \propref{progress/noeffect} given $ \mathcal{C} \ottsym{=} \mathcal{C}_{{\mathrm{1}}}  \ottsym{,}  \ottnt{C'} $.
        \begin{match}
            \item[$\ottnt{M_{{\mathrm{1}}}}  \mid  \mathcal{H}  \longrightarrow  \ottnt{M'_{{\mathrm{1}}}}  \mid  \mathcal{H}'$ for some $\ottnt{M'_{{\mathrm{1}}}}$ and $\mathcal{H}'$]
            The first item of the goal follows by \propref{progress/ctx} given $ \mathcal{E} \ottsym{=} \ottkw{let} \, \ottmv{x}  \otimes  \ottmv{y}  \ottsym{=}  \ottsym{[]} \, \ottkw{in} \, \ottnt{M_{{\mathrm{2}}}} $.

            \item[$ \ottnt{M_{{\mathrm{1}}}} \ottsym{=} \ottnt{V_{{\mathrm{1}}}} $ for some $\ottnt{V_{{\mathrm{1}}}}$]
            We have
            \begin{gather*}
                 \ottnt{V_{{\mathrm{1}}}} \ottsym{=} \ottnt{V_{{\mathrm{11}}}}  \otimes  \ottnt{V_{{\mathrm{12}}}} 
            \end{gather*}
            for some $\ottnt{V_{{\mathrm{11}}}}$ and $\ottnt{V_{{\mathrm{12}}}}$ by \propref{typing/canonical(uprod)} and \hypref{5.3}.
            To show the first item of the goal, we choose $ \ottnt{M'} \ottsym{=} \ottnt{M_{{\mathrm{2}}}}  \ottsym{[}  \ottnt{V_{{\mathrm{11}}}}  \ottsym{/}  \ottmv{x}  \ottsym{]}  \ottsym{[}  \ottnt{V_{{\mathrm{12}}}}  \ottsym{/}  \ottmv{y}  \ottsym{]} $ and $ \mathcal{H}' \ottsym{=} \mathcal{H} $.
            What we need to show is
            \begin{gather*}
                \ottkw{let} \, \ottmv{x}  \otimes  \ottmv{y}  \ottsym{=}  \ottnt{V_{{\mathrm{11}}}}  \otimes  \ottnt{V_{{\mathrm{12}}}} \, \ottkw{in} \, \ottnt{M_{{\mathrm{2}}}}  \mid  \mathcal{H}  \longrightarrow  \ottnt{M_{{\mathrm{2}}}}  \ottsym{[}  \ottnt{V_{{\mathrm{11}}}}  \ottsym{/}  \ottmv{x}  \ottsym{]}  \ottsym{[}  \ottnt{V_{{\mathrm{12}}}}  \ottsym{/}  \ottmv{y}  \ottsym{]}  \mid  \mathcal{H},
            \end{gather*}
            which can be derived by \ruleref{RE-ULet} and \ruleref{R-Exp}.
        \end{match}

        \item[\ruleref{T-OLet}] Similar to the case \ruleref{T-ULet}.
        \item[\ruleref{T-Weaken}]
        In this case,
        \begin{gather}
            \ottnt{C}  \lesssim  \ottnt{C_{{\mathrm{1}}}} \hyp{6.1}\\
             \ottnt{e_{{\mathrm{1}}}} \ottsym{<} \ottnt{e}  \hyp{6.2}\\
            \ottnt{C_{{\mathrm{1}}}}  \vdash  \ottnt{M}  :  \ottnt{T}  \mid  \ottnt{e_{{\mathrm{1}}}}. \hyp{6.3}
        \end{gather}
        We have $\ottnt{C}  \ottsym{,}  \ottnt{C'}  \lesssim  \ottnt{C_{{\mathrm{1}}}}  \ottsym{,}  \ottnt{C'}$ by \propref{env/sub/ctx} and \hypref{6.1}.
        So, $\ottnt{C_{{\mathrm{1}}}}  \ottsym{,}  \ottnt{C'}  \vdash  \mathcal{H}$ by \propref{heap/weaken} and \hypref{0.2}.
        Consequently, this case follows by the induction hypothesis.
    \end{match}
\end{prop}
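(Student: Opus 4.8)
The plan is to prove Progress by rule induction on the typing derivation $\ottnt{C} \vdash \ottnt{M} : \ottnt{T} \mid \ottnt{e}$, running in parallel with the effect-free companion (Progress of effect-free expressions), which I will invoke wherever a subterm is forced to have effect $\ottsym{0}$. First, the value-introduction rules — unit, the four resource constants, locations, and the four abstraction rules — are immediate, since in each case $\ottnt{M}$ is already a value and the second disjunct holds. The variable rule \textsc{T-Var} is vacuous: it would force $\ottnt{C}$ to be a lone variable binding, which no heap can type.

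For every structural rule (the four applications, the two pairs, the two let-eliminators, and weakening) I would apply the induction hypothesis to the subderivations, after first reshaping the heap-typing hypothesis so it matches each subterm's context. For \textsc{T-App} the conclusion context is $\ottnt{C_{{\mathrm{1}}}}, \ottnt{C_{{\mathrm{2}}}}$, so I weaken $\ottnt{C}, \ottnt{C'} \vdash \mathcal{H}$ to $\ottnt{C_{{\mathrm{1}}}}, \ottnt{C_{{\mathrm{2}}}}, \ottnt{C'} \vdash \mathcal{H}$ using heap weakening along the subcontext relation; the parallel rules (\textsc{T-UApp}, \textsc{T-UPair}) are identical with $\ottnt{C_{{\mathrm{1}}}} \parallel \ottnt{C_{{\mathrm{2}}}}$. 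The case analysis is then uniform: if the evaluated subterm steps, I lift the step through the surrounding evaluation context with the context-progress lemma; if it is a value, I either recurse on the next subterm or, once all relevant subterms are values, invoke the canonical-forms lemma to pin down the head value and fire the matching reduction (a $\beta$-rule such as \textsc{RE-Beta} or \textsc{RE-UBeta} through \textsc{R-Exp}, or a let-reduction for the eliminators). Two bookkeeping points recur: when the head is a value of unrestricted arrow or pair type I use the value-typing lemma to learn its context is $\lesssim \cdot$, which lets me weaken the heap typing again before recursing on the argument with $\ottnt{C_{{\mathrm{2}}}}, \ottnt{C'} \vdash \mathcal{H}$; and in \textsc{T-RApp}, \textsc{T-LApp}, and the ordered-first case of \textsc{T-OPair} the effect-zero subterm is handled by the effect-free progress lemma rather than the induction hypothesis. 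Weakening itself recurses after weakening the heap typing along $\ottnt{C} \lesssim \ottnt{C_{{\mathrm{1}}}}$.

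The genuinely new content concerns resource constants at an application head. For $\mathtt{new}$, $\mathtt{split}$, and $\mathtt{drop}$ the arrow's latent effect is $\ottsym{0}$, so inversion lets me re-derive $\ottnt{C} \vdash \ottnt{M} : \ottnt{T} \mid \ottsym{0}$ and defer to the effect-free lemma, where the configuration steps are supplied by the dedicated reduction lemmas (\textsc{RC-Ne}, \textsc{RC-Sp}, \textsc{RC-Cl1}/\textsc{RC-Cl2}) after the argument location's binding is arranged into the heap typing. The one case that must be settled here is $\mathtt{op}$, whose arrow carries effect $\ottsym{1}$: after inversion the argument type is $\ottsym{[} \ottnt{m_{{\mathrm{0}}}} \ottsym{]}$ with $\ottnt{m_{{\mathrm{1}}}} \odot \ottnt{m_{{\mathrm{2}}}} \le \ottnt{m_{{\mathrm{0}}}}$, canonical forms make the argument a location $\ottmv{l}$ with $\ottnt{C_{{\mathrm{2}}}} \lesssim \ottmv{l} : \ottsym{[} \ottnt{m_{{\mathrm{0}}}} \ottsym{]}$, and weakening yields $\ottmv{l} : \ottsym{[} \ottnt{m_{{\mathrm{0}}}} \ottsym{]}, \ottnt{C'} \vdash \mathcal{H}$, which is exactly the hypothesis shape of the dedicated progress lemma for $\mathtt{op}$.

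I expect that $\mathtt{op}$ case to be the main obstacle, because it is the only head that must produce a genuinely effectful $\gamma$-step and so must discharge the admissibility side condition of \textsc{RC-Op}. Producing the witness $\ottnt{m''}$ with $(\ottnt{m'} \odot \ottnt{m_{{\mathrm{1}}}}) \odot \ottnt{m''} \le \ottnt{m_{{\mathrm{0}}}}$ rests on extracting, from the third clause of heap typing, the stored invariant $\ottnt{m'} \odot \overline{\langle \ottnt{C'} \rangle_{\ottmv{l}}} \le \ottnt{m_{{\mathrm{0}}}}$ for the location's current trace $\ottnt{m'}$, and combining it with the typing side condition via associativity and downward-closedness of the OPM, taking $\ottnt{m''} = \ottnt{m_{{\mathrm{2}}}} \odot \overline{\langle \ottnt{C'} \rangle_{\ottmv{l}}}$. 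The delicacy is entirely in making the focused usage projection $\overline{\langle \ottnt{C'} \rangle_{\ottmv{l}}}$ line up with the heap's recorded trace — which is where the order-defined run-time context machinery earns its keep — while everything else is routine propagation of the induction hypothesis through evaluation contexts.
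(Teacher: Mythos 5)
Your proposal is correct and follows essentially the same route as the paper's proof: induction on the typing derivation with heap typing reshaped by weakening along $\lesssim$, effect-free subterms (the arguments in \textsc{T-RApp}/\textsc{T-LApp}, the let headers, and the whole application when the head is $\ottkw{new}$, $\ottkw{split}$, or $\ottkw{drop}$) deferred to the effect-free progress lemma, and canonical forms plus \textsc{R-Exp}/\textsc{R-Cfg} elsewhere. Your treatment of the one genuinely effectful case also coincides with the paper's dedicated lemma for $\ottkw{op}$, which discharges the \textsc{RC-Op} side condition with exactly your witness $ \ottnt{m''} \ottsym{=}  \ottnt{m_{{\mathrm{2}}}} \odot  \overline{ \langle \ottnt{C} \rangle_{ \ottmv{l} } }  $ via associativity and downward closure of the OPM.
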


\begin{prop}{progress}
    If
    \begin{gather*}
        \ottnt{C}  \vdash  \ottnt{M}  :  \ottnt{T}  \mid  \ottnt{e} \\
        \ottnt{C}  \vdash  \mathcal{H},
    \end{gather*}
    then
    \begin{itemize}
        \item $\ottnt{M}  \mid  \mathcal{H}  \longrightarrow  \ottnt{M'}  \mid  \mathcal{H}'$ for some $\ottnt{M'}$ and $\mathcal{H}'$ , or
        \item $ \ottnt{M} \ottsym{=} \ottnt{V} $ for some $\ottnt{V}$.
    \end{itemize}

    \proof This follows by \propref{progress/effect}.
\end{prop}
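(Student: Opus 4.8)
The plan is to derive this statement as an immediate specialization of \propref{progress/effect}, the progress lemma for arbitrary (possibly effectful) expressions. That lemma already delivers exactly the dichotomy we want---either $\ottnt{M} \mid \mathcal{H} \longrightarrow \ottnt{M'} \mid \mathcal{H}'$ for some $\ottnt{M'}$ and $\mathcal{H}'$, or $ \ottnt{M} \ottsym{=} \ottnt{V} $ for some value $\ottnt{V}$---but it assumes the heap is typed against a split context of the form $\ottnt{C} \ottsym{,} \ottnt{C'}$ rather than against $\ottnt{C}$ alone. So the entire task reduces to reconciling the two heap-typing hypotheses by instantiating the spare context $\ottnt{C'}$ with the empty context $ \cdot $.

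Concretely, I would invoke \propref{progress/effect} with $ \ottnt{C'} \ottsym{=}  \cdot  $. Its typing premise $\ottnt{C} \vdash \ottnt{M} : \ottnt{T} \mid \ottnt{e}$ is precisely the first hypothesis we are given and needs no manipulation. It then remains to establish the heap-typing premise $\ottnt{C} \ottsym{,}  \cdot  \vdash \mathcal{H}$ from the given $\ottnt{C} \vdash \mathcal{H}$. By \propref{env/equiv} we have $\ottnt{C} \ottsym{,}  \cdot  \simeq \ottnt{C}$, and since $\simeq$ is symmetric this gives $\ottnt{C} \simeq \ottnt{C} \ottsym{,}  \cdot $; rule \textsc{GS-Equiv} upgrades this equivalence to the subcontext relation $\ottnt{C} \lesssim \ottnt{C} \ottsym{,}  \cdot $. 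Applying \propref{heap/weaken} to $\ottnt{C} \vdash \mathcal{H}$ together with this subcontext fact yields $\ottnt{C} \ottsym{,}  \cdot  \vdash \mathcal{H}$, discharging the remaining premise.

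With both premises of \propref{progress/effect} in hand, its conclusion is literally the goal, so nothing further is required. There is no genuine obstacle here: the argument is pure bookkeeping. The only point deserving a moment's care is the orientation of \propref{heap/weaken}, which transports a heap typing from a tighter context \emph{up} to a looser, more sequentialized one; I must therefore present the relation as $\ottnt{C} \lesssim \ottnt{C} \ottsym{,}  \cdot $ and not the reverse, which is exactly what symmetry of $\simeq$ followed by \textsc{GS-Equiv} supplies.
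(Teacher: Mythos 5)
Your proposal is correct and matches the paper's own proof, which simply cites \propref{progress/effect}; the intended instantiation is exactly yours, $ \ottnt{C'} \ottsym{=}  \cdot  $. You merely make explicit the bookkeeping the paper leaves implicit (obtaining $\ottnt{C}  \ottsym{,}   \cdot   \vdash  \mathcal{H}$ from $\ottnt{C}  \vdash  \mathcal{H}$ via $\ottnt{C}  \simeq  \ottnt{C}  \ottsym{,}   \cdot $, \textsc{GS-Equiv}, and \propref{heap/weaken}), and your care about the orientation of the weakening step is well placed.
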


\end{FULLVERSION}

\end{document}
